\newif\ifarxiv
\newif\ifmsc 
\newif\iftodo
\newtheorem{example}[therm]{Example}
\Crefname{therm}{Theorem}{Theorems}
\Crefname{example}{Example}{Examples}
\renewenvironment*{theorem}{\begin{therm}}{\end{therm}}
\definecolor{blue(pigment)}{rgb}{0.2, 0.2, 0.6}
\definecolor{darkgreen}{rgb}{0.0, 0.5, 0.0}
\definecolor{darkred}{HTML}{9F000F}   
\author[1]{Thomas Ehrhard}
\author[2]{Aymeric Walch}
\affil[1]{Université Paris Cité, CNRS, Inria, IRIF, F-75013, Paris, France}
\affil[2]{Université Paris Cité, CNRS, IRIF, F-75013, Paris, France}
\affil[ ]{\texttt{\{ehrhard,walch\}@irif.fr}}
\begin{document}

\title{Coherent Taylor expansion as a bimonad}

\ifmsc
\lefttitle{Coherent Taylor expansion as a bimonad}
\righttitle{}

\papertitle{Article}

\jnlPage{1}{00}
\jnlDoiYr{2023}
\doival{10.1017/xxxxx}

\begin{authgrp}
  \author{Anonymous authors}
\end{authgrp}

\fi 

\ifmsc 
\begin{abstract}
    We extend the recently introduced setting of coherent
  differentiation by taking into account not only differentiation,
  but also Taylor expansion in categories which are not necessarily
  (left) additive.
  The main idea consists in extending summability into an infinitary
  functor which intuitively maps any object to the object of its
  countable summable families.
  This functor is endowed with a canonical structure of a bimonad.
  In a linear logical categorical setting, Taylor expansion is then
  axiomatized as a distributive law between this summability functor
  and the resource comonad (aka.~exponential). This distributive law
  allows to extend the
  summability functor into a bimonad on the coKleisli category of the
  resource comonad: this extended functor computes the Taylor
  expansion of the (nonlinear) morphisms of the coKleisli category.
  We also show how this categorical axiomatization of Taylor expansion
  can be generalized to arbitrary cartesian categories, leading to a
  general theory of Taylor expansion formally similar to that of
  cartesian differential categories, although it does not require the
  underlying cartesian category to be left additive.
  We provide several examples of concrete categories which arise in
  denotational semantics and feature such analytic structures.

\end{abstract} 

\begin{keywords} Categorical semantics; Denotational semantics; Linear Logic; Differential lambda calculus; Taylor expansion;
  Coherent differentiation.
\end{keywords}
\fi

\maketitle

\ifarxiv 
\begin{abstract}
  
\end{abstract} 
\fi

\listoftodos

\paragraph*{Differentiation, Lambda-Calculus and Linear Logic.}
Linear Logic (LL) is a logical system that arose from
semantics~(\cite{Girard87}), following the observation that some
interesting models of the $\lambda$-calculus were actually the
coKleisli category $\kleisliExp$
of some category $\categoryLL$ of linear morphisms.
That is, a morphism from object $X$ to $Y$ can be seen as a morphism
in $\categoryLL(!X, Y)$, where $\categoryLL$ is a category of linear
morphisms and $\oc\_$ is a comonad on $\categoryLL$.
The syntactical counterpart of a morphism in $\categoryLL$ is a
proof/program that uses its input (data or hypothesis) exactly once,
and the syntactical version of $\oc\_$ features a construction (called
promotion) that makes a resource duplicable and discardable.

It turns out that in multiple models of LL, the morphisms
$f \in \categoryLL(!X, Y)$ are differentiable in some sense, strongly
suggesting that differentiation of proofs and programs should be
considered as a natural LL operation.
Remember  (see~\cite{Dieudonne69} for instance) that for any Banach spaces \(E\) and \(F\), a function
\(f:E\to F\) is differentiable
at \(x \in E \) if there is a
neighborhood \(U\) of \(0\) in \(E\) and a linear and continuous
function \(\phi:E\to F\) such that, for all \(u\in U\)
\begin{align}
  \label{eq:derivative-def}
  f(x+u)=f(x)+\phi(u)+o(\|u\|)\,.
\end{align}
When \(\phi\) exists, it is unique and is denoted as
\(\derive{f}\).
When \(\derive{f}\) exists for all \(x\in E\),
the function \(\derivenoarg{f} :E\to\mathcal L(E,F)\), where
\(\mathcal L(E,F)\) is the Banach space of linear and continuous
functions \(E\to F\), is called the \emph{differential} of \(f\).
This function can itself admit a differential and so on. When all
these iterated differentials exist one says that \(f\) is
\emph{smooth} and the \(n\)th derivative of \(f\) is a function
\(\derivennoarg f n :E\to\mathcal L_n(E,F)\) where
\(\mathcal L_n(E,F)\) is the space of continuous \(n\)-linear
symmetric functions \(E^n\to F\).
It can even happen that \(f\) is locally (or even globally) expressed
using its iterated derivatives by means of the \emph{Taylor Formula},
\begin{equation}
  f(x+u)=\sum_{n=0}^\infty\frac1{n!} \deriven{f}{n} (u,\dots,u) \, .
\end{equation}
When this holds locally at any point \(x\), \(f\) is said to be \emph{analytic}, see 
\cite{Whittlesey65}.

Based on categorical models of LL where objects are similar to vector spaces
and morphisms
$f \in \categoryLL(!X, Y)$ are analytic functions from $X$ to $Y$,
the differential
\(\lambda\)-calculus (\cite{EhrhardRegnier02}) and differential LL
provide a logical and syntactical account of differentiation.
If $\Gamma, x : A \vdash P : B$ and $\Gamma \vdash Q : A$, then one
can define in these systems, by purely syntactical means, a program
$\Gamma, x : A \vdash \frac{\partial P}{\partial x} \cdot Q : B$ whose
denotational semantics in such models is the derivative of the
interpretation of \(P\) wrt.~the variable \(x\), linearly applied to
the semantics of \(Q\).
This syntactic derivative can be seen as a version of $P$, where
exactly one call to the variable $x$ that occurs during the
computation of \(P\) is replaced with a call to $Q$: this explains why
\(x\) is still free in \(\frac{\partial P}{\partial x} \cdot Q\) in
general.
This provides a new approach of finite approximations of programs
 by a syntactical version of the Taylor Formula whose effect is to
translate \(\lambda\)-calculus application \((P)Q\) into the
differential \(\lambda\)-calculus in such a way that
\begin{equation} \label{eq:Taylor-syntax} %
  (\lambda x. P)Q \text{ reduces to }
  \sum_{n=0}^{\infty} \frac{1}{n!} \left(
    \frac{\partial^n P}{\partial x^n} (\underbrace{Q, \ldots,
      Q}_{\text{n times}}) \right) [0/x]\,.
\end{equation}
The term of rank $n$ in this formal sum corresponds to the part of the
computation that uses the input $Q$ exactly $n$ times.
Applying this transformation hereditarily to all the applications
occurring in a \(\lambda\)-term, it becomes possible to turn it into
an infinite sum of strongly normalizing \emph{resource
  terms}, see~\cite{Ehrhard08}.

Differentiation is deeply connected with \emph{addition}, as it can
already be seen in its definition~\cref{eq:derivative-def} and in the
syntactical Taylor expansion~\cref{eq:Taylor-syntax}.
As a result, the differential $\lambda$-calculus is always endowed
with an unrestricted operation of sum between terms.
Similarly, all categorical models of the differential
\(\lambda\)-calculus and of differential LL (see~\cite{Blute06,
  Blute09}) were using categories where the hom-sets 
  were commutative monoids, and all models of 
  the syntactical Taylor expansion (\cite{Manzonetto12}) require the hom-sets 
  to feature arbitrary countable sums.
The only available operational interpretation of such a sum being
erratic choice, these formalisms are inherently non-deterministic.
This is in sharp contrast with the uniformity property of the sum that
can be observed in the syntax. For example, if the term of
\cref{eq:Taylor-syntax} reduces to a variable, then only one term of
the sum is non-zero.
The position $n$ of this term gives the number of times the term $Q$
is evaluated during the weak head reduction strategy as shown
in~\cite{EhrhardRegnier02}.
Furthermore, many deterministic (or probabilistic) models of LL such
as those based on the relational model (like coherence spaces and
probabilistic coherent spaces) feature morphisms in
$\categoryLL(!X, Y)$ which are clearly analytic in some sense,
although the hom-sets do not feature a commutative monoid structure.

\paragraph*{Coherent differentiation.}
Recently, \cite{Ehrhard22a} observed that, in a setting
where all coefficients are non-negative, differentiation survives to
strong restrictions on the use of addition.
Consider for instance a function \([0,1]\to[0,1]\) which is smooth on
\([0,1)\) and all of whose iterated derivatives are everywhere \(\geq 0\)%
\footnote{This actually implies that \(f\) is analytic.}.
If \(x,u\in[0,1]\) are such that \(x+u\in[0,1]\) then
\begin{equation} \label{eq:taylor-bounded}
  f(x)+f'(x)u\leq f(x+u)\in[0,1] \, .
\end{equation}
This makes sense even if
\(f'(1)=\infty\), which can happen: take \(f(x)=1-\sqrt{1-x}\).
So if \(S\) is the set of all such pairs $\Spair{x}{u}$ that we call
\emph{summable}, we can consider the function
\[ \Dbin(f)
: \Spair{x}{u}\mapsto \Spair{f(x)}{\derive{f} \cdot u} \] as a map
\(S\to S\).
This basic observation is generalized in~\cite{Ehrhard23-cohdiff} to a wide
range of categorical models \(\cL\) of LL including coherence spaces,
probabilistic coherence spaces \emph{etc.}~where hom-sets have only a
\emph{partially defined} addition.
In these \emph{summable categories}, we obtain an endofunctor \(\S :
\cL\to\cL\) equipped with an additional structure which allows defining 
summability and (partial) finite sums in a very general way that turns
$\S$ into a monad.
Differentiation is then axiomatized as a distributive law
$\oc \S \naturalTrans \S \oc$ between this monad (similar to the
tangent bundle monad of a tangent category, see~\cite{Rosicky84}) and
the resource comonad \(\oc\_\) of the LL structure of the category%
\footnote{Which by the way needs not be a fully-fledged LL model.}
\(\cL\).
Indeed, this distributive law allows one to extend \(\Sfun\) to
$\kleisliExp$, the coKleisli category of \(\oc\_\), and this extension
\(\Dbin:\kleisliExp \to \kleisliExp\) inherits the monad structure of
$\S$.
A category equipped with such a differentiation is then called a
\emph{coherent differential category}.
It was also observed that the functor $\S$ is often \emph{representable},
following the terminology of tangent categories (see \cite{Cockett14}), 
meaning that 
$\S = \Dbimon \linarrow \_$ (the category \(\cL\) being most often
symmetric monoidal closed, with internal hom of \(X\) and \(Y\)
denoted as \(X\linarrow Y\)), where $\Dbimon = \Sone \with \Sone$
(\(\Sone\) is the unit of the tensor product of \(\cL\), and $\with$ is 
a cartesian product).
This object $\Dbimon$ can be endowed with the structure of a comonoid
from which the monad structure of $\S$ arise.
Differentiation then boils down to the existence of a coalgebra
structure $\Dbimon \arrow \oc \Dbimon$ on \(\Dbimon\) which is
compatible with its comonoid structure.

\paragraph*{Contributions of this work.}
The goal of the present article is to study the Taylor expansion in this
setting of partial sums.
We believe that this is a crucial step towards providing generic tools
to define a denotational semantics of the differential
$\lambda$-calculus and of the syntactical Taylor expansion in a much
more general setting than the current state of the art
of~\cite{Manzonetto12}.
The starting idea is that for the same reason that 
$\Dbin f(x) = \Spair{f(x)}{\derive{f}  u}$ is defined, it should be possible to 
define directly a second order approximant.
\[\D f \Striple{x}{u_1}{u_2} = \Striple{f(x)}{\derive{f} \cdot u_1}
  {\frac{1}{2} \deriven{f}{2} \cdot (u_1, u_1) + \derive{f} \cdot u_2
  } \, . \] 
  The term $u_1$ should be seen as a first order variation and
the term $u_2$ as a second order variation. So
$\D f \Striple{x}{u_1}{u_2}$ gives the components (sorted by order) of
the order 2 Taylor approximation of $f$ on the variation $u_1 +
u_2$. 
Similarly to \cref{eq:taylor-bounded}, this sum is lower
than $f(x+u_1+u_2)$, so it is well-defined.  This idea should work for
any order, and going to the limit, for an infinite amount of
coefficient, yielding an operator which provides all the terms of the
Taylor expansion.

The first step is to introduce the \emph{$\Sigma$-summabibility structure} $\S$, 
an infinitary counterpart of the summability
structure previously mentioned. This $\Sigma$-summability structure is 
infinitary in the sense that an ``element'' of \(\S(X)\) is a
\(\Nat\)-indexed family of elements of \(X\) whose infinitary sum is
well-defined,
the underlying notion of infinitary sum is the one of $\Sigma$-monoid, 
see~\cite{Haghverdi00}.
It turns out that 
$\S$ is not only a monad, but also a comonad. The monadic and comonadic 
structures interact well, turning $\S$ into a \emph{bimonad}
(\cref{sec:summability-structure}).
Surprisingly, the whole Taylor expansion operation is again a
distributive law $\oc \S \naturalTrans \S \oc$ following the exact same
properties as coherent differentiation.
This distributive law allows, as it is standard, to extend the functor
\(\Sfun\) into a functor \(\Tayfun:\Kloc\cL\to\Kloc\cL\) 
which implements Taylor expansion and
inherits the monadic structure of \(\Sfun\).
One more axiom is added, ensuring intuitively that the maps are
analytic in the sense that they coincide with their Taylor expansion.
This axiom was the missing piece to ensure that $\D$
inherits from $\S$ the structure of a bimonad.
We call a category equipped with such Taylor expansion an \emph{analytic
  category} (\cref{sec:Taylor}).

Similarly to coherent differentiation, 
in many concrete models of LL, the functor $\S$ is representable and 
is equal to
$\Dbimon \linarrow \_$ where this time
$\Dbimon = \Sone \with \Sone \with \cdots$ (\(\Nat\)-indexed cartesian
product).
A category is called \emph{representably $\Sigma$-additive} if this functor 
$\S$ is a $\Sigma$-summability structure.
This object $\Dbimon$ can be endowed with the structure of a bimonoid
that completely determines the bimonad structure of $\S$.
The analytic Taylor expansion ---~that is, the aforementioned distributive law
$\oc \S \naturalTrans \S \oc$~--- then boils down to a
coalgebra $\Dbimon \arrow \oc \Dbimon$ compatible with the bimonoid
structure (\cref{sec:elementary}). 
We call \emph{representably analytic categories} 
such instance of analytic categories.
This coalgebra structure always exist when the exponential 
(the \(\Oc\_\) comonad) is free, that is, when the category is 
Lafont, \cite{Lafont88}.
This implies the following result.

\begin{theorem}
  Any representably $\Sigma$-additive Lafont category 
  is a representably analytic category.
\end{theorem}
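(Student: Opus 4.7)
The plan is to construct the required coalgebra $h : \Dbimon \to \oc \Dbimon$ by invoking the universal property of the Lafont exponential applied to the commutative comonoid part of the bimonoid structure on $\Dbimon$, and then to verify that this $h$ satisfies the bimonoid compatibility conditions that characterize representably analytic categories.

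First, I would recall that in a Lafont category, $\oc X$ is the cofree commutative comonoid on $X$: for any commutative comonoid $(C, \delta, \epsilon)$ equipped with a morphism $f : C \to X$, there is a unique comonoid morphism $\tilde f : C \to \oc X$ such that $d_X \circ \tilde f = f$, where $d_X$ is the dereliction. Since $\Dbimon = \Sone \with \Sone \with \cdots$ is a bimonoid, it is in particular a commutative comonoid. Applying the universal property to the identity $\mathrm{id}_{\Dbimon} : \Dbimon \to \Dbimon$ yields a unique comonoid morphism $h : \Dbimon \to \oc \Dbimon$ with $d_{\Dbimon} \circ h = \mathrm{id}_{\Dbimon}$. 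This is the standard construction of the coalgebra structure on a commutative comonoid induced by cofreeness.

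Second, I would verify that $h$ is a genuine $\oc\_$-coalgebra. The counit axiom holds by construction. For the comultiplication axiom, I would observe that both $\mathrm{dig}_{\Dbimon} \circ h$ and $\oc h \circ h$ are comonoid morphisms from $\Dbimon$ to $\oc \oc \Dbimon$, and both post-composed with the dereliction $d_{\oc \Dbimon}$ yield $h$. By the universal property (applied to $\oc \Dbimon$ as the target), they must coincide. This is the standard argument showing that every comonoid morphism $C \to \oc X$ is automatically compatible with the comonadic comultiplication.

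The main obstacle is checking compatibility of $h$ with the monoid part of the bimonoid structure on $\Dbimon$ --- that is, the equations involving the multiplication $m : \Dbimon \otimes \Dbimon \to \Dbimon$ and unit $u : \Sone \to \Dbimon$ that are required to make the induced distributive law $\oc \S \to \S \oc$ an analytic Taylor expansion in the sense of \cref{sec:elementary}. My strategy would again be to use the universal property: each such equation demands that two candidate morphisms into $\oc \Dbimon$ (or $\oc(\Dbimon \otimes \Dbimon)$ after suitable Seely-like repackaging) coincide. I would show that both are comonoid morphisms out of the relevant tensor product of copies of $\Dbimon$, using the fact that the bimonoid laws make $m$ and $u$ comonoid morphisms, and that both sides dereliction-project to the same morphism built from $m$, $u$ and identities. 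Uniqueness in the cofree property then forces equality, completing the proof.
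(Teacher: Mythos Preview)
Your proposal is correct and is essentially the paper's argument spelled out by hand: the paper simply observes that in a Lafont category every commutative comonoid acquires a canonical $\oc$-coalgebra structure and every comonoid morphism is automatically a coalgebra morphism (this is the comparison functor $\Cm\cL\to\Emoc\cL$ induced by the adjunction $\Cmforg\dashv\Comonfree$), so the result follows immediately from the bimonoid axioms, which make $\Dbimonu$ and $\Dbimonm$ comonoid morphisms. Your steps~1--2 rederive this comparison functor by hand and your step~3 rederives its functoriality case by case via cofreeness; both are sound, just more explicit. One small omission: the definition of analytic coalgebra also requires $\Win_0$ to be a coalgebra morphism, which you do not mention; your strategy handles it too once you note (from $\Dbimoncu\Compl\Win_0=\Id$ and $\Dbimoncm\Compl\Win_0=(\Win_0\Times\Win_0)\Compl\Inv{\Leftu}$) that $\Win_0$ is a comonoid morphism.
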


We provide many examples of representably analytic Lafont categories.
First, we discuss in \cref{sec:examples-taylor} that the relational model and the 
weighted relational model are Lafont representably
analytic categories.
We compute explicitly the action of the
\(\Tayfun\) functor on morphisms,
showing that $\D$ performs the expected Taylor expansion.
It means that our theory of analytic category
covers the most important models of differential LL and 
of syntactical Taylor expansion, so 
and should be a strict generalization of the theory of \cite{Manzonetto12}.

Then, we provide many examples which were not captured by this previous 
theory.
The first example is based on a notion of coherence space
introduced in~\cite{Lamarche95}, which seems deeply related to Scott
semantics.
The second one is based on nonuniform coherence spaces that were
introduced in~\cite{BucciarelliEhrhard01} and are known to have two
different exponentials, one of them being the free exponential.
In this setting, the free exponential provides an example of
representably analytic category, but we show that the non-free one is not
a representably analytic category, yet is a representably coherent
differential category in the sense of~\cite{Ehrhard23-cohdiff}.
We also mention the fact that the usual Girard's coherence spaces, with
their free exponential, are a representably analytic category, and last we
deal with the case of probabilistic coherence spaces (introduced
in~\cite{DanosEhrhard11}) whose only known exponential has been shown to
be free.
They provide yet another example of representably analytic category where
morphisms are analytic functions with non-negative real coefficients.

\paragraph*{Taylor expansion in cartesian closed categories.}

Because $\kleisliExp$ is a cartesian closed category, it can be
interesting to drift away from the structure of $\categoryLL$ by
only looking at the structure induced in $\kleisliExp$.
This is what happened with differentiation.
It was first categorically axiomatized in a typical LL setting with
additive categories, introducing a notion of \emph{differential
  categories} in~\cite{Blute06}.
Differentiation was then carried to the setting of cartesian
left-additive categories, introducing cartesian differential
categories in~\cite{Blute09}, leading to successful uses of
differentiation outside the realm of LL.
Unsurprisingly, the coKleisli categories of differential LL categories are
instances of cartesian differential categories, but the latter are
more general than the former and cover more examples of categories
where differentiation is available.
Similarly, left summability structures and \emph{cartesian coherent
  differential categories} were introduced in ~\cite{Walch23} to
axiomatize coherent differentiation directly in any cartesian
category.
They arise in particular as the coKleisli category of coherent
differential categories, and are at the same time a conservative
generalization of the cartesian differential categories.

Since analytic categories are very similar to coherent differential
categories, it is possible to introduce in a very similar way a notion
of \emph{cartesian analytic category}
(\cref{sec:cartesian-Taylor}).
We can then define the notion of a 
\emph{cartesian closed analytic category}.
This more direct axiomatization should provide the foundation for
working on the denotational semantics of syntactical Taylor expansion,
but also provide a categorical framework for Taylor expansion
independent of LL.

\begin{figure}
\begin{tikzpicture}[every text node part/.style={align=center}, scale=1]

  \node (DC) [ellipse, draw] at (0, 0) {Differential \\ Category [1]};
  \node (CDC) [ellipse, draw] at (5, 0) {Coherent \\ Differential \\ Category [4]};
  \node (CarDC) [ellipse, draw] at (0, -4) {\small Cartesian \\ \small Differential \\ \small Category [2]};
  \node (CarCDC) [ellipse, draw] at (5, -4) { \small Cartesian \\ \small Coherent 
  Differential \\ \small  Category [5]};
  
    \draw [->] (DC) -- (CDC) node[midway, above] {$\subseteq$[6]};
  \draw [->] (DC) -- (CarDC) node[pos=0.5, right]{coKleisli [2]};
  \draw[->] (CarDC) -- (CarCDC) node[midway, below] {$\subseteq$[5]};
  \draw[->] (CDC) -- (CarCDC) node[pos=0.45, left] {coKleisli [5]};

  \node (Taylor) [ellipse, draw] at (2.5, -6) {Model \\ of Taylor \\ 
  expansion [4]};
  \node (CohTaylor) [ellipse, draw, blue(pigment)] at (8, -1.5) 
  {Analytic \\ category};
  \node (CarTaylor) [ellipse, draw, blue(pigment)] at (8, -6) 
  {Cartesian Analytic \\ Category};

  \draw[->, blue(pigment)] (CohTaylor) -- (CarTaylor) node[pos=0.5, right, blue(pigment)] {coKleisli};
  \draw[->, dashed] (Taylor) -- (CarTaylor) node[midway, below] {\ref{fig:map-of-concepts-3}};

  \draw[->] (Taylor.west) to[out = -180, in = -90] node[pos=0.5, below] {} (CarDC.south);
  \draw[->, dashed] (CarTaylor) to[out = -180, in = -90] 
  node[pos=0.7, right]{\ref{fig:map-of-concepts-2}} (CarCDC.south);
  \draw[->, dashed] (CohTaylor.north) to[out = 90, in = 0]
  node[pos=0.4, above] {\ref{fig:map-of-concepts-1}} (CDC.east);

  \draw[dashed, darkred] (-2, -2.5) -- (10, -2.5);
  \node [darkred] at (-1.4, -2.25){LL};
  \node [darkred] at (-1.6, -2.75){$\lambda$-calculus};
  \end{tikzpicture}

  \caption{Map of concepts,
  [1] \cite{Blute06} 
  [2] \cite{Blute09}
  [3] \cite{Manzonetto12}
  [4] \cite{Ehrhard23-cohdiff}
  [5] \cite{Walch23}
  [6] \cite{Walch22-internship}
  }
  \label{fig:map-of-concepts}
  \labeltext{(1)}{fig:map-of-concepts-1}
  \labeltext{(2)}{fig:map-of-concepts-2}
  \labeltext{(3)}{fig:map-of-concepts-3}
  \end{figure} 
\paragraph{Map of concepts.} We summarize in \cref{fig:map-of-concepts} 
how the different concepts relate with each other.
  We expect the dashed arrows to hold, but the proofs are postponed to further work.

  The arrows \ref{fig:map-of-concepts-1} and \ref{fig:map-of-concepts-2}
  should be a consequence of the fact that the Taylor functor performs Taylor 
  expansion at all order, and in particular includes Taylor expansion at order one, 
  which yield precisely the definition of the derivative in 
  \cref{eq:derivative-def}.
  The reverse arrow does not hold, a coherent differential category is not always 
  an analytic category, we provide such example in \cref{sec:nucs}.
    
  We expect the arrow \ref{fig:map-of-concepts-3} to hold 
  because our notion of analytic categories cover both the relational model 
  and the weighted relational model, and the Taylor functor 
  performs the expected Taylor expansion, as discussed in \cref{sec:examples-taylor}.
  These two models are the two principal models of syntactical Taylor
  expansion following the definition of \cite{Manzonetto12}, so our new theory 
  captures the main examples of the previous theory.  

\paragraph*{Mates and distributive law.} 
One contribution of this article is to exhibit the crucial role 
played by the mate construction (\cref{sec:mate}) in the representable case, both in 
the setting of \cite{Ehrhard23-cohdiff} and in the setting of 
this article. 
In the representable case, the commutative bimonoid $\Dbimon$ 
induces a bimonad $\_ \tensor \Dbimon$.
This bimonad is the left adjoint of 
$\S = \Dbimon \linarrow \_$, so the mate construction 
induces a bimonad structure on $\S$ that turns out to
be precisely the one described in \cref{sec:bimonad}.
Then the mate construction also induces a bijection 
between natural transformations $\Sdl : \oc\S \naturalTrans \S \oc$ 
and $\Sdlmate : (\oc \_ \tensor \Dbimon) \naturalTrans \oc (\_ \tensor \Dbimon)$, 
and it turns out that this bijection preserves distributive laws.
It provides a crucial step when showing that Taylor expansion 
boils down to a coalgebra on $\Dbimon$.

\paragraph*{Related work.}
There might be connections between the work presented here and the
recent article~\cite{Kerjean23} where an account of Taylor
expansion in differential LL is provided, based on the use of a
resource exponential modality \(\Oc\_\) which has not only its standard
comonad structure, but also a monad structure. 
The article conjectures that this monad structure is compatible with the 
comonadic structure of $\Oc\_$, turning $\Oc\_$ into a
bimonad.
In our setting, this is not the exponential modality which features a
bimonad structure, but the infinitary summability functor \(\Sfun\)
which does not at all play the same role: for instance, in LL models,
the functor \(\Sfun\) preserves cartesian products whereas the \(\Oc\_\)
functor turns the cartesian product into a tensor product.
Another difference between the two approaches is that, being based on
differential LL, \cite{Kerjean23} is based on additive categories
whereas one of our main motivations is to deal with Taylor expansion
in settings where addition is only partially defined.
A more detailed analysis of the possible connections between the two
approaches is definitely necessary.

\ifarxiv
\tableofcontents
\fi

The first two sections of this article consist of reminders on 
distributive law and monoidal monads. We encourage the reader familiar 
with those notions to skip these sections, and only refer 
to them when necessary.

\section{Background: distributive laws}

\label{sec:dl}

In this section, we review the well-known notion of distributive laws
first introduced in~\cite{Beck69}.
Distributive laws provide a solution to the following problem: given 
a (co)monad on a category $\category$ and some 
structure on $\category$, is it possible to extend that structure 
to the (co)Kleisli category of that (co)monad? 
Dually, is it possible to lift that structure to the 
(co)Eilenberg-Moore category of that (co)monad?

This categorical background will be important for this paper, since 
coherent differentiation and Taylor expansion are both formalized 
by such a distributive law, allowing to extend a monad $\S$ 
encoding partial summability to a monad $\D$ on the coKleisli 
of the exponential comonad $\Oc\_$, see \cref{sec:Taylor}. 

This theory was mainly developed in~\cite{Street72} in a general 
$2$-categorical setting (see~\cite{Maranda65} for a definition of 
a $2$-category). We will refer to~\cite{Power02}
as it is a more accessible introduction.
For accessibility reason, we only unfold the notion in the 
special case of the $2$-category $\categorycat$, whose objects 
are the (small) categories, whose morphisms are functors and whose 
$2$-morphisms are natural transformations. 

\subsection{Distributive laws for monads and extension to Kleisli categories}
\label{sec:monad-dl}

If $\mtriple = (\monad, \munit, \msum)$ is a monad on
$\cat$, then its Kleisli category $\kleisli$ is the category with the
same objects as $\cat$ and such that
$\kleisli(X, Y) = \cat(X, \monad Y)$. The identity is given by
$\munit_X \in \kleisli(X, X)$ and the composition of
$f \in \kleisli(X, Y)$ with $g \in \kleisli(Y, Z)$ is defined as
\begin{equation}
    g \kleislicomp f = \msum_Y \comp \monad g \comp f \, .
\end{equation} 
There is a functor 
$\kleisliL : \cat \arrow \kleisli$ such that 
$\kleisliL X = X$ and for any $f \in \cat(X, Y)$,
$\kleisliL f = \munit \comp f \in \kleisli(X, Y)$.
Then for any $f \in \kleisli(X', X)$, 
$h \in \category(X, Y)$ and $g \in \kleisli(Y, Y')$,
\begin{equation} \label{eq:KleisliL-comp}
\kleisliL h \kleislicomp f = \monad h \comp f \quad  \quad
  g \kleislicomp \kleisliL h = g \comp h \, .
\end{equation}
These equations are obtained by a straightforward computation using 
naturality and the triangle identities of the monad.

Throughout this section, we assume that 
$\mtriple[1] = (\monad[1], \munit[1], \msum[1])$ is a monad 
on a category $\cat$, 
$\mtriple[2] = (\monad[2], \munit[2], \msum[2])$ is a monad 
on a category $\catbis$, and
$\mtriple[3] = (\monad[3], \munit[3], \msum[3])$ is a monad 
on a category $\catter$.
All the following definitions and result can be found 
in section 4 of~\cite{Power02}.

\begin{definition} \label{def:extension} 
    A functor 
    $\extension{\F} : \kleisli[1] \arrow \kleislibis[2]$
    is an extension of the functor $\F : \cat \arrow \catbis$ 
    if the following diagram commutes.
    \[
    \begin{tikzcd}
    \cat \arrow[d, "{\kleisliL[1]}"'] \arrow[r, "{\F}"] & \catbis \arrow[d, "{\kleisliL[2]}"] \\
    \kleisli[1] \arrow[r, "{\extension{\F}}"']            & \kleislibis[2]                              
    \end{tikzcd}
    \]
\end{definition}

\begin{definition}
    \label{def:extension-natural} 
    Let $\F, \G : \cat \arrow \catbis$ 
    with respective extensions 
    $\extension{\F}, \extension{\G} : \kleisli[1] \arrow \kleislibis[2]$. 
    Let $\alpha : \F \naturalTrans \G$ be a natural transformation. Then 
	$\kleisliL[2] \alpha_X  \in \kleislibis[2](\extension{\F}X, \extension{\\G}X)$.
    We say that $\alpha$ extends to $\extension{\F}$ and $\extension{\G}$ 
    if $\kleisliL[2] \alpha$ is a natural transformation 
    $\extension{\F} \naturalTrans \extension{\G}$.
\end{definition}

\begin{remark} In \cite{Power02}, the extensions of functors and 
natural transformations to the Kleisli category 
are called liftings. We prefer the terminology extension that 
is widely used in the literature, in order to make a clear distinction
with the notion of lifting to the (co)Eilenberg-Moore 
category that is introduced~\cref{sec:dl-em}
\end{remark}

These extensions are related 
to the notion of distributive laws.

\begin{definition}
    \label{def:dl}
    A natural transformation 
    $\dl : \F \mtriple[1] \naturalTrans \mtriple[2] \F$  is a \emph{distributive law} 
    between $\F$ and two monads $\mtriple[1]$ and $\mtriple[2]$ 
    if the following two diagrams commute.
    \[\begin{tikzcd}
        {\F} \\
        {\F \monad[1]} & {\monad[2] \F}
        \arrow["{\F \munit[1]}"', from=1-1, to=2-1]
        \arrow["{\dl}"', from=2-1, to=2-2]
        \arrow["{\munit[2] \F}", from=1-1, to=2-2]
    \end{tikzcd}
    \begin{tikzcd}
    \F \monad[1]^2 \arrow[d, "{\F \msum[1]}"'] \arrow[r, "{\dl \monad[1]}"] & \monad[2] \F \monad[1] \arrow[r, "{\monad[2] \dl}"] & \monad[2]^2 \F \arrow[d, "{\msum[2] \F}"] \\
    \F \monad[1] \arrow[rr, "\dl"']                                          &                                                               & \monad[2] \F                             
    \end{tikzcd}\]
\end{definition}

\begin{remark} We write
	$\F \mtriple[1] \naturalTrans \mtriple[2] \F$ instead 
	of $\F \monad[1] \naturalTrans \monad[2] \F$ because it 
    makes it obvious what 
	part is playing the role of the monad. We will keep this convention through
	the article.
\end{remark}

\begin{definition}
    \label{def:dl-morphism}
    Let $\dl : \F \mtriple[1] \naturalTrans \mtriple[2] \F$ and 
    $\dlbis : \G \mtriple[1] \naturalTrans \mtriple[2] \G$ two distributive laws.
    A morphism between $\dl$ and $\dlbis$ is a natural transformation
    $\alpha : \F \naturalTrans \G$ such that the diagram below commutes.
    \[
    \begin{tikzcd}
    \F \monad[1] \arrow[r, "\dl"] \arrow[d, "{\alpha \monad[1]}"'] & \monad[2] \F  \arrow[d, "{\monad[2] \alpha}"]  \\
    \G \monad[1] \arrow[r, "\dlbis"']                           & \monad[2] \G                            
    \end{tikzcd}
    \]
\end{definition}

\begin{theorem} (part one of Theorem 4.8 of \cite{Power02})
    \label{thm:extension-and-dl} 
Any extension
$\extension{\F} : \kleisli[1] \arrow \kleislibis[2]$ induces
a distributive law
$\dl : \F \mtriple[1] \naturalTrans \mtriple[2] \F$.
This law is given by the image of 
$\id_{\monad[1]X} \in \kleisli[1](\monad[1] X, X)$ by $\extension{F}$
\[ \dl_X \defEq \extension{\F}(\id_{\monad[1] X}) 
    \in \kleislibis[2](\F \monad[1] X, \F X) 
    = \catbis(\F \monad[1] X, \monad[2] \F X). \]
Conversely, any distributive law 
    $\dl : \F \mtriple[1] \naturalTrans \mtriple[2] \F$ induces an extension
    $\extension{\F} : \kleisli[1] \arrow \kleislibis[2]$ that maps an object 
    $X$ to $\extension{\F} X \defEq \F X$ and a morphism
    $f \in \kleisli[1](X, Y) = \cat(X, \monad[1] Y)$ to
    \[\extension{\F}f \defEq \begin{tikzcd}
        {\F X} & {\F \monad[1]Y} & {\monad[2] \F Y}
        \arrow["{\F f}", from=1-1, to=1-2]
        \arrow["{\dl_Y}", from=1-2, to=1-3]
    \end{tikzcd} \in \kleislibis[2](\F X, \F Y) \, .\]
    Those two constructions are inverse of each other, so there is a bijection 
    between extensions and distributive laws.
\end{theorem}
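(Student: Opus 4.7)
The plan is to establish the bijection by constructing both maps explicitly and checking that they are mutually inverse. Given an extension $\extension{\F} : \kleisli[1] \arrow \kleislibis[2]$, I would define $\dl_X \defEq \extension{\F}(\id_{\monad[1] X})$, where $\id_{\monad[1] X}$ is interpreted as the Kleisli morphism $\monad[1] X \arrow X$ corresponding to $\id_{\monad[1] X} : \monad[1] X \arrow \monad[1] X$ in $\cat$. Conversely, given a distributive law $\dl$, I would define $\extension{\F}$ on objects by $\extension{\F}X \defEq \F X$ and on morphisms $f \in \kleisli[1](X, Y)$ by $\extension{\F}f \defEq \dl_Y \comp \F f$, viewed as a morphism $\F X \arrow \F Y$ in $\kleislibis[2]$.

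The key technical device I would exploit throughout is the decomposition $f = \id_{\monad[1] Y} \kleislicomp \kleisliL[1] f$ valid for any $f \in \kleisli[1](X, Y)$, which is an immediate consequence of \cref{eq:KleisliL-comp}. This says that every Kleisli morphism is built from a structural piece of the form $\id_{\monad[1] Z}$ and a piece $\kleisliL[1] h$ coming from the base category, so that all verifications reduce to these two special cases and to how Kleisli composition is translated back to composition in $\cat$.

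For the first direction (extension to distributive law), naturality of $\dl$ would follow from applying $\extension{\F}$ to the equalities $\kleisliL[1] f \kleislicomp \id_{\monad[1] X} = \monad[1] f = \id_{\monad[1] Y} \kleislicomp \kleisliL[1] \monad[1] f$ in $\kleisli[1]$, and then unfolding the Kleisli compositions on each side via \cref{eq:KleisliL-comp}. The unit axiom would reduce, after a similar unfolding, to $\extension{\F}$ preserving identities. For the multiplication axiom, I would apply $\extension{\F}$ to the identity $\msum[1]_X = \id_{\monad[1] X} \kleislicomp \id_{\monad[1]^2 X}$ in $\kleisli[1]$: the left-hand side unfolds to $\dl_X \comp \F \msum[1]_X$, while the right-hand side, being a Kleisli composition of $\dl_X$ and $\dl_{\monad[1] X}$, unfolds via $g \kleislicomp f = \msum[2] \comp \monad[2] g \comp f$ to exactly $\msum[2]_{\F X} \comp \monad[2] \dl_X \comp \dl_{\monad[1] X}$.

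For the reverse direction, the fact that the defined $\extension{\F}$ extends $\F$ would be a direct consequence of the unit axiom of $\dl$, and functoriality on identities would follow from the same axiom. Functoriality on composition would combine the multiplication axiom with the naturality of $\dl$: unfolding $g \kleislicomp f$, applying $\F$, using naturality to commute $\dl$ past $\F \monad[1] g$, and then collapsing $\dl \comp \F \msum[1]$ via the multiplication axiom produces $\extension{\F}g \kleislicomp \extension{\F}f$. The two round-trips are then immediate: starting from $\dl$, one has $\extension{\F}(\id_{\monad[1] X}) = \dl_X \comp \F \id_{\monad[1] X} = \dl_X$; starting from $\extension{\F}$, the decomposition $f = \id_{\monad[1] Y} \kleislicomp \kleisliL[1] f$ together with functoriality of $\extension{\F}$ and the extension property yields $\dl_Y \comp \F f = \extension{\F}f$. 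The main obstacle will be bookkeeping: every morphism lives in several guises (as a $\cat$-morphism, as a Kleisli morphism, or as the image of $\kleisliL$), and careful tracking is required to ensure each equation lives in the right hom-set.
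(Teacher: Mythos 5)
Your proposal is correct, and it follows essentially the same route as the paper: the paper states the two constructions (here, and in the proof deferred to Theorem 4.8 of Power's article) and your verification fills in exactly the expected details, using the decomposition $f = \id_{\monad[1] Y} \kleislicomp \kleisliL[1] f$ together with \cref{eq:KleisliL-comp} and the extension property to check naturality, the two distributive-law axioms, functoriality, and the two round-trips. No gaps.
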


\begin{remark} \label{rem:dl-compositional}
This bijection is compositional. 
The natural transformation $\id_{\monad} : \monad \naturalTrans \monad$
is a distributive law between the identity functor $\idfun^{\cat}$ 
and the monad $\mtriple$, associated to the extension 
$\idfun^{\kleisli}$.
We can check that if $\extension{\F}$ is an extension 
of $\F : \cat \arrow \catbis$ whose associated distributive 
law is $\dl$, and if $\extension{\G}$ is an extension 
of $\G : \catbis \arrow \catter$ whose associated distributive 
law is $\dlbis$, then $\extension{\G} \extension{\F}$ is 
an extension of $\G \F$ whose associated distributive law is 
defined as the following pasting diagram.
    \begin{equation} \label{eq:dl-composition}
    \begin{tikzcd}
            {\cat} & {\catbis} & {\catter} \\
            {\cat} & {\catbis} & {\catter}
            \arrow["\F", from=1-1, to=1-2]
            \arrow["{\monad[1]}"', from=1-1, to=2-1]
            \arrow["{\monad[2]}"{description}, from=1-2, to=2-2]
            \arrow["{\monad[3]}", from=1-3, to=2-3]
            \arrow["\G", from=1-2, to=1-3]
            \arrow["\F"', from=2-1, to=2-2]
            \arrow["\G"', from=2-2, to=2-3]
            \arrow["{\dl}"', shorten <=7pt, shorten >=7pt, Rightarrow, from=2-1, to=1-2]
            \arrow["{\dlbis}"', shorten <=7pt, shorten >=7pt, Rightarrow, from=2-2, to=1-3]
        \end{tikzcd}
    \end{equation}
        which corresponds to the following composition: 
        \begin{tikzcd}
            {\G\F \monad[1]} & {\G \monad[2] \F} & {\monad[3] \G \F}
            \arrow["{\G \dl}", from=1-1, to=1-2]
            \arrow["{\dlbis \F}", from=1-2, to=1-3]
        \end{tikzcd}.
\end{remark}
The result below is also proved in~\cite{Power02}. The proof is simple, 
but this result is crucial in the development of this article.

\begin{theorem} (part two of Theorem 4.8 of \cite{Power02}) \label{thm:extension-and-dl-morphism} 
    Let $\dl : \F \mtriple[1] \naturalTrans \mtriple[2] \F$ and 
    $\dlbis : \G \mtriple[1] \naturalTrans \mtriple[2] \G$ be two distributive laws
    and let $\extension{\F}, \extension{\G}$ be their associated extensions.
    Then a natural transformation 
    $\alpha : \F \naturalTrans \G$ is a morphism between the distributive laws 
    $\dl$ and $\dlbis$ if and only if 
    $\alpha$ extends to $\extension{\F}$ and $\extension{\G}$.
\end{theorem}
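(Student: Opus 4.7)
The plan is to unfold the definition of naturality of $\kleisliL[2]\alpha$ in $\kleislibis[2]$ using the Kleisli composition rules of \cref{eq:KleisliL-comp}, and show that it is equivalent to the square of \cref{def:dl-morphism}.

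Suppose first that $\alpha$ is a morphism of distributive laws, and take $f \in \kleisli[1](X,Y) = \cat(X, \monad[1]Y)$. We compute both sides of the naturality square for $\kleisliL[2]\alpha$. Using the second equation in \cref{eq:KleisliL-comp} together with the formula $\extension{\F}f = \dl_Y \comp \F f$ from \cref{thm:extension-and-dl}, we get
\[
\kleisliL[2]\alpha_Y \kleislicomp[2] \extension{\F}f
= \monad[2]\alpha_Y \comp \dl_Y \comp \F f,
\qquad
\extension{\G}f \kleislicomp[2] \kleisliL[2]\alpha_X
= \dlbis_Y \comp \G f \comp \alpha_X.
\]
Applying naturality of $\alpha$ to $f : X \to \monad[1]Y$ on the right-hand side, $\G f \comp \alpha_X = \alpha_{\monad[1]Y} \comp \F f$, so the two sides agree as soon as $\monad[2]\alpha_Y \comp \dl_Y = \dlbis_Y \comp \alpha_{\monad[1]Y}$, which is exactly the hypothesis that $\alpha$ is a morphism of distributive laws.

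Conversely, assume $\kleisliL[2]\alpha$ is natural $\extension{\F} \naturalTrans \extension{\G}$. The trick, as in \cref{thm:extension-and-dl}, is to instantiate the naturality square at the ``generic'' Kleisli morphism $\id_{\monad[1]Y} \in \kleisli[1](\monad[1]Y, Y)$. Then $\extension{\F}(\id_{\monad[1]Y}) = \dl_Y$ and $\extension{\G}(\id_{\monad[1]Y}) = \dlbis_Y$ by the very definition of the distributive law associated to an extension, and the two composites above reduce respectively to $\monad[2]\alpha_Y \comp \dl_Y$ and $\dlbis_Y \comp \alpha_{\monad[1]Y}$ (no further naturality manipulation is needed since $f$ is already an identity in $\cat$). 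Their equality is precisely the square of \cref{def:dl-morphism}.

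The only genuine care needed is in correctly using the two equations of \cref{eq:KleisliL-comp}: the first ($\kleisliL h \kleislicomp f = \monad h \comp f$) is what turns the post-composition with $\kleisliL[2]\alpha_Y$ into an application of $\monad[2]\alpha_Y$ in $\catbis$, while the second ($g \kleislicomp \kleisliL h = g \comp h$) is what absorbs the pre-composition with $\kleisliL[2]\alpha_X$; the rest is direct. I do not expect a real obstacle: the proof is just a diagram chase in $\catbis$, whose only non-formal ingredient is naturality of $\alpha$ in the first direction.
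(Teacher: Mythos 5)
Your proof is correct and is essentially the paper's argument: the paper's own proof simply says the result follows directly from \cref{eq:KleisliL-comp} and naturality of $\alpha$ in $\cat$, which is exactly the diagram chase you spell out (including the standard trick of instantiating the Kleisli naturality square at the generic morphism $\id_{\monad[1]Y}$ for the converse).
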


\begin{proof} 
    This is a direct consequence of \cref{eq:KleisliL-comp} and naturality of $\alpha$ 
    in $\cat$.
\end{proof}

\begin{remark}
    The bijections of~\cref{thm:extension-and-dl}
    and~\cref{thm:extension-and-dl-morphism} give an isomorphism of $2$-categories. 
    We can define a $2$-category $\extensioncat$ whose objects are the monads 
    $\mtriple[1]$ on small 
    categories\footnote{For foundational issues.}. The hom-set
    $\extensioncat(\mtriple[1], \mtriple[2])$ consists in
    the pairs $(\F, \extension{\F})$ where $\F : \cat \arrow \catbis$
    is a functor and $\extension{\F} : \kleisli[1] \arrow \kleislibis[2]$ 
    is an extension of $\F$. 
    A $2$-cell
    $(\F, \extension{\F}) \naturalTrans (\G, \extension{\G})$ is a natural
    transformation $\alpha : \F \naturalTrans \G$ that extends to
    $\extension{\F}$ and $\extension{\G}$ in the sense of 
    \cref{def:extension-natural}.
    We can also define a $2$-category $\dlcat$ whose objects are also
    the monads $\mtriple$ on small categories.
    A morphism between $\mtriple[1]$ and $\mtriple[2]$
    consists of a pair $(\F, \dl)$ where $\F : \cat \arrow \catbis$
    is a functor and
    $\dl : \F \mtriple[1] \naturalTrans \mtriple[2] \F$ is a distributive law.
    The composition is given as in \cref{rem:dl-compositional}.
    The $2$-cells are the morphisms of distributive laws.
Then following \cref{rem:dl-compositional},
the bijection of \cref{thm:extension-and-dl} and 
\cref{thm:extension-and-dl-morphism}
is an isomorphism of $2$-categories between $\dlcat$ and $\extensioncat$.
\end{remark}

\subsection{Distributive law for comonads and extension to 
coKleisli categories}
\label{sec:comonad-dl}

The coKleisli category of a comonad $\comtriple$ is defined in a dual way 
to the Kleisli category of a monad.
If $\comtriple = (\comonad, \comunit, \comsum)$ 
is a comonad on $\cat$, then its coKleisli category $\cokleisli$ 
is the category with the same objects as $\cat$ and such that 
$\cokleisli(X, Y) = \cat(\comonad X, Y)$. The identity is given by 
$\comunit_X \in \cokleisli(X, X)$ and the composition of 
$f \in \cokleisli(X, Y)$ with $g \in \cokleisli(Y, Z)$
is defined as 
\begin{equation}
    g \cokleislicomp f = g \comp \comonad f \comp \comsum \, .
\end{equation} 
There is a functor 
$\cokleisliL : \cat \arrow \cokleisli$ such that 
$\cokleisliL X = X$ and for any $f \in \cat(X, Y)$,
$\cokleisliL f = f \comp \comunit \in \cokleisli(X, Y)$.

The notion of structure extension to the coKleisli category of a comonad 
is the same as the one of~\cref{def:extension,def:extension-natural}
for Kleisli categories.
Extensions to the coKleisli categories are also 
related to distributive law, except that this time the 
natural transformation goes in the opposite direction and has type
$\comtriple[2] \F \naturalTrans \F \comtriple[1]$.

\begin{definition}  \label{def:codl}
    A natural transformation $\dl : \comtriple[2] \F \naturalTrans \F \comtriple[1]$ 
    is a \emph{distributive law} 
    between $\F$ and two comonads $\comtriple[1]$ and $\comtriple[2]$ 
    if the following two diagrams commute.
    \[\begin{tikzcd}
        {\comonad[2] \F} & {\F \comonad[1]} \\
        & \F
        \arrow["{\dl}", from=1-1, to=1-2]
        \arrow["{\comunit[2] \F}"', from=1-1, to=2-2]
        \arrow["{\F \comunit[1]}", from=1-2, to=2-2]
    \end{tikzcd} \quad
    \begin{tikzcd}
        {\comonad[2] \F} && {\F \comonad[1]} \\
        {\comonad[2]^2 \F} & {\comonad[2] \F \comonad[1]} & {\F \comonad[1]^2}
        \arrow["\dl", from=1-1, to=1-3]
        \arrow["{\comonad[2] \dl}"', from=2-1, to=2-2]
        \arrow["{\dl \comonad[2]}"', from=2-2, to=2-3]
        \arrow["{\F \comsum[2]}", from=1-3, to=2-3]
        \arrow["{\comsum[2] \F}"', from=1-1, to=2-1]
    \end{tikzcd}\]
\end{definition}

\begin{definition} \label{def:codl-morphism}
    Let $\dl : \comtriple[2] \F \naturalTrans \F \comtriple[1]$ and 
    $\dlbis : \comtriple[2] \G \naturalTrans \G \comtriple[1]$
    two distributive laws.
    A morphism between $\dl$ and $\dlbis$ is a natural transformation
    $\alpha : \F \naturalTrans \G$ such that the diagram below commutes.
    \[ \begin{tikzcd}
		{\comonad[2] \F} & {\F \comonad[1]} \\
		{\comonad[2] \G} & {\G \comonad[1]}
		\arrow["{\dl}", from=1-1, to=1-2]
		\arrow["{\comonad[2] \alpha}"', from=1-1, to=2-1]
		\arrow["{\dlbis}"', from=2-1, to=2-2]
		\arrow["{\alpha \comonad[1]}", from=1-2, to=2-2]
	\end{tikzcd}  \]
\end{definition}

\begin{theorem} (Theorem 5.15 of \cite{Power02})
    \label{thm:coextension-and-dl-morphism} 
    \label{thm:coextension-and-dl} 
    \begin{enumerate}
        \item There is a bijection between extensions and distributive laws.
        \item A natural transformation is a morphism of distributive laws 
        if and only if it extends to the coKleisli categories.
    \end{enumerate}
\end{theorem}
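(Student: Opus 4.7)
The plan is to dualize the proof of the monad version, \cref{thm:extension-and-dl} and \cref{thm:extension-and-dl-morphism}. The two directions of the bijection should be constructed as follows. Given an extension $\extension{\F} : \cokleisli[1] \arrow \cokleislibis[2]$, define
\[ \dl_X \defEq \extension{\F}(\id_{\comonad[1] X}) \in \cokleislibis[2](\comonad[1]X, \F X) = \catbis(\comonad[2]\F \comonad[1]X, \F \comonad[1]X), \]
and then precompose with $\F\comsum[1]$ (or, equivalently, take $\extension{\F}$ applied to a suitable identity viewed through $\cokleisliL[1]$). Conversely, given a distributive law $\dl$, define $\extension{\F}X \defEq \F X$ and for $f \in \cokleisli[1](X, Y) = \cat(\comonad[1] X, Y)$, put
\[ \extension{\F}f \defEq \F f \comp \dl_X : \comonad[2] \F X \arrow \F \comonad[1] X \arrow \F Y. \]

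First I would check that the second construction really gives a functor extending $\F$: preservation of identities is immediate from the counit axiom of the distributive law (the left triangle of \cref{def:codl}), while preservation of coKleisli composition unfolds to a diagram that is exactly the coassociativity square of \cref{def:codl} after pasting with naturality of $\dl$ and functoriality of $\F$. Next I would verify that the first construction produces a natural transformation of the required shape, with the axioms of \cref{def:codl} arising from $\extension{\F}(\id_{\comonad[1]X}) = \dl_X$ combined with the fact that $\extension{\F}$ preserves identities and composition along morphisms of the form $\cokleisliL[1] h$ (dual to \cref{eq:KleisliL-comp}). Finally I would check the two constructions are mutually inverse: starting from an extension, rebuilding it from its associated law gives back the same functor because any $f \in \cokleisli[1](X,Y)$ can be written as $f \cokleislicomp \id_{\comonad[1] X}$ in a suitable sense, so $\extension{\F}f$ is forced by $\dl$ and $\F f$; starting from a law, the associated extension applied to $\id$ recovers $\dl$ on the nose by the counit axiom.

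For part (2), I would fix the two extensions $\extension{\F}$, $\extension{\G}$ with associated laws $\dl$, $\dlbis$, and simply compute both sides of the naturality square for $\cokleisliL[2]\alpha$ in $\cokleislibis[2]$. Using the dual of \cref{eq:KleisliL-comp}, the equality
\[ \cokleisliL[2]\alpha_Y \cokleislicomp \extension{\F}f = \extension{\G}f \cokleislicomp \cokleisliL[2]\alpha_X \]
rewrites, on the level of underlying $\catbis$-morphisms, as $\alpha_Y \comp \F f \comp \dl_X = \G f \comp \dlbis_X \comp \comonad[2]\alpha_X$. Naturality of $\alpha$ in $\cat$ lets one slide $\alpha$ past $\F f$ versus $\G f$, so this holds for all $f$ if and only if $\alpha \comonad[1] \comp \dl = \dlbis \comp \comonad[2] \alpha$, which is exactly the morphism-of-distributive-laws condition of \cref{def:codl-morphism}.

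The only mild obstacle is bookkeeping the direction of the distributive law, which now runs $\comtriple[2]\F \naturalTrans \F\comtriple[1]$ rather than the monadic $\F\mtriple[1]\naturalTrans \mtriple[2]\F$; once this direction is correctly threaded through the definitions of $\cokleislicomp$ and $\cokleisliL$, every diagram is the formal dual of one already verified in \cref{sec:monad-dl}, so no genuinely new computation is required.
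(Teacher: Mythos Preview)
Your overall strategy—dualizing \cref{thm:extension-and-dl} and \cref{thm:extension-and-dl-morphism}—is correct and is exactly the intended argument; the paper itself gives no proof but simply cites Theorem~5.15 of \cite{Power02}. Your construction of the extension from a distributive law, and your argument for part~(2), are both right.

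There is, however, a type error in your formula for recovering the distributive law from an extension. The element $\id_{\comonad[1] X} \in \cat(\comonad[1] X, \comonad[1] X)$ is a coKleisli morphism in $\cokleisli[1](X, \comonad[1] X)$ (since $\cokleisli[1](X,\comonad[1]X)=\cat(\comonad[1]X,\comonad[1]X)$), so applying $\extension{\F}$ gives
\[
\extension{\F}(\id_{\comonad[1] X}) \in \cokleislibis[2](\F X, \F \comonad[1] X) = \catbis(\comonad[2] \F X, \F \comonad[1] X),
\]
which is already of the correct type for $\dl_X$. No precomposition with $\F\comsum[1]$ is needed; that proposed ``fix'' stems from the miswritten target $\cokleislibis[2](\comonad[1] X, \F X)=\catbis(\comonad[2]\F\comonad[1]X,\F\comonad[1]X)$, which is not where $\extension{\F}(\id_{\comonad[1] X})$ lands. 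Once this is corrected, the two constructions are visibly inverse: starting from $\dl$ and building $\extension{\F}$, one has $\extension{\F}(\id_{\comonad[1]X})=\F(\id)\comp\dl_X=\dl_X$; and starting from $\extension{\F}$, the factorization $f=\cokleisliL[1]f\cokleislicomp\id_{\comonad[1]X}$ in $\cokleisli[1]$ (dual of \cref{eq:KleisliL-comp}) forces $\extension{\F}f=\F f\comp\dl_X$. The remainder of your proof goes through without change.
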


Again, this bijection is compositional in a similar way to 
\cref{rem:dl-compositional} and provides an isomorphism of 
$2$-categories.

\subsection{Distributive laws and lifting to (co)Eilenberg-Moore category}
\label{sec:dl-em}

There are two other notions of distributive law, by taking the dual direction 
of \cref{def:dl,def:codl}. 

\begin{definition} \label{def:dl-em}
    A natural transformation 
    $\dl : \mtriple[2] \F \naturalTrans \F \mtriple[1]$  is a \emph{distributive law} 
    between $\F$ and two monads $\mtriple[1]$ and $\mtriple[2]$ 
    if two diagrams analogue to \cref{def:dl} commute.
    A natural transformation 
    $\dl : \F \comtriple[1] \naturalTrans \comtriple[2] \F$  is a \emph{distributive law} 
    between $\F$ and two comonads $\comtriple[1]$ and $\comtriple[2]$ 
    if two diagrams analogue to \cref{def:codl} commute.
	Morphisms of distributive law are defined in the same 
    way as \cref{def:dl-morphism} and \cref{def:codl-morphism}.
\end{definition}

Those notions of distributive law are tied to the notion of lifting of a functor and
natural transformations to the Eilenberg-Moore categories
$\algebras[1]$ and $\algebras[2]$ of the monads (and similarly
the coEilenberg-Moore categories $\coalgebras[1]$ and $\coalgebras[2]$ of the comonads). 
A lifting of a functor 
$\cat \arrow \catbis$ is a functor $\lifting{F} : \algebras[1] \arrow \algebras[2]$
such that the diagram below commutes (here, $\forget$ is the forgetful functor),
see \cite{Power02} for more details.
\[
    \begin{tikzcd}
    \cat \arrow[r, "{\F}"] & \catbis  \\
    \algebras[1]  \arrow[u, "{\forget}"] \arrow[r, "{\lifting{\F}}"'] & \algebras[2] \arrow[u, "{\forget}"']                             
    \end{tikzcd}
\]
We summarize the different configurations for a distributive law, 
depending on the direction of the natural transformation and the 
choice between monads and comonads in \cref{fig:dl}.

\begin{figure}
\begin{center}
\begin{tabular}{| c | c | c |}
    \hline
    Distributive law & Role of $H$ and $K$ & Type of extension \\
    \hline
    $\F \ltriple \naturalTrans \rtriple \F$ & Monads &   Extension of $\F$ to Kleisli \\
    $\rtriple \F \naturalTrans \F \ltriple$ & Monads &   Liftings of $\F$ to Eilenberg-Moore \\
    $\F \lcomtriple \naturalTrans \rcomtriple \F$ & Comonads & Liftings of $\F$ to coEilenberg-Moore \\
    $\lcomtriple \F \naturalTrans \F \rcomtriple$ & Comonads & Extension of $\F$ to coKleisli  \\
    \hline
\end{tabular}
\end{center}
\caption{Configurations for distributive laws}
\label{fig:dl}
\end{figure}

The central notion of this article, Taylor expansion, 
is given by a distributive law 
$\Sdl : \oc \S \naturalTrans \S \oc$ between a functor
$\S$ that represents partial summability (see 
\cref{sec:summability-structure}) and
the resource comonad $\oc\_$ of LL (see \cref{def:resource-comonad}).
It turns out that $\S$ is a monad, and that 
$\Sdl : \oc \S \naturalTrans \S \oc$ is also a distributive law 
between the \emph{functor} $\oc\_$, and the \emph{monad}
$\S$.
As such, the following notion of distributive law of a comonad 
over a monad is central to this article. 

\begin{definition} 

Let $\mtriple = (\monad, \munit, \msum)$ be a monad on a category $\cat$, and 
$\comtriple = (\comonad, \comunit, \comsum)$
be a comonad on $\cat$.
A distributive law of the comonad $\comtriple$ 
over the monad $\mtriple$ is a natural transformation 
$\dl : \comtriple \mtriple \naturalTrans \mtriple \comtriple$
that is both a distributive law 
$\dl : \comonad \mtriple \naturalTrans \mtriple \comonad$ (\cref{def:dl})
and a distributive law 
$\dl : \comtriple \monad \naturalTrans \monad \comtriple$ (\cref{def:codl}).
\end{definition}

Similarly, there are distributive laws between two monads, as well as distributive 
laws between two comonads, with similar definitions 
as above. In fact, distributive laws were first 
introduced in~\cite{Beck69} between two monads.
The notion of distributive law of a comonad 
over another comonad is also of interest in the article, because the functor 
$\S$ admits a comonadic structure and $\Sdl : \oc \S \naturalTrans 
\S \oc$ is a distributive law of the comonad $\oc\_$ over 
the comonad $\S$.

Observe that the following assertions are equivalent, by definition.
\begin{enumerate}
	\item $\dl$ is a distributive law of the comonad $\comtriple$ 
	over the monad $\mtriple$,
	\item $\dl$ is a distributive law 
	$\comonad \mtriple \naturalTrans \mtriple \comonad$
	and $\comunit$ and $\comsum$ are morphisms of distributive laws,
	\item $\dl$ is a distributive law 
	$\comtriple \monad \naturalTrans \monad \comtriple$
	and $\munit$ and $\msum$ are morphisms of distributive laws,
\end{enumerate}
where the distributive laws $\comonad^2 \mtriple \naturalTrans \mtriple \comonad^2$,
$\comtriple \monad^2 \naturalTrans \monad^2 \comtriple$, $\mtriple \naturalTrans 
\mtriple$ and $\comtriple \naturalTrans \comtriple$ involved above 
are given by compositionality
of the distributive laws, see~\cref{rem:dl-compositional}.

By \cref{thm:coextension-and-dl,thm:coextension-and-dl-morphism},
a distributive law of a comonad over a monad provides 
a monad $\extension{\mtriple}$ on $\cokleisli$ that extends $\mtriple$.
By \cref{thm:extension-and-dl,thm:extension-and-dl-morphism},
such distributive law also provides 
a comonad $\extension{\comtriple}$ on $\kleisli$ 
that extends $\comtriple$.
We can also check that $(\cokleisli)_{\extension{\mtriple}} =
(\kleisli)_{\extension{\comtriple}}$, so a distributive law of a comonad over 
a monad allows to combine them in arbitrary ways. This result was 
first proved in~\cite{VanOsdol71}.

There is also another notion of distributive laws 
of the monad $\mtriple$ over the comonad $\comtriple$ 
obtained by reversing the arrows. 
Such distributive 
laws are associated to lifting of the monad $\mtriple$ to the coEilenberg-Moore 
category of $\comtriple$, and lifting of 
$\comtriple$ to the Eilenberg-Moore category of $\mtriple$.

Those notions of distributive laws between monads and comonads allow 
in particular to define the notion of bimonad as a functor equipped 
with a structure of monad and a structure of comonad together with a distributive 
law expressing that these two structures commute in some sense. 
Bimonads are formally similar to the notion of Hopf bialgebra in 
a symmetric monoidal category, apart that the role of the symmetry is 
played here by the distributive law.

\begin{definition}[6.2 of \cite{Mesablishvili11}] 
    \label{def:bimonad} 
	Assume that $\bimonmon = (\bimon, \munit, \msum)$ is a monad on $\category$
	and $\bimoncomon = (\bimon, \comunit, \comsum)$ is a comonad on $\category$.
	Then $(\bimon, \munit, \msum, \comunit, \comsum)$ is a 
	$\bimonswap$-bimonad if $\bimonswap$ is a distributive law
	$\bimonmon \bimoncomon \naturalTrans \bimoncomon \bimonmon$
	of the monad $\bimonmon$ over the comonad $\bimoncomon$,
	and a distributive law $\bimoncomon \bimonmon \naturalTrans \bimonmon \bimoncomon$
	of the comonad $\bimoncomon$ over the monad $\bimonmon$,
	and if the diagrams below commute.
	\[ \begin{tikzcd}
	  HH &  \\
	  H & \idfun
	  \arrow["\msum"', from=1-1, to=2-1]
	  \arrow["{\comunit \ntcomph \comunit}", from=1-1, to=2-2]
	  \arrow["\comunit", from=2-1, to=2-2]
	\end{tikzcd} \quad 
	\begin{tikzcd}
	  \idfun & H \\
	   & HH
	  \arrow["\munit \ntcomph \munit"', from=1-1, to=2-2]
	  \arrow["\munit", from=1-1, to=1-2]
	  \arrow["\comsum", from=1-2, to=2-2]
	\end{tikzcd} \quad 
	\begin{tikzcd}
	  \id & H \\
	  & \id
	  \arrow["\munit", from=1-1, to=1-2]
	  \arrow["\comunit", from=1-2, to=2-2]
	  \arrow[Rightarrow, no head, from=1-1, to=2-2]
	\end{tikzcd} \] 
	\[ \begin{tikzcd}
	  {H H} & H & HH \\
	  HHHH && HHHH
	  \arrow["\msum", from=1-1, to=1-2]
	  \arrow["\comsum", from=1-2, to=1-3]
	  \arrow["{\comsum \ntcomph \comsum}"', from=1-1, to=2-1]
	  \arrow["{H \bimonswap H}"', from=2-1, to=2-3]
	  \arrow["{\msum \ntcomph \msum}"', from=2-3, to=1-3]
	\end{tikzcd} \]
	Recall that $\ntcomph$ is defined as the horizontal composition of natural 
	transformations: if 
	$F, G : \cat \arrow \catbis$, $F', G' : \catbis \arrow \catter$, 
	$\alpha : F \naturalTrans G$ and $\beta : F' \naturalTrans G'$,
	then $\beta \ntcomph \alpha : GF \naturalTrans G' F'$ is defined as
	\begin{equation} \label{eq:horizontal-composition}
		(\beta \ntcomph \alpha)_X \defEq \beta_{F' X} \comp G \alpha_X = 
	G' \alpha_X \comp \beta_{F X} \, .
	\end{equation}
  \end{definition}

  \begin{remark}
    There is a difference between $\tau$-bimonads and bimonads as introduced
    in 4.1 of \cite{Mesablishvili11}.
    Every $\tau$-bimonad is a bimonad, but the converse is not true.
    Still, $\tau$ is a part of the $\tau$-bimonad structure and as such 
    should not be explicitly referenced by the terminology.
    Furthermore, all the bimonads under consideration in our article are $\tau$-bimonads.
    Thus, we will refer to $\tau$-bimonads simply as bimonads. 
  \end{remark}
  
  \begin{remark} \label{rem:bimonad-involutive}
	The four diagram making $\bimonswap$ a distributive law
	$\bimonmon \bimoncomon \naturalTrans \bimoncomon \bimonmon$
	are exactly the same as the diagrams making $\bimonswap$ 
	a distributive law $\bimoncomon \bimonmon \naturalTrans \bimonmon \bimoncomon$,
	except that the arrows involving $\bimonswap$ are reversed. In particular, 
	if $\bimonswap$ is involutive then any of the two assumptions implies the other.
  \end{remark}

\section{Background: symmetric monoidal monads as distributive laws}
\label{sec:smm}

An important concept in the theory of monads (and in this article) 
is the concept of lax symmetric monoidal 
functor and lax symmetric monoidal monad. This concept 
is important because
if $\smcat$ is a symmetric monoidal category
and $\mtriple$ is a monad on $\smcat$, then 
$\kleislism$ inherits from $\smcat$ the structure of a symmetric 
monoidal category as first shown in \cite{Day70}. 
As expected, this notion is then deeply connected to the notion of 
distributive laws, and we discuss this connection in this section.

Lax monoidal structures can also be expressed in terms of strength, see 
\cite{Kock70,Kock72}. We detail this process as the notion of strength
crucially allows us to define a theory of Taylor expansion with regard to only one parameter
in \cref{sec:compatibility-product}.

\subsection{Distributive laws on product categories}
\label{sec:product-category}

Let us recall first some fact and notation about product categories.
The category $\categorycat$ whose objects are the (small) categories 
and whose morphisms are the functors 
is a cartesian category, with terminal object the category 
$\categoryunit$ which contains one object and one morphism, and whose
categorical product is defined as follows.

	Given two categories $\cat[1]$ and $\cat[2]$, the product category
	$\cat[1] \times \cat[2]$ is the category whose objects are the pairs 
	$(X_1, X_2)$ with $X_1 \in \objects(\cat[1])$ and $X_2 \in \objects(\cat[2])$ 
	and whose morphisms are the pairs $(f_1, f_2)$ with $f_1 \in \cat[1](X_1, Y_1)$ and
	$f_2 \in \cat[2](X_2, Y_2)$.

	For any functors $\F_1 : \cat[1] \arrow \catbis[1]$ and $\F_2 : \cat[2] \arrow \catbis[2]$,
	we can define the functor $\F_1 \times \F_2 : \cat[1] \times \cat[2] \arrow \catbis[1] \times 
	\catbis[2]$ by $(\F_1 \times \F_2) (X_1, X_2) \defEq (\F_1 X_1, \F_2 X_2)$ and 
	$(\F \times \F_2) (f_1, f_2) \defEq (\F_1 f_1, \F_2 f_2)$.

	Given $\F_1 : \cat \arrow \catbis[1]$ and $\F_2 : \cat \arrow \catbis[2]$, we can
	define $\prodPair{F_1}{F_2} : \cat \arrow \catbis[1] \times \catbis[2]$
	by $\prodPair{F_1}{F_2} X = (\F_1 X, \F_2 X)$ and
	$\prodPair{F_1}{F_2} f = (\F_1 f, \F_2 f)$.
	For any category $\cat$, we can define the functor 
	$\diagonal^{\cat} : \cat \arrow \cat \times \cat$
	by $\diagonal^{\cat}(X) = (X, X)$ and $\diagonal^{\cat} f = (f, f)$. That is,
	$\diagonal^{\cat} = \prodPair{\idfun}{\idfun}$.

	Given the functors  $\F_1, G_1  : \cat[1] \arrow \catbis[1]$ and 
	$\F_2, \G_2 : \cat[2] \arrow \catbis[2]$ and two natural transformation 
	$\alpha_1 : \F_1 \naturalTrans \G_1$ and $\alpha_2 : \F_2 \naturalTrans \G_2$, 
	we can define the natural transformation $(\alpha_1, \alpha_2) : \F_1 \times \F_2
	\naturalTrans \G_1 \times \G_2$ by 
	$(\alpha_1, \alpha_2)_{(X_1, X_2)} \defEq (\alpha_{1, X_1}, \alpha_{2, X_2})$. 

\begin{definition} \label{def:product-monad}
	Given a monad $\mtriple[1] = (\monad[1], \munit[1], \msum[1])$ on $\cat[1]$ and 
	a monad $\mtriple[2] = (\monad[2], \munit[2], \msum[2])$ on $\cat[2]$, we can 
	define the monad $\mtriple[1] \times \mtriple[2]$ on $\cat[1] \times \cat[2]$ whose
	unit is $(\munit[1], \munit[2])$ and whose sum is $(\msum[1], \msum[2])$. 
\end{definition}

We can check that $(\cat[1] \times \cat[2])_{\mtriple[1] \times \mtriple[2]}
	= \kleisli[1] \times \kleisli[2]$, so the following lemmas 
    make sense.

\begin{lemma} \label{prop:extension-diagonal}
	The functor $\diagonal^{\kleisli} : \kleisli \arrow \kleisli \times 
	\kleisli $
	is an extension of $\diagonal^{\cat}$
	whose associated distributive law 
	is $\id_{\prodPair{M}{M}} : \prodPair{M}{M} \naturalTrans \prodPair{M}{M}$.
\end{lemma}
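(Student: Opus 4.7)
The plan is to unfold the two claims in turn against the relevant definitions; both are mechanical verifications once the identifications are in place. I would start by fixing notation: by Definition~\ref{def:product-monad}, the Kleisli category of $\mtriple \times \mtriple$ on $\cat \times \cat$ is $\kleisli \times \kleisli$, with identity $(\munit, \munit)$ and with Kleisli composition computed componentwise (applying the formula of $\kleisli$ independently in each coordinate).

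First I would verify that $\diagonal^{\kleisli}$ is indeed a functor $\kleisli \arrow \kleisli \times \kleisli$. A morphism $f \in \kleisli(X, Y) = \cat(X, \monad Y)$ is sent to $(f, f) \in (\cat \times \cat)((X, X), (\monad Y, \monad Y))$, which under the product-monad identification is exactly $(\kleisli \times \kleisli)((X, X), (Y, Y))$. Preservation of identities is the equality $\munit_X \mapsto (\munit_X, \munit_X)$, and preservation of composition follows from the componentwise nature of Kleisli composition in the product, since $g \kleislicomp f \mapsto (g \kleislicomp f, g \kleislicomp f) = (g, g) \kleislicomp (f, f)$. The extension diagram of Definition~\ref{def:extension} then commutes trivially: on an object $X$ both paths yield $(X, X)$, and on a morphism $f \in \cat(X, Y)$ both paths yield $(\munit_Y \comp f, \munit_Y \comp f)$.

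Second, I would compute the associated distributive law using the formula of Theorem~\ref{thm:extension-and-dl}, namely $\dl_X \defEq \diagonal^{\kleisli}(\id_{\monad X})$, regarded as a morphism in $\cat \times \cat$ from $\diagonal^{\cat}(\monad X) = (\monad X, \monad X)$ to $(\monad \times \monad)(\diagonal^{\cat} X) = (\monad X, \monad X)$. By definition of $\diagonal^{\kleisli}$ on morphisms this is the pair $(\id_{\monad X}, \id_{\monad X}) = \id_{(\monad X, \monad X)}$. Using that $\diagonal^{\cat} \monad = \prodPair{\monad}{\monad} = (\monad \times \monad) \diagonal^{\cat}$ as functors $\cat \arrow \cat \times \cat$, this identity morphism is precisely the component at $X$ of $\id_{\prodPair{M}{M}}$, concluding the proof.

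There is no genuine obstacle in this argument; the only point worth being careful about is the identification between $(\kleisli \times \kleisli)((X_1, X_2), (Y_1, Y_2))$ and $\cat(X_1, \monad Y_1) \times \cat(X_2, \monad Y_2)$ provided by Definition~\ref{def:product-monad}, after which every step becomes a one-line unfolding.
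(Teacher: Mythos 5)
Your proof is correct and is exactly the "straightforward computation" the paper leaves implicit: verify the extension square of Definition~\ref{def:extension} and apply the formula $\dl_X = \extension{\F}(\id_{\monad X})$ from Theorem~\ref{thm:extension-and-dl}, identifying $(\id_{\monad X},\id_{\monad X})$ with the component of $\id_{\prodPair{M}{M}}$. Nothing differs from the paper's (unstated) argument beyond your welcome care with the identification $(\cat\times\cat)_{\mtriple\times\mtriple}=\kleisli\times\kleisli$.
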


\begin{proof} Straightforward computation.
\end{proof}

\begin{lemma} \label{prop:extension-product} 	
	Let $\F : \cat[1] \arrow \catbis[1]$ and $\G : \cat[2] \arrow \catbis[2]$. 
	Let $\extension{F}$ be an extension of $F$ and 
    $\extension{G}$ an extension of $G$ whose associated distributive laws 
    are respectively $\dl$ and $\dlbis$. 
    Then $\extension{F} \times \extension {G}$ is 
	an extension of $F \times G$ whose associated distributive law
	is $(\dl, \dlbis)$.
\end{lemma}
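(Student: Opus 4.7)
The plan is to verify the two assertions of the lemma (extension, and identification of the associated distributive law) by a direct unpacking of the product structure, exploiting the fact that everything in sight is defined componentwise.

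First I would verify that $\extension{\F} \times \extension{\G}$ is indeed an extension of $\F \times \G$. By \cref{def:extension} applied to the monad $\mtriple[1] \times \mtriple[2]$, this amounts to checking the commutativity of the square relating $\F \times \G$, $\extension{\F} \times \extension{\G}$, and the two Kleisli inclusions. The key observation is that, under the identification $(\cat[1] \times \cat[2])_{\mtriple[1] \times \mtriple[2]} = \kleisli[1] \times \kleisli[2]$, the Kleisli inclusion factors as the product of the two component inclusions, i.e.\ $\kleisliL[{\mtriple[1] \times \mtriple[2]}] = \kleisliL[1] \times \kleisliL[2]$. The required square is then literally the cartesian product (in $\categorycat$) of the two extension squares for $\extension{\F}$ and $\extension{\G}$, hence commutes.

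Next I would compute the distributive law associated to $\extension{\F} \times \extension{\G}$ via \cref{thm:extension-and-dl}. By definition, it is obtained by applying $\extension{\F} \times \extension{\G}$ to the identity $\id_{(\monad[1] \times \monad[2])(X_1, X_2)}$ seen in the product Kleisli category. Since identities, and the functor $\extension{\F} \times \extension{\G}$ itself, are componentwise, this identity decomposes as $(\id_{\monad[1] X_1}, \id_{\monad[2] X_2})$ and its image is $(\extension{\F}(\id_{\monad[1] X_1}), \extension{\G}(\id_{\monad[2] X_2})) = (\dl_{X_1}, \dlbis_{X_2})$, which is exactly $(\dl, \dlbis)_{(X_1, X_2)}$.

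There is essentially no obstacle: both statements reduce to the observation that Kleisli categories, monad units/multiplications, and identities on a product monad are all defined pointwise in each coordinate, so that the extension/distributive-law bijection of \cref{thm:extension-and-dl} commutes with the cartesian product of categories. The only small care to take is in making the identification $(\cat[1] \times \cat[2])_{\mtriple[1] \times \mtriple[2]} = \kleisli[1] \times \kleisli[2]$ explicit, which I would do once at the beginning of the proof.
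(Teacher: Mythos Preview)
Your proposal is correct and is exactly the ``straightforward computation'' the paper alludes to: you simply unfold the componentwise definitions of the product Kleisli category, the product Kleisli inclusion, and the formula of \cref{thm:extension-and-dl}. There is nothing to add.
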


\begin{proof} Straightforward computation.
\end{proof}

\begin{lemma} \label{prop:extension-pairing}
	Let $\F : \cat \arrow \catbis[1]$ and $\G : \cat \arrow \catbis[2]$. 
	Let $\extension{F}$ be an extension of $F$ and 
    $\extension{G}$ an extension of $G$ whose associated distributive laws 
    are respectively $\dl^{\F}$ and $\dlbis$. 
    Then $\prodPair{\extension{F}}{\extension {G}}$ is 
	an extension of $\prodPair{F}{G}$ whose associated distributive law
	is $(\dl, \dlbis)$.
\end{lemma}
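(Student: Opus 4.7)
The plan is to factor the pairing through the diagonal and the product, and then appeal to the previous two lemmas together with the compositionality of the bijection between extensions and distributive laws recorded in \cref{rem:dl-compositional}.

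First, I would observe the two decompositions
\[ \prodPair{\F}{\G} = (\F \times \G) \comp \diagonal^{\cat} \qquad \text{and} \qquad \prodPair{\extension{\F}}{\extension{\G}} = (\extension{\F} \times \extension{\G}) \comp \diagonal^{\kleisli}\,, \]
both immediate from the definitions of pairing, product and diagonal. The second equation takes place at the level of Kleisli categories and uses the identification $(\cat \times \cat)_{\mtriple \times \mtriple} = \kleisli \times \kleisli$ (and similarly for $\catbis[1] \times \catbis[2]$ with $\mtriple[1] \times \mtriple[2]$) noted below \cref{def:product-monad}.

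Next, I would apply the two preceding lemmas. By \cref{prop:extension-diagonal}, $\diagonal^{\kleisli}$ is an extension of $\diagonal^{\cat}$ whose associated distributive law is the identity on $\prodPair{\monad}{\monad}$. By \cref{prop:extension-product}, $\extension{\F} \times \extension{\G}$ is an extension of $\F \times \G$ whose associated distributive law is $(\dl, \dlbis)$.

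Finally, I would invoke the compositionality of the bijection from \cref{rem:dl-compositional}: a composition of extensions is an extension whose associated distributive law is the pasting described in \cref{eq:dl-composition}. Hence $(\extension{\F} \times \extension{\G}) \comp \diagonal^{\kleisli}$, which is $\prodPair{\extension{\F}}{\extension{\G}}$ by the first observation, is an extension of $\prodPair{\F}{\G}$. Pasting the identity distributive law for $\diagonal^{\cat}$ with the law $(\dl, \dlbis)$ for $\F \times \G$ collapses to the whiskering $(\dl, \dlbis) \diagonal^{\cat}$, whose component at $X$ is $(\dl_X, \dlbis_X)$. This is exactly the natural transformation $(\dl, \dlbis) : \prodPair{\F}{\G} \mtriple \naturalTrans (\mtriple[1] \times \mtriple[2]) \prodPair{\F}{\G}$ claimed by the statement.

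There is no genuine obstacle here; the only care needed is to track which monad is acting on which product category and to notice that pasting a trivial distributive law simply reduces to whiskering the other one, so no computation needs to be ground through.
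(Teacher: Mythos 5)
Your proof is correct and follows essentially the same route as the paper: factoring $\prodPair{\extension{\F}}{\extension{\G}}$ as $(\extension{\F} \times \extension{\G}) \comp \diagonal^{\kleisli}$ and then combining \cref{prop:extension-diagonal,prop:extension-product} with the compositionality of the extension/distributive-law bijection. The explicit remark that pasting the identity law with $(\dl, \dlbis)$ collapses to whiskering is a welcome (if routine) spelling-out of what the paper leaves implicit.
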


\begin{proof} Observe that $\prodPair{\extension{F}}{\extension{G}}
	= (\extension{F} \times \extension{G}) \diagonal^{\kleisli}$, 
    so the result follows from \cref{prop:extension-product,prop:extension-diagonal} 
    and the compositionality of the bijection between extensions and distributive laws
\end{proof}

\begin{lemma} \label{prop:extension-constant} 
	For any object $A$ of $\category$, the constant endofunctor 
 	$A^{\kleisli} : \kleisli \arrow \kleisli$ is an 
 	extension of the constant endofunctor $A : \category \arrow \category$. 
	Its associated distributive law is 
	$\munit \in \cat(A, \monad A)$
\end{lemma}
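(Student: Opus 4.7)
The plan is a direct unfolding of definitions, organized into three short steps, since the statement is essentially a sanity check for the bijection of \cref{thm:extension-and-dl}.

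First, I would confirm that the assignment $A^{\kleisli}$ is genuinely a functor $\kleisli \to \kleisli$. On objects it sends everything to $A$; on a morphism it sends everything to the identity of $A$ in $\kleisli$, which is $\munit_A \in \cat(A, \monad A)$. Preservation of identities is immediate, and preservation of composition reduces to $\munit_A \kleislicomp \munit_A = \munit_A$, which just expresses that $\munit_A$ is the Kleisli identity (equivalently, the left triangle identity of the monad).

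Next, I would check that the extension diagram of \cref{def:extension} commutes. On objects both composites send every $X$ to $A$. On a morphism $f \in \cat(X, Y)$, the right-then-down path gives $\kleisliL(A f) = \kleisliL(\id_A) = \munit_A$, while the down-then-right path gives $A^{\kleisli}(\kleisliL f) = \munit_A$ by the very definition of $A^{\kleisli}$. So $A^{\kleisli}$ is an extension of $A$ in the sense of \cref{def:extension}.

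Finally, I would apply \cref{thm:extension-and-dl} to read off the associated distributive law:
\[ \dl_X \;\defEq\; A^{\kleisli}(\id_{\monad X}) \;\in\; \kleisli(A, A) \;=\; \cat(A, \monad A), \]
and by definition of the constant functor $A^{\kleisli}$ on morphisms, this is $\munit_A$, independent of $X$. The only mild subtlety is keeping straight the two notions of identity (in $\cat$ versus in $\kleisli$), but there is no genuine obstacle; the whole argument is a routine consequence of \cref{thm:extension-and-dl}.
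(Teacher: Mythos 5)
Your proof is correct and follows essentially the same route as the paper: the extension property is checked directly (the paper simply calls it immediate), and the distributive law is read off from \cref{thm:extension-and-dl} as $A^{\kleisli}(\id_{\monad X}) = \id_A^{\kleisli} = \munit_A$, exactly as in the paper's argument. The extra detail you give on functoriality of $A^{\kleisli}$ and the commuting extension square is just an unfolding of what the paper leaves implicit.
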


\begin{proof} The fact that $A^{\kleisli}$ is an extension of $A$ is immediate.
	Its associated distributive law is
	$A^{\kleisli}(\id_{\monad X}) = \id_A^{\kleisli} = \munit$.
\end{proof}

\subsection{Lax monoidal functor and lax monoidal monads}
\label{sec:monoidal-functor}

We recall in this section basic definitions on symmetric monoidal 
category and monoidal monads, we refer the reader to~\cite{Mellies09}
for more details. Then we show that any symmetric monoidal monad can be 
interpreted as a distributive law. This explains why a symmetric 
monoidality structure on a monad allows to extend the 
symmetric monoidal structure of the category on which 
the monad is defined to its Kleisli category.

\begin{definition}
A monoidal category $\smtuple$ is a category $\smcat$ equipped 
with a bifunctor $\sm : \smcat \times \smcat \arrow \smcat$ called 
the tensor product, an object 
$\smone$ called the unit of the tensor, two natural isomorphisms
$\smunitR_X \in \smcat(X \sm \smone, X)$ and 
$\smunitL_X \in \smcat(\smone \sm X, X)$ called the right 
and left unitors, and a natural isomorphism 
$\smassoc_{X, Y, Z} \in \smcat((X \sm Y) \sm Z, X \sm (Y \sm Z))$
called the associator.
These isomorphisms are subject to commutations that we will not recall 
here.
The category is symmetric monoidal if there is an additional 
isomorphism $\smsym_{X,Y} \in \smcat(X \sm Y, Y \sm X)$ compatible 
with the monoidal structure.
\end{definition}
\begin{definition} \label{def:smf}
    A lax symmetric monoidal functor from a symmetric monoidal category
    $\smtuple[1]$ to another symmetric monoidal category
    $\smtuple[2]$ is a tuple
    $(\smf, \smfzero, \smftwo)$ where $\smf : \smcat[1] \arrow \smcat[2]$
    is a functor, $\smfzero \in \smcat[2](\smone[2], \smf \smone[1])$
    and $\smftwo_{X,Y} \in \smcat[2](\smf X \sm[2] \smf Y, \smf (X \sm[1] Y))$
    is a natural transformation
    that are compatible with the monoidal structure
\[\begin{tikzcd}
	{\smf X \sm[2] \smone[2]} & {\smf X \sm[2] \smf \smone[1] } & {\smf(X \sm[1] \smone[1])} \\
    {\smf X} & & {\smf X} 
	\arrow["{\smf X \sm[2] \smfzero}", from=1-1, to=1-2]
	\arrow["\smftwo", from=1-2, to=1-3]
	\arrow["{\smunitR[2]}"', from=1-1, to=2-1]
	\arrow["{\smf \smunitR[1]}", from=1-3, to=2-3]
	\arrow[Rightarrow, no head, from=2-1, to=2-3]
\end{tikzcd} \quad
\begin{tikzcd}
	{\smone[2] \sm[2] \smf X} & {\smf \smone[1] \sm[2] \smf X} & {\smf(\smone[1] \sm[1] X)} \\
	{\smf X} & & {\smf X}
	\arrow["{\smfzero \sm[2] \smf X}", from=1-1, to=1-2]
	\arrow["\smftwo", from=1-2, to=1-3]
	\arrow["{\smunitL[2]}"', from=1-1, to=2-1]
	\arrow["{\smf \smunitL[1]}", from=1-3, to=2-3]
	\arrow[Rightarrow, no head, from=2-1, to=2-3]
\end{tikzcd} \] 
\[ 
\begin{tikzcd}
	{(\smf X \sm[2] \smf Y) \sm[2] \smf Z} & {\smf (X \sm[1] Y) \sm[2] \smf Z} &
	{\smf ((X \sm[1] Y) \sm[1] Z)} \\
	{\smf X \sm[2] (\smf Y \sm[2] \smf Z)} & {\smf X \sm[2] \smf(Y \sm[1] Z)}  &
	{\smf (X \sm[1] (Y \sm[1] Z))}
	\arrow["{\smassoc[2]}"', from=1-1, to=2-1]
	\arrow["{\smf \smassoc[1]}", from=1-3, to=2-3]
	\arrow["{\smftwo \sm[2] \smf Z}", from=1-1, to=1-2]
	\arrow["\smftwo", from=1-2, to=1-3]
	\arrow["{\smf X \sm[2] \smftwo}"', from=2-1, to=2-2]
	\arrow["\smftwo"', from=2-2, to=2-3]
\end{tikzcd}\]
and compatible with the symmetry.
\[ \begin{tikzcd}
	{\smf X \sm[2] \smf Y} & {\smf(X \sm[1] Y)} \\
	{\smf Y \sm[2] \smf X} & {\smf(Y \sm[1] X)}
	\arrow["{\smftwo_{X, Y}}", from=1-1, to=1-2]
	\arrow["{\smsym[2]_{\smf X, \smf Y}}"', from=1-1, to=2-1]
	\arrow["{\smftwo_{Y, X}}"', from=2-1, to=2-2]
	\arrow["{\smf \smsym_{X, Y}}", from=1-2, to=2-2]
\end{tikzcd} \]	
A strong symmetric monoidal functor is a lax symmetric
monoidal functor such that $\smfzero$ is an iso, 
and $\smftwo$ is a natural isomorphism.
\end{definition}

\begin{definition} A monoidal natural transformation between
two monoidal functors $(\F, \smfzero_\F, \smftwo_\F)$ and 
$(\G, \smfzero_\G, \smftwo_\G)$ 
is a natural transformation $\alpha : \F \naturalTrans \G$
such that the following diagram commutes.
\[\begin{tikzcd}
	& {\smone[2]} \\
	{\F \smone[1]} && {\G \smone[1]}
	\arrow["{\smfzero_{\F}}"', from=1-2, to=2-1]
	\arrow["{\smfzero_{\G}}", from=1-2, to=2-3]
	\arrow["\alpha"', from=2-1, to=2-3]
\end{tikzcd} \quad
\begin{tikzcd}
	{\F X \sm[2] \F Y} & {\F(X \sm[1] Y)} \\
	{\G X \sm[2] \G Y} & {\G(X \sm[1] Y)}
	\arrow["{\alpha \sm[2] \alpha}"', from=1-1, to=2-1]
	\arrow["\smftwo_{\F}", from=1-1, to=1-2]
	\arrow["\smftwo_{\G}"', from=2-1, to=2-2]
	\arrow["\alpha", from=1-2, to=2-2]
\end{tikzcd}\]
\end{definition}

There is in fact a 2-category in which the objects are the (small) monoidal
categories, the morphisms are the lax monoidal functors, and the
2-cells are the monoidal natural transformations. The composition of two
lax monoidal functors is given by the composition of the functor and 
a suitable composition of their associated natural transformations. The 
compositions of $2$-cells are the same as the compositions in $\categorycat$.

\begin{definition} \label{def:monoidal-monad}
A lax symmetric monoidal monad on a symmetric monoidal
category $\smtuple$ is the data of $(\monad, \munit, \msum, \smfzero, \smftwo)$
such that $(\monad, \smfzero, \smftwo)$ is a lax symmetric
monoidal functor from $\smtuple$ to itself, 
$(\monad, \munit, \msum)$ is a monad on $\smcat$, and such that $\munit$ and $\msum$
are monoidal natural transformations. 
This last assumption corresponds to the fact that 
$\smfzero = \munit_{\smone} \in \smcat(\smone, \monad \smone)$ 
and to the commutation of the following diagrams.
\[ \begin{tikzcd}
	{X \sm Y} \\
	{\monad X \sm \monad Y} & {\monad (X \sm Y)}
	\arrow["{\munit_{X} \sm \munit_{Y}}"', from=1-1, to=2-1]
	\arrow["{\smftwo_{X, Y}}"', from=2-1, to=2-2]
	\arrow["{\munit_{X \sm Y}}", from=1-1, to=2-2]
\end{tikzcd} \]
\[ \begin{tikzcd}[column sep=large]
	{\monad^2 X \sm \monad^2 Y} & {\monad (\monad X \sm \monad Y)} & {\monad^2 (X \sm Y)} \\
	{\monad X \sm \monad Y} && {\monad (X \sm Y)}
	\arrow["{\msum_X \sm \msum_Y}"', from=1-1, to=2-1]
	\arrow["{\smftwo_{\monad X, \monad Y}}", from=1-1, to=1-2]
	\arrow["{\monad \smftwo_{X, Y}}", from=1-2, to=1-3]
	\arrow["{\msum_{X \sm Y}}", from=1-3, to=2-3]
	\arrow["{\smftwo_{X, Y}}"', from=2-1, to=2-3]
\end{tikzcd} \]
\end{definition}

Lax symmetric monoidal monads are a well studied notion because they are related 
to the extension of the symmetric monoidal structure to the Kleisli category.
\begin{theorem} (page 30 of \cite{Day70}) \label{thm:sm-extension}
    If $\mtriple$ is a symmetric monoidal monad on $\smcat$, then
	the structure of symmetric monoidal category of $\smcat$ extends 
	to $\kleislism$.
\end{theorem}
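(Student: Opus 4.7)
The plan is to interpret every component of the symmetric monoidal structure of $\smcat$ as a functor equipped with a distributive law over $\mtriple$, and then invoke the extension machinery of \cref{sec:dl} to transport everything to $\kleislism$.

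First, I would view $\sm$ as a functor $\smcat \times \smcat \arrow \smcat$, where the source carries the product monad $\mtriple \times \mtriple$ of \cref{def:product-monad}. The natural transformation $\smftwo$ then has exactly the type $\sm\,(\mtriple \times \mtriple) \naturalTrans \mtriple\,\sm$ required for \cref{def:dl}, and a short diagram chase shows that the two axioms unpack to the two commuting diagrams of \cref{def:monoidal-monad} which express that $\munit$ and $\msum$ are monoidal natural transformations. Dually, viewing $\smone$ as a functor $\categoryunit \arrow \smcat$, the arrow $\smfzero \in \smcat(\smone, \monad\smone)$ is its associated distributive law in the sense of \cref{prop:extension-constant}. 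Applying \cref{thm:extension-and-dl} to each yields a bifunctor $\extension{\sm} : \kleislism \times \kleislism \arrow \kleislism$ and a unit object in $\kleislism$ (literally the same object as in $\smcat$, since Kleisli categories have the same objects as the base).

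Next, I would show that each structural isomorphism $\smunitR$, $\smunitL$, $\smassoc$, $\smsym$ extends to $\kleislism$ by invoking \cref{thm:extension-and-dl-morphism}: it suffices to verify each is a morphism of distributive laws. The source and target functors of these natural transformations are built from $\sm$, $\smone$, identities, pairings and products, so by the compositionality of the distributive-law bijection (\cref{rem:dl-compositional}) together with \cref{prop:extension-product,prop:extension-pairing,prop:extension-constant}, they come equipped with canonical distributive laws assembled out of $\smftwo$ and $\smfzero$. The compatibility squares of $\smftwo$ and $\smfzero$ with $\smunitR, \smunitL, \smassoc, \smsym$ listed in \cref{def:smf} are then precisely the morphism-of-distributive-law diagrams of \cref{def:dl-morphism}.

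Finally, I would verify the coherence axioms (pentagon, triangle, hexagon) for the tuple $(\kleislism, \extension{\sm}, \smone, \extension{\smunitR}, \extension{\smunitL}, \extension{\smassoc}, \extension{\smsym})$. By \cref{def:extension-natural} the extension of a natural transformation $\alpha$ is obtained by postcomposition with $\kleisliL$, and \cref{eq:KleisliL-comp} shows that any composition in $\kleislism$ of extensions of this form equals $\kleisliL$ applied to the corresponding composition in $\smcat$. Hence every coherence diagram in $\kleislism$ is the $\kleisliL$-image of the corresponding diagram in $\smcat$, and therefore commutes.

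The main obstacle is not conceptual but bureaucratic: verifying that the distributive-law axioms truly unfold to the monoidal-naturality conditions on $\munit, \msum$, and that the coherence squares of \cref{def:smf} truly unfold to the morphism-of-distributive-law squares for the four structural isomorphisms. Both are routine but tedious diagram chases; the real content has been packaged into the definition of a lax symmetric monoidal monad, which is exactly why this proof can be stated so concisely.
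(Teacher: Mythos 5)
Your proposal is correct and follows essentially the same route the paper takes: the paper cites \cite{Day70} for the statement but, in the discussion immediately following it, justifies it exactly as you do, by reading $\smftwo$ as a distributive law between the bifunctor $\_\sm\_$ and the monads $\mtriple\times\mtriple$ and $\mtriple$, extending via \cref{thm:extension-and-dl}, and extending $\smunitL,\smunitR,\smassoc,\smsym$ via \cref{thm:extension-and-dl-morphism} together with compositionality, \cref{prop:partial-dl} and \cref{prop:extension-constant}. Your closing observation that the coherence diagrams in $\kleislism$ are $\kleisliL$-images of those in $\smcat$ is a correct (and in the paper implicit) final step.
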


It turns out that lax monoidal monads are an example of distributive laws. 
This observation sheds light on \cref{thm:sm-extension} above, 
and is doubtlessly folklore, but seems to be often overlooked 
in the literature. We can trace this observation 
to~\cite{Guitart80}\footnote{With the difference that the naturality
of the symmetric monoidal structure is shown through the use of 
strengths, see~\cref{sec:commutative-monad}.,}.

The diagrams of \cref{def:monoidal-monad} correspond to the fact
that the natural transformation 
$\smftwo : (\_ \tensor \_) (\monad \times \monad) \naturalTrans \monad (\_ \tensor \_)$ is
a distributive law between the functor $\_ \tensor \_$ and the monads
$\mtriple \times \mtriple$ and $\mtriple$. 
By \cref{thm:extension-and-dl}, it means that $\_ \tensor \_$ extends to a functor 
$\_ \extension{\tensor} \_ : \kleislism \times \kleislism \arrow \kleislism$. 
Then the commutations of \cref{def:smf} corresponds to the fact that 
$\smunitL$, $\smunitR$, $\smassoc$ and $\smsym$ are morphisms of distributive laws.
The distributive laws involved with $\smassoc$ are given by compositionality, and 
the distributive laws involved with $\smunitL$ and $\smunitR$ are given by 
\cref{prop:partial-dl} below.

\begin{proposition} \label{prop:partial-dl} 
    For any objects $X$ and $Y$, the functor 
    $X \extension{\sm} \_$ is an extension of $X \sm \_$ with 
    associated distributive law 
    $\smftwo \ntcomp (\munit_X \sm \id_{\monad})$. Similarly, the functor 
    $\_ \extension{\sm} Y$ is an extension of $\_ \sm Y$ with 
    associated distributive law 
    $\smftwo \ntcomp (\id_{\monad} \sm \munit_Y)$.
  \end{proposition}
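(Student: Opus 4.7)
The plan is to exhibit $X \extension{\sm} \_$ as the composite of two extensions whose associated distributive laws are already known, and then to read off the composite distributive law from the compositional formula of \cref{rem:dl-compositional}.

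First, I decompose $X \sm \_ : \cat \arrow \cat$ as $\sm \comp \prodPair{X}{\idfun^{\cat}}$, where the first $X$ denotes the constant functor at $X$. By \cref{prop:extension-constant}, this constant functor has extension $X^{\kleisli}$ with associated distributive law $\munit_X$, while the identity functor has extension $\idfun^{\kleisli}$ with associated distributive law $\id_{\monad}$, as noted in \cref{rem:dl-compositional}. An application of \cref{prop:extension-pairing} then shows that $\prodPair{X^{\kleisli}}{\idfun^{\kleisli}}$ is an extension of $\prodPair{X}{\idfun^{\cat}}$ with associated distributive law $(\munit_X, \id_{\monad})$.

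Second, the paragraph preceding the proposition identifies $\smftwo$ as the distributive law associated to the extension $\extension{\sm} : \kleisli \times \kleisli \arrow \kleisli$ of the tensor product, with respect to the product monad $\mtriple \times \mtriple$ of \cref{def:product-monad}. Composing the two extensions yields $\extension{\sm} \comp \prodPair{X^{\kleisli}}{\idfun^{\kleisli}} = X \extension{\sm} \_$, which is therefore an extension of $X \sm \_$.

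To conclude, I invoke the compositional formula \cref{eq:dl-composition}. Evaluated at an object $Y$, the resulting pasting sends $X \sm \monad Y$ first by $\munit_X \sm \id_{\monad Y}$ to $\monad X \sm \monad Y$, and then by $\smftwo_{X, Y}$ to $\monad(X \sm Y)$; this is precisely the natural transformation $\smftwo \ntcomp (\munit_X \sm \id_{\monad})$, as claimed. The symmetric statement for $\_ \extension{\sm} Y$ follows by the analogous decomposition $\sm \comp \prodPair{\idfun^{\cat}}{Y}$. The argument is essentially a diagram chase whose only delicate point is to keep the two distributive laws in the correct pasting order, so as to produce $\smftwo \ntcomp (\munit_X \sm \id_{\monad})$ and not its transpose.
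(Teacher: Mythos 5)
Your proof is correct and follows essentially the same route as the paper: decompose $X \sm \_$ as the tensor bifunctor precomposed with $\prodPair{X}{\idfun}$, extend each piece via \cref{prop:extension-constant} and \cref{prop:extension-pairing}, and read off the composite distributive law from the compositionality of \cref{rem:dl-compositional}. The only difference is that you spell out the pasting computation explicitly, which the paper leaves implicit.
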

    
    \begin{proof} Observe that
        $\smone \sm \_ = (\_ \sm \_ ) \prodPair{1}{\idfun}$ and 
        $\smone \extension{\sm} \_ = (\_ \extension{\sm} \_ ) \prodPair{1^{\kleislism}}{\idfun}$.
        We conclude by compositionality of extensions and distributive laws,
        using the results of
        \cref{prop:extension-pairing} and \cref{prop:extension-constant}. 
    \end{proof}
    
By \cref{thm:extension-and-dl-morphism}, this is equivalent to the fact 
that $\smunitL$, $\smunitR$, $\smassoc$ and $\smsym$ extend to natural 
transformations on $\kleislism$. 
This is why the Kleisli category $\kleislism$ of a lax monoidal monad inherits 
from $\smcat$ the structure of a symmetric monoidal category.

\begin{remark} \label{rem:hopf-monad}
	Dually, there is a notion of a symmetric oplax monoidal functor
from a monoidal category
$\smtuple[1]$ to another monoidal category
$\smtuple[2]$. This is a tuple
$(\osmf, \osmfzero, \osmftwo)$ where $\osmf : \cat[1] \arrow \cat[2]$
is a functor, $\osmfzero \in \cat[2](\smf \smone[1], \smone[2])$
and $\osmftwo_{X,Y} \in \cat[2](\smf (X \sm[1] Y), \smf X \sm[2] \smf Y)$
is a natural transformation, and such that diagrams similar to the ones of 
\cref{def:smf} commute. There is also a notion of symmetric
oplax monoidal monad (also called Hopf Monad). 
In the same way that lax symmetric monoidal monads are related 
to the distributive laws of \cref{def:dl}, 
Hopf monads are related to the distributive laws of \cref{def:dl-em},
as observed in~\cite{Wisbauer08}.
It is not surprising then that a monad $\mtriple$
on $\smcat$ is a Hopf monad if and only if the symmetric monoidal structure 
of $\smcat$ lifts to the Eilenberg-Moore category of $\mtriple$, see~\cite{Moerdijk02}.

Finally, a symmetric (op)lax monoidal comonad is a comonad $\comtriple$ such that 
the functor $\comonad$ is symmetric (op)lax monoidal, and such that 
$\comunit, \comsum$ are monoidal natural transformations.
Again, the structure of an (op)lax monoidal comonad 
can be seen as a distributive law. Following \cref{fig:dl}, 
a lax symmetric monoidal comonad provides a lifting of the symmetric monoidal 
structure to the coEilenberg-Moore category (see \cite{Wisbauer08}), 
and an oplax symmetric monoidal comonad provides an extension of the 
symmetric monoidal structure to the coKleisli category. We summarize the 
different results in \cref{fig:lax-oplax}.
\end{remark}

\begin{figure}
    \begin{center}
    \begin{tabular}{| c | c | c |}
        \hline
        Monoidal structure & Role of the functor & Type of extension \\
        \hline
        Lax  & Monad &  Extension to Kleisli \\
        Lax & Comonad & Lifting to coEilenberg-Moore \\
        Oplax & Monad &   Lifting to Eilenberg-Moore \\
        Oplax & Comonad & Extension to coKleisli  \\
        \hline
    \end{tabular}
    \end{center}
    \caption{Extension and lifting of the monoidal structure}
    \label{fig:lax-oplax}
    \end{figure}

\subsection{Commutative monad} \label{sec:commutative-monad}
It is well-known that symmetric monoidal monads are the same as commutative monads, 
see \cite{Kock70,Kock72}. Let us recall what is a commutative monad. 
Let $\mtriple$ be a monad on a symmetric monoidal category
$\smcat$.

\begin{definition} \label{def:left-right-strength}
	A \emph{left strength} for $\mtriple$ is a natural transformation
$\strengthL_{X, Y} \in \smcat(X \sm \monad Y, \monad (X \sm Y))$ subject to 
the compatibility with the monoidal structure	
\[ \begin{tikzcd}
	{\smone \sm \monad X} & {\monad(\smone \sm X)} \\
    & {\monad X}
	\arrow["\smunitL_{\monad X}"', from=1-1, to=2-2]
	\arrow["\strengthL_{\smone, X}", from=1-1, to=1-2]
	\arrow["{\monad \smunitL_X}", from=1-2, to=2-2]
\end{tikzcd} 
\begin{tikzcd}
	{(X \sm Y) \sm \monad Z} && {\monad ((X \sm Y) \sm Z)} \\
	{X \sm (Y \sm \monad Z)} & {X \sm \monad (Y \sm Z)} & {\monad (X \sm (Y \sm Z))}
	\arrow["{\strengthL_{X \sm Y, Z}}", from=1-1, to=1-3]
	\arrow["{\smassoc_{X, Y, \monad Z}}"', from=1-1, to=2-1]
	\arrow["{X \sm \strengthL_{Y, Z}}"', from=2-1, to=2-2]
	\arrow["{\strengthL_{X, Y \sm Z}}"', from=2-2, to=2-3]
	\arrow["{\monad \smassoc_{X, Y, Z}}", from=1-3, to=2-3]
\end{tikzcd} \]
and the compatibility with the monad structure
\[ \begin{tikzcd}
	{X \sm Y} \\
	{X \sm \monad Y} & {\monad(X \sm Y)}
	\arrow["{X \sm \munit_Y}"', from=1-1, to=2-1]
	\arrow["{\strengthL_{X,Y}}"', from=2-1, to=2-2]
	\arrow["{\munit_{X \sm Y}}", from=1-1, to=2-2]
\end{tikzcd} \quad 
\begin{tikzcd}
	{X \sm \monad^2 Y} & {\monad(X \sm \monad Y)} & {\monad^2 (X \sm Y)} \\
	{X \sm \monad Y} && {\monad (X \sm Y)}
	\arrow["{\strengthL_{X, \monad Y}}", from=1-1, to=1-2]
	\arrow["{\monad \strengthL_{X, Y}}", from=1-2, to=1-3]
	\arrow["{X \sm \msum_Y}"', from=1-1, to=2-1]
	\arrow["{\strengthL_{X, Y}}"', from=2-1, to=2-3]
	\arrow["{\msum_{X \sm Y}}", from=1-3, to=2-3]
\end{tikzcd} \]	
A \emph{right strength} is a natural transformation 
$\strengthR_{X, Y} \in \smcat(\monad X \sm Y, \monad (X \sm Y))$ subject to similar
commutations. 
\end{definition}
When the category is symmetric monoidal, any left strength 
$\strengthL$ induces a right strength 
\begin{equation} \label{eq:left-right-strength}
    \strengthR_{X,Y} = \monad \smsym_{Y,X} \comp \strengthL_{Y,X} \comp \smsym_{X,Y}
\end{equation} 
and vice versa. This lead to the following naming convention in symmetric monoidal 
categories.

\begin{definition} \label{def:strength}
    A left-strength $\strengthL$ for a monad $\mtriple$ is also called a 
    \emph{strength}.
\end{definition}

\begin{definition} \label{def:commutative-monad}
A \emph{commutative monad} is a monad equipped with a strength
$\strengthL$ such that the following diagram commutes.
\[ \begin{tikzcd}
	& {\monad X \sm \monad Y} \\
	{\monad (\monad X \sm Y)} && {\monad (X \sm \monad Y)} \\
	{\monad^2(X \sm Y)} && {\monad^2(X \sm Y)} \\
	& {\monad (X \sm Y)}
	\arrow["{\strengthL_{\monad X, Y}}"', from=1-2, to=2-1]
	\arrow["{\monad \strengthR_{X, Y}}"', from=2-1, to=3-1]
	\arrow["{\msum_{X \sm Y}}"', from=3-1, to=4-2]
	\arrow["{\strengthR_{X, \monad Y}}", from=1-2, to=2-3]
	\arrow["{\monad \strengthL_{X, Y}}", from=2-3, to=3-3]
	\arrow["{\msum_{X \sm Y}}", from=3-3, to=4-2]
\end{tikzcd} \]
where $\strengthR$ is the right-strength obtained from $\strengthL$ by \cref{eq:left-right-strength}.
\end{definition}

It is well known that any commutative monad is a lax 
symmetric monoidal monad, (see~\cite{Kock70}), defining
$\smftwo \in \smcat(\monad X \sm \monad Y, \monad (X \sm Y))$ as the diagonal
of the square above. Conversely,
any symmetric monoidal monad is a commutative monad, see~\cite{Kock72}. 
The left and right strengths are defined from the lax monoidal structure by 
\[ \strengthL_{X,Y} \defEq 
\begin{tikzcd}
	{X \sm \monad Y} & {\monad X \sm \monad Y} & {\monad (X \sm Y)}
	\arrow["{\munit_X \sm \monad Y}", from=1-1, to=1-2]
	\arrow["{\smftwo_{X,Y}}", from=1-2, to=1-3]
\end{tikzcd} \]
\[ \strengthR_{X,Y} \defEq 
\begin{tikzcd}
	{\monad X \sm Y} & {\monad X \sm \monad Y} & {\monad (X \sm Y)}
	\arrow["{\monad X \sm \munit_Y}", from=1-1, to=1-2]
	\arrow["{\smftwo_{X,Y}}", from=1-2, to=1-3]
\end{tikzcd} \]

Recall from \cref{prop:partial-dl} that 
$\strengthL_{X, \_}$ is the distributive law associated to the extension
$X \extension{\sm} \_$ of $X \sm \_$, and $\strengthR_{\_, Y}$ is the 
distributive law associated to the extension 
$\_ \extension{\sm} Y$ of $\_ \sm Y$. So the equivalence between lax
symmetric monoidal monads and commutative monads can also be understood
as the fact that providing an extension 
$\_ \extension{\tensor} \_$ is the same as providing two extensions
$X \extension{\sm} \_$ and $\_ \extension{\sm} Y$ that follows 
the compatibility condition of the bifunctor theorem, see proposition 1 
of~\cite{Maclane71}.

\section{$\Sigma$-additive categories and left $\Sigma$-summability structures}

\label{sec:summability-structure}

\emph{Summability structures} and \emph{left summability structures}
have been introduced respectively in \cite{Ehrhard23-cohdiff} and \cite{Walch23}.
Both are categorical structures that give to the hom-sets the structure of 
a finite counterpart of the notions of \emph{partially additive monoids} (see 
\cite{Arbib80}) and \emph{$\Sigma$-monoid} (see \cite{Haghverdi00}).  
The difference between summability structures and left summability structures
is that in the former every morphism is linear with regard to
the sum (we will call this property \emph{additivity}), while this is not 
the case in the latter. Summability structures thus typically appear in models of 
LL $\categoryLL$, while left summability structures appear in the coKleisli
categories of their exponential $\kleisliExp$ or in other cartesian closed 
categories $\category$.

We introduce an infinitary counterpart of those structures, with a key difference
in our approach.
In \cite{Ehrhard23-cohdiff,Walch23}, the $\Sigma$-monoid structure on the hom-set arises 
naturally from the (left) summability structure itself.
Here, we assume that the category is already enriched over $\Sigma$-monoid, and a 
summability structure is simply a categorical structure that captures this 
sum. This leads to a simpler theory, that we strongly conjecture to be equivalent 
to the original one. In all the concrete models we know, the 
$\Sigma$-monoid structure arises naturally from the categorical 
structure of the model, such as in the representable case 
described in \cref{sec:elementary}.

We work in the framework of left additive structures, because it is more general 
and is necessary for \cref{sec:cartesian-Taylor}. Still, we will put
a lot of emphasis on the properties of the category of additive morphisms
(that is, morphisms that commutes with the sum), 
see \cref{sec:category-add,sec:bimonad}, and on summability in the
models of LL, see \cref{sec:summable-resource-category}.

\begin{remark} 
  In contrast with bare $\Sigma$-monoid enriched categories, 
  (left) summability structures provide us with an action on objects. 
  This additional structure gives us access to an internal description 
  of summable families which is crucial for representing Taylor expansions 
  (and more generally, the Faà~di~Bruno formula) within the model.
\end{remark}

\subsection{Categories enriched over $\Sigma$-monoids}

If $A$ is a countable set and $M$ is a set, an $A$-indexed family of 
elements of $M$ is a function $\vect x : A \arrow M$. 
We also write $\vect x = \family{x_a}$.
We consider in this section non-empty sets $M$ together with a partial function 
$\Sigma$ from indexed family on $M$ to $M$ called the sum. An indexed family 
$\family{x_a}$ is called summable when it is in the domain of $\Sigma$,
and we write its image as $\sum_{a \in A} x_a$.

\begin{notation} \label{notation:sum-def}
  Borrowing the notations from \cite{Tsukada22}, for any expressions 
  $e$ and $e'$ involving sums, we write \begin{itemize}
  \item $e \sumsub e'$ if whenever $e$ is defined, then $e'$ is defined 
  and $e = e'$;
  \item $e \sumiff e'$ if $e$ is defined if and only if $e'$ is defined and $e = e'$.
\end{itemize}
\end{notation}

\begin{definition}{\cite{Haghverdi00}} \labeltext{(Unary)}{ax:unary}
  \labeltext{(PA)}{ax:pa}
  The tuple $(M, \Sigma)$ is a \emph{$\Sigma$-monoid} if 
  the sum $\Sigma$ satisfies the following axioms.
  \begin{itemize}
    \item The unary sum axiom \ref{ax:unary}: any one element family 
    $x \in M$ is summable of sum $x$.
    \item The partition associativity axiom \ref{ax:pa}: 
    Let $\family{x_a}$ be an indexed family and let $\{A_i\}_{i \in I}$ be a partition 
    of $A$ where $I$ is at most countable and where the $A_i$ can be potentially empty. 
    Then \begin{equation} \label{eq:pa}
      \sum_{a \in A} x_a \sumiff 
      \sum_{i \in I} \left(\sum_{a \in A_i} x_a \right) \, .
    \end{equation}
  \end{itemize}
\end{definition}

It follows from \ref{ax:pa} that any subfamily of a summable family is also summable.  
In particular, it follows from \ref{ax:pa} and \ref{ax:unary} 
that the empty family is always summable (if $M$ is not empty). 
Let $0$ be its sum. For any family 
$\vect x = \family{x_a}$, we define 
\begin{equation} \label{eq:support}
  \supp{\vect x} = \{a \in A \St x_a \neq 0 \} \, .
\end{equation}
Then $0$ behaves as the neutral element, as shown in \cref{prop:zero-neutral}
below.

\begin{proposition} \label{prop:zero-neutral}
  Let $\vect x = \family{x_a}$ be an indexed family, and $A'$ a set such that 
  $\supp{\vect x} \subseteq A' \subseteq A$.
  Then $\family{x_a}$ is summable if and only if $\family<A'>{x_a}$ 
  is summable and the two sums are equal.
\end{proposition}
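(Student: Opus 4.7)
The proof is a direct application of partition associativity \ref{ax:pa}, but it requires a small preparatory lemma about sums of zeros. The plan is first to establish that any countable family of zeros is summable with sum $0$, then to split $A$ into $A' \sqcup (A \setminus A')$ and use the $\sumiff$ of \ref{ax:pa} in both directions.

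\emph{Step 1 (auxiliary facts about $0$).} I first show that the empty family is summable (with sum the element $0$ defined in the text), and that adjoining a zero entry to a singleton family preserves summability and value. Both follow from \ref{ax:pa} applied to a one-element family $(x)$ with the partition $\{\{0\},\emptyset\}$ of $\{0\}$: by \ref{ax:unary} the left-hand side is summable of sum $x$, so the $\sumiff$ forces the empty family to be summable and the two-element family $(x,0)$ to be summable of sum $x$.

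\emph{Step 2 (zero families sum to $0$).} For any at-most-countable set $B$, I apply \ref{ax:pa} with the partition $\{A_b\}_{b \in B}$ of $\emptyset$ given by $A_b = \emptyset$ for all $b$. The left-hand side $\sum_{a \in \emptyset} x_a = 0$ is summable by Step~1, so by $\sumiff$ the family $(0)_{b \in B}$ is summable with sum $0$.

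\emph{Step 3 (main argument).} I apply \ref{ax:pa} to the partition $\{A', A \setminus A'\}$ of $A$ indexed by $I = \{0,1\}$. Since $\supp{\vect x} \subseteq A'$, the subfamily $\family<A \setminus A'>{x_a}$ is a zero family, hence summable of sum $0$ by Step~2. The $\sumiff$ of \ref{ax:pa} then reads: $\family{x_a}$ is summable of sum $s$ iff $\family<A'>{x_a}$ is summable of some sum $s'$ and the two-element family $(s', 0)$ over $\{0,1\}$ is summable of sum $s$. Since by Step~1 this latter family is automatically summable of sum $s'$, we conclude $s = s'$ in both directions.

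\emph{Main obstacle.} There is no conceptual difficulty; the delicate part is purely bookkeeping the one-way versus two-way implications of $\sumiff$, and making sure that the existence of $0$ as a well-defined element (neutral for the ``+'' that appears when one splits off an empty block) is properly justified before it is used. Once Steps~1 and~2 are in hand, the result is a one-line application of \ref{ax:pa}.
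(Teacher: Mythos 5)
Your proof is correct. It takes a slightly different route from the paper's: you partition the large index set $A$ into the two blocks $A'$ and $A \setminus A'$, which obliges you first to prove that any countable family of zeros is summable with sum $0$ (your Step~2) and that adjoining a $0$ entry to a singleton family changes neither summability nor the value (your Step~1); both of these preparatory facts are derived correctly from \ref{ax:unary} and \ref{ax:pa}, and the final two-block application of \ref{ax:pa} is then handled properly in both directions. The paper instead partitions the \emph{small} set $A'$ into blocks indexed by $A$, namely $\{a\}$ when $a \in A'$ and $\emptyset$ otherwise; because $\supp{\vect x} \subseteq A'$, the family of block sums is then literally $\family{x_a}$, so a single application of \ref{ax:pa} (using \ref{ax:unary} and the fact, established just before the proposition, that the empty family sums to $0$) gives the implication from $A'$-summability to $A$-summability with equal sums, the converse following from the remark that subfamilies of summable families are summable. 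The two arguments use the same ingredients; yours is somewhat longer because of the two auxiliary steps (though the fact that zero families sum to $0$ is independently useful), whereas the paper's choice of partition makes the result a one-shot consequence of \ref{ax:pa}.
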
 

\begin{proof}
  Assume that $\family<A'>{x_a}$ is summable. Then for all $a \in A$, define 
  $A_a = \{a\}$ if $a \in A'$, and $A_a = \emptyset$ otherwise. Then 
  for all $a \in A$, $\family<A_a>[a']{x_{a'}}$ is summable of sum $x_a$, by \ref{ax:unary}
  and because $0$ is the sum over the empty family.
  It follows from \ref{ax:pa} that $\family{\sum_{a' \in A_a} x_{a'}} = \family{x_a}$ is 
  summable of sum $\sum_{a \in A'} x_a$.
  Conversely, if $\family{x_a}$ is summable then $\family<A'>{x_{a'}}$ is summable 
  because it is a subfamily, and it is proved above that the sums are equal.  
\end{proof}

\begin{remark} \label{rem:summability-general}
  As discussed in~\cite{Manes86,Hines13}, the  ``only if'' assumption
  of the axiom \ref{ax:pa} is very strong, as it implies that the morphisms 
  are \emph{positive}, in the sense that if $x + y = 0$ then $x = y = 0$. 
  Because of this, our axiomatization of summability 
  leaves behind interesting models of LL in which the
  coefficients are not necessarily nonnegative, but in which maps are
  definitely analytics, such as Köthe spaces, see ~\cite{Ehrhard02},
  or finiteness spaces, see~\cite{Ehrhard05}.  

  For now, we keep this stronger and fundamentally 
  positive\footnote{In the sense explained above.}
  axiomatization because it suits quite well with the situations which
  occur in the denotational semantics of programming languages and of
  proofs.
  A more general axiomatization of summability structure based on
  the \emph{partial commutative monoids} of~\cite{Hines13} where the ``only if''
  assumption is dropped is currently a work in progress.
  It should allow one to recover the partial summability of both
  finiteness spaces and Köthe spaces.
\end{remark}

\begin{definition} \label{def:reindexing-bij}
  If $A$ and $B$ are indexing sets, \(\phi:A \arrow B\) is a bijection 
  $\vect x= \family{x_a}$ is a family, we define a B-indexed family
  $\Famact\phi\Vect x$ by $\Famact\phi\Vect x=\family<B>[b]{x_{\phi^{-1}(b)}}$.
\end{definition}

\begin{lemma} \label{prop:reindexing-bijection}
For any indexed family $\vect x = \family{x_a}$ and any 
bijection $\phi : A \arrow B$, $\vect x$ is summable if and 
only if $\Famact{\phi} \vect x$ is summable, 
and the two sums are equal.
\end{lemma}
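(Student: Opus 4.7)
The plan is to deduce the lemma directly from the two axioms of a $\Sigma$-monoid, with no delicate case analysis required. The key observation is that a bijection $\phi : A \arrow B$ induces a canonical partition of the indexing set $A$ into singletons, namely $\{A_b\}_{b \in B}$ with $A_b \defEq \{\phi^{-1}(b)\}$. Because $\phi$ is a bijection, every element of $A$ lies in exactly one $A_b$, so this is a genuine partition of $A$ indexed by the (at most countable) set $B$; this is the only place where the bijectivity assumption is used.

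First I would apply the partition associativity axiom \ref{ax:pa} to $\vect{x} = \family{x_a}$ with this partition, obtaining
\[
  \sum_{a \in A} x_a \sumiff \sum_{b \in B}\Bigl(\sum_{a \in A_b} x_a\Bigr)\, .
\]
Next, each inner family is a one-element family $\family<A_b>{x_a}$ whose unique element is $x_{\phi^{-1}(b)}$. By the unary sum axiom \ref{ax:unary}, each such inner family is summable of sum $x_{\phi^{-1}(b)}$, so the right-hand side is summable if and only if $\family<B>[b]{x_{\phi^{-1}(b)}} = \Famact{\phi}\vect{x}$ is summable, and the two sums coincide whenever defined. Combining these two equivalences yields the claim.

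There is essentially no obstacle: the only mild subtlety is to remember that \ref{ax:pa} requires the indexing set $I$ of the partition to be at most countable, which is automatic here since $B$ is in bijection with the countable set $A$. The ``if'' direction of the equivalence follows from applying the same argument to $\phi^{-1}$, but in fact the statement of \ref{ax:pa} is already an ``if and only if'' (as captured by $\sumiff$), so both directions come for free in a single application.
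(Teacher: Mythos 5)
Your proof is correct and follows essentially the same route as the paper: it uses the singleton partition $A_b = \{\phi^{-1}(b)\}$, applies \ref{ax:unary} to identify each inner sum with $x_{\phi^{-1}(b)}$, and invokes the ``if and only if'' of \ref{ax:pa} to transfer summability and equality of sums. The only difference is the order of presentation of the two axioms, which is immaterial.
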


\begin{proof}
Let $A_b = \{\phi^{-1}(b)\}$. By \ref{ax:unary}, $\family<A_b>{x_a}$
is summable of sum $x_{\phi^{-1}(b)}$.
Then by \ref{ax:pa}, $\family{x_a}$ is summable 
if and only if $\family<B>[b]{x_{\phi^{-1}(b)}}$ is summable and 
the two sums are equal.
\end{proof}

\begin{definition} \label{def:reindexing}
We generalize \cref{def:reindexing-bij} above.
Given two indexing set $A$ and $B$, an injection \(\phi:A \injection B\) 
and a family
$\vect x= \family{x_a}$ we define a $B$-indexed family
$\Famact\phi\Vect x= \family<B>[b]{y_b}$ by
\[ y_b=
  \begin{cases}
    x_a & \text{if }b\in\phi(A)\text{ and }\phi(a)=b\\
    0 & \text{otherwise}.
  \end{cases} \]
\end{definition}

\begin{lemma}
  The operation \(\phi\to\Famact\phi\) is functorial, that is %
  \(\Famact\Id\Vect x=\Vect x\) and %
  \(\Famact{\psi}\Famact\phi\Vect x=\Famact{(\psi\Comp\phi)}\Vect x\).
\end{lemma}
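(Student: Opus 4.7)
My plan is that this lemma is a routine verification, reducing entirely to unfolding \cref{def:reindexing} with a small case analysis based on preimages.

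For the first identity $\Famact{\Id}\Vect x = \Vect x$, since $\Id : A \injection A$ is an injection whose image is all of $A$, every index $a \in A$ falls into the first case of \cref{def:reindexing}, with $\Id(a) = a$ as the unique preimage. So $(\Famact{\Id}\Vect x)_a = x_a$ by definition, giving the result immediately.

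For the second identity, fix $\phi : A \injection B$ and $\psi : B \injection C$ and set $\Vect y = \Famact{\phi}\Vect x$ and $\Vect z = \Famact{\psi}\Vect y$. I would do a case analysis on $c \in C$. The first case is $c \in (\psi \Comp \phi)(A)$: then $c = \psi(\phi(a))$ for a unique $a \in A$ (using injectivity of $\psi \Comp \phi$), and setting $b = \phi(a)$ we have $\psi(b) = c$ and $b \in \phi(A)$, so $z_c = y_b = x_a = (\Famact{\psi \Comp \phi}\Vect x)_c$. The second case is $c \notin \psi(B)$, where both sides are $0$ directly from the definition, using $(\psi \Comp \phi)(A) \subseteq \psi(B)$. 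The remaining case is $c \in \psi(B) \setminus (\psi \Comp \phi)(A)$: then there is a unique $b \in B$ with $\psi(b) = c$, and this $b$ cannot lie in $\phi(A)$, for otherwise $b = \phi(a)$ would give $c = \psi(\phi(a)) \in (\psi \Comp \phi)(A)$. Hence $y_b = 0$, so $z_c = 0$, which matches $(\Famact{\psi \Comp \phi}\Vect x)_c = 0$.

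There is no real obstacle here; the only minor subtlety is remembering to rule out the intermediate case $c \in \psi(B) \setminus (\psi \Comp \phi)(A)$, where one must invoke injectivity of $\psi$ to transfer non-membership from $C$ back to $B$. Everything else is bookkeeping on the definition of $\Famact{\_}$.
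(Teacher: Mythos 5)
Your proof is correct. The paper states this lemma without any proof, treating it as a routine consequence of \cref{def:reindexing}, and your unfolding of the definition with the three-way case analysis on $c$ (in $(\psi\Comp\phi)(A)$, in $\psi(B)\setminus(\psi\Comp\phi)(A)$, or outside $\psi(B)$) is exactly the verification the authors left implicit.
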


\begin{lemma} \label{prop:reindexing}
For any injection $\phi : A \injection B$ and any $A$-indexed family $\vect x$,
$\vect x$ is summable if and only if $\Famact{\phi} \vect x$ is summable, 
and the two sums are equals.
\end{lemma}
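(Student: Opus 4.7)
The plan is to decompose the injection $\phi : A \injection B$ into two more tractable pieces: the bijection $\phi' : A \to \phi(A)$ obtained by restricting the codomain, and the zero-padding inclusion $\phi(A) \hookrightarrow B$. Each piece is handled by one of the two previous results, namely \cref{prop:reindexing-bijection} for the bijective part and \cref{prop:zero-neutral} for the zero-padding part.

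Concretely, I would first observe that by the definition of $\Famact{\phi}\vect{x}$, we have $y_b = 0$ for every $b \notin \phi(A)$, so $\supp{\Famact{\phi}\vect{x}} \subseteq \phi(A) \subseteq B$. Applying \cref{prop:zero-neutral} with $A' = \phi(A)$, the family $\Famact{\phi}\vect{x} = (y_b)_{b \in B}$ is summable if and only if its restriction $(y_b)_{b \in \phi(A)}$ is summable, and in that case the sums coincide.

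Next, $\phi$ corestricts to a bijection $\phi' : A \to \phi(A)$, and unpacking the definition gives precisely $(y_b)_{b \in \phi(A)} = \Famact{\phi'} \vect{x}$ in the sense of \cref{def:reindexing-bij}. By \cref{prop:reindexing-bijection}, this family is summable iff $\vect{x}$ is summable, with equal sums. Chaining the two equivalences yields the statement.

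There is essentially no obstacle here; the argument is a straightforward composition of the two preceding lemmas, and the only thing to be careful about is the bookkeeping that ensures $(y_b)_{b \in \phi(A)}$ really is the reindexing of $\vect{x}$ along the bijection $\phi'$, which is immediate from \cref{def:reindexing}.
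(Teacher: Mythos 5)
Your argument is correct and is essentially the same as the paper's proof: restrict the padded family to $\phi(A)$ via \cref{prop:zero-neutral} (using that the support lies in $\phi(A)$), then transfer along the corestricted bijection via \cref{prop:reindexing-bijection}. Nothing further is needed.
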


\begin{proof}
  Let  $\vect y = \Famact{\phi} \vect x = \family<B>[b]{y_b}$.
  By definition, $\supp{\vect y} \subseteq \im(\phi) \subseteq B$, 
  so by \cref{prop:zero-neutral}, $\vect y$ 
  is summable if and only if $\family<\im(\phi)>[b]{y_b}$
  is summable, and the two sums are equal. We conclude by 
  \cref{prop:reindexing-bijection}, using the 
  fact that $\phi$ is a bijection between $A$ and $\im(\phi)$.
\end{proof}

We introduce the following terminology, as an analogue of the terminology of 
additive and left additive categories used in differential categories and cartesian 
differential categories, see~\cite{Blute06,Blute09}. The monoid structure on 
the hom-set is replaced by a $\Sigma$-monoid structure.
\begin{definition}
A left $\Sigma$-additive category is a category $\category$ such that for any 
objects $X, Y \in \objects(\category)$, the hom-set $\category(X, Y)$ is a 
$\Sigma$-monoid, and such that the sum is left distributive over composition:
for all objects $X, Y, Z$, for all indexed family 
$\family<B>[b]{g_b \in \category(Y, Z)}$ and $f \in \category(X, Y)$,
 \begin{equation}
  \left( \sum_{b \in B} g_b \right) \comp f \sumsub \sum_{b \in B} g_b \comp f \, .
 \end{equation}
A morphism $h \in \category(Y, Z)$ is $\Sigma$-additive if for all indexed family 
$\family{f_a \in \category(X, Y)}$ 
\begin{equation}
  h \comp \left(\sum_{a \in A} f_a \right) \sumsub \sum_{a \in a} (h \comp f_a) \, .
\end{equation}
A $\Sigma$-additive category is a left $\Sigma$-additive category in which 
all morphisms are $\Sigma$-additive. A $\Sigma$-additive category can also be defined as 
a category enriched over the category of $\Sigma$-monoids, see~\cite{Haghverdi00}.
\end{definition}

We want to stress however that (left) $\Sigma$-additive categories 
are not necessarily (left) additive categories. The sum of a $\Sigma$-monoid 
may be defined on some infinite indexed families, but undefined on some finite 
indexed family. None of the examples of $\Sigma$-additive categories given 
in \cref{sec:elementary-Taylor-examples} are additive categories.

Let $0^{X, Y}$ be the zero of the $\Sigma$-monoid structure of
$\category(X, Y)$. It follows from the definition that for any
$f \in \category(X, Y)$, $0^{Y, Z} \comp f = 0^{X, Y}$, and that for
any $\Sigma$-additive $h \in \category(Y, Z)$, $h \comp 0^{X, Y} = 0^{X, Z}$.
We will often omit the superscript and simply write $0$.

Distributivity of the left and on the right implies double distributivity, thanks 
to the axiom \ref{ax:pa}.
\begin{proposition} \label{prop:double-distributivity}
For any left $\Sigma$-additive category $\category$, any indexed family 
$\family{f_a \in \category(X, Y)}$ and any indexed family 
$\family<B>[b]{h_b \in \category(Y, Z)}$
such that $h_b$ is $\Sigma$-additive for all $b \in B$,
\[ \left(\sum_{b \in B} h_b \right)  \comp \left(\sum_{a \in A} f_a \right)
\sumsub \sum_{(a, b) \in A \times B} (h_b \comp f_a)\,. \]
\end{proposition}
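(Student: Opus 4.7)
The plan is to peel off the two sums one at a time using the hypotheses, and then collapse the resulting iterated sum into a single sum over pairs via the partition associativity axiom \ref{ax:pa}. Throughout I will work under the hypothesis that $(\sum_{b\in B} h_b) \comp (\sum_{a\in A} f_a)$ is defined, so that I may use each $\sumsub$ as an implication.

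First, I would apply left distributivity, which is part of the definition of a left $\Sigma$-additive category, to rewrite
\[
\left(\sum_{b\in B} h_b\right) \comp \left(\sum_{a\in A} f_a\right)
\sumsub \sum_{b\in B}\Bigl(h_b \comp \sum_{a\in A} f_a\Bigr).
\]
Since the outer sum on the right is defined, each term $h_b \comp (\sum_{a\in A} f_a)$ is defined (as a composition in $\category$, which is total), and moreover the $B$-indexed family of these terms is summable. Next, for each $b \in B$, I would use the hypothesis that $h_b$ is $\Sigma$-additive, which yields
\[
h_b \comp \left(\sum_{a\in A} f_a\right) \sumsub \sum_{a\in A}(h_b \comp f_a),
\]
so that the right-hand side is defined and agrees termwise with the left-hand side. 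Consequently the $B$-indexed families $\{h_b \comp \sum_{a\in A} f_a\}_{b\in B}$ and $\{\sum_{a\in A}(h_b \comp f_a)\}_{b\in B}$ are equal as families in $\category(X,Z)$, and hence the latter is summable with the same sum as the former.

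It now remains to turn the iterated sum into a single sum over $A \times B$. Consider the family $\vect g = \family<A \times B>[(a,b)]{g_{(a,b)}}$ with $g_{(a,b)} = h_b \comp f_a$, and the partition $\{A \times \{b\}\}_{b \in B}$ of $A \times B$. For each fixed $b$, the map $a \mapsto (a,b)$ is a bijection from $A$ onto $A \times \{b\}$, so by \cref{prop:reindexing-bijection} the family $\family<A \times \{b\}>[(a,b')]{g_{(a,b')}}$ is summable of sum $\sum_{a\in A}(h_b \comp f_a)$. Applying the ``only if'' direction of \ref{ax:pa} to $\vect g$ with this partition then gives
\[
\sum_{b\in B}\sum_{a\in A}(h_b \comp f_a) \sumiff \sum_{(a,b)\in A\times B}(h_b \comp f_a),
\]
and combining this with the previous two steps yields the desired relation.

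There is no real obstacle here: the only subtlety is keeping track of the direction of the implications encoded by $\sumsub$ and $\sumiff$, and being careful that the reindexing invoked to identify $\sum_{(a,b')\in A\times\{b\}}(h_{b'}\comp f_a)$ with $\sum_{a\in A}(h_b\comp f_a)$ is an honest bijection, which is why \cref{prop:reindexing-bijection} (rather than the weaker \cref{prop:reindexing}) is the right tool.
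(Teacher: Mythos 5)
Your proof is correct and follows essentially the same route as the paper's: left distributivity, then $\Sigma$-additivity of each $h_b$, then collapsing the iterated sum via \ref{ax:pa} with the partition $\{A\times\{b\}\}_{b\in B}$; you merely make the reindexing $A\simeq A\times\{b\}$ explicit, which the paper leaves implicit. (Your parenthetical that \cref{prop:reindexing} would be too weak is the only slight misstatement, since it covers injections and hence bijections, but this does not affect the argument.)
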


\begin{proof}
By left distributivity and additivity of the $h_b$
\[ \left(\sum_{b \in B} h_b \right)  \comp \left(\sum_{a \in A} f_a \right) 
\sumsub \sum_{b \in B} \left( h_b \comp \left( \sum_{a \in A} f_a \right) \right) 
\sumsub \sum_{b \in B} \sum_{a \in A} (h_b \comp f_a) \, .\]
But the sets $A \times \{b\}$ are a partition of 
$A \times B$, so \ref{ax:pa} ensures that 
\[ \sum_{(a, b) \in A \times B} (h_b \comp f_a) 
\sumiff \sum_{b \in B} \sum_{a \in A} (h_b \comp f_a) \, . \qedhere \] 
\end{proof}

\begin{corollary} \label{prop:sum-additive}
  For any summable family $\family<B>[b]{h_b}$ such that $h_b \in \category(Y, Z)$ 
  is $\Sigma$-additive for all $b \in B$, $\sum_{b \in B} h_b$ is $\Sigma$-additive.
\end{corollary}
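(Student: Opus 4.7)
The plan is to fix an arbitrary summable family $\family{f_a \in \category(X, Y)}$ and show that $(\sum_b h_b) \comp (\sum_a f_a) \sumsub \sum_a ((\sum_b h_b) \comp f_a)$, which is exactly what it means for $h \defEq \sum_b h_b$ to be $\Sigma$-additive.

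First, I would invoke \cref{prop:double-distributivity}, whose hypotheses are met since $\family{f_a}$ is summable by assumption, $\family{h_b}$ is summable by the hypothesis of the corollary, and each $h_b$ is $\Sigma$-additive. This yields
\[ \left(\sum_{b \in B} h_b\right) \comp \left(\sum_{a \in A} f_a\right) \sumsub \sum_{(a,b) \in A \times B} (h_b \comp f_a)\,. \]
Next, I would apply \ref{ax:pa} to the partition $\{\{a\} \times B\}_{a \in A}$ of $A \times B$ to rewrite the right-hand side as $\sum_{a \in A} \sum_{b \in B} (h_b \comp f_a)$, keeping in mind that partition associativity is an ``if and only if'' and so this is a genuine equality of summability statuses and values.

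Finally, for each fixed $a \in A$, left distributivity of the sum over composition gives $(\sum_{b \in B} h_b) \comp f_a \sumsub \sum_{b \in B} (h_b \comp f_a)$. The left-hand side is always defined since $\sum_b h_b$ exists and composition is total, so this $\sumsub$ becomes an equality. Combining this with the previous step lets one identify $\sum_a \sum_b (h_b \comp f_a) = \sum_a ((\sum_b h_b) \comp f_a)$, finishing the chain and yielding the desired additivity of $\sum_b h_b$.

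There is no real obstacle here; the argument is a bookkeeping exercise in the summability relations $\sumsub$ and $\sumiff$. The only mild subtlety is that one must be careful about the direction of $\sumsub$ at each step: \cref{prop:double-distributivity} and left distributivity both flow from the outer sum to the inner double sum, while \ref{ax:pa} is bidirectional, so the composite implication lands in the required direction for $\Sigma$-additivity of $\sum_b h_b$.
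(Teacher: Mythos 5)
Your proof is correct and follows essentially the same route as the paper's: double distributivity (\cref{prop:double-distributivity}), then \ref{ax:pa} applied to the partition $\{\{a\}\times B\}_{a\in A}$, then left distributivity for each fixed $a$ (which, as you note, is an equality here since $\sum_{b\in B}h_b$ is defined). Nothing to add.
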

\begin{proof}
 Let $\family{f_a \in \category(X, Y)}$ be a summable family. It follows from 
 \cref{prop:double-distributivity} that 
 \[ \left(\sum_{b \in B} h_b \right)  \comp \left(\sum_{a \in A} f_a \right)
\sumsub \sum_{(a, b) \in A \times B} (h_b \comp f_a) \]
Then, the sets $\{a\} \times B$ are a partition of 
$A \times B$, so it follows from \ref{ax:pa} that
\[ \sum_{(a, b) \in A \times B} (h_b \comp f_a) 
\sumiff \sum_{a \in A} \sum_{b \in B} (h_b \comp f_a) \] 
Finally, we know by left distributivity that 
\[ \sum_{b \in B} (h_b \comp f_a) = \left( \sum_{b \in B} h_b \right) 
\comp f_a \, . \] 
It follows that
\[ \left(\sum_{b \in B} h_b \right)  \comp \left(\sum_{a \in A} f_a \right) 
\sumsub \sum_{a \in A} \left( \left( \sum_{b \in B} h_b \right) \comp f_a 
\right) \, . \]
\end{proof}

\begin{remark} %
  \label{rk:infinite-summability-continuity}
  In~\cite{Walch23}, additive morphisms only preserve finite
  summations (including \(0\)), which corresponds to the usual
  algebraic notion of morphism of monoid.
  In our \(\Sigma\)-summability setting, additivity means not only
  preservation of finite sums, but also of all infinite sums when they 
  are defined.
  This means that $\Sigma$-additive morphisms also feature a property of
  \emph{continuity} whose precise nature depends on the category
  \(\category\) at stake.

  A very interesting situation occurs when the \(\Sigma\)-monoid
  structure of \(\category\) satisfies three additional properties
  \begin{enumerate}
    \item a family \(\Vect f\in\category(X,Y)^\Nat\) is summable as soon
    as, for any \emph{finite} set \(A\) and any injection
    \(\phi:A\to\Nat\), the family \(\Famact\phi\Vect f\) is summable;
  \item the preorder relation \(\leq\) defined on hom-sets of
    \(\category\) by \(f\leq g\) if \(\exists h\in\category(X,Y)\)
    such that \(g=f+h\) (where \(+\) is the binary addition induced by
    \((\S,\sequence{\Sproj_i}, \Ssum)\) on \(\category(X,Y)\)) is
    antisymmetric, that is, is an order relation;
  \item if for any \emph{finite} subset \(A'\) of $A$ one has
  $\sum_{a \in A'} f_a \leq f$, then $\sum_{a \in A} f_a \leq f$.
  \end{enumerate}
  The first condition turns the structure of $\sum$-monoid 
  into a \emph{partially additive monoid}, and the two other conditions 
  turn this partially additive monoid into an \emph{additive domain}, see Section~8.3 of~\cite{Manes86}.
  When these conditions hold, each hom-set \(\category(X,Y)\)
  is easily seen to be an \(\omega\)-complete partial order (ordered
  by \(\leq\) and having \(0\) as least element).
  The least upper bound (lub) of $(\sum_{i=0}^n f_i)_{n \in \N}$ is given by 
  $\psum f_i$. Then, $\Sigma$-additive
  morphisms are continuous because they 
  commute with the corresponding lubs in the sense that
  \(h\Comp{\sup_{i\in\Nat}g_i} = \sup_{i\in\Nat} (h \comp g_i) \) when \(h\in\category(Y,Z)\) is
  $\Sigma$-additive and \(\Vect g\in\category(X,Y)^\Nat\) is a monotone
  sequence.

  All the examples provided in \Cref{sec:Lafont-categories} are
  instances of this situation and, for that reason, feature general
  fixpoint operators allowing to combine our coherent Taylor
  structures with general recursion in the spirit
  of~\cite{Ehrhard22-pcf}.
\end{remark}

\subsection{$\Sigma$-summability structures}

\label{sec:summabibility-structure-def}

We assume in this section that $\category$ is a left $\Sigma$-additive 
category. 

\begin{definition} \label{def:pre-presummability-structure} A
  left $\Sigma$-summability structure on $\category$
  is a tuple $(\S, \sequence{\Sproj_i})$ where $\S$
  is a map (functional class) on objects 
  $\S: \objects(\category) \arrow \objects(\category)$  
  and each $\Sproj_i$ (for $i\in\Nat$) is a family of morphisms 
  $(\Sproj_{i,X}\in\category(\S X,X))_{X\in\objects(\category)}$
  called the projections such that:
  \begin{enumerate}
    \item[\ref{def:summability-structure-1}] 
    \labeltext{SS-1}{def:summability-structure-1} 
    For any $X$, $\sequence{\Sproj_{i, X}}$ is
    jointly monic. Joint monicity means that 
    for any $f, g \in \category(X, \S Y)$, if $\Sproj_{i, Y} \comp f 
    = \Sproj_{i,Y} \comp g$ for any $i \in \N$, then $f = g$.
  \item[\ref{def:summability-structure-2}]
    \labeltext{SS-2}{def:summability-structure-2} For any
    $\N$-indexed family, $\sequence{f_i \in \category(X, Y)}$ is 
    summable if and only if there exists a morphism 
    $f \in \category(X, \S Y)$ such that
    $\forall i\in\N, \Sproj_i \comp f = f_i$.
  \item[\ref{def:summability-structure-3}] 
  \labeltext{SS-3}{def:summability-structure-3}
  For any object $X$, the morphisms $\sequence{\Sproj_{i, X}}$ 
  are $\Sigma$-additive.
  \item[\ref{def:summability-structure-4}] 
  \labeltext{SS-4}{def:summability-structure-4}  
  For any $A$-indexed family $\family{f_a \in \category(X, \S Y)}$,
  if $\family<\N \times A>[(i, a)]{\Sproj_{i, Y} \comp f_a}$ 
  is summable then $\family{f_a}$
  is summable.
\end{enumerate}
The subscript $X$ is usually omitted when writing projections $\pi_{i,X}$. 
\end{definition}

\begin{definition}
 A left $\Sigma$-summability structure on a $\Sigma$-additive category
 is called a $\Sigma$-summability structure.
\end{definition}

\begin{notation}
  Because of \ref{def:summability-structure-1},
  the morphism $f$ given in \ref{def:summability-structure-2} 
  is unique and is called \emph{witness} of $\sequence{f_i}$. 
  We write $\Spairing{f_i} \defEq f$.
\end{notation}

Intuitively, \ref{def:summability-structure-1} and 
\ref{def:summability-structure-2} mean 
that $\S X$ can be seen as the object of "summable sequences". The morphism
$\Sproj_i$ maps a sequence to its $i$-th element.
Then, \ref{def:summability-structure-3} ensures that 
the sums between sequences is given by the sum on each component 
\[ \Sproj_i \comp \sum_{a \in A} \Spairing{f_{i,a}}
\sumsub \sum_{a \in A} f_{i,a} \text{ so } 
\sum_{a \in A} \Spairing{f_{i,a}} = \Spairing{\sum_{a \in A} f_{i, a}} \, . \]
Finally, \ref{def:summability-structure-4} crucially ensures that 
$\family{\Spairing{f_{i,a}}}$ is summable if and only if
$\family<\N \times A>[(i,a)]{f_{i,a}}$ is summable 
(the only if is a consequence of 
double distributivity, see \cref{prop:double-distributivity}). 
So the summability of 
sequences is completely determined by the global summability 
of all of their components.

\begin{example} 
  The models of Taylor expansion introduced in~\cite{Manzonetto12}
  are assumed to feature arbitrary countable sums. Then these sums 
  correspond to a particular (left) $\Sigma$-summability structure 
  defined as $\S X \defEq \with_{i \in \N} X$, $\Sproj_i \defEq \prodProj_i$
  (where $\with$ is a countable categorical product
  and $\prodProj_i$ is the $i$ projection). 
  This setting is covered in depth in \cref{sec:arbitrary-sum}.
  Note that in general, $\S X$ is not the cartesian product, precisely because the sum is 
  partial so the witness of a family of morphism needs not 
  always exist.
\end{example} 

\begin{notation}
By \ref{def:summability-structure-2}, 
the projections $\sequence{\Sproj_i}$ are summable with witness 
$\id$. Let $\Ssum_X = \sum_{i \in \N} \Sproj_{i,X} \in \category(\S X, X)$.
\end{notation}

We now assume that $\category$ is equipped with a left $\Sigma$-summability 
structure.
Observe that by left distributivity,
\begin{equation} \label{eq:sum-as-composition}
  \Ssum \comp \Spairing{f_i} \sumsub \sum_{i \in \N} (\Sproj_i \comp
\Spairing{f_i}) = \sum_{i \in \N} f_i \, . \end{equation}
Thus, the data of $(\S, \sequence{\Sproj_i}, \Ssum)$ completely 
characterizes the summability and the values of sums on $\N$-indexed 
families. 
By \cref{prop:reindexing}, this is enough to characterize the $\Sigma$-monoid 
structure on all indexed family, because any countable set has an injection 
into $\N$.
This explains why earlier versions of this paper define
summability structures first, and the $\Sigma$-monoid 
structure simply arise as a byproduct. The definition 
of sum of those earlier version were similar to  
the characterization given by \cref{prop:summability-alt} below.

\begin{proposition} \label{prop:summability-alt}
  A family $\vect f = \family{f_a \in \category(X, Y)}$ is summable if and only if 
  there exists an injection $\phi : A \injection \N$ and 
  a morphism $\Sfamily{\vect f}{\phi} \in \category(X, \S Y)$ 
  such that $\Sfamily{\vect f}{\phi}$ is a witness for 
  $\Famact{\phi} f$. That is, 
  \[ \Sproj_i \comp \Sfamily{\vect f}{\phi} = \begin{cases}
  f_a \text{ if } i = \phi(a) \text{ for some } a \in A \\
  0 \text{ otherwise.}
  \end{cases} \]
  Furthermore, $\sum_{a \in A} f_a = \Ssum \comp \Sfamily{\vect f}{\phi}$.
\end{proposition}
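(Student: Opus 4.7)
The plan is to reduce the statement to the $\N$-indexed case where the summability structure axioms apply directly, using Lemma \ref{prop:reindexing} as the bridge.

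First, I would observe that since $A$ is countable, there exists at least one injection $\phi : A \injection \N$. By Lemma \ref{prop:reindexing}, the family $\vect f$ is summable if and only if $\Famact{\phi} \vect f$ is summable, and in that case the two sums are equal. So the problem reduces to an assertion about the $\N$-indexed family $\Famact{\phi} \vect f = \family<\N>[i]{g_i}$, where $g_i = f_a$ whenever $i = \phi(a)$ and $g_i = 0$ otherwise.

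Second, I would apply axiom \ref{def:summability-structure-2} directly to this $\N$-indexed family: it is summable if and only if there exists a morphism $h \in \category(X, \S Y)$ satisfying $\Sproj_i \comp h = g_i$ for all $i \in \N$. Setting $\Sfamily{\vect f}{\phi} \defEq h$ (well-defined and unique by joint monicity \ref{def:summability-structure-1}) yields exactly the characterization in the statement. This establishes the equivalence.

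Finally, for the value of the sum, I would invoke equation \eqref{eq:sum-as-composition} applied to $\family<\N>[i]{g_i}$, which gives $\sum_{i \in \N} g_i = \Ssum \comp \Sfamily{\vect f}{\phi}$. Combining this with the sum equality from Lemma \ref{prop:reindexing} (that is, $\sum_{a \in A} f_a = \sum_{i \in \N} g_i$), we get $\sum_{a \in A} f_a = \Ssum \comp \Sfamily{\vect f}{\phi}$.

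There is no real obstacle here; the statement is essentially a translation of the $\N$-indexed summability structure axioms into the language of arbitrary countable indexing sets via reindexing. The only subtlety worth noting is that the witness $\Sfamily{\vect f}{\phi}$ depends on the chosen injection $\phi$, but its existence for some (equivalently, any) injection is what characterizes summability of $\vect f$.
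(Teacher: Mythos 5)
Your proof is correct and follows exactly the paper's route: the paper's own proof is the one-line observation that the statement is a direct consequence of \ref{def:summability-structure-2}, \cref{prop:reindexing} and \cref{eq:sum-as-composition}, which is precisely the reduction you spell out. Nothing is missing; you have simply made the same argument explicit.
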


\begin{proof}
This is a direct consequence of
\ref{def:summability-structure-2} and \cref{prop:reindexing,eq:sum-as-composition}.
\end{proof}

The characterization of summability in terms of the 
existence of witnesses implies 
that properties involving summability can often be 
expressed in terms of the basic data of the summability structure. 
This is the case for instance of the additivity of morphisms.

\begin{proposition} \label{prop:additive} %
  For any morphism $h$, the following assertions are equivalent. \begin{enumerate}
    \item $h$ is $\Sigma$-additive;
    \item  $h \comp 0 = 0$ and 
    $\sequence{h \comp \Sproj_i}$ is summable with sum
    $h \comp \Ssum$.
  \end{enumerate}
\end{proposition}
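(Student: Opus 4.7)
\medskip

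\noindent\textbf{Proof plan.} The implication $(1) \Rightarrow (2)$ is the easy direction. Since the family $\sequence{\Sproj_i}$ is summable with sum $\Ssum$ (by definition of $\Ssum$), applying the $\Sigma$-additivity of $h$ to this family immediately yields that $\sequence{h \comp \Sproj_i}$ is summable with sum $h \comp \Ssum$. For $h \comp 0 = 0$, I would apply $\Sigma$-additivity to the empty family, whose sum is $0$, and observe that the empty sum of the family $\sequence[a \in \emptyset]{h \comp f_a}$ is again $0$.

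For $(2) \Rightarrow (1)$, the strategy is to reduce an arbitrary summable family to the canonical summable family $\sequence{\Sproj_i}$ by way of a witness. Let $\vect f = \family{f_a \in \category(X, Y)}$ be a summable family. By \cref{prop:summability-alt}, there exists an injection $\phi : A \injection \N$ together with a morphism $\Sfamily{\vect f}{\phi} \in \category(X, \S Y)$ such that $\Sproj_i \comp \Sfamily{\vect f}{\phi} = f_a$ whenever $i = \phi(a)$ and $\Sproj_i \comp \Sfamily{\vect f}{\phi} = 0$ otherwise, and moreover $\sum_{a \in A} f_a = \Ssum \comp \Sfamily{\vect f}{\phi}$.

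Composing on the left by $h$ and using hypothesis $(2)$ to replace $h \comp \Ssum$ by $\sum_{i \in \N} (h \comp \Sproj_i)$, left distributivity of composition over sums gives
\[ h \comp \sum_{a \in A} f_a = \left(\sum_{i \in \N} h \comp \Sproj_i\right) \comp \Sfamily{\vect f}{\phi} \sumsub \sum_{i \in \N} \bigl(h \comp \Sproj_i \comp \Sfamily{\vect f}{\phi}\bigr) \, . \]
The $i$-th term equals $h \comp f_a$ when $i = \phi(a)$ and equals $h \comp 0 = 0$ otherwise, by hypothesis $(2)$ again. Thus the $\N$-indexed family whose summability was just established is precisely $\Famact{\phi} \family{h \comp f_a}$, and \cref{prop:reindexing} yields summability of $\family{h \comp f_a}$ together with the equality $\sum_{i \in \N} (h \comp \Sproj_i \comp \Sfamily{\vect f}{\phi}) = \sum_{a \in A} (h \comp f_a)$. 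Combining everything gives $h \comp \sum_{a \in A} f_a \sumsub \sum_{a \in A} (h \comp f_a)$, which is exactly the $\Sigma$-additivity of $h$.

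\medskip

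\noindent The main subtlety, rather than an obstacle, is notational bookkeeping around $\sumsub$: one must trace carefully how the existence of the left-hand side propagates through each step. The hypothesis $h \comp 0 = 0$ is used in exactly one place, to handle the indices $i \notin \im(\phi)$, which is why both parts of $(2)$ are needed even though the main content is the compatibility of $h$ with $\Ssum$.
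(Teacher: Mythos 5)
Your proof is correct and follows essentially the same route as the paper: the forward direction applies $\Sigma$-additivity to the empty family and to the summable family $\sequence{\Sproj_i}$, and the backward direction composes with the witness $\Sfamily{\vect f}{\phi}$ from \cref{prop:summability-alt}, uses $h \comp 0 = 0$ on indices outside the image of $\phi$, and concludes by reindexing. The only cosmetic difference is that you invoke left distributivity on the summable family $\sequence{h \comp \Sproj_i}$ directly, where the paper instead forms the witness $\Spairing{h \comp \Sproj_i}$ and reads off the sum via \cref{prop:summability-alt}.
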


\begin{proof}
  The forward implication is a direct consequence of the definition of
  additivity applied on the empty family and on the summable family
  $\sequence{\Sproj_i}$.
  For the backward implication, assume that $\vect f = \family{f_a}$ is summable.
  By \cref{prop:summability-alt}, there exists an injection 
  $\phi : A \injection \N$ and a witness $\Sfamily{\vect f}{\phi}$ of 
  $\Famact{\phi} \vect f$.
  Then we can check that 
  $\Spairing{h \comp \Sproj_i} \comp \Sfamily{\vect f}{\phi}$ is a witness for 
  $\Famact{\phi} \family{h \comp f_a}$. Indeed, 
  \[ h \comp \Sproj_i \comp \Sfamily{\vect f}{\phi}
  = \begin{cases}
    h \comp f_a \text{ if } i \in \phi(A) \text{ and } \phi(a) = i \\
  h \comp 0 = 0 \text{ otherwise.}
  \end{cases} \]
  Thus, by \cref{prop:summability-alt}, $\family{h \comp f_a}$ is summable with sum 
  \[ \sum_{a \in A} h \comp f_a 
  = \Ssum \comp \Spairing{h \comp \Sproj_i} \comp \Sfamily{\vect f}{\phi} 
  = h \comp \Ssum \comp \Sfamily{\vect f}{\phi}
  =  h \comp \sum_{a \in A} f_a \, .\] 
  It concludes the proof.
\end{proof}

\begin{notation} As for any family, we use the notation
  $\vect{\Sproj}$ for the family of projections 
  $\sequence{\Sproj_i}$.
\end{notation}

\subsection{The category of $\Sigma$-additive morphisms}

\label{sec:category-add}

We assume in this section that $\category$ is a left $\Sigma$-additive 
category.
It is easy to show that the identity morphism $\id_X \in \category(X, X)$ is 
$\Sigma$-additive, and that the composition of two $\Sigma$-additive morphisms is 
also $\Sigma$-additive. 

\begin{definition}
  Let $\categoryAdd$ be the category whose objects are those of 
  $\category$ and whose morphisms are the $\Sigma$-additive morphisms of $\category$
  We use $h'\compl h$ for $h'\comp h$ when $h$ and $h'$ are $\Sigma$-additive 
  to stress the fact that additivity is a form of linearity.
\end{definition}
There is an obvious forgetful functor from $\categoryAdd$ to $\category$
that we always keep implicit.
It follows from \cref{prop:sum-additive} that $\categoryAdd$ is 
a $\Sigma$-additive category. 
We now assume that $\category$ is equipped with a left $\Sigma$-summability 
structure $(\S, \vect \Sproj)$. 

\begin{lemma} \label{prop:Spairing-additive}
   For all summable family $\family{h_i \in \category(Y,Z)}$ of 
   $\Sigma$-additive morphisms,
   $\Spairing{h_i}$ is $\Sigma$-additive.
\end{lemma}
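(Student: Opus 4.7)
The plan is to reduce $\Sigma$-additivity of $\Spairing{h_i}$ to componentwise statements about each $h_k$, by systematically post-composing with the projections $\Sproj_{k,Z}$ and invoking joint monicity \ref{def:summability-structure-1} to pass back to equalities of morphisms into $\S Z$.

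Fix an arbitrary summable family $\vect f=\family{f_a\in\category(X,Y)}$ and set $F=\sum_{a\in A}f_a$. First I would establish that $\family{\Spairing{h_i}\comp f_a}$ is summable. By \ref{def:summability-structure-4}, this reduces to summability of the doubly-indexed family $\family<\N\times A>[(k,a)]{\Sproj_{k,Z}\comp\Spairing{h_i}\comp f_a}$, which by the defining property of $\Spairing{\cdot}$ equals $\family<\N\times A>[(k,a)]{h_k\comp f_a}$. For each fixed $k$, the slice indexed by $a\in A$ is summable with sum $h_k\comp F$ by the assumed $\Sigma$-additivity of $h_k$; moreover, left distributivity applied to the summable family $\sequence{h_k}$ postcomposed with $F$ yields summability of $\sequence{h_k\comp F}_k$. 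The partition axiom \ref{ax:pa} (summing first over $a$ then over $k$) then collates these slices into summability of the whole doubly-indexed family.

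Second, I would identify the sum as $\Spairing{h_i}\comp F$ using joint monicity: after post-composing with $\Sproj_{k,Z}$, the morphism $\sum_{a\in A}\Spairing{h_i}\comp f_a$ becomes $\sum_{a\in A}h_k\comp f_a$, using $\Sigma$-additivity of $\Sproj_{k,Z}$ (from \ref{def:summability-structure-3}) together with the defining property of $\Spairing{\cdot}$, while $\Spairing{h_i}\comp F$ becomes $h_k\comp F=\sum_{a\in A}h_k\comp f_a$ by additivity of $h_k$. Since these agree for every $k$, \ref{def:summability-structure-1} gives the equality in $\category(X,\S Z)$. The degenerate case $A=\emptyset$ is handled identically and yields $\Spairing{h_i}\comp 0=0$, which together with the above also provides an alternative route via \cref{prop:additive}.

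The only genuinely delicate step is the double-summability combination of \ref{def:summability-structure-4} with \ref{ax:pa}; once joint monicity lets us check equalities one projection at a time, everything else is routine bookkeeping driven by the defining relation $\Sproj_{k,Z}\comp\Spairing{h_i}=h_k$, so I do not anticipate any real obstacle.
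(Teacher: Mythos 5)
Your proposal is correct and follows essentially the same route as the paper: reduce summability of $\family{\Spairing{h_i}\comp f_a}$ to the doubly-indexed family via \ref{def:summability-structure-4}, establish that summability by combining left distributivity, additivity of the $h_k$ and \ref{ax:pa} (which is exactly the content of \cref{prop:double-distributivity}, which the paper simply cites), and then identify the sum by post-composing with the $\Sproj_k$ and invoking their additivity and joint monicity. The only difference is that you re-derive double distributivity inline rather than quoting it.
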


\begin{proof}
  Assume that $\family{f_a \in \category(X, Y)}$ is summable. 
  We first want to show that 
  $\family{\Spairing{h_i} \comp f_a}$ is summable. 
  By \ref{def:summability-structure-4}, it suffices to show that 
  $\family<\N \times A>[(i, a)]{h_i \comp f_a}$ is summable. This is the case, 
  by double distributivity (\cref{prop:double-distributivity}).
  Then, we show that
  \[ \Spairing{h_i} \comp \left(\sum_{a \in A} f_a \right) = \sum_{a \in A}
  (\Spairing{h_i} \comp f_a) \, .\]  
  This is done in a straightforward way using the additivity and the joint monicity 
  of the $\Sproj_i$.
\end{proof}

\begin{proposition}
  $(\S, \vect \Sproj)$ is 
  also a $\Sigma$-summability structure on $\categoryAdd$.
\end{proposition}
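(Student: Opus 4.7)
The plan is to verify the four axioms of \cref{def:pre-presummability-structure} for $(\S, \vect\Sproj)$ viewed inside $\categoryAdd$, the main content being to observe that \cref{prop:Spairing-additive} keeps all witnesses internal to the subcategory.

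First I would record the basic set-up. Since $\categoryAdd$ has the same objects as $\category$, the functional class $\S$ restricts immediately to $\objects(\categoryAdd)$, and axiom \ref{def:summability-structure-3} in $\category$ tells us that each $\Sproj_{i,X}$ is itself a morphism of $\categoryAdd$. Moreover, by \cref{prop:sum-additive} any sum of additive morphisms is additive, so the $\Sigma$-monoid structure on $\categoryAdd(X,Y)$ is the one inherited from $\category(X,Y)$: a family of additive morphisms is summable in $\categoryAdd$ if and only if it is summable in $\category$, and the sums coincide. This has already been noted to make $\categoryAdd$ a $\Sigma$-additive category, so the result we are after really is the stronger $\Sigma$-summability statement, not just its left version.

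Next I would deal with axioms \ref{def:summability-structure-1} and \ref{def:summability-structure-3} in a single breath: joint monicity of $\vect\Sproj$ in $\categoryAdd$ is a direct consequence of joint monicity in $\category$ since equality of additive morphisms is tested in $\category$, and \ref{def:summability-structure-3} is literally the same statement in both categories. The core of the proof is axiom \ref{def:summability-structure-2}. Given $\sequence{f_i \in \categoryAdd(X,Y)}$, summability in $\categoryAdd$ coincides with summability in $\category$ by the previous paragraph; the forward direction applies \ref{def:summability-structure-2} in $\category$ to obtain the witness $\Spairing{f_i} \in \category(X,\S Y)$, and then \cref{prop:Spairing-additive} ensures $\Spairing{f_i}$ is additive and therefore lies in $\categoryAdd(X,\S Y)$. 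The converse is immediate: any additive witness is in particular a witness in $\category$, which yields summability in $\category$ hence in $\categoryAdd$. Finally axiom \ref{def:summability-structure-4} transfers by the same reasoning, since its hypothesis and conclusion are both summability statements of families of additive morphisms.

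There is no serious obstacle to overcome; the content of the proof reduces to the already-established fact that $\categoryAdd$ inherits its $\Sigma$-monoid structure from $\category$ together with \cref{prop:Spairing-additive}, whose whole purpose is to make sure the $\S$-construction never leaves the subcategory of additive morphisms.
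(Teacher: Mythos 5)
Your proof is correct and follows essentially the same route as the paper: \ref{def:summability-structure-1} and \ref{def:summability-structure-3} transfer immediately, \ref{def:summability-structure-4} holds because summability in $\categoryAdd$ coincides with summability in $\category$, and \ref{def:summability-structure-2} is supplied by \cref{prop:Spairing-additive}. Nothing is missing.
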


\begin{proof} Both
  \ref{def:summability-structure-1} and \ref{def:summability-structure-3} 
  hold by assumption. The property \ref{def:summability-structure-4} follows
  from the fact that the summability in $\categoryAdd$ coincides with the 
  summability in $\category$.
  Finally, \ref{def:summability-structure-2} follows from 
  \cref{prop:Spairing-additive} above.
\end{proof}

Recall \cref{prop:additive}: a morphism $h$ is $\Sigma$-additive if and only if
$\sequence{h \comp \Sproj_i}$ is summable with sum $h \comp \Ssum$. We define
\begin{equation}
  \S h \defEq \Spairing{h \comp \Sproj_i}
\end{equation} 
so that 
$\Sproj_i \comp \S h = h \comp \Sproj_i$. Furthermore, 
$\Ssum \comp \S h = h \comp \Ssum$ by \cref{eq:sum-as-composition}.
We can easily check by joint monicity of the $\Sproj_i$
that $\S h \comp \Spairing{x_i} = \Spairing{h \comp x_i}$, 
so $\S h$ consists intuitively in applying $h$ in each argument.

\begin{proposition} \label{prop:category-additive}
  $\S$ is an endofunctor on $\categoryAdd$, and the $\Sproj_i$ and 
  $\Ssum$ are natural transformations.
\end{proposition}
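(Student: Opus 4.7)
The plan is to verify the functoriality of $\S$ on $\categoryAdd$ and the naturality of the two structural families $\vect\Sproj$ and $\Ssum$, exploiting systematically the joint monicity axiom \ref{def:summability-structure-1} and the characterization of $\Sigma$-additivity given by \cref{prop:additive}.

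First I would check that $\S$ is well defined on morphisms of $\categoryAdd$. Given a $\Sigma$-additive $h \in \categoryAdd(X, Y)$, \cref{prop:additive} tells us that $\sequence{h \comp \Sproj_{i,X}}$ is summable, so the witness $\S h = \Spairing{h \comp \Sproj_{i,X}}$ exists by \ref{def:summability-structure-2}. Each $h \comp \Sproj_{i,X}$ is $\Sigma$-additive as a composition of $\Sigma$-additive morphisms, and the summability of this family is witnessed by \cref{prop:additive} together with the fact that $\Ssum$ is additive in the enriched sense. Hence by \cref{prop:Spairing-additive}, $\S h$ is $\Sigma$-additive, so it is a morphism of $\categoryAdd$.

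Next I would verify functoriality. By construction, $\Sproj_i \comp \S h = h \comp \Sproj_i$ for every $i$. For the identity, one has $\Sproj_i \comp \S \id = \id \comp \Sproj_i = \Sproj_i \comp \id$, and joint monicity of $\vect\Sproj$ forces $\S \id = \id$. For composition of $\Sigma$-additive morphisms $f \in \categoryAdd(X,Y)$, $g \in \categoryAdd(Y,Z)$, we compute
\[
\Sproj_i \comp \S g \comp \S f = g \comp \Sproj_i \comp \S f = g \comp f \comp \Sproj_i = \Sproj_i \comp \S(g \comp f),
\]
and joint monicity again yields $\S(g \comp f) = \S g \compl \S f$.

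Finally, naturality of the projections is exactly the defining equation $\Sproj_i \comp \S h = h \comp \Sproj_i$, and naturality of $\Ssum$ is the equation $\Ssum \comp \S h = h \comp \Ssum$ already derived just before the statement via \cref{eq:sum-as-composition}. There is no real obstacle here: everything reduces to joint monicity and the identities satisfied by the witness $\S h$ on projections, all of which have been established in the preceding lemmas.
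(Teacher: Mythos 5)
Your proof is correct and follows essentially the same route as the paper: existence and $\Sigma$-additivity of $\S h = \Spairing{h \comp \Sproj_i}$ via \cref{prop:additive} and \cref{prop:Spairing-additive}, functoriality by joint monicity of the $\Sproj_i$, and naturality of $\Sproj_i$ and $\Ssum$ directly from the defining equations of the witness and \cref{eq:sum-as-composition}. The only cosmetic remark is that your appeal to ``$\Ssum$ being additive'' when justifying summability of $\sequence{h \comp \Sproj_i}$ is unnecessary, since \cref{prop:additive} already gives it.
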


\begin{proof} Observe first that for all $h \in \categoryAdd(X, Y)$, 
  $\S h \in \categoryAdd(\S X, \S Y)$ by definition of $\S h$ and 
  \cref{prop:Spairing-additive}.
  Then 
  \[ \Sproj_i \comp \S \id_X = \id_{X} \comp \Sproj_i 
  = \Sproj_i = \Sproj_i \comp \id_{\S X} \]
  so $\S \id_X = \id_{\S X}$ by joint monicity of the $\Sproj_i$.
  Then, assume that
$h \in \category(X, Y)$ and $h' \in \category(Y, Z)$ are $\Sigma$-additive.
\[ \Sproj_i \comp \S h' \comp \S h 
= h' \comp \Sproj_i \comp \S h = h' \comp h \comp \Sproj_i
= \Sproj_i \comp \S (h' \comp h) \] 
so $\S (h' \comp h) = \S h' \comp \S h$ by joint monicity of the $\Sproj_i$.
Thus, $\S$ is a functor, and we already know that 
$\Sproj_i \compl \S h = h \compl \Sproj_i$, $\Ssum \compl \S h = h \compl \Ssum$,
so they are natural transformations.
\end{proof}

\begin{proposition} \label{prop:S-preserves-sum}
  For all family $\family{h_a}$ of $\Sigma$-additive morphisms, 
  \[ \S \left( \sum_{a \in A} h_a \right) \sumiff \sum_{a \in A} (\S h_a) \]
\end{proposition}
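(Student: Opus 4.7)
The plan is to prove the equivalence $\S(\sum_a h_a) \sumiff \sum_a \S h_a$ by establishing each direction of summability separately and then computing the equality of the sums via joint monicity of the projections. For the forward direction, assume $\family{h_a}$ is summable; since each $\Sproj_i$ is $\Sigma$-additive and $\sequence{\Sproj_i}$ is itself summable, \cref{prop:double-distributivity} gives summability of $\family<A \times \N>[(a, i)]{h_a \comp \Sproj_i}$. Rewriting $h_a \comp \Sproj_i = \Sproj_i \comp \S h_a$, axiom \ref{def:summability-structure-4} applied with $f_a := \S h_a \in \category(\S X, \S Y)$ then yields summability of $\family{\S h_a}$.

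The reverse direction is more delicate, because \ref{def:summability-structure-4} only provides an implication in one direction. To reflect summability of $\family{\S h_a}$ back to summability of $\family{h_a}$, I will use an auxiliary morphism $\iota_X \in \category(X, \S X)$ sectioning $\Sproj_{0, X}$, obtained from \ref{def:summability-structure-2} as the witness of the $\N$-indexed family with entry $\id_X$ at index $0$ and $0$ elsewhere; this family is summable by the unary sum axiom together with \cref{prop:zero-neutral}. It satisfies $\Sproj_0 \comp \iota_X = \id_X$ and $\Sproj_i \comp \iota_X = 0$ for $i \geq 1$. Assuming $\family{\S h_a}$ summable, left distributivity gives summability of $\family{\S h_a \comp \iota_X}$, and a short joint monicity argument on the $\Sproj_i$ yields $\S h_a \comp \iota_X = \iota_Y \comp h_a$. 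Postcomposing with $\Sproj_{0, Y}$, which is $\Sigma$-additive (via \cref{prop:additive}), then produces the summability of $\family{\Sproj_0 \comp \iota_Y \comp h_a} = \family{h_a}$.

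Once both sides are summable, the equality $\sum_a \S h_a = \S(\sum_a h_a)$ follows from the chain
\[ \Sproj_i \comp \sum_a \S h_a = \sum_a \Sproj_i \comp \S h_a = \sum_a h_a \comp \Sproj_i = \left(\sum_a h_a\right) \comp \Sproj_i = \Sproj_i \comp \S\left(\sum_a h_a\right), \]
invoking successively additivity of $\Sproj_i$, the definition of $\S h_a$, left distributivity, and the definition of $\S$ on $\sum_a h_a$ (which is itself $\Sigma$-additive by \cref{prop:sum-additive}); joint monicity of the projections then concludes. The main obstacle is unambiguously the reverse direction, since axiom \ref{def:summability-structure-4} does not reflect summability by itself. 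The construction of the section $\iota$ via the unary sum axiom is precisely what bridges this gap and will likely reappear each time one needs to transfer summability across $\S$.
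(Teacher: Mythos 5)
Your proof is correct and follows essentially the same route as the paper: double distributivity together with \ref{def:summability-structure-4} for the forward direction, the section $\Sinj_0$ of $\Sproj_0$ (your $\iota$) built from \ref{ax:unary} and \ref{def:summability-structure-2} to reflect summability of $\family{\S h_a}$ back to $\family{h_a}$, and joint monicity of the $\Sproj_i$ for the equality of the two sums. Two small remarks: your reverse direction is a mild streamlining of the paper's (which instead reindexes through an injection $A \injection \N$ and invokes \cref{prop:zero-neutral,prop:reindexing-bijection} after composing with $\Sinj_0$), and in your forward step the hypothesis actually doing the work in \cref{prop:double-distributivity} is the $\Sigma$-additivity of the $h_a$ (the postcomposed family), not that of the $\Sproj_i$, which are only needed to be summable.
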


\begin{proof}
  Let $\phi : A \injection \N$ be an injection.
  Assume that $\family{h_a}$ is summable.  Then 
  $\sequence{h_i'} = \Famact{\phi}{\family{h_a}}$ is summable 
  by \cref{prop:reindexing}.
  So by double distributivity,
  \[ \family<\N \times \N>[(i, j)]{h_i' \compl \Sproj_j}
  = \family<\N \times \N>[(i, j)]{\Sproj_j \compl \S h_i'}  \]
  is summable. 
  It follows by \ref{def:summability-structure-4} that
  $\sequence{\S h_i'}$ is summable.
  But $\S 0 = 0$ (this directly follows from the joint monicity of the $\Sproj_i$).
  So $\sequence{\S h_i'} = \Famact{\phi}{\family{\S h_a}}$ and 
  $\family{\S h_a}$ is summable by 
  \cref{prop:reindexing}. 
  Furthermore, 
  \[ \Sproj_i \compl \S \left( \sum_{a \in A} h_a \right) 
  = \left( \sum_{a \in A} h_a \right) \compl \Sproj_i 
  \sumsub \sum_{a \in A} (h_a \compl \Sproj_i) \] 
  \[ \Sproj_i \compl \left( \sum_{a \in A} \S h_a \right)
  \sumsub \sum_{a \in A} (\Sproj_i \compl \S h_a) 
  = \sum_{a \in A} (h_a \compl \Sproj_i) \] 
  so by joint monicity of the $\Sproj_i$, the two sums are equal.
  
  To conclude the proof, assume that $\family{\S h_a}$ is summable and let us 
  show that $\family{h_a}$ is summable. As observed above, 
  $\sequence{\S h_i'} = \Famact{\phi}{\family{\S h_a}}$ so 
  this family is summable by \cref{prop:reindexing}.
  It follows by double distributivity that 
  $\family<\N \times \N>[(i, j)]{\Sproj_j \compl \S h_i'}
  = \family<\N \times \N>[(i, j)]{h_i' \compl \Sproj_j}$ 
  is summable. 

  Then, let $\Sinj_i \in \category(X, \S X)$ be the morphism such that 
  $\Sproj_i \comp \Sinj_i = \id_X$ and $\Sproj_j \comp \Sinj_i = 0$ if $j \neq i$.
  This morphism exists thanks to \ref{ax:unary} and \ref{def:summability-structure-2}.
  Then by left distributivity in $\category$, 
  $\family<\N \times \N>[(i, j)]{h_i' \comp \Sproj_j \comp \Sinj_0}$ 
  is summable. This indexed family 
  has values $h_i'$ on indexes $(i, 0)$,
  and values $0$ on any index $(i, j)$ such that $j \neq 0$, by additivity of 
  $h_i'$.
  So by \cref{prop:zero-neutral,prop:reindexing-bijection}, 
  $\sequence{h_i'}$ is summable and thus $\family{h_a}$ is summable by 
  \cref{prop:reindexing}.
\end{proof}

Finally, we show that 
naturality in $\categoryAdd$ 
is preserved by witness and sums.

\begin{lemma} 
  For any category $\catbis$ and all functors $\F, \G : \catbis \arrow \categoryAdd$, $0$ is a natural transformation 
  $\F \naturalTrans \G$.
  \end{lemma}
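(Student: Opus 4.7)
The plan is to verify the two things required of a natural transformation with components $0_X \in \category(\F X, \G X)$: first, that each component actually lives in $\categoryAdd$ (i.e.\ that $0$ is $\Sigma$-additive), and second, that the naturality square commutes for every $f \in \catbis(X, Y)$.

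For the first point, I would invoke \cref{prop:additive}: the zero morphism is $\Sigma$-additive if and only if $0 \comp 0 = 0$ and the family $\sequence{0 \comp \Sproj_i}$ is summable with sum $0 \comp \Ssum$. Both reduce to the elementary observation that $0 \comp g = 0$ for any morphism $g$ of $\category$, which follows from left distributivity applied to the empty family (since $0$ is by definition the sum over the empty family). In particular $\sequence{0 \comp \Sproj_i}$ is the constant family $\sequence{0}$, whose support is empty, hence summable with sum $0$ by \cref{prop:zero-neutral}.

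For the second point, fix $f \in \catbis(X, Y)$. I would show that both composites in the naturality square are zero. On the one hand, $\G f \comp 0_X = 0$ because $\G f$, being a morphism of $\categoryAdd$, is $\Sigma$-additive and therefore maps $0$ to $0$ (applying $\Sigma$-additivity to the empty family). On the other hand, $0_Y \comp \F f = 0$ by the same left-distributivity argument used above. The two sides agree, so the square commutes.

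The proof carries no genuine obstacle; the only subtle point worth isolating is that the absorption property $0 \comp h = 0$ (needed for the left-hand leg) requires $h$ to be $\Sigma$-additive, which is exactly what the hypothesis that $\G$ takes values in $\categoryAdd$ supplies, whereas the complementary absorption $g \comp 0 = 0$ (for the right-hand leg) is free from left distributivity and makes no additivity demand on $\F f$.
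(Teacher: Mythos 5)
Your proof is correct and follows essentially the same route as the paper: the paper's one-line argument is exactly your naturality check, $0\compl \F f = 0 = \G f\compl 0$, with the left leg from left distributivity (empty sum) and the right leg from $\Sigma$-additivity of $\G f$. Your extra verification that each component $0$ is itself $\Sigma$-additive (hence a morphism of $\categoryAdd$) is sound — it also follows directly from \cref{prop:sum-additive} applied to the empty family — and is simply left implicit in the paper.
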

  \begin{proof}
    We have $0 \compl \F f = 0 = \G f \compl 0$ by 
  additivity of $\G f$.
  \end{proof}

\begin{definition} \label{def:summability-naturality}
  Let $\catbis$ be a category, and $F, G : \objects(\catbis) \arrow 
  \objects(\category)$ two maps.  
  Assume that for all $a \in A$,
  $\alpha^a = (\alpha^a_X)_{X \in \objects(\catbis)}$ is a family of 
  morphisms with $\alpha^a_X \in \category(F X, G X)$.
  We say that $\family{\alpha^a}$ is 
  summable if for all object $X$, 
  $\family{\alpha^a_X}$ is summable. 
  Then we write $\sum_{a \in I} \alpha^a$ the family 
  $(\sum_{a \in A} \alpha^a_X)_X$.
  If $A = \N$, we write 
  $\Spairing{\alpha^i}$ the family 
  $(\Spairing{\alpha^i_X})_X$.
\end{definition}

\begin{proposition} \label{prop:Spairing-natural}
  The sums and witnesses of natural transformations in $\categoryAdd$
  are also natural transformations.
  More precisely, for any category $\catbis$,
  for all functors 
  $\F, \G : \catbis \arrow \categoryAdd$ and for any
  family $\family{\alpha^a}$ of natural transformations
  $\alpha^a : \F \naturalTrans G$, if
  $\family{\alpha^a}$ is summable, then 
  $\sum_{a \in A} \alpha^a$ is a natural transformation 
  $F \naturalTrans G$. 
  If $A = \N$, then 
  $\Spairing{\alpha^i}$ is a
  natural transformation $\F \naturalTrans \S \G$.
\end{proposition}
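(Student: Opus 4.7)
The plan is to handle the two statements in sequence, since the witness case reduces to naturality after applying joint monicity of the projections.

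First I would prove the statement about sums. Fix a morphism $f \in \catbis(X, Y)$; the goal is to show that
\[
\G f \compl \left(\sum_{a \in A} \alpha^a_X\right) = \left(\sum_{a \in A} \alpha^a_Y\right) \compl \F f.
\]
Since $\G f$ lands in $\categoryAdd$, it is $\Sigma$-additive, so the left-hand side rewrites as $\sum_{a \in A} (\G f \compl \alpha^a_X)$. By left distributivity (which holds even in $\category$) the right-hand side rewrites as $\sum_{a \in A} (\alpha^a_Y \compl \F f)$. By naturality of each $\alpha^a$ these two summands agree term-by-term, so the two indexed families coincide as families of morphisms; one is summable iff the other is (with equal sums), which gives the desired equality.

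For the second assertion, I would use joint monicity of $\vect{\Sproj}$ (property \ref{def:summability-structure-1}) to reduce the naturality square of $\Spairing{\alpha^i}$ with respect to $\S \G$ to a collection of squares indexed by $j \in \N$. Fix $f \in \catbis(X,Y)$. We must show
\[
\S(\G f) \compl \Spairing{\alpha^i_X} = \Spairing{\alpha^i_Y} \compl \F f.
\]
Composing with $\Sproj_j$ on the left and using $\Sproj_j \compl \S(\G f) = \G f \compl \Sproj_j$ (from \cref{prop:category-additive}) together with $\Sproj_j \compl \Spairing{\alpha^i_X} = \alpha^j_X$, the left-hand side collapses to $\G f \compl \alpha^j_X$; the right-hand side collapses to $\alpha^j_Y \compl \F f$. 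Naturality of $\alpha^j$ finishes this check, and joint monicity then yields the required equality. Note we implicitly need $\Spairing{\alpha^i_X}$ to be well-defined, which it is by \ref{def:summability-structure-2} applied to the hypothesis that $\family<\N>[i]{\alpha^i}$ is summable at $X$.

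There is no real obstacle here: both parts are straightforward diagram chases. The only subtlety is remembering that $\G f$ (and similarly $\F f$ when needed) is additive precisely because we assumed the functors land in $\categoryAdd$, which is exactly what makes the sum commute past $\G f$ in the first step. Without this hypothesis, the statement would fail, so it is worth flagging in the proof that this is where additivity of the codomain category is used.
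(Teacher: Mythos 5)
Your proof is correct. The second half (the witness $\Spairing{\alpha^i}$) is exactly the paper's argument: postcompose with the projections, use $\Sproj_j \compl \S(\G f)=\G f\compl\Sproj_j$ and $\Sproj_j\compl\Spairing{\alpha^i}=\alpha^j$, apply naturality of $\alpha^j$, and conclude by joint monicity. Where you diverge is the first half: the paper first establishes the $\N$-indexed witness case and then obtains the sum statement indirectly, as $\sum_i\alpha^i_X=\Ssum_{\G X}\compl\Spairing{\alpha^i_X}$ together with naturality of $\Ssum$, finally reducing an arbitrary countable index set $A$ to $\N$ via \cref{prop:reindexing} and naturality of $0$. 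You instead prove the sum statement directly for arbitrary countable $A$: push the sum through $\G f$ by $\Sigma$-additivity, through precomposition with $\F f$ by left distributivity, and match the two families termwise by naturality of each $\alpha^a$. This is a perfectly valid and slightly more elementary route: it avoids both the reindexing step and any appeal to the witness, and in fact for that half it uses only the left $\Sigma$-additive enrichment rather than the summability structure; the paper's route, by contrast, packages the sum case as a corollary of the witness case and the naturality of $\Ssum$, which keeps everything phrased through the structure $(\S,\vect\Sproj,\Ssum)$. Your flagging of where additivity of $\G f$ (i.e.\ the codomain being $\categoryAdd$) is used, and of the well-definedness of $\Spairing{\alpha^i_X}$ via \ref{def:summability-structure-2}, is exactly the right care to take.
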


\begin{proof} Assume first that $A = \N$.
  We want to show that
  $\Spairing{\alpha_X^i} \in \categoryAdd(\F X, \S \G X)$ is a natural
  transformation $F \naturalTrans \S \G$. That is, for all
  $f \in \catbis(X, Y)$,
  \[ \Spairing{\alpha_Y^i} \compl \F f = \S \G f \compl
  \Spairing{\alpha_X^i} \, .\]
  For all $i$,
  \[ \Sproj_i \compl \Spairing{\alpha_Y^i} \compl \F f = \alpha_Y^i
  \compl \F f = \G f \compl \alpha_X^i = \G f \compl \Sproj_i \compl
  \Spairing{\alpha_X^i} = \Sproj_i \compl \S \G f \compl
  \Spairing{\alpha_X^i} \, . \]
  By joint monicity of the $\Sproj_i$, we conclude that
  $(\Spairing{\alpha_X^i})_X$ is natural.
  Furthermore,
  $\sum_{i \in \N} \alpha^i_X = \Ssum_{\G X} \compl \Spairing{\alpha_X^i}$
  is natural since $\Ssum$ is natural.
  The case where $A$ is an arbitrary countable set 
  immediately follows by \cref{prop:reindexing} and naturality of $0$.
\end{proof}

\subsection{The bimonad $\S$}

\label{sec:bimonad}

In this section we endow $\S$ with a structure of bimonad in the 
sense of \cref{def:bimonad}. The corresponding natural transformations 
are essential in our approach to Taylor expansion and are similar to 
natural transformations introduced in tangent 
categories~\cite{Rosicky84} and in coherent differentiation~\cite{Walch23}.

We will see in \cref{sec:Taylor} that Taylor expansion\footnote{Or 
more precisely, the Faà~di~Bruno formula, but we prefer to stick 
to the concept of Taylor expansion which is our central object of 
study and whose ``functorialization'' requires its Faà~di~Bruno 
generalization.} consists in a functor $\D$ 
that extends $\S$ to 
$\category$ (in particular, $\D X = \S X$), and that the usual properties 
of differentiation correspond to the naturality with respect to 
$\D$ of the families of morphisms associated with the bimonad structure 
of $\S$. We provide an intuition on those 
natural transformations as operations on formal power series.

It is quite enlightening to write a sequence $\Spairing{f_i} 
\in \category(X, \S Y)$ as a
formal power series $\sum_{i=0}^{\infty} f_i t^i$ (this series is
called the \emph{generating function}).
The reason is that as we will see in \cref{sec:Taylor}, the element at
position $i$ in $\Spairing{f_i}$ can be seen as an order $i$ 
degree of differentiation. This is in fact the reason why we write 
the projections $\Sproj_i$ starting from index $0$ and not
from index $1$.
Then the projection $\Sproj_i$ maps $\sum_{i=0}^{\infty} f_i t^i$ 
to $f_i$, and the sum maps $\sum_{i=0}^{\infty} f_i t^i$ to 
$\sum_{i \in \N} f_i$ (it evaluates the formal power series in $t = 1$).
We introduce and recall the following families of 
morphisms on $\category$.
\begin{enumerate}
  \item A morphism $\Sinj_{i,X} \in \category(X, \S X)$ that 
  intuitively maps an element 
    $x \in X$ to the monomial $x t^i$, this morphism was used in the 
    proof of \cref{prop:S-preserves-sum};
  \item Remember that we have a morphism $\Ssum_X \in \category(\S X, X)$ 
  that intuitively maps a power series $\psum x_i t^i$ to $\psum x_i$ (it evaluates the 
  series in $t = 1$);
  \item A morphism $\SmonadSum_X \in \category(\S^2 X, \S X)$
  that intuitively maps a multivariate formal power series 
  $\sum x_{i,j} t^i u^j$ to the power series 
  $\psum[k] (\sum_{i+j = k} x_{i,j}) v^k$;
  \item A lift morphism $\Slift_X \in \category(\S X, \S^2 X)$ that 
  intuitively maps the power series $\psum x_i t^i$ to the power series with two 
  indeterminate $\psum x_i u^i v^i$;
  \item A flip operator $\Sswap_X \in \category(\S^2 X, \S^2 X)$ 
  that intuitively maps the power series
  $\sum_{i, j \in \N} x_{i,j} t^i u^j$ to 
  $\sum_{i, j \in \N} x_{i,j} t^j u^i$.
\end{enumerate} 
We show that those morphisms formally exists thanks to the axioms of (left) 
$\Sigma$-summability structure, that they are $\Sigma$-additive,
that they are natural transformations when considered as morphisms 
in $\categoryAdd$, and 
that they provide a bimonad structure to 
$\S$ on $\categoryAdd$ (we will recall 
the commutations that are involved on the fly).

\begin{notation} 
  For any $i, j \in \N$ and any morphism 
  $f \in \category(X, Y)$, we use $\Kronecker ij f$
  for the morphism which is equal to $f$ if $i = j$, and to 
  $0$ otherwise.
\end{notation}

\begin{theorem}\label{thm:bimonad-structure}
  \begin{enumerate}
    \item For all $i \in \N$, 
    there is a morphism \(\Sinj_i \in\category(X,\Sfun X)\) such
    that \(\Sproj_j\comp\Sinj_i=\Kronecker ij\id\) (and then
    $\Ssum \comp \Sinj_i = \id$).
    \item There is a morphism \(\Smonm \in\category(\Sfun^2X,\Sfun X)\)
  such that, for all \(i\in\Nat\),
  \(\Sproj_i\comp\Smonm=\sum_{j=0}^i\Sproj_{i-j}\comp\Sproj_j\).
    \item There is a morphism $\Slift \in \category(\S X, \S^2 X)$ called 
    lift such that $\Sproj_i \comp \Sproj_j \comp \Slift = \Kronecker ij \Sproj_i$.
  That is, 
  $\Slift = \Spairing{\Sinj_i \comp \Sproj_i}$.
    \item There is a morphism \(\Sflip\in\category(\Sfun^2X,\Sfun^2X)\) such that
  for all \(i,j\in\Nat\), one has
  \(\Sproj_i\comp\Sproj_j\comp\Sflip=\Sproj_j\comp\Sproj_i\).
  \end{enumerate}
\end{theorem}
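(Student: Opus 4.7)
The plan is to realise each of the four morphisms as the unique witness $\Spairing{\cdot}$ of a suitable summable $\N$-indexed family, so that each construction collapses to a summability check. The tools I will use throughout are \ref{ax:unary} and \ref{ax:pa} inside the hom-$\Sigma$-monoids, the reindexing principles \cref{prop:reindexing-bijection,prop:reindexing}, double distributivity (\cref{prop:double-distributivity}), and the defining axioms of the summability structure, most importantly \ref{def:summability-structure-4}.

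For $\Sinj_i$, I start from the singleton family $\{\id_X\}$, which is summable by \ref{ax:unary}, and reindex along the injection $\{\ast\}\injection\N$ sending $\ast\mapsto i$ using \cref{prop:reindexing}. The result is a summable $\N$-indexed family with $j$-th entry $\Kronecker{i}{j}\id_X$, so \ref{def:summability-structure-2} produces a witness $\Sinj_i$ realising the announced equation. The identity $\Ssum\comp\Sinj_i=\id$ then follows from left distributivity together with \cref{prop:zero-neutral}.

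The construction of $\Smonm$ is the main obstacle, because each component $\Sproj_i\comp\Smonm$ is itself a nontrivial finite sum whose existence has to be established. The crucial step is that the doubly-indexed family $(\Sproj_p\comp\Sproj_q)_{(p,q)\in\N^2}\in\category(\S^2X,X)^{\N^2}$ is summable: $(\Sproj_q)_q$ is summable with witness $\id_{\S^2X}$ by \ref{def:summability-structure-2}, every $\Sproj_p$ is additive by \ref{def:summability-structure-3}, so \cref{prop:double-distributivity} applies. I then partition $\N^2$ by anti-diagonals $A_k:=\{(p,q):p+q=k\}$ and invoke \ref{ax:pa} to obtain both the existence of each finite sum $g_k:=\sum_{j=0}^{k}\Sproj_{k-j}\comp\Sproj_j$ and the summability of $(g_k)_{k\in\N}$. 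The witness of this latter family, supplied by \ref{def:summability-structure-2}, is the desired $\Smonm$.

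For $\Slift$, I target the family $(\Sinj_i\comp\Sproj_i)_{i\in\N}$. By \ref{def:summability-structure-4} its summability reduces to that of $(\Sproj_j\comp\Sinj_i\comp\Sproj_i)_{(j,i)\in\N^2}=(\Kronecker{i}{j}\Sproj_i)_{(j,i)}$; by \cref{prop:zero-neutral} this has the same status as its support, the diagonal $\{(i,i)\}_i$, which via \cref{prop:reindexing-bijection} reduces to summability of $(\Sproj_i)_i$. I then set $\Slift:=\Spairing{\Sinj_i\comp\Sproj_i}$ and the identity $\Sproj_i\comp\Sproj_j\comp\Slift=\Kronecker{i}{j}\Sproj_i$ is a direct computation using $\Sproj_j\comp\Sinj_i=\Kronecker{i}{j}\id$. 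Finally, for $\Sflip$, the additivity of each $\Sproj_j$ makes $\S\Sproj_j=\Spairing{\Sproj_j\comp\Sproj_i}_i$ well defined, and summability of $(\S\Sproj_j)_{j\in\N}$ reduces by \ref{def:summability-structure-4} to that of $(\Sproj_i\comp\S\Sproj_j)_{(j,i)}=(\Sproj_j\comp\Sproj_i)_{(j,i)}$, which was already secured in the $\Smonm$ step. Setting $\Sflip:=\Spairing{\S\Sproj_j}_j$ yields the required morphism, since $\Sproj_i\comp\Sproj_j\comp\Sflip=\Sproj_i\comp\S\Sproj_j=\Sproj_j\comp\Sproj_i$.
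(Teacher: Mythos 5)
Your proposal is correct and follows essentially the same route as the paper: realise each morphism as the witness of a suitable summable family, using \ref{ax:unary}, \ref{ax:pa}, double distributivity for the family $(\Sproj_i\comp\Sproj_j)_{(i,j)\in\N^2}$, reindexing for $\Slift$, and \ref{def:summability-structure-4} for $\Sswap$. The only difference is presentational — you spell out the reindexing and zero-neutrality steps that the paper leaves implicit.
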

\begin{proof}
  The existence of $\Sinj_i$ is a direct consequence of \ref{ax:unary} 
  and \ref{def:summability-structure-2}.
  The family $\family<\N^2>[(i,j)]{\Sproj_i \compl \Sproj_j}$ is 
  summable by double distributivity (\cref{prop:double-distributivity}).
  Thus, $\SmonadSum$ exists by \ref{ax:pa} and \ref{def:summability-structure-2}, 
  and $\Sswap$ exists by \ref{def:summability-structure-4}.
  Finally, the family $\sequence{\Sinj_i \comp \Sproj_i}$ is summable 
  if and only if $\family<\N^2>[(i,j)]{\Sproj_j \comp \Sinj_i \comp \Sproj_i}$ 
  is summable by \ref{def:summability-structure-4}.
  But 
  \[\Sproj_j \comp \Sinj_i \comp \Sproj_i = \begin{cases}
  \Sproj_i \text{ if } i = j \\ 
  0 \text{ otherwise.}
   \end{cases} \] 
   Thus, $\family<\N^2>[(i,j)]{\Sproj_j \comp \Sinj_i \comp \Sproj_i} 
   = \Famact \phi \sequence{\Sproj_i}$ where $\phi : \N \injection \N^2$ maps 
   $i$ to $(i,i)$. So this family is summable by \cref{prop:reindexing}, and 
   $\Slift$ exists by \ref{def:summability-structure-2}.
  \end{proof}

\begin{proposition} The families $\Sswap$, $\SmonadSum$, $\Sinj_i$ and $\Slift$ are
  $\Sigma$-additive, and they are natural transformations: $\Sswap : \S^2 \naturalTrans \S^2$,
  $\SmonadSum : \S^2 \naturalTrans \S$, $\Sinj_i : \idfun \naturalTrans \S$
  and $\Slift: \S \naturalTrans \S^2$.
\end{proposition}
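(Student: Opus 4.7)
The plan is to exploit the joint monicity of the projections (\ref{def:summability-structure-1}), at one or two levels, as a uniform tool for checking both $\Sigma$-additivity and naturality. The key observation underlying the additivity proofs is that for any $f\in\category(X,\Sfun Y)$, the family $\sequence{\Sproj_i\comp f}$ is summable with witness $f$ (by \ref{def:summability-structure-2}), so $f=\Spairing{\Sproj_i\comp f}$. Consequently, whenever each component $\Sproj_i\comp f$ is already known to be $\Sigma$-additive, \cref{prop:Spairing-additive} immediately yields that $f$ itself is $\Sigma$-additive. For morphisms with codomain $\Sfun^2 X$ we apply the same principle twice, postcomposing first with $\Sproj_i$ and then with $\Sproj_j$.

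I would then apply this recipe to each morphism in turn. For $\Sinj_i$, the equation $\Sproj_j\comp\Sinj_i=\Kronecker{i}{j}\id$ from \cref{thm:bimonad-structure} makes each component either $0$ or $\id$, both $\Sigma$-additive. For $\Smonm$, $\Sproj_i\comp\Smonm=\sum_{j=0}^i \Sproj_{i-j}\comp\Sproj_j$ is a finite sum of compositions of additive projections (by \ref{def:summability-structure-3}), hence additive by \cref{prop:sum-additive}. For $\Sflip$, two applications of the reduction give $\Sproj_j\comp\Sproj_i\comp\Sflip=\Sproj_i\comp\Sproj_j$, a composition of additive morphisms. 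For $\Slift$, the definition $\Slift=\Spairing{\Sinj_i\comp\Sproj_i}$ from \cref{thm:bimonad-structure} together with the already-established additivity of $\Sinj_i$ (so that each $\Sinj_i\comp\Sproj_i$ is additive) gives the result via \cref{prop:Spairing-additive}, since the summability of the family was verified in \cref{thm:bimonad-structure}.

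For naturality, let $h\in\categoryAdd(X,Y)$ and, in each case, I would verify the naturality square after postcomposing with the appropriate projection(s), invoking joint monicity to conclude. The only tool needed beyond the defining equations of the four morphisms is the identity $\Sproj_i\compl\Sfun h=h\compl\Sproj_i$ from \cref{prop:category-additive}, plus the additivity of $h$ to commute it through the finite sum appearing in $\Smonm$. Concretely, for $\Smonm$ one finds $\Sproj_i\comp\Smonm\comp\Sfun^2 h=\sum_{j=0}^i h\comp\Sproj_{i-j}\comp\Sproj_j=h\comp\Sproj_i\comp\Smonm=\Sproj_i\comp\Sfun h\comp\Smonm$, the middle equality using additivity of $h$; the cases of $\Sinj_i$, $\Sflip$ and $\Slift$ are even simpler as no sum is involved. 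The only real subtlety, modest as it is, lies in justifying the ``reduction to components'' step, which rests entirely on the tautological witness identity $f=\Spairing{\Sproj_i\comp f}$ and on \cref{prop:Spairing-additive}; once this is in place, every verification reduces to an elementary symbolic manipulation.
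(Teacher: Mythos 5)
Your proof is correct and follows essentially the same route as the paper: reduce each morphism to its components via tupling and joint monicity of the projections, using \cref{prop:Spairing-additive} and \cref{prop:sum-additive} for $\Sigma$-additivity. The only cosmetic difference is that for naturality the paper cites \cref{prop:Spairing-natural} (witnesses and sums of natural transformations are natural), whereas you inline exactly that joint-monicity computation by hand, which is equally valid.
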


\begin{proof} All the morphisms above are defined as tuplings of $0$,
   projections, and sums of projections. Those basic blocks are all 
   $\Sigma$-additive and
  natural transformations (the sums of projections are $\Sigma$-additive by 
  \cref{prop:sum-additive} and natural by \cref{prop:Spairing-natural}).
  So by \cref{prop:Spairing-natural} and \cref{prop:Spairing-additive},
  $\Sswap$, $\SmonadSum$, $\Sinj$ and $\Slift$ 
  are all $\Sigma$-additive and are natural transformations.
\end{proof}

\begin{proposition} The tuple $\monadS = (\S, \Sinj_0, \SmonadSum)$ is a monad on $\categoryAdd$, 
  that is the following diagrams of natural transformations commute.
  \[
\begin{tikzcd}
	\S & {\S^2} & \S \\
	& \S
	\arrow["{\Sinj_0 \S}", from=1-1, to=1-2]
	\arrow[Rightarrow, no head, from=1-1, to=2-2]
	\arrow["\SmonadSum"{description}, from=1-2, to=2-2]
	\arrow["{\S \Sinj_0}"', from=1-3, to=1-2]
	\arrow[Rightarrow, no head, from=1-3, to=2-2]
\end{tikzcd} \quad 
\begin{tikzcd}
	{\S^3} & {\S^2} \\
	{\S^2} & \S
	\arrow["{\S \SmonadSum}", from=1-1, to=1-2]
	\arrow["{\SmonadSum \S}"', from=1-1, to=2-1]
	\arrow["\SmonadSum", from=1-2, to=2-2]
	\arrow["\SmonadSum"', from=2-1, to=2-2]
\end{tikzcd} \]
\end{proposition}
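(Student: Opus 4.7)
The plan is to prove all three monad laws by exploiting joint monicity of the family $(\Sproj_{i,X})_{i\in\Nat}$ (axiom \ref{def:summability-structure-1}): it suffices to verify each equation after post-composing with an arbitrary $\Sproj_i$, and then use the characterizations $\Sproj_j\compl\Sinj_0 = \Kronecker{j}{0}\id$ and $\Sproj_i\compl\Smonm = \sum_{j=0}^{i}\Sproj_{i-j}\compl\Sproj_j$ from \cref{thm:bimonad-structure}, combined with $\Sigma$-additivity of the projections (which lets us distribute them over finite sums) and naturality of the $\Sproj_j : \S\naturalTrans\idfun$ viewed as transformations in $\categoryAdd$.

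For the left unit law $\SmonadSum\compl\Sinj_0\S = \id_{\S}$, post-composing with $\Sproj_i$ gives
\[
\Sproj_i\compl\SmonadSum\compl\Sinj_{0,\S X} = \sum_{j=0}^{i}\Sproj_{i-j}\compl\Sproj_j\compl\Sinj_{0,\S X} = \sum_{j=0}^{i}\Kronecker{j}{0}\Sproj_{i-j} = \Sproj_i,
\]
where the only nonzero term is $j=0$. For the right unit $\SmonadSum\compl\S\Sinj_0 = \id_{\S}$, the same reduction plus naturality of $\Sproj_j$ applied to $\Sinj_0$ gives $\Sproj_j\compl\S\Sinj_0 = \Sinj_0\compl\Sproj_j$, so only the $j=i$ summand survives and we again get $\Sproj_i$.

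For associativity $\SmonadSum\compl\SmonadSum\S = \SmonadSum\compl\S\SmonadSum$, post-composition with $\Sproj_i$ and two applications of the formula for $\Sproj\compl\SmonadSum$ (plus naturality of $\Sproj_j$ to move it past $\S\SmonadSum$ on the right-hand side) reduces both sides to triple sums indexed by pairs. After reindexing by $(a,b,c) = (i-j,j-k,k)$ on the left and $(a,b,c) = (i-j-\ell,\ell,j)$ on the right, both expressions become
\[
\sum_{\substack{a,b,c\geq 0\\ a+b+c=i}} \Sproj_a\compl\Sproj_b\compl\Sproj_c,
\]
which is the expected Cauchy-type convolution formula; joint monicity then concludes.

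I do not expect a serious obstacle here: the calculation is essentially the verification that the formal substitution $t\mapsto t+u$ on generating functions is associative and unital, encoded via the projection formulas. The only mildly delicate point is the bookkeeping of $\Sigma$-additivity when distributing $\Sproj_{i-j}$ over a finite sum inside a composition, and invoking naturality of the transformations $\Sproj_j$ in $\categoryAdd$ (established in \cref{prop:category-additive}) in the correct direction.
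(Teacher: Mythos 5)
Your proposal is correct and follows essentially the same route as the paper's proof: joint monicity of the $\Sproj_i$, the characterizations $\Sproj_j\compl\Sinj_0=\Kronecker j0\id$ and $\Sproj_i\compl\SmonadSum=\sum_{j+k=i}\Sproj_j\compl\Sproj_k$, naturality of the projections in $\categoryAdd$ to pass them across $\S\Sinj_0$ and $\S\SmonadSum$, and reduction of both sides of associativity to the triple convolution $\sum_{a+b+c=i}\Sproj_a\compl\Sproj_b\compl\Sproj_c$. Your explicit remark about left distributivity and $\Sigma$-additivity of the projections when distributing over the finite sums is exactly the bookkeeping the paper uses implicitly.
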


\begin{proof} As usual, we use the joint monicity of the $\Sproj_i$.
  \begin{align*}
    \Sproj_i\compl\Smonm_X\compl\Sinj_{0, \Sfun X}
    = \sum_{l+r=i}\Sproj_l\compl\Sproj_r\compl\Sinj_{0, \Sfun X}
    = \sum_{l+r=i}\Sproj_l\compl (\Kronecker r 0 \id)
    = \Sproj_i\compl\Id = \Sproj_i
  \end{align*}
  \begin{align*}
    \Sproj_i\compl\Smonm\compl\Sfun\Sinj_{0, X}
    = \sum_{l+r=i}\Sproj_l\compl\Sproj_r\compl\Sfun\Sinj_{0, X}
    = \sum_{l+r=i}\Sproj_l\compl\Sinj_{0, X}\compl\Sproj_r
    = \sum_{l+r=i}(\Kronecker l 0 \id) \compl\Sproj_r
    = \Sproj_i
  \end{align*}
  using the naturality of \(\Sproj_r\).
  Hence, \(\Smonm\compl\Sinj_{0, \Sfun X}=\Smonm\compl\Sfun\Sinj_{0, X}=\Id\)
  by joint monicity of the \(\Sproj_i\)'s.

  Next we have
  \begin{align*}
    \Sproj_i\compl\Smonm_X\compl\Smonm_{\Sfun X}
    &= \sum_{l+r=i}\Sproj_l\compl\Sproj_r\compl\Smonm_{\Sfun X}\\
    &= \sum_{l+r=i}\Sproj_l\sum_{s+t=r}\Sproj_s\compl\Sproj_t\\
    &= \sum_{l+r+s=i}\Sproj_l\compl\Sproj_r\compl\Sproj_s
  \end{align*}
  and
  \begin{align*}
    \Sproj_i\compl\Smonm_X\compl\Sfun\Smonm_{X}    
    &= \sum_{l+r=i}\Sproj_l\compl\Sproj_r\compl\Sfun\Smonm_{X}\\
    &= \sum_{l+r=i}\Sproj_l\compl\Smonm_{X}\compl\Sproj_r\\
    &= \sum_{l+r=i}(\sum_{j+k=l}\Sproj_j\compl\Sproj_k)\compl\Sproj_r\\
    &= \sum_{j+k+r=i}\Sproj_j\compl\Sproj_k\compl\Sproj_k
  \end{align*}
  and hence
  \(\Smonm_X\compl\Smonm_{\Sfun X}=\Smonm_X\compl\Sfun\Smonm_{X}\).
\end{proof}

We use $\categoryAddS$ for the Kleisli category of the monad $\monadS$ on 
$\categoryAdd$. Observe that the composition of 
$f \in \categoryAddS(X, Y)$ with $g \in \categoryAddS(Y, Z)$, 
defined as $\SmonadSum \compl \S g \compl f$, 
is characterized by the equations for each $k\in\N$ 
\begin{equation}  
  \label{eq:composition-cauchy-product}
  \Sproj_k \compl \SmonadSum \compl \S g \compl f 
= \sum_{i+j = k} \Sproj_j \compl \Sproj_i \compl \S g \compl f 
= \sum_{i+j = k} (\Sproj_j \compl g) \compl (\Sproj_i \compl f)
\end{equation}
Coming back to formal power series, 
a morphism $f \in \categoryAddS(X, Y)$ can  
be equivalently written as a formal series $\psum f_i t^i$ 
where $f_i = \Sproj_i \compl f_i$.
Then \cref{eq:composition-cauchy-product} above 
means that the composition of two series 
corresponds to their Cauchy product
(which is the usual product of power series),
in which the 
multiplication of coefficients is composition of linear morphisms.
\[ \left(\psum[j] g_j t^j\right) 
\compl \left(\psum f_i t^i\right)
= \psum[k] \left(\sum_{i+j = k} g_j \compl f_i\right) \compl t^k \, . \] 
The functor $\kleisliLS : \categoryAdd \arrow \categoryAddS $ 
maps a morphism in $f \in \categoryAdd$ to the formal power 
series $f t^0$.

\begin{remark} The (left) summability structures of 
  \cite{Ehrhard23-cohdiff} and \cite{Walch23} follow a similar principle,
  except that $\S X$ is intuitively 
  the set of summable pairs $\Spair{x_0}{x_1}$ with 
  $x_0, x_1 \in X$.
  An element $\Spair{x_0}{x_1} \in \S X$ can be seen as 
  a polynomial $x_0 + x_1 t$ with the convention that the formal variable 
  $t$ follows the equation $t^2 = 0$.
  Then the natural transformations $\Sinj_i$ (with $i \in \{0,1\}$)
  $\Ssum, \Slift$ and $\Sswap$ 
  follow the same intuition.
  On the other hand, $\SmonadSum$ maps a polynomial 
  $x_{0,0} + x_{1,0} t + x_{0,1} u + x_{1,1} t u$
  to the polynomial $x_{0,0} + (x_{0,1} + x_{1,0}) v$, since $v^2 = 0$.

  In that case, a morphism $\categoryAddS(X, Y)$ can be seen as 
  a polynomial $f_0 + f_1 t$ with $f_i \in \categoryAdd(X,Y)$,
  and we have the following composition
  \[ (g_0 + g_1 t) \compl (f_0 + f_1 t) 
  = g_0 \compl f_0 + (g_0 \compl f_1 + g_1 \compl f_0) t \] 
  Which is similar to the multiplication on Clifford's \emph{dual numbers}.
  Similarly, an $n$-ary counterpart of (left) 
  summability structures for any 
  $n \in \N^*$ would induce a monad and a category of 
  polynomials quotiented by the equation $t^{n+1} = 0$.
\end{remark}

\begin{proposition}
  The tuple  $\comonadS = (\S, \Ssum, \Slift)$ is a comonad on $\categoryAdd$,
  meaning that the following diagrams of natural transformations commute.
  \[
  \begin{tikzcd}
    \S & {\S^2} & \S \\
    & \S
    \arrow["{\Ssum \S}"', from=1-2, to=1-1]
    \arrow["{\S \Ssum}", from=1-2, to=1-3]
    \arrow[Rightarrow, no head, from=2-2, to=1-1]
    \arrow["\Slift", from=2-2, to=1-2]
    \arrow[Rightarrow, no head, from=2-2, to=1-3]
  \end{tikzcd} \quad
\begin{tikzcd}
	{\S^3} & {\S^2} \\
	{\S^2} & \S
	\arrow["{\Slift \S}"', from=1-2, to=1-1]
	\arrow["{\S \Slift}", from=2-1, to=1-1]
	\arrow["\Slift"', from=2-2, to=1-2]
	\arrow["\Slift", from=2-2, to=2-1]
\end{tikzcd} \]
\end{proposition}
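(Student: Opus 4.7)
The plan is to verify the three coherence diagrams by the method used for the monad structure in the preceding proposition: postcompose with a jointly monic family of projections (iterated where required), and reduce to direct calculations using the defining equations of \cref{thm:bimonad-structure} together with the naturality of $\Sproj_i$ established in \cref{prop:category-additive}.

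For the two counit laws, one application of joint monicity suffices. To check $\Ssum \S \compl \Slift = \id_{\S}$, I would expand $\Ssum_{\S X}$ as $\sum_{j \in \N} \Sproj_{j,\S X}$, apply left distributivity of composition over the sum, and collapse to a single term using $\Sproj_{i,X} \compl \Sproj_{j,\S X} \compl \Slift_X = \Sproj_{i,X}$ when $i = j$ and $0$ otherwise. For $\S\Ssum \compl \Slift = \id_{\S}$, I would first commute $\Sproj_{i,X}$ past $\S\Ssum_X$ by naturality to obtain $\Ssum_X \compl \Sproj_{i,\S X} \compl \Slift_X$, rewrite $\Sproj_{i,\S X} \compl \Slift_X$ as $\Sinj_{i,X} \compl \Sproj_{i,X}$, and conclude using $\Ssum \compl \Sinj_i = \id$ from \cref{thm:bimonad-structure}.

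For the coassociativity square $\Slift_{\S X} \compl \Slift_X = \S\Slift_X \compl \Slift_X$, the plan is to iterate joint monicity and test equality after postcomposition with $\Sproj_{i,X} \compl \Sproj_{j,\S X} \compl \Sproj_{k,\S^2 X}$ for every $i, j, k \in \N$. On the left-hand side, two applications of $\Sproj_k \compl \Slift = \Sinj_k \compl \Sproj_k$ at levels $\S^2$ and $\S$, followed by $\Sproj_i \compl \Sinj_j = \id$ when $i = j$ and $0$ otherwise, yield a morphism equal to $\Sproj_{k,X}$ when both $j = k$ and $i = k$ hold, and $0$ in every other case. On the right-hand side, naturality of $\Sproj_k$ first rewrites $\Sproj_{k,\S^2 X} \compl \S\Slift_X$ as $\Slift_X \compl \Sproj_{k,\S X}$; unfolding the two copies of $\Slift_X$ yields a morphism equal to $\Sproj_{k,X}$ when $j = k$ and $i = j$, and $0$ otherwise. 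Both conditions are equivalent to $i = j = k$, so the two sides agree.

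I expect the only real difficulty to be bookkeeping: tracking which component of each natural transformation is at play at each stage and making sure that every application of naturality is invoked on a morphism of $\categoryAdd$, which is fine because $\Slift$, $\Ssum$, and $\Sinj_i$ are all $\Sigma$-additive. The iterated joint monicity step used in coassociativity is immediate, since joint monicity of $\vect\Sproj_{\S^n X}$ reduces equality of morphisms into $\S^{n+1}X$ to equality of morphisms into $\S^n X$.
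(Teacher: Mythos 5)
Your proposal is correct and follows essentially the same route as the paper's proof: joint monicity of (iterated) projections, the identities $\Sproj_i \compl \Slift = \Sinj_i \compl \Sproj_i$, $\Sproj_j \compl \Sinj_i = \Kronecker ij \id$, $\Ssum \compl \Sinj_i = \id$, and naturality of $\Sproj_i$, with the same split between the two counit laws. The only cosmetic difference is that for coassociativity you postcompose with three levels of projections, whereas the paper stops at the jointly monic family $\Sproj_i \compl \Sproj_j$; both verifications reduce to the same case analysis $i=j=k$ versus $i=j$.
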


\begin{proof} As usual, we use the joint monicity of the $\Sproj_i$.
  Remember that $\Sproj_i \compl \Slift_X = \Sinj_i \compl \Sproj_i$.
  \[ \Sproj_i \compl \S \Ssum_X \compl \Slift_X = \Ssum_X \compl \Sproj_i \compl \Slift_X 
  = \Ssum_X \compl \Sinj_i \compl \Sproj_i = \Sproj_i \]
  \[ \Sproj_i \compl \Ssum_{\S X} \compl \Slift_X = 
  \Sproj_i \compl \left( \sum_{j \in \N} \Sproj_j \right) \compl \Slift_X
  = \sum_{j \in \N} \Sproj_i \compl \Sproj_j \compl \Slift_X
  = \sum_{j \in \N} \Kronecker i j \Sproj_i = \Sproj_i \]
so by joint monicity of the $\Sproj_i$,
$\S \Ssum_X \compl \Slift_X = \Ssum_{\S X} \compl \Slift_X = \id_X$.
Next, we have
\[ \Sproj_i \compl \Sproj_j \compl \S \Slift_X \Slift_X 
= \Sproj_i \compl \Slift_X \compl \Sproj_j \compl \Slift_X 
= \Sinj_i \compl \Sproj_i \compl \Sinj_j \compl \Sproj_j 
= \begin{cases} \Sinj_i \compl \Sproj_i \text{ if $i=j$} \\
  0 \text{ otherwise} \end{cases} \]
  \[ \Sproj_i \compl \Sproj_j \compl \Slift_{\S X} \compl \Slift_X 
  = \Sproj_i \compl \Sinj_j \compl \Sproj_j \compl \Slift 
  = \Sproj_i \compl \Sinj_j \compl \Sinj_j \compl \Sproj_j 
  = \begin{cases} \Sinj_i \compl \Sproj_i \text{ if $i=j$} \\
    0 \text{ otherwise} \end{cases} \]
So by joint monicity of the $\Sproj_i \compl \Sproj_j$ we have 
$\S \Slift_X \Slift_X = \Slift_{\S X} \compl \Slift_X$.
\end{proof}

\begin{remark}
  We put a strong emphasis on $\categoryAddS$, the Kleisli category of the monad $\S$, 
  as opposed to the Eilenberg-Moore category of the monad $\S$, or 
  the coKleisli and coEilenberg-Moore category of the comonad $\comonadS$.
  The reason is the $\categoryAddS$ is simple to describe in terms of power series
  (or dual numbers, in the binary case of \cite{Ehrhard23-cohdiff}), and 
  is deeply related to LL (see \cref{sec:summable-resource-category,sec:Taylor}).
  Still, those three other categories are interesting and deserve further
  studies, which are postponed to future work.  
\end{remark}

The monad structure and the comonad structure are compatible,
in the sense that they form a $\Sswap$-bimonad, see \cref{def:bimonad}.
Here are some useful observations on $\Sswap$.

\begin{lemma} \label{lemma:equation-Sswap}
  \begin{enumerate}
    \item $\Sswap$ is involutive;
    \item $\Sswap = \Spairing{\S \Sproj_i}$;
    \item $\S \Sproj_i \compl \Sswap = \Sproj_i$.
  \end{enumerate}
\end{lemma}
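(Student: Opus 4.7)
The plan is to prove all three identities by exploiting joint monicity of the projections (and of the iterated projections $\Sproj_i \compl \Sproj_j$, which holds because two morphisms into $\Sfun^2 X$ that are separated by some $\Sproj_i$ are a fortiori separated after composing with some $\Sproj_j$). The defining equation of $\Sswap$ from \cref{thm:bimonad-structure} is $\Sproj_i \compl \Sproj_j \compl \Sswap = \Sproj_j \compl \Sproj_i$ for all $i,j \in \Nat$, and everything should reduce to a symbolic manipulation of this identity together with the naturality of $\Sproj_j$, which gives $\Sproj_j \compl \Sfun \Sproj_i = \Sproj_i \compl \Sproj_j$.

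For involutivity, I would precompose the defining equation of $\Sswap$ with $\Sswap$ itself: for any $i,j$, one gets
\[ \Sproj_i \compl \Sproj_j \compl \Sswap \compl \Sswap \;=\; \Sproj_j \compl \Sproj_i \compl \Sswap \;=\; \Sproj_i \compl \Sproj_j \]
by two applications of the defining equation (the second one after swapping the names of $i$ and $j$). Joint monicity of the family $\family<\Nat^2>[(i,j)]{\Sproj_i \compl \Sproj_j}$ then yields $\Sswap \compl \Sswap = \Id$.

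For the second identity, the key point is to establish that the family $\sequence{\Sfun \Sproj_i}$ is summable and that $\Sswap$ is its witness, which by the uniqueness part of \ref{def:summability-structure-2} is equivalent to verifying $\Sproj_i \compl \Sswap = \Sfun \Sproj_i$ for every $i$. I would check this by joint monicity of the $\Sproj_j$: for all $j$, naturality of $\Sproj_j$ gives $\Sproj_j \compl \Sfun \Sproj_i = \Sproj_i \compl \Sproj_j$, while the defining equation of $\Sswap$ (with indices swapped) gives $\Sproj_j \compl \Sproj_i \compl \Sswap = \Sproj_i \compl \Sproj_j$. The two right-hand sides agree, so joint monicity yields $\Sproj_i \compl \Sswap = \Sfun \Sproj_i$, which simultaneously proves summability of $\sequence{\Sfun \Sproj_i}$ and the equality $\Sswap = \Spairing{\Sfun \Sproj_i}$.

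For the third identity, I would again reduce to joint monicity of the $\Sproj_j$: for all $j$, naturality of $\Sproj_j$ and then the defining equation of $\Sswap$ give
\[ \Sproj_j \compl \Sfun \Sproj_i \compl \Sswap \;=\; \Sproj_i \compl \Sproj_j \compl \Sswap \;=\; \Sproj_j \compl \Sproj_i, \]
so $\Sfun \Sproj_i \compl \Sswap = \Sproj_i$ by joint monicity. There is no real obstacle here; the only subtlety is the summability check for $\sequence{\Sfun \Sproj_i}$ in item (2), which I handle implicitly by exhibiting $\Sswap$ as a witness rather than invoking \ref{def:summability-structure-4} directly.
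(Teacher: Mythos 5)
Your proof is correct and follows essentially the same route as the paper: the defining equation $\Sproj_i \compl \Sproj_j \compl \Sswap = \Sproj_j \compl \Sproj_i$, naturality of the projections, and joint monicity of the $\Sproj_i$ (resp.\ of the $\Sproj_i \compl \Sproj_j$). The only cosmetic difference is in item (3): the paper composes $\Sproj_i \compl \Sswap = \S \Sproj_i$ with $\Sswap$ and invokes involutivity, whereas you rerun a joint-monicity computation, which is the same calculation in substance.
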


\begin{proof} First, 
  $\Sproj_i \compl \Sproj_j \compl \Sswap \compl \Sswap 
  = \Sproj_j \compl \Sproj_i \compl \Sswap 
  = \Sproj_i \compl \Sproj_i \compl \id$ so 
  by joint monicity of the $\Sproj_i \compl \Sproj_j$,
  $\Sswap \compl \Sswap = \id$. 
  Furthermore, 
  $\Sproj_i \compl \Sproj_j \compl \Sswap 
  = \Sproj_j \compl \Sproj_i = \Sproj_i \compl \S \Sproj_j$ so by joint monicity of
  the $\Sproj_i$, $\Sproj_j \compl \Sswap = \S \Sproj_j$. That is,
  $\Sswap = \Spairing{\S \Sproj_j}$.
  Finally, $\Sproj_i \compl \Sswap = \S \Sproj_i$ so using the fact that $\Sswap$ is involutive,
  $\Sproj_i = \S \Sproj_i \compl \Sswap$.
\end{proof}

\begin{proposition} \label{prop:Sswap-distributive}
  The natural transformation $\Sswap$ is a distributive law
  $\monadS \comonadS \naturalTrans \comonadS \monadS$ and 
  $\comonadS \monadS \naturalTrans \monadS \comonadS$.
\end{proposition}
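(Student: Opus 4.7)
The plan is to leverage the involutivity of $\Sswap$ established in \cref{lemma:equation-Sswap}(1). By \cref{rem:bimonad-involutive}, the two claims of the proposition are then logically equivalent, so it suffices to verify one, say that $\Sswap$ is a distributive law $\comonadS \monadS \naturalTrans \monadS \comonadS$. This unfolds into four coherence squares: two expressing that $\Sswap$ is compatible with the comonad data $(\Ssum, \Slift)$ in the sense of \cref{def:codl}, and two expressing compatibility with the monad data $(\Sinj_0, \SmonadSum)$ in the sense of \cref{def:dl}.

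Each of the four equations is between natural transformations among powers of $\S$. The strategy for all of them is the same: to prove an equality between maps with codomain $\S^m$, post-compose with every iterated projection $\Sproj_{i_1} \compl \cdots \compl \Sproj_{i_m}$ and reduce to an equality of scalar morphisms in $\category$, invoking $m$-fold joint monicity (axiom \ref{def:summability-structure-1} applied $m$ times). The ingredients available at each step are: the characterizing identity $\Sproj_i \comp \Sproj_j \comp \Sswap = \Sproj_j \comp \Sproj_i$ from \cref{thm:bimonad-structure}; the naturality of $\Sproj_i$ on $\Sigma$-additive morphisms established in \cref{prop:category-additive}, which lets us rewrite $\Sproj_i \compl \S f$ as $f \compl \Sproj_i$; the defining equations of $\Sinj_0$, $\SmonadSum$, $\Ssum$, and $\Slift$ from \cref{thm:bimonad-structure}; and left distributivity to move the Cauchy-style finite sums arising from $\SmonadSum$ past compositions.

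The two conditions involving $\Sinj_0$ and $\Ssum$ reduce almost immediately once projected; the second, $\S \Ssum \comp \Sswap = \Ssum \S$, follows after applying $\Sproj_i$ from the naturality of $\Ssum$ with respect to $\Sproj_i$ together with \cref{lemma:equation-Sswap}(2). I expect the main bookkeeping effort to be concentrated in the condition involving $\SmonadSum$ and the coassociativity-like condition involving $\Slift$. The former requires expanding $\Sproj_j \compl \SmonadSum = \sum_{k+l=j} \Sproj_k \compl \Sproj_l$, commuting one of the projections past $\S \Sswap$ by naturality, and swapping a pair of projections via the $\Sswap$ identity, after which both sides collapse to $\sum_{k+l=j} \Sproj_k \compl \Sproj_l \compl \Sproj_i$. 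The latter requires triple joint monicity and careful tracking of the level (whether we are applying $\Sswap_X$ or $\Sswap_{\S X}$, $\Slift_X$ or $\Slift_{\S X}$) at which each natural transformation is evaluated; two applications of the $\Sswap$ identity and the characterization $\Sproj_j \compl \Slift = \Sinj_j \compl \Sproj_j$ produce a Kronecker $\delta_{ij}$ on both sides. No new idea beyond these standard manipulations is required; the proof proceeds entirely by unfolding definitions.
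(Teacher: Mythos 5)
Your proposal is correct and follows essentially the same route as the paper: reduce to a single direction via the involutivity of $\Sswap$ and \cref{rem:bimonad-involutive}, then verify the four coherence squares by joint monicity of iterated projections, the identity $\Sproj_i \compl \Sproj_j \compl \Sswap = \Sproj_j \compl \Sproj_i$, naturality, the Cauchy expansion of $\SmonadSum$, and $\Sproj_j \compl \Slift = \Sinj_j \compl \Sproj_j$. That you check the $\comonadS\monadS \naturalTrans \monadS\comonadS$ direction rather than the paper's $\monadS\comonadS \naturalTrans \comonadS\monadS$ is immaterial, since the diagrams are mirror images related by the involution.
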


\begin{proof}
  By \cref{rem:bimonad-involutive} it is only necessary to show that
  $\Sswap$ is a distributive law
  $\monadS \comonadS \naturalTrans \comonadS \monadS$ since $\Sswap$
  is involutive. The first condition is that $\Sswap$ is a
  distributive law $\monadS \S \naturalTrans \S \monadS$ which
  corresponds to the two diagrams below.  For the sake of readability,
  we write $\monadS$ instead of just $\S$ to make it clear which part
  is playing the role of the monad.
  \[ \begin{tikzcd}
    \S \\
    {\monadS \S} & {\S \monadS}
    \arrow["{\Sinj_0 \S}"', from=1-1, to=2-1]
    \arrow["\Sswap"', from=2-1, to=2-2]
    \arrow["{\S \Sinj_0}", from=1-1, to=2-2]
  \end{tikzcd} \quad 
  \begin{tikzcd}
    {\monadS^2 \S} & {\monadS \S \monadS} & {\S \monadS^2} \\
    {\monadS \S} && {\S \monadS}
    \arrow["{\monadS \Sswap}", from=1-1, to=1-2]
    \arrow["{\Sswap \monadS}", from=1-2, to=1-3]
    \arrow["{\SmonadSum \S}"', from=1-1, to=2-1]
    \arrow["\Sswap"', from=2-1, to=2-3]
    \arrow["{\S \SmonadSum}", from=1-3, to=2-3]
  \end{tikzcd} \]
  We use the joint monicity of the $\Sproj_i \compl \Sproj_j$. Observe that
  \[ \Sproj_i \compl \Sproj_j \compl \Sswap \compl \Sinj_0 = \Sproj_j \compl \Sproj_i \compl \Sinj_0
  = \begin{cases} \Sproj_j \text{ if $i=0$} \\ 0 \text{ otherwise} \end{cases} \]
  \[ \Sproj_i \compl \Sproj_j \compl \S \Sinj_0 = \Sproj_i \compl \Sinj_0 \compl \Sproj_j 
  = \begin{cases} \Sproj_j \text{ if $i=0$} \\ 0 \text{ otherwise} \end{cases} \]
  so by joint monicity of the $\Sproj_i \compl \Sproj_j$,
  $\Sswap \compl \Sinj_0 = \S \Sinj_0$ and the left diagram holds.
  Next,
  \[ \Sproj_i \compl \Sproj_j \compl \Sswap \compl \SmonadSum_{\S X}
  = \Sproj_j \compl \Sproj_i \compl  \SmonadSum_{\S X}
  = \Sproj_j \compl \left(\sum_{k = 0}^i \Sproj_k \compl \Sproj_{i-k} \right) \]
  \begin{align*}
    \Sproj_i \compl \Sproj_j \compl \S \SmonadSum_X \compl \Sswap_{\S X} \compl \S \Sswap_X
    &= \Sproj_i \compl \SmonadSum_X \compl \Sproj_j \compl \Sswap_{\S X} \compl \S \Sswap_X
    \tag*{ naturality of $\Sproj_j$} \\
    &= \Sproj_i \compl \SmonadSum_X \compl \S \Sproj_j \compl \S \Sswap_X 
    \tag*{\cref{lemma:equation-Sswap}} \\
    &= \Sproj_i \compl \SmonadSum_X \compl \S^2 \Sproj_j 
    \tag*{functoriality of $\S$ and \cref{lemma:equation-Sswap}} \\
    &= (\sum_{k=0}^i \Sproj_k \compl \Sproj_{i-k}) \compl \S^2 \Sproj_j \\
    &= \Sproj_j \compl (\sum_{k=0}^i \Sproj_k \compl \Sproj_{i-k})
    \tag*{naturality, thanks to \cref{prop:Spairing-natural}}
  \end{align*}
  So by joint monicity of the $\Sproj_i \compl \Sproj_j$, 
  $\Sswap \compl \SmonadSum_{\S X} = \S \SmonadSum_X \compl \Sswap_{\S X} \compl \S \Sswap_X$
  and the right diagram holds. Next, we show that $\Sswap$ is a
  distributive law $\S \comonadS \naturalTrans \comonadS \S$, that is,
  the two diagrams below commute.  For the sake of readability, we
  write $\comonadS$ instead of just $\S$ to make clear which part is
  playing the role of the comonad.
  \[ \begin{tikzcd}
    {\S \comonadS} & \comonadS \S \\
    & \S
    \arrow["{\Sswap}", from=1-1, to=1-2]
    \arrow["{\S \Ssum}"', from=1-1, to=2-2]
    \arrow["{\Ssum \S}", from=1-2, to=2-2]
  \end{tikzcd} \quad 
  \begin{tikzcd}
    {\S \comonadS} && {\comonadS \S} \\
    {\S \comonadS^2} & {\comonadS \S \comonadS} & {\comonadS^2 \S}
    \arrow["\Sswap", from=1-1, to=1-3]
    \arrow["{\S \Slift}"', from=1-1, to=2-1]
    \arrow["{\Sswap \comonadS}"', from=2-1, to=2-2]
    \arrow["{\comonadS \Sswap}"', from=2-2, to=2-3]
    \arrow["{\Slift \S}", from=1-3, to=2-3]
  \end{tikzcd}\]
  By naturality of $\Ssum$, of $\Sproj_i$, and using the fact 
  that $\S \Sproj_i \compl \Sswap = \Sproj_i$ (see \cref{lemma:equation-Sswap}), we have
  \[ \Sproj_i \compl \Ssum_{\S X} \compl \Sswap = \Ssum_{\S X} \compl 
  \S \Sproj_i \compl \Sswap = \Ssum_{\S X} \compl \Sproj_i 
  = \Sproj_i \compl \S \Ssum_{X} \]
  so by joint monicity of the $\Sproj_i$, $\Ssum_{\S X} \compl \Sswap_X = \S \Ssum_X$ 
  and the left diagram holds.
  Next, we have 
  \begin{align*}
    \Sproj_i \compl \Sproj_j \compl \S \Sswap_X \compl \Sswap_{\S X} \compl \S \Slift_X
    &= \Sproj_i \compl \Sswap_X \compl \Sproj_j \compl  \Sswap_{\S X} \compl \S \Slift_X \\
    &= \S \Sproj_i \compl \S \Sproj_j \compl \S \Slift_X \\
    &= \S (\Sproj_i \compl \Sproj_j \compl \Slift) \\
    &= \begin{cases} \S 0 = 0 \text{ if $i \neq j$} \\ \S \Sproj_i \text{ otherwise} \end{cases}
  \end{align*}
  \[ \Sproj_i \compl \Sproj_j \compl \Slift_{\S X} \compl \Sswap_X 
  = \begin{cases} 0 \text{ if $i \neq j$} \\ \Sproj_i \compl \Sswap = \S \Sproj_i 
    \text{ otherwise} \end{cases} \]
  so by joint monicity of the $\Sproj_i \compl \Sproj_j$ we have $\S \Sswap_X \compl 
  \Sswap_{\S X} \compl \S \Slift_X = \Slift_{\S X} \compl \Sswap_X$ and the right diagram 
  holds.
\end{proof}

\begin{theorem} The monad $\monadS $ and 
  the comonad $\comonadS$ form a $\Sswap$-bimonad on
  $\categoryAdd$.
\end{theorem}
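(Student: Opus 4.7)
The strategy is to leverage the fact that \cref{prop:Sswap-distributive} has already established $\Sswap$ as a distributive law in both directions, so what remains is to verify the four compatibility diagrams of \cref{def:bimonad}. All four will be checked using the joint monicity axiom \ref{def:summability-structure-1}: to compare two parallel morphisms landing in $X$, $\S X$, or $\S^2 X$, it suffices to post-compose with each $\Sproj_i$ (or with each $\Sproj_i \compl \Sproj_j$) and reduce to an identity between sums of compositions of projections. Throughout, the computations rely on naturality of $\Sproj_i$, the defining equations $\Sproj_i \compl \Sinj_j = \Kronecker{i}{j} \id$, $\Sproj_i \compl \Slift = \Sinj_i \compl \Sproj_i$, $\Sproj_i \compl \SmonadSum = \sum_{j+k=i} \Sproj_j \compl \Sproj_k$, and $\Sproj_i \compl \Sproj_j \compl \Sswap = \Sproj_j \compl \Sproj_i$, together with \ref{ax:pa} for reindexing.

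The three simple axioms are essentially immediate. For $\Ssum \compl \Sinj_0 = \id$, post-composing with $\Sproj_0$ is unnecessary since $\Ssum = \sum_i \Sproj_i$ and $\Sproj_i \compl \Sinj_0 = \Kronecker{i}{0} \id$, so the sum collapses to $\id$. For the left unit pentagon $\Ssum \compl \SmonadSum = \Ssum \compl \S \Ssum$, expanding both sides via left distributivity yields $\sum_{j,k} \Sproj_j \compl \Sproj_k$ on both sides, after reindexing by $j + k = i$ on the left and using naturality of $\Sproj_i$ on the right. The dual unit pentagon $\Slift \compl \Sinj_0 = \S\Sinj_0 \compl \Sinj_0$ is verified by applying $\Sproj_i \compl \Sproj_j$: both sides reduce to the morphism equal to $\id$ when $i = j = 0$ and to $0$ otherwise.

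The main obstacle is the large compatibility diagram relating $\SmonadSum$ and $\Slift$ through $\Sswap$. Using the comonad axiom $\Slift \ntcomph \Slift = \S\Slift \compl \Slift$ and the monad axiom $\SmonadSum \ntcomph \SmonadSum = \S \SmonadSum \compl \SmonadSum_{\S}$, the right-hand side becomes
\[ \S \SmonadSum \compl \SmonadSum_{\S} \compl \S \Sswap_{\S} \compl \S \Slift_{\S} \compl \Slift \, . \]
I will establish equality with $\Slift \compl \SmonadSum$ by applying $\Sproj_a \compl \Sproj_b$ to both sides. On the left one gets $\Kronecker{a}{b} \Sproj_a \compl \SmonadSum = \Kronecker{a}{b} \sum_{i+j=a} \Sproj_i \compl \Sproj_j$. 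On the right, iteratively using naturality of $\Sproj_a, \Sproj_b$ together with the projection equations for $\SmonadSum$, $\Sswap$, and $\Slift$, each outer $\Sproj$ pulls through, producing after bookkeeping a sum indexed by quadruples $(i_1, j_1, i_2, j_2)$ with $i_1 + j_1 = a$, $i_2 + j_2 = b$, subject to the equalities $i_1 = i_2$ and $j_1 = j_2$ coming from the two diagonalizations $\Slift$ (here $\Sswap$ plays the essential role of swapping the middle pair so that the two diagonal constraints are aligned with the two Cauchy-product summations). The cumulative constraint forces $a = b$, the Kronecker factor emerges, and the remaining sum collapses to $\sum_{i+j=a} \Sproj_i \compl \Sproj_j$, matching the left-hand side.

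The conceptual content is that $\SmonadSum$ implements Cauchy multiplication of generating functions while $\Slift$ implements the diagonal of a bivariate series, and $\Sswap$ exchanges the two indeterminates; the pentagon expresses that forming the Cauchy product and then diagonalizing is the same as diagonalizing each variable, swapping, and forming two component-wise Cauchy products. The only subtle point in the formal verification is the bookkeeping of indices through $\S \Slift_{\S}$ and $\S \Sswap_{\S}$, but joint monicity reduces this bookkeeping to a finite identity between elementary expressions in $\Sproj$'s, which holds by \ref{ax:pa}.
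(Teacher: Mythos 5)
Your overall strategy is the paper's: \cref{prop:Sswap-distributive} disposes of the distributive-law requirements, and the remaining four diagrams of \cref{def:bimonad} are checked by joint monicity of the $\Sproj_i$ (resp.\ $\Sproj_i\compl\Sproj_j$); your treatment of the three small diagrams is correct and essentially identical to the paper's (collapse of $\Ssum\compl\Sinj_0$, the computation $\Ssum\compl\SmonadSum=\Ssum\compl\Ssum$ via \ref{ax:pa} and double distributivity, case analysis under $\Sproj_i\compl\Sproj_j$ for the $\Sinj_0$/$\Slift$ square). The problem is the main diagram. The equation you invoke, ``the comonad axiom $\Slift\ntcomph\Slift=\S\Slift\compl\Slift$'', is not an equation: $\Slift\ntcomph\Slift$ has type $\S^2\naturalTrans\S^4$ while $\S\Slift\compl\Slift$ has type $\S\naturalTrans\S^3$, and more generally the horizontal composites are expanded by the definition \cref{eq:horizontal-composition} alone, not by the (co)associativity axioms. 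Concretely $(\Slift\ntcomph\Slift)_X=\S\S\Slift_X\compl\Slift_{\S X}=\Slift_{\S\S X}\compl\S\Slift_X$, whereas your displayed right-hand side contains $\S\Slift_{\S X}\compl\Slift_{\S X}$, which is a different morphism: its components are $\Sproj_{i_4}\compl\Sproj_{i_3}\compl\Sproj_{i_2}\compl\Sproj_{i_1}\compl\S\Slift_{\S X}\compl\Slift_{\S X}=\Kronecker{i_1}{i_2}\Kronecker{i_2}{i_3}\,\Sproj_{i_4}\compl\Sproj_{i_1}$ (a chain constraint on the first three indices, with the fourth free), while the correct $(\Slift\ntcomph\Slift)_X$ gives $\Kronecker{i_1}{i_2}\Kronecker{i_3}{i_4}\,\Sproj_{i_3}\compl\Sproj_{i_1}$ (two pairwise constraints). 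This is not a harmless notational slip: with your expansion the identity being verified is false in general, since $\Sproj_j\compl\Sproj_i$ applied to your right-hand side gives, for $i$ even, $\Sproj_{j-i/2}\compl\Sproj_{i/2}$ (e.g.\ $\Sproj_3\compl\Sproj_1$ for $(i,j)=(2,4)$), whereas $\Sproj_j\compl\Sproj_i\compl\Slift\compl\SmonadSum=\Kronecker{i}{j}\,\Sproj_i\compl\SmonadSum$ vanishes off the diagonal.

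The index bookkeeping you then describe in words (constraints $i_1+j_1=a$, $i_2+j_2=b$ from the two multiplications, $i_1=i_2$ and $j_1=j_2$ from the two lifts after $\Sswap$ realigns the middle pair, forcing $a=b$ and collapsing the sum to $\sum_{i+j=a}\Sproj_i\compl\Sproj_j$) is exactly the computation for the \emph{correct} right-hand side, and it is what the paper carries out explicitly: it computes $\Sproj_i\compl\Sproj_j\compl(\SmonadSum\ntcomph\SmonadSum)_X$ as a double Cauchy sum of four projections, and shows that $\Sproj_{i_4}\compl\Sproj_{i_3}\compl\Sproj_{i_2}\compl\Sproj_{i_1}\compl\S\Sswap_{\S X}\compl(\Slift\ntcomph\Slift)_X$ equals $\Sproj_{i_2}\compl\Sproj_{i_1}$ when $i_1=i_3$ and $i_2=i_4$, and $0$ otherwise. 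So the fix is to delete the faulty ``reduction'' of the horizontal composites, expand them directly as $\S\SmonadSum_X\compl\SmonadSum_{\S^2X}$ and $\S\S\Slift_X\compl\Slift_{\S X}$, and then run the projection computation you sketched; with that correction your argument coincides with the paper's proof.
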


\begin{proof} The first three diagrams expressing that $\monadS$ and
  $\comonadS$ are a $\Sswap$-bimonad are as follows.
\[ \begin{tikzcd}
	{\S^2} & \S \\
	{\S } & \idfun
	\arrow["\SmonadSum"', from=1-1, to=2-1]
	\arrow["\Ssum"', from=2-1, to=2-2]
	\arrow["{\Ssum \S}", from=1-1, to=1-2]
	\arrow["\Ssum", from=1-2, to=2-2]
\end{tikzcd} \quad
\begin{tikzcd}
	\idfun & \S \\
	\S & {\S^2}
	\arrow["{\Sinj_0}", from=1-1, to=1-2]
	\arrow["\Slift", from=1-2, to=2-2]
	\arrow["{\Sinj_0}"', from=1-1, to=2-1]
	\arrow["{\Sinj_0 \S}"', from=2-1, to=2-2]
\end{tikzcd} \quad 
\begin{tikzcd}
	\idfun & \S \\
	& \idfun
	\arrow["{\Sinj_0}", from=1-1, to=1-2]
	\arrow["\Ssum", from=1-2, to=2-2]
	\arrow[Rightarrow, no head, from=1-1, to=2-2]
\end{tikzcd} \]
The left diagram holds thanks to the computation below that relies on 
\ref{ax:pa} and \cref{prop:double-distributivity}.
\[ \Ssum_X \compl \SmonadSum_X = \sum_{n=0}^{\infty} \left( 
  \sum_{k = 0}^n \Sproj_k \compl \Sproj_{n-k} \right)
= \sum_{(i, j) \in \N^2} \Sproj_i \compl \Sproj_j = 
\left( \sum_{i \in \N} \Sproj_i \right) \compl \left( \sum_{j \in \N} \Sproj_j \right) 
= \Ssum \compl \Ssum \, . \]
The center diagram holds by a straightforward computation using the joint monicity
of the $\Sproj_i \compl \Sproj_j$ and some case analysis. The rightmost diagram 
holds by \ref{ax:unary}. The last diagram is the following, 
where $\Slift \ntcomph \Slift$ and $\SmonadSum \ntcomph \SmonadSum$
use the horizontal composition of natural transformations, see \cref{eq:horizontal-composition}.
\[ \begin{tikzcd}
  {\S \S} & \S & \S \S \\
  \S \S \S \S && \S \S \S \S
  \arrow["\SmonadSum", from=1-1, to=1-2]
  \arrow["\Slift", from=1-2, to=1-3]
  \arrow["{\Slift \ntcomph \Slift}"', from=1-1, to=2-1]
  \arrow["{\S \Sswap \S}"', from=2-1, to=2-3]
  \arrow["{\SmonadSum \ntcomph \SmonadSum}"', from=2-3, to=1-3]
\end{tikzcd} \]
We use the joint monicity of the $\Sproj_i \compl \Sproj_j$.
We check the top path first.
\[ \Sproj_i \compl \Sproj_j \compl \Slift_X \compl \SmonadSum_X
= \begin{cases} \Sproj_i \compl \SmonadSum_X = \sum_{k=0}^i \Sproj_k \compl \Sproj_{i-k}
\text{ if $i=j$} \\ 0 \text{ otherwise.} \end{cases} \]
For the bottom part, we first compute $\Sproj_i \compl \Sproj_j \compl 
(\SmonadSum \ntcomph \SmonadSum)_X$.
\begin{align*}
  \Sproj_i \compl \Sproj_j \compl (\SmonadSum \ntcomph \SmonadSum)_X
  &= \Sproj_i \compl \Sproj_j \compl \S \SmonadSum_X \compl \SmonadSum_{\S \S X} \\
  &= \Sproj_i \compl \SmonadSum_X \compl \Sproj_j \compl \SmonadSum_{\S \S X} \\
  &= \left(\sum_{k=0}^i \Sproj_k \compl \Sproj_{i-k} \right) \compl
  \left( \sum_{l=0}^j \Sproj_l \compl \Sproj_{j-l} \right) \\
  &\sumsub \sum_{k \in \interval{0}{i}, l \in \interval{0}{j}} \Sproj_k \compl \Sproj_{i-k}
  \compl \Sproj_l \compl \Sproj_{j-l}
\end{align*}
Now observe that 
\begin{align*}
  \Sproj_{i_4} \compl \Sproj_{i_3} \compl \Sproj_{i_2} \compl \Sproj_{i_1} \compl
  \S \Sswap_{\S X} \compl (\Slift \ntcomph \Slift)_X
  &= \Sproj_{i_4} \compl \Sproj_{i_3} \compl \Sproj_{i_2} \compl \Sproj_{i_1} \compl
  \S \Sswap_{\S X} \compl \S \S \Slift_X \compl \Slift_{\S X} \\
  &=  \Sproj_{i_4} \compl \Sproj_{i_3} \compl \Sproj_{i_2} \compl
  \Sswap_{\S X} \compl \Sproj_{i_1} \compl \S \S \Slift_X \compl \Slift_{\S X} \\
  &= \Sproj_{i_4} \compl \Sproj_{i_2} \compl \Sproj_{i_3} \compl  \Sproj_{i_1}  \compl
   \S \S \Slift_X  \compl\compl \Slift_{\S X} \\
   &= \Sproj_{i_4} \compl \Sproj_{i_2} \compl  \Slift_X 
   \compl \Sproj_{i_3} \compl  \Sproj_{i_1} \compl \Slift_{\S X} \\
   &= \begin{cases} \Sproj_{i_2} \compl \Sproj_{i_1} \text{ if $i_1 = i_3$ and $i_2 = i_4$}
    \\ 0 \text{ otherwise.} \end{cases}
\end{align*}
Thus,
\begin{align*}
\Sproj_i \compl \Sproj_j \compl (\SmonadSum \ntcomph \SmonadSum)_X
\compl \S \Sswap_{\S X} \compl (\Slift \ntcomph \Slift)_X 
&= \left( \sum_{k \in \interval{0}{i}, l \in \interval{0}{j}} \Sproj_k \compl \Sproj_{i-k}
\compl \Sproj_l \compl \Sproj_{j-l} \right) \compl 
\S \Sswap_{\S X} \compl (\Slift \ntcomph \Slift)_X \\
&\sumsub \sum_{k \in \interval{0}{i}, l \in \interval{0}{j}} \Sproj_k \compl \Sproj_{i-k}
\compl \Sproj_l \compl \Sproj_{j-l} \compl 
\S \Sswap_{\S X} \compl (\Slift \ntcomph \Slift)_X \\
&= \begin{cases} \sum_{k=0}^{i} \Sproj_k \compl \Sproj_{i-l} \text{ if $i = j$}
  \\ 0 \text{ otherwise.} \end{cases}
\end{align*}
We conclude that the diagram commutes.
\end{proof}

\begin{remark}
The following diagram, called the \emph{Yang-Baxter equation} 
commutes.
\[ 
\begin{tikzcd}
	{\S \S \S} & {\S \S \S} & {\S \S \S} \\
	{\S \S \S} & {\S \S \S} & {\S \S \S}
	\arrow["{\Sswap \S}", from=1-1, to=1-2]
	\arrow["{\S \Sswap}"', from=1-1, to=2-1]
	\arrow["{\S \Sswap}", from=1-2, to=1-3]
	\arrow["{\Sswap \S}", from=1-3, to=2-3]
	\arrow["{\Sswap \S}"', from=2-1, to=2-2]
	\arrow["{\S \Sswap}"', from=2-2, to=2-3]
\end{tikzcd} \] 
This is easy to check using the joint monicity of the 
$\Sproj_i \compl \Sproj_j \compl \Sproj_k$.
This means that $\Sswap$ is a \emph{local prebraiding}, following 
the terminology of \cite{Mesablishvili11}. In particular, this 
commutation imply that $\S \S$ can be equipped with a bimonad 
structure by doubling the bimonad $\S$. This 
process is described in section 6.8 of \cite{Mesablishvili11}.
This doubling is very similar to the commutation between 
two derivative operations observed in the coherent differential PCF of 
\cite{Ehrhard22-pcf}.
\end{remark}

\section{$\Sigma$-additivity in models of LL}
\label{sec:summable-resource-category}

We now assume that $\categoryLL$ is a $\Sigma$-additive category equipped with a
$\Sigma$-summability structure $(\S, \vect \Sproj, \Ssum)$. Since
$\categoryLLAdd = \categoryLL$, we write the composition
of $f \in \categoryLL(X, Y)$ with $g \in \categoryLL(Y, Z)$ as 
$g \compl f$. Then as seen in \cref{sec:bimonad}, $\S$ is a bimonad 
on $\categoryLL$. The category $\categoryLL$ is typically a model of LL, 
but needs not be a full-fledged model, so we will try to be very precise 
about the assumptions we use.

\begin{notation} We will use $\kleisliS$ for the Kleisli category of 
  the monad $\monadS$ on $\categoryLL$, and 
  $\kleisliLS$ for the functor $\categoryLL \arrow \kleisliS$ defined as in 
  \cref{sec:monad-dl}.
\end{notation}

This section describes the interaction between the 
$\Sigma$-summability structure and the LL structure of $\categoryLL$.
It explains why this interaction provides a symmetric monoidal 
closed structure to $\kleisliS$, as well as a 
categorical product (it is then shown in 
\cref{sec:cartesian-Taylor} that $\kleisliS$ is a model of LL).

\subsection{Interaction with the monoidal structure}
\label{sec:summability-tensor}
Assume that $\categoryLL$ is a
symmetric monoidal category, see \cref{sec:monoidal-functor}. We write 
$(\tensor, 1, \tensorUnitL, \tensorUnitR, \tensorAssoc, \tensorSym)$ for 
the symmetric monoidal structure. This section details how the sum interacts 
with the monoidal product.

\begin{definition} \label{def:sum-sm}
A symmetric monoidal $\Sigma$-additive category is a category that is both $\Sigma$-additive
and symmetric monoidal, and such that sum distributes over $\tensor$. That is, 
for all indexed families $\family{f_a \in \categoryLL(X_0, Y_0)}$
and $\family<B>[b]{g_b \in \categoryLL(X_1, Y_1)}$ and for all
$f \in \categoryLL(X_0, Y_0)$ and $g \in \categoryLL(X_1, Y_1)$,
  \[ \left( \sum_{a \in A} f_a \right) \sm g \sumsub \sum_{a \in A} f_a \sm g  
  \quad \quad \quad
  f \sm \left( \sum_{b \in B} g_b \right)  \sumsub \sum_{b \in B} f \sm g_i \, .\]
\end{definition}

Similarly to additivity in \cref{prop:additive}, the compatibility between 
the sum and the monoidal product $\tensor$ can be written either as a property of the $\Sigma$-monoid 
structure, or as a property of the $\Sigma$-summability structure.

\begin{definition} \labeltext{($\S \sm$-dist)}{ax:S-sm-dist}
  A $\Sigma$-summability structure satisfies \ref{ax:S-sm-dist} if 
  for all objects $X_0, X_1$ one has
  $X_0 \tensor 0 = 0$, $0 \tensor X_1 = 0$, and both 
  $\sequence{\Sproj_i \sm X_1 \in \categoryLL(\S X_0 \sm X_1, X_0 \sm X_1)}$
  and 
  $\sequence{X_0 \sm \Sproj_i \in \categoryLL(X_0 \sm \S X_1, X_0 \sm X_1)}$
  are summable,  with respective sums $\Ssum \sm X_1$ and $X_0 \sm \Ssum$.
\end{definition}

It follows from \ref{def:summability-structure-2} that  
\ref{ax:S-sm-dist} implies the existence of the following witnesses.
  \begin{equation}
    \begin{split}
      \SstrL_{X_0, X_1} = \Spairing{\Sproj_i \sm X_1} 
  \in \categoryLL(\S X_0 \sm X_1, \S (X_0 \sm X_1)) \text{ such that } 
  \Ssum \compl \SstrL = (\Ssum \sm X_1) \\ 
  \SstrR_{X_0, X_1} = \Spairing{X_0 \sm \Sproj_i} 
  \in \categoryLL(X_0 \sm \S X_1, \S (X_0 \sm X_1)) \text{ such that } 
  \Ssum \compl \SstrR = (X_0 \sm \Ssum)
    \end{split}
  \end{equation}
  They are natural transformations thanks to \cref{prop:Spairing-natural}.

\begin{remark} Because the monoidal product is symmetric, only one of the summability 
  assumption of \ref{ax:S-sm-dist} is necessary, and
  $\SstrR$ can be defined from $\SstrL$ as 
  $\SstrR = \S \smsym \compl \SstrL \compl \smsym$ and vice versa.
\end{remark}

\begin{proposition}  \label{prop:S-sm-dist-equivalent}
  For any $\Sigma$-additive category $\categoryLL$ equipped with a summability structure, 
  the following are equivalent. \begin{enumerate}
  \item $\categoryLL$ is a symmetric monoidal $\Sigma$-additive category;
  \item the left summability structure satisfies \ref{ax:S-sm-dist}.
  \end{enumerate}
\end{proposition}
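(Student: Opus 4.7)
The forward direction $(1) \Rightarrow (2)$ is immediate from the distributivity of $\sm$ over sums. Applying the bifunctorial distributivity of \cref{def:sum-sm} to the empty family yields $X_0 \sm 0 = 0$ and $0 \sm X_1 = 0$. Then, applying right-distributivity to the summable family $\vect \Sproj$ (whose sum is $\Ssum$) gives that $\sequence{\Sproj_i \sm X_1}$ is summable with sum $\Ssum \sm X_1$, and symmetrically for the family $\sequence{X_0 \sm \Sproj_i}$.

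For the backward direction $(2) \Rightarrow (1)$, assume \ref{ax:S-sm-dist}. I need to show that for any summable family $\family{f_a \in \categoryLL(X_0, Y_0)}$ and any $g \in \categoryLL(X_1, Y_1)$, the family $\family{f_a \sm g}$ is summable with sum $(\sum_a f_a) \sm g$, and symmetrically on the other side. By \cref{prop:summability-alt}, fix an injection $\phi : A \injection \N$ and a witness $\Sfamily{\vect f}{\phi}$ such that $\sum_a f_a = \Ssum \compl \Sfamily{\vect f}{\phi}$. The candidate witness for the tensored family is
\[ h \defEq \SstrL_{Y_0, Y_1} \compl (\Sfamily{\vect f}{\phi} \sm g) \in \categoryLL(X_0 \sm X_1, \S(Y_0 \sm Y_1))\,. \]
By the defining equation $\Sproj_i \compl \SstrL = \Sproj_i \sm Y_1$ and bifunctoriality of $\sm$, one computes $\Sproj_i \compl h = (\Sproj_i \compl \Sfamily{\vect f}{\phi}) \sm g$, which equals $f_a \sm g$ when $i = \phi(a)$ for some $a$, and equals $0 \sm g$ otherwise. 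The axiom gives $0 \sm X_1 = 0$ on identities, and bifunctoriality extends this to $0 \sm g = 0$ for any morphism $g$. Hence $h$ is a witness for $\Famact{\phi} \family{f_a \sm g}$, so \cref{prop:reindexing} yields summability of $\family{f_a \sm g}$ with sum $\Ssum \compl h = (\Ssum \sm Y_1) \compl (\Sfamily{\vect f}{\phi} \sm g) = (\sum_a f_a) \sm g$. The symmetric statement replaces $\SstrL$ by $\SstrR$ and proceeds identically.

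The main obstacle is purely organizational: translating between the quantified distributivity statement of \cref{def:sum-sm} (closed under arbitrary summable families on either side of $\sm$) and the much more parsimonious summability requirement of \ref{ax:S-sm-dist} (summability of two specific universal families) via the witness characterization of summability. Once the passage from the generic $\vect f$ to its canonical witness $\Sfamily{\vect f}{\phi}$ is carried out, both formulations collapse onto the same bifunctoriality computation and the equivalence follows.
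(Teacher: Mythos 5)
Your proof is correct and follows essentially the same route as the paper: the forward direction from summability of the $\Sproj_i$ (plus the empty-family case for the zeros), and the backward direction by exhibiting $\SstrL \compl (\Sfamily{\vect f}{\phi} \sm g)$ as a witness and computing the sum via $\Ssum \compl \SstrL = \Ssum \sm X_1$. The only cosmetic difference is that the paper tensors the witness with the identity $X_1$ and leaves the general morphism $g$ to $\Sigma$-additivity of $Y_0 \sm g$, whereas you build the witness with $g$ directly; both amount to the same computation.
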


\begin{proof} The proof is very similar to that of \cref{prop:additive}. The implication 
  $(1) \Rightarrow (2)$ is a direct consequence of the summability of the $\Sproj_i$.
  For $(2) \Rightarrow (1)$, assume that $\vect f = \family{f_a
  \in \category(X, Y)}$ is summable. 
  Then, there is an injection $\phi : A \arrow \N$ and 
  $\Sfamily{\vect f}{\phi} \in \category(X, \S Y)$
  such that $\sequence{\Sproj_i \compl \Sfamily{\vect f}{\phi}} 
  = \Famact{\phi} \vect f$.
  Then we can check that $\SstrL \compl (\Sfamily{\vect f}{\phi} \tensor X_1)$ 
  is a witness for $\Famact{\phi} \family{f_a \tensor X_1}$, using the 
  fact that $0 \tensor X_1 = 0$. 
  So by \cref{prop:summability-alt},
  $\family{f_a \tensor X_1}$ is summable with sum 
  \[ \Ssum \compl \SstrL \compl (\Sfamily{\vect f}{\phi} \tensor X_1)
  = (\Ssum \compl \Sfamily{\vect f}{\phi}) \tensor X_1 
  = \left( \sum_{a \in A} f_a \right) \tensor X_1 \, . \qedhere \]
\end{proof}

We now assume that $\categoryLL$ 
satisfies one of the equivalent assumptions 
of \cref{prop:S-sm-dist-equivalent}.
\begin{corollary} For all indexed families 
  $\family{f_a \in \categoryLL(X_0, Y_0)}$
  and $\family<B>[b]{g_b \in \categoryLL(X_1, Y_1)}$,
  \[ \left(\sum_{a \in A} f_a \right) \sm \left(\sum_{b \in B} g_b \right) \sumsub
  \sum_{(a, b) \in A \times B} (f_a \sm g_b) \, .
  \]
\end{corollary}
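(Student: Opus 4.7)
The plan is to mimic the proof of \cref{prop:double-distributivity}, replacing the use of left distributivity of composition and additivity of the right-side morphisms by the two-sided distributivity of $\sm$ over sums which is available in any symmetric monoidal $\Sigma$-additive category (and then invoking \ref{ax:pa} to merge the nested sums into a single sum indexed by $A \times B$).

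Concretely, I would first apply left distributivity of $\sm$ over sums to move the outer $A$-sum past the tensor, obtaining
\[ \left(\sum_{a \in A} f_a\right) \sm \left(\sum_{b \in B} g_b\right) \sumsub \sum_{a \in A} \left( f_a \sm \sum_{b \in B} g_b \right) \, .\]
Next, for each $a \in A$, right distributivity applied with the single morphism $f_a$ on the left gives
\[ f_a \sm \sum_{b \in B} g_b \sumsub \sum_{b \in B} (f_a \sm g_b) \, ,\]
so that composing these two defined-implications yields
\[ \left(\sum_{a \in A} f_a\right) \sm \left(\sum_{b \in B} g_b\right) \sumsub \sum_{a \in A} \sum_{b \in B} (f_a \sm g_b) \, . \]
Finally, since the fibers $\{a\} \times B$ form a partition of $A \times B$ indexed by the countable set $A$, the partition associativity axiom \ref{ax:pa} gives
\[ \sum_{a \in A} \sum_{b \in B} (f_a \sm g_b) \sumiff \sum_{(a,b) \in A \times B} (f_a \sm g_b)\,, \]
which is the desired conclusion.

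There is no real obstacle: the argument is a direct transcription of \cref{prop:double-distributivity}, and is in fact slightly simpler because in the symmetric monoidal $\Sigma$-additive setting distributivity holds on both sides of $\sm$ without any auxiliary additivity hypothesis, whereas for composition one had to assume that the morphisms being summed on the left were $\Sigma$-additive. The only subtlety worth checking is that we really have countably many fibers in the partition so that \ref{ax:pa} applies, which is immediate since $A$ is assumed countable (as are all indexing sets in this framework).
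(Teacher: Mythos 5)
Your proof is correct and is essentially the paper's argument: the paper disposes of the corollary in one line by citing \cref{prop:double-distributivity}, whose proof is exactly the two-step distributivity-plus-\ref{ax:pa} computation that you carry out with $\sm$ in place of composition. Nothing is missing; your only addition is making explicit that the two-sided distributivity of $\sm$ over sums (from the definition of a symmetric monoidal $\Sigma$-additive category) plays the role that left distributivity together with $\Sigma$-additivity of the right-hand morphisms plays in the composition case.
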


\begin{proof} It follows directly from
  double distributivity (\cref{prop:double-distributivity}).
\end{proof}

\begin{theorem} The natural transformations $\SstrR$ is
  a strength for the monad $\monadS$ (see \cref{def:strength}).
  Furthermore, the following diagram commutes.
  \[ \begin{tikzcd}
    {\S (\S X \tensor Y)} & {\S X \tensor \S Y} & {\S (X \tensor \S Y)} \\
    {\S^2 (X \tensor Y)} && {\S^2 (X \tensor Y)}
    \arrow["{\SstrR_{\S X, Y}}"', from=1-2, to=1-1]
    \arrow["{\S \SstrL_{X, Y}}"', from=1-1, to=2-1]
    \arrow["{\SstrL_{X, \S Y}}", from=1-2, to=1-3]
    \arrow["{\S \SstrR_{X, Y}}", from=1-3, to=2-3]
    \arrow["{\Sswap_{X \tensor Y}}"', from=2-1, to=2-3]
  \end{tikzcd} \]
  This implies that $\monadS$ equipped with $\SstrR$ is a commutative
  monad, see \cref{def:commutative-monad}.
\end{theorem}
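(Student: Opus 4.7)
The plan is to verify the four strength axioms for $\SstrR$ (compatibility with the left unitor $\tensorUnitL$, compatibility with the associator $\tensorAssoc$, compatibility with the monad unit $\Sinj_0$, compatibility with the monad multiplication $\SmonadSum$), and then the commutativity axiom relating $\SstrR$ and $\SstrL$ via $\Sswap$. In each case, the central tool is the joint monicity of $\vect\Sproj$ (or of the composites $\Sproj_i \compl \Sproj_j$ for diagrams landing in $\S^2$), reducing every equation to one we already have in hand. The two defining identities
\[ \Sproj_i \comp \SstrL_{X_0, X_1} = \Sproj_i \tensor X_1, \qquad \Sproj_i \comp \SstrR_{X_0, X_1} = X_0 \tensor \Sproj_i \]
together with the naturality of the $\Sproj_i$ and the naturality of the monoidal coherences should do all the work.

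Concretely, for the compatibility with $\tensorUnitL$, post-composing with $\Sproj_i$ turns both sides of $\S \tensorUnitL \comp \SstrR_{1, X} = \tensorUnitL_{\S X}$ into $\tensorUnitL_X \comp (1 \tensor \Sproj_i)$ by naturality of $\tensorUnitL$. For the associator axiom, both paths reduce under $\Sproj_i$ to $\tensorAssoc_{X, Y, Z} \comp ((X \tensor Y) \tensor \Sproj_i)$, again by naturality. For the monad unit axiom $\SstrR \comp (X \tensor \Sinj_0) = \Sinj_0$, both sides evaluate under $\Sproj_i$ to $\Kronecker{i}{0}\,\id_{X \tensor Y}$. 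The monad multiplication axiom $\SstrR \comp (X \tensor \SmonadSum) = \SmonadSum \comp \S\SstrR \comp \SstrR$ is the most delicate: post-composing with $\Sproj_k$ and using the defining relation of $\SmonadSum$ expands both sides into the sum $\sum_{i+j=k} X \tensor (\Sproj_i \comp \Sproj_j)$, where the right-hand side requires commuting $\Sproj_i$ through $\SstrR$, $\Sproj_j$ through $\SstrR$, and pulling the finite sum out of $X \tensor (-)$ using the distributivity of $\tensor$ over sums (\cref{def:sum-sm}, equivalent to \ref{ax:S-sm-dist} by \cref{prop:S-sm-dist-equivalent}).

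For the commutativity diagram, joint monicity of $\Sproj_i \compl \Sproj_j$ reduces it to a single symmetric computation. Along the left-then-down path, naturality of $\Sproj_j$ gives
\[ \Sproj_i \comp \Sproj_j \comp \S \SstrL_{X, Y} \comp \SstrR_{\S X, Y} = \Sproj_i \comp \SstrL_{X, Y} \comp \Sproj_j \comp \SstrR_{\S X, Y} = (\Sproj_i \tensor Y) \comp (\S X \tensor \Sproj_j) = \Sproj_i \tensor \Sproj_j. \]
Along the right-then-down path, naturality of $\Sproj_i$ together with the equation $\Sproj_j \comp \Sproj_i \comp \Sswap = \Sproj_i \comp \Sproj_j$ from \cref{thm:bimonad-structure} yields the same result $\Sproj_i \tensor \Sproj_j$. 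Hence the diagram commutes, and by \cref{def:commutative-monad} (with $\SstrR$ as left strength and the induced right strength $\S \tensorSym \comp \SstrR \comp \tensorSym$, which one checks coincides with $\SstrL$ using joint monicity once more), $\monadS$ is a commutative monad.

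The only real obstacle is the monad multiplication axiom, since it is the only axiom in which sums reappear after the reduction to projections; everywhere else the argument is a completely mechanical use of joint monicity and naturality. The multiplication case is handled cleanly thanks to \ref{ax:S-sm-dist}, which guarantees exactly the compatibility of $\tensor$ with countable sums that is needed to move the $\sum_{i+j=k}$ inside the tensor.
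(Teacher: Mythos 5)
Your proof is correct and follows essentially the same route as the paper: each strength axiom is reduced by joint monicity of the $\Sproj_i$ to naturality computations (with $X \tensor 0 = 0$ and distributivity of $\tensor$ over sums from \ref{ax:S-sm-dist} handling the unit and multiplication cases, exactly as in the paper), and the displayed square is checked by joint monicity of the $\Sproj_i \compl \Sproj_j$, yielding $\Sproj_j \tensor \Sproj_i$ on both paths. The only point you leave implicit is the passage from that square to the commutativity diagram of \cref{def:commutative-monad}: one must still post-compose both legs with $\SmonadSum$ and use $\SmonadSum_{X \tensor Y} \compl \Sswap_{X \tensor Y} = \SmonadSum_{X \tensor Y}$ (immediate by joint monicity), which is precisely how the paper concludes; your identification of the symmetry-conjugate of $\SstrR$ with $\SstrL$ agrees with the paper's earlier remark and completes the argument.
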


\begin{proof} Let us check that $\SstrR$ is a left strength. We first show the 
  compatibility with the monoidal structure.
  \[ \begin{tikzcd}
    {1 \tensor \S X} & {\S(1 \tensor X)} \\
    & {\S X}
    \arrow["\SstrR", from=1-1, to=1-2]
    \arrow["{\S \tensorUnitL_X}", from=1-2, to=2-2]
    \arrow["{\tensorUnitL_{\S X}}"', from=1-1, to=2-2]
  \end{tikzcd} \quad 
  \begin{tikzcd}
    {(X \tensor Y) \tensor \S Z} && {\S ((X \tensor Y) \tensor Z)} \\
    {X \tensor (Y \tensor \S Z)} & {X \tensor \S (Y \tensor Z)} & {\S (X \tensor (Y \tensor Z))}
    \arrow["{\tensorAssoc_{X,Y,\S Z}}"', from=1-1, to=2-1]
    \arrow["{X \tensor \SstrR_{Y,Z}}"', from=2-1, to=2-2]
    \arrow["{\SstrR_{X, Y \tensor Z}}"', from=2-2, to=2-3]
    \arrow["{\SstrR_{X \tensor Y, Z}}", from=1-1, to=1-3]
    \arrow["{\S \tensorAssoc_{X, Y, Z}}", from=1-3, to=2-3]
  \end{tikzcd} \]
  The two diagrams above are just routinely checked by joint monicity of the 
  $\Sproj_i$, their naturality (recall that $\categoryLL = \categoryLLAdd$) 
  and the naturality of $\tensorUnitL$ and $\tensorAssoc$. Then we show the 
  compatibility with the monad structure:
  \[ \begin{tikzcd}
    {X \tensor Y} \\
    {X \tensor \S Y} & {\S (X \tensor Y)}
    \arrow["{X \tensor \Sinj_0}"', from=1-1, to=2-1]
    \arrow["\SstrR"', from=2-1, to=2-2]
    \arrow["{\Sinj_0}", from=1-1, to=2-2]
  \end{tikzcd} \quad 
  \begin{tikzcd}
    {X \tensor \S^2 Y} & {\S (X \tensor \S Y)} & {\S^2 (X \tensor Y)} \\
    {X \tensor \S Y} && {\S (X \tensor Y)}
    \arrow["{X \tensor \SmonadSum_Y}"', from=1-1, to=2-1]
    \arrow["{\SstrR_{X, \S Y}}", from=1-1, to=1-2]
    \arrow["{\S \SstrR_{X,Y}}", from=1-2, to=1-3]
    \arrow["{\SmonadSum_{X \tensor Y}}", from=1-3, to=2-3]
    \arrow["{\SstrR_{X, Y}}"', from=2-1, to=2-3]
  \end{tikzcd} \] 
  We also use the joint monicity of the $\Sproj_i$.
  The left diagram is a consequence of the equation $X \tensor 0 = 0$:
  \[ \Sproj_i \compl \SstrR_{X, Y} \compl (X \tensor \Sinj_0)
  = X \tensor (\Sproj_i \compl \Sinj_0)
  = \begin{cases} X \tensor 0 = 0 \text{ if $i \neq 0$} \\
    X \tensor Y \text{ otherwise}
  \end{cases} 
  = \Sproj_i \compl \Sinj_0 \, .\]
  The right diagram is a consequence of the distributivity of the sum over $\tensor$:
  \[ \Sproj_i \compl \SmonadSum_{X \tensor Y} \compl \S \SstrR_{X, Y}
  \compl \SstrR_{X, \S Y} 
  = \sum_{k=0}^i \Sproj_{k} \compl \Sproj_{i-k} \compl \S \SstrR_{X, Y}
  \compl \SstrR_{X, \S Y} \\
  = \sum_{k=0}^i \Sproj_{k} \compl \SstrR_{X, Y} \compl \Sproj_{i-k}
  \compl \SstrR_{X, \S Y} \\
  = \sum_{k=0}^i (X \tensor \Sproj_k \compl \Sproj_{i-k}) \]
  \[ \Sproj_i \compl \SstrR \compl (X \tensor \SmonadSum_Y) =
  (X \tensor \Sproj_i) \compl (X \tensor \SmonadSum_Y) =
  X \tensor (\sum_{k=0}^i \Sproj_k \compl \Sproj_{i-k}) \] 
  So the monad $\monadS$ is strong. Finally, saying that the monad is commutative 
  means that the following diagram commutes.
  \begin{equation} \label{eq:commutative-strength}
  \begin{tikzcd}
	  {\S (\S X \tensor Y)} & {\S X \tensor \S Y} & {\S (X \tensor \S Y)} \\
	  {\S^2 (X \tensor Y)} & {\S (X \tensor Y)} & {\S^2 (X \tensor Y)}
	  \arrow["{\SstrR_{\S X, Y}}"', from=1-2, to=1-1]
	  \arrow["{\S \SstrL_{X, Y}}"', from=1-1, to=2-1]
	  \arrow["{\SmonadSum_{X \tensor Y}}"', from=2-1, to=2-2]
	  \arrow["{\SstrL_{X, \S Y}}", from=1-2, to=1-3]
	  \arrow["{\S \SstrR_{X, Y}}", from=1-3, to=2-3]
	  \arrow["{\SmonadSum_{X \tensor Y}}", from=2-3, to=2-2]
  \end{tikzcd}
  \end{equation}
  This is a consequence of the fact that $\SmonadSum_X \compl \Sswap_X 
  = \SmonadSum_X$ and of a stronger property, which as announced is the commutation 
  of 
  \[ \begin{tikzcd}
    {\S (\S X \tensor Y)} & {\S X \tensor \S Y} & {\S (X \tensor \S Y)} \\
    {\S^2 (X \tensor Y)} && {\S^2 (X \tensor Y)}
    \arrow["{\SstrR_{\S X, Y}}"', from=1-2, to=1-1]
    \arrow["{\S \SstrL_{X, Y}}"', from=1-1, to=2-1]
    \arrow["{\SstrL_{X, \S Y}}", from=1-2, to=1-3]
    \arrow["{\S \SstrR_{X, Y}}", from=1-3, to=2-3]
    \arrow["{\Sswap_{X \tensor Y}}"', from=2-1, to=2-3]
  \end{tikzcd} \]
  This is proven by the joint monicity of the 
  $\Sproj_i \compl \Sproj_j$:
  \[ \Sproj_i \compl \Sproj_j \compl \Sswap \compl \S \SstrL \compl \SstrR
   = \Sproj_j \compl \Sproj_i \compl \S \SstrL \compl \SstrR
   = \Sproj_j \compl \SstrL \compl \Sproj_i \compl \SstrR 
   = (\Sproj_j \tensor Y) \compl (\S X \tensor \Sproj_j) 
   = \Sproj_j \tensor \Sproj_i \] 
   \[ \Sproj_i \compl \Sproj_j \compl \S \SstrR \compl \SstrL 
   = \Sproj_i \compl \SstrR \compl \Sproj_j \compl \SstrL 
   = (X \tensor \Sproj_i) \compl (\Sproj_j \tensor \S Y)
   = \Sproj_j \tensor \Sproj_i\qedhere \]
\end{proof}

\begin{remark} \label{rem:extension-sm-KleisliS}
As discussed in \cref{sec:commutative-monad},
the commutative monad $\monadS$ is then
canonically a lax symmetric monoidal monad in the sense of \cref{def:monoidal-monad}. 
The natural transformation
$\Sdist_{X_0,X_1} \in \categoryLL(\S X_0 \sm \S X_1, \S(X_0 \sm X_1))$ 
is defined as the diagonal of 
the diagram in \cref{eq:commutative-strength}.
As stated in \cref{thm:sm-extension}, this implies that $\kleisliS$
(the Kleisli category of the monad $\monadS$ on $\categoryLL$) 
is a symmetric monoidal category. The monoidal product of 
$f \in \kleisliS(X_0, Y_0)$ with $g \in \kleisliS(X_1, Y_1)$ is 
defined as the following composition.
\[
\begin{tikzcd}
  f \extension{\tensor} g \defEq 
	{X_0 \tensor X_1} & {\S X_0 \tensor \S X_1} & {\S (X_0 \tensor X_1)}
	\arrow["{f \tensor g}", from=1-1, to=1-2]
	\arrow["{\Sdist_{X_0, X_1}}", from=1-2, to=1-3]
\end{tikzcd} \] 
and $\extension{\tensor}$ is an extension of $\tensor$ to $\kleisliS$.
Observe that 
\[ \Sdist_{X_0,X_1} = \Spairing{\sum_{k=0}^{i} \Sproj_k \sm \Sproj_{i-k}} \, . \]
Let $f_i = \Sproj_i \compl f$ and $g_j = \Sproj_j \compl g$. Then 
the equation above implies that
\[ f \extension{\tensor} g
= \Spairing{\sum_{k=0}^i f_k \sm g_{i-k}} \, . \] 
Coming back to our power series notation
of \cref{sec:bimonad}, this equation reads
\[ \left(\psum f_i t^i\right) \extension{\tensor} \left(\psum[j] g_j t^j\right) 
= \psum[k] (\sum_{i+j = k} f_i \tensor g_j) t^k \] 
which is again a Cauchy product of power series, where the multiplication of scalars is now
tensor product instead of composition. 
\end{remark}

\subsection{Interaction with the closed structure}

\label{sec:summability-closure-LL}

We now assume that $\categoryLL$ is closed with respect to its monoidal
product.
This means that any pair of objects $X$, $Y$ of $\categoryLL$ has an
internal hom, which is a pair $(X \linarrow Y, \ev_X)$ where
$\ev_X \in \categoryLL((X \linarrow Y) \tensor X, Y)$, characterized
by the following universal property: for any
$f \in \categoryLL(Z \tensor X, Y)$, there is a unique
$\cur_X(f) \in \categoryLL(Z, X \linarrow Y)$ (called the Curry
transpose of $f$) such that
  \[ \ev_X (\cur_X(f) \tensor X) = f \, .\] 
 This assumption implies that
 $\cur_X: \categoryLL(Z \tensor X, Y) \arrow \categoryLL(Z, X \linarrow Y)$
  is a bijection whose inverse is given by
  \[ \uncur_X (g) = \ev_X \compl (g \tensor X) \, . \]
  We label $\cur_X$ and $\ev_X$ with the object $X$ (as opposed to the 
  objects $Z$ and $Y$ that are always kept implicit) because this choice makes 
  some situations clearer, but 
  we will often keep the object $X$ itself implicit and simply write $\cur$ and $\ev$.
  
  Remember that if \(f\in\cL(X_2,X_1)\) and \(g\in\cL(Y_1,Y_2)\), and if the
  pairs \((X_i,Y_i)\) have internal homs 
  $(X_i \linarrow Y_i, \Evlin_{X_i})$ for \(i=1,2\) then it is
  possible to define \(f\Limpl g\in\cL(X_1\Limpl Y_1,X_2\Limpl Y_2)\),
  turning \(\Limpl\) into a functor \(\Op\cL\times\cL\to\cL\).
  Explicitly,
  \begin{equation} 
    f\Limpl g =\Curlin_{X_2}(g\Compl\Evlin_{X_1}\Compl((X_1\Limpl Y_1)\Times f)) \, .
  \end{equation}
  Intuitively, $f \linarrow g$ maps an element $h \in X_1 \linarrow Y_1$ 
  to $g \compl h \compl f \in X_2 \linarrow Y_2$.

  If $X$ is such that an internal hom of $X$, $Y$ exists for all $Y$, then there 
  is an adjunction 
  $\_ \tensor X \dashv X \linarrow \_$,
  of unit $\cur_X(\id_{Y \tensor X}) \in \categoryLL(Y, X \linarrow (Y \tensor X))$ and 
  co-unit $\ev_X \in \categoryLL((X \linarrow Y) \tensor X, Y)$. Explicitly,
  the functor $X \linarrow \_$ maps a morphism $f \in \categoryLL(Y, Z)$ to 
  $\cur_X(f \compl \ev_X) \in \categoryLL(X \linarrow Y, X \linarrow Z)$.
  The bijections $\cur_X$ and $\uncur_X$ are natural with respect to this functor:
  \begin{equation*} \label{eq:cur-natural}
    \begin{split}
    \cur_X(f \compl g \compl (h \tensor X)) &= (X \linarrow f) \compl \cur_X(g) \compl h \\
    \uncur_X((X \linarrow f) \compl g \compl h) &= f \compl \uncur_X(g) \compl (h \tensor X)
    \end{split}
  \end{equation*}

\begin{definition} \label{def:sum-closure}
 A symmetric monoidal closed $\Sigma$-additive category $\categoryLL$ is a symmetric 
 monoidal $\Sigma$-additive category which is closed with respect to its symmetric monoidal 
 structure and such 
 that the sum is compatible with the internal homs: 
 for any family $\family{f_a \in \categoryLL(X, Y)}$, 
 \[ \sum_{a \in A} \cur(f_a) 
 \sumiff \cur \left(\sum_{a \in A} f_a \right) \, .\]
\end{definition}

This definition is quite intuitive but contains redundant information so 
and can be pruned out.
\begin{proposition} \label{prop:sum-closure}
  For any symmetric monoidal $\Sigma$-additive category $\categoryLL$ that is closed, 
  \[ \sum_{a \in A} \cur(f_a) \sumsub \cur \left(\sum_{a \in A} f_a \right) \, .\]
  Consequently, the following assertions are equivalent.
  \begin{enumerate}
  \item $\categoryLL$ is a symmetric monoidal closed $\Sigma$-additive category;
  \item for any summable family $\family{f_a \in \categoryLL(Z \tensor X, Y)}$, 
  $\family{\cur(f_a)}$ is summable.
  \end{enumerate}
\end{proposition}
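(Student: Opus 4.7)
The plan is to first establish the inclusion $\sum_{a \in A} \cur(f_a) \sumsub \cur(\sum_{a \in A} f_a)$ by applying the inverse bijection $\uncur$ to a hypothetically defined left-hand side and tracking summability through it. Assume that the family $\family{\cur(f_a) \in \categoryLL(Z, X \linarrow Y)}$ is summable. Since $\categoryLL$ is a symmetric monoidal $\Sigma$-additive category, the sum distributes over $\tensor$ (\cref{def:sum-sm}), so tensoring on the right with $X$ yields
\[ \left(\sum_{a \in A} \cur(f_a) \right) \tensor X \sumsub \sum_{a \in A} (\cur(f_a) \tensor X) \, . \]
Then left distributivity of composition over sums gives
\[ \ev_X \compl \left( \left( \sum_{a \in A} \cur(f_a) \right) \tensor X \right) \sumsub \sum_{a \in A} \bigl( \ev_X \compl (\cur(f_a) \tensor X) \bigr) = \sum_{a \in A} f_a \, , \]
where the last equality is the defining property of $\cur$. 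In particular the right-hand family is summable, and its sum coincides with $\uncur(\sum_a \cur(f_a))$. Applying $\cur$ and using $\cur \comp \uncur = \id$ then yields $\cur(\sum_{a \in A} f_a) = \sum_{a \in A} \cur(f_a)$, which proves the claimed $\sumsub$.

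For the equivalence, the implication $(1) \Rightarrow (2)$ is immediate: if $\categoryLL$ is a symmetric monoidal closed $\Sigma$-additive category, then by \cref{def:sum-closure} we have $\sum_a \cur(f_a) \sumiff \cur(\sum_a f_a)$, so summability of $\family{f_a}$ transfers to summability of $\family{\cur(f_a)}$. For $(2) \Rightarrow (1)$, we need to promote the $\sumsub$ already established into the $\sumiff$ of \cref{def:sum-closure}. The $\sumsub$ direction is precisely the first part of the statement. Conversely, if $\cur(\sum_a f_a)$ is defined, that is if $\family{f_a}$ is summable, then by (2) the family $\family{\cur(f_a)}$ is summable, so $\sum_a \cur(f_a)$ is defined and equals $\cur(\sum_a f_a)$ by the first part applied once more.

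I do not expect any real obstacle: everything reduces to unfolding the definition of $\uncur$, using that sums commute with $\tensor$ on either side (granted by \cref{def:sum-sm}) and with composition on the left (granted by left $\Sigma$-additivity), together with the bijection $\cur \comp \uncur = \id$. The only minor point to keep in mind is that the $\sumsub$ notation is asymmetric in the sense of \cref{notation:sum-def}, so one must argue summability in the correct direction at each step rather than freely swapping summable and summable families.
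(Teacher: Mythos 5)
Your proof is correct and follows essentially the same route as the paper: apply $\uncur$ to $\sum_{a \in A} \cur(f_a)$, push the sum through $\tensor$ and through the composition with $\ev$, apply $\cur$ back to get the $\sumsub$, and deduce the equivalence exactly as you describe. One small label fix: the step $\ev \compl \sum_{a \in A}(\cur(f_a) \tensor X) \sumsub \sum_{a \in A} \bigl(\ev \compl (\cur(f_a) \tensor X)\bigr)$ is justified by the $\Sigma$-additivity of $\ev$ (every morphism is $\Sigma$-additive since $\categoryLL$ is assumed $\Sigma$-additive), not by left distributivity, which concerns sums in the post-composed factor.
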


\begin{proof}
 This is a consequence of additivity, distributivity of sums over $\tensor$, and of the 
 fact that $\cur$ is a bijection: we have
 \[ \cur^{-1} \left( \sum_{a \in A} \cur(f_a) \right) 
 = \ev \compl \left( \sum_{a \in A} \cur(f_a) \tensor X \right)
 \sumsub \sum_{a \in A} \ev \compl (\cur(f_a) \tensor X) 
 = \sum_{a \in A} f_a \] 
 and applying $\cur$ on this equality gives that 
 $\sum_{a \in A} \cur(f_a) 
 \sumsub \cur \left(\sum_{a \in A} f_a \right)$. The equivalence 
 directly follows.
\end{proof}

We now assume that $\categoryLL$ is equipped with a summability
structure that satisfies \ref{ax:S-sm-dist}. Again, the compatibility
between sums and the monoidal closedness can be written as a property of the
summability structure.

\begin{definition} \labeltext{($\S\tensor$-fun)}{ax:S-fun}
  The $\Sigma$-summability structure satisfies \ref{ax:S-fun} 
  if $\sequence{X \linarrow \Sproj_i}$ is summable. 
\end{definition}

Contrarily to \cref{prop:additive,prop:S-sm-dist-equivalent},
assumptions such as $X \linarrow 0 = 0$ and $\cur(0) = 0$ are not
needed here: they result from \cref{prop:sum-closure}.

\begin{proposition} \label{prop:S-fun-equivalent}
  The following conditions are equivalent: \begin{enumerate}
    \item the summability structure satisfies \ref{ax:S-fun};
    \item if $\family{f_a \in \categoryLL(X, Y)}$ is summable then
    $\family{X \linarrow f_a}$ is summable;
    \item if $\family{f_a \in \categoryLL(Z \tensor X, Y)}$ 
    is summable then $\family{\cur(f_a)}$ is summable;
    \item $\categoryLL$ is a symmetric monoidal closed $\Sigma$-additive category.
  \end{enumerate}
\end{proposition}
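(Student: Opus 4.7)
The plan is to prove the cycle $(2) \Rightarrow (1) \Rightarrow (3) \Rightarrow (2)$ together with the equivalence $(3) \Leftrightarrow (4)$, which is essentially the content of \cref{prop:sum-closure}. Before starting this cycle, I would record the observation that in any symmetric monoidal $\Sigma$-additive category that is closed, applying \cref{prop:sum-closure} to the empty family yields $\cur(0) = 0$, and consequently $X \linarrow 0 = \cur(0 \comp \ev) = \cur(0) = 0$ (the composition $0 \comp \ev = 0$ follows by left distributivity applied to the empty sum). This fact will silently be used later.

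The implication $(2) \Rightarrow (1)$ is immediate, by applying $(2)$ to the summable family $\sequence{\Sproj_i}$ whose witness is $\id_{\Sfun X}$. The implication $(3) \Rightarrow (2)$ is also short: given a summable family $\family{f_a \in \categoryLL(X, Y)}$, left distributivity of sums over composition with the fixed morphism $\ev_X$ yields summability of $\family{f_a \comp \ev_X \in \categoryLL((X\linarrow X) \tensor X, Y)}$, so condition $(3)$ gives summability of $\family{\cur(f_a \comp \ev_X)} = \family{X \linarrow f_a}$.

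The heart of the proof lies in $(1) \Rightarrow (3)$, and this is where the main technical obstacle sits. Given a summable family $\family{f_a \in \categoryLL(Z \tensor X, Y)}$, by \cref{prop:summability-alt} there exists an injection $\phi \colon A \injection \N$ and a morphism $\Sfamily{\vect f}{\phi} \in \categoryLL(Z \tensor X, \Sfun Y)$ witnessing $\Famact{\phi}{\vect f}$. Using the assumption $(1)$ we form the witness $\Spairing{X \linarrow \Sproj_i} \in \categoryLL(X \linarrow \Sfun Y, \Sfun(X \linarrow Y))$, and I would define the candidate witness
\[
  \Spairing{X \linarrow \Sproj_i} \comp \cur\!\bigl(\Sfamily{\vect f}{\phi}\bigr) \in \categoryLL\bigl(Z, \Sfun(X \linarrow Y)\bigr).
\]
Postcomposing with $\Sproj_i$ and using the naturality identity $(X \linarrow f) \comp \cur(g) = \cur(f \comp g)$, this becomes $\cur(\Sproj_i \comp \Sfamily{\vect f}{\phi})$, which equals $\cur(f_a)$ when $i = \phi(a)$ and $\cur(0) = 0$ otherwise. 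Hence this morphism witnesses $\Famact{\phi}{\family{\cur(f_a)}}$, and \cref{prop:summability-alt} gives the desired summability.

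Finally, $(3) \Leftrightarrow (4)$ is exactly the equivalence already recorded in \cref{prop:sum-closure}, so no additional work is needed there. The only subtle point of the whole argument is the use of $\cur(0) = 0$ in the verification of the witness, which is why I would bootstrap it at the start via the empty family: without it, the zero indices outside of $\phi(A)$ would not line up with the definition of a witness in the sense of \cref{prop:summability-alt}.
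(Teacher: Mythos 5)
Your proof is correct and relies on the same ingredients as the paper's: the witness/reindexing argument of \cref{prop:summability-alt} combined with naturality of $\cur$, left distributivity, the fact $\cur(0)=0$ (which the paper likewise notes follows from \cref{prop:sum-closure}), and \cref{prop:sum-closure} itself for $(3)\Leftrightarrow(4)$. The only difference is organizational — you close the cycle $(2)\Rightarrow(1)\Rightarrow(3)\Rightarrow(2)$, proving $(1)\Rightarrow(3)$ directly by currying the witness, whereas the paper proves $(1)\Leftrightarrow(2)$ and $(2)\Leftrightarrow(3)$ separately — so this is essentially the paper's approach.
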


\begin{proof} The proof of $(1) \Leftrightarrow (2)$ is similar to that
  of \cref{prop:additive} and \cref{prop:S-sm-dist-equivalent}.
  The implication $(2) \Rightarrow (3)$ follows from left distributivity 
  and the fact that 
  $\cur(f_a) = (X \linarrow f_a) \compl \cur(\id_{Y \tensor X})$ by 
  naturality of $\cur$. The implication $(3) \Rightarrow (2)$ is a direct consequence 
  of the definition of $X \linarrow \_$. Finally, the equivalence 
  $(3) \Leftrightarrow (4)$ results from \cref{prop:sum-closure}.
\end{proof}

Axiom \ref{ax:S-fun} is an isomorphism property. 
We define a natural transformation as follows, only by assuming \ref{ax:S-sm-dist}.
\begin{equation} \label{eq:Sclos}
  \Sclos_{X} \defEq \cur((\S \ev) \compl \SstrL_{Y \linarrow X, Y}) \in 
\categoryLL(\S (Y \linarrow X), X \linarrow \S Y) \, .
\end{equation}
We will show in \cref{prop:Sclos-inverse} that \ref{ax:S-fun} holds
if and only if $\Sclos$ is a natural isomorphism.
This definition of the distributivity of $\S$ over 
$X \linarrow \_$ following from the distributivity over 
$\_ \tensor X$ is standard and is called a \emph{pointwise structure},
see~\cite{Kock71}.

\begin{remark} \label{rem:lin-closure-monic}
  Monicity is preserved by right adjoint functors, and hence the
morphisms $\sequence{X \linarrow \Sproj_i}$ are jointly monic. This fact
can also be checked by hand in a rather straightforward way using the fact
that $\cur$ is a bijection and is natural. Thus, $\Sclos$ is characterized 
by the equations given in \cref{prop:SstrL-prop} below.
\end{remark}

\begin{proposition} \label{prop:SstrL-prop} 
  We have $(X \linarrow \Sproj_i) \compl \Sclos = \Sproj_i$
  and $(X \linarrow \Ssum) \compl \Sclos = \Ssum$.
\end{proposition}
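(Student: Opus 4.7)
The plan is to unfold the definition of $\Sclos$ and push the post-composed morphism inside the $\cur$, then use the defining equations of $\SstrL$ and the naturality of the projections. Concretely, to prove the first equation, I start with
\[ (X \linarrow \Sproj_i) \compl \Sclos = (X \linarrow \Sproj_i) \compl \cur((\S \ev) \compl \SstrL_{X \linarrow Y, X}) \, . \]
Using the naturality equation $(X \linarrow f) \compl \cur(g) = \cur(f \compl g)$ recalled just before \cref{def:sum-closure}, this becomes $\cur(\Sproj_i \compl (\S \ev) \compl \SstrL)$. Since $\categoryLL = \categoryLLAdd$ by our standing assumption, every morphism of $\categoryLL$ is $\Sigma$-additive, so the natural transformation $\Sproj_i : \S \naturalTrans \idfun$ established in \cref{prop:category-additive} applies to $\ev$, giving $\Sproj_i \compl \S \ev = \ev \compl \Sproj_i$.

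Combining this with the characterizing identity $\Sproj_i \compl \SstrL_{X \linarrow Y, X} = \Sproj_i \tensor X$ from \cref{sec:summability-tensor}, the argument of $\cur$ simplifies to $\ev \compl (\Sproj_i \tensor X) = \uncur(\Sproj_i)$, and $\cur \compl \uncur = \id$ yields $(X \linarrow \Sproj_i) \compl \Sclos = \Sproj_i$, as desired. The second equation is obtained by the same chain of rewrites with $\Ssum$ in place of $\Sproj_i$: naturality of $\cur$, additive naturality of $\Ssum$ (\cref{prop:category-additive} again), and finally the equation $\Ssum \compl \SstrL = \Ssum \tensor X$ that comes with the definition of $\SstrL$ under \ref{ax:S-sm-dist}.

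There is no real obstacle here; the only point to be careful about is that we never invoke \ref{ax:S-fun} for this proposition---only \ref{ax:S-sm-dist} is needed, since the $\Sclos$ of \cref{eq:Sclos} is well-defined from $\SstrL$ alone, and joint monicity of the $X \linarrow \Sproj_i$ (noted in \cref{rem:lin-closure-monic}) is not yet required because we directly verify each equation. This computation is essentially the standard derivation of a pointwise parametrized strength from a tensorial strength in the sense of \cite{Kock71}, transposed through the adjunction $\_ \tensor X \dashv X \linarrow \_$.
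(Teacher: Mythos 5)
Your proof is correct and follows essentially the same route as the paper's: naturality of $\cur$, naturality of $\Sproj_i$ (resp.\ $\Ssum$), the defining equations $\Sproj_i \compl \SstrL = \Sproj_i \tensor X$ and $\Ssum \compl \SstrL = \Ssum \tensor X$, and finally $\cur(\uncur(\Sproj_i)) = \Sproj_i$. Your side remarks (that only \ref{ax:S-sm-dist} is needed and that joint monicity of the $X \linarrow \Sproj_i$ plays no role here) are accurate and consistent with the paper.
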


\begin{proof}
We have
\begin{equation*}
  \begin{split}
    (X \linarrow \Sproj_i) \compl \Sclos 
    &= \cur(\Sproj_i \compl (\S \ev) \compl \SstrL_{X \linarrow Y, X}) 
    \quad \text{by naturality of $\cur$}\\
    &= \cur(\ev \compl \Sproj_i \compl \SstrL_{X \linarrow Y, X}) 
    \quad \text{by naturality of $\Sproj_i$} \\
    &= \cur(\ev \compl (\Sproj_i \tensor X)) \\
    &= \cur(\uncur(\Sproj_i)) = \Sproj_i \, .
  \end{split}
\end{equation*}
The equality $(X \linarrow \Ssum) \compl \Sclos = \Ssum$ is proved
similarly, using
$\Ssum \compl \SstrL_{X_0, X_1} = (\Ssum \tensor X_1)$.
\end{proof}

\begin{proposition} \label{prop:Sclos-inverse} 
  The following are equivalent: \begin{enumerate}
    \item \ref{ax:S-fun} holds;
    \item $\Sclos$ is an isomorphism.
  \end{enumerate}
  And then $(\Sclos)^{-1} = \Spairing{X \linarrow \Sproj_i}$.
\end{proposition}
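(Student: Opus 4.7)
The plan is to lean entirely on the characterization from \cref{prop:SstrL-prop}, namely the identity $(X \linarrow \Sproj_i) \compl \Sclos = \Sproj_i$, which already does almost all the work. Both implications, and the explicit formula for the inverse, will fall out of this single equation combined with the universal property \ref{def:summability-structure-2} of witnesses and the joint monicity facts already at hand.

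For $(2) \Rightarrow (1)$, I would assume $\Sclos$ is an isomorphism and post-compose $(X \linarrow \Sproj_i) \compl \Sclos = \Sproj_i$ with $\Sclos^{-1}$ to obtain $X \linarrow \Sproj_i = \Sproj_i \compl \Sclos^{-1}$ for every $i\in\Nat$. Since the projections $\sequence{\Sproj_i}$ are by definition the components of a morphism into $\S(X \linarrow Y)$ (namely the identity), this immediately tells us that $\Sclos^{-1}$ is a witness, in the sense of \ref{def:summability-structure-2}, for the family $\sequence{X \linarrow \Sproj_i}$. Hence \ref{ax:S-fun} holds and, by uniqueness of witnesses, $\Spairing{X \linarrow \Sproj_i} = \Sclos^{-1}$.

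For the converse $(1) \Rightarrow (2)$, assuming \ref{ax:S-fun}, the witness $\Spairing{X \linarrow \Sproj_i} \in \categoryLL(X \linarrow \S Y, \S(X \linarrow Y))$ is well-defined, and I would check that it is a two-sided inverse of $\Sclos$. For one composite, joint monicity of the $\Sproj_j$ reduces the equation $\Spairing{X \linarrow \Sproj_i} \compl \Sclos = \id$ to $\Sproj_j \compl \Spairing{X \linarrow \Sproj_i} \compl \Sclos = (X \linarrow \Sproj_j) \compl \Sclos = \Sproj_j$, which is exactly \cref{prop:SstrL-prop}. For the other composite, I use that right adjoints preserve monicity, so the family $\sequence{X \linarrow \Sproj_j}$ is jointly monic (as noted in \cref{rem:lin-closure-monic}); then $(X \linarrow \Sproj_j) \compl \Sclos \compl \Spairing{X \linarrow \Sproj_i} = \Sproj_j \compl \Spairing{X \linarrow \Sproj_i} = X \linarrow \Sproj_j$ yields $\Sclos \compl \Spairing{X \linarrow \Sproj_i} = \id$.

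There is no real obstacle here: the heart of the argument is the observation that the family whose witness we need to produce is, modulo $\Sclos^{-1}$, literally the canonical family $\sequence{\Sproj_i}$. The only points of care are keeping the internal-hom types straight and remembering that the joint monicity needed for the second composite requires \cref{rem:lin-closure-monic} rather than any extra summability input.
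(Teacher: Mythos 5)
Your proof is correct and follows essentially the same route as the paper's: both directions rest on the identity $(X \linarrow \Sproj_i) \compl \Sclos = \Sproj_i$ from \cref{prop:SstrL-prop}, with the two composites checked via joint monicity of the $\Sproj_i$ and of the $X \linarrow \Sproj_i$ (\cref{rem:lin-closure-monic}), and the converse direction exhibiting $\Sclos^{-1}$ as the witness of $\sequence{X \linarrow \Sproj_i}$. Nothing to add.
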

\begin{proof} 
  $(1) \Rightarrow (2)$: assume that $\sequence{X \linarrow \Sproj_i}$ 
  is summable. Then by \cref{prop:SstrL-prop}
  \[ \Sproj_i \comp \Spairing{X \linarrow \Sproj_i} \comp \Sclos
    = (X \linarrow \Sproj_i) \comp \Sclos = \Sproj_i \]
    \[ (X \linarrow \Sproj_i) \comp \Sclos \comp 
    \Spairing{X \linarrow \Sproj_i}  
    = \Sproj_i \comp \Spairing{X \linarrow \Sproj_i}
    = X \linarrow \Sproj_i \]
    It follows that
    $\Spairing{X \linarrow \Sproj_i}
    \comp \Sclos = \id_{\S (X \linarrow Y)}$ 
    by joint monicity of the $\Sproj_i$, and that
    $\Sclos \comp \Spairing{X \linarrow \Sproj_i} 
    = \id_{X \linarrow \S Y}$
    by joint monicity of the $X \linarrow \Sproj_i$ (see \cref{rem:lin-closure-monic}).

  (2) $\Rightarrow$ (1): assume that $\Sclos$ is an isomorphism.
  By \cref{prop:SstrL-prop},
  $(X \linarrow \Sproj_i) \comp \Sclos = \Sproj_i$.
  Thus, $X \linarrow \Sproj_i = \Sproj_i \comp (\Sclos)^{-1}$ and
  $\sequence{X \linarrow \Sproj_i}$ is summable with witness 
  $(\Sclos)^{-1}$.
\end{proof}

The invertibility of $\Sclos$ turns $\kleisliS$ into a symmetric monoidal closed 
  category. The internal hom of $(X, Y)$ is taken to be $(X \linarrow Y, \kleisliLS(\ev_X))$
  and the Curry transpose of $f \in \kleisliS(Y \tensor X, Z)$ is defined as 
  the following composition of morphisms:
  \[
  \begin{tikzcd}
    Y & {(X \linarrow \S Z)} & {\S (X \linarrow Z)}
    \arrow["{\cur(f)}", from=1-1, to=1-2]
    \arrow["{(\Sclos)^{-1}}", from=1-2, to=1-3]
  \end{tikzcd} \, . \] 
  The fact that $\kleisliS$ is closed can be checked by hand, but is also a consequence of 
  a more general categorical observation, as discussed in \cref{rem:extension-closure-KleisliS}
  below.

\begin{remark} \label{rem:extension-closure-KleisliS}
  We will review in \cref{sec:mate} the mate construction, a bijection between natural 
  transformations arising from adjunctions.
  It turns out that $\Sclos$ is the mate of $\SstrL$ through the 
  adjunction $\_ \tensor X \dashv X \linarrow \_$. 
  This is shown by unfolding the definition given in 
  \cref{rem:mate-simplified}, taking $\ladj = \radj = \S$,
  As we discussed in \cref{prop:partial-dl} and
  \cref{sec:commutative-monad}, $\SstrL$ is a distributive law 
  $\S \_ \tensor X \naturalTrans \S (\_ \tensor X)$.
  Then, as we will see in \cref{thm:extension-adjunction}, $\Sclos$
  being an isomorphism is  necessary and sufficient for the
  adjunction $\_ \tensor A \dashv A \linarrow \_$ to extend to
  $\kleisliS$, turning then $\kleisliS$ into a symmetric monoidal closed
  category.
\end{remark}

\subsection{Interaction with the cartesian product}
\label{sec:summability-product-LL}

Assume now that $\categoryLL$ has (finite, countable or all small)
categorical products.
For what remains of this section, the product indexing sets $I$ are
universally quantified over the corresponding class of sets (finite,
countable of arbitrary).
Following the notations of LL, we use $\top$ for the terminal object
and $\withFam X_i$ for the categorical product of the family
$(X_i)_{i \in I}$. We use
$(\prodProj_i \in \categoryLL(\withFam X_i, X_i))_{i\in I}$ for the
projections and $\prodPairing{f_i} \in \categoryLL(X, \withFam Y_i)$ 
for the tupling of the $(f_i \in \category(X, Y_i))_{i\in I}$.
In the special case where $I=\emptyset$, this tupling is the unique
element $\final_X$ of $\categoryLL(X, \top)$.  Notice that
$\final_X = 0^{X, \top}$.

Generalizing the product category $\category \times \category$ of
\cref{sec:product-category} one defines a category $\category^I$ for 
any set $I$ whose objects are families $\family<I>[i]{X_i}$ of objects of $\category$,
and whose morphisms from $\family<I>[i]{X_i}$ to $\family<I>[i]{Y_i}$ 
are the families of morphism 
$\family<I>[i]{f_i \in \category(X_i, Y_i)}$.
As a special case of limit, the $I$-indexed categorical product can be
described as an adjunction $\diagonal \dashv \withFam{\_}$ where
$\diagonal : \category \arrow \category^I$ is the diagonal functor
that maps an object $X$ to the family $(X)_{i \in I}$, and a morphism
$f \in \category(X, Y)$ to the family
$(f \in \category(X, Y))_{i \in I}$, and then
$\withFam \_ : \category^I \arrow \category$ is the functor that maps
a family of objects $(X_i)_{i \in I}$ to $\withFam X_i$, and a family
of morphisms $(f_i \in \categoryLL(X_i, Y_i))_{i \in I}$ to
$\withFam f_i \defEq \prodPairing{f_i \compl \prodProj_i} \in
\categoryLL(\withFam X_i, \withFam Y_i)$.

\begin{definition} \label{def:sum-cartesian} A cartesian
  $\Sigma$-additive category $\categoryLL$ is a $\Sigma$-additive
  category whose underlying category is cartesian, and is such that
  the sum is compatible with the categorical product in the sense that
  for any family $(f_a^i \in \categoryLL(X, Y_i))_{a\in A,\,i\in I}$,
 \[ \sum_{a \in A} \prodPairing{f_a^i} \sumiff \prodPairing{\sum f_a^i} \, . \]
\end{definition}

The results about the interaction between categorical products
and sums are strikingly similar to those about the interaction between
sums and the internal hom functor\footnote{The reason is that both the
  categorical product and the internal hom functor are right adjoint to
  a functor that distributes over sums.}, so this section mirrors the
structure of \cref{sec:summability-closure-LL}.
The definition above can be pruned out.

\begin{proposition} \label{prop:sum-cartesian}
  For any $\Sigma$-additive category $\category$ with a cartesian product,
  \[ \sum_{a \in A} \prodPairing{f_a^i} \sumsub \prodPairing{\sum f_a^i} \, . \] 
  It follows that the following assertions are equivalent. \begin{enumerate}
  \item $\categoryLL$ is cartesian $\Sigma$-additive;
  \item if, for all $i \in I$, $\family{f_a^i \in \categoryLL(X, Y_i)}$ is summable, 
  then $\family{\prodPairing{f_a^i}}$ is summable.
  \end{enumerate}
\end{proposition}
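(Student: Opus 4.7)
The plan is to prove the $\sumsub$ inclusion first, and then derive the equivalence almost immediately. The argument mirrors the pattern used in \cref{prop:sum-closure} for internal homs, but with the adjunction $\_ \tensor X \dashv X \linarrow \_$ replaced by the adjunction $\diagonal \dashv \withFam{\_}$ that expresses the categorical product.

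For the inclusion, I will assume that $\family{\prodPairing{f_a^i}}$ is summable and postcompose with an arbitrary projection $\prodProj_i$. Since $\categoryLL$ is a $\Sigma$-additive category, every morphism is $\Sigma$-additive, so each $\prodProj_i$ commutes with the sum:
\[
\prodProj_i \compl \sum_{a \in A} \prodPairing{f_a^i} \sumsub \sum_{a \in A} \bigl(\prodProj_i \compl \prodPairing{f_a^i}\bigr) = \sum_{a \in A} f_a^i.
\]
This already forces every $\sum_{a \in A} f_a^i$ to be defined, which in turn means that the tupling $\prodPairing{\sum_{a \in A} f_a^i}$ is well-defined. Joint monicity of the projections (i.e.\ the universal property of the product) then identifies this tupling with $\sum_{a \in A} \prodPairing{f_a^i}$, yielding the desired $\sumsub$.

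For the equivalence, I will use this inclusion to collapse the $\sumiff$ required by \cref{def:sum-cartesian} down to the missing direction: whenever $\prodPairing{\sum_{a \in A} f_a^i}$ is defined — equivalently, whenever each $\family{f_a^i}$ is summable — the family $\family{\prodPairing{f_a^i}}$ must also be summable. This is exactly condition $(2)$, so the equivalence $(1) \Leftrightarrow (2)$ follows immediately. No serious obstacle is anticipated: the entire content of the proposition is the $\Sigma$-additivity of the projections, which is automatic in a $\Sigma$-additive category.
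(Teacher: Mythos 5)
Your proof is correct and follows essentially the same route as the paper's: postcompose with each projection, use that $\prodProj_i$ is $\Sigma$-additive (automatic in a $\Sigma$-additive category) to get $\prodProj_i \compl \sum_a \prodPairing{f_a^i} \sumsub \sum_a f_a^i$, and conclude the $\sumsub$ by uniqueness of the tupling, after which the equivalence is immediate. Nothing to change.
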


\begin{proof}
 By additivity of the projections and
 $\prodProj_i$ of the categorical product we have
 \[ \prodProj_i \compl \left( \sum_{a \in A} \prodPairing{f_a^i} \right) 
 \sumsub \sum_{a \in A} f_a^i \, . \]
 Hence, $\sum_{a \in A} \prodPairing{f_a^i} \sumsub \prodPairing{\sum f_a^i}$ by 
 uniqueness of the tupling.
 The announced equivalence directly follows.
\end{proof}

We now assume that $\categoryLL$ is equipped with a
$\Sigma$-summability structure.  Again compatibility between sums
and categorical products can be expressed equivalently as a
property of the summability structure.

\begin{definition} \labeltext{($\S$-$\with$)}{ax:S-with} The
  summability structure satisfies \ref{ax:S-with} if the family
  $$
  \sequence[j]{\withFam \Sproj_j\in\categoryLL\left(\withFam \S X_i,\withFam X_i\right)}
  $$ is summable.
\end{definition}

No assumptions such as $\withFam 0 = 0$ or 
$\prodPairing{0} = 0$ are required as these equations hold by 
\cref{prop:sum-cartesian}.

\begin{proposition} \label{prop:S-with-equivalent}
  The following assertions are equivalent
  \begin{enumerate}
      \item the summability structure satisfies \ref{ax:S-with};
      \item if for any $i \in I$, $\family{f_a^i \in \category(X_i, Y_i)}$ is summable,
      then $\family{\withFam f^i_a}$ is summable; 
      \item if for any $i \in I$, $\family{f_a^i \in \category(X, Y_i)}$ is summable, 
      then $\family{\prodPairing{f^i_a}}$ is summable;
      \item $\categoryLL$ is cartesian $\Sigma$-additive.
  \end{enumerate}
\end{proposition}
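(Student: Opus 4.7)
The plan is to follow the same template as the analogous equivalences in \cref{prop:S-sm-dist-equivalent} and \cref{prop:S-fun-equivalent}, exploiting the fact that $\withFam \_$ is a right adjoint (to $\diagonal$) and hence that joint monicity of the projections $\prodProj_i$ controls equality into a product, playing the role that joint monicity of the $\Sproj_i$ plays on the summability side. The central equivalence will be (1) $\Leftrightarrow$ (2); the other links will then be easy.

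For (1) $\Rightarrow$ (2), assume each $\family{f_a^i}$ is summable; by \cref{prop:summability-alt} pick a common injection $\phi : A \injection \N$ and witnesses $g^i \in \categoryLL(X_i, \S Y_i)$ of $\Famact{\phi} \family{f_a^i}$. I expect to show that
\[ \Spairing<j>{\withFam \Sproj_j} \compl \withFam g^i \in \categoryLL\left(\withFam X_i, \S \withFam Y_i\right) \]
is a witness of $\Famact \phi \family{\withFam f_a^i}$, using $\Sproj_j \compl \Spairing{\withFam \Sproj_j} = \withFam \Sproj_j$ together with naturality of $\prodProj_i$ and the computation $\prodProj_i \compl \withFam \Sproj_j \compl \withFam g^i = \Sproj_j \compl g^i$. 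Conversely, for (2) $\Rightarrow$ (1), the family $\withFam \Sproj_j$ is by definition of this form, starting from the families $\sequence[j]{\Sproj_{j,X_i}}$ which are summable by \ref{def:summability-structure-2}, hence (1) holds.

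The remaining implications are routine. For (2) $\Rightarrow$ (3), observe that $\prodPairing{f_a^i} = \withFam f_a^i \compl \diagonal_X^{\categoryLL}$ where $\diagonal_X^{\categoryLL} = \prodPairing{\id_X}$, so left distributivity of composition over sums transports summability from $\family{\withFam f_a^i}$ to $\family{\prodPairing{f_a^i}}$. For (3) $\Rightarrow$ (2), use that $\withFam f_a^i = \prodPairing{f_a^i \compl \prodProj_i}$ and that left distributivity together with additivity (here in the weaker sense of preservation under left composition) gives summability of $\family{f_a^i \compl \prodProj_i}$ whenever each $\family{f_a^i}$ is summable, then apply (3). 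Finally, (3) $\Leftrightarrow$ (4) is immediate from \cref{prop:sum-cartesian}, which already gives the nontrivial direction of the identification of the two sums.

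The only delicate step I anticipate is the identification of the witness in (1) $\Rightarrow$ (2): one must verify, using the joint monicity of the $\Sproj_j$ in $\categoryLL$ and the uniqueness of the tupling through $\prodProj_i$, that the proposed witness indeed yields $f_a^i$ at coordinates $(j, i) = (\phi(a), i)$ and $0$ elsewhere. This is a direct computation but requires careful bookkeeping with the index $j \in \N$ in $\S$ and the index $i \in I$ in the product; everything else is bookkeeping around \cref{prop:summability-alt} and left distributivity.
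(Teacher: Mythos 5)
Your proposal is correct and follows essentially the same route as the paper, whose proof simply mirrors that of \cref{prop:S-fun-equivalent}: (1)$\Leftrightarrow$(2) via witnesses built from $\Spairing[j]{\withFam\Sproj_j}$ as in \cref{prop:S-sm-dist-equivalent}, (2)$\Leftrightarrow$(3) via $\prodPairing{f_a^i}=\withFam f_a^i\compl\prodPairing{\id_X}$ and $\withFam f_a^i=\prodPairing{f_a^i\compl\prodProj_i}$ with left distributivity, and (3)$\Leftrightarrow$(4) by \cref{prop:sum-cartesian}. The only blemish is the intermediate identity in your witness check, which should read $\prodProj_i\compl\withFam\Sproj_j\compl\withFam g^i=\Sproj_j\compl g^i\compl\prodProj_i$; this does not affect the conclusion since $\Sproj_j\compl\Spairing[j']{\withFam\Sproj_{j'}}\compl\withFam g^i=\withFam(\Sproj_j\compl g^i)$ gives $\withFam f_a^i$ or $\withFam 0=0$ as required.
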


\begin{proof} The proof mirrors the proof of \cref{prop:S-fun-equivalent}.
\end{proof}

We can always define a natural transformation
\begin{equation} \label{eq:SprodDist}
  \SprodDist = \prodPairing{\S \prodProj_i} \in 
  \categoryLL(\S (\withFam X_i), \withFam \S X_i) 
\end{equation}
simply by functoriality of $\S$ and naturality of $\Sproj_i$.

\begin{proposition} \label{prop:prodSwap-inverse}
  The following are equivalent: \begin{enumerate}
  \item the $\Sigma$-summability structure satisfies \ref{ax:S-with};
  \item $\SprodDist$ is an isomorphism.
  \end{enumerate}
  And then, $\SprodDist^{-1} = \Spairing[j]{\withFam \Sproj_j}$.
\end{proposition}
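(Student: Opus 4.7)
The plan is to mirror the proof of \cref{prop:Sclos-inverse} essentially verbatim, exploiting the formal analogy between $X \linarrow \_$ and $\withFam \_$: both are right adjoints (to $\_ \tensor X$ and to $\diagonal$ respectively), and both have a canonical distributor built out of the summability data. The preparatory identity that will replace \cref{prop:SstrL-prop} is $(\withFam \Sproj_j) \comp \SprodDist = \Sproj_j$, which follows by a direct unfolding using naturality of $\Sproj_j$ on $\prodProj_i$ and the universal property of the product:
\[
  (\withFam \Sproj_j) \comp \prodPairing{\S \prodProj_i} = \prodPairing{\Sproj_j \comp \S \prodProj_i} = \prodPairing{\prodProj_i \comp \Sproj_j} = \Sproj_j \, .
\]

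Before applying the cancellation argument I would verify joint monicity of the family $\sequence[j]{\withFam \Sproj_j}$. This is the one step that is not immediate from a quoted result: since $\prodProj_i \comp \withFam \Sproj_j = \Sproj_j \comp \prodProj_i$, if $\withFam \Sproj_j \comp f = \withFam \Sproj_j \comp g$ for all $j$, then $\Sproj_j \comp \prodProj_i \comp f = \Sproj_j \comp \prodProj_i \comp g$ for all $i, j$, so joint monicity of the $\Sproj_j$ and then of the $\prodProj_i$ forces $f = g$.

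For $(1) \Rightarrow (2)$: assuming \ref{ax:S-with}, the witness $\Spairing[j]{\withFam \Sproj_j}$ exists, and the two computations
\begin{align*}
  \Sproj_j \comp \Spairing[k]{\withFam \Sproj_k} \comp \SprodDist &= \withFam \Sproj_j \comp \SprodDist = \Sproj_j \, , \\
  \withFam \Sproj_j \comp \SprodDist \comp \Spairing[k]{\withFam \Sproj_k} &= \Sproj_j \comp \Spairing[k]{\withFam \Sproj_k} = \withFam \Sproj_j \, ,
\end{align*}
combined with joint monicity of the $\Sproj_j$ (for the first equation) and of the $\withFam \Sproj_j$ (for the second), exhibit $\Spairing[j]{\withFam \Sproj_j}$ as a two-sided inverse of $\SprodDist$.

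For $(2) \Rightarrow (1)$: if $\SprodDist$ is invertible, the preparatory identity rewrites as $\withFam \Sproj_j = \Sproj_j \comp \SprodDist^{-1}$, so by \ref{def:summability-structure-2} the family $\sequence[j]{\withFam \Sproj_j}$ is summable with witness $\SprodDist^{-1}$, which is then by uniqueness of witnesses equal to $\Spairing[j]{\withFam \Sproj_j}$. The main (and only) obstacle is the joint monicity check above; everything else is a mechanical transcription of the argument already used for $\Sclos$.
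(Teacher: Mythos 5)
Your proof is correct and follows essentially the same route as the paper's: the preparatory identity $(\withFam \Sproj_j) \comp \SprodDist = \Sproj_j$, the two cancellation computations, and the converse via exhibiting $\SprodDist^{-1}$ as the witness using \ref{def:summability-structure-2} all match the paper's argument. The only cosmetic difference is that for $\SprodDist \comp \Spairing[j]{\withFam \Sproj_j} = \id$ the paper cancels against the projections $\prodProj_i$ (uniqueness of the tupling), which makes your separate joint-monicity check for the family $\sequence[j]{\withFam \Sproj_j}$ unnecessary, though that check is itself correct.
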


\begin{proof}
  $(1) \Rightarrow (2)$. First, we show that 
  $\SprodDist \compl \Spairing[j]{\withFam \Sproj_j} = \id_{\withFam \S X_i}$
  by uniqueness of the tupling:
  \[ \prodProj_i \compl \SprodDist \compl \Spairing[j]{\withFam \Sproj_j} 
  = \S \prodProj_i \compl \Spairing[j]{\withFam \Sproj_j} 
  = \Spairing[j] {\prodProj_i \compl \withFam \Sproj_j} 
  = \Spairing[j]{\Sproj_j \compl \prodProj_i} 
  = \Spairing[j]{\Sproj_j} \compl \prodProj_i = \prodProj_i \, . \]
  Then we show that $\Spairing[j]{\withFam \Sproj_j} \compl \SprodDist = 
  \id_{\S \withFam X_i}$ by joint monicity of the $\Sproj_j$:

  \[ \Sproj_j \compl \Spairing[j]{\withFam \Sproj_j} \compl \SprodDist 
  = \left( \withFam \Sproj_j \right) \compl \prodPairing{\S \prodProj_i}
  = \prodPairing{\Sproj_j \compl \S \prodProj_i} 
  = \prodPairing{\prodProj_i \compl \Sproj_j}
  = \prodPairing{\prodProj_i} \compl \Sproj_j
  = \Sproj_j \, . \]
  
  $(2) \Rightarrow (1)$. We have 
  $\Sproj_j \compl \prodProj_i \compl \SprodDist = \prodProj_j \compl \Sproj_j$,
  thus $\Sproj_j \compl \prodProj_i 
  = \prodProj_j \compl \Sproj_j \compl \SprodDist^{-1}$. It follows by uniqueness 
  of the tupling that $\Sproj_j \compl \SprodDist^{-1} = \prodPairing{\Sproj_j 
  \compl \prodProj_i} = \withFam \Sproj_j$. 
  So by \ref{def:summability-structure-2}, $\sequence[j]{\withFam \Sproj_j}$ is 
  summable and the summability structure satisfies \ref{ax:S-with}.
\end{proof}

We can check that the invertibility of $\SprodDist$ (and 
of $\Sinj_{0, \top}$) turns $\kleisliS$ into a cartesian category. 
The cartesian product of $\family<I>[i]{X_i}$ is simply $\withFam X_i$ with projections  
$\kleisliLS(\prodProj_i) \in \kleisliS(\withFam X_i, \S X_i)$.
The tupling of $\family<I>[i]{f_i \in \kleisliS(X, Y)}$ is the following 
composition.
\[
\begin{tikzcd}
X & {\withFam \S X_i} & {\S \withFam X_i}
\arrow["{\prodPairing{f_i}}", from=1-1, to=1-2]
\arrow["{\SprodDist^{-1}}", from=1-2, to=1-3]
\end{tikzcd} \]
This definition is very similar to the symetric monoidal closed structure defined
from $(\Sclos)^{-1}$ in \cref{sec:summability-closure-LL}.
The reason is that the cartesian product follows a similar pattern,
as explained in \cref{rem:extension-product-KleisliS} bellow.

\begin{remark} \label{rem:extension-product-KleisliS}
  We will review in \cref{sec:mate} the mate construction, a bijection between natural 
  transformations arising from adjunctions.
  It turns out that $\SprodDist$ is the mate of the natural transformation
  $\id : \diagonal \S \naturalTrans \S^I \diagonal$ (where $\S^I : \categoryLL^I \arrow 
  \categoryLL^I$ is the functor that applies $\S$ on each index $i \in I$) through 
  the adjunction $\diagonal \dashv \withFam \_$.
  We explain in \cref{thm:extension-adjunction} that  
  the invertibility of $\SprodDist$ is a necessary and sufficient 
  condition to extend the adjunction $\diagonal \dashv \withFam \_$
  to the Kleisli category $\kleisliS$, thus providing $\kleisliS$ with a cartesian product
  ($\top$ is final in $\kleisliS$ since $\Sinj_0 \in \categoryLL(\top, \S \top)$ is 
  an isomorphism).
  \end{remark}
  
  We can check that the monad $\monadS$ on $\categoryLL$ equipped with
  the natural transformations $\Sinj_0$ and $\SprodDist^{-1}$ is a
  strong symmetric monoidal monad (see \cref{def:monoidal-monad}) with
  regard to the symmetric monoidal structure induced by the categorical
  product.
  The diagrams can be checked by hand, but this 
  is a consequence of a more general observation.
  As mentioned in~\cite{Aguiar18} in paragraph 2.3, any monad
  $\mtriple = (\monad, \munit, \msum)$ on a cartesian category can be
  endowed with the structure of an oplax symmetric monoidal
    monad (see \cref{rem:hopf-monad}) taking
  $\osmfzero$ to be the unique element of $\category(\monad \top, \top)$ and
  $\osmftwo_{X_0, X_1} \defEq \prodPair{\monad \prodProj_0}{\monad \prodProj_1} 
  \in \category(\monad (X_0 \with X_1), \monad X_0 \with \monad X_1)$\footnote{This
  oplax monoidal structure is the mate, as in \cref{rem:extension-closure-KleisliS},
  of the trivial lax monoidal structure on $\diagonal$.}.
  Then it turns out that if $\osmfzero = \munit_{\top}^{-1}$ 
  and if $\osmftwo$ is invertible, then $(\mtriple, \osmfzero, \osmftwo)$ 
  is a strong monoidal monad.

  As discussed
  in \cref{sec:commutative-monad} (with the symmetric monoidal structure
  generated by the cartesian structure $\with$), 
  the strong symmetric monoidal monad $\monadS$ is also a
  commutative monad.
  The left and right strengths are given by the following equations.
  \begin{equation} \label{eq:Sprod-strength}
    \begin{split}
    \SprodstrL = \SprodDist^{-1} \compl (\D X_0 \with \Sinj_0)
  = \Sbracket{\Sproj_0 \with X_1, \Sproj_1 \with 0, \Sproj_2 \with 0,
  \ldots}\in \category(\S X_0 \with X_1, \S (X_0 \with X_1)) \\
  \SprodstrR = \SprodDist^{-1} \compl (\Sinj_0 \with \D X_1)
  = \Sbracket{X_0 \with \Sproj_0, 0 \with \Sproj_1, 0 \with \Sproj_2,
  \ldots}\in \category(X_0 \with \S X_1, \S (X_0 \with X_1))
  \end{split}
\end{equation}

\subsection{Interaction with the resource comonad}
\label{sec:summability-comonad}

Here, we assume that $\categoryLL$ is equipped with a resource comonad, 
an exponential for LL based on the axiomatization of 
\emph{Seely categories} of~\cite{Bierman95}.
We show in this section that a $\Sigma$-summability structure on $\categoryLL$ induces a 
left $\Sigma$-summability structure on the coKleisli category of 
that comonad.

\begin{definition} \label{def:resource-comonad}
  A \emph{resource comonad} (also called resource modality) 
  consists of a tuple
  \((\Oc,\Deru,\Digg,\Seelyz,\Seelyt)\) where \((\Oc,\Deru,\Digg)\) is a
  comonad on \(\cL\) (with counit \(\Deru\), called \emph{dereliction},
  and comultiplication \(\Digg\) called \emph{digging}) and
  \((\Seelyz,\Seelyt)\) is a strong symmetric monoidal structure on the functor $\oc\_$
  from the symmetric monoidal category \((\cL,\With,\Top)\) to the symmetric monoidal
  category \((\cL,\Times,\Sone)\) (see \cref{def:smf}) 
  satisfying some coherence diagram that we will not recall here, 
  see for instance~\cite{Mellies09}. 
  In particular,
  \(\Seelyz\in\cL(\Sone,\Oc\Top)\) 
  is an isomorphism and
  \(\Seelyt_{X,Y}\in\cL(\Oc X\Times\Oc Y,\Oc(X\With Y))\) is a natural isomorphism.
  They are called the Seely isomorphisms.
\end{definition}

Let $\kleisliExp$ be the coKleisli category of this comonad (see \cref{sec:comonad-dl} 
for a definition). Let us recall that the composition of $f \in \kleisliExp(X, Y)$
with $g \in \kleisliExp(Y, Z)$ is defined as follows.
\[ g \comp f \defEq
\begin{tikzcd}
	{!X} & {!!X} & {!Y} & Z
	\arrow["\dig", from=1-1, to=1-2]
	\arrow["{!f}", from=1-2, to=1-3]
	\arrow["g", from=1-3, to=1-4]
\end{tikzcd} \]

We write $\Der$ the functor from 
$\categoryLL$ to $\kleisliExp$ defined as  
$\Der h = h \compl \der$.
Let us recall the following equations: for 
any $f \in \kleisliExp(A, B)$, $h \in \categoryLL(B, C)$ and 
$g \in \kleisliExp(C, D)$,
\begin{equation} \label{prop:Der-composition}
	\Der h \comp f = h \compl f \quad 
  g \comp \Der h = g \compl !h \, .
\end{equation}

We show that if $\categoryLL$ is a $\Sigma$-additive category, then
$\kleisliExp$ is a left $\Sigma$-additive category 
and any $\Sigma$-summability structure on $\categoryLL$ induces 
a left $\Sigma$-summability structure on $\kleisliExp$.

\begin{proposition}
If $\categoryLL$ is a $\Sigma$-additive category, then $\kleisliExp$ is a 
left $\Sigma$-additive category.
\end{proposition}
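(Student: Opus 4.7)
The plan is to verify the two axioms of a left $\Sigma$-additive category directly. First, I would observe that $\kleisliExp$ shares its collection of objects with $\categoryLL$ and its hom-sets are by definition $\kleisliExp(X,Y) = \categoryLL(!X, Y)$. Since $\categoryLL$ is assumed $\Sigma$-additive, each $\categoryLL(!X, Y)$ is already a $\Sigma$-monoid; this structure transports verbatim to the hom-sets of $\kleisliExp$, so no work is needed for the enrichment itself.

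The only non-trivial axiom is left distributivity of the sum over coKleisli composition. Recall from \cref{sec:comonad-dl} that for $f \in \kleisliExp(X,Y)$ and $g \in \kleisliExp(Y,Z)$, the coKleisli composition is $g \comp f = g \compl\, !f \compl \dig_X$. Given a summable family $\family<B>[b]{g_b \in \kleisliExp(Y,Z)}$ and any $f \in \kleisliExp(X,Y)$, I would set $h = {!f} \compl \dig_X \in \categoryLL(!X, !Y)$, treat it as a single morphism of $\categoryLL$, and compute
\[ \left(\sum_{b \in B} g_b\right) \comp f \;=\; \left(\sum_{b \in B} g_b\right) \compl h \;\sumsub\; \sum_{b \in B} (g_b \compl h) \;=\; \sum_{b \in B} (g_b \comp f) \]
where the middle $\sumsub$ step is exactly the left distributivity axiom of $\categoryLL$ applied to the single morphism $h$, and the outer equalities are just the unfolding of coKleisli composition.

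There is essentially no obstacle here, since both the $\Sigma$-monoid structure and left distributivity in $\kleisliExp$ reduce to ambient data of $\categoryLL$ through the identity $g \comp f = g \compl\, !f \compl \dig_X$. What is worth underlining, however, is the asymmetry that explains why $\kleisliExp$ is only \emph{left} $\Sigma$-additive: asking a generic $h \in \kleisliExp(Y,Z)$ to be $\Sigma$-additive would force $h \compl\, !(\sum_a f_a) \compl \dig_X = \sum_a h \compl\, !f_a \compl \dig_X$, which in turn requires the functor $!$ itself to preserve countable sums of morphisms. This property typically fails in concrete models of linear logic, and precisely this failure is what forces the whole theory into the left $\Sigma$-additive rather than the fully $\Sigma$-additive setting.
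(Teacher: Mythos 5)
Your proof is correct and follows essentially the same route as the paper: transport the $\Sigma$-monoid structure along $\kleisliExp(X,Y)=\categoryLL(\oc X,Y)$, then derive left distributivity by unfolding $g\comp f = g\compl \oc f\compl\dig$ and applying left distributivity in $\categoryLL$ to the fixed morphism $\oc f\compl\dig$. Your closing remark on why only \emph{left} $\Sigma$-additivity holds matches the paper's own observation that $\oc(\sum_a f_a)$ need not equal $\sum_a \oc f_a$.
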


\begin{proof}
 The $\Sigma$-monoid structure on $\kleisliExp(X, Y)$ is simply 
 the $\Sigma$-monoid structure 
 on $\categoryLL(!X, Y)$.
 Left distributivity is a consequence of left distributivity in $\categoryLL$.
 \[ \left( \sum_{a \in A} g_a \right) \comp f 
 = \left( \sum_{a \in A} g_a \right) \compl \Oc f \compl \dig 
 \sumsub \sum_{a \in A} (g_a \compl \Oc f \compl \dig) 
 = \sum_{a \in A} (g_a \comp f) \, . \qedhere \]
\end{proof}

Note that $\kleisliExp$ is not $\Sigma$-additive, because there is no reason that 
$\Oc(\sum_{a \in A} f_a) = \sum_{a \in A} \Oc f_a$.
In fact, $\kleisliExp$ should be understood as a category of analytic maps, and 
such a map does not commute with sums in general.
Still, maps of the form $\Der h$ are $\Sigma$-additive.

\begin{lemma} \label{prop:additivity-der}
 For any $h \in \categoryLL(X, Y)$, $\Der h$ is $\Sigma$-additive.
\end{lemma}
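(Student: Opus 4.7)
The plan is to simply unfold the definition of $\Sigma$-additivity for $\Der h \in \kleisliExp(X, Y)$ and reduce it to the $\Sigma$-additivity of $h$ as a morphism of $\categoryLL$, which holds for free by assumption that $\categoryLL$ is $\Sigma$-additive (not merely left $\Sigma$-additive).

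More precisely, fix a summable family $\family{f_a \in \kleisliExp(Z, X)}$. By construction of the left $\Sigma$-additive structure on $\kleisliExp$, the hom-set $\kleisliExp(Z, X)$ is literally the $\Sigma$-monoid $\categoryLL(\oc Z, X)$, so $\family{f_a}$ is also summable when viewed in $\categoryLL(\oc Z, X)$, with the same sum. Now apply \cref{prop:Der-composition} which gives $\Der h \comp f_a = h \compl f_a$ and similarly $\Der h \comp \left(\sum_{a \in A} f_a\right) = h \compl \left(\sum_{a \in A} f_a\right)$. Since $h$ is a morphism of $\categoryLL$ and $\categoryLL$ is $\Sigma$-additive, $h$ is $\Sigma$-additive, so $\family{h \compl f_a}$ is summable with sum $h \compl \sum_{a \in A} f_a$. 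Transporting this back to $\kleisliExp$ via \cref{prop:Der-composition} yields that $\family{\Der h \comp f_a}$ is summable with sum $\Der h \comp \sum_{a \in A} f_a$, which is exactly the $\Sigma$-additivity of $\Der h$.

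There is essentially no obstacle here: the whole argument is a bookkeeping observation that the $\Sigma$-monoid structure on $\kleisliExp(Z, X)$ is imported verbatim from $\categoryLL(\oc Z, X)$, and that $\Der h$ acts in $\kleisliExp$ by post-composition with $h$ in $\categoryLL$, hence inherits whatever linearity $h$ has in $\categoryLL$. The only mild subtlety worth flagging in the writeup is the distinction between the two composition operations $\comp$ and $\compl$, which is precisely what \cref{prop:Der-composition} lets us shuttle between.
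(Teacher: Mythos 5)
Your proof is correct and follows exactly the same route as the paper's: both rest on the identity $\Der h \comp f = h \compl f$ from \cref{prop:Der-composition} together with the $\Sigma$-additivity of $h$ in the $\Sigma$-additive category $\categoryLL$, transported along the fact that the $\Sigma$-monoid structure on $\kleisliExp(Z,X)$ is by definition that of $\categoryLL(\oc Z, X)$. You merely spell out the bookkeeping that the paper's one-line proof leaves implicit.
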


\begin{proof}
This is a consequence of the additivity of $h$ in $\categoryLL$
and the fact that $\Der h \comp f = h \compl f$, see 
\cref{prop:Der-composition}.
\end{proof}

\begin{lemma} \label{prop:summability-der}
 For any family $\family{h_a \in \category(X, Y)}$, 
 \[ \Der \left(\sum_{a \in A} h_a \right) \sumsub \sum_{a \in A} \Der h_a
 \, . \]
\end{lemma}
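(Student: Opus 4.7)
The statement is that for any summable family $\family{h_a \in \categoryLL(X, Y)}$ in $\categoryLL$, the family $\family{\Der h_a \in \kleisliExp(X, Y)}$ is summable in $\kleisliExp$ and the two sums agree after applying $\Der$. My plan is to reduce this entirely to the left distributivity axiom for the $\Sigma$-additive category $\categoryLL$, using the fact that the $\Sigma$-monoid structure on $\kleisliExp(X, Y)$ is \emph{by definition} the $\Sigma$-monoid structure on $\categoryLL(\Oc X, Y)$.

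First I would unfold the definition of $\Der$: for any $h \in \categoryLL(X, Y)$ we have $\Der h = h \compl \der_X \in \categoryLL(\Oc X, Y) = \kleisliExp(X, Y)$. Hence the claim amounts to showing
\[
 \left( \sum_{a \in A} h_a \right) \compl \der_X \sumsub \sum_{a \in A} (h_a \compl \der_X) \, ,
\]
where both sides live in the hom-set $\categoryLL(\Oc X, Y)$ and summability is taken in $\categoryLL$. This is an immediate instance of the left distributivity axiom for the $\Sigma$-additive category $\categoryLL$, applied to the family $\family{h_a}$ and the morphism $\der_X \in \categoryLL(\Oc X, X)$.

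The only subtle point to keep in mind --- and what prevents the converse inclusion $\sumiff$ from being obvious --- is that the witness for summability in $\kleisliExp$ refers to the summability of the family $\family{h_a \compl \der_X}$ \emph{in $\categoryLL$}, whereas the hypothesis gives summability of $\family{h_a}$ in $\categoryLL$; left distributivity yields the required implication in one direction but not the other. This is precisely why the statement is phrased with $\sumsub$ rather than $\sumiff$, and no further structure (e.g.\ additivity of $\der_X$, which does hold by Lemma~\ref{prop:additivity-der}) is needed for the proof. Two lines of calculation suffice, so there is no real obstacle beyond invoking the correct axiom.
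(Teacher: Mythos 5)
Your proof is correct and matches the paper's argument exactly: the paper likewise unfolds $\Der h = h \compl \der$ and invokes left distributivity of the sum in $\categoryLL$ (the $\Sigma$-monoid structure on $\kleisliExp(X,Y)$ being by definition that of $\categoryLL(\Oc X, Y)$). Your remark on why only $\sumsub$ (and not $\sumiff$) holds is a sound observation and consistent with the statement as given.
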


\begin{proof}
  We use the fact that $\Der h \defEq h \compl \der$
   and left distributivity of sum 
  in $\categoryLL$.
\end{proof}

\begin{proposition} \label{prop:Kleisli-left-summability}
	The pair $(\S, \sequence{\Der \Sproj_i})$
is a left $\Sigma$-summability structure on $\kleisliExp$, and 
\[ \Der \Spairing{h_i} = \Spairing{\Der h_i} \, . \] 
\end{proposition}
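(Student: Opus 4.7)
The plan is to verify the four axioms \ref{def:summability-structure-1}--\ref{def:summability-structure-4} directly, exploiting the fact that $\kleisliExp(X,Y)=\categoryLL(\oc X,Y)$ as $\Sigma$-monoids (by construction of the pointwise $\Sigma$-monoid structure on $\kleisliExp$ from the one on $\categoryLL$). The key identity underlying everything is that for $f\in\kleisliExp(X,\S Y)=\categoryLL(\oc X,\S Y)$, one has
\[
  \Der\Sproj_i \cokleislicomp f \;=\; \Sproj_i\compl f,
\]
which is just an instance of \cref{prop:Der-composition}. In other words, composition with $\Der\Sproj_i$ in $\kleisliExp$ agrees with composition with $\Sproj_i$ in $\categoryLL$ at the level of the underlying morphism in $\categoryLL(\oc X,Y)$.

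First, for \ref{def:summability-structure-3}, additivity of each $\Der\Sproj_i$ is exactly \cref{prop:additivity-der}. For \ref{def:summability-structure-2}, a family $\sequence{f_i\in\kleisliExp(X,Y)}$ is by definition summable iff it is summable in $\categoryLL(\oc X,Y)$; by axiom \ref{def:summability-structure-2} in $\categoryLL$, this is in turn equivalent to the existence of a witness $f\in\categoryLL(\oc X,\S Y)=\kleisliExp(X,\S Y)$ satisfying $\Sproj_i\compl f=f_i$, and the key identity above rewrites this as $\Der\Sproj_i\cokleislicomp f=f_i$. Axiom \ref{def:summability-structure-1} then follows immediately: if $f,g\in\kleisliExp(X,\S Y)$ satisfy $\Der\Sproj_i\cokleislicomp f=\Der\Sproj_i\cokleislicomp g$ for all $i$, then $\Sproj_i\compl f=\Sproj_i\compl g$ in $\categoryLL$, so $f=g$ by joint monicity of the $\Sproj_i$ in $\categoryLL$. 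For \ref{def:summability-structure-4}, given $\family{f_a\in\kleisliExp(X,\S Y)}$, the family $\family<\N\times A>[(i,a)]{\Der\Sproj_i\cokleislicomp f_a}$ is summable in $\kleisliExp$ iff $\family<\N\times A>[(i,a)]{\Sproj_i\compl f_a}$ is summable in $\categoryLL$, and then axiom \ref{def:summability-structure-4} in $\categoryLL$ yields summability of $\family{f_a}$ in $\categoryLL$, hence in $\kleisliExp$.

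For the equality $\Der\Spairing{h_i}=\Spairing{\Der h_i}$, assume $\sequence{h_i}$ is summable in $\categoryLL$, so that $\Spairing{h_i}$ exists. By \cref{prop:summability-der}, $\sequence{\Der h_i}$ is summable in $\kleisliExp$, so $\Spairing{\Der h_i}$ exists as well. Using the key identity and the defining property of $\Spairing{h_i}$,
\[
  \Der\Sproj_i \cokleislicomp \Der\Spairing{h_i} \;=\; \Sproj_i \compl (\Spairing{h_i}\compl\der) \;=\; h_i\compl\der \;=\; \Der h_i,
\]
so $\Der\Spairing{h_i}$ is a witness for $\sequence{\Der h_i}$ and must coincide with $\Spairing{\Der h_i}$ by the joint monicity established above.

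There is no real obstacle; the proof is essentially a translation under the bijection $\kleisliExp(X,Y)=\categoryLL(\oc X,Y)$. The only point requiring mild care is keeping track of which composition ($\compl$ in $\categoryLL$ vs.\ $\cokleislicomp$ in $\kleisliExp$) is being used, which is handled uniformly by the identity $\Der h\cokleislicomp f=h\compl f$ of \cref{prop:Der-composition}.
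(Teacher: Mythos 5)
Your proof is correct and follows essentially the same route as the paper: each axiom is transferred along the identification $\kleisliExp(X,Y)=\categoryLL(\oc X,Y)$ using the identity $\Der\Sproj_i\comp f=\Sproj_i\compl f$ from \cref{prop:Der-composition}, with \ref{def:summability-structure-3} given by \cref{prop:additivity-der} and the final equation obtained from \cref{prop:Der-composition} together with uniqueness of witnesses. No gaps.
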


\begin{proof}
  We first check \ref{def:summability-structure-1}.
  The $\Der \Sproj_i$ are jointly monic in $\kleisliExp$ because of the 
joint monicity of the $\Sproj_i$ 
in $\categoryLL$ and of the equation
$\Der \Sproj_i \comp f = \Sproj_j \compl$.

Then we check \ref{def:summability-structure-2}.
Assume that $\sequence{f_i \in \kleisliExp(X, Y)}$ is summable
in $\kleisliExp$.
Then $\sequence{f_i}$ is a summable family of $\categoryLL(\Oc X, Y)$.
By \ref{def:summability-structure-2} in $\categoryLL$, 
there exists $f \in \categoryLL(\Oc X, \S Y)$ such that 
$\Sproj_i \compl f = f_i$. Then $f$ is also a morphism
of $\kleisliExp(X, Y)$, and by
\cref{prop:Der-composition} we have $\Der \Sproj_i \comp f = f_i$. 
So $\sequence{f_i}$ has a witness in $\kleisliExp$. 
Conversely, if $\sequence{f_i}$ has a witness in $\kleisliExp$
then it is summable in $\kleisliExp$, the 
proof is very similar to the proof above. So we
conclude that \ref{def:summability-structure-2} 
holds in $\kleisliExp$.

The equation $\Der \Spairing{h_i} 
= \Spairing{\Der h_i}$ is a consequence of the functoriality of 
$\Der$ and of \cref{prop:Der-composition}.
Furthermore, \ref{def:summability-structure-3} holds by \cref{prop:additivity-der}.
Finally, \ref{def:summability-structure-4} holds in $\kleisliExp$ because 
it holds in $\categoryLL$ and because
of \cref{prop:Der-composition}. 
\end{proof}

\begin{remark} Let $\Sadd$ be the functor on $\kleisliExpAdd$ 
  induced by the left $\Sigma$-summability structure $(\S, \sequence{\Der \Sproj_i})$
  on $\kleisliExp$, so that there is no ambiguity with the functor $\S$ on $\categoryLL$.
  We can easily check that
  \[ \Sadd (\Der h) = \Der (\S h) \] 
  so $\Sadd$ is an extension of $\S$ to $\kleisliExpAdd$.
  In general, there is no reason that $\kleisliExpAdd = \categoryLL$.
\end{remark}

\begin{definition} \label{def:sigma-additive-resource-category}
A $\Sigma$-additive resource category $\categoryLL$ 
is a cartesian and a symmetric monoidal category endowed with a resource
comonad, as well as a $\Sigma$-summability structure following
\ref{ax:S-sm-dist} and \ref{ax:S-with}.
\end{definition}

We do not assume \ref{ax:S-fun} in the definition above as it is not crucial to define 
Taylor expansion, so any use of this condition will be made explicit.

\section{Taylor expansion as a distributive law in models of LL}

\label{sec:Taylor}

We have all the necessary tools to axiomatize Taylor expansion in
models of LL. Assume that $\categoryLL$ is a $\Sigma$-additive
resource category.  In particular, we do not need to assume
\ref{ax:S-fun} for this section to make sense.
As seen in \cref{sec:summability-comonad}, $\kleisliExp$ is then
endowed with a left $\Sigma$-summability structure
$(\S, \sequence{\Der \Sproj_i})$.
We provide the intuitions first on what the Taylor expansion operator
should look like.

\subsection{Motivation} \label{sec:Taylor-operator}

The idea behind differential LL is that a morphism
$f \in \kleisliExp(X, Y) = \categoryLL(!X, Y)$ can be seen as
some kind of analytic map between some kind of vector spaces
associated with $X$ and with $Y$.
Let us recall what is a differentiable map (see~\cite{Dieudonne69} 
for instance) and analytic map (see~\cite{Whittlesey65}).
A map $f : E \arrow F$ between two Banach spaces is differentiable in
a point $x$ if its variation around $x$ can be approximated by a
continuous linear map $\derive{f} \in \linbanach(E, F)$ called the
differential of $f$ at \(x\).
\begin{equation} \label{eq:differential}
  f(x + u) = f(x) + \derive{f}
  \cdot u + o(\norm{u})
\end{equation}
If $f$ is regular enough, the map $x \mapsto \derive{f}$ going from
$E$ to $\linbanach(E, F)$ is also differentiable so that for any $x$
there exists a map
$\deriven{f}{2} \in \linbanach(E, \linbanach(E, F))$ called the second
order differential.
Repeating the process yields an $n$-order differential
$\deriven{f}{n} \in \linbanach(E, \linbanach(E, (\ldots, \linbanach(E,
F)\ldots )))$
which can also be seen as an $n$ linear map $E \times \cdots \times E \arrow F$.
These iterated differentials allow approximating
$f$ around a point $x$
by a map which is polynomial of degree \(n\):
\[ f(x + u) = \sum_{k=0}^n \frac{1}{k!} \deriven{f}{k} (u, \ldots, u)
  + o(\norm{u}^n) \, .\]
A map is analytic if it is equal to the limit of
its successive approximations, that is
\begin{equation}
  \label{eq:Taylor}
  f(x + u) = \sum_{n=0}^{\infty} \frac{1}{n!} \deriven{f}{n} 
  (u, \ldots, u) \, .
\end{equation}
The series in the equation above is called the \emph{Taylor series} of
$f$ at $x$.

The point of coherent differentiation is to generalize the ideas of
differentiation to a setting where addition is not necessarily a total
operation.
In coherent differentiation, $\categoryLL$ is a 
category equipped with a binary counterpart of
our $\Sigma$-summability structure (see \cite{Ehrhard23-cohdiff}) that
induces a left summability structure on $\kleisliExp$ (see
\cite{Walch23}).
Then all the sums in \cref{eq:differential} must be sums in the sense of
this left summability structure.
This means that the differential of $f \in \kleisliExp(X, Y)$
should be introduced as a morphism $\cohdiff f \in \kleisliExp(\S X, \S Y)$ 
that intuitively maps a summable pair $\Spair{x}{u}$ to a summable
pair $\Spair{f(x)}{\derive{f} \cdot u}$.
We will use the notation $\d f \Spair{x}{u} \defEq \derive{f} \cdot u$
as this operator $\d$ is the counterpart of the differentiation operator 
in the (left-additive) cartesian differential categories 
of~\cite{Blute09}\footnote{Recall that coherent differentiation 
is a generalization of ordinary differentiation, see \cref{fig:map-of-concepts}.}.

It turns out that this point of view comes with real conceptual benefits, 
because the equations on $\d$ found in cartesian differential 
categories are equivalent to functoriality and naturality 
equations on $\cohdiff$.
More precisely, the chain rule of the differential calculus corresponds to the
functoriality of $\cohdiff$ on $\kleisliExp$, and the other rules
(Leibniz, Schwarz, linearity of the derivative) correspond to the
naturality of $\Der \Sinj_0$, $\Der \SmonadSum$, $\Der \Slift$ and
$\Der \Sswap$ with respect to $\cohdiff$.
An equationnal account of those observations can be found in the long
version of~\cite{Walch23}\footnote{In a setting where the category
  $\category$ considered can be any cartesian category, and not
  necessarily the coKleisli category of a model of LL.}.

\begin{remark}
Note that the functor $\cohdiff$ is very similar to the tangent 
bundle functor of the \emph{tangent categories} of~\cite{Cockett14}.
The naturality equations are also very similar. 
The key difference is the underlying notion of sum, that is, 
the action on object of the functor $\cohdiff$.
In tangent categories, the action of the tangent functor on object corresponds to a tangent 
bundle construction. In that setting, there is a clear distinction between points 
and vectors. It is not possible to sum a vector with a point, 
but vectors of the same tangent space can be added freely. 
Typically, objects in such categories should be seen as smooth manifolds. 
On the other hand, coherent differentiation does not make any distinction
between points and vectors, but the sum is restrained. One of the main example 
of coherent differentiation are the probabilistic coherence spaces, in which 
the objects can be seen as bounded convex sets in a cone.

A more formal difference is that tangent categories usually arise as the 
coEilenberg-Moore categories of the resource comonad of a differential 
category~\cite{Cockett17}, whereas the functor of coherent differentiation is in the 
coKleisli category of a coherent differential category.  
We conjecture the existence of a notion of \emph{coherent tangent category},
that should combine the idea of coherent differentiation and of tangent category.
They should arise as the coEilenberg-Moore category of a 
coherent differential category\footnote{In fact, the \emph{representable} 
theory of \cref{sec:elementary} feature some similarities with the content 
of~\cite{Cockett17}.}.
\end{remark}

The map $\cohdiff^2 f$ can be seen as the following map.
\begin{equation} \label{eq:D2}
	\cohdiff^2 f \Spair{\Spair{x}{u}}{\Spair{v}{w}} 
 = \Spair{\Spair{f(x)}{\derive{f} \cdot u}} 
 {\Spair{\derive{f} \cdot v}{\deriven{f}{2} (u, v) + 
 \derive{f} \cdot w}} \, .
 \end{equation}
 Note that the rightmost component %
 $\Sproj_1 \Sproj_1 \cohdiff^2 f = \d \d f$ does not only contain the
 second order derivative $\deriven{f}{2} (u, v)$, but also the term
 $\derive{f} \cdot w$.
 This happens because $\d \d f$ is the \emph{total} derivative of
 $\d f$, that is its derivative with regard to both of its coordinates
 at the same time, whereas $\deriven{f}{2}$ is only the partial
 derivative of $\d f$ with respect to its first argument. See~\cite{Lemay18}
 for more discussions on this total derivative.

 One specificity of coherent differentiation is that the second order
 derivative $\deriven{f}{2} (u, v)$ that appears \cref{eq:D2} requires
 $u$ and $v$ to be summable.
 This is in sharp contrast with what we
 want to do for Taylor expansion in \cref{eq:Taylor}:
 there is no reason for $u$ to be summable with itself, but
 $\frac{1}{n!} \deriven{f}{n} (u, \ldots, u)$ is well-defined
 nonetheless thanks to the sharply decreasing coefficient
 $\frac{1}{n!}$ in front of the derivative.
 This phenomenon does not seem to be taken easily into
 account by the coherent differential setting.
 So instead of defining $\cohdiff$ only as the first order development
 $\Spair{f(x)}{\derive{f} \cdot u}$,
 we develop a new coherent differential approach where a single application 
 of $\cohdiff$ takes into account the whole Taylor expansion operator. 
 
 Let us start with what a functor implementing a second order development
 would look like. Let us
 introduce a functor $\D$ as follows (using a notion of ternary
 summability structure that should exist, as mentioned in
 \cref{rem:n-ary-summability}).
\begin{equation}
	\D f \Striple{x}{u_1}{u_2} = \Striple{f(x)}{\derive{f} \cdot u_1}
	{\frac{1}{2} \deriven{f}{2} (u_1, u_1) + \derive{f} \cdot u_2 } \, .
\end{equation}
The term $u_1$ can be intuitively understood as a first order variation and the term
$u_2$ as a second order variation.
So $\D f \Striple{x}{u_1}{u_2}$ gives the components
(graded by orders)
of the best order 2 approximation of $f$ on the variation $u_1 + u_2$
(and guarantees that the sums involved are well-defined).
We can recover the usual order $2$ development of $f$ taking $u_2 = 0$
\[
  \D f \Striple{x}{u}{0} = \Striple{f(x)}{\derive{f} \cdot u}
  {\frac{1}{2} \deriven{f}{2} (u, u)}\,.
\]
The term $\derive{f} \cdot u_2$ is still crucial though, and comes
from the Faà di Bruno formula.
Indeed, one can check that
\begin{align*}
  &\D g (\D f \Striple{x}{u}{0}) \\
  &=\D g \Striple{f(x)}{\derive{f} \cdot u}
 {\frac{1}{2} \deriven{f}{2}
 (u, u)} \\
 &= \Striple{g(f(x))}{\derive{g} \cdot \derive{f} \cdot u} 
 {\frac{1}{2}\deriven[f(x)]{g}{2} \cdot \left(\derive{f} \cdot u, \derive{f} \cdot u\right)
 + \frac{1}{2} \derive{g} \cdot \deriven{f}{2} (u, u)} \, . 
\end{align*}
Then we can check that this corresponds to the Taylor expansion of $g \comp f$, in the sense 
that
\begin{equation} \label{eq:Taylor-compositional-ternary}
	\D g (\D f \Striple{x}{u}{0}) = \D (g \comp f) \Striple{x}{u}{0} \, .
\end{equation}
\Cref{eq:Taylor-compositional-ternary} above 
hints at the functoriality of $\D$ and is a consequence of 
the chain rule and the second order chain rule
(formalized in the Faà di Bruno formula, see \cite{Fraenkel78}):
\[
  \deriven{(g \comp f)}{2} (u, u) =
\deriven{g}{2} \left(\derive{f} \cdot u, \derive{f} \cdot u\right) 
+ \derive{g} \cdot \deriven{f}{2}  (u, u) \, .
\]

Similar computations can be performed for all finite orders \(n\)
instead of \(2\), and ultimately for an infinite sequence of terms,
possibly of all finite degrees.
Let \[\mpart{n} \defEq \{ m \in \mfin(\N^*) | \sum_{i \in \N^*}  m(i)\,i =
  n\} \]
where $\mfin(\N^*)$ is the set of finite multisets of elements of $\N^*$,
see \cref{sec:notation-multiset} for the notations.
Define $\D f$ as the map
\begin{equation} \label{eq:Taylor-functor-formula}
\D f \Spairing{x_i} = \Spairing[n]{\sum_{m \in \mpart{n}}
\frac{1}{m!} \deriven[x_0]{f}{\Mscard m} \cdot \vec{x}_m} 
\end{equation}
where $m! = \prod_{i \in \supp{m}} m(i)!$, $\Mscard m
= \sum_{i \in \supp{m}} m(i)$ and 
\[ \vec{x}_m = (\underbrace{x_1, \ldots, x_1}_{m(1) \ \mathrm{times}}, \ldots, 
\underbrace{x_i, \ldots, x_i}_{m(i) \ \mathrm{times}}, \ldots, 
\underbrace{x_n, \ldots, x_n}_{m(n) \ \mathrm{times}}) \, . \]
The term $x_i$ corresponds intuitively to an order $i$ infinitesimal 
$\varepsilon^i \, x_i$, so that 
$\D f \Spairing{x_i}$ contains the components sorted by order of the 
Taylor expansion of $f$ at $x_0 + \varepsilon \, x_1 + \varepsilon^2 \, x_2 + \cdots$.

The case $k = n$ and $m = [1, \ldots, 1]$ gives the value
$\frac{1}{n!} \deriven[x_0]{f}{n} (u_1, \ldots, u_1)$, so we can recover
all the terms
of the Taylor expansion of $f$ by erasing all the higher order
variations.
\begin{equation} \label{eq:recover-taylor}
  \D f \Sbracket{x, u, 0, \ldots} =
  \Spairing[n]{\frac{1}{n!} \deriven{f}{n} (u, \ldots, u)}\,.
\end{equation}
Again, the other cases are still very relevant as they allow to
recover the compositionality of the Taylor expansion. We can check that
\[ \D g (\D f \Sbracket{x, u, 0, \ldots}) = \Spairing[n]{\sum_{m \in
      \mpart{n}} \frac{1}{m!} \deriven{g}{\Mscard{m}} \cdot
    \vec{y}_m} \]
where
$\vec{y} = \sequence[n]{\frac{1}{n!} \deriven{f}{n} \cdot (u, \ldots,
  u)}$.
We recognize above the terms of the Taylor expansion of $g \comp f$, in 
the sense that 
\begin{equation} \label{eq:Taylor-compositional-infinitary}
  \D g (\D f \Sbracket{x, u, 0, \ldots}) = \D (g \comp f) 
  \Sbracket{x, u, 0, \ldots} \, .
\end{equation}
\Cref{eq:Taylor-compositional-infinitary} above 
is a consequence of the Faà di Bruno formula (see~\cite{Fraenkel78}), which
states that for any $n \in \N^*$,
\[
  \deriven{(g \comp f)}{n} \cdot (u, \ldots, u) =  \sum_{m \in \mpart{n}}
  \frac{n!}{m!} \deriven{g}{\Mscard m} \cdot \vec{y}_m\,.
\]
Once again, this means that the Faà di Bruno formula expresses a form of functoriality of the 
Taylor expansion $\D$.

Finally, if $f$ is linear then 
$\derive{f} \cdot u = f(u)$ and $\deriven{f}{n}  (u, \ldots, u) = 0$. 
So for any linear $f \in \kleisliExp(X, Y)$ (that is,
$f = \Der \overline{f}$ for some $\overline{f} \in \categoryLL(X, Y)$)
we should have
\[ \D f (\Spairing{x_i})= \Spairing{f(x_i)} \, . \]
That is, $\D f = \Der{\S \overline{f}}$.
So $\D$ should extend the functor $\S$ to 
$\kleisliExp$\footnote{Very much as in the coherent differential setting of 
\cite{Ehrhard23-cohdiff}.}, in the sense of 
\cref{def:extension}.

\subsection{The axioms of Taylor expansion}

Let $\categoryLL$ be a $\Sigma$-additive resource category.
As motivated above, Taylor expansion should be seen as a functor $\D$
on $\kleisliExp$ that extends $\S$ to $\kleisliExp$. 
It is showed in \cref{sec:dl} that this notion of extension is deeply tied to the notion of 
distributive laws. So Taylor expansion should be a natural transformation 
$\Sdl : \oc \S \naturalTrans \S \oc$ that follows 
commutations typical of distributive laws.
\begin{remark}
  These commutations are exactly the same as the commutations of coherent
  differentiation~\cite{Ehrhard23-cohdiff}, except that the
  summability structure is now infinitary.
  Their meaning in coherent differentiation is well understood, see
  also~\cite{Walch23}.
  They should have a similar meaning in this new setting of Taylor
  expansion, but the underlying combinatorics is much more complicated and
  still slightly unclear.
  What we know for now is that these axioms indeed hold in our
  examples from LL for the exact same reasons that the axioms of
  coherent differentiation hold\footnote{Note however that some 
  models admit a coherent differentiation but not such coherent Taylor expansion,
  see \cref{sec:nucs}.}, 
  and that the functor $\D$ involved
  indeed correspond to the intuitive formula given in
  \cref{eq:Taylor-functor-formula}, see
  \cref{thm:Taylor-expansion-wrel}.
\end{remark}

The first axiom is called \ref{ax:Sdl-chain} and 
states that $\Sdl$ is a distributive law (\cref{sec:dl}) between the functor $\S$ and
the comonad $\Oc\_$.

\begin{equation*} \labeltext{($\Sdl$-chain)}{ax:Sdl-chain}
    \text{\ref{ax:Sdl-chain}} \quad
 \begin{tikzcd}
	{! \S X} & {\S !X} \\
	& {\S X}
	\arrow["{\Sdl_X}", from=1-1, to=1-2]
	\arrow["\der"', from=1-1, to=2-2]
	\arrow["{\S \der}", from=1-2, to=2-2]
\end{tikzcd} \quad
\begin{tikzcd}
	{!\S X} && {\S ! X} \\
	{!! \S X} & {! \S ! X} & {\S !! X}
	\arrow["{\Sdl_X}", from=1-1, to=1-3]
	\arrow["{\dig_{\S X}}"', from=1-1, to=2-1]
	\arrow["{! \Sdl_X}"', from=2-1, to=2-2]
	\arrow["{\Sdl_{!X}}"', from=2-2, to=2-3]
	\arrow["{\S \dig_X}", from=1-3, to=2-3]
\end{tikzcd}
\end{equation*}
By \cref{thm:coextension-and-dl}, \ref{ax:Sdl-chain} also means that 
$\S$ can be extended to a functor $\D$ on $\kleisliExp$ defined as
$\D f = (\S f) \compl \Sdl$. This functor $\D$ 
corresponds to the operator motivated in \cref{sec:Taylor-operator}.
So \ref{ax:Sdl-chain} should be understood as the higher order chain rule (Faà
di Bruno formula).

Next, the axiom \ref{ax:Sdl-local} means that $\Sproj_0$ is a 
morphism of distributive law between $\Sdl$ and $\id$, see 
\cref{def:codl-morphism}.

\begin{equation*} \labeltext{($\Sdl$-local)}{ax:Sdl-local}
    \text{\ref{ax:Sdl-local}} \quad
\begin{tikzcd}
	{! \S X} & {\S ! X} \\
	& {!X}
	\arrow["\Sdl_X", from=1-1, to=1-2]
	\arrow["{\Sproj_0}", from=1-2, to=2-2]
	\arrow["{! \Sproj_0}"', from=1-1, to=2-2]
\end{tikzcd}
\end{equation*}
By \cref{thm:coextension-and-dl-morphism}, \ref{ax:Sdl-local}
also means that
$\Sproj_0$ extends to a natural transformation
$\D \naturalTrans \idfun$ on $\kleisliExp$.
Note that this axiom is requested for $\Sproj_0$ and not for the
$\Sproj_i$ with \(i>0\).
As noted in \cite{Ehrhard23-cohdiff}, differentiation breaks the
symmetry between the components of the functor $\S$.

The axiom \ref{ax:Sdl-add} states that $\Sinj_0$ is a morphism of distributive laws
between $\id$ and $\Sdl$, and that $\SmonadSum$ is a morphism of 
distributive laws between the composition of
$\Sdl$ with itself and $\Sdl$.
\begin{equation*} \labeltext{($\Sdl$-add)}{ax:Sdl-add}
    \text{\ref{ax:Sdl-add}} \quad 
\begin{tikzcd}
	{!X} \\
	{!\S X} & {\S ! X}
	\arrow["{!\Sinj_0}"', from=1-1, to=2-1]
	\arrow["{\Sdl_X}"', from=2-1, to=2-2]
	\arrow["{\Sinj_0}", from=1-1, to=2-2]
\end{tikzcd} \quad 
\begin{tikzcd}
	{! \S^2 X} & {\S ! \S X} & {\S^2 ! X} \\
	{! \S X} && {\S ! X}
	\arrow["{\Sdl_{\S X}}", from=1-1, to=1-2]
	\arrow["{\S \Sdl_X}", from=1-2, to=1-3]
	\arrow["{! \SmonadSum_X}"', from=1-1, to=2-1]
	\arrow["{\Sdl_X}"', from=2-1, to=2-3]
	\arrow["{\SmonadSum_{!X}}", from=1-3, to=2-3]
\end{tikzcd}
\end{equation*}
By \cref{thm:coextension-and-dl-morphism}, this axiom means that $\Sinj_0$ and 
$\SmonadSum$ extend to natural transformations
$\Der \Sinj_0 : \Id \naturalTrans \D$ and $\Der \SmonadSum : \D^2 \naturalTrans \D$
on $\kleisliExp$, meaning that $\D$ inherits the monad structure of $\S$.
This axiom should be seen as the additivity
of the iterated derivatives in 
each of their coordinates.

\begin{remark} \label{rem:extension-comonad-KleisliS}
Dually, this axiom also means that 
$\Sdl$ is a distributive law between the functor 
$\oc$ and the monad $\monadS$, so
$\oc\_$ extends to a functor 
$\extension{\oc}\_$ on the Kleisli category 
$\kleisliS$ of $\S$. Then, \ref{ax:Sdl-chain} means 
that $\der$ and $\dig$ are morphisms of distributive laws,
so by \cref{thm:extension-and-dl}
the comonadic structure of $\oc\_$ extends to 
a comonadic structure on $\extension{\oc}\_$.
\end{remark}

Then, the axiom \ref{ax:Sdl-Schwarz} means that $\Sswap$ 
is a morphism of distributive laws
(again for the composition of $\Sdl$ with itself).
\begin{equation*} \labeltext{($\Sdl$-Schwarz)}{ax:Sdl-Schwarz}
    \text{\ref{ax:Sdl-Schwarz}} \quad
    \begin{tikzcd}
        {! \S^2 X} & {\S ! \S X} & {\S^2 ! X} \\
        {! \S^2 X} & {\S ! \S X} & {\S^2 ! X}
        \arrow["{\Sdl_{\S X}}", from=1-1, to=1-2]
        \arrow["{\S \Sdl_X}", from=1-2, to=1-3]
        \arrow["{! \Sswap_X}"', from=1-1, to=2-1]
        \arrow["{\Sdl_{\S X}}"', from=2-1, to=2-2]
        \arrow["{\S \Sdl_X}"', from=2-2, to=2-3]
        \arrow["{\Sswap_{!X}}", from=1-3, to=2-3]
    \end{tikzcd}
\end{equation*}
By \cref{thm:coextension-and-dl-morphism}, \ref{ax:Sdl-Schwarz} 
also means that $\Sswap$ extends to
a natural transformation $\Der \Sswap : \D^2 \naturalTrans \D^2$ 
on $\kleisliExp$.
\ref{ax:Sdl-Schwarz} can be interpreted as the Schwarz theorem that states that 
the higher order derivatives are symmetric. 

The next axiom, is not among the axioms given in 
\cite{Ehrhard23-cohdiff}, but as discussed 
Section 5.1 of the long version of \cite{Walch23} it should have been a part of it.
This axiom called \ref{ax:Sdl-lin} means that $\Slift$ is a morphism of 
distributive laws between $\Sdl$ and the composition of $\Sdl$ with itself.
\begin{equation*} \labeltext{($\Sdl$-lin)}{ax:Sdl-lin}
    \text{\ref{ax:Sdl-lin}} \quad
\begin{tikzcd}
	{!\S X} && {\S ! X} \\
	{! \S^2 X} & {\S ! \S X} & {\S^2 ! X}
	\arrow["{!\Slift_X}"', from=1-1, to=2-1]
	\arrow["{\Sdl_X}", from=1-1, to=1-3]
	\arrow["{\Slift_{!X}}", from=1-3, to=2-3]
	\arrow["{\Sdl_{\S X}}"', from=2-1, to=2-2]
	\arrow["{\S \Sdl_X}"', from=2-2, to=2-3]
\end{tikzcd}
\end{equation*}
By \cref{thm:coextension-and-dl-morphism}, \ref{ax:Sdl-lin} 
also means that
$\Slift$ extends to a natural transformation
$\Der \Slift : \D \naturalTrans \D^2$ on $\kleisliExp$.
Together with \ref{ax:Sdl-add}, this axiom means that the derivatives
are not only additive in their individual coordinates, but also
$\D$-linear in the sense of \cref{def:D-linear}\footnote{This explains the clash of terminology with
  \cite{Ehrhard23-cohdiff}, in which the axiom ($\Sdl$-lin)
  corresponds to our axiom \ref{ax:Sdl-add}.}.

The last axiom, \ref{ax:Sdl-with}, means that $\seelyTwo$ is a morphism
of distributive laws between the composition of 
$\Sdl$ with $\Sdist$, and the composition of $\Sdl$ 
with $(\SprodDist)^{-1}$ (recall that the structure of a lax symmetric 
monoidal monad can be seen as a distributive law, as explained in 
\cref{sec:monoidal-functor}).

\begin{equation*} \labeltext{($\Sdl$-$\with$)}{ax:Sdl-with}
    \text{\ref{ax:Sdl-with}} \quad
\begin{tikzcd}
	{! \S X \tensor ! \S Y} & {\S ! X \tensor \S ! Y} & {\S (!X \tensor !Y)} \\
	{!(\S X \with \S Y)} & {! \S (X \with Y)} & {\S ! (X \with Y)}
	\arrow["{\Sdl_X \tensor \Sdl_Y}", from=1-1, to=1-2]
	\arrow["{\Sdist_{!X, !Y}}", from=1-2, to=1-3]
	\arrow["{\seelyTwo_{\S X,\S Y}}"', from=1-1, to=2-1]
	\arrow["{\S \seelyTwo_{X,Y}}", from=1-3, to=2-3]
	\arrow["{! \SprodDist^{-1}}"', from=2-1, to=2-2]
	\arrow["{\Sdl_{X \with Y}}"', from=2-2, to=2-3]
\end{tikzcd}
\end{equation*}
By \cref{thm:extension-and-dl-morphism}, \ref{ax:Sdl-with}
means that $\seelyTwo$ extends to a natural transformation 
$\exclS \_ \tensorS \exclS \_ \naturalTrans \exclS(\_ \withS \_)$
(where $\exclS\_$ is the extension of $\oc\_$ to $\kleisliS$, see 
\cref{rem:extension-comonad-KleisliS}).
This extension provides a resource comonad $\exclS\_$ on $\kleisliS$, 
turning $\kleisliS$ into a model 
of intuitionist LL\footnote{It is then possible to prove that if $\categoryLL$ is 
a model of classical LL, then so is $\kleisliS$.}.
\ref{ax:Sdl-with} also implies that any morphism that is multilinear in the 
sense of LL is also $\D$-multilinear (see \cref{def:D-multilinear}).
More on this in \cref{sec:Taylor-from-ll}.

\begin{remark} \label{rem:Sdl-with}
In \cite{Ehrhard23-cohdiff}, the diagram below was stated as part
of the axiom \ref{ax:Sdl-with}.
\[ \begin{tikzcd}
		{! \S \top} && {\S ! \top} \\
		{! \top} & 1 & {\S 1}
		\arrow["{\Sdl_{\top}}", from=1-1, to=1-3]
		\arrow["{!0}"', from=1-1, to=2-1]
		\arrow["{(\seelyOne)^{-1}}"', from=2-1, to=2-2]
		\arrow["{\Sinj_0}"', from=2-2, to=2-3]
		\arrow["{\S (\seelyOne)^{-1}}", from=1-3, to=2-3]
	\end{tikzcd} \]
It turns out that this diagram always holds (assuming \ref{ax:Sdl-add}). 
Indeed, we know that $!0 = ! \final_{\S \top}$ is invertible of inverse $! \Sinj_0$,
see \cref{sec:summability-product-LL}.
So we can prove the diagram with the diagram chase below. The commutation 
$(a)$ is \ref{ax:Sdl-add} and the commutation $(b)$ is the naturality of $\Sinj_0$.
\[ 
\begin{tikzcd}
	{! \S \top} && {\S ! \top} \\
	{! \top} & 1 & {\S 1}
	\arrow["{!\Sinj_0}", from=2-1, to=1-1]
	\arrow["{\Sdl_{\top}}", from=1-1, to=1-3]
	\arrow[""{name=0, anchor=center, inner sep=0}, "{\Sinj_0}"{description}, from=2-1, to=1-3]
	\arrow["{(\seelyOne)^{-1}}"', from=2-1, to=2-2]
	\arrow["{\Sinj_0}"', from=2-2, to=2-3]
	\arrow["{\S (\seelyOne)^{-1}}", from=1-3, to=2-3]
	\arrow["{(a)}"{description}, draw=none, from=1-1, to=0]
	\arrow["{(b)}"{description}, draw=none, from=2-3, to=0]
\end{tikzcd} \]
\end{remark}

Recall (see~\cite{Mellies09}) that we can derive from the Seely 
isomorphisms a canonical cocommutative comonoid structure on 
all objects $\oc X$, as follows.
\begin{equation} \label{eq:weakening}
	\weak_X =
\begin{tikzcd}
	{!X} & {!\top} & 1
	\arrow["{!0}", from=1-1, to=1-2]
	\arrow["{(\seelyOne)^{-1}}", from=1-2, to=1-3]
\end{tikzcd}
\end{equation}
\begin{equation} \label{eq:contraction} 
	\contr_X = 
\begin{tikzcd} 
	{!X} & {!(X \with X)} & {!X \tensor !X}
	\arrow["{!\prodPair{\id}{\id}}", from=1-1, to=1-2]
	\arrow["{(\seelyTwo)^{-1}}", from=1-2, to=1-3]
\end{tikzcd}
\end{equation}
Then the axiom~\ref{ax:Sdl-with} induces a compatibility condition 
between $\weak, \contr$ and the distributive law $\Sdl$. 
It suggests that coherent 
differentiation and Taylor expansion would also work in other axiomatizations of 
models of LL, such as the linear categories of \cite{Bierman95}, we 
refer to \cite{Mellies09} for a 
definition.

\begin{proposition} \label{prop:ctr-wk-morphism-dl}
	The left diagram always hold, and the right diagram 
	is a consequence of \ref{ax:Sdl-with}.
\[ \begin{tikzcd}
	{!\S X} & {\S ! X} \\
	1 & {\S 1}
	\arrow["{\Sdl_X}", from=1-1, to=1-2]
	\arrow["{\weak_{\S X}}"', from=1-1, to=2-1]
	\arrow["{\Sinj_0}"', from=2-1, to=2-2]
	\arrow["{\S \weak_X}", from=1-2, to=2-2]
\end{tikzcd} \quad 
	\begin{tikzcd}
		{!\S X} && {\S ! X} \\
		{!\S X \tensor ! \S X} & {\S ! X \tensor \S ! X} & {\S (!X \tensor !X)}
		\arrow["\Sdl_X", from=1-1, to=1-3]
		\arrow["{\contr_{\S X}}"', from=1-1, to=2-1]
		\arrow["{\Sdl_X \tensor \Sdl_X}"', from=2-1, to=2-2]
		\arrow["{\Sdist_{!X, !X}}"', from=2-2, to=2-3]
		\arrow["{\S \contr_X}", from=1-3, to=2-3]
	\end{tikzcd} \]
\end{proposition}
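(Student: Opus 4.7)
For the left (weakening) diagram, I would unfold $\weak = (\seelyOne)^{-1} \comp \oc 0$ on both sides and reduce the claim to the commutation already recorded in \cref{rem:Sdl-with}. More precisely, starting from the right-top path,
\[
\S\weak_X \comp \Sdl_X = \S(\seelyOne)^{-1} \comp \S(\oc 0_X) \comp \Sdl_X,
\]
I would apply naturality of $\Sdl$ at the morphism $0_X : X \to \top$ to rewrite $\S(\oc 0_X) \comp \Sdl_X = \Sdl_{\top} \comp \oc\S 0_X$. Plugging in the identity of \cref{rem:Sdl-with}, this becomes $\Sinj_0 \comp (\seelyOne)^{-1} \comp \oc 0_{\S\top} \comp \oc\S 0_X$, and then the composite $0_{\S\top} \comp \S 0_X$ collapses to $0_{\S X}$ by uniqueness of maps into the terminal object, yielding $\Sinj_0 \comp \weak_{\S X}$.

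For the right (contraction) diagram, I would again unfold $\contr = (\seelyTwo)^{-1} \comp \oc\prodPair{\id}{\id}$ on both sides. Starting from the lower path,
\[
\Sdist_{\oc X, \oc X} \comp (\Sdl_X \tensor \Sdl_X) \comp \contr_{\S X},
\]
I would invoke \ref{ax:Sdl-with} (in its inverted form, using that $\seelyTwo$ is invertible) to rewrite $\Sdist \comp (\Sdl\tensor\Sdl) = \S(\seelyTwo)^{-1} \comp \Sdl_{X\with X} \comp \oc\SprodDist^{-1} \comp \seelyTwo$; the remaining $\seelyTwo \comp (\seelyTwo)^{-1}$ then cancels. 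The remaining task is to identify the composite $\oc\SprodDist^{-1} \comp \oc\prodPair{\id_{\S X}}{\id_{\S X}}$ with $\oc\S\prodPair{\id_X}{\id_X}$, after which a last naturality of $\Sdl$ at $\prodPair{\id}{\id}$ delivers $\S\contr_X \comp \Sdl_X$.

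The main obstacle is this identification $\SprodDist^{-1} \comp \prodPair{\id_{\S X}}{\id_{\S X}} = \S\prodPair{\id_X}{\id_X}$: it is precisely what connects the monoidal flavor of \ref{ax:Sdl-with} (stated via $\SprodDist$) to the comonoid structure of the contraction (stated via the diagonal). I would prove it by joint monicity of the $\Sproj_i$: using \cref{prop:prodSwap-inverse} to write $\Sproj_i \comp \SprodDist^{-1} = \Sproj_i \with \Sproj_i$, one computes
\[
\Sproj_i \comp \SprodDist^{-1} \comp \prodPair{\id_{\S X}}{\id_{\S X}} = (\Sproj_i \with \Sproj_i)\comp\prodPair{\id}{\id} = \prodPair{\Sproj_i}{\Sproj_i} = \prodPair{\id_X}{\id_X}\comp\Sproj_i,
\]
which matches $\Sproj_i \comp \S\prodPair{\id_X}{\id_X}$ by naturality of $\Sproj_i$ in $\categoryLLAdd = \categoryLL$. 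Beyond this, the argument is a routine assembly of naturality squares for $\Sdl$, so I expect no other serious difficulty.
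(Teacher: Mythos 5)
Your proof is correct and takes essentially the same route as the paper: the weakening square is obtained from the diagram of \cref{rem:Sdl-with} together with naturality of $\Sdl$ and uniqueness of maps into $\top$, and the contraction square from \ref{ax:Sdl-with} after unfolding $\contr$ and using naturality of $\Sdl$. Your explicit verification that $\SprodDist^{-1} \comp \prodPair{\id_{\S X}}{\id_{\S X}} = \S\prodPair{\id_X}{\id_X}$ is exactly the step the paper leaves implicit under ``unfolding the definitions and using naturality.''
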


\begin{proof} This is a straightforward consequence of the diagram of
\cref{rem:Sdl-with} and \ref{ax:Sdl-with}, unfolding the definition of
$\weak$ and $\contr$ and using naturality.
\end{proof}

The left diagram means that $\weak$ is a morphism of distributive laws
($\Sinj_0 : ! 1 \naturalTrans \S 1$ is the distributive law associated to the extension
of the constant functor $1$ on $\categoryLL$ as the constant functor $1$ on $\kleisliS$). 
The right diagram means that $\contr$ is a morphism of distributive laws between 
$\Sdl$ and the composition of $\Sdl$ with $\Sdist$.
By \cref{thm:extension-and-dl-morphism}, those diagrams mean that the contraction and weakening
extends to $\kleisliS$. The result of \cref{prop:ctr-wk-morphism-dl} is not surprising then,
since the weakening and the contraction on $\kleisliS$ can also be defined directly
from $\seelyOne$ and the extension of $\seelyTwo$ to $\kleisliS$.

\begin{definition} A \emph{Taylor expansion} in
	a $\Sigma$-additive resource category is a natural transformation
	$\Sdl : ! \S \naturalTrans \S !$ following 
	\ref{ax:Sdl-chain}, \ref{ax:Sdl-local}, \ref{ax:Sdl-add},
	\ref{ax:Sdl-Schwarz}, \ref{ax:Sdl-lin}, \ref{ax:Sdl-local}.
	A \emph{Taylor category} is a $\Sigma$-additive resource category
	equipped with a Taylor expansion.
\end{definition}

\begin{remark} \label{rem:n-ary-summability}
	It should be possible to define in a uniform way a notion of 
	$n$-ary summability structure $\S_n$ for any $n \in \N \cup \{ \infty \}$,
	as a summability structure which has 
	only projections $\Sproj_i$ for $i \in \interval{0}{n}$. 
	Then a Taylor expansion in an $n$-ary summable resource category
	would simply be a distributive law $\oc \S_n \naturalTrans \S_n \oc$
	defined exactly in the same way as above. This operation 
	should be seen as an order $n$ Taylor approximation.
	Then the coherent differentiation of \cite{Ehrhard23-cohdiff} would be 
	a particular case in which $n=1$, and the Taylor expansion in our article a 
	particular case in which $n = \infty$. 
	It should be interesting to see if a Taylor expansion at order $n$ induces 
	a Taylor expansion at order $m$ for $m < n$. This will be studied in future 
	work.
\end{remark}

We have not stated yet the fact that the morphisms in the category are analytic, 
in the sense that they coincide with their Taylor expansion.
 This is the role of the axiom
\ref{ax:Sdl-Taylor} below.

\begin{equation*} \labeltext{($\Sdl$-analytic)}{ax:Sdl-Taylor}
    \text{\ref{ax:Sdl-Taylor}} \quad
	\begin{tikzcd}
		{!\S X} & {\S ! X} \\
		& {!X}
		\arrow["{\Sdl_X}", from=1-1, to=1-2]
		\arrow["{\Ssum_X}", from=1-2, to=2-2]
		\arrow["{! \Ssum_X}"', from=1-1, to=2-2]
	\end{tikzcd}
\end{equation*}
The axiom \ref{ax:Sdl-Taylor} means that $\Ssum$ is a morphism of distributive laws between
$\Sdl$ and $\id$. By \cref{thm:coextension-and-dl-morphism}, 
\ref{ax:Sdl-Taylor} means that 
$\Ssum$ extends to a natural transformation 
$\Der \Ssum : \D \naturalTrans \idfun$. 
The combination of \ref{ax:Sdl-Taylor} and \ref{ax:Sdl-lin} 
means that $\Ssum$ and $\Slift$ extends to 
$\kleisliExp$, so $\D$ inherits the comonadic structure
of $\comonadS$.
Dually, \ref{ax:Sdl-lin} and \ref{ax:Sdl-Taylor} also mean 
that $\Sdl$ is a distributive law 
between the functor $\oc\_$ and the comonad $\comonadS$.

Recall that in \cref{sec:Taylor-operator} 
the Taylor expansion of $f$ at $x$ on variation $u$ could be defined as 
$\Der (\Ssum) \comp \D f \comp \Sbracket{x, u, 0, \ldots}$. So
the naturality of $\Der(\Ssum)$ implies that 
\[ \Der(\Ssum) \comp \D f \comp \Sbracket{x, u, 0, \ldots} = 
f \comp \Der(\Ssum) \comp \Sbracket{x, u, 0, \ldots} = f \comp (x+u) \, . \]
This exactly corresponds to the property that $f$ is equal to its Taylor expansion.

\begin{definition} \label{def:Taylor-category}
	A Taylor expansion is analytic if it satisfies \ref{ax:Sdl-Taylor}. 
	An \emph{analytic category} is a $\Sigma$-additive resource category 
	equipped with an analytic Taylor expansion.
      \end{definition}  

Although the notion of order $n$ Taylor category makes perfect sense,
an order $n$ analytic category may not be very interesting, as it would mean that the
morphisms are all polynomials of degree lower than $n$.

\begin{remark}
	\label{rk:Taylor-analytic-terminology}
	The purpose of our choice of terminology is to make a clear
	distinction between the infinitary setting of the present article and
	the finitary settings of~\cite{Ehrhard23-cohdiff,Walch23}. This is why we
	prefer to speak directly of a Taylor category: it is a category where
	any morphism has a Taylor expansion (involving all of its higher
	derivatives). 
	This Taylor expansion is provided by the endofunctor 
	\(\Tayfun\), which is much richer than a mere Taylor expansion and is
	a categoryfication of the Faà di Bruno formula, by need of
	functoriality.
	In such categories, the morphisms however are not necessarily equal
	to the infinite sum of all the terms of their Taylor expansion.

	We use the adjective ``analytic'' for the situation where any
	morphism is equal to the sum of all the terms of its Taylor
	expansion, following the standard mathematical terminology, with the
	slight difference that, in analysis, analyticity is a local concept
	whereas here, it is a global condition (it should be noted that 
	we have nothing like a topology which would allow to make it local).
  \end{remark}   

\begin{remark}  
	Except for \ref{ax:Sdl-local} (which is about the structure of the functor 
	$\D$) and \ref{ax:Sdl-with} (which is discussed in \cref{sec:Taylor-from-ll}), 
	the axioms of analytic categories are exactly the necessary and 
	sufficient conditions to extend the 
	structure of the bimonad $\S$ to $\kleisliExp$.
	This means that Taylor expansion essentially
        acts as a bimonad and can be framed in a very algebraic way,
        which is an interesting observation \emph{per se}, even if one
        does not care about the partiality of sums.

	In \cite{Kerjean23}, Taylor expansion is framed in some models
        of differential LL as a monad structure on $\oc\_$. The article conjectures 
		that this monadic structure is compatible with the resource 
		comonad structure of $\oc\_$, turning $\oc\_$ intro a bimonad. 
		This seems to be quite different from
        our bimonad $\S$, but a closer comparison is required.
\end{remark}

\section{Cartesian analytic categories}

\label{sec:cartesian-Taylor}

We provide in this section a direct axiomatization of Taylor 
expansion in any category $\category$.
We show how this expansion should interact with the cartesian 
closed structure of $\category$, whenever there is one.
Typically, $\category = \kleisliExp$ for some model $\categoryLL$
of LL, but the point of this axiomatization is that it is more general,
more compact, and does not depend at all on LL.
In particular, this axiomatization should provide direct models
for a $\lambda$-calculus featuring an internal operation of Taylor 
expansion.

The interplay between the LL axiomatization of Taylor expansion and the direct 
axiomatization is similar to the interplay between differential 
categories \cite{Blute06} and cartesian differential categories \cite{Blute09}.
On one hand, the direct axiomatization is simpler and more general.
On the other hand, the rich structure of models of LL 
brings many insights on the structure of Taylor expansion, as well as 
many examples. 
For example, \cref{sec:elementary} shows
that Taylor expansions often boils down to a simple $\oc$-coalgebra, and this 
provides a substantial source of concrete examples described in 
\cref{sec:examples-taylor,sec:elementary-Taylor-examples}.

\subsection{Taylor expansion in a left $\Sigma$-additive category}

The axiomatization of Taylor expansion on $\category$ is very similar
to the cartesian coherent differentiation of~\cite{Walch23}, except that the 
left summability structure is now infinitary.
We assume that $\category$ is a left $\Sigma$-additive category,
equipped with a left $\Sigma$-summability structure $(\S, \vect \Sproj)$.

\begin{definition} \label{def:D-linear}
  Let $\D$ be a map on morphisms such that for any
  $f \in \category(X, Y)$, $\D f \in \category(\S X, \S Y)$.  A
  morphism $h$ is $\D$-linear if it is $\Sigma$-additive and if
  $\Sproj_i \comp \D h = h \comp \Sproj_i$. That is, $\D h = \S h$.
\end{definition}

\begin{definition} 
    \labeltext{($\D$-chain)}{ax:D-chain}
    \labeltext{($\D$-add)}{ax:D-add} 
    \labeltext{($\D$-lin)}{ax:D-lin}
    \labeltext{($\D$-Schwarz)}{ax:D-Schwarz}
    \labeltext{($\D$-local)}{ax:D-local}
    \labeltext{($\D$-proj-lin)}{ax:D-proj-lin}
    \labeltext{($\D$-sum-lin)}{ax:D-sum-lin}
    An (infinitary) Taylor expansion on $\category$ is a map on morphisms $\D$
    such that for any $f \in \category(X, Y)$, $\D f \in \category(\S X, \S Y)$ and such that:
    \begin{enumerate}
        \item \ref{ax:D-chain} $\D$ is a functor
        \item \ref{ax:D-local} $\Sproj_0$ is a natural transformation
        \item \ref{ax:D-proj-lin} The projections $\Sproj_i$ are $\D$-linear
        \item \ref{ax:D-sum-lin} $\Ssum$ and $0$ are $\D$-linear 
        \item \ref{ax:D-add} $\Sinj_0 \in \category(X, \D X)$ 
        and $\SmonadSum \in \category(\D^2 X, \D X)$ are natural transformations
        \item \ref{ax:D-lin} $\Slift \in \category(\D X, \D^2 X)$ is a natural transformation
        \item \ref{ax:D-Schwarz} $\Sswap \in \category(\D^2 X, \D^2 X)$ is a natural transformation
    \end{enumerate}
    We recall that $\Sinj_0, \SmonadSum, \Slift, \Sswap$ are defined in 
    \cref{thm:bimonad-structure}, and $\Ssum = \sum_{i \in \N} \Sproj_i$.
\end{definition}

Again, assuming a suitable notion of $n$-ary summability structure, it should be possible 
to define an order $n$ Taylor expansion. The operator $\D f$ would perform the order 
$n$ Taylor approximation of $f$. 

\begin{definition} \labeltext{($\D$-analytic)}{ax:D-Taylor}
An analytic structure on $\category$ is a Taylor expansion 
such that $\Ssum \in \category(\D X, X)$ is natural. We call this property \ref{ax:D-Taylor}.
\end{definition}

We only assume in what follows that $\D$ is a map on morphism 
such that for any $f \in \category(X, Y)$, $\D f \in \category(\S X, \S Y)$.
Any use of the axioms of Taylor expansion will be made explicit.

\begin{proposition} \label{prop:differential-Spairing}
    Assuming \ref{ax:D-chain} and \ref{ax:D-proj-lin}, if $\sequence{f_i}$ 
    is summable then $\sequence{\D f_i}$ is summable and 
    $\Spairing{\D f_i} = \Sswap \comp \D \Spairing{f_i}$.
\end{proposition}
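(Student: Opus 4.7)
The plan is to exhibit $g = \Sswap \comp \D \Spairing{f_i} \in \category(\S X, \S^2 Y)$ as a witness for $\sequence{\D f_i}$; by \ref{def:summability-structure-2} together with joint monicity of the $\Sproj_i$, both the summability of $\sequence{\D f_i}$ and the claimed equality follow simultaneously. So the entire task reduces to verifying $\Sproj_i \comp g = \D f_i$ for every $i \in \N$.

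I will compute $\Sproj_i \comp g$ in three steps. First, by \cref{lemma:equation-Sswap}, we have $\Sproj_i \comp \Sswap = \S \Sproj_i$, hence
\[\Sproj_i \comp \Sswap \comp \D \Spairing{f_i} = \S \Sproj_i \comp \D \Spairing{f_i}.\]
Second, \ref{ax:D-proj-lin} says that each $\Sproj_i$ is $\D$-linear, which by \cref{def:D-linear} means $\D \Sproj_i = \S \Sproj_i$, so the right-hand side rewrites as $\D \Sproj_i \comp \D \Spairing{f_i}$. Third, \ref{ax:D-chain} gives functoriality of $\D$, so
\[\D \Sproj_i \comp \D \Spairing{f_i} = \D(\Sproj_i \comp \Spairing{f_i}) = \D f_i,\]
the last equality being the defining property of the witness $\Spairing{f_i}$.

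There is really no obstacle here: every ingredient is supplied directly by the stated hypotheses, the characterization of $\Sswap$ in \cref{lemma:equation-Sswap}, and the equivalent formulation $\D h = \S h$ of $\D$-linearity. The only thing worth being careful about is the typing of $g$: since $\D \Spairing{f_i} \in \category(\S X, \S \S Y)$ and $\Sswap$ is an endomorphism of $\S^2 Y$, the composite indeed lives in $\category(\S X, \S(\S Y))$, which is exactly the type required for a witness of an $\N$-indexed family in $\category(\S X, \S Y)$.
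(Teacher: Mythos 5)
Your proof is correct and follows exactly the paper's argument: the paper's own proof is the same one-line computation $\Sproj_i \comp \Sswap \comp \D\Spairing{f_i} = \S\Sproj_i \comp \D\Spairing{f_i} = \D\Sproj_i \comp \D\Spairing{f_i} = \D(\Sproj_i \comp \Spairing{f_i}) = \D f_i$, using \cref{lemma:equation-Sswap}, \ref{ax:D-proj-lin} and \ref{ax:D-chain}, with summability and the identity of witnesses then following from \ref{def:summability-structure-2} and joint monicity exactly as you argue.
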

   
\begin{proof} We have 
  \[ \Sproj_i \comp \Sswap \comp \D \Spairing{f_i}
    = \S \Sproj_i \comp \D \Spairing{f_i} = \D \Sproj_i \comp \D \Spairing{f_i}
    = \D (\Sproj_i \comp \Spairing{f_i}) = \D f_i \] 
    using \ref{ax:D-proj-lin} and \ref{ax:D-chain}.
\end{proof}

\begin{corollary} \label{prop:Spairing-linear}
  Assuming \ref{ax:D-chain} and \ref{ax:D-proj-lin}, if $\sequence{h_i}$ is a 
  family of $\D$-linear morphisms, then 
  $\Spairing{h_i}$ is $\D$-linear.
\end{corollary}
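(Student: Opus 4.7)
The plan is to show, for $h \defEq \Spairing{h_i}$ (which exists by the assumed summability of $\sequence{h_i}$), the two components of $\D$-linearity: that $h$ is $\Sigma$-additive, and that $\D h = \S h$.

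The first is immediate from \cref{prop:Spairing-additive}, since every $\D$-linear morphism is in particular $\Sigma$-additive, so $\sequence{h_i}$ is a summable family of $\Sigma$-additive morphisms whose witness $\Spairing{h_i}$ is therefore $\Sigma$-additive.

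For the linearity equation, the key input is \cref{prop:differential-Spairing}, which gives $\Spairing{\D h_i} = \Sswap \comp \D \Spairing{h_i}$. Substituting $\D h_i = \S h_i$ (which holds by $\D$-linearity of each $h_i$) and applying $\Sswap$ on the left, the involutivity of $\Sswap$ recorded in \cref{lemma:equation-Sswap} rearranges this to $\D \Spairing{h_i} = \Sswap \comp \Spairing{\S h_i}$. The remaining step is to identify $\Sswap \comp \Spairing{\S h_i}$ with $\S \Spairing{h_i}$, which I would do by joint monicity of the double projections $\Sproj_j \comp \Sproj_k$: unfolding the defining equations of $\Sswap$, of $\Spairing{\cdot}$ and of $\S$ on morphisms, both sides are characterized by $\Sproj_j \comp \Sproj_k \comp (-) = h_j \comp \Sproj_k$.

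The whole argument is a short diagram chase, and I do not expect any real obstacle. The only thing one must keep track of is the interplay between the two layers of indexing in $\S^2 Y$: the ``outer'' layer introduced by $\D$ applied to the witness, and the ``inner'' layer carried by the witness itself. This is precisely what $\Sswap$ is there to reconcile, which is why \cref{prop:differential-Spairing} plus involutivity of $\Sswap$ is the right tool for the job.
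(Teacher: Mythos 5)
Your proposal is correct and follows essentially the same route as the paper: $\Sigma$-additivity of $\Spairing{h_i}$ via \cref{prop:Spairing-additive}, then the chain $\D\Spairing{h_i} = \Sswap \comp \Spairing{\D h_i} = \Sswap \comp \Spairing{\S h_i} = \S\Spairing{h_i}$ using \cref{prop:differential-Spairing}, involutivity of $\Sswap$, and joint monicity of the $\Sproj_j \comp \Sproj_k$. Your verification that both sides reduce to $h_j \comp \Sproj_k$ is exactly the diagram chase the paper leaves implicit.
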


\begin{proof}
  We know that $\Spairing{h_i}$ is $\Sigma$-additive thanks to 
  \cref{prop:Spairing-additive}, and 
  \[ \D \Spairing{h_i} = \Sswap \comp \Spairing{\D h_i}
  = \Sswap \comp \Spairing{\S h_i}
  = \S \Spairing{h_i} \, .\] 
  The last equality can be easily checked by 
  joint monicity of the $\Sproj_i \comp \Sproj_j$.
  So $\Spairing{h_i}$ is $\D$-linear. 
\end{proof}

As usual, the axiom \ref{ax:D-sum-lin} can be framed either as a property of the summability
structure, or as an interaction between $\D$ and the sum.

\begin{proposition} \label{prop:D-preserves-sum}
  Assuming \ref{ax:D-chain} and \ref{ax:D-proj-lin}, \ref{ax:D-sum-lin} holds 
  if and only if for any family $\family{f_a \in \category(X, Y)}$, 
        $\D (\sum_{a \in A} f_a) \sumiff \sum_{a \in A} \D f_a$
\end{proposition}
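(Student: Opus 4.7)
The plan is to prove the two directions separately, mirroring the structure of \cref{prop:S-preserves-sum} (which handled the analogous question for $\S$ on $\Sigma$-additive morphisms). Throughout, I use \ref{ax:D-chain}, \ref{ax:D-proj-lin}, and — for the relevant direction — either \ref{ax:D-sum-lin} or the sum-preservation hypothesis.

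\emph{Easy direction} ($\Leftarrow$): Assume $\D (\sum_a f_a) \sumiff \sum_a \D f_a$. The empty sum shows $\D 0 = 0 = \S 0$, and $0$ is $\Sigma$-additive by distributivity on the empty family, so $0$ is $\D$-linear. Next, $\Ssum = \sum_i \Sproj_i$; each $\Sproj_i$ is $\D$-linear by \ref{ax:D-proj-lin}, hence $\Sigma$-additive, so $\Ssum$ is $\Sigma$-additive by \cref{prop:sum-additive}. Using the sum-preservation hypothesis, $\D$-linearity of the $\Sproj_i$, and \cref{prop:S-preserves-sum}, one computes $\D \Ssum = \sum_i \D \Sproj_i = \sum_i \S \Sproj_i = \S \Ssum$, showing $\Ssum$ is $\D$-linear.

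\emph{Main direction} ($\Rightarrow$): Assume \ref{ax:D-sum-lin}. Fix a family $\family{f_a \in \category(X,Y)}$. First assume $\sum_a f_a$ is summable. By \cref{prop:summability-alt} pick an injection $\phi : A \injection \N$ and a witness $f = \Sfamily{\vect f}{\phi}$ so that $f_a = \Sproj_{\phi(a)} \comp f$, $\Sproj_j \comp f = 0$ for $j \notin \phi(A)$, and $\sum_a f_a = \Ssum \comp f$. Using \ref{ax:D-chain} and the $\D$-linearity of the $\Sproj_i$, compute $\D f_a = \S \Sproj_{\phi(a)} \comp \D f$. Since $\sequence{\S \Sproj_i}$ is summable (apply \cref{prop:S-preserves-sum} to the summable family $\sequence{\Sproj_i}$), the reindexed family $\family{\S \Sproj_{\phi(a)}}$ is summable by \cref{prop:reindexing,prop:zero-neutral}, and then $\family{\D f_a}$ is summable by left distributivity.

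To identify the sums, I will compare both $\Sproj_k$-components. On one hand, using $\D$-linearity of $\Ssum$, $\Sproj_k \comp \D(\sum_a f_a) = \Sproj_k \comp \S \Ssum \comp \D f = \Ssum \comp \Sproj_k \comp \D f \sumsub \sum_j \Sproj_j \comp \Sproj_k \comp \D f$. On the other hand, $\Sproj_k \comp \sum_a \D f_a \sumsub \sum_a \Sproj_{\phi(a)} \comp \Sproj_k \comp \D f$. The key step, which I expect to be the main obstacle, is to show that the terms indexed by $j \notin \phi(A)$ in the first sum vanish. For such $j$, the identity $\Sproj_j \comp \Sproj_k \comp \D f = \Sproj_k \comp \S \Sproj_j \comp \D f = \Sproj_k \comp \D(\Sproj_j \comp f) = \Sproj_k \comp \D 0 = 0$ holds, where the last step crucially uses $\D$-linearity of $0$ (giving $\D 0 = 0$). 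Then \cref{prop:zero-neutral,prop:reindexing} identify the two sums, and joint monicity of the $\Sproj_k$ concludes.

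For the converse summability statement, suppose $\sum_a \D f_a$ is summable. Naturality of $\Sproj_0$ (\ref{ax:D-local}) gives $\Sproj_0 \comp \D f_a \comp \Sinj_0 = f_a \comp \Sproj_0 \comp \Sinj_0 = f_a$. Since $\Sproj_0$ is $\Sigma$-additive and left distributivity lets us precompose by $\Sinj_0$, the family $\family{f_a}$ is summable. This closes the direction.
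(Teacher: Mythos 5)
Your proof is correct in its two main parts, but it takes a different route from the paper's on the forward direction and it leans on an axiom the statement does not grant you in its last step. For the easy direction ($\Leftarrow$) you do essentially what the paper does: empty family gives $\D 0 = 0$, and $\D \Ssum = \sum_i \D\Sproj_i = \sum_i \S\Sproj_i = \S\Ssum$ via \cref{prop:S-preserves-sum}. For the main direction, the paper argues more compactly: it writes $\sum_a f_a = \Ssum \comp \Spairing{f_i'}$ for the reindexed family, applies \ref{ax:D-chain}, $\D$-linearity of $\Ssum$, and the identity $\Spairing{\D f_i'} = \Sswap \comp \D\Spairing{f_i'}$ of \cref{prop:differential-Spairing}, and then uses $\Ssum_{\S X}\comp\Sswap = \S\Ssum$ (from \cref{prop:Sswap-distributive}) to collapse the chain — no component-wise reasoning needed. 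You instead establish summability of $\family{\D f_a}$ directly from $\D f_a = \S\Sproj_{\phi(a)}\comp\D f$ and left distributivity, and then identify the two sums by comparing $\Sproj_k$-components and invoking joint monicity, with the off-image components killed by $\D(\Sproj_j\comp f) = \D 0 = 0$. This is a valid alternative; it is longer but avoids \cref{prop:differential-Spairing} and the flip $\Sswap$ altogether, at the cost of redoing by hand the bookkeeping those lemmas package.

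The genuine issue is your final paragraph. To get "$\sum_a \D f_a$ summable $\Rightarrow$ $\sum_a f_a$ summable" you recover $f_a$ as $\Sproj_0\comp\D f_a\comp\Sinj_0$, which uses the naturality of $\Sproj_0$, i.e.\ \ref{ax:D-local} — but the proposition only assumes \ref{ax:D-chain}, \ref{ax:D-proj-lin} and (for this direction) \ref{ax:D-sum-lin}, and \ref{ax:D-local} is not derivable from these. Under the stated hypotheses there is no evident way to reconstruct $f_a$ from $\D f_a$ (that is exactly what \ref{ax:D-local} or \ref{ax:D-Taylor} would provide), so this step is not available as written. You were right to notice that the "$\sumiff$" in the statement creates this obligation — the paper's own chain, whose middle equalities presuppose the witness $\Spairing{f_i'}$ and hence summability of the $f$-family, really only delivers the $\sumsub$ direction — but discharging it by importing \ref{ax:D-local} changes the hypotheses of the proposition rather than proving it.
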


\begin{proof} 
  Assume that \ref{ax:D-sum-lin} holds.
  Let $\phi : A \injection \N$ be an injection, and let 
  $\sequence{f_i'} = \Famact{\phi}{\vect f}$.
  Observe that $\sequence{\D f_i'} = \Famact{\phi}{\family{\D f_a}}$ 
  because $\D 0 = 0$ by assumption.
  Then \begin{align*}
  \D \left( \sum_{a \in A} f_a \right) 
  & \sumiff \D \left(\sum_{i \in \N} f_i' \right) \tag*{by \cref{prop:reindexing}}\\
  &= \D (\Ssum \comp \Spairing{f_i'}) \\
  &= \D \Ssum \comp \D \Spairing{f_i'} \tag*{by \ref{ax:D-chain}} \\
  &= \S \Ssum \comp \Sswap \comp \Spairing{\D f_i'} \tag*{by assumption} \\
  &= \Ssum \comp \Spairing{\D f_i'} \tag*{by \cref{prop:Sswap-distributive}} \\
  &= \sum_{i \in \N} (\D f_i') \\
  & \sumiff \sum_{a \in A} (\D f_a) \tag*{by \cref{prop:reindexing}.}
  \end{align*}

  Conversely, assume that for all family 
  $\family{f_a}$, $\D (\sum_{a \in A} f_a) \sumiff \sum_{a \in A} (\D f_a)$.
  Then $\D 0 = 0$ follows from the fact that $0$ is the sum over the empty family,
  and
  \[ \D \sigma = \D \left(\sum_{i \in \N} \Sproj_i\right) 
  \sumiff \sum_{i \in \N} \D \Sproj_i 
  = \sum_{i \in \N} \S \Sproj_i 
  \sumiff \S \left(\sum_{i \in \N} \Sproj_i \right)
  = \S \Ssum \]  
  where the equivalence $\sum_{i \in \N} \S \Sproj_i 
  \sumiff \S \sum_{i \in \N} \Sproj_i$ follows from \cref{prop:S-preserves-sum}.
  So $\Ssum$ is $\D$-linear.
\end{proof}

\begin{corollary} \label{prop:sum-linear}
  Assuming \ref{ax:D-chain}, \ref{ax:D-proj-lin} and \ref{ax:D-sum-lin},
  if $\family{h_a}$ is a summable family of $\D$-linear morphisms, then 
  $\sum_{a \in A} h_a$ is $\D$-linear.
\end{corollary}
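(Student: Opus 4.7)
The plan is to verify the two defining conditions of $\D$-linearity (see \cref{def:D-linear}) for the morphism $h \defEq \sum_{a \in A} h_a$, namely that $h$ is $\Sigma$-additive and that $\D h = \S h$. Each will follow by combining a ``sum-preservation'' result for the corresponding functor with the hypothesis that each $h_a$ is $\D$-linear.

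First I would observe that $\Sigma$-additivity of $h$ is immediate from \cref{prop:sum-additive}, since each $h_a$ is assumed $\Sigma$-additive (being $\D$-linear) and $\family{h_a}$ is summable. This handles the first half of the definition with no further work.

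For the equation $\D h = \S h$, the key step is to push $\D$ through the sum and $\S$ past the sum in the opposite direction. Specifically, by \cref{prop:D-preserves-sum} (which requires \ref{ax:D-chain}, \ref{ax:D-proj-lin}, and \ref{ax:D-sum-lin}, all of which we assume), the family $\family{\D h_a}$ is summable and
\[
\D\!\left(\sum_{a \in A} h_a\right) = \sum_{a \in A} \D h_a.
\]
Since each $h_a$ is $\D$-linear, $\D h_a = \S h_a$ for all $a$, so the right-hand side equals $\sum_{a \in A} \S h_a$. Finally, by \cref{prop:S-preserves-sum} applied to the summable family of $\Sigma$-additive morphisms $\family{h_a}$, this in turn equals $\S(\sum_{a \in A} h_a) = \S h$, yielding $\D h = \S h$ as required.

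There is no real obstacle here: the statement is essentially a packaging of two sum-preservation lemmas that have already been established, and the only mildly delicate point is to make sure the chain of equalities between sums is legitimate, i.e.\ that each intermediate family is actually summable so that the ``$\sumiff$'' in \cref{prop:D-preserves-sum,prop:S-preserves-sum} can be invoked in the direction we need. Summability of $\family{h_a}$ is given by hypothesis, summability of $\family{\S h_a}$ comes from \cref{prop:S-preserves-sum}, and summability of $\family{\D h_a}$ then follows via $\D$-linearity of each $h_a$, so the composition of the equalities is well-defined.
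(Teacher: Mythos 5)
Your proof is correct and follows essentially the same route as the paper's: the paper's argument is exactly the chain $\D \sum_{a} h_a \sumiff \sum_a \D h_a = \sum_a \S h_a \sumiff \S \sum_a h_a$ via \cref{prop:D-preserves-sum,prop:S-preserves-sum}, with the $\Sigma$-additivity of the sum (your first step, via \cref{prop:sum-additive}) left implicit. Your extra care about which families are summable is welcome but not a deviation in method.
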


\begin{proof} This is a direct consequence of \cref{prop:D-preserves-sum,prop:S-preserves-sum}.
\[ \D \sum_{a \in A} h_a \sumiff \sum_{a \in A} \D h_a 
= \sum_{a \in A} \S h_a \sumiff \S \sum_{a \in A} h_a \, . \qedhere \]
\end{proof}

\begin{proposition} \label{prop:composition-linear}
Assuming \ref{ax:D-chain}, the composition of two $\D$-linear 
morphism is also $\D$-linear and $\id$ is $\D$-linear.
\end{proposition}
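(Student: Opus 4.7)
The plan is to use the fact that $\D$-linearity has two parts: $\Sigma$-additivity and the equation $\D h = \S h$. Since both $\Sigma$-additivity and the identity-plus-composition equations are preserved under functors, the result will follow immediately from \ref{ax:D-chain} (functoriality of $\D$) together with two already-established facts: first, that $\Sigma$-additive morphisms are closed under composition and contain the identities (noted right after the definition of $\categoryAdd$), and second, that $\S$ is a functor on $\categoryAdd$ (\cref{prop:category-additive}).

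Concretely, I would first dispatch the identity. The identity $\id_X \in \category(X,X)$ is trivially $\Sigma$-additive, and by functoriality of $\D$ (\ref{ax:D-chain}) together with functoriality of $\S$ on $\categoryAdd$, both $\D \id_X$ and $\S \id_X$ coincide with $\id_{\S X}$, so $\D \id_X = \S \id_X$, giving $\D$-linearity.

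For composition, let $h \in \category(X,Y)$ and $h' \in \category(Y,Z)$ be $\D$-linear. Then $h$ and $h'$ are $\Sigma$-additive, so $h' \comp h$ is $\Sigma$-additive as well. Using \ref{ax:D-chain} and the assumed identities $\D h = \S h$, $\D h' = \S h'$, we compute
\[
  \D(h' \comp h) = \D h' \comp \D h = \S h' \comp \S h = \S(h' \comp h),
\]
where the last equality is functoriality of $\S$ on $\categoryAdd$ (\cref{prop:category-additive}), applicable because $h$ and $h'$ are $\Sigma$-additive. Hence $h' \comp h$ is $\D$-linear. There is no real obstacle here; the only minor point to be careful about is making sure that $\S$'s functoriality is invoked on morphisms already known to be $\Sigma$-additive, since $\S$ is a functor on $\categoryAdd$ rather than on all of $\category$.
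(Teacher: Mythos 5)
Your proof is correct and essentially the same as the paper's: both reduce everything to functoriality of $\D$ plus closure of $\Sigma$-additive morphisms under identity and composition. The only cosmetic difference is that the paper checks the equation $\Sproj_i \comp \D(h'\comp h) = h'\comp h\comp\Sproj_i$ componentwise via the projections, whereas you invoke the already-established functoriality of $\S$ on $\categoryAdd$ (whose proof is that same componentwise computation), with the correct caveat that it applies because $h$ and $h'$ are $\Sigma$-additive.
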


\begin{proof} The identity is $\Sigma$-additive, and 
  by \ref{ax:D-chain}, $\Sproj_i \comp \D \id 
  = \Sproj_i \comp \id = \Sproj_i = \id \comp \Sproj_i$ 
  so $\id$ is $\D$-linear.
  If $h \in \category(X, Y)$ and $h' \in \category(Y, Z)$ are 
    $\D$-linear, then $h' \comp h$ is $\Sigma$-additive by \cref{prop:category-additive}.
    Furthermore, 
    \[ \Sproj_i \comp \D (h' \comp h)
    = \Sproj_i \comp \D h' \comp \D h 
    = h' \comp \Sproj_i \comp \D h 
    = h' \comp h \comp \Sproj_i \] 
    so $h' \comp h$ is $\D$-linear.
\end{proof}

\begin{definition} Let $\categoryLin$ be the category with the same objects 
as $\category$ and whose morphisms are the $\D$-linear morphisms. The identity 
and the composition are the same as in $\category$. 
\end{definition}
Observe that
$\categoryLin$ is a subcategory of $\categoryAdd$.
Then the axioms \ref{ax:D-chain}, \ref{ax:D-proj-lin}, \ref{ax:D-sum-lin}, 
and \cref{prop:Spairing-additive,prop:sum-linear}
ensure that $\categoryLin$ is a $\Sigma$-additive category whose sum coincides 
with the sum in $\category$, and that 
$(\S, \vect \Sproj)$ is a $\Sigma$-summability structure 
on $\categoryLin$ whose witnesses coincide with the witnesses in $\category$.

\begin{proposition}
    Assuming \ref{ax:D-chain}, \ref{ax:D-proj-lin} and \ref{ax:D-sum-lin},
    $\Sinj_i$, $\SmonadSum$, $\Sswap$, $\Slift$ and $\Ssum$
    defined in~\cref{thm:bimonad-structure} are all $\D$-linear
\end{proposition}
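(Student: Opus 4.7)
The plan is to express each of the four morphisms $\Sinj_i$, $\SmonadSum$, $\Slift$ and $\Sswap$ in terms of iterated applications of $\Spairing{-}$, composition, and finite summation applied to morphisms that are already known to be $\D$-linear, then invoke the three compositional tools \cref{prop:Spairing-linear}, \cref{prop:sum-linear} and \cref{prop:composition-linear} to propagate $\D$-linearity. As starting material: $\Ssum$ and $0$ are $\D$-linear directly by \ref{ax:D-sum-lin}, the $\Sproj_i$ are $\D$-linear by \ref{ax:D-proj-lin}, and $\id$ is $\D$-linear by \cref{prop:composition-linear}. This already dispatches $\Ssum$.

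For $\Sinj_i$: the defining equation $\Sproj_j \comp \Sinj_i = \Kronecker{i}{j} \id$ from \cref{thm:bimonad-structure} exhibits $\Sinj_i = \Spairing[j]{\Kronecker{i}{j} \id}$ as the witness of a family whose entries are each either $\id$ or $0$, hence $\D$-linear; \cref{prop:Spairing-linear} concludes. For $\SmonadSum$: the equation $\Sproj_i \comp \SmonadSum = \sum_{j=0}^{i} \Sproj_{i-j} \comp \Sproj_j$ exhibits each component as a \emph{finite} sum of compositions of two projections, hence $\D$-linear by \cref{prop:composition-linear} and \cref{prop:sum-linear}; another application of \cref{prop:Spairing-linear} shows that $\SmonadSum = \Spairing[i]{\sum_{j=0}^{i} \Sproj_{i-j} \comp \Sproj_j}$ is itself $\D$-linear.

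For $\Slift$: having already established that $\Sinj_i$ is $\D$-linear, we use $\Slift = \Spairing[i]{\Sinj_i \comp \Sproj_i}$ (as given in \cref{thm:bimonad-structure}): each component is a composition of two $\D$-linear morphisms, and \cref{prop:Spairing-linear} gives the conclusion. For $\Sswap$: the characterization $\Sproj_j \comp \Sproj_i \comp \Sswap = \Sproj_j \comp \Sproj_i$ can be read as $\Sproj_i \comp \Sswap = \Spairing[j]{\Sproj_j \comp \Sproj_i}$, so each $\Sproj_i \comp \Sswap$ is a witness of a family of $\D$-linear compositions, hence $\D$-linear by \cref{prop:Spairing-linear}; applying the same lemma once more to the family $\sequence[i]{\Sproj_i \comp \Sswap}$ yields that $\Sswap$ itself is $\D$-linear.

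I do not foresee any real obstacle: this is a systematic verification rather than a substantive argument, since the hard work has been done in building the closure properties of $\D$-linear morphisms. The only delicate point is sequencing the cases correctly, so that each step relies only on $\D$-linearity facts already proved: treat $\Sinj_i$ before $\Slift$ (which uses $\Sinj_i$), and treat $\Sswap$ independently via its double-witness description.
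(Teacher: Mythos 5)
Your proof is correct and follows essentially the same route as the paper, whose own proof is just the one-line observation that all five morphisms are tuplings of (finite sums of) composites of the $\Sproj_i$'s, $\id$ and $0$, hence $\D$-linear by \cref{prop:Spairing-linear,prop:sum-linear,prop:composition-linear}; your write-up simply makes that explicit case by case. (One cosmetic slip: the characterization of $\Sswap$ yields $\Sproj_i \comp \Sswap = \Spairing[j]{\Sproj_i \comp \Sproj_j}$ rather than $\Spairing[j]{\Sproj_j \comp \Sproj_i}$, but either way each component is a composite of projections, so your argument is unaffected.)
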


\begin{proof} These morphisms are all tuplings of sums and composition 
  of $\Sproj_i$'s and $0$, which are $\D$-linear by assumption, 
  so they are $\D$-linear by 
  \cref{prop:Spairing-linear,prop:sum-linear,prop:composition-linear} above. 
\end{proof}

As a result, $\D \Sinj_i = \S \Sinj_i$, $\D \SmonadSum = \S \SmonadSum$,
$\D \Sswap = \S \Sswap$, $\D \Slift = \S \Slift$ and $\D \Ssum = \S \Ssum$.
In particular, all the diagrams turning $\S$ into a bimonad also hold when 
replacing $\S$ with $\D$. 
So the axioms of an analytic structure except for \ref{ax:D-local} 
are exactly the conditions allowing to turn $\D$ into a bimonad 
on $\category$ that extends the bimonad $\S$ in $\categoryLin$.

\subsection{Interaction with the cartesian structure}

\label{sec:compatibility-product}

We assume that $\category$ is a cartesian category.  The notations on
the cartesian product will be the same as those of
\cref{sec:summable-resource-category}. This section is a
straightforward adaption of \cite{Walch23} to infinitary summability
structures.
For the rest of this section, all indexing sets $I$ are universally
quantified over the sets for which the categorical product $\withFam$
is well-defined. In particular, the category may have countable
products or not.

\subsubsection{Sums and cartesian product}
\label{sec:product-summability}

This section is a generalization of \cref{sec:summability-product-LL}
to a situation in which the category is only left $\Sigma$-additive.

\begin{definition}
  A cartesian left $\Sigma$-additive category $\categoryLL$ is a left 
  $\Sigma$-additive category 
  that is cartesian, and such that the sum is compatible with the cartesian product in 
  the sense that for any families $\family{f_a^i \in \categoryLL(X, Y_i)}$,
  \begin{equation} \label{eq:sum-pairing}
    \sum_{a \in A} \prodPairing{f_a^i} \sumiff \prodPairing{\sum f_a^i} \, .
  \end{equation}
 \end{definition}

 \begin{proposition} \label{prop:left-sum-cartesian}
  For any left $\Sigma$-additive category $\category$ with a cartesian product,
  the projections $\prodProj_i$ are $\Sigma$-additive if and only if
  \begin{equation} \label{eq:proj-additive}
    \sum_{a \in A} \prodPairing{f_a^i} \sumsub \prodPairing{\sum f_a^i} \, .
  \end{equation}
  As a result, the following assertions are equivalent. \begin{enumerate}
  \item $\categoryLL$ is cartesian left $\Sigma$-additive;
  \item the projections $\prodProj_i$ are $\Sigma$-additive, 
  and if for all $i \in I$, $\family{f_a^i \in \categoryLL(X, Y_i)}$ is summable, 
  then $\family{\prodPairing{f_a^i}}$ is summable.
  \end{enumerate}
\end{proposition}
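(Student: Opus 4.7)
The plan is to prove the first equivalence (between additivity of the projections and the subsumption \cref{eq:proj-additive}) and then derive the equivalence of (1) and (2) by splitting \cref{eq:sum-pairing} into its two subsumptions and analyzing when each side is defined.

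For the first equivalence, the forward implication is essentially the same argument as in \cref{prop:sum-cartesian}: assuming the $\prodProj_i$ are $\Sigma$-additive, if $\family{\prodPairing{f_a^i}}$ is summable then applying $\prodProj_i$ to its sum gives, by additivity of $\prodProj_i$, that $\family{f_a^i}$ is summable with sum $\prodProj_i \comp \sum_a \prodPairing{f_a^i}$ for every $i$, and then uniqueness of the tupling yields the subsumption. For the converse, assuming \cref{eq:proj-additive}, I take a summable family $\family{g_a \in \category(X, \withFam_j Y_j)}$ and write $g_a = \prodPairing{\prodProj_j \comp g_a}$ by uniqueness of the tupling. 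Then $\family{\prodPairing{\prodProj_j \comp g_a}}$ is summable, so by \cref{eq:proj-additive} the tupling $\prodPairing{\sum_a \prodProj_j \comp g_a}$ is defined, which forces each family $\family{\prodProj_i \comp g_a}$ to be summable. Post-composing the subsumption equality with $\prodProj_i$ then gives $\prodProj_i \comp \sum_a g_a = \sum_a \prodProj_i \comp g_a$, establishing additivity of each projection.

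For the second equivalence, the direction (1) $\Rightarrow$ (2) is immediate: the equation \cref{eq:sum-pairing} subsumes both directions, the ``$\sumsub$'' direction gives additivity of projections via the first equivalence, and the ``$\sumsup$'' direction (read from right to left) gives the summability condition, since $\prodPairing{\sum_a f_a^i}$ is defined as soon as each $\sum_a f_a^i$ is defined. For (2) $\Rightarrow$ (1), I split \cref{eq:sum-pairing} into two subsumptions: if $\family{\prodPairing{f_a^i}}$ is summable, then additivity of projections yields the ``$\sumsub$'' direction via the first equivalence; conversely, if $\prodPairing{\sum_a f_a^i}$ is defined, then by the very definition of the tupling, each $\family{f_a^i}$ is summable, and the summability hypothesis in (2) then ensures that $\family{\prodPairing{f_a^i}}$ is summable, after which equality of the sums follows again from the first equivalence.

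I do not anticipate any real obstacle; the proof is entirely bookkeeping around the definition of the tupling and the two ``directions'' of the $\sumiff$ relation. The only mildly delicate point is the converse half of the first equivalence, where one has to be careful not to assume summability of $\family{\prodProj_i \comp g_a}$ before proving it, and instead to deduce it from the fact that the tupling $\prodPairing{\sum_a \prodProj_j \comp g_a}$ is defined, which itself follows from the inclusion \cref{eq:proj-additive} applied to $g_a = \prodPairing{\prodProj_j \comp g_a}$.
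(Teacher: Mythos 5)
Your proof is correct and takes essentially the same route as the paper: the forward direction is the argument of \cref{prop:sum-cartesian} (apply the additive projections to the sum of tuplings and conclude by uniqueness of the tupling), and the converse is exactly the paper's computation $\prodProj_i \comp \sum_{a} g_a = \prodProj_i \comp \sum_a \prodPairing{\prodProj_j \comp g_a} \sumsub \sum_a \prodProj_i \comp g_a$, with the summability of $\family{\prodProj_i \comp g_a}$ extracted, as you note, from the definedness of the right-hand tupling. Your explicit bookkeeping for the equivalence of (1) and (2) just spells out what the paper leaves implicit.
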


\begin{proof}
The forward implication has the same proof as \cref{prop:sum-additive}. The only additional 
thing to check is that the additivity of the projections follows from 
\cref{eq:proj-additive}. We have
\[ \prodProj_i \comp \left( \sum_{a \in A} f_a \right) 
= \prodProj_i \comp \left(\sum_{a \in A} \prodPairing{\prodProj_i \comp f_a} \right) 
\sumsub \prodProj_i \comp \prodPairing{\sum_{a \in A} \prodProj_i \comp f_a} 
= \sum_{a \in A} \prodProj_i \comp f_a \, . \qedhere \]
\end{proof}

Again, the compatibility between sum and product can also be 
written as a property of the summability structure. 

\begin{definition}  \labeltext{($\S$-$\with$)}{ax:S-with-left}
Assume that $\category$ is equipped with a left $\Sigma$-summability
structure.
This left summability structure satisfies \ref{ax:S-with-left} if the projections 
$\prodProj_i$ 
are $\Sigma$-additive and if
$\sequence[j]{\withFam \Sproj_j}$ is summable.
\end{definition}

When the left summability structure is a summability structure, 
\ref{ax:S-with-left} coincides with the property \ref{ax:S-with} seen in 
\cref{sec:summability-product-LL}, this is why the names are the same.

\begin{proposition} \label{prop:S-with-equivalent-left}
  Assume that $\category$ is a left $\Sigma$-additive category, is cartesian, and 
  is equipped with a left $\Sigma$ summability structure.
  The following assertions are equivalent
  \begin{enumerate}
      \item the summability structure satisfies \ref{ax:S-with-left};
      \item $\category$ is cartesian left $\Sigma$-additive.
  \end{enumerate}
\end{proposition}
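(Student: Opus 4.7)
The plan is to follow the structure of the proof of \cref{prop:S-fun-equivalent} (and of \cref{prop:S-sm-dist-equivalent}), handling the two implications separately and using \cref{prop:summability-alt} to translate back and forth between summability of families and the existence of witnesses in $\category(-, \S -)$.

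For the implication $(1) \Rightarrow (2)$, the $\Sigma$-additivity of the $\prodProj_i$ is part of the hypothesis \ref{ax:S-with-left}, so by \cref{prop:left-sum-cartesian} it remains only to show that whenever $\family{f_a^i \in \category(X, Y_i)}$ is summable for every $i \in I$, the tupled family $\family{\prodPairing[i]{f_a^i}}$ is summable. I would pick a single injection $\phi : A \injection \N$ (working for all $i$ simultaneously, which is possible because the indexing set $A$ does not depend on $i$). \cref{prop:summability-alt} yields for every $i$ a morphism $\Sfamily{\vect f^i}{\phi} \in \category(X, \S Y_i)$ with the expected projections. The candidate witness for $\Famact{\phi}\family{\prodPairing[i]{f_a^i}}$ is then
\[
g \defEq \Spairing[j]{\withFam \Sproj_j} \comp \prodPairing[i]{\Sfamily{\vect f^i}{\phi}} \in \category\bigl(X, \S \withFam Y_i\bigr),
\]
where $\Spairing[j]{\withFam \Sproj_j}$ exists by \ref{ax:S-with-left}. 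The verification $\Sproj_k \comp g = \prodPairing[i]{f_a^i}$ when $k = \phi(a)$, and $=0$ otherwise, is routine: compute $\Sproj_k \comp \Spairing[j]{\withFam \Sproj_j} = \withFam \Sproj_k$, push through the tupling and use that $\prodPairing[i]{0} = 0$ (which holds because the $\prodProj_i$ are $\Sigma$-additive, by \cref{prop:left-sum-cartesian}). Applying \cref{prop:summability-alt} in the reverse direction gives the claimed summability.

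For the implication $(2) \Rightarrow (1)$, from cartesian left $\Sigma$-additivity I get via \cref{prop:left-sum-cartesian} that the projections $\prodProj_i$ are $\Sigma$-additive. It then remains to show that $\sequence[j]{\withFam \Sproj_j}$ is summable. For each fixed $i$, the family $\sequence[j]{\Sproj_j \comp \prodProj_i}$ is summable: the family $\sequence[j]{\Sproj_j}$ is summable by \ref{def:summability-structure-2} (with witness $\id_{\S Y_i}$), and left distributivity of the sum over composition applied on the right with $\prodProj_i$ preserves this summability. Condition $(2)$ then yields summability of the tupled family $\sequence[j]{\prodPairing[i]{\Sproj_j \comp \prodProj_i}}$, which is precisely $\sequence[j]{\withFam \Sproj_j}$.

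There is no real obstacle here, the argument is a mechanical transposition of the scheme already used for \cref{prop:S-sm-dist-equivalent} and \cref{prop:S-fun-equivalent}: in both directions one just needs to observe that the universal property of the cartesian product commutes with the tupling of witnesses, exactly as the universal property of the internal hom did in the monoidal closed case. The only minor care is to use the same injection $\phi$ for every $i \in I$ in the first direction, which is legitimate since $A$ is common to all the families.
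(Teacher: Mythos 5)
Your proof is correct and follows essentially the same route as the paper, which simply refers to the proofs of \cref{prop:S-with-equivalent} and \cref{prop:S-fun-equivalent}: reduce via \cref{prop:left-sum-cartesian}, build the witness for the tupled family by composing $\Spairing[j]{\withFam \Sproj_j}$ with the tupling of the individual witnesses, and in the converse direction obtain summability of $\sequence[j]{\withFam \Sproj_j}$ from left distributivity applied to $\sequence[j]{\Sproj_j \comp \prodProj_i}$. Your handling of the points the paper leaves implicit (a single injection $\phi$ for all $i$, and $\prodPairing{0}=0$ via additivity of the projections) is accurate.
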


\begin{proof}
  Same as \cref{prop:S-with-equivalent}
\end{proof}

Assuming that the projections are $\Sigma$-additive, we can define as in 
\cref{eq:SprodDist} the morphism
\begin{equation} \label{eq:DprodDist}
  \DprodDist = \prodPairing{\S \prodProj_i} \in 
  \category(\S (\withFam X_i), \withFam \S X_i) \, .
\end{equation}

\begin{proposition} \label{prop:prodSwap-inverse-left}
  Assume that $\category$ is equipped with a left $\Sigma$-summability
  structure and that the projections $\prodProj_i$ are $\Sigma$-additive.
The following are equivalent. \begin{enumerate}
\item $\sequence[j]{\withFam \Sproj_j}$ is summable;
\item $\DprodDist$ is an isomorphism.
\end{enumerate}
Then, $\DprodDist^{-1} = \Spairing[j]{\withFam \Sproj_j}$.
\end{proposition}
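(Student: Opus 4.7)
The plan is to adapt the proof of \cref{prop:prodSwap-inverse} to this left $\Sigma$-additive setting. The crucial observation is that, although we cannot apply $\S$ to arbitrary morphisms here, the $\Sigma$-additivity of the projections $\prodProj_i$ (which is part of the standing hypotheses) is exactly what makes $\S\prodProj_i$ well-defined via $\S\prodProj_i = \Spairing[j]{\prodProj_i\compl\Sproj_j}$. All the manipulations in the proof of \cref{prop:prodSwap-inverse} invoke $\S$ only on these projections, so they transpose with almost no change.

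For $(1)\Rightarrow(2)$, I would exhibit $\Spairing[j]{\withFam\Sproj_j}$ (which exists by (1) and \ref{def:summability-structure-2}) as the two-sided inverse of $\DprodDist$. For the right inverse, I would compose with each $\prodProj_i$ on the left and use uniqueness of the tupling: the computation reduces to $\Spairing[j]{\prodProj_i\compl\withFam\Sproj_j} = \Spairing[j]{\Sproj_j\compl\prodProj_i}$, which equals $\prodProj_i$ (by uniqueness of the witness, using the fact that $\family[j]{\Sproj_j\compl\prodProj_i}$ is summable thanks to the $\Sigma$-additivity of $\prodProj_i$, combined with $\Spairing[j]{\Sproj_j}=\id$ and left distributivity of composition over sums). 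For the left inverse, I would argue by joint monicity of the $\Sproj_j$, reducing to the identity $\Sproj_j\compl\S\prodProj_i = \prodProj_i\compl\Sproj_j$ which follows directly from the definition of $\S\prodProj_i$ as a witness. For $(2)\Rightarrow(1)$, assuming $\DprodDist$ is invertible, I would combine the defining equation $\prodProj_i\compl\DprodDist = \S\prodProj_i$ with the same identity $\Sproj_j\compl\S\prodProj_i = \prodProj_i\compl\Sproj_j$ to deduce $\Sproj_j\compl\prodProj_i = \prodProj_i\compl\Sproj_j\compl\DprodDist^{-1}$. Uniqueness of the tupling then yields $\Sproj_j\compl\DprodDist^{-1} = \withFam\Sproj_j$, so by \ref{def:summability-structure-2}, $\sequence[j]{\withFam\Sproj_j}$ is summable with witness $\DprodDist^{-1}$, which gives both (1) and the claimed formula for $\DprodDist^{-1}$.

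The only (minor) obstacle is keeping careful track of which compositions are legal outside of the $\Sigma$-additive subcategory $\categoryAdd$: one cannot in general apply $\S$ to a morphism or invoke naturality of $\Sproj_j$ with respect to it. The point of the proof is precisely to recognise that the hypothesis of $\Sigma$-additivity of the projections is exactly what is needed to make every step of the original proof of \cref{prop:prodSwap-inverse} go through in this weaker context.
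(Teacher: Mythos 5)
Your proposal is correct and is essentially the paper's own argument: the paper proves this proposition simply by declaring the proof to be the same as that of \cref{prop:prodSwap-inverse}, which is exactly the computation you reproduce (right inverse by uniqueness of the tupling, left inverse by joint monicity of the $\Sproj_j$, and the converse by reading off the witness $\DprodDist^{-1}$ via \ref{def:summability-structure-2}). Your observation that the $\Sigma$-additivity of the $\prodProj_i$ is precisely what legitimises $\S\prodProj_i$ and the identity $\Sproj_j\compl\S\prodProj_i=\prodProj_i\compl\Sproj_j$ in the left $\Sigma$-additive setting is the only point that needed checking, and you handle it correctly.
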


\begin{proof}
Same as \cref{prop:prodSwap-inverse}.
\end{proof}

\begin{proposition} \label{prop:pairing-additive}
  Assume that $\category$ is a cartesian left 
  $\Sigma$-additive category.
  If for all $i \in I$, $h_i \in \category(X, Y_i)$ is $\Sigma$-additive, then 
  $\prodPairing{h_i}$ is also $\Sigma$-additive.
  Thus, $\categoryAdd$ is a cartesian $\Sigma$-additive category whose cartesian product 
  coincides with the one of $\category$.
\end{proposition}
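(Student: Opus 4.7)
My plan is to reduce both claims to the equivalent characterisation of cartesian left $\Sigma$-additive categories given in \cref{prop:left-sum-cartesian}, together with the naturality of tupling $\prodPairing{h_i} \comp f = \prodPairing{h_i \comp f}$.

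For the first claim, I fix a summable family $\family{f_a \in \category(X', X)}$ and aim to show that $\family{\prodPairing{h_i} \comp f_a}_a$ is summable and that its sum equals $\prodPairing{h_i} \comp \sum_{a \in A} f_a$. First I rewrite $\prodPairing{h_i} \comp f_a$ as $\prodPairing{h_i \comp f_a}$, so that the question becomes a question about tuplings of families indexed by~$i$. By $\Sigma$-additivity of each $h_i$, the family $\family{h_i \comp f_a}_a$ is summable with sum $h_i \comp \sum_{a \in A} f_a$ for every $i \in I$. Then I invoke condition~(2) of \cref{prop:left-sum-cartesian} (equivalent to the assumption that $\category$ is cartesian left $\Sigma$-additive) to deduce that $\family{\prodPairing{h_i \comp f_a}}_a$ is summable. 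Finally, applying \eqref{eq:sum-pairing} and naturality of tupling once more, the sum equals $\prodPairing{\sum_{a \in A} h_i \comp f_a} = \prodPairing{h_i \comp \sum_{a \in A} f_a} = \prodPairing{h_i} \comp \sum_{a \in A} f_a$, as desired.

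For the second claim, I assemble the structure of $\categoryAdd$. The category $\categoryAdd$ is already known to be $\Sigma$-additive (its hom-sets carry the $\Sigma$-monoid structure inherited from $\category$, and sums of additive morphisms are additive by \cref{prop:sum-additive}). To see that the cartesian product of $\category$ restricts to a cartesian product in $\categoryAdd$, I note that the projections $\prodProj_i$ are $\Sigma$-additive by \cref{prop:left-sum-cartesian}, and the tupling of additive morphisms is additive by the first part of the proposition; the universal property is inherited from $\category$ since $\categoryAdd$ is a subcategory. Compatibility of sum with product (condition of \cref{def:sum-cartesian}) follows immediately from the corresponding property in $\category$, since summability in $\categoryAdd$ coincides with summability in $\category$ on families of additive morphisms.

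I do not expect significant obstacles: the proof is essentially bookkeeping once the key equivalence of \cref{prop:left-sum-cartesian} is available. The only mildly subtle point is being careful about which direction of the equivalence~\eqref{eq:sum-pairing} is needed at each step — summability of $\family{\prodPairing{h_i \comp f_a}}_a$ requires the nontrivial direction that is guaranteed by the cartesian left $\Sigma$-additive hypothesis, whereas the equality of sums uses only the inclusion~\eqref{eq:proj-additive} which holds in any left $\Sigma$-additive category with additive projections.
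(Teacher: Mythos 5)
Your proof is correct and follows essentially the same route as the paper: rewrite $\prodPairing{h_i}\comp f_a$ as $\prodPairing{h_i\comp f_a}$, use $\Sigma$-additivity of each $h_i$ together with the compatibility of sums and tuplings (\cref{eq:sum-pairing}, equivalently \cref{prop:left-sum-cartesian}) to get summability and the equality of sums, and then assemble the cartesian structure of $\categoryAdd$ from additive projections, the terminal object, and additive tuplings. If anything, you are slightly more explicit than the paper about which direction of the equivalence is used where and about the sum--product compatibility in $\categoryAdd$, which is fine.
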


\begin{proof} Assume that $\family{f_a \in \category(Z, X)}$ is summable.
  We need to show that $\family{\prodPairing{h_i} \comp f_a}
  = \family{\prodPairing{h_i \comp f_a}}$ is summable and that
  \[ \sum_{a \in A} \prodPairing{h_i \comp f_a}
  = \prodPairing{h_i} \comp (\sum_{a \in A} f_a) \, . \] 
But
$\prodPairing{h_i} \comp (\sum_{a \in A} f_a) = \prodPairing{h_i \comp
  (\sum_{a \in A} f_a)}$ so the equality directly follows from
\cref{eq:sum-pairing} and from the additivity of each $h_i$.
  Furthermore, the projections are $\Sigma$-additive by \cref{prop:left-sum-cartesian},
  and $\top$ is also a terminal object in $\categoryAdd$ because the 
  unique morphism in $\category(X, \top)$ is necessarily $0$, which is $\Sigma$-additive.
\end{proof}

So if $\category$ is a cartesian left $\Sigma$-additive equipped with a left 
$\Sigma$-summability structure, $\categoryAdd$ is a cartesian
$\Sigma$-additive category equipped with a $\Sigma$-summability structure.
Notice then that both $\DprodDist$ and $\DprodDist^{-1}$ are a morphism in $\categoryAdd$, and 
they coincide with their counterpart given in \cref{sec:summability-product-LL}.
Recall then that the monad $\monadS$ on $\categoryAdd$
 equipped with $\Sinj_0$ and $\DprodDist^{-1}$ is a lax symmetric monoidal monad.
 This canonically induces two strengths $\SprodstrL$ and $\SprodstrR$
 as defined in \cref{eq:Sprod-strength}, turning $\monadS$ into a
 commutative monad (with respect to the cartesian product).

  \subsubsection{Differential structure and cartesian product}

  \label{sec:product-differential}

  We assume that $\category$ is equipped with a left $\Sigma$-summability structure
  $(\S, \vect \Sproj)$ that satisfies \ref{ax:S-with-left}.

  \begin{definition} A Taylor expansion is compatible with
    the cartesian product if the projections $\prodProj_i$ are $\D$-linear.
    A \emph{cartesian Taylor category} is a cartesian category equipped 
    with a Taylor expansion that is compatible with the cartesian product.
    A \emph{cartesian analytic category} is a cartesian category equipped 
    with an analytic Taylor expansion that is compatible with the cartesian product.
  \end{definition}

  We now assume that $\category$ is a cartesian Taylor category (or a cartesian analytic 
  category)

  \begin{proposition} \label{prop:differential-pairing}
    $\prodPairing{\D f_i} = \DprodDist \comp \D \prodPairing{f_i}$.
  \end{proposition}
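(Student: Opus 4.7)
The plan is to prove the equality by the universal property of the cartesian product, that is, by showing that both sides have the same postcomposition with every projection $\prodProj_j$ and invoking uniqueness of the tupling in $\category$.

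First I would compute the left-hand side: by definition of the tupling, $\prodProj_j \comp \prodPairing{\D f_i} = \D f_j$. Next I would compute the right-hand side. Unfolding $\DprodDist = \prodPairing{\S \prodProj_i}$ and using the characteristic property of the tupling, we get $\prodProj_j \comp \DprodDist \comp \D \prodPairing{f_i} = \S \prodProj_j \comp \D \prodPairing{f_i}$.

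The key step is then to reduce $\S \prodProj_j \comp \D \prodPairing{f_i}$ to $\D f_j$. Here I would use the assumption that the category is a cartesian Taylor category, so by definition the projections $\prodProj_j$ are $\D$-linear; in particular $\D \prodProj_j = \S \prodProj_j$ (see \cref{def:D-linear}). Hence by \ref{ax:D-chain} (functoriality of $\D$), $\S \prodProj_j \comp \D \prodPairing{f_i} = \D \prodProj_j \comp \D \prodPairing{f_i} = \D(\prodProj_j \comp \prodPairing{f_i}) = \D f_j$. Thus both sides agree after postcomposition with every $\prodProj_j$, and uniqueness of the tupling concludes the proof.

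There is no real obstacle here: the statement is essentially a direct diagram chase that packages the $\D$-linearity of the $\prodProj_i$ together with the functoriality of $\D$. The only thing worth emphasizing in the write-up is where exactly each axiom is used, since later results (e.g.\ the cartesian analogues of the bimonad structure on $\D$) will rely on this lemma to transfer equations from the $\S$-side to the $\D$-side.
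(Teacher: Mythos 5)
Your proof is correct and is essentially the paper's own argument: postcompose with each projection, use $\DprodDist = \prodPairing{\S\prodProj_i}$, the $\D$-linearity of the projections ($\S\prodProj_j = \D\prodProj_j$), functoriality \ref{ax:D-chain}, and conclude by uniqueness of the tupling. No gaps.
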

  
  \begin{proof} We have 
    $\prodProj_i \comp \DprodDist \comp \D \prodPairing{f_i}
    = \S \prodProj_i \comp \D \prodPairing{f_i}
    = \D \prodProj_i \comp \D \prodPairing{f_i}
    = \D (\prodProj_i \comp \prodPairing{f_i}) = \D f_i$,
    and we conclude by uniqueness of the tupling.
  \end{proof}

  \begin{proposition} \label{prop:pairing-linear}
    If $(h_i \in \category(X, Y_i))_{i\in I}$ is a collection of $\D$-linear 
    morphisms, then $\prodPairing{h_i}$ is $\D$-linear. If 
    $(h_i' \in \category(X_i, Y_i))_{i\in I}$ is a collection of $\D$-linear 
    morphisms, then $\withFam h_i'$ is $\D$-linear. In particular, 
    $\categoryLin$ is a cartesian category.
  \end{proposition}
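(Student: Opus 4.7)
The plan is to establish $\D$-linearity of $\prodPairing{h_i}$ by checking the two defining conditions: $\Sigma$-additivity and the equation $\D f = \S f$. The $\Sigma$-additivity part is already handled by \cref{prop:pairing-additive}, so the real work is to show $\D \prodPairing{h_i} = \S \prodPairing{h_i}$.

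First I would exploit \cref{prop:differential-pairing}, which gives the identity $\prodPairing{\D h_i} = \DprodDist \comp \D \prodPairing{h_i}$. Since each $h_i$ is assumed $\D$-linear, we can substitute $\D h_i = \S h_i$ on the left, obtaining $\prodPairing{\S h_i} = \DprodDist \comp \D \prodPairing{h_i}$. Independently, using the definition $\DprodDist = \prodPairing{\S \prodProj_i}$ together with the functoriality of $\S$, a direct computation gives $\DprodDist \comp \S \prodPairing{h_i} = \prodPairing{\S \prodProj_i \comp \S \prodPairing{h_i}} = \prodPairing{\S h_i}$. Comparing both expressions yields $\DprodDist \comp \D \prodPairing{h_i} = \DprodDist \comp \S \prodPairing{h_i}$, and since $\DprodDist$ is an isomorphism by \cref{prop:prodSwap-inverse-left} (which applies because the summability structure satisfies \ref{ax:S-with-left}), we conclude $\D \prodPairing{h_i} = \S \prodPairing{h_i}$, establishing $\D$-linearity.

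For the second claim about $\withFam h_i'$, I would rewrite it as $\withFam h_i' = \prodPairing{h_i' \comp \prodProj_i}$. By assumption the projections $\prodProj_i$ are $\D$-linear (this is precisely the compatibility of the Taylor expansion with the cartesian product), and each $h_i'$ is $\D$-linear, so by \cref{prop:composition-linear} each $h_i' \comp \prodProj_i$ is $\D$-linear. The first part of the proposition then gives that their tupling is $\D$-linear.

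Finally, to show that $\categoryLin$ is cartesian, I would verify that both the terminal object and the products of $\category$ lift to $\categoryLin$. The unique morphism into $\top$ is $0$, which is $\D$-linear by \ref{ax:D-sum-lin}, so $\top$ remains terminal in $\categoryLin$. For products, the projections $\prodProj_i$ are $\D$-linear by assumption, and the first part of the proposition shows that tuplings of $\D$-linear morphisms are $\D$-linear, so the universal property inherited from $\category$ restricts to $\categoryLin$. No step here is expected to be a real obstacle — the main conceptual point is simply the invertibility of $\DprodDist$, which is exactly what allows the diagrammatic characterization of $\D$-linearity to pass through tuplings.
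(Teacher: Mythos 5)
Your proof is correct and follows essentially the same route as the paper: $\Sigma$-additivity via \cref{prop:pairing-additive}, then \cref{prop:differential-pairing} with $\D h_i = \S h_i$, then the invertibility of $\DprodDist$, with the same decomposition $\withFam h_i' = \prodPairing{h_i'\comp\prodProj_i}$ and the same observation that $\final_X = 0$ is $\D$-linear. The only cosmetic difference is that you cancel the isomorphism $\DprodDist$ after computing $\DprodDist\comp\S\prodPairing{h_i}=\prodPairing{\S h_i}$, whereas the paper uses the explicit formula $\DprodDist^{-1}=\Spairing[j]{\withFam \Sproj_j}$ and concludes by joint monicity of the $\Sproj_j$ — both are valid instances of the same argument.
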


  \begin{proof} First, we know that $\prodPairing{h_i}$ is $\Sigma$-additive thanks to
    \cref{prop:pairing-additive}. Then,
    \[ \D \prodPairing{h_i} = \DprodDist^{-1} \comp \prodPairing{\D h_i}
    = \DprodDist^{-1} \comp \prodPairing{\S h_i} \]
    by \cref{prop:differential-pairing}.
    So $\Sproj_j \comp \D \prodPairing{h_i}
    = (\withFam \Sproj_j) \comp \prodPairing{\S h_i}$
    using that $\DprodDist^{-1} = \Spairing[j]{\withFam \Sproj_j}$ 
    (\cref{prop:prodSwap-inverse}). Then,
    \[ \left( \withFam \Sproj_j \right) \comp \prodPairing{\S h_i}
    = \prodPairing{\Sproj_j \comp \S h_i}
    = \prodPairing{h_i \comp \Sproj_j}
    = \prodPairing{h_i} \comp \Sproj_j \, .\] 
    So $\Sproj_j \comp \D \prodPairing{h_i} = \prodPairing{h_i} \comp \Sproj_j$. 
    We conclude that $\D \prodPairing{h_i} = \S \prodPairing{h_i}$ by joint 
    monicity of the $\Sproj_j$.
    The $\D$-linearity of $\withFam h_i'$ then follows directly from above thanks
    to the fact that $\withFam h_i' = \prodPairing{h_i' \comp \prodProj_i}$ and that 
    the composition of two $\D$-linear morphism is also $\D$-linear. We 
    conclude that $\categoryLin$ is cartesian because 
    $\final_X = 0$ is also $\D$-linear by \ref{ax:D-sum-lin}, so $\top$ is also
    final in $\categoryLin$.
  \end{proof}

  The $\D$-linearity of the projections imply that 
  \[ \DprodDist = \prodPairing{\S \prodProj_i} 
  = \prodPairing{\D \prodProj_i} \in \category(\D \withFam X_i, \withFam \D X_i) \]
  is a natural transformation. 
  Since $\DprodDist$ is invertible, it is then a natural isomorphism 
  and $\DprodDist^{-1} \in \category(\withFam \D X_i, \D \withFam X_i)$
  is natural. Furthermore, both $\DprodDist$ and 
  $\DprodDist^{-1}$ are $\D$-linear, thanks to 
  \cref{prop:prodSwap-inverse,prop:pairing-linear}.
  Then for the same reasons that $(\monadS, \Sinj_0, \DprodDist^{-1})$
  is a lax symmetric monoidal monad in $\categoryAdd$
  (and on $\categoryLin$),
  $(\monadD, \Sinj_0, \DprodDist^{-1})$
  is a lax symmetric monoidal monad on $\category$. 
  
  \begin{remark} Similarly to \cref{sec:summability-product-LL},
    the invertibility of $\DprodDist$ and 
    $\Sinj_0$ ensure that $\kleisliD$, the Kleisli category of the 
    monad $\monadD$, is cartesian.
    This can be shown either by hand, or by following the reasoning of
    \cref{rem:extension-product-KleisliS}.
  \end{remark}
  
  As seen 
  in \cref{sec:commutative-monad} (taking the symmetric monoidal structure
  to be the one generated by the cartesian structure $\with$), 
  the lax monoidality of $\monadD$ implies that $\monadD$ is a commutative monad,
  whose left and right strengths are given by the following equations.
  \todo{Index from 1}
  \begin{equation} \label{eq:D-strength}
    \begin{split} 
      \DprodstrL = \DprodDist^{-1} \comp (\D X_0 \with \Sinj_0)
  = \Sbracket{\Sproj_0 \with X_1, \Sproj_1 \with 0, \Sproj_2 \with 0,
  \ldots}\in \category(\D X_0 \with X_1, \D (X_0 \with X_1)) \\ 
  \DprodstrR = \DprodDist^{-1} \comp (\Sinj_0 \with \D X_1)
  = \Sbracket{X_0 \with \Sproj_0, 0 \with \Sproj_1, 0 \with \Sproj_2,
  \ldots}\in \category(X_0 \with \D X_1, \D (X_0 \with X_1))
    \end{split}
  \end{equation}
  Those strengths are $\D$-linear, and coincides with the strengths
  $\SprodstrL$ and $\SprodstrR$
  associated to the lax monoidal 
  monad $\monadS$ on $\categoryLin$ and $\categoryAdd$.

  \begin{definition} Let $f \in \category(X_0 \with X_1, Y)$. 
    Define $\D_0 f = \D f \comp \DprodstrL \in \category(\D X_0 \with X_1, \D Y)$ and 
    $\D_1 f = \D f \comp \DprodstrR \in \category(X_0 \with \D X_1, \D Y)$.
  \end{definition}

  Intuitively, the strength maps a summable family
  $\Spairing{x_i}$ and an element $y$ to the family
  $\Sbracket{(x_0, y), (x_1, 0), (x_2, 0), \ldots}$. 
  So $\D_0 f$ performs the Taylor expansion 
  of $f$ on this family. In particular, the coefficient 
  at position $1$ should be seen as 
  $\derive[x_0, y]{f} \cdot (x_1, 0) = \partial_0 f(x_0, y) \cdot x_1$ where 
  $\partial_0$ is the partial derivative of $f$ with regard to its first argument. 
  So $\D_0, \D_1$ are the infinitary counterpart of the notion of partial derivatives:
  the Taylor expansion of $f$ is computed only with regard to a variation on
  one input.

  This theory of partial Taylor
  expansions behaves very nicely, and is crucial for the
  development of a syntax. Let us stress that combining Taylor
  expansions with regard to different arguments would be
  combinatorially very heavy, but our categorical point of view on
  Taylor expansion turns all those computations into simple naturality
  equations.
  We refer the reader to \cite{Walch23}
  for a development of this theory in the framework of coherent 
  differentiation.

  \begin{definition}
    A morphism $f \in \category(Y_0 \with Y_1, Z)$ is $\Sigma$-additive in its first argument
    if for all family $\family{h_a \in \category(X, Y_0)}$, 
    \[  f \comp ((\sum_{a \in A} h_a) \with Y_1) 
    \sumsub \sum_{a \in A} f \comp (h_a \with Y_1) \, .\]
  Similarly, we define the notion of morphisms $\Sigma$-additive in their
  second argument. A morphism is $\Sigma$-biadditive if it is separately 
  $\Sigma$-additive in both of its arguments.
  \end{definition} 

  As usual, additivity in an argument can be expressed 
  as an interaction with regard to 
  the left $\Sigma$-summability structure.

  \begin{proposition}
    A morphism $f \in \category(Y_0 \with Y_1, Z)$ is $\Sigma$-additive in its first argument
    if and only if $f \comp (0 \with Y_1) = 0$
    and $\sequence{f \comp (\Sproj_i \with Y_1)}$ 
    is summable of sum $f \comp (\Ssum \with Y_1)$. A similar result 
    hold for morphisms $\Sigma$-additive in their second argument.
  \end{proposition}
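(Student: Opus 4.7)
The proof closely mirrors that of \Cref{prop:additive}, replacing the role of the morphism $h$ by the partial application $g \mapsto f \comp (g \with Y_1)$. I would organize it into the two implications of the equivalence.

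For the forward direction, I would simply instantiate $\Sigma$-additivity in the first argument on two distinguished families: on the empty family $\family<\emptyset>{h_a}$, which has sum $0$, this yields $f \comp (0 \with Y_1) = 0$; on the family $\sequence{\Sproj_i}$, which is summable with sum $\Ssum$, this yields that $\sequence{f \comp (\Sproj_i \with Y_1)}$ is summable with sum $f \comp (\Ssum \with Y_1)$.

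For the backward direction, let $\family{h_a \in \category(X, Y_0)}$ be summable. By \Cref{prop:summability-alt}, there exist an injection $\phi : A \injection \N$ and a witness $\Sfamily{\vect h}{\phi} \in \category(X, \S Y_0)$ for $\Famact{\phi} \vect h$. The candidate witness for $\Famact{\phi} \family{f \comp (h_a \with Y_1)}$ is
\[
\Spairing{f \comp (\Sproj_i \with Y_1)} \comp (\Sfamily{\vect h}{\phi} \with Y_1)\,,
\]
which exists in $\category(X \with Y_1, \S Z)$ precisely by the assumed summability of $\sequence{f \comp (\Sproj_i \with Y_1)}$. Its $i$-th projection equals $f \comp ((\Sproj_i \comp \Sfamily{\vect h}{\phi}) \with Y_1)$, and this is $f \comp (h_{\phi^{-1}(i)} \with Y_1)$ if $i \in \phi(A)$, and $f \comp (0 \with Y_1) = 0$ otherwise, by functoriality of $\_ \with Y_1$ and the hypothesis $f \comp (0 \with Y_1) = 0$. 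An application of \Cref{prop:summability-alt} then gives the summability of $\family{f \comp (h_a \with Y_1)}$, with sum
\[
\Ssum \comp \Spairing{f \comp (\Sproj_i \with Y_1)} \comp (\Sfamily{\vect h}{\phi} \with Y_1) = f \comp (\Ssum \with Y_1) \comp (\Sfamily{\vect h}{\phi} \with Y_1) = f \comp \bigl((\textstyle\sum_{a} h_a) \with Y_1\bigr),
\]
where the first equality uses the hypothesis on sums expressed via $\Ssum$ together with \Cref{eq:sum-as-composition}, and the second uses functoriality of $\_ \with Y_1$ together with \Cref{prop:summability-alt}. The symmetric statement for the second argument is obtained by exchanging the roles of the two components.

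There is no genuine obstacle here: the only delicate point is making sure that $f \comp (0 \with Y_1) = 0$ is used to cover the indices $i \notin \phi(A)$, exactly as in the proof of \Cref{prop:additive}, since $0 \with Y_1$ is generally not the zero morphism in $\category(X \with Y_1, Y_0 \with Y_1)$. Everything else is a direct manipulation of witnesses, left distributivity in the left $\Sigma$-additive category, and the functoriality of the bifunctor $\with$.
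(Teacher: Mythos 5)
Your proof is correct and follows exactly the route the paper intends: the paper's own proof is just the remark that the argument is "very similar to the proof of \cref{prop:additive}", and your write-up is precisely that adaptation, correctly handling the extra point that $0 \with Y_1$ is not a zero morphism, which is why the hypothesis $f \comp (0 \with Y_1) = 0$ is needed to cover the indices outside $\phi(A)$.
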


   \begin{proof} Very similar to the proof of \cref{prop:additive}.
   \end{proof}

  \begin{definition} \label{def:D-multilinear}
    A morphism $h \in \category(X_0 \with X_1, Y)$ is 
    $\D$-linear in its first argument if it is $\Sigma$-additive in that 
    argument and if 
    $\Sproj_i \comp \D_0 h = h \comp (\Sproj_i \with X_1)$.
    It is $\D$-linear in its second argument if it is $\Sigma$-additive in that 
    argument and if 
    $\Sproj_i \comp \D_1 h = h \comp (X_0 \with \Sproj_i)$.
    A morphism is $\D$-bilinear if it is separately 
    linear in both of its arguments.
  \end{definition}

  Those notions can be generalized to arbitrary finite product, defining 
  a strength 
  \[ \Dprodstr{i} \defEq \DprodDist \comp (\id_{X_0} \with \cdots \with 
  \Sinj_0 \with \cdots \with \id_{X_n})
  \in \category(X_0 \with \cdots \with \D X_i \with \cdots \with X_n,
  \D (X_0 \with \cdots \with X_n)). \]
\todo{Index from $1$ to $n$}  
This induces a Taylor expansion with respect to only one parameter, for 
  any $f \in \category(X, Y)$,
  \[ \D_i f \defEq f \comp \Dprodstr{i} \in \category(X_0 \with \cdots \with \D X_i
    \with \cdots \with X_n, \D Y) \, . \]
  It is then possible to define a notion 
  of $\Sigma$-multiadditive morphism and a notion of $\D$-multilinear morphism.
  The latter plays an important role in \cite{Walch23}.

  \subsection{Compatibility with the cartesian closed structure}

  We now assume that $\category$ is a cartesian left $\Sigma$-additive that 
  is closed with respect to its categorical product.
  That is, for all objects $X$ and $A$ 
  there is an object $A \closure X$ and a morphism $\Ev \in \category((A \closure X) \with A, X)$
  such that for any morphism $f \in \category(X \with A, Y)$, 
  there exists a unique  
  morphism $\Cur(f) \in \category(X, A \closure Y)$ such that 
  \[ \Ev (\Cur(f) \with A) = f \, . \] 
  This assumption implies that
  $\Cur: \category(X \with A, Y) \arrow \category(X, A \closure Y)$
  is a bijection whose inverse is given by
  \[ \Uncur (g) = \Ev \comp (g \with A) \, .\]
  Closedness can be seen as an adjunction 
  $\_ \with A \dashv A \closure \_$ for any object $A$,
  of unit $\Cur(\id_{X \with A}) \in \category(X, A \closure (X \with A))$ and 
  co-unit $\Ev \in \category((A \closure X) \with A, X)$. 
  $A \closure \_$ maps a morphism $f \in \category(X, Y)$ to a morphism
  $A \closure f \in \category(A \closure X, A \closure Y)$ defined as 
  $\Cur(f \comp \Ev)$.
  Then $\Cur$ and $\Uncur$ are natural bijections, that is, \cref{eq:Cur-natural}
  below holds
  \begin{equation} \label{eq:Cur-natural}
    \begin{split}
    \Cur(f \comp g \comp (h \with A)) &= (A \closure f) \comp \Cur(g) \comp h \\
    \Uncur((A \closure f) \comp g \comp h) &= f \comp \Uncur(g) \comp (h \with A)
    \end{split}
  \end{equation}
  This section is very similar to \cref{sec:summability-closure-LL}. The difference is 
  that the category is closed with regard to the cartesian product 
  and not to the tensor, and that 
  linearity assumptions need to be explicit.

  \subsubsection{Sums and closedness}

  \begin{definition}
    A cartesian closed left $\Sigma$-additive category $\category$ is a cartesian 
    $\Sigma$-additive category that is closed with regard to the cartesian product 
    and such that for any family $\family{f_a \in \category(X, Y)}$, 
    \begin{equation} \label{eq:sum-closure}
      \sum_{a \in A} \Cur(f_a) 
    \sumiff \Cur \left(\sum_{a \in A} f_a \right) \, .
    \end{equation}
   \end{definition}

   \begin{proposition} \label{prop:sum-closure-left}
    For any cartesian left $\Sigma$-additive category $\category$ that is closed,
    $\Ev \in \category((A \closure X) \with A, X)$ is $\Sigma$-additive in 
    its first argument if and only if
    \begin{equation} \label{eq:ev-left-additive}
      \sum_{a \in A} \Cur(f_a) \sumsub \Cur \left(\sum_{a \in A} f_a \right) \, .
    \end{equation}
    As a result, the following assertions are equivalent.
    \begin{enumerate}
    \item $\category$ is a cartesian closed left $\Sigma$-additive category;
    \item $\Ev$ is $\Sigma$-additive in its first argument and 
    for any summable family $\family{f_a \in \category(X \with A, Y)}$, 
    $\family{\Cur(f_a)}$ is summable.
    \end{enumerate}
  \end{proposition}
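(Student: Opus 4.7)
The proof will closely mirror that of \cref{prop:sum-closure} from the LL setting, but with the extra bookkeeping required by partial additivity on the left and the explicit additivity hypothesis on $\Ev$.

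The plan is to prove the biconditional characterization first, then use it to deduce the equivalence of the two assertions. For the forward direction, suppose $\Ev$ is $\Sigma$-additive in its first argument and that $\family{\Cur(f_a)}$ is summable. I would compute
\[ \Uncur\!\left(\sum_{a \in A} \Cur(f_a)\right) = \Ev \comp \!\left(\left(\sum_{a \in A} \Cur(f_a)\right) \with A\right) \sumsub \sum_{a \in A} \Ev \comp (\Cur(f_a) \with A) = \sum_{a \in A} f_a, \]
where the middle $\sumsub$ uses the assumed additivity of $\Ev$ in its first argument. Applying the bijection $\Cur$ to both sides yields $\sum_{a \in A} \Cur(f_a) \sumsub \Cur(\sum_{a \in A} f_a)$, which is \cref{eq:ev-left-additive}.

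For the converse, assume \cref{eq:ev-left-additive} and let $\family{h_a \in \category(X, A \closure Y)}$ be summable. Setting $f_a = \Uncur(h_a)$, so that $h_a = \Cur(f_a)$, the hypothesis gives that whenever $\sum_a \Cur(f_a) = \sum_a h_a$ is defined, so is $\sum_a f_a$ and $\Cur(\sum_a f_a) = \sum_a \Cur(f_a)$. Applying $\Uncur$ to this equality yields $\Ev \comp ((\sum_a h_a) \with A) = \sum_a \Ev \comp (h_a \with A)$, which is exactly the required additivity of $\Ev$ in its first argument.

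Finally the equivalence $(1) \Leftrightarrow (2)$ follows. For $(1) \Rightarrow (2)$, \cref{eq:sum-closure} gives the $\sumsub$ direction of \cref{eq:ev-left-additive}, so by the first half $\Ev$ is $\Sigma$-additive in its first argument; and the reverse direction of $\sumiff$ in \cref{eq:sum-closure} says precisely that summability of $\family{f_a}$ implies summability of $\family{\Cur(f_a)}$. For $(2) \Rightarrow (1)$, the first half applied to the additivity hypothesis on $\Ev$ gives one direction of \cref{eq:sum-closure}, and the second clause of (2) gives the other. I do not expect any significant obstacle: the argument is purely a bijective translation via $\Cur/\Uncur$, relying only on naturality of currying and the assumed left additivity of $\category$; the only subtlety is being careful that each $\sumsub$ is oriented correctly, since partial summability forbids the symmetric manipulation available in the fully additive LL setting.
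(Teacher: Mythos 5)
Your proof is correct and follows essentially the same route as the paper: the forward direction transports the additivity of $\Ev$ through $\Uncur$ exactly as in \cref{prop:sum-closure}, and the converse is the paper's $\Cur/\Uncur$ computation, merely phrased via the bijection rather than as a chain starting from $\Ev \comp ((\sum_a h_a) \with A)$. The only cosmetic point is that in $(2)\Rightarrow(1)$ the second clause of (2) yields only summability of $\family{\Cur(f_a)}$, the equality with $\Cur\left(\sum_{a \in A} f_a\right)$ then following from \cref{eq:ev-left-additive}, which your argument uses implicitly.
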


  \begin{proof}
  The forward implication is the same proof as \cref{prop:sum-closure}, 
  using the additivity of $\Ev$ in its first 
  argument instead of the distributivity of the sum over the tensor product. 
  The only additional thing to prove is that the left additivity of $\Ev$ follows from
  \cref{eq:ev-left-additive}: 
  \begin{align*} 
    \Ev \comp ((\sum_{a \in A} h_a) \with A) 
  &= \Ev \comp \left( \left(\sum_{a \in A} \Cur(\Uncur(h_a)) \right) \with A\right) \\
  &\sumsub \Ev \comp \left(\Cur\left(\sum_{a \in A} \Uncur(h_a)\right) \with A \right) 
  \tag*{by \cref{eq:ev-left-additive}}\\
  &= \sum_{a \in A} \Uncur(h_a) = \sum_{a \in A} \Ev \comp (h_a \with A) \, . \tag*{\qedhere}
  \end{align*}
  \end{proof}

  As usual, the compatibility between the sum and the closedness can 
  be rephrased as a property on the left $\Sigma$-summability structure.

  \begin{definition}  \labeltext{($\S\with$-fun)}{ax:S-with-fun}
    Assume that $\category$ is equipped with a left $\Sigma$-summability
    structure that satisfies \ref{ax:S-with-left}. Then this 
    structure satisfies \ref{ax:S-with-fun} if $\Ev$ is $\Sigma$-additive in its first 
    argument and if $\sequence{A \closure \Sproj_i}$ is summable.
  \end{definition}

  \begin{proposition} \label{prop:S-fun-with-equivalent}
    Assume that $\category$ is a cartesian left $\Sigma$-additive category 
    and is equipped with a left $\Sigma$-summability structure.
    The following assertions are equivalent \begin{enumerate}
      \item The summability structure satisfies \ref{ax:S-with-fun};
      \item $\categoryLL$ is a cartesian closed left $\Sigma$-additive category.
    \end{enumerate}
  \end{proposition}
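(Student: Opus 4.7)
The proof mirrors that of \cref{prop:S-fun-equivalent}, but translated from the tensor--internal-hom adjunction in a fully $\Sigma$-additive setting to the with--closure adjunction in a merely left $\Sigma$-additive setting. The role played by global additivity in the LL case will be played here by the $\Sigma$-additivity of $\Ev$ in its first argument, which is explicitly part of \ref{ax:S-with-fun}. My first move will be to invoke \cref{prop:sum-closure-left} to replace condition (2) by the equivalent intermediate statement: ($2'$) $\Ev$ is $\Sigma$-additive in its first argument, and for every summable family $\family{f_a\in\category(X\with A,Y)}$ the family $\family{\Cur(f_a)}$ is summable. The task is then to prove (1)$\Leftrightarrow$($2'$); since the additivity clause on $\Ev$ appears in both conditions, only the second clauses need comparing.

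For the direction ($2'$)$\Rightarrow$(1), I would apply the premise to the canonical summable family $\sequence{\Sproj_i\comp\Ev \in \category((A\closure\S Y)\with A, Y)}$, whose summability is immediate from the $\Sigma$-additivity of $\Ev$ in its first argument and the summability of $\sequence{\Sproj_i}$. Then ($2'$) yields summability of $\sequence{\Cur(\Sproj_i\comp\Ev)}$, and by definition of the functor $A\closure\_$ this is exactly the family $\sequence{A\closure\Sproj_i}$.

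For the direction (1)$\Rightarrow$($2'$), take any summable $\family{f_a\in\category(X\with A,Y)}$, pick an injection $\phi:A\injection\N$, and let $\Sfamily{\Vect f}{\phi}\in\category(X\with A,\S Y)$ be the witness provided by \cref{prop:summability-alt}. Using the witness $\Spairing{A\closure\Sproj_i}\in\category(A\closure\S Y,\S(A\closure Y))$ furnished by (1), I will show that
\[
w \;\defEq\; \Spairing{A\closure\Sproj_i}\comp\Cur(\Sfamily{\Vect f}{\phi}) \;\in\; \category(X,\S(A\closure Y))
\]
is a witness for $\Famact\phi\family{\Cur(f_a)}$, whence $\family{\Cur(f_a)}$ will be summable by \cref{prop:summability-alt}. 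The verification reduces, by joint monicity of the $\Sproj_i$ and naturality of $\Cur$, to computing $\Sproj_i\comp w=\Cur(\Sproj_i\comp\Sfamily{\Vect f}{\phi})$, which equals $\Cur(f_a)$ when $i=\phi(a)$ and $\Cur(0)$ otherwise.

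The key auxiliary point — and the only mildly subtle one in the whole argument — is that $\Cur(0)=0$, which does not hold in a bare left $\Sigma$-additive closed category. Here it is obtained for free from the $\Sigma$-additivity of $\Ev$ in its first argument: applying the partial inequality $\sum_{a\in A}\Cur(f_a)\sumsub\Cur(\sum_{a\in A}f_a)$ of \cref{prop:sum-closure-left} to the empty family yields $0\sumsub\Cur(0)$, hence $\Cur(0)=0$. This closes the gap between $A\closure\Sproj_i$ and $\Cur(\Sproj_i\comp\Ev)$ outside the image of $\phi$ and completes the proof.
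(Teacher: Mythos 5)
Your proof is correct and is essentially the adaptation the paper intends: the paper's own proof of this proposition is just the remark that it is very similar to \cref{prop:S-fun-equivalent}, and your reduction to ($2'$) via \cref{prop:sum-closure-left}, the identity $A\closure\Sproj_i=\Cur(\Sproj_i\comp\Ev)$, the witness computation $\Sproj_i\comp\Spairing{A\closure\Sproj_i}\comp\Cur(\Sfamily{\vect f}{\phi})=\Cur(\Sproj_i\comp\Sfamily{\vect f}{\phi})$, and the observation that $\Cur(0)=0$ are exactly the cartesian analogues of the steps used there. One small correction: in the direction ($2'$)$\Rightarrow$(1), the summability of $\sequence{\Sproj_i\comp\Ev}$ follows from left distributivity of the sum over composition (post-composing the summable family $\sequence{\Sproj_i}$ with $\Ev$), not from the $\Sigma$-additivity of $\Ev$ in its first argument, which concerns pre-composition in that argument.
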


  \begin{proof}
  The proof is very similar to the proof of \cref{prop:S-fun-equivalent}.
  \end{proof}

  The property \ref{ax:S-with-fun} consists in an isomorphism
  property. Assume $\Ev$ is $\Sigma$-additive in its first argument.
  Then we can define $\DclosDist$ by
    \begin{equation}  \label{eq:Dclos}
      \DclosDist \defEq \Cur \Spairing{\Ev \comp (\Sproj_i \with A)} 
    \in \category(\S (A \closure X), A \closure \S X) \, .
    \end{equation}

    As observed in \cref{rem:lin-closure-monic}, the morphisms 
    $\sequence{A \closure \Sproj_i}$ are jointly monic, so 
    $\DclosDist$ is characterized by the equation of \cref{prop:Dclos-equation}
    below.

    \begin{lemma} \label{prop:Dclos-equation}
      $(A \closure \Sproj_i) \comp \DclosDist = \Sproj_i$.
    \end{lemma}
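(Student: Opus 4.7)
The plan is to unfold the definition of $\DclosDist$ and apply the naturality of the Curry transpose, following exactly the same pattern as in the proof of \cref{prop:SstrL-prop} in the monoidal case. Recall that $A \closure \Sproj_i = \Cur(\Sproj_i \comp \Ev)$ by definition of the functor $A \closure \_$, and that the naturality equation \cref{eq:Cur-natural} gives us the identity $(A \closure f) \comp \Cur(g) = \Cur(f \comp g)$ when the postcomposition with $h \with A$ is the identity.

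First, I would start from the definition
\[ \DclosDist = \Cur\!\left( \Spairing{\Ev \comp (\Sproj_i \with A)} \right) \]
and precompose with $(A \closure \Sproj_i)$. Applying the naturality of $\Cur$ (in its covariant argument), this yields
\[ (A \closure \Sproj_i) \comp \DclosDist = \Cur\!\left( \Sproj_i \comp \Spairing{\Sproj_j \mapsto \Ev \comp (\Sproj_j \with A)} \right). \]
Next, by the universal property characterizing the witness $\Spairing{\_}$ (see \ref{def:summability-structure-2}), we have $\Sproj_i \comp \Spairing{\Ev \comp (\Sproj_j \with A)} = \Ev \comp (\Sproj_i \with A)$. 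Substituting this in gives
\[ (A \closure \Sproj_i) \comp \DclosDist = \Cur(\Ev \comp (\Sproj_i \with A)). \]

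Finally, recognizing that $\Ev \comp (\Sproj_i \with A) = \Uncur(\Sproj_i)$ and using the fact that $\Cur$ and $\Uncur$ are mutually inverse bijections, we conclude
\[ (A \closure \Sproj_i) \comp \DclosDist = \Cur(\Uncur(\Sproj_i)) = \Sproj_i, \]
as required. There is no real obstacle here: the proof is a direct three-step rewriting using the naturality of the currying bijection and the defining equation of the witness, mirroring precisely the reasoning already carried out for $\Sclos$ in the monoidal closed setting. The only minor care needed is to keep track of the fact that $A \closure \_$ is defined via $\Cur(\_ \comp \Ev)$ so that naturality applies cleanly.
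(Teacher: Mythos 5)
Your proof is correct and follows essentially the same route as the paper's: unfold $\DclosDist$, apply the naturality of $\Cur$ (equation \cref{eq:Cur-natural} with $h = \id$), use the defining equation of the witness to extract $\Ev \comp (\Sproj_i \with A)$, and conclude via $\Cur(\Uncur(\Sproj_i)) = \Sproj_i$. Nothing further is needed.
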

    
    \begin{proof} We use the naturality of $\Cur$.
      \begin{align*}
      (A \closure \Sproj_i) \comp \DclosDist &= 
        (A \closure \Sproj_i) \comp \Cur
        \Spairing{\Ev \comp (\Sproj_i \with A)} \\
        &= \Cur(\Sproj_i \comp 
          \Spairing{\Ev \comp (\Sproj_i \with A)}) 
        & \quad \text{By naturality of $\Cur$}\\
          &= \Cur(\Ev \comp (\Sproj_i \with A)) \\
          &= \Cur(\Uncur(\Sproj_i)) = \Sproj_i \tag*{\qedhere}
      \end{align*}
    \end{proof}

\begin{proposition} \label{prop:Dclos-inverse} %
  Assume that $\Ev$ is $\Sigma$-additive in its first argument. 
  The following assertions are equivalent:
  \begin{enumerate}
    \item $\sequence{A \closure \Sproj_i}$ is summable;
    \item $\DclosDist$ is an isomorphism.
  \end{enumerate}
  And then,
  \(\Spairing{A \closure \Sproj_i} =\DclosDist^{-1}\).
\end{proposition}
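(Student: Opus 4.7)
The plan is to mirror the proof strategy of \cref{prop:Sclos-inverse} essentially verbatim, replacing the tensor-closed setting by the cartesian-closed one. All the needed ingredients have been prepared: \cref{prop:Dclos-equation} gives $(A \closure \Sproj_i) \comp \DclosDist = \Sproj_i$, and joint monicity of the family $\sequence{A \closure \Sproj_i}$ has been noted in \cref{rem:lin-closure-monic} (it holds because $A \closure \_$ is a right adjoint to $\_ \with A$, so preserves monics; alternatively one checks it directly from the naturality of $\Cur$).

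For the direction $(1) \Rightarrow (2)$, I assume $\sequence{A \closure \Sproj_i}$ is summable and form its witness $\Spairing{A \closure \Sproj_i} \in \category(A \closure X, \S(A \closure X))$ via \ref{def:summability-structure-2}. I then check the two inversion identities separately. Using \cref{prop:Dclos-equation}, I compute
\[ \Sproj_i \comp \Spairing{A \closure \Sproj_i} \comp \DclosDist = (A \closure \Sproj_i) \comp \DclosDist = \Sproj_i \]
and conclude $\Spairing{A \closure \Sproj_i} \comp \DclosDist = \id$ by joint monicity of $\vect\Sproj$ (\ref{def:summability-structure-1}). Dually,
\[ (A \closure \Sproj_i) \comp \DclosDist \comp \Spairing{A \closure \Sproj_i} = \Sproj_i \comp \Spairing{A \closure \Sproj_i} = A \closure \Sproj_i \]
so $\DclosDist \comp \Spairing{A \closure \Sproj_i} = \id$ by the joint monicity of $\sequence{A \closure \Sproj_i}$.

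For the direction $(2) \Rightarrow (1)$, I assume $\DclosDist$ is an isomorphism. Then \cref{prop:Dclos-equation} rewrites as $A \closure \Sproj_i = \Sproj_i \comp \DclosDist^{-1}$, which by \ref{def:summability-structure-2} exhibits $\DclosDist^{-1}$ as a witness for the family $\sequence{A \closure \Sproj_i}$. This simultaneously yields the identity $\Spairing{A \closure \Sproj_i} = \DclosDist^{-1}$ claimed at the end of the statement.

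There is no real obstacle: the proof is a routine diagram chase, and the only care needed is to invoke the correct joint monicity on each side (the one from the summability structure for one cancellation, the one arising from the right-adjoint functor $A \closure \_$ for the other). The statement is formally the cartesian-closed shadow of \cref{prop:Sclos-inverse}, and the only reason a separate proposition is needed at all is that the left $\Sigma$-additive setting does not a priori guarantee that $\Ev$ is $\Sigma$-additive in its first argument, which is precisely why that hypothesis was explicitly assumed before defining $\DclosDist$ in \cref{eq:Dclos}.
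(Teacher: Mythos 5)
Your proof is correct and coincides with the paper's argument: the paper simply says "Same proof as \cref{prop:Sclos-inverse}," and what you wrote is precisely that proof transported to the cartesian-closed setting, using \cref{prop:Dclos-equation} in place of \cref{prop:SstrL-prop} and the two joint monicities (of $\vect\Sproj$ and of $\sequence{A \closure \Sproj_i}$) exactly as intended.
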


\begin{proof}
  Same proof as \cref{prop:Sclos-inverse}.
\end{proof}

\subsubsection{Differential structure and closedness}
\label{section:differential-and-closure}

We now assume that $\category$ is a cartesian closed left
$\Sigma$-additive category, equipped with a left $\Sigma$-summability
structure.

\begin{definition} A Taylor expansion that is compatible 
  with the cartesian product is compatible with
  the closedness if $\Ev \in \category((A \closure X) \with A, X)$
  is $\D$-linear in its first argument.
  
  A cartesian closed Taylor category is a cartesian closed category equipped 
  with a Taylor expansion compatible with the closedness.
  A cartesian closed analytic category is a cartesian closed category equipped 
  with an analytic Taylor expansion compatible with the closedness.
\end{definition}

We now assume that the Taylor expansion is compatible with the closedness.
This implies that $\Sproj_i \comp \D_0 \Ev
= \Ev \comp (\Sproj_i \with A)$ so
\[ \DclosDist \defEq \Cur \Spairing{\Ev \comp (\Sproj_i \with \id_A)} 
 = \Cur (\D \Ev \comp \DprodstrL_{A \closure X, A})
 \in \category(\D (A \closure X), A \closure \D X) \]
and hence $\DclosDist$ is a natural transformation $\D (A \closure \_) \naturalTrans 
(A \closure \D \_)$. 
The definition of this morphism is very similar to that of $\Sclos$, see \cref{eq:Sclos}.
The natural transformation $\DclosDist$ 
is the pointwise structure of $\D$ on $A \closure \_$ 
defined from the strength $\DprodstrL$, as it is standard, see \cite{Kock71}.

As in, \cref{sec:summability-closure-LL},
the invertibility of $\DclosDist$ turns $\kleisliD$ into a cartesian closed 
  category. The internal hom of $(A, X)$ is taken to be $(A \closure X, \kleisliLD(\Ev_A))$
  (where $\kleisliLD$ is the canonical functor $\category \arrow \kleisliD$).
  The Curry transpose of $f \in \kleisliD(X \with A, Y)$ is defined as 
  the following composition.
  \[
  \begin{tikzcd}
    X & {(A \closure \D Y)} & {\D (A \closure Y)}
    \arrow["{\Cur(f)}", from=1-1, to=1-2]
    \arrow["{(\DclosDist)^{-1}}", from=1-2, to=1-3]
  \end{tikzcd} \, . \] 
  The fact that $\kleisliD$ is closed can be checked by hand, but is also a consequence of 
  a more general categorical observation very similar to the one of  
  \cref{rem:extension-closure-KleisliS},
  using the mate construction reviewed in \cref{sec:mate}.

\subsection{Cartesian closed analytic categories arising from LL}

\label{sec:Taylor-from-ll}

We show in this section that the coKleisli category of a resource
comonad on an analytic 
category is a cartesian (closed) analytic category.
Let $\categoryLL$ be a Taylor category, see \cref{def:Taylor-category}.
Then as seen in \cref{sec:summability-comonad}, 
$(\S, \sequence{\Der \Sproj_i})$
is a left $\Sigma$-summability structure on $\kleisliExp$ and 
\begin{equation} \label{eq:der-witness-sum}
  \Der \Spairing{h_i} = \Spairing{\Der h_i} \quad \quad 
\Der \psum h_i = \psum \Der h_i \, .
\end{equation}
Let $\Sinj_0, \SmonadSum, \Slift, \Sswap$ be the morphisms 
in $\categoryLL$ defined in \cref{thm:bimonad-structure} 
from the $\Sigma$-summability structure 
$(\S, \sequence{\Sproj_i})$. 
Then, \cref{eq:der-witness-sum} ensures that their 
counterpart in $\kleisliExp$, defined in 
\cref{thm:bimonad-structure} from the left $\Sigma$-summability 
structure $(\S, \sequence{\Der \Sproj_i})$,
are respectively equal to $\Der \Sinj_0$,  $\Der \SmonadSum$,
$\Der \Slift$ and $\Der \Sswap$.
Then as discussed in \cref{sec:Taylor}, the Taylor structure 
induces a functor $\D$ on $\kleisliExp$ for which 
$\Der \Sproj_0$,
$\Der \Sswap$, $\Der \SmonadSum$,
$\Der \Sinj_0$ and $\Der \Slift$ are natural transformations.
If the Taylor structure is analytic, $\Der \Ssum$ is also natural.
Thus, it only suffices to show that $\Der \Sproj_0$,
$\Der \Sswap$ and $\Der \Ssum$ are $\D$-linear to conclude that 
$\kleisliExp$ is a Taylor category (respectively, an analytic category).
This is a consequence of the fact that linearity in the sense of LL 
implies $\D$-linearity.

\begin{definition} \label{def:linear}
  Following the standard terminology of LL,
  a morphism $f \in \kleisliExp(X, Y)$ is 
  \emph{linear} if $f = \Der(h)$ for some $h \in \categoryLL(X, Y)$. 
\end{definition}

\begin{proposition} \label{prop:linear-is-D-linear}
  Every linear morphism is also 
  $\D$-linear.
\end{proposition}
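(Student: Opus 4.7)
The plan is to show the two conditions of $\D$-linearity (being $\Sigma$-additive and satisfying $\D f = \Sadd f$) separately for any $f = \Der h$ with $h \in \categoryLL(X, Y)$, where $\Sadd$ denotes the functor on $\kleisliExpAdd$ induced by the left $\Sigma$-summability structure on $\kleisliExp$.

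First I would handle $\Sigma$-additivity: this is already furnished by \cref{prop:additivity-der}, which states that $\Der h$ is $\Sigma$-additive in $\kleisliExp$ for any $h \in \categoryLL(X, Y)$. So no extra work is required on this point.

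Next I would verify the equation $\D(\Der h) = \Sadd(\Der h)$, which is the substantive part. Unfolding the extension of $\S$ to $\kleisliExp$ induced by the distributive law $\Sdl : \oc \S \naturalTrans \S \oc$ as in \cref{thm:coextension-and-dl}, the action of $\D$ on a morphism $f \in \kleisliExp(X, Y) = \categoryLL(\oc X, Y)$ is
\[
  \D f = \S f \compl \Sdl_X \in \categoryLL(\oc \S X, \S Y) = \kleisliExp(\S X, \S Y).
\]
Applying this to $f = \Der h = h \compl \der_X$ gives
\[
  \D(\Der h) = \S(h \compl \der_X) \compl \Sdl_X = \S h \compl \S \der_X \compl \Sdl_X.
\]
The key ingredient is now the first triangle of \ref{ax:Sdl-chain}, which says precisely that $\S \der \compl \Sdl = \der_{\S \_}$. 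Substituting this yields
\[
  \D(\Der h) = \S h \compl \der_{\S X} = \Der(\S h).
\]
Finally, combining this with the remark following \cref{prop:Kleisli-left-summability} (which states $\Sadd(\Der h) = \Der(\S h)$), we conclude that $\D(\Der h) = \Sadd(\Der h)$, completing the proof.

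There is no substantial obstacle here: the statement is essentially a reformulation of the first half of \ref{ax:Sdl-chain} together with the identification of the extended functor on the coKleisli category. The only subtlety is being careful that ``$\D$-linear'' in $\kleisliExp$ refers to the left $\Sigma$-summability structure $(\S, \sequence{\Der \Sproj_i})$, so that the role of the $\Sproj_i$ in \cref{def:D-linear} is played by $\Der \Sproj_i$; this is consistent with the computation above because $\Der \Sproj_i \comp \D(\Der h) = \Sproj_i \compl \Der(\S h) = h \compl \Sproj_i \compl \der = \Der h \comp \Der \Sproj_i$.
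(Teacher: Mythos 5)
Your proof is correct and follows essentially the same route as the paper: $\Sigma$-additivity from \cref{prop:additivity-der}, the identity $\D(\Der h)=\Der(\S h)$ coming from \ref{ax:Sdl-chain} (you merely unfold the counit triangle explicitly where the paper cites the extension property), and the final check against the $\Der\Sproj_i$. No gaps.
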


\begin{proof} Let $f = \Der h \in \kleisliExp(X, Y)$. 
  By \cref{prop:additivity-der}, $\Der h$ is $\Sigma$-additive. 
  Furthermore, \ref{ax:Sdl-chain} ensures that $\D$ extends $\S$,
  meaning that $\D (\Der h) = \Der (\S h)$. Then
  \[ \Der \Sproj_i \comp \D (\Der h)
  = \Der \Sproj_i \comp \Der (\S h) 
  = \Der (\Sproj_i \compl \S h) 
  = \Der(h \compl \Sproj_i)
  = \Der h \comp \Der \Sproj_i \] so $\Der h$ is $\D$-linear.
\end{proof}
Then by \cref{prop:linear-is-D-linear},
$\Der \Sproj_i$, $\Der \Ssum$, $\Der 0 = 0$ are all $\D$-linears, 
and we just proved the following result.

\begin{proposition} \label{prop:kleisli-analytic}
  If $\categoryLL$ is a Taylor category, then $\kleisliExp$ has a Taylor expansion.
  If $\categoryLL$ is an analytic category, then $\kleisliExp$ has an analytic structure.
\end{proposition}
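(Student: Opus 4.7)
The plan is to verify each axiom of a Taylor expansion (resp.\ analytic structure) on $\kleisliExp$ by pulling back the corresponding axiom on $\categoryLL$ through the bijection between distributive laws and extensions to coKleisli categories (\cref{thm:coextension-and-dl,thm:coextension-and-dl-morphism}). The functor $\D$ is already in hand: \ref{ax:Sdl-chain} asserts that $\Sdl : \oc \S \naturalTrans \S \oc$ is a distributive law between the functor $\S$ and the comonad $\Oc\_$, so \cref{thm:coextension-and-dl} produces an extension $\D : \kleisliExp \arrow \kleisliExp$ of $\S$, which immediately gives \ref{ax:D-chain}.

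Next I would read off the naturality axioms \ref{ax:D-local}, \ref{ax:D-add}, \ref{ax:D-lin}, \ref{ax:D-Schwarz} from the morphism-of-distributive-laws conditions \ref{ax:Sdl-local}, \ref{ax:Sdl-add}, \ref{ax:Sdl-lin}, \ref{ax:Sdl-Schwarz} respectively, using \cref{thm:coextension-and-dl-morphism}: each of these axioms states exactly that $\Sproj_0$, $\Sinj_0$, $\SmonadSum$, $\Slift$, $\Sswap$ is a morphism of the relevant distributive laws (the composite laws being obtained compositionally as in \cref{rem:dl-compositional}), which by the theorem is equivalent to saying that $\Der \Sproj_0$, $\Der \Sinj_0$, $\Der \SmonadSum$, $\Der \Slift$, $\Der \Sswap$ are natural transformations on $\kleisliExp$ with respect to the extension $\D$. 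Since by construction $\D$ extends $\S$, these are precisely the morphisms in $\kleisliExp$ named by $\Sinj_0, \SmonadSum, \Slift, \Sswap$ through the left $\Sigma$-summability structure on $\kleisliExp$ given by \cref{prop:Kleisli-left-summability}.

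It remains to handle the linearity axioms \ref{ax:D-proj-lin} and \ref{ax:D-sum-lin}. All of the morphisms that must be $\D$-linear, namely the projections $\Der \Sproj_i$, the sum $\Der \Ssum$, and the zero morphism $0 = \Der 0$, are linear in the sense of \cref{def:linear} since they are all of the form $\Der h$ for some $h \in \categoryLL$. \Cref{prop:linear-is-D-linear} then gives $\D$-linearity of each of them for free. This concludes the verification that $\kleisliExp$ is a cartesian Taylor category. For the analytic part, if $\Sdl$ additionally satisfies \ref{ax:Sdl-Taylor}, then $\Ssum$ is a morphism of distributive laws between $\Sdl$ and $\id$, so \cref{thm:coextension-and-dl-morphism} yields that $\Der \Ssum : \D \naturalTrans \idfun$ is natural, which is exactly \ref{ax:D-Taylor}.

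There is no real obstacle here: the work was done upstream in \cref{sec:dl} (the two-way correspondence between distributive laws and coKleisli extensions), in the axiomatization of \cref{sec:Taylor} (which was deliberately designed so that each Taylor-category axiom is a morphism-of-distributive-laws condition for one of the bimonad structure maps of $\S$), and in \cref{prop:linear-is-D-linear}. The only mild subtlety worth spelling out is the bookkeeping that identifies the composite distributive laws appearing in \ref{ax:Sdl-add}, \ref{ax:Sdl-lin}, \ref{ax:Sdl-Schwarz} with the ones governing the extensions of $\S^2$ and of iterated functors to $\kleisliExp$, which is handled by the compositionality remark \cref{rem:dl-compositional}.
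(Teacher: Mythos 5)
Your proposal is correct and follows essentially the same route as the paper: both obtain $\D$ and the naturality of $\Der\Sproj_0$, $\Der\Sinj_0$, $\Der\SmonadSum$, $\Der\Slift$, $\Der\Sswap$ (and $\Der\Ssum$ in the analytic case) from the distributive-law/extension correspondence of \cref{thm:coextension-and-dl,thm:coextension-and-dl-morphism} together with the identification $\Der\Spairing{h_i}=\Spairing{\Der h_i}$ from \cref{prop:Kleisli-left-summability}, and both settle \ref{ax:D-proj-lin} and \ref{ax:D-sum-lin} by observing that $\Der\Sproj_i$, $\Der\Ssum$ and $0=\Der 0$ are linear and invoking \cref{prop:linear-is-D-linear}. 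The only nitpick is the phrase ``cartesian Taylor category'': at this stage you have only established the Taylor expansion itself (compatibility with the cartesian product, i.e.\ $\D$-linearity of the projections $\Der\prodProj_i$, is a separate later check), but that is all the proposition claims, so the proof stands.
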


\begin{remark} There are three layers of linearity: 
  additivity, $\D$-linearity and linearity, with 
  the following string of implications.
  \[
\begin{tikzcd}[column sep = huge]
  \text{linearity} \arrow[r, "\text{\cref{prop:linear-is-D-linear}}", Rightarrow] 
  & \text{$\D$-linearity} \arrow[r, "\text{by definition}", Rightarrow] 
  & \text{additivity.}
  \end{tikzcd}
    \]
  As discussed in \cite{Blute09}, additivity does not necessarily imply 
  $\D$-linearity. The link between $\D$-linearity and linearity 
  should be investigated further, drawing inspirations from 
  ~\cite{Blute14,Garner21} that study under which conditions 
  a cartesian differential category is the coKleisli category 
  of its category of linear morphisms.
\end{remark}

We now show that the Taylor expansion given in \cref{prop:kleisli-analytic} 
is compatible with the cartesian closedness of $\kleisliExp$.
We crucially rely on \cref{prop:linear-is-D-linear} and on 
a generalization of this property to multilinear morphisms.

Remember that the category $\kleisliExp$ is cartesian. The cartesian product 
$\withFam X_i$ is the same as the one in $\categoryLL$, and the projections
are $\Der \prodProj_i$. 
By \cref{prop:linear-is-D-linear} above, these projections
are $\D$-linear. In particular, they are $\Sigma$-additive. 

Let $\SprodDist \in \categoryLL(\S (\withFam X_i), \withFam \S X_i)$ 
be the natural isomorphism of $\categoryLL$ 
defined in \cref{eq:SprodDist} from the $\Sigma$-summability 
structure $(\S, \sequence{\Sproj_i})$.
Then, the distributivity of the left $\Sigma$-summability structure 
$(\S, \sequence{\Der(\Sproj_i)})$ 
over the categorical product of $\kleisliExp$ given 
in \cref{eq:DprodDist} is equal to
\[ \prodPairing{\Sadd (\Der \prodProj_i)}
= \prodPairing{\Der (\S \prodProj_i)}
= \Der \prodPairing{\S \prodProj_i} = \Der \DprodDist \, .\]
This morphism is an isomorphism, with inverse $\Der \SprodDist^{-1}$.
We just proved the following result.
\begin{proposition}
  The Taylor expansion in $\kleisliExp$ given in \cref{prop:kleisli-analytic}
  is compatible with the categorical product of $\kleisliExp$.
\end{proposition}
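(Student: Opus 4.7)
The plan is to observe that both pieces of data required for compatibility with the cartesian product of $\kleisliExp$, namely the validity of \ref{ax:S-with-left} and the $\D$-linearity of the projections, reduce to the fact that morphisms of the form $\Der h$ are $\D$-linear, as established in \cref{prop:linear-is-D-linear}.

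First I would check that the left $\Sigma$-summability structure $(\S,\sequence{\Der \Sproj_i})$ on $\kleisliExp$ satisfies \ref{ax:S-with-left}. The projections of the cartesian product in $\kleisliExp$ are $\Der \prodProj_i$, which are linear (in the LL sense of \cref{def:linear}), hence $\Sigma$-additive by \cref{prop:linear-is-D-linear}. For the summability of $\sequence[j]{\withFam \Der \Sproj_j}$, the discussion preceding the statement has already verified that the natural morphism $\DprodDist$ in $\kleisliExp$ coincides with $\Der \SprodDist$, whose inverse is $\Der \SprodDist^{-1}$ (since $\SprodDist$ is invertible in $\categoryLL$ by assumption that $\categoryLL$ is $\Sigma$-additive resource and hence satisfies \ref{ax:S-with}). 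Invoking \cref{prop:prodSwap-inverse-left} in $\kleisliExp$ then yields summability of the family $\sequence[j]{\withFam \Der \Sproj_j}$, i.e.\ \ref{ax:S-with-left}.

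Next I would conclude by invoking the definition directly: compatibility with the cartesian product requires only that the projections be $\D$-linear. Since the projections of the cartesian product in $\kleisliExp$ are precisely the linear morphisms $\Der \prodProj_i$, \cref{prop:linear-is-D-linear} applies verbatim to show each of them is $\D$-linear. This concludes the proof.

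There is essentially no obstacle here: the result is a direct corollary of \cref{prop:linear-is-D-linear} together with the observation (already made in the paragraph preceding the statement) that all the structural morphisms of the cartesian product in $\kleisliExp$ arise as $\Der$-images of their counterparts in $\categoryLL$. The real substance lies earlier, in establishing that linearity implies $\D$-linearity, which rests on the axiom \ref{ax:Sdl-chain} guaranteeing that $\D$ extends $\S$ along $\Der$.
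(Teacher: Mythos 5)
Your proposal is correct and takes essentially the same route as the paper: the projections $\Der \prodProj_i$ are $\D$-linear by \cref{prop:linear-is-D-linear}, and the distributivity morphism of $\kleisliExp$ is $\Der \SprodDist$, invertible with inverse $\Der \SprodDist^{-1}$, which yields \ref{ax:S-with-left} via \cref{prop:prodSwap-inverse-left}. The only step the paper writes out that you take as already established is the one-line identification $\prodPairing{\Sadd(\Der \prodProj_i)} = \prodPairing{\Der(\S \prodProj_i)} = \Der \prodPairing{\S \prodProj_i} = \Der \SprodDist$, which follows from $\Sadd(\Der h) = \Der(\S h)$ and the fact that $\Der$ commutes with tupling.
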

As seen in \cref{sec:product-differential}, the monad $\monadD$
on $\kleisliExp$ is a strong monad, whose left and right strengths
$\DprodstrL$ and $\DprodstrR$ are given by 
\cref{eq:D-strength}. Explicitly,
\[ \DprodstrL \defEq \Der \SprodDist^{-1} \comp (\D X_0 \with \Der\Sinj_0)
= \Der \SprodDist^{-1} \comp \Der (\S X_0 \with \Sinj_0) 
= \Der (\SprodDist^{-1} \compl (\S X_0 \with \Sinj_0)) \, .\]
So $\DprodstrL = \Der \SprodstrL$ 
and similarly $\DprodstrR = \Der \SprodstrR$, where 
$\SprodstrL$ and $\SprodstrR$ are the strength 
of the monad $\monadS$ on $\categoryLL$, given in 
\cref{eq:Sprod-strength}.

The axiom 
\ref{ax:Sdl-with} that has not been used so far essentially ensures that 
a bilinear morphism is $\D$-bilinear.
As shown in \cite{Ehrhard22-pcf}, the axiom \ref{ax:Sdl-with}
implies the commutativity of the following diagrams:
\begin{equation} \label{eq:Sdl-with-partial-L}
\begin{tikzcd}
	{!\S X \tensor !Y } & {S !X \tensor !Y} & {S (!X \tensor !Y)} \\
	{!(\S X \with Y)} & {!\S(X \with Y)} & {\S !(X \with Y)}
	\arrow["{\seelyTwo_{\S X, Y}}"', from=1-1, to=2-1]
	\arrow["{\Sdl_X \tensor !Y}", from=1-1, to=1-2]
	\arrow["{\SstrL_{!X, !Y}}", from=1-2, to=1-3]
	\arrow["{! \SprodstrL}"', from=2-1, to=2-2]
	\arrow["{\Sdl_{X \with Y}}"', from=2-2, to=2-3]
	\arrow["{\S \seelyTwo_{X, Y}}", from=1-3, to=2-3]
\end{tikzcd}
\end{equation}
\begin{equation} \label{eq:Sdl-with-partial-R}
\begin{tikzcd}	{!X \tensor !\S Y} & {!X \tensor \S !Y} & {\S (!X \tensor !Y)} \\
	{!(X \with \S Y)} & {!\S(X \with Y)} & {\S ! (X \with Y)}
	\arrow["{!X \tensor \Sdl_Y}", from=1-1, to=1-2]
	\arrow["\SstrR", from=1-2, to=1-3]
	\arrow["{\seelyTwo_{X, \S Y}}"', from=1-1, to=2-1]
	\arrow["{! \SprodstrR_{X,Y}}"', from=2-1, to=2-2]
	\arrow["{\Sdl_{X \with Y}}"', from=2-2, to=2-3]
	\arrow["{\S \seelyTwo_{X, Y}}", from=1-3, to=2-3]
\end{tikzcd}
\end{equation}
Recall that $\SstrL$ and $\SstrR$ are the strengths of the monad $\monadS$ over 
the monoidal product $\tensor$ of $\categoryLL$, see \cref{sec:summability-tensor}.
Those commutations are not surprising. For example, \cref{eq:Sdl-with-partial-L}
means that $\seelyTwo$ is a morphism
between the distributive law 
$!\S \_ \tensor !Y \naturalTrans \S(!\_ \tensor !Y)$ 
and the distributive law 
$!(\S \_ \with Y) \naturalTrans \S !(\_ \with Y)$. Those 
are similar to the distributive laws involved in \ref{ax:Sdl-with}
except that they use the strengths $\SprodstrL$ and 
$\SstrL$ instead of the lax monoidalities
$\SprodDist^{-1}$ and $\Sdist$.

\begin{definition}
  Following the standard notation of LL, a morphism 
  $f \in \kleisliExp(X \with Y, Z)$ is \emph{bilinear} if 
  there is $h \in \categoryLL(X \tensor Y, Z)$ such that 
  \[ f =
  \begin{tikzcd}
    {!(X \with Y)} & {!X \tensor !Y} & {X \tensor Y} & Z
    \arrow["{(\seelyTwo)^{-1}}", from=1-1, to=1-2]
    \arrow["{\der \tensor \der }", from=1-2, to=1-3]
    \arrow["h", from=1-3, to=1-4]
  \end{tikzcd} \, .\]
  A morphism $f \kleisliExp(X \with Y, Z)$ is \emph{linear in its first argument} 
  if there exists 
  $h \in \categoryLL(X \tensor !Y, Z)$ such that 
  \[ f =
  \begin{tikzcd}
    {!(X \with Y)} & {!X \tensor !Y} & {X \tensor !Y} & Z
    \arrow["{(\seelyTwo)^{-1}}", from=1-1, to=1-2]
    \arrow["{\der \tensor ! Y }", from=1-2, to=1-3]
    \arrow["h", from=1-3, to=1-4]
  \end{tikzcd} \, . \]
  We can define similarly what is a morphism linear in its second argument.
\end{definition}
Observe that a bilinear morphism is linear in both of its arguments.

\begin{proposition} \label{prop:multilinear-is-D-multilinear}
  A morphism which is linear in an argument 
  is $\D$-linear in that argument. A bilinear morphism is $\D$-bilinear.
\end{proposition}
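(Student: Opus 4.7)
The plan is to prove the linearity-in-first-argument case; linearity in the second argument is symmetric (swap \cref{eq:Sdl-with-partial-L} for \cref{eq:Sdl-with-partial-R}), and a bilinear morphism satisfies both. So suppose $f = h \compl (\der \tensor \Oc Y) \compl \seelyTwo_{X,Y}^{-1}$ with $h \in \categoryLL(X \tensor \Oc Y, Z)$. We need to verify that $f$ is $\Sigma$-additive in its first argument and that $\Der \Sproj_i \comp \D_0 f = f \comp (\Der \Sproj_i \with Y)$ for every $i$.

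I will first unfold both sides using \cref{prop:Der-composition} and the equalities $\D f = \S f \compl \Sdl_{X \with Y}$ and $\DprodstrL = \Der \SprodstrL$ established just before the statement. This yields
\[ \Sproj_i \compl \D_0 f = \Sproj_i \compl \S f \compl \Sdl_{X \with Y} \compl \Oc \SprodstrL \qquad f \comp (\Der \Sproj_i \with Y) = f \compl \Oc (\Sproj_i \with Y). \]
For the left-hand side, I plug in the factorization of $f$ and apply \cref{eq:Sdl-with-partial-L} (the partial version of \ref{ax:Sdl-with}) to rewrite $\S \seelyTwo^{-1} \compl \Sdl_{X \with Y} \compl \Oc \SprodstrL$ as $\SstrL \compl (\Sdl_X \tensor \Oc Y) \compl \seelyTwo_{\S X, Y}^{-1}$. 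Then naturality of $\SstrL$ moves $\S(\der \tensor \Oc Y)$ across, producing $\SstrL \compl ((\S \der \compl \Sdl_X) \tensor \Oc Y)$, and \ref{ax:Sdl-chain} simplifies $\S \der \compl \Sdl_X$ to $\der$. Finally naturality of $\Sproj_i$ and the defining equation $\Sproj_i \compl \SstrL = \Sproj_i \tensor \Oc Y$ give
\[ \Sproj_i \compl \D_0 f = h \compl (\Sproj_i \tensor \Oc Y) \compl (\der \tensor \Oc Y) \compl \seelyTwo_{\S X, Y}^{-1}. \]
For the right-hand side, naturality of $\seelyTwo^{-1}$ and of $\der$ applied to $f \compl \Oc(\Sproj_i \with Y)$ produce the same expression, which closes the $\D$-linearity equation.

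For $\Sigma$-additivity in the first argument, the same factorization reduces everything to a sum of morphisms of the form $h \compl ((-) \tensor \Oc Y) \compl (\der \tensor \Oc Y) \compl \seelyTwo^{-1}$. Since $\tensor$ distributes over sums in $\categoryLL$ by \ref{ax:S-sm-dist}, and since $h$ is $\Sigma$-additive in $\categoryLL$, the family $\sequence{f \comp (\Der \Sproj_i \with Y)}$ is summable with sum $f \comp (\Der \Ssum \with Y)$, and $f \comp (\Der 0 \with Y) = 0$ because $0 \tensor \Oc Y = 0$. This is exactly $\Sigma$-additivity of $f$ in its first argument by the partial-argument analogue of \cref{prop:additive}.

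The second-argument case repeats the same four steps but starts from \cref{eq:Sdl-with-partial-R} and uses $\SstrR$ and $\SprodstrR$ in place of their left counterparts; no new ideas are needed. The bilinear case is the conjunction of the two linear cases, giving $\D$-bilinearity. The only real subtlety is keeping track of which diagram of the partial \ref{ax:Sdl-with} applies, since the bookkeeping of $\seelyTwo^{-1}$, $\SprodstrL$ and $\SstrL$ looks heavier than it actually is once one draws the underlying naturality square.
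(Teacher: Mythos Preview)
Your proof is correct and follows essentially the same approach as the paper: both factor $f \comp (\Der g \with Y)$ through $h \compl (g \tensor \Oc Y) \compl (\der_{\S X} \tensor \Oc Y) \compl \seelyTwo_{\S X,Y}^{-1}$ via naturality of $\seelyTwo$ and $\der$, then compute $\D_0 f$ by applying \cref{eq:Sdl-with-partial-L}, naturality of $\SstrL$, and \ref{ax:Sdl-chain} to reach the same expression. The only cosmetic difference is that the paper establishes $\Sigma$-additivity first and the explicit form of $\D_0 f$ second, whereas you do these in the opposite order.
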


\begin{proof} Assume that $f$ is linear in its first argument:
$f = h \compl (\der \tensor !Y) \compl (\seelyTwo_{X, Y})^{-1}$. 
Then for any $g \in \categoryLL(A, X)$
\begin{align*}
  f \comp (\Der g \with Y) &= f \comp \Der(g \with Y) \\
  &= h \compl (\der \tensor !Y) \compl (\seelyTwo_{X, Y})^{-1} \compl !(g \with Y)
  \tag*{by \cref{prop:Der-composition}} \\
  &= h \compl (\der \tensor !Y) \compl (!g \tensor !Y) \compl (\seelyTwo_{X, Y})^{-1} 
  \tag*{by naturality of $\seelyTwo$} \\
  &= h  \compl (g \tensor !Y) \compl (\der_{\S X} \tensor !Y) \compl (\seelyTwo_{X, Y})^{-1} 
  \tag*{by naturality of $\der$.} \\
\end{align*}
So $f \comp (0 \with Y) = f \comp (\Der 0 \with Y) = 
h  \compl (0 \tensor !Y) \compl (\der_{\S X} \tensor !Y) \compl (\seelyTwo_{X, Y})^{-1}
= 0$ by \ref{ax:S-sm-dist}, by additivity in the morphisms in $\categoryLL$, and 
by left additivity.
Moreover,
\[ f \comp (\Der \Sproj_i \with Y) 
= h  \compl (\Sproj_i \tensor !Y) \compl (\der_{\S X} \tensor !Y) 
\compl (\seelyTwo_{X, Y})^{-1} \] so by \ref{ax:S-sm-dist}, 
by additivity in the morphisms in $\categoryLL$, and by left additivity, the 
sequence $\sequence{f \comp (\Der \Sproj_i \with Y)}$ is summable 
with sum 
\[ h \compl (\Ssum \tensor !Y) \compl (\der_{\S X} \tensor !Y) 
\compl (\seelyTwo_{X, Y})^{-1} = f \comp (\Der \Ssum \with Y) \, . \]
Thus, $f$ is $\Sigma$-additive in its first argument. Furthermore,
\begin{align*}
  \D_0 f &= \D f \comp \Der(\SprodstrL_{X, Y}) \\
  &= \S f \compl \Sdl_{X \with Y} \compl !\SprodstrL_{X, Y} 
  \tag*{by definition and \cref{prop:Der-composition}} \\
  &= \S h \compl \S(\der_X \tensor !Y) \compl \S \seelyTwo_{X, Y}^{-1} 
  \compl \Sdl_{X \with Y} \compl !\SprodstrL_{X, Y} \\
  &=  \S h \compl \S(\der_X \tensor !Y) \compl \SstrL_{!X, !Y} 
  \compl (\Sdl_X \tensor !Y) \compl (\seelyTwo_{\S X, Y})^{-1}
  \tag*{by \cref{eq:Sdl-with-partial-L}} \\
  &=  \S h \compl \SstrL_{X, !Y} \compl (\S \der_X \tensor !Y) 
  \compl (\Sdl_X \tensor !Y) \compl (\seelyTwo_{\S X, Y})^{-1}
  \tag*{by naturality of $\SstrL$} \\
  &= \S h \compl \SstrL_{X, !Y} \compl (\der_{\S X} \tensor !Y) 
  \compl (\seelyTwo_{\S X, Y})^{-1} 
  \tag*{by \ref{ax:Sdl-chain}.} 
\end{align*}
In particular, $\D_0 f$ is also linear in its first argument, and 
\begin{align*}
  \Der \Sproj_i \comp \D_0 f 
  &= \Sproj_i \compl \S h \compl \SstrL_{X, !Y} \compl
  (\der_{\S X} \tensor !Y) \compl (\seelyTwo_{\S X, Y})^{-1}
  \tag*{by what precedes and \cref{prop:Der-composition}} \\
  &= h \compl \Sproj_i \compl \SstrL_{X, !Y} \compl
  (\der_{\S X} \tensor !Y) \compl (\seelyTwo_{\S X, Y})^{-1} 
  \tag*{by naturality of $\Sproj_i$} \\
  &= h \compl (\Sproj_i \tensor !Y) \compl
  (\der_{\S X} \tensor !Y) \compl (\seelyTwo_{\S X, Y})^{-1} \\
  &= h \compl (\der_X \tensor !Y) \compl (!\Sproj_i \tensor !Y)
  \compl (\seelyTwo_{\S X, Y})^{-1}
  \tag*{by naturality of $\der$} \\
  &= h \compl (\der_X \tensor !Y) \compl (\seelyTwo_{X, Y})^{-1}
  \compl !(\Sproj_i \with Y)
  \tag*{by naturality of $\seelyTwo$} \\
  &= f \comp \Der(\Sproj_i \with Y) 
  \tag*{by \cref{prop:Der-composition}} \\
  &= f \comp (\Der \Sproj_i \with Y)  \, .
\end{align*}
So $f$ is $\D$-linear in its first argument. A similar proof based
on \cref{eq:Sdl-with-partial-R} show that if $f$ is linear in 
its second argument then it is $\D$-linear in that argument. 
Finally, applying both results on a bilinear morphism shows 
that any bilinear morphism is $\D$-bilinear.
\end{proof}

Assume that $\categoryLL$ is closed with regard to $\tensor$,
and that \ref{ax:S-fun} holds.
The category $\kleisliExp$ is closed with regard to its cartesian product. 
The internal hom of $(A, X)$ 
is given by $(A \closure X, \Ev)$ where $A \closure X = !A \linarrow X$
and 
\[ \Ev = \begin{tikzcd}
	{!((!A \linarrow X) \with A)} & & {!(!A \linarrow X) \tensor !A} & & {(!A \linarrow X) \tensor !A} & X
	\arrow["{(\seelyTwo_{!A \linarrow X, A})^{-1}}", from=1-1, to=1-3]
	\arrow["{\der_{!A \linarrow X} \tensor !A}", from=1-3, to=1-5]
	\arrow["\ev", from=1-5, to=1-6]
      \end{tikzcd} \, . \]
If $f \in \kleisliExp(X \with A, Y)$ then 
$f \compl \seelyTwo_{X, A} \in \categoryLL(!X \tensor !A, Y)$ and
$\Cur(f) = \cur(f \compl \seelyTwo_{X}) \in \kleisliExp(X, !A \linarrow Y)$.

Observe that $\Ev$ is linear in its first argument, so
$\Ev$ is $\Sigma$-additive and $\D$-linear in that argument thanks to
\cref{prop:multilinear-is-D-multilinear}.
Furthermore, for any sequence 
$\sequence{f_i \in \kleisliExp(X \with A, Y)}$ that is summable, 
$\sequence{f_i \compl \seelyTwo}$ is summable by left additivity, 
so $\sequence{\Cur(f_i)}$ is summable by \ref{ax:S-fun}
and \cref{prop:Sclos-inverse}. By \cref{prop:sum-closure-left},
this implies that $\kleisliExp$ is a cartesian closed left $\Sigma$-additive category, 
and that the Taylor expansion given in \cref{prop:kleisli-analytic} 
is compatible with the closedness.
To summarize everything, we have proved the 
following result.

\begin{theorem} For any Taylor category $\categoryLL$ that is closed and that 
  follows \ref{ax:S-fun}, $\kleisliExp$ is a cartesian closed Taylor category.
  For any analytic category $\categoryLL$ that is closed and that 
  follows \ref{ax:S-fun}, $\kleisliExp$ is a cartesian closed analytic category.
\end{theorem}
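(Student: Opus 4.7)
The plan is to assemble the ingredients already accumulated in the preceding pages rather than re-prove them from scratch. First, I invoke \cref{prop:kleisli-analytic} directly: if $\categoryLL$ is a Taylor category, the functor $\D$ is defined on $\kleisliExp$ and the natural transformations $\Der\Sproj_0$, $\Der\Sinj_0$, $\Der\SmonadSum$, $\Der\Slift$, $\Der\Sswap$ make it a Taylor expansion on the left $\Sigma$-summability structure $(\S, \sequence{\Der\Sproj_i})$ of \cref{prop:Kleisli-left-summability}; if $\categoryLL$ is moreover analytic, the naturality of $\Der\Ssum$ gives \ref{ax:D-Taylor}, so $\D$ is analytic. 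In both cases $\D$-linearity of $\Der\Sproj_i$, $\Der\Ssum$ and $\Der 0$ follows from \cref{prop:linear-is-D-linear} since these morphisms are linear in the sense of \cref{def:linear}.

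Second, I record compatibility with the cartesian product. Since each $\Der\prodProj_i$ is linear, it is $\D$-linear by \cref{prop:linear-is-D-linear}, which is precisely the condition for compatibility with the cartesian structure. The computation already carried out in the text shows that $\DprodDist$ in $\kleisliExp$ coincides with $\Der\SprodDist$, whose inverse is $\Der\SprodDist^{-1}$, so the isomorphism assumption \ref{ax:S-with-left} (itself a consequence of \ref{ax:S-with} in $\categoryLL$) is transported to $\kleisliExp$.

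Third, for closedness I use that $\categoryLL$ is closed and satisfies \ref{ax:S-fun}, so by \cref{prop:Sclos-inverse} the morphism $\Sclos$ is an isomorphism in $\categoryLL$. The coKleisli internal hom $A\closure X = \Oc A \linarrow X$ with the standard $\Ev$ built from $\ev$ is linear in its first argument, hence $\D$-linear in its first argument by \cref{prop:multilinear-is-D-multilinear}; this already gives the compatibility of the Taylor expansion with closure. To get \cref{prop:sum-closure-left}, I observe that for any summable family $\sequence{f_i \in \kleisliExp(X \with A, Y)}$, the family $\sequence{f_i \compl \seelyTwo_{X,A}}$ is summable in $\categoryLL(\Oc X \tensor \Oc A, Y)$ by left distributivity, hence $\sequence{\cur(f_i \compl \seelyTwo_{X,A})} = \sequence{\Cur(f_i)}$ is summable by \cref{prop:S-fun-equivalent}. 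Thus $\kleisliExp$ is a cartesian closed left $\Sigma$-additive category whose Taylor expansion is compatible with the closure.

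There is no genuine obstacle here: once one unpacks the definitions of cartesian closed (analytic) Taylor category, every required clause has been verified in \cref{sec:Taylor-from-ll}. The slightly delicate point is tracking that the strengths $\DprodstrL, \DprodstrR$ in $\kleisliExp$ are the image under $\Der$ of the corresponding strengths in $\categoryLL$ (so that partial Taylor expansion in $\kleisliExp$ really comes from the LL strengths) and that $\Der$ preserves summability and witnesses via \cref{eq:der-witness-sum}; both were established before the theorem, and they are what allows the pointwise structure $\DclosDist$ in $\kleisliExp$ to match $\Der \Sclos$ and hence inherit invertibility from \ref{ax:S-fun}.
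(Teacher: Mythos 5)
Your proposal is correct and follows essentially the same route as the paper: \cref{prop:kleisli-analytic} for the (analytic) Taylor expansion on $\kleisliExp$, $\D$-linearity of the linear projections and invertibility of $\Der\SprodDist$ for the cartesian part, and $\D$-linearity of $\Ev$ in its first argument (via \cref{prop:multilinear-is-D-multilinear}) together with summability of Curry transposes from \ref{ax:S-fun} for the closed part. Citing \cref{prop:S-fun-equivalent} instead of \cref{prop:Sclos-inverse} for the last point is an immaterial variation, and the closing remark identifying $\DclosDist$ with $\Der\Sclos$ is a harmless aside not needed for the argument.
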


\section{Categorical complements: adjunction, mates and distributive law}
\label{sec:mate}

We review in this section the mate 
construction, a bijection between natural transformations involving 
left and right adjoint functors. 
The main application of this section is in 
\cref{sec:elementary}, in which the bimonad $\S$ developed in 
\cref{sec:summability-structure} is equal to $\Dbimon \linarrow \_$
for some object $\Dbimon$, so 
the adjunction $\_ \tensor \Dbimon \dashv \Dbimon \linarrow \_$ induces 
a mate construction that relates the bimonad structure of $\S$ to a 
bimonad structure on $\_ \tensor \Dbimon$.
We suggest the reader to read \cref{sec:elementary} first,
and to only refer to this background section when necessary. 

First, we introduce in \cref{sec:mate-construction} the mate bijection
and motivate that this bijection is compositional.
We review in \cref{sec:mate-monade-comonade}
how the mate induces a bijection between monads 
and comonad structures on the adjoint pair.
Then we use the compositionality of the mate construction 
to prove in \cref{sec:compatibility-adjunction} that 
this bijection preserves distributive laws and morphisms of 
distributive laws.
As such, the mate construction 
also relates lax and oplax monoidal structures, as shown in 
\cref{sec:mate-oplax-lax}.
Finally, we use those result in \cref{sec:extension-adjunction}
to give a direct proof of a fairly known result on the extension 
of adjunctions.

\subsection{Adjunctions and the mate construction}

\label{sec:mate-construction}

We recall definitions on adjunctions and on the mate construction. We refer 
the reader to~\cite{Kelly74} for more details on the content of this section.
An adjunction between two functors $\ladj : \cat \arrow \catbis$ and 
$\radj : \catbis \arrow \cat$ 
consists of two natural
transformations $\unit : \idfun_{\cat} \naturalTrans \radj \ladj$ 
and $\counit : \ladj \radj \naturalTrans \idfun_{\catbis}$ such that the 
following diagrams of natural transformations commute.
\[
	\begin{tikzcd}[column sep = small]
		& {\ladj \radj \ladj} \\
		\ladj && \ladj
		\arrow["{\counit \ladj}", from=1-2, to=2-3]
		\arrow["{\ladj \unit}", from=2-1, to=1-2]
		\arrow[Rightarrow, no head, from=2-1, to=2-3]
	\end{tikzcd} \quad \quad
	\begin{tikzcd}[column sep = small]
	& {\radj \ladj \radj} \\
	\radj && \radj
	\arrow["{\radj \counit}", from=1-2, to=2-3]
	\arrow["{\unit \radj}", from=2-1, to=1-2]
	\arrow[Rightarrow, no head, from=2-1, to=2-3]
	\end{tikzcd} \]
Those two equations 
are called the triangle identities. The functor $\ladj$ is called 
the left adjoint, $\radj$ is called the right adjoint, 
$\unit$ is called the unit, and $\counit$ is called the co-unit. We write 
an adjunction
$\unit, \counit : \ladj \dashv \radj$.
	
	The unit and co-unit induce 
	two natural bijections
	$\isoadjL_{X, Y} : \catbis(\ladj X, Y) \arrow \cat(X, \radj Y)$ and 
	$\isoadjR_{X,Y} : \cat(X, \radj Y) \arrow \catbis(\ladj X, Y)$ inverse of 
	each other, defined as follows: 
	\[ \isoadjL_{X, Y}(f) = \radj f \comp \unit_X \quad \quad 
	\isoadjR_{X, Y}(g) = \counit_Y \comp \ladj g \, .\] 
	In fact, an adjunction can also be given by two natural bijections
	$\isoadjL_{X, Y} : \catbis(\ladj X, Y) \arrow \cat(X, \radj Y)$ and 
	$\isoadjR_{X,Y} : \cat(X, \radj Y) \arrow \catbis(\ladj X, Y)$ inverse of 
	each other. The unit and the co-unit are defined from those bijections
	taking $\unit_X = \isoadjL_{X, \ladj X}(\id_{\ladj X})$ and 
	$\counit = \isoadjR_{\radj X, X}(\id_{\radj X})$. 
	
	There is a category whose objects are the
	categories, and 
	whose morphisms are the adjoint pairs. 
	The identity is the trivial adjoint pair $\id, \id : \idfun \dashv \idfun$.
	The composition of an adjoint pair $\unit[1], \counit[1] : \ladj[1] \dashv \radj[1]$ 
	with an adjoint pair $\unit[2], \counit[2] : \ladj[2] \dashv \radj[2]$
	is defined as the adjoint pair
	$\unit, \counit : 
	\ladj[2] \ladj[1] \dashv \radj[1] \radj[2]$ where
	\begin{center} 
	$\unit \defEq 
	\begin{tikzcd}
		\idfun & {\radj[1] \ladj[1]} & {\radj[1] \radj[2] \ladj[2] \ladj[1].}
		\arrow["{\unit[1]}", from=1-1, to=1-2]
		\arrow["{\radj[1] \unit[2] \ladj[1]}", from=1-2, to=1-3]
	\end{tikzcd}$
	
	$\counit \defEq 
	\begin{tikzcd}
		{\ladj[2] \ladj[1] \radj[2] \radj[1]} & {\ladj[2] \radj[2]} & \idfun .
		\arrow["{\counit[2]}", from=1-2, to=1-3]
		\arrow["{\ladj[2] \counit[1] \radj[2]}", from=1-1, to=1-2]
	\end{tikzcd}$
	\end{center}
	It is straightforward to check that $\unit$ and $\counit$ follow the triangle equalities.
	If $\isoadjL^i, \isoadjR^i$ are the natural bijections associated with 
	$\unit_i, \counit_i : \ladj_i \dashv \radj_i$, then the natural bijections 
	associated to $\unit, \counit : \ladj[2] \ladj[1] \dashv \radj[1] \radj[2]$ are
	$\isoadjL(f) = \isoadjL^1 (\isoadjL^2(f))$ and 
	$\isoadjR(g) = \isoadjR^2(\isoadjR^1(g))$.
	
	We assume throughout this section that we have the following adjunctions, of respective units
	$\unit, \unit['], \unit['']$ and counits $\counit, \counit['], \counit['']$.
\[\begin{tikzcd}
	\cat & \catbis
	\arrow[""{name=0, anchor=center, inner sep=0}, "\ladj", curve={height=-12pt}, from=1-1, to=1-2]
	\arrow[""{name=1, anchor=center, inner sep=0}, "\radj", curve={height=-12pt}, from=1-2, to=1-1]
	\arrow["\dashv"{anchor=center, rotate=-90}, draw=none, from=0, to=1]
\end{tikzcd} \ \ 
\begin{tikzcd}
	\cat['] & \catbis[']
	\arrow[""{name=0, anchor=center, inner sep=0}, "{\ladj[']}", curve={height=-12pt}, from=1-1, to=1-2]
	\arrow[""{name=1, anchor=center, inner sep=0}, "{\radj[']}", curve={height=-12pt}, from=1-2, to=1-1]
	\arrow["\dashv"{anchor=center, rotate=-90}, draw=none, from=0, to=1]
\end{tikzcd}
\begin{tikzcd}
	\cat[''] & \catbis['']
	\arrow[""{name=0, anchor=center, inner sep=0}, "{\ladj['']}", curve={height=-12pt}, from=1-1, to=1-2]
	\arrow[""{name=1, anchor=center, inner sep=0}, "{\radj['']}", curve={height=-12pt}, from=1-2, to=1-1]
	\arrow["\dashv"{anchor=center, rotate=-90}, draw=none, from=0, to=1]
\end{tikzcd}.\]

	\begin{proposition} (Proposition 2.1 of~\cite{Kelly74}) \label{def:mate}
		Let $\lfun : \lcat \arrow \lcat[']$ and 
		$\rfun : \rcat \arrow \rcat[']$ two functors.
		 There exists a bijection called the \emph{mate construction} between
		the natural transformations $\lmate : \ladj['] \lfun \naturalTrans \rfun \ladj$ and 
		the natural transformations $\rmate: \lfun \radj \naturalTrans \radj['] \rfun$.
		This bijection is given by the following pasting 
		diagrams.
		\[\begin{tikzcd}
			\lcat & {\lcat[']} \\
			\rcat & {\rcat[']}
			\arrow["\lfun", from=1-1, to=1-2]
			\arrow["\ladj"', from=1-1, to=2-1]
			\arrow["{\ladj[']}", from=1-2, to=2-2]
			\arrow["\rfun"', from=2-1, to=2-2]
			\arrow["\lmate"', shorten <=6pt, shorten >=6pt, Rightarrow, from=1-2, to=2-1]
		\end{tikzcd} \mapsto 
		\begin{tikzcd}
			\rcat & \lcat & {\lcat[']} & {\lcat[']} \\
			\rcat & \rcat & {\rcat[']} & {\lcat[']}
			\arrow["\lfun", from=1-2, to=1-3]
			\arrow["\ladj"{description}, from=1-2, to=2-2]
			\arrow["{\ladj[']}"{description}, from=1-3, to=2-3]
			\arrow["\rfun"', from=2-2, to=2-3]
			\arrow["\lmate"', shorten <=6pt, shorten >=6pt, Rightarrow, from=1-3, to=2-2]
			\arrow["\radj", from=1-1, to=1-2]
			\arrow[Rightarrow, no head, from=2-1, to=1-1]
			\arrow[Rightarrow, no head, from=2-1, to=2-2]
			\arrow["\counit"', shorten <=6pt, shorten >=6pt, Rightarrow, from=1-2, to=2-1]
			\arrow[Rightarrow, no head, from=1-3, to=1-4]
			\arrow[Rightarrow, no head, from=1-4, to=2-4]
			\arrow["{\radj[']}"', from=2-3, to=2-4]
			\arrow["{\unit[']}"', shorten <=6pt, shorten >=6pt, Rightarrow, from=1-4, to=2-3]
		\end{tikzcd}\]
		\[\begin{tikzcd}
			\rcat & {\rcat[']} \\
			\lcat & {\lcat[']}
			\arrow["\rfun", from=1-1, to=1-2]
			\arrow["\radj"', from=1-1, to=2-1]
			\arrow["\lfun"', from=2-1, to=2-2]
			\arrow["{\radj[']}", from=1-2, to=2-2]
			\arrow["\rmate"', shorten <=6pt, shorten >=6pt, Rightarrow, from=2-1, to=1-2]
		\end{tikzcd} \mapsto
		\begin{tikzcd}
			\lcat & \rcat & {\rcat[']} & {\rcat[']} \\
			\lcat & \lcat & {\lcat[']} & {\rcat[']}
			\arrow["\rfun", from=1-2, to=1-3]
			\arrow["\radj"{description}, from=1-2, to=2-2]
			\arrow["\lfun"', from=2-2, to=2-3]
			\arrow["{\radj[']}"{description}, from=1-3, to=2-3]
			\arrow["\rmate"', shorten <=6pt, shorten >=6pt, Rightarrow, from=2-2, to=1-3]
			\arrow["\ladj", from=1-1, to=1-2]
			\arrow[Rightarrow, no head, from=1-1, to=2-1]
			\arrow[Rightarrow, no head, from=2-1, to=2-2]
			\arrow["\unit"', shorten <=6pt, shorten >=6pt, Rightarrow, from=2-1, to=1-2]
			\arrow[Rightarrow, no head, from=1-3, to=1-4]
			\arrow[Rightarrow, no head, from=1-4, to=2-4]
			\arrow["{\ladj[']}"', from=2-3, to=2-4]
			\arrow["{\counit[']}"', shorten <=6pt, shorten >=6pt, Rightarrow, from=2-3, to=1-4]
		\end{tikzcd}\]
		In other words, $\lmate$ and $\rmate$ are mates if they are defined from each other by
		\begin{center}
		$\rmate = 
		\begin{tikzcd}
			{\lfun \radj} & {\radj['] \ladj['] \lfun \radj} & {\radj['] \rfun \ladj \radj} & {\radj['] \rfun}
			\arrow["{\unit['] \lfun \radj}", from=1-1, to=1-2]
			\arrow["{\radj['] \lmate \radj}", from=1-2, to=1-3]
			\arrow["{\radj['] \rfun \counit}", from=1-3, to=1-4]
		\end{tikzcd}$
		
		$\lmate = 
		\begin{tikzcd}
			{\ladj['] \lfun} & {\ladj['] \lfun \radj \ladj} & {\ladj['] \radj['] \rfun \ladj} & {\rfun \ladj}
			\arrow["{\ladj['] \lfun \unit }", from=1-1, to=1-2]
			\arrow["{\ladj['] \rmate \ladj}", from=1-2, to=1-3]
			\arrow["{\counit['] \rfun \ladj}", from=1-3, to=1-4]
		\end{tikzcd}$
		\end{center}
		\end{proposition}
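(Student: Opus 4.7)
The plan is to verify directly that the two constructions exhibited in the statement are natural transformations of the correct type, and then show that they are mutually inverse. Naturality of each construction follows from the fact that it is defined as a pasting (i.e.~horizontal and vertical composition) of natural transformations (the given $\lmate$ or $\rmate$, the unit $\unit[']$, and the counit $\counit$), together with whiskering by functors. So the only real content is to check the two round-trip identities.

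First I would start from $\lmate : \ladj['] \lfun \naturalTrans \rfun \ladj$, form $\rmate := (\radj['] \rfun \counit)\comp(\radj['] \lmate \radj)\comp(\unit['] \lfun \radj)$, and then form
\[
\lmate' := (\counit['] \rfun \ladj)\comp(\ladj['] \rmate \ladj)\comp(\ladj['] \lfun \unit).
\]
Expanding $\rmate$ inside $\lmate'$ gives a pasting diagram with two occurrences of each of $\unit, \unit[']$ and $\counit, \counit[']$. I would then rearrange this pasting using only two ingredients: (i) the interchange law for natural transformations, which lets me slide $\lmate$ past whiskered copies of the units and counits, and (ii) the triangle identities $\counit \ladj \comp \ladj \unit = \id_{\ladj}$ and $\counit['] \ladj['] \comp \ladj['] \unit['] = \id_{\ladj[']}$, which collapse the adjacent pairs $(\ladj \unit, \counit \ladj)$ on the $\lcat$-side and $(\ladj['] \unit['], \counit['] \ladj[')]$ on the $\lcat[']$-side. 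What remains is exactly $\lmate$.

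The other direction is symmetric: starting from $\rmate$, one constructs $\lmate$ and then $\rmate' = (\radj['] \rfun \counit)\comp(\radj['] \lmate \radj)\comp(\unit['] \lfun \radj)$, and the same style of pasting computation uses the other two triangle identities $\radj \counit \comp \unit \radj = \id_{\radj}$ and $\radj['] \counit['] \comp \unit['] \radj['] = \id_{\radj[']}$ to recover $\rmate$. No additional axiom is required, so each verification is in essence one diagram chase.

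The main technical obstacle, and the only step requiring care, is keeping track of the whiskerings when one expands the composite pasting and slides the middle $2$-cell past the units and counits on each side: the interchange law has to be applied correctly so that the matched pair of a unit and a counit ends up whiskered by the \emph{same} adjoint in the right order, making the triangle identity applicable. Everything else is bookkeeping, and the two triangle identities per adjunction are exactly what makes each round-trip collapse.
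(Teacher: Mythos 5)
Your argument is correct: each mate is a pasting of natural transformations (hence natural), and each round trip collapses via interchange/naturality together with exactly the triangle identities you name ($\counit \ladj \comp \ladj\unit = \id_{\ladj}$, $\counit[']\ladj['] \comp \ladj[']\unit['] = \id_{\ladj[']}$ in one direction, $\radj\counit \comp \unit\radj = \id_{\radj}$, $\radj[']\counit['] \comp \unit[']\radj['] = \id_{\radj[']}$ in the other). The paper does not prove this statement but cites it as Proposition 2.1 of Kelly--Street, and your proof is precisely the standard argument given there, so there is nothing further to compare.
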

		
		\begin{remark} \label{rem:mate-simplified}
			Observe that if $\isoadjL_{X, Y}' : \catbis['](\ladj['] X, Y) \arrow 
			\cat['](X, \radj['] Y)$ 
			is the natural bijection associated with
			$\unit['], \counit['] : \ladj['] \dashv \radj[']$, then 
			$\rmate_X = \isoadjL'_{\lfun \radj X, X}(\rfun \counit_X \comp \lmate_{\radj X})$.
			This alternative definition of the right mate will be useful later on when 
			considering the mate construction applied through the adjunction 
			$\_ \tensor A \dashv A \linarrow \_$.
		
		One particular case
		is when $\cat = \cat[']$, $\catbis = \catbis[']$ and $\lfun = \rfun = 
		\idfun$ so that there is a bijection between the natural transformations 
		$\lmate : \ladj['] \naturalTrans \ladj$ and $\rmate : \radj \naturalTrans 
		\radj[']$ given by
		\begin{center}
		$\rmate = \begin{tikzcd}
			\radj & {\radj['] \ladj['] \radj} & {\radj['] \ladj \radj} & {\radj[']}
			\arrow["{\unit['] \radj}", from=1-1, to=1-2]
			\arrow["{\radj['] \lmate \radj}", from=1-2, to=1-3]
			\arrow["{\radj['] \counit}", from=1-3, to=1-4]
		\end{tikzcd}$
		
		$\lmate = \begin{tikzcd}
			{\ladj[']} & {\ladj['] \radj \ladj} & {\ladj['] \radj['] \ladj} & \ladj
			\arrow["{\ladj['] \unit}", from=1-1, to=1-2]
			\arrow["{\ladj['] \rmate \ladj}", from=1-2, to=1-3]
			\arrow["{\counit['] \ladj}", from=1-3, to=1-4]
		\end{tikzcd}$
		\end{center}
		Then $\rmate_X = \isoadjL'_{\radj X, X}(\counit_X \comp \lmate_{\radj X})$.
		\end{remark}

One can define a double category associated to those
constructions, see~\cite{Kelly74} for a definition of
this construction that we spell out in our setting.
We can define a category of \emph{horizontal morphisms} 
by taking the category $\categorycat$ whose objects are the (small) 
categories and whose morphisms are the functor. 
We can also define a category of \emph{vertical morphisms} by taking the category
of adjoint pairs. Note that those two categories have 
the same objects, but different morphisms.
Then we can define a \emph{square} as the data of two vertical morphisms
$\ladj \dashv \radj$ and $\ladj['] \dashv \radj[']$, two horizontal morphisms $\lfun, \rfun$ and
a natural transformation $\lmate : \ladj['] \lfun \naturalTrans \rfun \ladj$.
We write squares using pasting diagrams.
\[ \begin{tikzcd}
	\cat & {\cat[']} \\
	\catbis & {\catbis[']} 
	\arrow["\ladj"', from=1-1, to=2-1]
	\arrow["\lfun", from=1-1, to=1-2]
	\arrow["\rfun"', from=2-1, to=2-2]
	\arrow["{\ladj[']}", from=1-2, to=2-2]
	\arrow["\lmate", shorten <=7pt, shorten >=7pt, Rightarrow, from=1-2, to=2-1]
\end{tikzcd} \]
There are identity squares defined below, both in the horizontal and vertical directions.
One can check that the left square instantiated in $\ladj = \radj = \idfun$ 
is the same as the right square instantiated in $\lfun = \idfun$.
\[\begin{tikzcd}
	\cat & \cat \\
	\catbis & \catbis
	\arrow[Rightarrow, no head, from=1-1, to=1-2]
	\arrow[Rightarrow, no head, from=2-1, to=2-2]
	\arrow["\ladj"{description}, from=1-1, to=2-1]
	\arrow["\ladj"{description}, from=1-2, to=2-2]
	\arrow["\id", shorten <=7pt, shorten >=7pt, Rightarrow, from=1-2, to=2-1]
\end{tikzcd} \ \ \ \ \
\begin{tikzcd}
	\cat & {\cat[']} \\
	\cat & {\cat[']}
	\arrow[Rightarrow, no head, from=1-1, to=2-1]
	\arrow[Rightarrow, no head, from=1-2, to=2-2]
	\arrow["\lfun", from=1-1, to=1-2]
	\arrow["\lfun"', from=2-1, to=2-2]
	\arrow["\id", shorten <=6pt, shorten >=6pt, Rightarrow, from=1-2, to=2-1]
      \end{tikzcd}\]
It is then possible to compose two squares horizontally or vertically, 
using pasting diagrams. 
\[ \begin{tikzcd}
	\cat & {\cat[']} & {\cat['']} \\
	\catbis & {\catbis[']} & {\catbis['']}
	\arrow["\ladj"', from=1-1, to=2-1]
	\arrow["\lfun", from=1-1, to=1-2]
	\arrow["\rfun"', from=2-1, to=2-2]
	\arrow["{\ladj[']}"{description}, from=1-2, to=2-2]
	\arrow["{\rfun[']}"', from=2-2, to=2-3]
	\arrow["{\lfun[']}", from=1-2, to=1-3]
	\arrow["{\ladj['']}", from=1-3, to=2-3]
	\arrow["\lmate", shorten <=7pt, shorten >=7pt, Rightarrow, from=1-2, to=2-1]
	\arrow["{\lmate[']}", shorten <=7pt, shorten >=7pt, Rightarrow, from=1-3, to=2-2]
\end{tikzcd} \ \ \ \ \
\begin{tikzcd}
	\cat & {\cat[']} \\
	\catbis & {\catbis[']} \\
	\catter & {\catter[']}
	\arrow["\lfun", from=1-1, to=1-2]
	\arrow["\rfun"{description}, from=2-1, to=2-2]
	\arrow["\rrfun"', from=3-1, to=3-2]
	\arrow["{\ladj[1]}"', from=1-1, to=2-1]
	\arrow["{\ladj[2]}"', from=2-1, to=3-1]
	\arrow["{\ladj[1']}", from=1-2, to=2-2]
	\arrow["{\ladj[2']}", from=2-2, to=3-2]
	\arrow["{\lmate[1]}", shorten <=7pt, shorten >=7pt, Rightarrow, from=1-2, to=2-1]
	\arrow["{\lmate[2]}", shorten <=7pt, shorten >=7pt, Rightarrow, from=2-2, to=3-1]
\end{tikzcd}\]
That is, the horizontal composition of the two squares is given by the natural transformation
\[ \rfun['] \lmate \ntcomp \lmate['] \lfun :
 \begin{tikzcd}
	{\ladj[''] \lfun['] \lfun} & {\rfun['] \ladj['] \lfun} & {\rfun['] \rfun \ladj}
	\arrow["{ \lmate['] \lfun}", from=1-1, to=1-2]
	\arrow["{\rfun['] \lmate.}", from=1-2, to=1-3]
\end{tikzcd} \]
and the vertical composition of the two squares is given by the natural transformation
\[ \lmate[2] \ladj[1] \ntcomp \ladj[2'] \lmate[1] :
 \begin{tikzcd}
	{\ladj[2'] \ladj[1'] \lfun} & {\ladj[2'] \rfun \ladj[1]} & {\rrfun \ladj[2] \ladj[1]}
	\arrow["{\ladj[2'] \lmate[1]}", from=1-1, to=1-2]
	\arrow["{\lmate[2] \ladj[1].}", from=1-2, to=1-3]
\end{tikzcd} \]
The horizontal/vertical identity squares 
are neutral with regard to the horizontal/vertical composition, 
the horizontal and vertical compositions are associative, 
and the squares follow the interchange law of double categories (that is, all the 
possible ways to compose blocks of multiple squares are equivalent).

Similarly, there is a double category with the same horizontal and vertical 
morphisms as above, but where the squares are natural transformations
$\rmate: \lfun \radj \naturalTrans \radj['] \rfun$. 
\[ \begin{tikzcd}
	\cat & {\cat[']} \\
	\catbis & {\catbis[']} 
	\arrow["\radj", from=2-1, to=1-1]
	\arrow["\lfun", from=1-1, to=1-2]
	\arrow["\rfun"', from=2-1, to=2-2]
	\arrow["{\radj[']}"', from=2-2, to=1-2]
	\arrow["\rmate", shorten <=7pt, shorten >=7pt, Rightarrow, from=1-1, to=2-2]
\end{tikzcd} \]
The composition of squares is also given by the composition of the pasting diagrams, 
and the identity squares are similar.

\begin{theorem}[\cite{Kelly74}] \label{thm:mate-iso}
    The bijection of mate is an isomorphism of double categories between the 
	two double categories
    described above.
\end{theorem}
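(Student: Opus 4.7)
The bijection on squares has already been established in \cref{def:mate}; what remains is to verify that this bijection is functorial with respect to both compositions of the double category and sends identity squares to identity squares. My plan is to treat the four items (two identities, two compositions) separately, each time relying only on the triangle identities for the adjunctions involved and on the naturality of $\unit$, $\counit$, $\unit[']$, $\counit[']$, etc.

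First I would handle the identity squares. For the horizontal identity square on an adjunction $\ladj \dashv \radj$, the square in the $\lmate$-picture is $\id : \ladj \idfun \naturalTrans \idfun \ladj$; feeding this into the formula $\rmate = \radj['] \rfun \counit \ntcomp \radj['] \lmate \radj \ntcomp \unit['] \lfun \radj$ (with $\ladj['] = \ladj$, $\radj['] = \radj$, $\lfun = \rfun = \idfun$) reduces by one triangle identity to $\id : \radj \naturalTrans \radj$. The vertical identity square on a functor $\lfun$ is handled symmetrically. Both verifications are essentially one application of a triangle identity.

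Next I would check compatibility with horizontal composition. Given two horizontally adjacent squares $\lmate : \ladj['] \lfun \naturalTrans \rfun \ladj$ and $\lmate['] : \ladj[''] \lfun['] \naturalTrans \rfun['] \ladj[']$, their composite is $\rfun['] \lmate \ntcomp \lmate['] \lfun$. Transporting through the mate bijection involves the composite adjunction $\ladj['] \dashv \radj[']$ followed by $\ladj[''] \dashv \radj['']$, whose unit and counit are given explicitly by pasting of the component units and counits. I would expand the mate of the composite using its defining pasting diagram, then insert an $\unit[']$/$\counit[']$ pair in the middle (the triangle identity makes this insertion free), splitting the diagram into two pasting blocks which are, by inspection, exactly $\radj['] \rmate[']  \ntcomp \rmate \radj['']$. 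This is essentially the ``telescoping'' argument that is standard for mate calculations; pasting-diagram notation makes it transparent.

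Vertical composition is entirely dual: given two vertically adjacent squares with vertical morphisms $\ladj[1] \dashv \radj[1]$, $\ladj[2] \dashv \radj[2]$ (and primed versions), whose composite adjunction has unit $\radj[1] \unit[2] \ladj[1] \ntcomp \unit[1]$ and counit $\counit[2] \ntcomp \ladj[2] \counit[1] \radj[2]$, one expands the mate of the vertical composite and inserts an $\unit[1]/\counit[1]$ pair to split into the two component mates. The main obstacle, as usual with mate calculus, is purely bookkeeping: keeping track of which triangle identity is being invoked and which functor is being whiskered against. Once the pasting diagrams are drawn the verifications are essentially visual, and no new ideas beyond the triangle identities and naturality of $\unit$, $\counit$ are required. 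Finally, because both the forward and backward directions of the mate bijection are described by symmetric formulas, it suffices to check preservation of structure in one direction; the other follows from $\lmate \leftrightarrow \rmate$ being inverse bijections.
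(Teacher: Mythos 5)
The paper itself offers no proof of this statement — it is imported from \cite{Kelly74} — so the relevant comparison is with the standard verification behind that citation, and your overall plan is exactly that verification: check that identity squares go to identity squares, check compatibility with the two compositions using only the triangle identities and naturality, and deduce the inverse direction for free because an inverse of a bijective (double) functor is again a functor. That strategy is correct and does go through.

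However, as written, your treatment of the two compositions has the bookkeeping partly swapped, and one key formula is wrong. In the \emph{horizontal} composition the three adjunctions $\ladj \dashv \radj$, $\ladj['] \dashv \radj[']$, $\ladj[''] \dashv \radj['']$ are never composed: the composite square $\rfun['] \lmate \ntcomp \lmate['] \lfun$ has vertical edges $\ladj$ and $\ladj['']$, and its mate is taken with respect to those two adjunctions; pasted (composite) units and counits only enter in the \emph{vertical} composition, where the adjunctions $\ladj[2]\ladj[1] \dashv \radj[1]\radj[2]$ genuinely get composed. Moreover, the horizontal composite of the mates is $\rmate['] \rfun \ntcomp \lfun['] \rmate$ (cf.\ \cref{prop:mate-horizontal-composite}); your claimed answer $\radj['] \rmate['] \ntcomp \rmate \radj['']$ does not even typecheck — it is a garbled version of the \emph{vertical} formula $\radj[1'] \rmate[2] \ntcomp \rmate[1] \radj[2]$ of \cref{prop:mate-vertical-composite}. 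What is true is that the mechanism you describe for the horizontal case — after naturality exchanges, a $\unit[']/\counit[']$ pair of the \emph{middle} adjunction appears and is removed by its triangle identity — is precisely the non-trivial step of that case; by contrast the vertical case needs no triangle identity at all (and certainly no inserted $\unit[1]/\counit[1]$ pair): once the pasted unit and counit are expanded, naturality/interchange alone telescopes the mate of the composite into the composite of the mates, which is why the paper remarks that vertical functoriality holds ``almost by definition'' while the horizontal one requires a genuine computation. With these three corrections (no composite adjunction in the horizontal case, the correct target formula $\rmate['] \rfun \ntcomp \lfun['] \rmate$, and triangle identity reserved for the horizontal case) your sketch becomes a complete and correct proof.
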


The fact that the mate construction is an isomorphism means 
that if the following squares are mate
\[ \begin{tikzcd}
	\cat & {\cat[']} \\
	\catbis & {\catbis[']} 
	\arrow["\ladj"', from=1-1, to=2-1]
	\arrow["\lfun", from=1-1, to=1-2]
	\arrow["\rfun"', from=2-1, to=2-2]
	\arrow["{\ladj[']}", from=1-2, to=2-2]
	\arrow["\lmate", shorten <=7pt, shorten >=7pt, Rightarrow, from=1-2, to=2-1]
\end{tikzcd} \matesequiv 
\begin{tikzcd}
	\cat & {\cat[']} \\
	\catbis & {\catbis[']} 
	\arrow["\radj", from=2-1, to=1-1]
	\arrow["\lfun", from=1-1, to=1-2]
	\arrow["\rfun"', from=2-1, to=2-2]
	\arrow["{\radj[']}"', from=2-2, to=1-2]
	\arrow["\rmate", shorten <=7pt, shorten >=7pt, Rightarrow, from=1-1, to=2-2]
\end{tikzcd} \ \ \ \ \ 
\begin{tikzcd}
	\cat['] & {\cat['']} \\
	\catbis['] & {\catbis['']} 
	\arrow["{\ladj[']}"', from=1-1, to=2-1]
	\arrow["{\lfun[']}", from=1-1, to=1-2]
	\arrow["{\rfun[']}"', from=2-1, to=2-2]
	\arrow["{\ladj['']}", from=1-2, to=2-2]
	\arrow["{\lmate[']}", shorten <=7pt, shorten >=7pt, Rightarrow, from=1-2, to=2-1]
\end{tikzcd} \matesequiv 
\begin{tikzcd}
	\cat['] & {\cat['']} \\
	\catbis['] & {\catbis['']} 
	\arrow["{\radj[']}", from=2-1, to=1-1]
	\arrow["{\lfun[']}", from=1-1, to=1-2]
	\arrow["{\rfun[']}"', from=2-1, to=2-2]
	\arrow["{\radj['']}"', from=2-2, to=1-2]
	\arrow["{\rmate[']}", shorten <=7pt, shorten >=7pt, Rightarrow, from=1-1, to=2-2]
\end{tikzcd}  \]
then their horizontal compositions are also mates.
\[ \begin{tikzcd}
	\cat & {\cat[']} & {\cat['']} \\
	\catbis & {\catbis[']} & {\catbis['']}
	\arrow["\ladj"', from=1-1, to=2-1]
	\arrow["\lfun", from=1-1, to=1-2]
	\arrow["\rfun"', from=2-1, to=2-2]
	\arrow["{\ladj[']}"{description}, from=1-2, to=2-2]
	\arrow["{\rfun[']}"', from=2-2, to=2-3]
	\arrow["{\lfun[']}", from=1-2, to=1-3]
	\arrow["{\ladj['']}", from=1-3, to=2-3]
	\arrow["\lmate", shorten <=7pt, shorten >=7pt, Rightarrow, from=1-2, to=2-1]
	\arrow["{\lmate[']}", shorten <=7pt, shorten >=7pt, Rightarrow, from=1-3, to=2-2]
\end{tikzcd} \matesequiv
\begin{tikzcd}
	\cat & {\cat[']} & {\cat['']} \\
	\catbis & {\catbis[']} & {\catbis['']}
	\arrow["\radj", from=2-1, to=1-1]
	\arrow["\lfun", from=1-1, to=1-2]
	\arrow["\rfun"', from=2-1, to=2-2]
	\arrow["{\radj[']}"{description}, from=2-2, to=1-2]
	\arrow["{\rfun[']}"', from=2-2, to=2-3]
	\arrow["{\lfun[']}", from=1-2, to=1-3]
	\arrow["{\radj['']}"', from=2-3, to=1-3]
	\arrow["\rmate", shorten <=7pt, shorten >=7pt, Rightarrow, from=1-1, to=2-2]
	\arrow["{\rmate[']}", shorten <=7pt, shorten >=7pt, Rightarrow, from=1-2, to=2-3]
\end{tikzcd}  \]
Furthermore, if the following squares are mate
\[ \begin{tikzcd}
	\cat & {\cat[']} \\
	\catbis & {\catbis[']} 
	\arrow["{\ladj[1]}"', from=1-1, to=2-1]
	\arrow["\lfun", from=1-1, to=1-2]
	\arrow["\rfun"', from=2-1, to=2-2]
	\arrow["{\ladj[1']}", from=1-2, to=2-2]
	\arrow["{\lmate[1]}", shorten <=7pt, shorten >=7pt, Rightarrow, from=1-2, to=2-1]
\end{tikzcd} \matesequiv 
\begin{tikzcd}
	\cat & {\cat[']} \\
	\catbis & {\catbis[']} 
	\arrow["{\radj[1]}", from=2-1, to=1-1]
	\arrow["\lfun", from=1-1, to=1-2]
	\arrow["\rfun"', from=2-1, to=2-2]
	\arrow["{\radj[1']}"', from=2-2, to=1-2]
	\arrow["{\rmate[1]}", shorten <=7pt, shorten >=7pt, Rightarrow, from=1-1, to=2-2]
\end{tikzcd}  \ \ \ \ \
\begin{tikzcd}
	\catbis & {\catbis[']} \\
	\catter & {\catter[']} 
	\arrow["{\ladj[2]}"', from=1-1, to=2-1]
	\arrow["\rfun", from=1-1, to=1-2]
	\arrow["\rrfun"', from=2-1, to=2-2]
	\arrow["{\ladj[2']}", from=1-2, to=2-2]
	\arrow["{\lmate[2]}", shorten <=7pt, shorten >=7pt, Rightarrow, from=1-2, to=2-1]
\end{tikzcd} \matesequiv 
\begin{tikzcd}
	\catbis & {\catbis[']} \\
	\catter & {\catter[']} 
	\arrow["{\radj[2]}", from=2-1, to=1-1]
	\arrow["\rfun", from=1-1, to=1-2]
	\arrow["\rrfun"', from=2-1, to=2-2]
	\arrow["{\radj[2']}"', from=2-2, to=1-2]
	\arrow["{\rmate[2]}", shorten <=7pt, shorten >=7pt, Rightarrow, from=1-1, to=2-2]
\end{tikzcd} \]
then their vertical compositions are also mates.
\[ \begin{tikzcd}
	\cat & {\cat[']} \\
	\catbis & {\catbis[']} \\
	\catter & {\catter[']}
	\arrow["\lfun", from=1-1, to=1-2]
	\arrow["\rfun"{description}, from=2-1, to=2-2]
	\arrow["\rrfun"', from=3-1, to=3-2]
	\arrow["{\ladj[1]}"', from=1-1, to=2-1]
	\arrow["{\ladj[2]}"', from=2-1, to=3-1]
	\arrow["{\ladj[1']}", from=1-2, to=2-2]
	\arrow["{\ladj[2']}", from=2-2, to=3-2]
	\arrow["{\lmate[1]}", shorten <=7pt, shorten >=7pt, Rightarrow, from=1-2, to=2-1]
	\arrow["{\lmate[2]}", shorten <=7pt, shorten >=7pt, Rightarrow, from=2-2, to=3-1]
\end{tikzcd} \matesequiv 
\begin{tikzcd}
	\cat & {\cat[']} \\
	\catbis & {\catbis[']} \\
	\catter & {\catter[']}
	\arrow["\lfun", from=1-1, to=1-2]
	\arrow["\rfun"{description}, from=2-1, to=2-2]
	\arrow["\rrfun"', from=3-1, to=3-2]
	\arrow["{\radj[1]}", from=2-1, to=1-1]
	\arrow["{\radj[2]}", from=3-1, to=2-1]
	\arrow["{\radj[1']}"', from=2-2, to=1-2]
	\arrow["{\radj[2']}"', from=3-2, to=2-2]
	\arrow["{\rmate[1]}", shorten <=7pt, shorten >=7pt, Rightarrow, from=1-1, to=2-2]
	\arrow["{\rmate[2]}", shorten <=7pt, shorten >=7pt, Rightarrow, from=2-1, to=3-2]
\end{tikzcd} \]
Here is a reformulation of the results that does not use pasting diagrams.

\begin{proposition} \label{prop:mate-horizontal-composite} If
$\lmate : \ladj['] \lfun \naturalTrans \rfun \ladj$,
$\rmate : \lfun \radj \naturalTrans \radj['] \rfun$ are mates and if 
$\lmate['] : \ladj[''] \lfun['] \naturalTrans \rfun['] \ladj[']$,
$\rmate['] : \lfun['] \radj['] \naturalTrans \radj[''] \rfun[']$ are mates,
then their horizontal compositions
\begin{center}
$ \rfun['] \lmate \ntcomp \lmate['] \lfun:
\begin{tikzcd}
	{\ladj[''] \lfun['] \lfun} & {\rfun['] \ladj['] \lfun} & {\rfun['] \rfun \ladj}
	\arrow["{ \lmate['] \lfun}", from=1-1, to=1-2]
	\arrow["{\rfun['] \lmate}", from=1-2, to=1-3]
\end{tikzcd}$

$\rmate['] \rfun \ntcomp \lfun['] \rmate:
\begin{tikzcd}
	{\lfun['] \lfun \radj} & {\lfun['] \radj['] \rfun} & {\radj[''] \rfun['] \rfun}
	\arrow["{\lfun['] \rmate}", from=1-1, to=1-2]
	\arrow["{\rmate['] \rfun}", from=1-2, to=1-3]
\end{tikzcd}$
\end{center} 
are also mates.
\end{proposition}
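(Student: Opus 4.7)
The plan is to invoke \cref{thm:mate-iso}. Preservation of horizontal composition by the mate bijection is already part of the statement that mate is an isomorphism of double categories, so \cref{prop:mate-horizontal-composite} is essentially a reformulation of this preservation without appealing to the double-category vocabulary.

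If one prefers an elementary direct verification, the plan is as follows. First, I would expand the right mate of $\rfun['] \lmate \ntcomp \lmate['] \lfun$ using the formula of \cref{def:mate}, obtaining the four-term composite
\[
(\radj[''] \rfun['] \rfun \counit) \ntcomp (\radj[''] \rfun['] \lmate \radj) \ntcomp (\radj[''] \lmate['] \lfun \radj) \ntcomp (\unit[''] \lfun['] \lfun \radj).
\]
Second, I would expand $\rmate['] \rfun \ntcomp \lfun['] \rmate$ by substituting the analogous formula into each of $\rmate$ and $\rmate[']$, producing a six-term composite. Third, I would apply the naturality of $\unit['']$ to commute $\lfun['] \radj['] \lmate \radj$ past $\unit[''] \lfun['] \radj['] \rfun$ in the middle of that composite, and then use the triangle identity $\radj['] \counit['] \ntcomp \unit['] \radj['] = \id_{\radj[']}$ to cancel a redundant $\ladj['] \radj[']$ pair that appears between the two halves. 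After these reductions, the six-term composite collapses to the four-term one above.

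The main obstacle is the combinatorial bookkeeping: with six adjoint pairs and three horizontal functors, the equations involve long strings of whiskered symbols whose rearrangement requires careful tracking of which functor is whiskering which 2-cell, and of where exactly to apply naturality versus a triangle identity. This is precisely why the pasting-diagram argument of \cref{thm:mate-iso} is preferable: ``bending'' a vertical adjunction into its mate is a local operation within a single square, so pasting two squares horizontally and then bending both adjunctions yields the same 2-cell as bending each square and then pasting. This locality is the conceptual content of the preservation of horizontal composition and renders the verification essentially an inspection.
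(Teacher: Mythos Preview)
Your proposal is correct and matches the paper's approach exactly: the paper presents this proposition explicitly as ``a reformulation of the results that does not use pasting diagrams,'' deriving it immediately from \cref{thm:mate-iso} without further argument. Your additional sketch of an elementary verification via naturality and a triangle identity is a nice bonus that the paper does not include.
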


\begin{proposition} If \label{prop:mate-vertical-composite}
	$\lmate[1] : \ladj[1'] \lfun \naturalTrans \rfun \ladj[1]$,
	$\rmate[1] : \lfun \radj[1] \naturalTrans \radj[1'] \rfun$ are mates and if 
	$\lmate[2] : \ladj[2'] \rfun \naturalTrans \rrfun \ladj[2]$,
	$\rmate[2] : \rfun \radj[2] \naturalTrans \radj[2'] \rrfun$ are mates,
	then their vertical compositions
	\begin{center}
	$\lmate[2] \ladj[1] \ntcomp \ladj[2'] \lmate[1] :
	\begin{tikzcd}
		{\ladj[2'] \ladj[1'] \lfun} & {\ladj[2'] \rfun \ladj[1]} & {\rrfun \ladj[2] \ladj[1]}
		\arrow["{\ladj[2'] \lmate[1]}", from=1-1, to=1-2]
		\arrow["{\lmate[2] \ladj[1].}", from=1-2, to=1-3]
	\end{tikzcd}$

	$\radj[1'] \rmate[2] \ntcomp \rmate[1] \radj[2] :
	\begin{tikzcd}
		{\lfun \radj[1] \radj[2]} & {\radj[1'] \rfun \radj[2]} & {\radj[1'] \radj[2'] \rrfun}
		\arrow["{\rmate[1] \radj[2]}", from=1-1, to=1-2]
		\arrow["{\radj[1'] \rmate[2]}", from=1-2, to=1-3]
	\end{tikzcd}$
\end{center}
	are also mates.
\end{proposition}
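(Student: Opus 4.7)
The plan is to derive this statement directly from \cref{thm:mate-iso}, which already asserts that the mate construction is an isomorphism of double categories. In particular, such an isomorphism preserves vertical composition of squares by definition, so one only has to observe that the vertical pasting
\[
\begin{tikzcd}
	\cat & {\cat[']} \\
	\catbis & {\catbis[']} \\
	\catter & {\catter[']}
	\arrow["\lfun", from=1-1, to=1-2]
	\arrow["\rfun"{description}, from=2-1, to=2-2]
	\arrow["\rrfun"', from=3-1, to=3-2]
	\arrow["{\ladj[1]}"', from=1-1, to=2-1]
	\arrow["{\ladj[2]}"', from=2-1, to=3-1]
	\arrow["{\ladj[1']}", from=1-2, to=2-2]
	\arrow["{\ladj[2']}", from=2-2, to=3-2]
	\arrow["{\lmate[1]}", shorten <=7pt, shorten >=7pt, Rightarrow, from=1-2, to=2-1]
	\arrow["{\lmate[2]}", shorten <=7pt, shorten >=7pt, Rightarrow, from=2-2, to=3-1]
\end{tikzcd}
\]
is by construction the square associated with the composite adjunction $\ladj[2']\ladj[1']\dashv\radj[1']\radj[2']$ on the left and $\ladj[2]\ladj[1]\dashv\radj[1]\radj[2]$ on the right, carrying the natural transformation $\lmate[2]\ladj[1]\ntcomp\ladj[2']\lmate[1]$; and dually on the right mate side. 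Hence the two vertical composites are mates of each other.

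Should one prefer a direct verification without invoking \cref{thm:mate-iso}, I would compute the right mate of $\lmate[2]\ladj[1]\ntcomp\ladj[2']\lmate[1]$ using the formula of \cref{def:mate}, plugging in the composite unit $\radj[1]\unit[2]\ladj[1]\ntcomp\unit[1]$ and composite counit $\counit[2]\ntcomp\ladj[2]\counit[1]\radj[2]$ of the vertical composite adjunctions. The expression expands to a long pasting diagram of nine natural transformations in which several occurrences of a unit next to a counit of the \emph{same} adjunction appear; applying the triangle identity of $\ladj[1]\dashv\radj[1]$ collapses one such pair, and applying the triangle identity of $\ladj[2]\dashv\radj[2]$ collapses another. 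What remains is precisely the pasting that defines $\radj[1']\rmate[2]\ntcomp\rmate[1]\radj[2]$ out of $\rmate[1]$ and $\rmate[2]$.

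The main obstacle in the direct route is purely bookkeeping: the individual steps use only naturality and the triangle identities, but ensuring that the two collapsings happen at the correct positions in the pasting requires careful tracking of which functor is applied where. This is exactly the kind of calculation the double-categorical language of \cref{thm:mate-iso} is designed to absorb, so the short proof via that theorem is to be preferred, and the direct computation is only worth sketching in parallel with that of \cref{prop:mate-horizontal-composite} for the sake of symmetry.
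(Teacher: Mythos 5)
Your main argument is correct and is essentially the paper's own: the proposition is stated there as a direct reformulation of \cref{thm:mate-iso}, the isomorphism of double categories, which by definition preserves vertical pasting of squares, so no further work is needed. One caveat on your optional direct-computation sketch: for the \emph{vertical} composite the mate formula only ever juxtaposes units of the primed adjunctions with counits of the unprimed ones, so no triangle identity can (or need) be applied --- the identification with $\radj[1']\rmate[2]\ntcomp\rmate[1]\radj[2]$ follows from naturality/interchange alone, which is exactly why the paper later remarks that vertical functoriality of the mate construction holds ``almost by definition'' while it is the \emph{horizontal} case (\cref{prop:mate-horizontal-composite}) that requires the nontrivial computation involving the triangle identities.
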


\begin{definition} We call $\categoryadj$ the double category where 
vertical morphisms are adjunction, horizontal morphisms are functors, 
and squares are pairs $(\lmate, \rmate)$ of mates.
\end{definition}
By \cref{thm:mate-iso},
$\categoryadj$ is isomorphic to 
the two double categories described above.

\subsection{Mate construction between a monad and a comonad}
\label{sec:mate-monade-comonade}

Assume that $\unit, \counit : \ladj \dashv \radj$ where $\ladj$ and $\radj$
are endofunctors on the same category. 
Recall that there is 
an identity adjunction $\id, \id : \idfun \dashv \idfun$ and 
an adjunction $\ladj \ladj \dashv \radj \radj$ of unit 
$\radj \unit \ladj \ntcomp \unit$ and counit $\counit \ntcomp \ladj \counit \radj$
given by the composition of $\unit, \counit : \ladj \dashv \radj$
with itself.
Then the mate construction induces a bijection between the natural transformations
$\radjunit : \idfun \naturalTrans \radj$ and the natural transformations
$\ladjcounit : \ladj \naturalTrans \idfun$, given by
\[ \ladjcounit = \begin{tikzcd}
	\ladj & {\ladj \radj} & \idfun
	\arrow["{\ladj \radjunit}", from=1-1, to=1-2]
	\arrow["\counit", from=1-2, to=1-3]
\end{tikzcd} \quad \quad
\radjunit = \begin{tikzcd}
	\idfun & {\radj \ladj} & \radj
	\arrow["\unit", from=1-1, to=1-2]
	\arrow["{\radj \ladjcounit}", from=1-2, to=1-3]
\end{tikzcd} \, . \]
The mate construction also induces a bijection between the natural transformations 
$\radjsum : \radj^2 \naturalTrans \radj$ and the natural transformations 
$\ladjcosum : \ladj \naturalTrans \ladj^2$ given by 
\[ \ladjcosum = \begin{tikzcd}
	\ladj & {\ladj \radj \ladj} & {\ladj \radj \radj \ladj \ladj} & {\ladj \radj \ladj \ladj} & {\ladj \ladj}
	\arrow["{\ladj \unit}", from=1-1, to=1-2]
	\arrow["{\ladj \radj \unit \ladj}", from=1-2, to=1-3]
	\arrow["{\ladj \radjsum \ladj \ladj}", from=1-3, to=1-4]
	\arrow["{\counit \ladj \ladj}", from=1-4, to=1-5]
\end{tikzcd} \]
\[ \radjsum = \begin{tikzcd}
	{\radj \radj} & {\radj \ladj \radj \radj} & {\radj \ladj \ladj \radj \radj} & {\radj \ladj \radj} & \radj
	\arrow["{\unit \radj \radj}", from=1-1, to=1-2]
	\arrow["{\radj \ladjcosum \radj \radj}", from=1-2, to=1-3]
	\arrow["{\radj \ladj \counit \radj}", from=1-3, to=1-4]
	\arrow["{\radj \counit}", from=1-4, to=1-5]
\end{tikzcd} \]
The compositionality of the mate construction ensures that $(\radj, \radjunit, \radjsum)$ is
a monad if and only if $(\ladj, \ladjcounit, \ladjcosum)$ is a comonad, see 
7.3 of~\cite{Mesablishvili11}.
This means that the mate construction induces a bijection between the monad structures 
on $\radj$ and the comonad structures on $\ladj$.

\begin{definition} A monad $\radjtriple$ and a comonad $\ladjcomtriple$ such that 
$\ladj \dashv \radj$ are called \emph{mates} if their structure are related through 
the bijections defined above.
\end{definition}

Similarly, the mate construction induces a bijection between the comonad structures 
on $\radj$ and the monad structures on $\ladj$.
A comonad $\radjcomtriple$ and a monad $\ladjtriple$ are called mate
if their structure are related through these bijections.

  \begin{theorem}[7.7 of \cite{Mesablishvili11}] \label{thm:mate-bimonad}
	The mate construction induces a bijection between 
	bimonad structure on $\ladj$ and bimonad structures on $\radj$.
  \end{theorem}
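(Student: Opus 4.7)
The plan is to exploit the isomorphism of double categories established in Theorem~\ref{thm:mate-iso}. A bimonad structure is specified purely by commutative diagrams of natural transformations, so the bijection on $2$-cells provided by the mate construction, together with its compatibility with horizontal and vertical pasting (Propositions~\ref{prop:mate-horizontal-composite} and~\ref{prop:mate-vertical-composite}), should turn bimonad structures on $\ladj$ into bimonad structures on $\radj$ and conversely.

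Concretely, given a bimonad $(\ladj, \munit, \msum, \comunit, \comsum, \ladjswap)$, I would first build the underlying monad and comonad on $\radj$. The bijection discussed after Theorem~\ref{thm:mate-iso} already yields, from the comonad $(\ladj, \comunit, \comsum)$, a mate monad $(\radj, \radjunit, \radjsum)$; dually, the monad $(\ladj, \munit, \msum)$ gives a mate comonad $(\radj, \radjcounit, \radjcosum)$. For the swap, I would use the composed adjunction $\ladj\ladj \dashv \radj\radj$: the natural transformation $\ladjswap : \ladj\ladj \naturalTrans \ladj\ladj$ sits in a square of $\categoryadj$ whose horizontal morphisms are identity functors and whose vertical morphisms are both $\ladj\ladj \dashv \radj\radj$, hence has a mate $\radjswap : \radj\radj \naturalTrans \radj\radj$.

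Next I would verify that $\radjswap$ is a distributive law of the monad on $\radj$ over the comonad on $\radj$ (and conversely). By Remark~\ref{rem:bimonad-involutive}, it suffices to check one of the two distributive law conditions. Each of those diagrams is the equality of two composites of $2$-cells in $\categoryadj$, built out of $\munit, \msum, \comunit, \comsum, \ladjswap$ and the functors $\ladj, \idfun$. Since taking mates commutes with both horizontal and vertical composition of squares, applying the mate bijection termwise to such a diagram yields the corresponding diagram for $\radjunit, \radjsum, \radjcounit, \radjcosum, \radjswap$, i.e.\ the distributive law condition on the $\radj$ side. The same argument handles the four bimonad coherence diagrams of Definition~\ref{def:bimonad}: each is a pasting equation, so both sides have mates that coincide iff the original sides coincide. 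Bijectivity then follows from the bijectivity of the mate construction on $2$-cells, component by component.

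The main obstacle is the careful matching of the horizontal compositions $\comunit \ntcomph \comunit$, $\munit \ntcomph \munit$, $\msum \ntcomph \msum$, $\comsum \ntcomph \comsum$ that appear in the bimonad coherence diagrams. One must show that the mate of, say, $\comunit \ntcomph \comunit : \ladj\ladj \naturalTrans \idfun$ is precisely $\munit \ntcomph \munit : \idfun \naturalTrans \radj\radj$, interpreting both sides as horizontal pastings of $2$-cells of $\categoryadj$ between the adjunctions $\ladj\ladj \dashv \radj\radj$ and $\idfun \dashv \idfun$. This is a direct unfolding of Proposition~\ref{prop:mate-horizontal-composite}, but it is the bookkeeping step that must be done explicitly. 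Once this identification is in place, the translation of the full bimonad structure is automatic and the announced bijection follows.
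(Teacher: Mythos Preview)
Your proposal is correct and follows essentially the same approach as the paper's sketch: both argue that the mate construction transports the monad/comonad pair, the distributive law $\tau$, and the bimonad coherence diagrams, all by the compositionality of mates (Theorem~\ref{thm:mate-iso} and Propositions~\ref{prop:mate-horizontal-composite}, \ref{prop:mate-vertical-composite}).

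One small point: your appeal to Remark~\ref{rem:bimonad-involutive} to reduce to a single distributive-law direction is not justified, since that remark requires $\tau$ to be involutive, which is not part of the general bimonad definition. You should either check both distributive-law directions (each transfers by the same pasting argument), or first observe that the mate of an involutive $2$-cell is involutive and restrict to that case.
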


We give a sketch of the proof of \cref{thm:mate-bimonad} above.  
We already know that the mate construction relates 
	monads structure on $\ladj$ with comonad structures on $\radj$, 
	and comonad structures on $\ladj$ with monads structures on $\radj$.
	The mate construction also relates distributive law 
	$\ladjcomtriple \ladjtriple \naturalTrans  \ladjtriple \ladjcomtriple$
	with distributive laws 
	$\radjtriple \radjcomtriple \naturalTrans \radjcomtriple \radjtriple$,
	and distributive laws 
	$\ladjtriple \ladjcomtriple \naturalTrans  \ladjcomtriple \ladjtriple$
	with distributive laws
	$\radjcomtriple \radjtriple \naturalTrans \radjtriple \radjcomtriple$,
	as proved in section 7.5 of 
	\cite{Mesablishvili11}. This result relies on 
	the compositionality of the mate construction, and is also a consequence of
	to the development of \cref{sec:compatibility-adjunction}.
	Finally, the mate construction preserves the commutation of the 
	bimonad diagrams for similar reasons.

\subsection{Application to distributive laws}
\label{sec:compatibility-adjunction}

The mate construction relates distributive laws and morphism of
distributive laws because it preserves compatibility, in the sense
explained below. It is possible that the results of this section and
of \cref{sec:extension-adjunction} are particular instances of doctrinal
adjunction, see \cite{Kelly74-doctrinal}.

\begin{definition} \label{def:mate-vertical-compatible}
A pair of natural transformation 
$\alpha : \lfun[1] \naturalTrans \lfun[2]$, $\beta : \rfun[1] \naturalTrans \rfun[2]$ is
compatible with the squares 
$\lmate[1] : \ladj['] \lfun[1] \naturalTrans \rfun[1] \ladj$ and
$\lmate[2] : \ladj['] \lfun[2] \naturalTrans \rfun[2] \ladj$ if the diagram
of \cref{eq:compatibility-vertical-1} commutes.
Similarly, such a pair is compatible with the squares
$\rmate[1]: \lfun[1] \radj \naturalTrans \radj['] \rfun[1]$ and
$\rmate[2]: \lfun[2] \radj \naturalTrans \radj['] \rfun[2]$ if the diagram 
of \cref{eq:compatibility-vertical-2} below commutes.

\begin{center}
\begin{subequations} 
	\begin{minipage}{0.4\textwidth}
		\begin{equation} \label{eq:compatibility-vertical-1}
			\begin{tikzcd}
				{\ladj['] \lfun[1]} & {\rfun[1] \ladj} \\
				{\ladj['] \lfun[2]} & {\rfun[2] \ladj}
				\arrow["{\lmate[1]}", from=1-1, to=1-2]
				\arrow["{\ladj['] \alpha}"', from=1-1, to=2-1]
				\arrow["{\lmate[2]}"', from=2-1, to=2-2]
				\arrow["{\beta \ladj}", from=1-2, to=2-2]
			\end{tikzcd}
		\end{equation}
	\end{minipage}
	\begin{minipage}{0.2 \textwidth}
		
	\end{minipage}
	\begin{minipage}{0.4\textwidth}
		\begin{equation} \label{eq:compatibility-vertical-2}
			\begin{tikzcd}
				{\lfun[1] \radj} & {\radj['] \rfun[1]} \\
				{\lfun[2] \radj} & {\radj['] \rfun[2]}
				\arrow["{\alpha \radj}"', from=1-1, to=2-1]
				\arrow["{\rmate[1]}", from=1-1, to=1-2]
				\arrow["{\rmate[2]}"', from=2-1, to=2-2]
				\arrow["{\radj['] \beta}", from=1-2, to=2-2]
			\end{tikzcd}
		\end{equation}
	\end{minipage}
\end{subequations}
\end{center}

\cref{eq:compatibility-vertical-1,eq:compatibility-vertical-2}
respectively corresponds to the equality of pasting diagrams 
given in 
\cref{eq:compatibility-vertical-pasting-1,eq:compatibility-vertical-pasting-2}
below.
\begin{center}
\begin{subequations} 
	\begin{minipage}{0.4\linewidth}
		\begin{equation} \label{eq:compatibility-vertical-pasting-1}
			\begin{tikzcd}
				\cat & {\cat[']} \\
				\catbis & {\catbis[']} \\
				\catbis & {\catbis[']}
				\arrow["\ladj"', from=1-1, to=2-1]
				\arrow[Rightarrow, no head, from=2-1, to=3-1]
				\arrow["{\lfun[1]}", from=1-1, to=1-2]
				\arrow["{\ladj[']}", from=1-2, to=2-2]
				\arrow[Rightarrow, no head, from=2-2, to=3-2]
				\arrow["{\rfun[1]}"{description}, from=2-1, to=2-2]
				\arrow["{\rfun[2]}"', from=3-1, to=3-2]
				\arrow["{\lmate[1]}", shorten <=7pt, shorten >=7pt, Rightarrow, from=1-2, to=2-1]
				\arrow["\beta", shorten <=7pt, shorten >=7pt, Rightarrow, from=2-2, to=3-1]
			\end{tikzcd} 
			\begin{tikzcd}
				\cat & {\cat[']} \\
				\cat & {\cat[']} \\
				\catbis & {\catbis[']}
				\arrow[Rightarrow, no head, from=1-1, to=2-1]
				\arrow[Rightarrow, no head, from=1-2, to=2-2]
				\arrow["{\lfun[1]}", from=1-1, to=1-2]
				\arrow["{\lfun[2]}"{description}, from=2-1, to=2-2]
				\arrow["\ladj"', from=2-1, to=3-1]
				\arrow["{\ladj[']}", from=2-2, to=3-2]
				\arrow["{\rfun[2]}"', from=3-1, to=3-2]
				\arrow["\alpha", shorten <=8pt, shorten >=8pt, Rightarrow, from=1-2, to=2-1]
				\arrow["{\lmate[2]}", shorten <=7pt, shorten >=7pt, Rightarrow, from=2-2, to=3-1]
			\end{tikzcd}
		\end{equation}
	\end{minipage}
	\begin{minipage}{0.2 \textwidth}

	\end{minipage}
	\begin{minipage}{0.4\linewidth}
		\begin{equation} \label{eq:compatibility-vertical-pasting-2}
			\begin{tikzcd}
				\cat & {\cat[']} \\
				\catbis & {\catbis[']} \\
				\catbis & {\catbis[']}
				\arrow["\radj", from=2-1, to=1-1]
				\arrow[Rightarrow, no head, from=2-1, to=3-1]
				\arrow["{\lfun[1]}", from=1-1, to=1-2]
				\arrow["{\radj[']}"', from=2-2, to=1-2]
				\arrow[Rightarrow, no head, from=2-2, to=3-2]
				\arrow["{\rfun[1]}"{description}, from=2-1, to=2-2]
				\arrow["{\rfun[2]}"', from=3-1, to=3-2]
				\arrow["{\rmate[1]}", shorten <=7pt, shorten >=7pt, Rightarrow, from=1-1, to=2-2]
				\arrow["\beta", shorten <=7pt, shorten >=7pt, Rightarrow, from=2-1, to=3-2]
			\end{tikzcd} = 
			\begin{tikzcd}
				\cat & {\cat[']} \\
				\cat & {\cat[']} \\
				\catbis & {\catbis[']}
				\arrow[Rightarrow, no head, from=1-1, to=2-1]
				\arrow[Rightarrow, no head, from=1-2, to=2-2]
				\arrow["{\lfun[1]}", from=1-1, to=1-2]
				\arrow["{\lfun[2]}"{description}, from=2-1, to=2-2]
				\arrow["\radj", from=3-1, to=2-1]
				\arrow["{\radj[']}"', from=3-2, to=2-2]
				\arrow["{\rfun[2]}"', from=3-1, to=3-2]
				\arrow["\alpha", shorten <=8pt, shorten >=8pt, Rightarrow, from=1-1, to=2-2]
				\arrow["{\rmate[2]}", shorten <=7pt, shorten >=7pt, Rightarrow, from=2-1, to=3-2]
			\end{tikzcd}
		\end{equation}
	\end{minipage}
\end{subequations}
\end{center}
\end{definition}

\begin{proposition} \label{prop:mate-vertical-compatible}
If $\lmate[1], \rmate[1]$ are mate and $\lmate[2], \rmate[2]$ are mate,
then $(\alpha, \beta)$ is compatible with $\lmate[1]$ and $\lmate[2]$ if and only if 
$(\alpha, \beta)$ is compatible with $\rmate[1]$ and $\rmate[2]$.
\end{proposition}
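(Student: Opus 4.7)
The plan is to exploit the double-category formalism. The compatibility diagrams \cref{eq:compatibility-vertical-1,eq:compatibility-vertical-2} are precisely the equalities of the pasting diagrams \cref{eq:compatibility-vertical-pasting-1,eq:compatibility-vertical-pasting-2}, which express equalities between two vertical composites of squares in the respective ``left-mate'' and ``right-mate'' double categories. The key observation is that $\alpha$ and $\beta$ can be viewed as squares with identity adjunctions on both sides: the left-mate square $\alpha : \idfun \lfun[1] \naturalTrans \lfun[2] \idfun$ has right-mate $\alpha : \lfun[1] \idfun \naturalTrans \idfun \lfun[2]$ (the very same natural transformation), and similarly for $\beta$. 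So $\alpha, \beta$ are trivially self-mate through the identity adjunctions.

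I would then apply \cref{prop:mate-vertical-composite} twice. First, the vertical composite of $\lmate[1]$ (on top, with adjunctions $\ladj \dashv \radj$ and $\ladj['] \dashv \radj[']$) with $\beta$ (on bottom, with identity adjunctions) is a left-mate square $\beta\ladj \ntcomp \lmate[1] : \ladj['] \lfun[1] \naturalTrans \rfun[2] \ladj$. Its right-mate is obtained by the dual vertical composition formula applied to $\rmate[1]$ and $\beta$, yielding $\radj['] \beta \ntcomp \rmate[1] : \lfun[1] \radj \naturalTrans \radj['] \rfun[2]$. Similarly, the right-hand side composite $\lmate[2] \ntcomp \ladj['] \alpha$ has right-mate $\rmate[2] \ntcomp \alpha \radj$.

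Now \cref{eq:compatibility-vertical-1} states exactly the equality $\beta\ladj \ntcomp \lmate[1] = \lmate[2] \ntcomp \ladj['] \alpha$ of the two left-mate composites, while \cref{eq:compatibility-vertical-2} states the equality $\radj['] \beta \ntcomp \rmate[1] = \rmate[2] \ntcomp \alpha\radj$ of the corresponding right-mate composites. Since the mate construction is a bijection (in fact an isomorphism of double categories by \cref{thm:mate-iso}), two left-mate squares are equal if and only if their right-mate counterparts are equal. This directly gives the announced equivalence.

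I do not anticipate any serious obstacle: the argument is a clean bookkeeping exercise once the compatibility conditions are reformulated as pasting-diagram equalities in the double categories, and once one notes that $\alpha$ and $\beta$ trivially define self-mate squares through identity adjunctions. The only minor subtlety is matching the composition conventions of \cref{prop:mate-vertical-composite} with the orientation of the squares appearing in \cref{eq:compatibility-vertical-pasting-1,eq:compatibility-vertical-pasting-2}, but this is straightforward once the roles of $\ladj[1], \ladj[1'], \ladj[2], \ladj[2']$ are identified with the identity adjunctions for the $\alpha$- or $\beta$-square and with $\ladj \dashv \radj$, $\ladj['] \dashv \radj[']$ for the $\lmate[i]$-square.
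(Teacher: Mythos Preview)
Your argument is correct and is essentially the same as the paper's: both recognize that $\alpha$ and $\beta$ are self-mate through the identity adjunctions, that the compatibility conditions are equalities of vertical pasting composites, and that the mate bijection (via \cref{thm:mate-iso}, of which \cref{prop:mate-vertical-composite} is the relevant instance) transports one equality to the other. The paper is slightly terser, invoking \cref{thm:mate-iso} directly rather than spelling out the composites, but the reasoning is identical.
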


\begin{proof} The proof is a straightforward computation using the definition
of mate, but we can also give a more generic argument making full use of the isomorphism 
of \cref{thm:mate-iso}.
By assumption, $\lmate[1]$ and $\rmate[1]$ are mates. Furthermore, $\beta$ 
is a mate with itself through the identity adjunction. Thus, 
by \cref{thm:mate-iso}, the two squares on the left-hand side of 
\cref{eq:compatibility-vertical-pasting-1,eq:compatibility-vertical-pasting-2}
are mates of each other. Similarly, the two squares on the right-hand side of
\cref{eq:compatibility-vertical-pasting-1,eq:compatibility-vertical-pasting-2} 
are mates of each other. Because the mate construction is a bijection, 
this implies \cref{eq:compatibility-vertical-pasting-1} is equivalent 
to \cref{eq:compatibility-vertical-pasting-2}.
\end{proof}

\begin{corollary} \label{prop:mate-dl-morphism-vertical} 
	Assume that $\ladj$ can be endowed with a comonad structure 
	$\ladjcomtriple$ and $\radj$ with a monad structure
	$\radjtriple$. Let $F, G$ be two endofunctors on $\cat$.
	Assume that $\lmate^F : \ladjcomtriple F \naturalTrans F \ladjcomtriple$
	and $\rmate^F : F \radjtriple \naturalTrans \radjtriple F$ are mates and 
	distributive laws, and that 
	$\lmate^G : \ladjcomtriple G \naturalTrans G \ladjcomtriple$
	and $\rmate^G : G \radjtriple \naturalTrans \radjtriple G$ are mates and 
	distributive laws. Then $\alpha : F \naturalTrans G$ is a morphism from 
	$\lmate^F$ to $\lmate^G$ if and only if $\alpha$ is a morphism from 
	$\rmate^F$ to $\rmate^G$. The same holds when $\ladj$ is endowed with 
	the structure of a monad $\ladjtriple$ and $\radj$ with the structure of a comonad 
	$\radjcomtriple$ instead.
\end{corollary}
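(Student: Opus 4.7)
The plan is to derive this statement as an immediate specialization of \cref{prop:mate-vertical-compatible}. First I would unfold the definitions of morphism of distributive laws for the comonad case (\cref{def:codl-morphism}) and for the monad case (\cref{def:dl-morphism}), and observe that, for the situation at hand, they read respectively
\[
\begin{tikzcd}
	{\ladj F} & {F \ladj} \\
	{\ladj G} & {G \ladj}
	\arrow["{\lmate^F}", from=1-1, to=1-2]
	\arrow["{\ladj \alpha}"', from=1-1, to=2-1]
	\arrow["{\lmate^G}"', from=2-1, to=2-2]
	\arrow["{\alpha \ladj}", from=1-2, to=2-2]
\end{tikzcd}
\qquad
\begin{tikzcd}
	{F \radj} & {\radj F} \\
	{G \radj} & {\radj G.}
	\arrow["{\rmate^F}", from=1-1, to=1-2]
	\arrow["{\alpha \radj}"', from=1-1, to=2-1]
	\arrow["{\rmate^G}"', from=2-1, to=2-2]
	\arrow["{\radj \alpha}", from=1-2, to=2-2]
\end{tikzcd}
\]

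Next I would apply \cref{prop:mate-vertical-compatible} in the particular case where both the source and target categories coincide with $\cat$, the adjunction $\ladj['] \dashv \radj[']$ is taken to be $\ladj \dashv \radj$ itself, the functors are specialized as $\lfun[1] = \rfun[1] = F$ and $\lfun[2] = \rfun[2] = G$, and the two natural transformations are taken to be $\alpha = \beta$. Under these identifications, \cref{eq:compatibility-vertical-1} becomes exactly the left square above and \cref{eq:compatibility-vertical-2} becomes exactly the right square above. The proposition then gives the announced equivalence.

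For the second part of the statement, where $\ladj$ carries a monad structure $\ladjtriple$ and $\radj$ carries a comonad structure $\radjcomtriple$, the very same argument works: the mate construction again relates the two kinds of distributive laws (by the dual version of the correspondence underlying \cref{thm:mate-bimonad}), the morphism conditions are the two squares obtained by reversing the roles of monad and comonad, and \cref{prop:mate-vertical-compatible} again yields the equivalence with exactly the same specialization of parameters.

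The only subtlety to watch is the bookkeeping of which endofunctors play the role of $\lfun[i]$ and $\rfun[i]$ and the verification that, under the identifications $\lfun[i] = \rfun[i]$ and $\alpha = \beta$, the compatibility diagrams of \cref{def:mate-vertical-compatible} specialize to the morphism-of-distributive-law conditions; this is a direct inspection and should be the bulk (and the only real content) of the written proof. No further computation is required since \cref{prop:mate-vertical-compatible} already does the work.
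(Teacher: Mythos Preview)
Your proposal is correct and follows exactly the same approach as the paper: both unfold the morphism-of-distributive-law conditions into the compatibility diagrams of \cref{def:mate-vertical-compatible} with the specialization $\lfun[i]=\rfun[i]$ and $\alpha=\beta$, and then invoke \cref{prop:mate-vertical-compatible}. The only minor remark is that your appeal to ``the dual version of the correspondence underlying \cref{thm:mate-bimonad}'' for the second part is unnecessary: \cref{prop:mate-vertical-compatible} is agnostic to any (co)monad structure, so the same specialization works verbatim in that case as well.
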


\begin{proof} The diagrams corresponding to the fact that $\alpha$ is 
	a morphism of distributive law are the following.
	\[ \begin{tikzcd}
		{\ladj \F} & {\F \ladj} \\
		{\ladj \G} & {\G \ladj}
		\arrow["{\lmate^F}", from=1-1, to=1-2]
		\arrow["{\ladj \alpha}"', from=1-1, to=2-1]
		\arrow["{\lmate^G}"', from=2-1, to=2-2]
		\arrow["{\alpha \ladj}", from=1-2, to=2-2]
	\end{tikzcd}  \quad
	\begin{tikzcd}
		\F \radj \arrow[r, "\rmate^\F"] \arrow[d, "{\alpha \radj}"'] & \radj \F  \arrow[d, "{\radj \alpha}"]  \\
		\G \radj \arrow[r, "\rmate^\G"']                           & \radj \G                            
		\end{tikzcd} \]
	They correspond respectively to the fact that $(\alpha, \alpha)$ 
	is compatible with $\lmate^F$ and $\lmate^G$, and to the fact that 
	$(\alpha, \alpha)$ is compatible with $\rmate^F$ and $\rmate^G$, so
	they are equivalent by \cref{prop:mate-vertical-compatible}.
\end{proof}

\begin{remark} We will see in \cref{prop:mate-dl-comonad-monad} and 
\cref{prop:mate-dl-monad-comonad} that whenever $\ladjcomtriple$ 
(respectively $\ladjtriple$) is the mate of 
$\radjtriple$ (respectively $\radjcomtriple$), 
then $\lmate^F$ is a distributive law if and 
only if $\rmate^F$ is a distributive law, and
$\lmate^G$ is a distributive law if and 
only if $\rmate^G$ is a distributive law, so the assumption
of the result above is not strong.
\end{remark}

\begin{corollary} \label{prop:mate-dl-comonad}
    Assume that $\ladj['] = \ladj$ and $\radj['] = \radj$, 
    that $\lfun, \rfun$ can be equipped with a comonad structure
    $\lcomtriple = (\lfun, \lcomunit, \lcomsum)$ and 
    $\rcomtriple = (\rfun, \rcomunit, \rcomsum)$,
    and that $\lmate : \ladj \lfun \naturalTrans \rfun \ladj$ and 
    $\rmate : \lfun \radj \naturalTrans \radj \rfun$  are mates.
	Then $\lmate : \ladj \lcomtriple \naturalTrans \rcomtriple \ladj$ 
	is a distributive law (in the sense of \cref{def:dl-em})
    if and only if $\rmate : \lcomtriple \radj \naturalTrans \radj \rcomtriple$
	is a distributive law (in the sense of \cref{def:codl}).
\end{corollary}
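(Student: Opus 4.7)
The plan is to reduce the statement to the compositionality results on mates already established in the section, namely Propositions \ref{prop:mate-horizontal-composite} and \ref{prop:mate-vertical-compatible}, by rephrasing each axiom in the definition of a distributive law as a compatibility condition in the sense of Definition \ref{def:mate-vertical-compatible}. This mimics the strategy used in the proof of Corollary \ref{prop:mate-dl-morphism-vertical}.

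First I would treat the counit axiom. For $\lmate$ to be a distributive law, the triangle
\[\begin{tikzcd}
{\ladj \lfun} & {\rfun \ladj} \\
{\ladj} & {\ladj}
\arrow["\lmate", from=1-1, to=1-2]
\arrow["{\ladj \lcomunit}"', from=1-1, to=2-1]
\arrow[Rightarrow, no head, from=2-1, to=2-2]
\arrow["{\rcomunit \ladj}", from=1-2, to=2-2]
\end{tikzcd}\]
must commute, which is precisely the compatibility (in the sense of Definition \ref{def:mate-vertical-compatible}) of the pair $(\lcomunit, \rcomunit)$ with the squares $\lmate : \ladj \lfun \naturalTrans \rfun \ladj$ and $\id : \ladj \idfun \naturalTrans \idfun \ladj$. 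The identity square is trivially its own mate, so by Proposition \ref{prop:mate-vertical-compatible}, this compatibility is equivalent to the compatibility of $(\lcomunit, \rcomunit)$ with $\rmate$ and $\id$. Unfolding that second compatibility yields exactly the counit axiom for $\rmate : \lfun \radj \naturalTrans \radj \rfun$ as a distributive law in the sense of Definition \ref{def:codl}.

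For the comultiplication axiom I would first horizontally compose $\lmate$ with itself using Proposition \ref{prop:mate-horizontal-composite}, which yields that
\[ \lmate^{(2)} \defEq \rfun \lmate \ntcomp \lmate \lfun : \ladj \lfun \lfun \naturalTrans \rfun \rfun \ladj \]
is the mate of
\[ \rmate^{(2)} \defEq \rmate \rfun \ntcomp \lfun \rmate : \lfun \lfun \radj \naturalTrans \radj \rfun \rfun. \]
Then the comultiplication axiom for $\lmate$ expresses precisely the compatibility of the pair $(\lcomsum, \rcomsum)$ with the squares $\lmate$ and $\lmate^{(2)}$. By Proposition \ref{prop:mate-vertical-compatible}, this is equivalent to the compatibility of $(\lcomsum, \rcomsum)$ with the mate squares $\rmate$ and $\rmate^{(2)}$, which unfolds exactly to the comultiplication axiom for $\rmate$.

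The statement then follows by combining the two equivalences. There is no real obstacle: the argument is essentially diagrammatic book-keeping, and the only care needed is to correctly match the occurrences of $\ladj$, $\lfun$, $\rfun$ in the generic compatibility picture of Definition \ref{def:mate-vertical-compatible} with those in the distributive law axioms of Definitions \ref{def:codl} and \ref{def:dl-em}, which is straightforward given that $\ladj['] = \ladj$ and $\radj['] = \radj$ here.
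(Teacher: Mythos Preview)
Your proposal is correct and follows essentially the same approach as the paper's own proof: both rewrite the two distributive law axioms as compatibilities of the pairs $(\lcomunit,\rcomunit)$ and $(\lcomsum,\rcomsum)$ with the squares $\lmate$, $\id$, and the horizontal composite $\rfun\lmate\ntcomp\lmate\lfun$, then invoke Proposition~\ref{prop:mate-horizontal-composite} to identify the mate of the composite and Proposition~\ref{prop:mate-vertical-compatible} to transfer each compatibility.
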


\begin{proof}
Notice that 
$\rmate : \lcomtriple \radj \naturalTrans \radj \rcomtriple$ is a distributive law 
if and only if \begin{enumerate}
\item $(\lcomunit, \rcomunit)$ is compatible with
$\rmate$ and $\id$;
\item $(\lcomsum, \rcomsum)$ is compatible with
$\rmate$ and $\rmate \rfun \ntcomp \lfun \rmate$.
\end{enumerate}
Similarly, $\lmate : \ladj \lcomtriple \naturalTrans \rcomtriple \ladj$ is
a distributive law if and only if: \begin{enumerate}
\item $(\lcomunit, \rcomunit)$ is compatible with
$\lmate$ and $\id$;
\item $(\lcomsum, \rcomsum)$ is compatible with
$\rmate$ and $\rfun \lmate \ntcomp \lmate \lfun$.
\end{enumerate}

But $\id$ and $\id$ are mates, $\lmate$ and $\rmate$ are mates, and
$\rfun \lmate \ntcomp \lmate \lfun$ and
$\rmate \rfun \ntcomp \lfun \rmate$ are mates by compositionality of the 
mate construction (\cref{prop:mate-horizontal-composite}).
So by \cref{prop:mate-vertical-compatible},
$\rmate$ is a distributive law if and only if $\lmate$
is a distributive law.
\end{proof}

\begin{corollary} \label{prop:mate-dl-monad}
    Assume that $\lfun, \rfun$ are equipped with a monad structure
    $\ltriple$ and 
    $\rtriple$.
    Assume that $\lmate : \ladj \lfun \naturalTrans \rfun \ladj$ and 
    $\rmate : \lfun \radj \naturalTrans \radj \rfun$  are mates.
    Then $\lmate : \ladj \ltriple \naturalTrans \rtriple \ladj$ 
	is a distributive law (in the sense of \cref{def:dl})
    if and only if $\rmate : \ltriple \radj \naturalTrans \radj \rtriple$
	is a distributive law (in the sense of \cref{def:dl-em}).
\end{corollary}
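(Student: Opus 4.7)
The plan is to mirror exactly the argument already given for \cref{prop:mate-dl-comonad}, exchanging the roles of units/counits and multiplications/comultiplications. The whole structure is formally dual, and the proof hinges on combining \cref{prop:mate-vertical-compatible} with the compositionality of mates (\cref{prop:mate-horizontal-composite}).

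More precisely, first I would unfold the definitions. By \cref{def:dl}, saying that $\lmate : \ladj \ltriple \naturalTrans \rtriple \ladj$ is a distributive law means that two diagrams commute, which translate directly into the assertions that:
\begin{enumerate}
\item the pair $(\lmunit, \rmunit)$ formed by the units of $\ltriple$ and $\rtriple$ is compatible (in the sense of \cref{def:mate-vertical-compatible}) with the squares $\id : \ladj \idfun \naturalTrans \idfun \ladj$ and $\lmate : \ladj \lfun \naturalTrans \rfun \ladj$;
\item the pair $(\lmsum, \rmsum)$ formed by the multiplications is compatible with the squares $\rfun \lmate \ntcomp \lmate \lfun : \ladj \lfun^2 \naturalTrans \rfun^2 \ladj$ and $\lmate : \ladj \lfun \naturalTrans \rfun \ladj$.
\end{enumerate}
Dually, by \cref{def:dl-em}, saying that $\rmate : \ltriple \radj \naturalTrans \radj \rtriple$ is a distributive law amounts to the same two compatibility statements but for the squares going the other way: $\id : \idfun \radj \naturalTrans \radj \idfun$ with $\rmate$, and $\radj \rmate \ntcomp \rmate \radj$ with $\rmate$ respectively — more precisely, $\rmate \rfun \ntcomp \lfun \rmate$ in the appropriate pasting, which is the horizontal composition of $\rmate$ with itself.

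Next, I would invoke compositionality: by \cref{prop:mate-horizontal-composite}, since $\lmate$ and $\rmate$ are mates and $\lmate$ and $\rmate$ are mates, their horizontal composites $\rfun \lmate \ntcomp \lmate \lfun$ and $\rmate \rfun \ntcomp \lfun \rmate$ are also mates. Similarly $\id$ and $\id$ are trivially mates through the identity adjunction $\idfun \dashv \idfun$. This gives the two needed pairs of mated squares.

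Finally, I would apply \cref{prop:mate-vertical-compatible} to each of the two compatibility conditions: compatibility of $(\lmunit, \rmunit)$ with the $\lmate$-side squares is equivalent to compatibility with the $\rmate$-side squares, and likewise for $(\lmsum, \rmsum)$. Putting the two equivalences together yields that $\lmate$ is a distributive law of monads in the sense of \cref{def:dl} if and only if $\rmate$ is a distributive law of monads in the sense of \cref{def:dl-em}. There is no real obstacle here since the argument is a direct transcription of the proof of \cref{prop:mate-dl-comonad}; the only mild care needed is to keep track of which pasting direction (squares with $\lmate$ versus $\rmate$) corresponds to which notion of distributive law, but this bookkeeping is handled uniformly by \cref{prop:mate-vertical-compatible}.
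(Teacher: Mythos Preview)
Your proposal is correct and takes essentially the same approach as the paper, which simply writes ``Same proof as in \cref{prop:mate-dl-comonad}.'' You have spelled out the details of that dualization carefully, and your identification of the two compatibility conditions and the use of \cref{prop:mate-horizontal-composite} followed by \cref{prop:mate-vertical-compatible} is exactly right.
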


\begin{proof} Same proof as in \cref{prop:mate-dl-comonad}.
\end{proof}

\begin{definition}  \label{def:mate-horizontal-compatible}
	A pair of natural transformations 
$\alpha : \ladj[2] \naturalTrans \ladj[1]$ and 
$\alpha' : \ladj[2'] \naturalTrans \ladj[1']$ is compatible with
the squares 
$\lmate[1] : \ladj[1'] \lfun \naturalTrans \rfun \ladj[1]$ and
$\lmate[2] : \ladj[2'] \lfun \naturalTrans \rfun \ladj[2]$ if
the following diagram commutes:
\begin{subequations}
\begin{equation} \label{eq:compatibility-horizontal-1}
	\begin{tikzcd}
	{\ladj[1'] \lfun} & {\rfun \ladj[1]} \\
	{\ladj[2'] \lfun} & {\rfun \ladj[2]}
	\arrow["{\lmate[1]}", from=1-1, to=1-2]
	\arrow["{\lmate[2]}"', from=2-1, to=2-2]
	\arrow["\alpha' \lfun", from=2-1, to=1-1]
	\arrow["{\rfun \alpha}"', from=2-2, to=1-2]
\end{tikzcd}
\end{equation}
A pair of natural transformations 
$\beta : \radj[1] \naturalTrans \radj[2]$ and 
$\beta' : \radj[1'] \naturalTrans \radj[2']$ is compatible with the squares
$\rmate[1]: \lfun \radj[1] \naturalTrans \radj[1'] \rfun$ and
$\rmate[2]: \lfun \radj[2] \naturalTrans \radj[2'] \rfun$ if the 
diagram below commutes.
\begin{equation} \label{eq:compatibility-horizontal-2}
	\begin{tikzcd}
	{\lfun \radj[1]} & {\radj[1'] \rfun} \\
	{\lfun \radj[2]} & {\radj[2'] \rfun}
	\arrow["{\lfun \beta}"', from=1-1, to=2-1]
	\arrow["{\rmate[1]}", from=1-1, to=1-2]
	\arrow["{\beta' \rfun}", from=1-2, to=2-2]
	\arrow["{\rmate[2]}"', from=2-1, to=2-2]
\end{tikzcd} 
\end{equation}
\end{subequations}
\cref{eq:compatibility-horizontal-1,eq:compatibility-horizontal-2}
respectively corresponds to the equality of pasting diagrams 
given in 
\cref{eq:compatibility-horizontal-pasting-1,eq:compatibility-horizontal-pasting-2}
below.

\begin{subequations}
\begin{equation} \label{eq:compatibility-horizontal-pasting-1}
	\begin{tikzcd}
	\cat & {\cat[']} & {\cat[']} \\
	\catbis & {\catbis[']} & {\catbis[']}
	\arrow["{\ladj[1]}"', from=1-1, to=2-1]
	\arrow["\lfun", from=1-1, to=1-2]
	\arrow["\rfun"', from=2-1, to=2-2]
	\arrow["{\ladj[1']}"{description}, from=1-2, to=2-2]
	\arrow["{\lmate[1]}", shorten <=7pt, shorten >=7pt, Rightarrow, from=1-2, to=2-1]
	\arrow[Rightarrow, no head, from=1-3, to=1-2]
	\arrow[Rightarrow, no head, from=2-3, to=2-2]
	\arrow["{\ladj[2']}", from=1-3, to=2-3]
	\arrow["{\alpha'}", shorten <=7pt, shorten >=7pt, Rightarrow, from=1-3, to=2-2]
\end{tikzcd} =  
\begin{tikzcd}
	\cat & \cat & {\cat[']} \\
	\catbis & \catbis & {\catbis[']}
	\arrow["{\ladj[2]}"{description}, from=1-2, to=2-2]
	\arrow["\lfun", from=1-2, to=1-3]
	\arrow["\rfun"', from=2-2, to=2-3]
	\arrow["{\ladj[2']}", from=1-3, to=2-3]
	\arrow["{\lmate[2]}", shorten <=7pt, shorten >=7pt, Rightarrow, from=1-3, to=2-2]
	\arrow[Rightarrow, no head, from=1-1, to=1-2]
	\arrow[Rightarrow, no head, from=2-1, to=2-2]
	\arrow["{\ladj[1]}"', from=1-1, to=2-1]
	\arrow["\alpha", shorten <=7pt, shorten >=7pt, Rightarrow, from=1-2, to=2-1]
\end{tikzcd} \end{equation}
\begin{equation} \label{eq:compatibility-horizontal-pasting-2}
\begin{tikzcd}
	\cat & {\cat[']} & {\cat[']} \\
	\catbis & {\catbis[']} & {\catbis[']}
	\arrow["{\radj[1]}", from=2-1, to=1-1]
	\arrow["\lfun", from=1-1, to=1-2]
	\arrow["\rfun"', from=2-1, to=2-2]
	\arrow["{\radj[1']}"{description}, from=2-2, to=1-2]
	\arrow["{\rmate[1]}", shorten <=7pt, shorten >=7pt, Rightarrow, from=1-1, to=2-2]
	\arrow[Rightarrow, no head, from=1-3, to=1-2]
	\arrow[Rightarrow, no head, from=2-3, to=2-2]
	\arrow["{\radj[2']}"', from=2-3, to=1-3]
	\arrow["{\beta'}", shorten <=7pt, shorten >=7pt, Rightarrow, from=1-2, to=2-3]
\end{tikzcd} = 
\begin{tikzcd}
	\cat & \cat & {\cat[']} \\
	\catbis & \catbis & {\catbis[']}
	\arrow["{\radj[2]}"{description}, from=2-2, to=1-2]
	\arrow["\lfun", from=1-2, to=1-3]
	\arrow["\rfun"', from=2-2, to=2-3]
	\arrow["{\radj[2']}"', from=2-3, to=1-3]
	\arrow["{\rmate[2]}", shorten <=7pt, shorten >=7pt, Rightarrow, from=1-2, to=2-3]
	\arrow[Rightarrow, no head, from=1-1, to=1-2]
	\arrow[Rightarrow, no head, from=2-1, to=2-2]
	\arrow["{\radj[1]}", from=2-1, to=1-1]
	\arrow["\beta", shorten <=7pt, shorten >=7pt, Rightarrow, from=1-1, to=2-2]
\end{tikzcd}
\end{equation}
\end{subequations}
\end{definition}

\begin{proposition}  \label{prop:mate-horizontal-compatible}
	Let $\alpha, \alpha', \beta, \beta', \lmate[1], \lmate[2], \rmate[1], \rmate[2]$
	be natural transformations whose type is given in \cref{def:mate-horizontal-compatible}.
	Assume that $\alpha, \beta$ are mates, that $\alpha', \beta'$ are
mates, that $\lmate[1], \rmate[1]$ are mates and that $\lmate[2], \rmate[2]$ are mates.
Then $(\alpha, \alpha')$ is compatible with $\lmate[1]$ and $\lmate[2]$
if and only if $(\beta, \beta')$ is compatible with $\rmate[1]$ and $\rmate[2]$.
\end{proposition}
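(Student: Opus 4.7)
The plan is to mirror the argument used for \cref{prop:mate-vertical-compatible}, exploiting the fact that the mate construction is an isomorphism of double categories (\cref{thm:mate-iso}). The only adjustment is that now the auxiliary 2-cells $\alpha, \alpha', \beta, \beta'$ are to be viewed as squares with trivial \emph{horizontal} sides rather than trivial vertical ones, so that (unlike in the vertical case) a 2-cell on the $\lmate$ side is no longer its own mate on the $\rmate$ side: $\alpha$ is the mate of $\beta$ via the adjunctions $\ladj[1] \dashv \radj[1]$ and $\ladj[2] \dashv \radj[2]$, and $\alpha'$ is the mate of $\beta'$ via $\ladj[1'] \dashv \radj[1']$ and $\ladj[2'] \dashv \radj[2']$, as described in \cref{rem:mate-simplified}.

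First I would rewrite the two compatibility conditions, \cref{eq:compatibility-horizontal-1,eq:compatibility-horizontal-2}, as equalities of the pasting diagrams \cref{eq:compatibility-horizontal-pasting-1,eq:compatibility-horizontal-pasting-2} in the $\lmate$-double category and the $\rmate$-double category respectively. In each case both sides are obtained by horizontally composing one of the squares $\lmate[i]$ (resp.\ $\rmate[i]$) with a ``thin'' square carrying the 2-cell $\alpha'$ or $\alpha$ (resp.\ $\beta'$ or $\beta$) and identity horizontal functors.

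Second I would apply \cref{prop:mate-horizontal-composite} twice: the horizontal composite of the $\lmate[1]$-square with the $\alpha'$-square in the $\lmate$-double category is a mate of the horizontal composite of the $\rmate[1]$-square with the $\beta'$-square in the $\rmate$-double category, because by assumption $\lmate[1]$ is a mate of $\rmate[1]$ and $\alpha'$ is a mate of $\beta'$. Hence the LHS of \cref{eq:compatibility-horizontal-pasting-1} and the LHS of \cref{eq:compatibility-horizontal-pasting-2} are mates of each other. An identical argument applied to $\lmate[2]$ with $\alpha$ and $\rmate[2]$ with $\beta$ shows that the two RHSs are mates of each other.

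Finally, since by \cref{thm:mate-iso} the mate construction is a \emph{bijection} between the $\lmate$-double category and the $\rmate$-double category, an equality between two squares in one of them is equivalent to the equality of their mates in the other. Consequently \cref{eq:compatibility-horizontal-pasting-1} holds if and only if \cref{eq:compatibility-horizontal-pasting-2} holds, which is precisely the claim. No step looks to be a real obstacle; the only care required is in correctly identifying the trivial squares associated with $\alpha, \alpha', \beta, \beta'$ so that the horizontal-composition clause of the double-category isomorphism applies directly.
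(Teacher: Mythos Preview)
Your proposal is correct and follows essentially the same approach as the paper: reinterpret both compatibility conditions as equalities of horizontally-composed pasting squares, observe that the constituent squares on each side are pairwise mates (using that $\alpha,\beta$ and $\alpha',\beta'$ are mates, and that $\lmate[i],\rmate[i]$ are mates), and conclude by bijectivity of the mate construction. The paper's proof is more terse, simply invoking \cref{thm:mate-iso} directly rather than spelling out the use of \cref{prop:mate-horizontal-composite}, but the content is the same.
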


\begin{proof} By \cref{thm:mate-iso} and assumption, the two pasting diagrams 
on the left-hand side of 
\cref{eq:compatibility-horizontal-pasting-1,eq:compatibility-horizontal-pasting-2}
are mates, and the two pasting 
diagrams on the right-hand side of 
\cref{eq:compatibility-horizontal-pasting-1,eq:compatibility-horizontal-pasting-2}
are mates. 
Because the mate construction is a bijection, the two equalities 
are equivalent.
\end{proof}

\begin{remark} Using pasting diagrams sheds light on why 
\cref{def:mate-vertical-compatible} and \cref{def:mate-horizontal-compatible}
are very similar yet different. They consist of the same kind of equation 
except that the first one is vertical, and the second one is horizontal.
This also explains why an alternative proof consisting of
unfolding the definition of the mate construction is very 
straightforward in \cref{prop:mate-vertical-compatible}, but not straightforward
at all in the case of \cref{prop:mate-horizontal-compatible}.
The reason is that the functoriality of the mate construction for the vertical 
composition holds almost by definition of the mates, whereas the functoriality of the mate
construction for the horizontal composition involves a non-trivial computation.
A direct proof of \cref{prop:mate-horizontal-compatible} would 
duplicate this computation multiple times.
\end{remark}

\begin{corollary} \label{prop:mate-dl-morphism}
    Assume that $\ladj[']_i = \ladj_i$ and $\radj[']_i = \radj_i$,
    that $\lfun, \rfun$ are equipped with a comonad structure
    $\lcomtriple$ and 
    $\rcomtriple$,
    and that $\lmate_i : \ladj_i \lcomtriple \naturalTrans \rcomtriple \ladj_i$ and 
    $\rmate_i : \lcomtriple \radj_i \naturalTrans \radj_i \rcomtriple$ are distributive laws
	and are mates.
    Assume that $\alpha : \ladj_2 \naturalTrans \ladj_1$ and 
	$\beta: \radj_1 \naturalTrans \radj_2$ are mates. Then 
	$\alpha$ is a morphism of distributive law (from $\lmate_2$ to 
	$\lmate_1$) if and only if $\beta$ is a morphism of distributive 
	law (from $\rmate_1$ to $\rmate_2$).
	The same property hold when taking a comonad structure $\ltriple$ and 
    $\rtriple$ instead.
\end{corollary}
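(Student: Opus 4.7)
The plan is to derive this corollary as a direct consequence of \cref{prop:mate-horizontal-compatible}, by recognizing that the defining square of a morphism of distributive laws is a particular instance of the horizontal compatibility condition.

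Concretely, I will exploit the identifications postulated in the statement to set $\alpha' = \alpha$ and $\beta' = \beta$ in the notation of \cref{def:mate-horizontal-compatible}. Since $(\alpha, \beta)$ is by assumption a pair of mates, the pair $(\alpha', \beta') = (\alpha, \beta)$ is trivially also a pair of mates; the hypothesis that $\lmate_1, \rmate_1$ are mates and that $\lmate_2, \rmate_2$ are mates is part of the data. So all the hypotheses of \cref{prop:mate-horizontal-compatible} are in place.

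Next, taking $\lfun = \lcomtriple$ and $\rfun = \rcomtriple$, I will unfold the two horizontal compatibility squares \cref{eq:compatibility-horizontal-1,eq:compatibility-horizontal-2}. Under this substitution, \cref{eq:compatibility-horizontal-1} reads exactly as the square of \cref{def:codl-morphism} (applied via \cref{def:dl-em}) that expresses the fact that $\alpha$ is a morphism of distributive laws from $\lmate_2$ to $\lmate_1$. Symmetrically, \cref{eq:compatibility-horizontal-2} becomes the square expressing that $\beta$ is a morphism of distributive laws from $\rmate_1$ to $\rmate_2$. The required equivalence is therefore an immediate instance of \cref{prop:mate-horizontal-compatible}. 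The variant with a monad structure $\ltriple, \rtriple$ in place of $\lcomtriple, \rcomtriple$ is handled identically, invoking \cref{def:dl-morphism} instead of \cref{def:codl-morphism}.

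I do not foresee any genuine obstacle: the real work has already been absorbed into \cref{prop:mate-horizontal-compatible}, which itself rests on the double-categorical isomorphism of \cref{thm:mate-iso}. The only mildly delicate step is the bookkeeping identification of the horizontal compatibility squares with the morphism-of-distributive-laws squares, and this is a matter of reading off the shapes of the diagrams rather than any new calculation.
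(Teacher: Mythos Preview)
Your proposal is correct and follows essentially the same approach as the paper: both recognize that the morphism-of-distributive-laws square is precisely the horizontal compatibility square of \cref{def:mate-horizontal-compatible} with $(\alpha,\alpha')=(\alpha,\alpha)$ and $(\beta,\beta')=(\beta,\beta)$, and both conclude directly by \cref{prop:mate-horizontal-compatible}.
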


\begin{proof} The diagram making $\alpha$ a morphism of distributive law 
is the same as the one expressing that the pair $(\alpha, \alpha)$ is 
compatible with the squares $\lmate_1 : \ladj_1 \lcomtriple \naturalTrans \rcomtriple \ladj_1$
and $\lmate_2 : \ladj_2 \lcomtriple \naturalTrans \rcomtriple \ladj_2$.
The diagram making $\beta$ a morphism of distributive law 
is the same as the one expressing that the pair $(\beta, \beta)$ is 
compatible with the squares $\rmate_1 : \lcomtriple \radj_1 \naturalTrans \radj_1 \rcomtriple$
and $\rmate_2 : \lcomtriple \radj_2 \naturalTrans \radj_2 \rcomtriple$.
We then conclude by \cref{prop:mate-horizontal-compatible}.
\end{proof}

\begin{corollary} \label{prop:mate-dl-comonad-monad}
	Assume that $\ladjcomtriple = (\ladj, \comunit, \comsum)$ 
	is a comonad and $\radjtriple = (\radj, \munit, \msum)$ is a monad 
	on the same category $\cat$ and that they are mates.
	Assume that $\ladjcomtriple' = (\ladj', \comunit', \comsum')$ 
	is a comonad and $\radjtriple' = (\radj', \munit', \msum')$ is a monad 
	on the same category $\cat[']$ and that they are mate.
	Let $F : \cat \arrow \cat[']$.
	Assume that $\lmate : \ladj' \F \naturalTrans \F \ladj$ and 
	$\rmate: \F \radj \naturalTrans \radj['] \F$ are mate. Then 
	$\lmate : \ladjcomtriple['] \F \naturalTrans \F \ladjcomtriple$
	is a distributive law if and only if 
	$\rmate: \F \radjtriple \naturalTrans \radjtriple['] \F$
	is a distributive law.
\end{corollary}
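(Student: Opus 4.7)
The plan is to reduce each of the two defining diagrams of a distributive law to an instance of Proposition \ref{prop:mate-horizontal-compatible}, applied to pairs of mates produced by the hypotheses. The key observation is that ``being a distributive law'' is in both cases a horizontal compatibility condition, so all the bookkeeping is already packaged in the results of Section \ref{sec:compatibility-adjunction}.

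First I would unfold what the assumption that $\ladjcomtriple$ and $\radjtriple$ are mates really gives: by the bijections recalled in Section \ref{sec:mate-monade-comonade}, $\comunit$ and $\munit$ are mates (through the identity adjunction and $\ladj\dashv\radj$), and $\comsum$ and $\msum$ are mates (through $\ladj\dashv\radj$ and the composite $\ladj\ladj\dashv\radj\radj$). The same holds for the primed data on $\cat[']$.

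Next I would treat the (co)unit axiom. The counit triangle of Definition \ref{def:codl} for $\lmate$ says precisely that the pair $(\comunit,\comunit')$ is horizontally compatible with $\lmate_1 := \lmate$ and $\lmate_2 := \id : \idfun' F \naturalTrans F\idfun$, in the sense of Equation \eqref{eq:compatibility-horizontal-1}. Since $\lmate$ and $\rmate$ are mates, $\id$ is trivially its own mate, and $(\comunit,\munit)$, $(\comunit',\munit')$ are mates by the previous paragraph, Proposition \ref{prop:mate-horizontal-compatible} transports the condition to the horizontal compatibility of $(\munit,\munit')$ with $\rmate$ and $\id$, which is exactly the unit triangle required for $\rmate$ to be a distributive law of monads.

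Then I would handle the (co)multiplication axiom in the same way, but with one twist: the target square is now iterated. The comultiplication pentagon for $\lmate$ is exactly the horizontal compatibility of $(\comsum,\comsum')$ with $\lmate_1 := \lmate \ladj \ntcomp \ladj' \lmate :(\ladj')^2 F \naturalTrans F\ladj^2$ and $\lmate_2 := \lmate$. By Proposition \ref{prop:mate-vertical-composite} applied to $\lmate,\rmate$ against themselves, the mate of $\lmate_1$ is $\radj' \rmate \ntcomp \rmate \radj$. Applying Proposition \ref{prop:mate-horizontal-compatible} once more with $\alpha = \comsum$, $\alpha' = \comsum'$, $\beta=\msum$, $\beta'=\msum'$, the compatibility transfers to the horizontal compatibility of $(\msum,\msum')$ with those iterated squares and $\rmate$, which is precisely the multiplication pentagon for $\rmate$. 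Combining the two equivalences yields the desired biconditional. The only real obstacle is orientation bookkeeping: checking that the directions of $\comunit/\munit$ and $\comsum/\msum$ match the variances of $\alpha,\alpha',\beta,\beta'$ in Definition \ref{def:mate-horizontal-compatible}, and that $\lmate_1,\rmate_1$ are correctly assembled as a vertical composite; once that is pinned down, the argument is a direct application of the compositional machinery already built.
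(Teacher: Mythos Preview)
Your proposal is correct and follows essentially the same approach as the paper's proof: reformulate the two conditions defining each distributive law as horizontal compatibility conditions (in the sense of Definition~\ref{def:mate-horizontal-compatible}), observe that the relevant squares are mates (using Proposition~\ref{prop:mate-vertical-composite} for the iterated square), and then transfer each condition across via Proposition~\ref{prop:mate-horizontal-compatible}. Your exposition is slightly more explicit about orientation bookkeeping, but the structure and key lemmas are identical.
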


\begin{proof} Notice that 
	$\lmate : \ladjcomtriple['] \F \naturalTrans \F \ladjcomtriple$
	is a distributive law if and only if 
	\begin{enumerate}
	\item $(\comunit, \comunit')$ is compatible with $\lmate$ and $\id$;
	\item $(\comsum, \comsum')$ is compatible with 
	$\lmate \ladj \ntcomp \ladj['] \lmate$ and $\lmate$.
	\end{enumerate}
	Similarly, $\rmate: \F \radjtriple \naturalTrans \radjtriple['] \F$ 
	is a distributive law if and only if
	\begin{enumerate}
		\item $(\munit, \munit')$ is compatible with $\rmate$ and $\id$.
		\item $(\msum, \msum')$ is compatible with 
		$\radj['] \rmate \ntcomp \rmate \radj$ and $\rmate$. 
	\end{enumerate}
	But $\id$ and $\id$ are mate, $\lmate$ and $\rmate$ are mate, and
	by \cref{prop:mate-vertical-composite}, 
	$\lmate \ladj['] \ntcomp \ladj['] \lmate$
	and $\radj['] \rmate \ntcomp \rmate \radj$ are mate.
	Furthermore, the monad and the comonad structures are mate
	by assumption.
	So by \cref{prop:mate-horizontal-compatible}, 
	$\lmate$ is a distributive law if and only if $\rmate$ is a distributive 
	law.
\end{proof}

\begin{corollary} \label{prop:mate-dl-monad-comonad}
	Under the same assumptions as in \cref{prop:mate-dl-comonad-monad} except 
	that now $\ladjtriple, \ladjtriple[']$ are monads and $\radjcomtriple, \radjcomtriple[']$ 
	are comonads, 
	$\lmate : \ladjtriple['] \lfun \naturalTrans \rfun \ladjtriple$
	is a distributive law if and only if 
	$\rmate: \lfun \radjcomtriple \naturalTrans \radjcomtriple['] \rfun$
	is a distributive law.
\end{corollary}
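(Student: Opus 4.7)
The plan is to mirror the proof of \cref{prop:mate-dl-comonad-monad} step by step, exchanging the roles played by monads and comonads throughout. As in that proof, I will first unpack each distributive law condition into two compatibility conditions of the form given in \cref{def:mate-horizontal-compatible} --- one for the unit (resp.\ counit) and one for the multiplication (resp.\ comultiplication) --- and then show that the pieces of data on the two sides are pairwise mates so that \cref{prop:mate-horizontal-compatible} can be applied.

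More precisely, that $\lmate : \ladjtriple['] \lfun \naturalTrans \lfun \ladjtriple$ is a distributive law unfolds into the two conditions
\begin{enumerate}
\item $(\munit, \munit['])$ is compatible with $\id$ and $\lmate$;
\item $(\msum, \msum['])$ is compatible with the horizontal composite $\lmate \ladj \ntcomp \ladj['] \lmate$ and $\lmate$.
\end{enumerate}
Symmetrically, that $\rmate : \lfun \radjcomtriple \naturalTrans \radjcomtriple['] \lfun$ is a distributive law unfolds into
\begin{enumerate}
\item $(\comunit, \comunit['])$ is compatible with $\rmate$ and $\id$;
\item $(\comsum, \comsum['])$ is compatible with $\rmate$ and $\radj['] \rmate \ntcomp \rmate \radj$.
\end{enumerate}

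To finish, I will check that the two sides match up under the mate correspondence. The hypothesis that the monads $\ladjtriple, \ladjtriple[']$ are mates of the comonads $\radjcomtriple, \radjcomtriple[']$ means exactly (see \cref{sec:mate-monade-comonade}) that $\munit$ is mate of $\comunit$, that $\msum$ is mate of $\comsum$, and similarly for the primed versions. The identity natural transformations are trivially mates of themselves, $\lmate$ is mate of $\rmate$ by assumption, and hence by compositionality of the mate construction (\cref{prop:mate-vertical-composite}) the composite $\lmate \ladj \ntcomp \ladj['] \lmate$ is mate of $\radj['] \rmate \ntcomp \rmate \radj$. A single application of \cref{prop:mate-horizontal-compatible} to each pair of compatibility conditions then delivers the desired equivalence. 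No real difficulty is expected: this is a bureaucratic dualization of the earlier argument, and the only thing to watch out for is tracking the direction of the arrows and the order of horizontal composites, so that the labeling of the squares used to invoke \cref{prop:mate-horizontal-compatible} is coherent.
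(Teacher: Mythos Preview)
Your proposal is correct and takes essentially the same approach as the paper, which simply states that the proof is the same as that of \cref{prop:mate-dl-comonad-monad}; you have spelled out exactly that dualization. One minor typo: in your first unfolding you wrote $\lmate : \ladjtriple['] \lfun \naturalTrans \lfun \ladjtriple$ where the codomain should be $\rfun \ladjtriple$.
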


\begin{proof} The proof is the same as for \cref{prop:mate-dl-comonad}.
\end{proof}

\subsection{Application to lax and oplax monoidal structures}
\label{sec:mate-oplax-lax}

Any adjunctions $\unit[1], \counit[1] : \ladj[1] \dashv \radj[1]$
and $\unit[2], \counit[2] : \ladj[2] \dashv \radj[2]$,
induces an adjunction $(\unit[1], \unit[2]), (\counit[1], \counit[2]) :
\ladj[1] \times \ladj[2] \dashv \radj[1] \times \radj[2]$.

\begin{proposition} \label{prop:mate-product}
	If $\lmate[1] : \ladj[1'] \lfun[1] \naturalTrans \rfun[1] \ladj[1]$
	and $\rmate[1] : \lfun[1] \radj[1] \naturalTrans \radj[1'] \rfun[1]$ are mates,
	and if $\lmate[2] : \ladj[2'] \lfun[2] \naturalTrans \rfun[2] \ladj[2]$
	and $\rmate[2] : \lfun[2] \radj[2] \naturalTrans \radj[2'] \rfun[2]$ are mates
	then 
	\[ \begin{split}
	(\lmate[1], \lmate[2]) : (\ladj[1'] \times \ladj[2']) 
	(\lfun[1] \times \lfun[2]) \naturalTrans (\rfun[1] \times \rfun[2])
	(\ladj[1] \times \ladj[2]) \\
	(\rmate[1], \rmate[2]) : (\lfun[1] \times \lfun[2])
	(\radj[1] \times \radj[2]) \naturalTrans (\radj[1'] \times \radj[2']) 
	(\rfun[1] \times \rfun[2])
	\end{split} \]
	are mates.
\end{proposition}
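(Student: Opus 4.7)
The plan is to argue directly from the explicit pasting formula for the mate bijection given in \cref{def:mate}. The first step is to observe that the unit and counit of the product adjunction $\ladj[1] \times \ladj[2] \dashv \radj[1] \times \radj[2]$ are, by direct verification, the componentwise pairs $(\unit[1], \unit[2])$ and $(\counit[1], \counit[2])$, and more generally that identities, composition, and whiskering of natural transformations between product functors are all performed componentwise. This is simply because the product category $\cat[1] \times \cat[2]$ inherits all of its categorical structure componentwise from its two factors.

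Next, I would apply the formula
\[
\rmate = \radj['] \rfun \counit \ntcomp \radj['] \lmate \radj \ntcomp \unit['] \lfun \radj
\]
to the pair $(\lmate[1], \lmate[2])$ viewed as a natural transformation between product functors. Since every constituent and every whiskering in this three-fold composite decomposes into a pair indexed by $k \in \{1, 2\}$, the right-hand side evaluates to $(\mu_1, \mu_2)$, where each $\mu_k$ is precisely the same pasting formula applied to $\lmate[k]$ through the adjunction $\ladj[k] \dashv \radj[k]$ (and similarly on the primed side). Invoking the hypothesis that $\lmate[k]$ and $\rmate[k]$ are mates for $k = 1, 2$ then yields $\mu_k = \rmate[k]$, so $(\rmate[1], \rmate[2])$ is exactly the right mate of $(\lmate[1], \lmate[2])$.

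The only delicate point is the componentwise character of whiskering and horizontal composition between product functors, which amounts to a routine but notation-heavy unfolding. A more conceptual alternative would be to exhibit the Cartesian product as a (pseudo)functor on the double category $\categoryadj$ of adjunctions and invoke its preservation of squares, in the spirit of \cref{thm:mate-iso}; but for this isolated statement the direct componentwise verification is both shorter and more transparent, and that is the route I would take.
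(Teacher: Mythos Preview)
Your proposal is correct and matches the paper's approach: the paper simply records this as a ``straightforward computation,'' and what you have written is exactly the componentwise unfolding that this phrase refers to. There is nothing more to add.
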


\begin{proof} Straightforward computation.
\end{proof}

\begin{remark} It is very likely that the observation above means 
that $\categoryadj$ is monoidal, for a suitable notion of monoidal 
double category.
\end{remark}

Assume that $(\cat, \sm, \smone)$ and 
$(\catbis, \sm[2], \smone[2])$ are (symmetric) monoidal categories.
The adjunction
$\ladj \dashv \radj$ induces a bijection between the morphisms 
$\smfzero \in \cat(\smone, \radj \smone[2])$ and the morphisms 
$\osmfzero \in \catbis(\ladj \smone[2], \smone)$. This can also be 
seen as an instance of the mate construction when taking 
$H = K$ to be the constant functor $\smone$ on $\cat$.

The mate construction applied to the bifunctor
$\_ \sm \_ : \cat \times \cat \arrow \cat$ and taking the adjunctions to 
be $\ladj \dashv \radj$ and 
$\ladj \times \ladj \dashv \radj \times \radj$ induces a bijection 
between natural transformations
$\osmftwo_{X, Y} \in \catbis(\ladj(X \sm Y), \ladj X \sm[2] \ladj Y)$
and $\smftwo_{X, Y} \in \cat(\radj X \sm \radj Y, \radj (X \sm[2] Y))$.
The result below is well known, and is also a result of doctrinal adjunction.
We provide a direct proof.

\begin{theorem} (\cite{Kelly74-doctrinal}) 
\label{thm:mate-oplax-lax}
The bijection above induces a bijection between 
lax monoidal structures on $\radj$ and oplax monoidal 
structures on $\ladj$. Furthermore, if
$\radj$ is endowed with a monad structure 
$\radjtriple = (\radj, \munit, \msum)$ and $\ladj$ with a comonad structure 
$\ladjcomtriple = (\ladj, \comunit, \comsum)$ that are mates, 
the bijection induces a bijection between lax monoidal structures 
on $\radjtriple$ and $\ladjcomtriple$.
\end{theorem}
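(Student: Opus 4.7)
The plan is to build the bijection directly out of the mate construction of Proposition \ref{def:mate} and then reduce each coherence axiom to an instance of the compatibility results already established.

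First, I would exhibit the basic bijection on the structural data. The morphism $\smftwo$ is a natural transformation $\sm \Comp (\radj \times \radj) \naturalTrans \radj \Comp \sm[2]$ between functors $\catbis \times \catbis \to \cat$. Instantiating Proposition \ref{def:mate} with the bifunctors $\lfun = \sm$ and $\rfun = \sm[2]$, with the product adjunction $\ladj \times \ladj \dashv \radj \times \radj$ (playing the role of the left-hand vertical adjunction) and the given adjunction $\ladj \dashv \radj$ (playing the role of the right-hand vertical adjunction), the mate of $\smftwo$ is exactly a natural transformation $\ladj \Comp \sm \naturalTrans \sm[2] \Comp (\ladj \times \ladj)$, namely $\osmftwo$. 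Likewise, viewing $\smone$ and $\smone[2]$ as constant functors from the terminal category, the mate construction (now with identity adjunctions on the terminal category) turns $\smfzero \in \cat(\smone, \radj \smone[2])$ into $\osmfzero \in \catbis(\ladj \smone, \smone[2])$. This is nothing but the standard adjoint transposition of Remark \ref{rem:mate-simplified}.

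Second, I would translate each coherence axiom of Definition \ref{def:smf} into the dual axiom for $(\osmfzero, \osmftwo)$. Each diagram is of the shape of Definitions \ref{def:mate-vertical-compatible} or \ref{def:mate-horizontal-compatible}: the monoidal isomorphisms $\smunitL, \smunitR, \smassoc, \smsym$ (and their counterparts in $\catbis$) serve as the pair $(\alpha, \beta)$ (or $(\alpha, \alpha')$) of natural transformations, while suitable mates of $\smftwo$ and $\osmftwo$ — obtained by horizontally composing $\smftwo$ with $\radj Z, \radj X$, etc., and vertically composing with itself to describe the ternary tensor — serve as the squares $\lmate_i, \rmate_i$. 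The compositionality of the mate construction, Propositions \ref{prop:mate-horizontal-composite} and \ref{prop:mate-vertical-composite}, guarantees that these composite squares on the $\radj$-side are mates of the analogous composite squares on the $\ladj$-side. Propositions \ref{prop:mate-vertical-compatible} and \ref{prop:mate-horizontal-compatible} then turn each lax coherence into the corresponding oplax coherence; the symmetry axiom is handled identically, using that $\smsym$ and $\smsym[2]$ are trivially mates of themselves.

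Third, for the monad/comonad upgrade, I would use that by assumption $(\munit, \comunit)$ and $(\msum, \comsum)$ are pairs of mates. The lax monoidal monad requirement of Definition \ref{def:monoidal-monad} states that $\munit$ and $\msum$ are monoidal natural transformations between $(\idfun, \id_{\smone}, \id_{\sm})$ and $(\radj, \smfzero, \smftwo)$ (resp.\ between $(\radj^2, \ldots)$ and $(\radj, \ldots)$). These conditions have exactly the shape of Definition \ref{def:mate-horizontal-compatible} with the vertical adjunctions being the monad/comonad pairs and the horizontal mates being the pairs $(\smftwo, \osmftwo)$ and $(\smfzero, \osmfzero)$ constructed in the first step. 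Proposition \ref{prop:mate-horizontal-compatible} (applied to the horizontal composites of mates provided by the previous step) immediately yields the equivalent dual diagrams expressing that $\comunit$ and $\comsum$ are comonoidal natural transformations for $(\osmfzero, \osmftwo)$, which is precisely the oplax monoidal comonad condition.

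The main obstacle will be the second step: repackaging each coherence diagram — especially the associativity pentagon — as a single compatibility square requires carefully identifying the composite mates involved (for instance recognizing $\smftwo \Comp (\smftwo \sm \radj Z)$ as the mate of a threefold composite built from $\osmftwo$) and verifying that the unitors and associator are themselves mates of their $\catbis$-counterparts with respect to the appropriate composite adjunctions. Once this bookkeeping is done, however, each axiom of the lax structure collapses by Proposition \ref{prop:mate-horizontal-compatible} into its oplax counterpart, so no genuinely new calculation is needed beyond what is already packaged in the isomorphism of double categories of Theorem \ref{thm:mate-iso}.
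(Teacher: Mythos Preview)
Your proposal is correct and follows essentially the same approach as the paper. The only notable difference is in the monad/comonad upgrade: the paper observes that the monoidality conditions on $\munit$ and $\msum$ amount to $\smftwo$ being a distributive law between $\_\sm\_$ and the monads $\radjtriple\times\radjtriple$, $\radjtriple$, and then invokes the packaged Corollary~\ref{prop:mate-dl-comonad-monad} to transfer this to the oplax side, whereas you unfold this step and appeal directly to Proposition~\ref{prop:mate-horizontal-compatible}. Since the proof of Corollary~\ref{prop:mate-dl-comonad-monad} is itself just an application of Proposition~\ref{prop:mate-horizontal-compatible}, the two arguments are the same in substance; the paper's version is just more concise. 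For the coherence axioms in your second step, note that only the \emph{vertical} compatibility of Proposition~\ref{prop:mate-vertical-compatible} is actually needed (the unitors, associator and symmetry are morphisms between functors, not between adjoint pairs), so the reference to Proposition~\ref{prop:mate-horizontal-compatible} there is unnecessary.
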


\begin{proof} The two diagrams below express 
	the compatibility conditions of lax and oplax 
	monoidal structures with respect to the right 
	unitors. 
	\[\begin{tikzcd}
		{\ladj X \sm[2] \smone[2]} & {\ladj X \sm[2] \ladj \smone[1] } & {\ladj(X \sm[1] \smone[1])} \\
		{\ladj X} & & {\ladj X} 
		\arrow["{\ladj X \sm[2] \smfzero}", from=1-1, to=1-2]
		\arrow["\smftwo", from=1-2, to=1-3]
		\arrow["{\smunitR[2]}"', from=1-1, to=2-1]
		\arrow["{\ladj \smunitR[1]}", from=1-3, to=2-3]
		\arrow[Rightarrow, no head, from=2-1, to=2-3]
	\end{tikzcd} \quad 
	\begin{tikzcd}
		{\radj (X \sm[2] \smone[2])} & {\radj X \sm \radj \smone[2]} & {\radj X \sm \smone} \\
		{\radj X} && {\radj X}
		\arrow["\osmftwo", from=1-1, to=1-2]
		\arrow["\radj X \sm \osmfzero", from=1-2, to=1-3]
		\arrow["\smunitR", from=1-3, to=2-3]
		\arrow["{\radj \smunitR[2]}"', from=1-1, to=2-1]
		\arrow[Rightarrow, no head, from=2-1, to=2-3]
	\end{tikzcd} \]
	They correspond to the fact that $(\smunitR[2], \smunitR)$ is compatible 
	with $\smftwo \comp (\ladj X \sm[2] \smfzero)$ and $\id$, and that 
	$(\smunitR[2], \smunitR)$ is compatible with 
	$(\radj X \sm \osmfzero) \comp \osmftwo$ and $\id$ in 
	the sense of \cref{def:mate-vertical-compatible}.
	But $\smftwo \comp (\ladj X \sm[2] \smfzero)$ and 
	$(\radj X \sm \osmfzero) \comp \osmftwo$ are mates 
	by compositionality of the mate construction and by \cref{prop:mate-product}.
	So by \cref{prop:mate-vertical-compatible}, those two diagrams are equivalent.
	The other equivalences of commutations follow a similar 
	principle, and we conclude that 
	$(\radj, \smfzero, \smftwo)$ is a lax monoidal structure
	if and only if $(\ladj, \osmfzero, \osmftwo)$ is an oplax monoidal
	structure.
	
	Furthermore, if $\radjtriple$ is a monad and $\ladjcomtriple$ is a comonad, 
	$\osmfzero = \comunit$ if and only if 
	$\smfzero = \munit$ because $\comunit$ and $\munit$ are 
	mates. Finally, the two diagrams of \cref{def:monoidal-monad}
	mean that $\smftwo$ is a distributive law. By 
	\cref{prop:mate-dl-comonad-monad}, $\smftwo$
	is a distributive law if and only if 
	$\osmftwo$ is a distributive law, and the associated diagrams 
	are exactly the ones turning $\ladjcomtriple$ into an
	oplax monoidal comonad.
\end{proof}

\subsection{Extension of an adjunction}
\label{sec:extension-adjunction}

In this section, we provide a direct proof
of a well known results on extensions (and
liftings) of adjunctions by relying on the theory of distributive laws and on the 
mate construction.
Let $\unit, \counit : \ladj \dashv \radj$ be an adjunction, with 
$\ladj : \cat \arrow \catbis$ and $\radj : \catbis \arrow \cat$. 
Let $\mtriple[1] = (\monad[1], \munit[1], \msum[1])$ be a monad 
on $\cat$, and $\mtriple[2] = (\monad[2], \munit[2], \msum[2])$ be a monad 
on $\catbis$. 

\begin{definition} \label{def:extension-adjunction}
	An extension of an adjunction $\unit, \counit : \ladj \dashv \radj$
 	to the Kleisli categories $\kleisli[1]$ and 
	$\kleislibis[2]$ is an adjunction
	$\extension{\unit}, \extension{\counit} : \ladjext \dashv \radjext$
	such that 
	$\ladjext : \kleisli[1] \arrow \kleislibis[2]$ and 
	$\radjext : \kleislibis[2] \arrow \kleisli[1]$ are extensions of 
	$\ladj$ and $\radj$ respectively, and such that 
	$\extension{\unit}, \extension{\counit}$ are extensions of 
	$\unit, \counit$.
\end{definition}

A similar definition can be found in~\cite{Keigher75} for the lifting to algebras, 
along with a counterpart of \cref{prop:extension-adjunction} below.

\begin{proposition}
	\label{prop:extension-adjunction}
	There is a bijection between extensions of 
	the adjunction $\unit, \counit : \ladj \dashv \radj$
	to $\kleisli[1]$ and $\kleislibis[2]$, and pairs 
	$(\dlL, \dlR)$ of distributive 
	laws $\dlL : \ladj \mtriple[1] \naturalTrans 
	\mtriple[2] \ladj$ and $\dlR : \radj \mtriple[2]
	\naturalTrans \mtriple[1] \radj$ such that the following diagrams commute.
	\begin{equation} \label{eq:extension-adjunction}
	\begin{tikzcd}
		{\monad[1]} && {\monad[1]} \\
		{\radj \ladj \monad[1]} & {\radj \monad[2] \ladj} & {\monad[1] \radj \ladj}
		\arrow["{\unit \monad[1]}"', from=1-1, to=2-1]
		\arrow["{\radj \dlL}"', from=2-1, to=2-2]
		\arrow["{\dlR \ladj}"', from=2-2, to=2-3]
		\arrow[Rightarrow, no head, from=1-1, to=1-3]
		\arrow["{\monad[1] \unit}", from=1-3, to=2-3]
	\end{tikzcd} \quad 
	\begin{tikzcd}
		{\ladj \radj \monad[2]} & {\ladj \monad[1] \radj} & {\monad[2] \ladj \radj} \\
		{\monad[2]} && {\monad[2]}
		\arrow["{\ladj \dlR}", from=1-1, to=1-2]
		\arrow["{\dlL \radj}", from=1-2, to=1-3]
		\arrow["{\monad[2] \counit}", from=1-3, to=2-3]
		\arrow["{\counit \monad[2]}"', from=1-1, to=2-1]
		\arrow[Rightarrow, no head, from=2-1, to=2-3]
	\end{tikzcd}
      \end{equation}
\end{proposition}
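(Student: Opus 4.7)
The plan is to combine three tools already developed in Section \ref{sec:monad-dl}: the bijection between extensions of functors and distributive laws (Theorem \ref{thm:extension-and-dl}), the bijection between extensions of natural transformations and morphisms of distributive laws (Theorem \ref{thm:extension-and-dl-morphism}), and the compositionality of distributive laws (Remark \ref{rem:dl-compositional}). First, by Theorem \ref{thm:extension-and-dl}, extensions $\ladjext$ of $\ladj$ correspond bijectively to distributive laws $\dlL : \ladj \mtriple[1] \naturalTrans \mtriple[2] \ladj$, and extensions $\radjext$ of $\radj$ correspond bijectively to distributive laws $\dlR : \radj \mtriple[2] \naturalTrans \mtriple[1] \radj$. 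So a pair $(\ladjext, \radjext)$ of functor extensions is exactly the data of a pair $(\dlL, \dlR)$.

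Next, by Remark \ref{rem:dl-compositional} the composite functors $\radjext \ladjext$ and $\ladjext \radjext$ are extensions of $\radj \ladj$ and $\ladj \radj$ whose associated distributive laws are, respectively, $\dlR \ladj \ntcomp \radj \dlL : \radj \ladj \mtriple[1] \naturalTrans \mtriple[1] \radj \ladj$ and $\dlL \radj \ntcomp \ladj \dlR : \ladj \radj \mtriple[2] \naturalTrans \mtriple[2] \ladj \radj$; furthermore the identity functors extend trivially, with associated distributive laws $\id_{\monad[1]}$ and $\id_{\monad[2]}$. Applying Theorem \ref{thm:extension-and-dl-morphism}, the natural transformation $\unit : \idfun^{\cat} \naturalTrans \radj \ladj$ extends to a natural transformation $\idfun^{\kleisli[1]} \naturalTrans \radjext \ladjext$ if and only if it is a morphism between $\id_{\monad[1]}$ and $\dlR \ladj \ntcomp \radj \dlL$. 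Unfolding Definition \ref{def:dl-morphism} yields exactly the left-hand diagram of (\ref{eq:extension-adjunction}). Symmetrically, $\counit$ extends if and only if it is a morphism of distributive laws from $\dlL \radj \ntcomp \ladj \dlR$ to $\id_{\monad[2]}$, giving the right-hand diagram.

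It remains to check that, whenever $\unit$ and $\counit$ extend, the pair $(\extension{\unit}, \extension{\counit})$ still satisfies the triangle identities and so is a genuine adjunction $\ladjext \dashv \radjext$. This is a direct computation. Using the explicit formula $\ladjext f = \dlL_Y \comp \ladj f$ of Theorem \ref{thm:extension-and-dl} and the definition $\extension{\unit} = \kleisliL[1] \unit$, one gets
\[
\ladjext \extension{\unit}_X = \dlL_{\radj \ladj X} \comp \ladj(\munit[1]_{\radj \ladj X} \comp \unit_X) = \munit[2]_{\ladj \radj \ladj X} \comp \ladj \unit_X
\]
by the unit axiom $\dlL \ntcomp \ladj \munit[1] = \munit[2] \ladj$ of the distributive law $\dlL$. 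Composing with $\extension{\counit}_{\ladj X}$ in $\kleislibis[2]$ and successively applying the monad triangle $\msum[2] \comp \monad[2] \munit[2] = \id$, naturality of $\munit[2]$, and the original adjunction triangle $\counit \ladj \comp \ladj \unit = \id$ collapses everything to $\munit[2]_{\ladj X}$, i.e.\ the identity of $\ladj X$ in $\kleislibis[2]$. The dual triangle is proved identically using $\dlR$.

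The main conceptual step is the correct identification, via Remark \ref{rem:dl-compositional}, of the composite distributive laws for $\radj \ladj$ and $\ladj \radj$ that $\unit$ and $\counit$ must match; once these are written down, both coherence diagrams of (\ref{eq:extension-adjunction}) are nothing but the morphism-of-distributive-laws condition. The triangle identities then come for free from the fact that $\kleisliL[1]$ and $\kleisliL[2]$ are functors and from the unit axioms of $\dlL$ and $\dlR$, so there is no substantial technical obstacle beyond this bookkeeping.
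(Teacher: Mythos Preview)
Your proof is correct and follows exactly the approach of the paper, which simply writes ``Direct application of \cref{thm:extension-and-dl,thm:extension-and-dl-morphism}.'' You have spelled out what that direct application consists of: the functor extensions correspond to $\dlL,\dlR$, the composite distributive laws for $\radj\ladj$ and $\ladj\radj$ are computed via \cref{rem:dl-compositional}, and the two diagrams of \eqref{eq:extension-adjunction} are precisely the morphism-of-distributive-laws conditions for $\unit$ and $\counit$.

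One minor remark: your explicit verification of the triangle identities is correct but unnecessary. Since $\kleisliL[2]$ is a functor and $\ladjext\,\kleisliL[1]=\kleisliL[2]\,\ladj$ by the extension property, applying $\kleisliL[2]$ to the original triangle $\counit_{\ladj X}\comp\ladj\unit_X=\id_{\ladj X}$ immediately yields $\extension{\counit}_{\ladjext X}\kleislicomp\ladjext\extension{\unit}_X=\id$, and similarly for the other triangle. So the adjunction axioms are automatic once $\unit$ and $\counit$ extend as natural transformations, which is presumably why the paper does not mention this step.
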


\begin{proof} 
	Direct application of \cref{thm:extension-and-dl,thm:extension-and-dl-morphism}.
\end{proof}

\begin{proposition} \label{prop:mate-inverse}
	Let $\dlL : \ladj \monad[1] \naturalTrans 
	\monad[2] \ladj$ and $\dlR : \radj \monad[2]
	\naturalTrans \monad[1] \radj$ be two natural transformations (we 
	do not assume that they are distributive laws). Then 
	the diagrams of \cref{eq:extension-adjunction} commute if and only if 
	$\dlR$ is the inverse of the mate $\dlLmate$ of $\dlL$:
	\[ \dlLmate = \begin{tikzcd}
			{\monad[1] \radj} & {\radj \ladj \monad[1] \radj} & {\radj \monad[2] \ladj \radj} & {\radj \monad[2]}
			\arrow["{\unit \monad[1] \radj}", from=1-1, to=1-2]
			\arrow["{\radj \dlL \radj}", from=1-2, to=1-3]
			\arrow["{\radj \monad[2] \counit}", from=1-3, to=1-4]
		\end{tikzcd} \]
\end{proposition}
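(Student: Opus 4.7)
The plan is to prove the biconditional by decoupling it into two separate equivalences: the first diagram of \eqref{eq:extension-adjunction} corresponds exactly to the equation $\dlR \ntcomp \dlLmate = \id_{\monad[1] \radj}$, while the second diagram corresponds to $\dlLmate \ntcomp \dlR = \id_{\radj \monad[2]}$. Together these say that $\dlR$ is a two-sided inverse of $\dlLmate$. The main ingredient is a mate bijection (at the level of ordinary natural transformations) that provides the injectivity we need to go backward from the inverse equations to the original diagrams.

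For the first equivalence, I would start from the definition of $\dlLmate$ and apply the naturality of $\dlR : \radj \monad[2] \naturalTrans \monad[1] \radj$ at $\counit$ to slide $\dlR$ past $\radj \monad[2] \counit$, obtaining
\[ \dlR \ntcomp \dlLmate = (\monad[1] \radj \counit) \ntcomp \Big((\dlR \ladj) \ntcomp (\radj \dlL) \ntcomp (\unit \monad[1])\Big) \radj. \]
Call the bracketed natural transformation $\alpha : \monad[1] \naturalTrans \monad[1] \radj \ladj$; the first diagram of \eqref{eq:extension-adjunction} is precisely the equation $\alpha = \monad[1] \unit$. Now the standard mate bijection for $\ladj \dashv \radj$, applied with $F = \monad[1]$ and $G = \monad[1] \radj$, gives a bijection $\text{Nat}(\monad[1], \monad[1] \radj \ladj) \cong \text{Nat}(\monad[1] \radj, \monad[1] \radj)$ sending $\alpha \mapsto (\monad[1] \radj \counit) \ntcomp (\alpha \radj)$; its inverse sends $\gamma$ to $(\gamma \ladj) \ntcomp (\monad[1] \unit)$. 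A routine triangle identity shows $\monad[1] \unit$ maps to $\id_{\monad[1] \radj}$. Hence $\dlR \ntcomp \dlLmate = \id$ iff $\alpha$ and $\monad[1] \unit$ have the same image, iff they are equal, iff diagram 1 holds.

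For the second equivalence, I would dually use the naturality of $\unit : \idfun \naturalTrans \radj \ladj$ at $\dlR$ to obtain
\[ \dlLmate \ntcomp \dlR = \radj\Big((\monad[2] \counit) \ntcomp (\dlL \radj) \ntcomp (\ladj \dlR)\Big) \ntcomp (\unit \radj \monad[2]). \]
Calling the bracketed natural transformation $\gamma : \ladj \radj \monad[2] \naturalTrans \monad[2]$, the second diagram of \eqref{eq:extension-adjunction} says $\gamma = \counit \monad[2]$. The symmetric mate bijection $\text{Nat}(\ladj \radj \monad[2], \monad[2]) \cong \text{Nat}(\radj \monad[2], \radj \monad[2])$ sends $\gamma \mapsto (\radj \gamma) \ntcomp (\unit \radj \monad[2])$ and sends $\counit \monad[2]$ to $\id$ by the other triangle identity, so again $\dlLmate \ntcomp \dlR = \id$ iff diagram 2 holds. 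The hardest part of the argument is bookkeeping the whiskerings and naturality squares cleanly enough to recognise these as instances of the mate bijection; once this is done, both equivalences are essentially formal consequences of the triangle identities combined with the injectivity (indeed, bijectivity) of the mate construction.
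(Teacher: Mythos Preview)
Your argument is correct and pairs the two diagrams with the two inverse equations in the same way the paper does. The paper's proof carries out essentially the same naturality rewrites you use (naturality of $\unit$ to slide past $\dlR$ for one composite, naturality of $\dlR$ to slide past $\counit$ for the other), but only spells out the forward direction via two direct diagram chases and then dismisses the converse with ``a similar computation''. Your packaging is tighter: by recognising that each composite $\dlR \ntcomp \dlLmate$ and $\dlLmate \ntcomp \dlR$ is precisely the image of the relevant diagram's content under an adjunction-induced bijection on natural transformations, you get both directions of each equivalence for free from injectivity, so the converse is not left as an exercise. What the paper's approach buys is that it avoids naming the auxiliary bijections and just chases; what yours buys is a clean explanation of \emph{why} the converse holds without repeating the chase.
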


\begin{proof} Assume that the diagrams of \cref{eq:extension-adjunction}
	commute. We show that $\dlR$ is the inverse of 
	$\dlLmate$ with the following diagram chase.
	\[ \begin{tikzcd}
		{\radj \monad[2]} & {} & {\radj \monad[2]} \\
		& {\radj \ladj \radj \monad[2]} \\
		{\monad[1] \radj} & {\radj \ladj \monad[1] \radj} & {\radj \monad[2] \ladj \radj}
		\arrow[Rightarrow, no head, from=1-1, to=1-3]
		\arrow["{\unit \radj \monad[2]}"', from=1-1, to=2-2]
		\arrow["{\radj \counit \monad[2]}"', from=2-2, to=1-3]
		\arrow["{\radj \ladj \dlR}"{description}, from=2-2, to=3-2]
		\arrow["\dlR"', from=1-1, to=3-1]
		\arrow["{\unit \monad[1] \radj}"', from=3-1, to=3-2]
		\arrow["{\radj \dlL \radj}"', from=3-2, to=3-3]
		\arrow["{\radj \monad[2] \counit}"', from=3-3, to=1-3]
		\arrow["{(a)}"{description}, draw=none, from=2-2, to=1-2]
		\arrow["{(b)}"{description}, draw=none, from=3-1, to=2-2]
		\arrow["{(d)}"{description}, draw=none, from=3-3, to=2-2]
	\end{tikzcd} \quad 
	\begin{tikzcd}
		{\monad[1] \radj} & {} & {\monad[1] \radj} \\
		& {\monad[1] \radj \ladj \radj} \\
		{\radj \ladj \monad[1] \radj} & {\radj \monad[2] \ladj \radj} & {\radj \monad[2]}
		\arrow["{\unit \monad[1] \radj}"', from=1-1, to=3-1]
		\arrow["{\radj \dlL \radj}"', from=3-1, to=3-2]
		\arrow["{\radj \monad[2] \counit}"', from=3-2, to=3-3]
		\arrow["\dlR"', from=3-3, to=1-3]
		\arrow[Rightarrow, no head, from=1-1, to=1-3]
		\arrow["{\monad[1] \unit \radj}"', from=1-1, to=2-2]
		\arrow["{\monad[1] \radj \counit}"', from=2-2, to=1-3]
		\arrow["{(a)}"{description}, draw=none, from=2-2, to=1-2]
		\arrow["{\dlR \ladj \radj}"{description}, from=3-2, to=2-2]
		\arrow["{(e)}"{description}, draw=none, from=3-1, to=2-2]
		\arrow["{(c)}"{description}, draw=none, from=3-3, to=2-2]
	\end{tikzcd} \]

Commutation $(a)$ is the triangle identities of the adjunction, 
$(b)$ is the naturality of $\unit$, $(c)$ is the naturality
of $\dlR$, and $(d)$ and $(e)$ are the diagram of 
\cref{eq:extension-adjunction}. 
The converse direction is a similar computation.
\end{proof}

We conclude this section with the following result. 
By compositionality of the mate 
construction, this result corresponds to a well known result on 
extension of adjunction, see Theorem 4 of~\cite{Johnstone75} for example.

\begin{theorem} \label{thm:extension-adjunction} 
	The adjunction $\unit, \counit : \ladj \dashv \radj$
	extends to the Kleisli categories $\kleisli[1]$ and 
	$\kleislibis[2]$ if and only if there exists 
	$\dlL : \ladj \monad[1] \naturalTrans 
	\monad[2] \ladj$ and $\dlR : \radj \monad[2]
	\naturalTrans \monad[1] \radj$ two natural transformations 
	such that $\dlL : \ladj \mtriple[1] \naturalTrans 
	\mtriple[2] \ladj$ is a distributive law and 
	$\dlR$ is the inverse of its mate.
\end{theorem}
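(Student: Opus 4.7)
The plan is to assemble this statement from the two preparatory results \cref{prop:extension-adjunction} and \cref{prop:mate-inverse}, together with the compatibility of the mate construction with distributive laws established in \cref{sec:compatibility-adjunction}. By \cref{prop:extension-adjunction}, extensions of the adjunction to the Kleisli categories are in bijection with pairs $(\dlL, \dlR)$ consisting of a distributive law $\dlL : \ladj \mtriple[1] \naturalTrans \mtriple[2] \ladj$ (giving the extension of $\ladj$) and a distributive law $\dlR : \radj \mtriple[2] \naturalTrans \mtriple[1] \radj$ (giving the extension of $\radj$) such that the two compatibility diagrams of \cref{eq:extension-adjunction} commute. By \cref{prop:mate-inverse}, these compatibility diagrams commute if and only if $\dlR$ is the two-sided inverse of the mate $\dlLmate$ of $\dlL$; in particular $\dlLmate$ is then forced to be invertible.

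Combining these two facts, an extension exists if and only if there is a distributive law $\dlL$ whose mate is invertible and a distributive law $\dlR = (\dlLmate)^{-1}$. It therefore remains only to verify that once $\dlL$ is a distributive law and $\dlLmate$ is invertible, the natural transformation $\dlR := (\dlLmate)^{-1}$ is automatically a distributive law, so that the statement of the theorem does not need to include this as a separate hypothesis. First, by the dual of \cref{prop:mate-dl-monad} (applied to the monads $\mtriple[1]$ on $\cat$ and $\mtriple[2]$ on $\catbis$ and the mate pair $\dlL$, $\dlLmate$), $\dlL$ being a distributive law is equivalent to $\dlLmate : \mtriple[1] \radj \naturalTrans \radj \mtriple[2]$ being a distributive law of the kind described in \cref{def:dl-em}. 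Second, the inverse of an invertible distributive law is itself a distributive law of the reversed shape: this is just a diagram chase showing that inverting all the arrows involving $\dlLmate$ in the defining commutative squares produces exactly the diagrams expressing that $(\dlLmate)^{-1}$ is a distributive law of the form \cref{def:dl} with $F = \radj$, source monad $\mtriple[2]$ and target monad $\mtriple[1]$.

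Putting everything together gives the announced bijection between extensions of the adjunction to $\kleisli[1]$, $\kleislibis[2]$ and distributive laws $\dlL : \ladj \mtriple[1] \naturalTrans \mtriple[2] \ladj$ with invertible mate, the corresponding $\dlR$ being recovered as $(\dlLmate)^{-1}$. The only non-bookkeeping step is the inversion argument in the previous paragraph, and even this is routine given the abstract machinery developed earlier; the real conceptual content of the theorem is concentrated in \cref{prop:extension-adjunction} and \cref{prop:mate-inverse}, which have already been proved.
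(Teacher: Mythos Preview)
Your proof is correct and follows essentially the same approach as the paper: combine \cref{prop:extension-adjunction} and \cref{prop:mate-inverse}, and use \cref{prop:mate-dl-monad} plus the observation that the inverse of an invertible distributive law is again a distributive law to conclude that $\dlR = (\dlLmate)^{-1}$ is automatically a distributive law. One small correction: you invoke ``the dual of \cref{prop:mate-dl-monad}'', but in fact \cref{prop:mate-dl-monad} applies directly (with $\lfun = \monad[1]$, $\rfun = \monad[2]$, $\lmate = \dlL$, $\rmate = \dlLmate$); no dualization is needed.
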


\begin{proof} By \cref{prop:mate-dl-monad}, the mate 
	$\dlLmate$ of $\dlL$ is a distributive law 
	$\mtriple[1] \radj \naturalTrans \radj \mtriple[2]$. 
	Since $\dlR$ is the inverse of $\dlLmate$, it must be a distributive 
	law $\dlR : \radj \mtriple[2]
	\naturalTrans \mtriple[1] \radj$. 
	Furthermore, by \cref{prop:mate-inverse} the diagrams of \cref{eq:extension-adjunction}
	hold, so by \cref{prop:extension-adjunction} the adjunction extends 
	to the Kleisli categories.	
\end{proof}

\section{The representable theory}

\label{sec:elementary}

We focus now on a more specific situation which is quite common in
models of LL, and where the $\Sigma$-summability structure \(\Sfun\) is equal 
to $\Dbimon \linarrow \_$, where $\Dbimon$ is a specific object, 
namely the \(\Nat\)-indexed cartesian
product of the unit \(\Sone\) of the tensor product, 
$\Dbimon = 1 \with 1 \with \cdots$.
We call the situation \emph{representable}, because of the similarity 
with the notion of representable tangent functor found in tangent categories,
see~\cite{Cockett14}. 
In this case, the  structure can be described very simply as a
\(\Oc\)-coalgebra structure on \(\Dbimon\).
We also provide three examples and one non-example of this
situation.

\subsection{Representably $\Sigma$-additive categories}

\begin{definition} \labeltext{($\exists \Dbimon$)}{ax:D-defined}
  A symmetric monoidal category $\categoryLL$ satisfies \ref{ax:D-defined} 
  if the countable cartesian product 
  \[ \Dbimon=\With_{i\in\Nat}\Sone
  =\underbrace{\Sone\With\Sone\With\cdots}_{\Nat} \] 
  exists in $\categoryLL$, and if 
  the internal hom $(\Dbimon \linarrow X, \ev_{\Dbimon})$ 
  exists in $\categoryLL$ for all objects $X$.
  The object \(\Dbimon\) is called \emph{the object of degrees} of \(\cL\),
  we motivate this terminology in \cref{rem:object-of-degrees}.
\end{definition}

The goal of this section is to endow 
$\categoryLL$ with a $\Sigma$-additive structure for which  
$\S = \Dbimon \linarrow \_$ is a $\Sigma$-summability structure.
We can define \emph{injections} into
\(\Dbimon\) by %
\(\Win_i=\Tuple{\Kronecker
  ji\Id}_{j\in\Nat}\in\cL(\Sone,\Dbimon)\) for $i \in \N$. 
  In other words \(\Win_i\) is characterized by
\begin{align*}
  \Wproj_j\Compl\Win_i=
  \begin{cases}
    \Id & \text{if }i=j\\
    0 & \text{otherwise.}
  \end{cases}
\end{align*}
We also define the diagonal 
$\Wdiag = \prodPairing<\N>{\id_1} \in \cL(1, \Dbimon)$.

Remember from \cref{sec:summability-closure-LL} 
that if \(f\in\cL(X_2,X_1)\) and if the
internal homs $(X_i \linarrow Y, \Evlin_{X_i})$ exist for \(i=1,2\), then it is
possible to define \(f\Limpl Y \in\cL(X_1\Limpl Y,X_2 \Limpl Y)\),
defining a functor $\_ \linarrow Y: \Op\cL \to\cL$.
Explicitly,
$f \linarrow Y = \cur_{X_2}(\ev_{X_1} ((X_1 \linarrow Y) \tensor f)) \in 
\categoryLL(X_1 \linarrow Y, X_2 \linarrow Y)$. 
Intuitively,
this functor maps a morphism 
$h \in X_1 \linarrow Y$ to $h \compl f \in X_2 \linarrow Y$.
This intuition corresponds to the following equation: for all
$h \in \categoryLL(Z \tensor X_1, Y)$,
\begin{equation} \label{eq:left-closure}
  (f \linarrow Z) \compl \cur_{X_1}(h) = \cur_{X_2}(h \compl (Z \tensor f)) \, .
\end{equation}

Notice that \((\Sone\Limpl X,\Evlin_1)\) always exists: we can take
\(\Sone\Limpl X=X\), \(\Evlin_1=\Rightu\) and 
$\cur_1(f) = f \compl \tensorUnitR^{-1}$.
So we can define the natural
transformations 
\begin{equation} 
  \Sproj_i=(\Win_i\Limpl X)\in\cL(\Sfun X,X) 
  \quad \quad
  \Ssum = (\Wdiag \Limpl X)\in\cL(\Sfun X,X) \, .
\end{equation}

\begin{remark} \label{rem:projection-sum-alternative}
  Unfolding the definitions of $\Sproj_i$ and $\Ssum$, we get that 
  \begin{equation} \label{eq:projection-elementary-def}
  \Sproj_i = \begin{tikzcd}[column sep = large]
    {\Dbimon \linarrow X} & {(\Dbimon \linarrow X) \tensor 1} & {(\Dbimon \linarrow X) \tensor \Dbimon} & X
    \arrow["{\tensorUnitR^{-1}}", from=1-1, to=1-2]
    \arrow["{(\Dbimon \linarrow X) \tensor \Sprojl_i}", from=1-2, to=1-3]
    \arrow["{\ev_{\Dbimon}}", from=1-3, to=1-4]
  \end{tikzcd}
  \end{equation}
  \begin{equation} 
  \Ssum = \begin{tikzcd}[column sep = large]
    {\Dbimon \linarrow X} & {(\Dbimon \linarrow X) \tensor 1} & {(\Dbimon \linarrow X) \tensor \Dbimon} & X
    \arrow["{\tensorUnitR^{-1}}", from=1-1, to=1-2]
    \arrow["{(\Dbimon \linarrow X) \tensor \Wdiag}", from=1-2, to=1-3]
    \arrow["{\ev_{\Dbimon}}", from=1-3, to=1-4]
  \end{tikzcd}
  \end{equation}
\end{remark}

It follows from \cref{eq:left-closure} and definition of $\cur_1$ that for all 
$f \in \categoryLL(X \tensor \Dbimon, Y)$, 
\begin{align}
  \Sproj_i \compl \cur_{\Dbimon}(f) &= f \compl (X \tensor \Sprojl_i) \compl \tensorUnitR^{-1}
  \label{eq:projection-elementary} \\
  \Ssum \compl \cur_{\Dbimon}(f) &= f \compl (X \tensor \Wdiag) \compl \tensorUnitR^{-1} \, .
  \label{eq:sum-elementary}
\end{align}

\begin{example} \label{ex:summability-wrel}
  Let us give some intuition on this structure using the weighted relational 
  model $\WREL[\Rbar]$ as an example.
  The objects of $\WREL[\Rbar]$ are countable sets, and a morphism of 
  $\WREL[\Rbar](E, F)$ is 
  a linear map from $\Rbar^E$ to $\Rbar^F$, see \cref{sec:wrel}
  for more details.
  The object $1$ is the singleton set $\{ * \}$. 
  There is a bijection between linear maps 
  $\Rbar \arrow \Rbar^E$ and elements of $\Rbar^E$, so a morphism 
  in $\WREL[\Rbar](1, \Rbar^E)$ is the same as an element of $\Rbar^E$.
  
  The object $\Dbimon = 1 \with 1 \with \cdots$ is the set $\N$.
  Thus, an element $\phi \in \Dbimon \linarrow X$
  corresponds to a linear map $\phi : \Rbar^\N \arrow X$.
  Observe that $\Sprojl_i = e_i$ where $e_i \in \Rbar^{\N}$ is such that 
  $e_{i, i} = 1$ and $e_{i, j} = 0$ if $j \neq i$. 
  By linearity, 
   \[ \phi \sequence{\lambda_i} = \psum \lambda_i \phi(e_i) \, . \]
  It means that $\phi$ is completely determined by its values on 
  the $\{e_i\}_{i \in \N}$. Thus, the applications 
  $\Sproj_i = \Sprojl_i \linarrow X$ that map 
  $\phi$ to $\phi(e_i)$ are jointly monic,
  and any $\phi \in \Dbimon \linarrow X$ uniquely corresponds to a family 
  $\sequence{x_i} \in \R^\N$ given by $x_i = \phi(e_i)$. Then 
  \[ \phi(1, 1, 1, \ldots) = \psum \phi(e_i) = \psum x_i\] 
  so the sum of the $x_i$ is obtained by evaluating $\phi$ 
  at $(1, 1, 1, \ldots) = \Wdiag$. This is precisely the role of 
  the morphism $\Ssum = \Wdiag \linarrow X$.
  The sum on morphisms is then defined as the point-wise sum.
  This example is a degenerate one in which all families
  are summable.
\end{example}
\begin{example} \label{ex:summability-pcoh}
  The same reasoning as the one above works in the category 
  $\PCOH$ of probabilistic coherent spaces.
  The objects of this category are called probabilistic coherent 
  spaces, and they consist in pairs 
  $E = (\Web E, \Pcoh E)$ where $\Web E$ is a countable set called the web and 
  $\Pcoh(E)$ is a subset of $\Rbar^{\Web E}$ 
  satisfying some properties, we refer the reader to \cref{sec:pcoh}. 
  A morphism in $u \in \PCOH(E, F)$ is a linear map $\Rbar^{\Web E} \arrow \Rbar^{\Web F}$ 
  such that for all $x \in \Pcoh(E)$, $u \cdot x \in \Pcoh(F)$. 

  Probabilistic coherence spaces notably include the spaces of subprobability
  distributions over countable sets.
  Given any countable set $A$, there exists a PCS $\probapcs A$ 
  such that $\Web {\probapcs A} = A$ and 
  $\Pcoh {\probapcs A} = \Proba(A)$, where 
  \[ \Proba(A) = \{x \in \Rbar^A \St \sum_{a \in A} x_a \leq 1\} \]
  is the set of sub-probability distributions on $A$. In fact, probabilistic 
  coherence spaces are a fully abstract model of a PCF language 
  extended with a probabilistic construct, see~\cite{Ehrhard14}.
  
  The object $1$ is the PCS $\probapcs {\{*\}}$ of subprobability distributions 
  over the singleton set, and the object $\Dbimon$ verifies
  $\Web \Dbimon = \N$ and $\Pcoh \Dbimon = [0,1]^{\N}$. 
  The same reasoning as the previous example applies, and
  there is a bijection between linear maps $\phi \in \Dbimon \linarrow 
  E$ and families $\sequence{x_i \in \Rbar^E}$ such that 
  $\sum_{i \in \N} x_i \in \Pcoh E$.
  For example, a family $\sequence{x_i} \in \Proba(A)^{\N}$
  of elements of $\probapcs A$ is summable if 
  $\sum_{i \in \N} x_i$ is still a subprobability distributions on $A$.
  Then, the sum of morphisms is the point-wise sum: 
  a family of morphism $\sequence{f_i} \in \PCOH(E, F)^{\N}$ is summable 
  if for all $x \in \Pcoh(E)$, $\sum_{i \in \N} f_i \cdot x \in \Pcoh(F)$.
  \end{example}

  \begin{remark} \label{rem:object-of-degrees}
    The representable theory pushes the analogy between $\S$ and 
    formal power series (see \cref{sec:bimonad}) further.
    Any formal power series $\sum_{n \in \N} f_n t^n$ with coefficients 
    in $\categoryLL(X, Y)$ can also been seen as a function 
    $P : \N \arrow \categoryLL(X, Y)$, that maps 
    $n \in \N$ to the coefficient $P(n) = f_n$.
    
    As we saw in \cref{ex:summability-wrel,ex:summability-pcoh},
    $\Dbimon$ behaves very similarly to $\N$. 
    Any integer $n \in \N$ can be seen as the element
    $\Sprojl_n \in \categoryLL(1, \Dbimon)$.
    Then, any $f \in \categoryLL(X, \Dbimon \linarrow Y)$
    can be seen as a power series $P : \N \arrow \categoryLL(X, Y)$,
    with a clear analogue between the evaluation of $P$ on 
    $n \in \N$, and the projection $\Sproj_n \compl f$
    that evaluates $f$ on the element
    $\Sprojl_n$.
    This explains why $\Dbimon$ is called the \emph{object of degrees}
    in \cref{ax:D-defined}.
    This analogy is further motivated by the comonoid structure of 
    $\Dbimon$ introduced in \cref{sec:bimonoid}. 

    Notice however that $\Dbimon$ is not the datatype of natural 
    numbers, which is traditionally given by the object $\Bplus_{i \in \N} 1$. 
    The difference is crucial and captures the fact that all the power 
    series considered must have a summable support.
    For any subset $I \subseteq \N$, the sum $\sum_{i \in I} f_i$ is given by 
    evaluating $f$ on the morphism $\Sprojl_I \in \categoryLL(1, \Dbimon)$ 
    such that $\prodProj_i \compl \Sprojl_I = \id$ if $i \in I$, and 
    $\prodProj_j \compl \Sprojl_I = 0$ if $j \notin I$.
    The existence of such $\Sprojl_I$ for $\size{I} > 1$ crucially relies on 
    the fact that $\Dbimon = \withFam<\N> 1$. 
  \end{remark}

\begin{lemma}\label{lemma:curry-epic-monic}
  Assume that \((X,Y)\) and \((X',Y)\) have internal homs and let %
  \(\Vect f\in\cL(X',X)^I\).
  Then the morphisms \((Z \Times f_i)_{i\in I}\) are jointly epic for all 
  objects $Z$ if and only if the morphisms
  \((f_i\Limpl Y\in\cL(X\Limpl Y,X'\Limpl Y))_{i\in I}\) are jointly
  monic.
\end{lemma}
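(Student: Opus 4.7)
The plan is to leverage the currying bijection $\cur_X : \cL(W \tensor X, Y) \to \cL(W, X \linarrow Y)$ (a natural isomorphism with inverse $\uncur_X$) and translate both conditions into equivalent statements on the two sides. The key computational step is the identity
\begin{equation*}
\uncur_X((f_i \linarrow Y) \compl g) = \uncur_X(g) \compl (W \tensor f_i)
\end{equation*}
for every $g \in \cL(W, X \linarrow Y)$ and every $i \in I$. This is obtained by writing $g = \cur_X(\uncur_X(g))$, applying \cref{eq:left-closure} to compute $(f_i \linarrow Y) \compl \cur_X(\uncur_X(g)) = \cur_{X'}(\uncur_X(g) \compl (W \tensor f_i))$, and then uncurrying the result. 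Conceptually, it expresses the contravariant naturality of the hom-bijection in its inner argument, and is the only nontrivial calculation involved in the proof.

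Armed with this identity, both directions of the equivalence follow at once from the bijectivity of $\cur_X$. For the forward direction, suppose $(Z \tensor f_i)_i$ is jointly epic for every $Z$, and let $g, h : W \to X \linarrow Y$ satisfy $(f_i \linarrow Y) \compl g = (f_i \linarrow Y) \compl h$ for each $i$. Uncurrying both sides via the identity yields $\uncur_X(g) \compl (W \tensor f_i) = \uncur_X(h) \compl (W \tensor f_i)$ for each $i$; joint epicness applied with $Z = W$ forces $\uncur_X(g) = \uncur_X(h)$, hence $g = h$. The backward direction is strictly dual: given $g', h' : Z \tensor X \to Y$ equalised by all the $Z \tensor f_i$, currying produces parallel maps $\cur_X(g'), \cur_X(h') : Z \to X \linarrow Y$ equalised by all the $f_i \linarrow Y$, so joint monicity forces $\cur_X(g') = \cur_X(h')$ and hence $g' = h'$.

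The only subtle point, and the closest thing to an obstacle, is that in the backward direction the argument produces joint epicness of $(Z \tensor f_i)$ only against parallel pairs into $Y$ rather than into an arbitrary codomain, since the internal hom is assumed to exist only at $Y$. This is not a real difficulty: the statement of the lemma should be read with this implicit restriction on the target, which is the only one that makes sense given the hypotheses and is precisely the form in which the lemma will be used in the representable setting of \cref{sec:elementary}.
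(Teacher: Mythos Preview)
Your proof is correct and follows essentially the same approach as the paper: both reduce the equivalence to the identity coming from \cref{eq:left-closure} together with the bijectivity of currying. Your observation about the backward direction only yielding joint epicness against the fixed codomain $Y$ is apt and applies equally to the paper's proof; note also the minor slip that $\uncur_X((f_i \linarrow Y) \compl g)$ should read $\uncur_{X'}$, since $(f_i \linarrow Y) \compl g$ lands in $X' \linarrow Y$.
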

\begin{proof}
  Recall from \cref{eq:left-closure} that for any $g \in \categoryLL(Z \tensor X, Y)$, 
  \[ (f_i \linarrow Y) \cur_X(g) = \cur_{X'} (g \compl (Z \tensor f_i)) \, . \]
  The equivalence immediately follows from the fact that $\cur$ is a bijection. 
\end{proof}

\begin{corollary} \label{prop:monic-epic}
  The $\Sproj_i$ are jointly monic if and only if for all object $X$, the 
  $X \tensor \Sprojl_i$ are jointly epic.
\end{corollary}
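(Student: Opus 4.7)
The plan is to invoke Lemma~\ref{lemma:curry-epic-monic} directly, after a straightforward instantiation of its parameters. Recall from the definition just before the corollary that $\Sproj_{i,X} = (\Win_i \Limpl X)$ where $\Win_i \in \cL(\Sone, \Dbimon)$. So the family $(\Sproj_i)_{i \in \Nat}$ is, by definition, exactly the image of $(\Win_i)_{i \in \Nat}$ under the contravariant functor $\_ \Limpl X$.

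Concretely, I would apply Lemma~\ref{lemma:curry-epic-monic} with $X' := \Sone$, $X := \Dbimon$ (so that the assumption that $(X', Y)$ and $(X, Y)$ have internal homs becomes the assumption that $(\Sone, X)$ and $(\Dbimon, X)$ have internal homs, both of which are granted by \ref{ax:D-defined} together with the trivial fact $\Sone \Limpl X = X$), target object $Y := X$ (to match the statement of the corollary, after a harmless renaming to avoid the clash between the ambient object $X$ and the parameter $X'$ of the lemma), and $f_i := \Win_i$. The lemma then gives that the morphisms $(\Win_i \Limpl X)_{i \in \Nat} = (\Sproj_i)_{i \in \Nat}$ are jointly monic if and only if, for every object $Z$, the morphisms $(Z \tensor \Win_i)_{i \in \Nat}$ are jointly epic. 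Since $Z$ is universally quantified, renaming $Z$ to $X$ in the statement yields the corollary.

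There is no real obstacle here: the entire content of the corollary is already in the lemma, and the work is purely bookkeeping. The only mild care needed is not to confuse the two roles played by the symbol $X$ in the corollary's statement (once as the codomain object of $\Sproj_i$, once as the test object for joint epicity of $X \tensor \Sprojl_i$); a brief sentence in the proof making this renaming explicit is enough.

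\begin{proof}
Apply Lemma~\ref{lemma:curry-epic-monic} with $X' = \Sone$, $X = \Dbimon$, target object $Y = X$, and $f_i = \Win_i \in \cL(\Sone, \Dbimon)$. The internal homs $(\Sone \Limpl X, \Evlin_\Sone)$ and $(\Dbimon \Limpl X, \Evlin_\Dbimon)$ both exist (the former trivially, the latter by \ref{ax:D-defined}), so the hypotheses are satisfied. Since $\Sproj_i = (\Win_i \Limpl X)$ by definition, the lemma yields the equivalence
\[
\text{$(\Sproj_i)_{i \in \Nat}$ jointly monic} \iff \forall Z,\ \text{$(Z \tensor \Win_i)_{i \in \Nat}$ jointly epic,}
\]
which is the claim (with $Z$ renamed to $X$).
\end{proof}
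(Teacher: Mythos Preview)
Your proposal is correct and matches the paper's approach: the corollary is stated immediately after Lemma~\ref{lemma:curry-epic-monic} with no separate proof, precisely because it is the instantiation you describe (take $X'=\Sone$, $X=\Dbimon$, $f_i=\Win_i$, and let $Y$ range over all objects, using that \ref{ax:D-defined} guarantees the required internal homs). Your bookkeeping about the two roles of the symbol $X$ is accurate and the renaming is harmless since the right-hand side of the lemma's equivalence does not depend on the choice of $Y$.
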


We now assume that one of the equivalent assumption of \cref{prop:monic-epic}
hold.
For any \(\Vect f=(f_i\in\cL(X,Y))_{i\in\Nat}\), the 
existence of $\Stuple{\Vect f}$ such that $\Sproj_i \compl \Stuple{\Vect f}
= f_i$ is equivalent by \cref{eq:projection-elementary} to the existence of 
an \(h\in\cL(X\Times\Dbimon,Y)\) such that, for all \(i\in\Nat\), the
following diagram commutes
\begin{equation}
  \begin{tikzcd}
    X\Times\Sone\ar[r,"X\Times\Win_i"]\ar[d,swap,"\Rightu"]
    & X\Times\Dbimon\ar[d,"h"]\\
    X\ar[r,"f_i"]
    &Y
  \end{tikzcd}
\end{equation}
\begin{notation} 
We set \(\Stuplet{\Vect f}=h\) since this \(h\) is unique when it
exists by the fact that the \(X \tensor \Win_i\)'s are jointly epic, so that
\(\Stuple{\Vect f}=\Curlin\Stuplet{\Vect f}\).
\end{notation}
By \cref{eq:left-closure} again, we have the equality 
\begin{equation} \label{eq:sum-alternative}
  \Ssum \compl \Stuple{\Vect f}
  =\Stuplet{\Vect f}\Compl(X\Times\Wdiag)\Compl\Inv{\Rightu} \, .
\end{equation}
So $\Stuplet{\Vect f}\Compl(X\Times\Wdiag)\Compl\Inv{\Rightu}$
provides a nice candidate for a suitable notion of sum.

\begin{proposition} \label{lemma:unique-sum-representable}
  Assume that the $X \tensor \Sprojl_i$ are jointly epic.
  For all objects $X, Y$, there exists at most one $\Sigma$-monoid 
  structure on $\categoryLL(X, Y)$ such that a $\N$-indexed family
  $\vect f = \sequence{f_i \in \categoryLL(X, Y)}$ is summable if and only if 
  $\Stuplet{f}$ exists, and such that 
  \[ \sum_{i \in \N} f_i = \Stuplet{\vect f} \compl (X \tensor \Wdiag) \compl 
  \tensorUnitR_X^{-1} \, . \]
\end{proposition}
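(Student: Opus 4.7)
The plan is to show that any $\Sigma$-monoid structure satisfying the stated condition is completely determined by its values on $\N$-indexed families, and that these $\N$-indexed values are themselves determined by the given formula. Since a $\Sigma$-monoid structure on $\categoryLL(X,Y)$ is fixed once we know, for every countable $A$ and every $A$-indexed family, whether it is summable and what its sum is, this will establish uniqueness.

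First, I would observe that the condition in the statement directly pins down summability and sum on $\N$-indexed families: a family $\vect f = \sequence{f_i}$ is summable exactly when $\Stuplet{\vect f}$ exists (this is a property formulated purely in $\categoryLL$, independent of the $\Sigma$-monoid structure one is considering, thanks to the joint epimorphism property of \cref{prop:monic-epic}), and in that case $\sum_{i\in\N} f_i$ equals the explicit formula. Hence any two $\Sigma$-monoid structures satisfying the condition must agree on $\N$-indexed families.

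Next, I would lift this agreement to an arbitrary countable indexing set $A$. Pick any injection $\phi:A\injection\N$ (one exists since $A$ is countable). \Cref{prop:reindexing}, which is a consequence of \ref{ax:unary} and \ref{ax:pa} and therefore holds in any $\Sigma$-monoid, tells us that $\family{f_a}$ is summable if and only if $\Famact{\phi}\vect f$ is summable, and that in that case the two sums coincide. Since $\Famact{\phi}\vect f$ is an $\N$-indexed family, both its summability and its sum are fixed by the previous step. Consequently the summability and the sum of $\family{f_a}$ are the same under any two $\Sigma$-monoid structures satisfying the condition, proving uniqueness.

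The only mildly subtle point is the legitimacy of reindexing: we are applying \cref{prop:reindexing} to compare two a priori distinct $\Sigma$-monoid structures, but the reindexing lemma is a theorem about each $\Sigma$-monoid structure separately, so it applies to each of the two candidates and yields the same $\N$-indexed reduction in each. There is no real obstacle; essentially the whole content of the statement is that the candidate formula for the sum, together with the representation of witnesses by $\Stuplet{\cdot}$, is already strong enough on $\N$-indexed families to propagate by reindexing to all countable families.
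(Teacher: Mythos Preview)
Your proposal is correct and follows exactly the paper's approach: the paper's proof is a single sentence invoking \cref{prop:reindexing} to reduce arbitrary countable families to $\N$-indexed ones, and you have spelled this out in more detail. Your identification of the ``mildly subtle point'' about applying reindexing separately in each candidate structure is apt, and the paper glosses over it just as you do.
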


\begin{proof}
  By \cref{prop:reindexing}, a $\Sigma$-monoid
  is completely determined by the summability over the 
  $\N$-indexed family, since any countable set admits an injection 
  into $\N$.
\end{proof}

\Cref{lemma:unique-sum-representable} above motivates the following definition.

\begin{definition} \label{def:RS}
A symmetric monoidal category $\categoryLL$ is 
\emph{representably $\Sigma$-additive} if it satisfies 
\ref{ax:D-defined} and if:
\labeltext{(RS-mon)}{ax:RS-mon}
\labeltext{(RS-epi)}{ax:RS-epi}
\labeltext{(RS-sum)}{ax:RS-sum}
\labeltext{(RS-witness)}{ax:RS-witness}
\begin{itemize}[align=center]
  \item[\ref{ax:RS-epi}] The morphisms $(X \tensor \Sprojl_i)$ are jointly epic for all $X$.
  \item[\ref{ax:RS-mon}] For all objects $X, Y$,
  there exists a $\Sigma$-monoid on $\categoryLL(X, Y)$,
  and the $0$'s of the $\Sigma$-monoids are zero morphisms and 
  are absorbing for the monoidal product ($f \compl 0 = 0$, $0 \compl g = 0$,
  $0 \tensor g = 0$, $f \tensor 0 = 0$ for all $f, g$).
  \item[\ref{ax:RS-sum}] A family $\vect f = \sequence{f_i}$ is summable if and only if 
  $\Stuplet{\vect f}$ exists, and then $\sum_{i \in \N} f_i = \Stuplet{\vect f}
  \compl (X \tensor \Wdiag) \compl \tensorUnitR^{-1}$.
  \item[\ref{ax:RS-witness}]
  If $\family{h_a \in \categoryLL(X \tensor \Dbimon, Y)}$ is such that 
  $\family<A \times \N>[(a, j)]{h_a \compl (X \tensor \Sprojl_j) \compl \tensorUnitR^{-1}}$
  is summable, then $\family{h_a}$ is summable.
\end{itemize}
\end{definition}

\Cref{def:RS} above may seem a bit unintuitive, but is very useful 
in practice.
Observe that this definition consists in a property on 
symmetric monoidal categories, not an additional 
structure: the $\Sigma$-monoid structure given by 
\ref{ax:RS-mon} is uniquely characterized by \ref{ax:RS-sum} and 
\cref{lemma:unique-sum-representable}.
Thus, it is direct to check whenever a category is representably
$\Sigma$-additive, even if the $\Sigma$-additive structure is not known
beforehand. It only suffices to unfold the definition of $\Dbimon$,
and to check if the induced notion of sum on $\N$-indexed family
yields a $\Sigma$-monoid.
For example, the $\Sigma$-additive structure of coherence spaces and 
non-uniform coherence spaces
is not straightforward and was discovered in that way.

\begin{remark} 
  \label{rk:elementary-closed-win-epicity}
  Upon taking \(X=\Sone\), the condition \ref{ax:RS-epi} implies that the
  \((\Win_i)_{i\in\Nat}\) are jointly epic. Conversely, if 
  the symetric monoidal category \(\cL\) is closed, the joint epicity of the
  \((\Win_i)_{i\in\Nat}\) implies \ref{ax:RS-epi}.
  Assume indeed that \(\cL\) is closed and that the
  \((\Win_i)_{i\in\Nat}\) are jointly epic.
  Let \(f,g\in\cL(\Tens X\Dbimon,Y)\) be such that %
  \((f\Compl\Tensp X{\Win_i})=g\Compl\Tensp X{\Win_i}_{i\in\Nat}\).
  By naturality of \(\Sym\), we get %
  \((f\Compl\Sym\Compl\Tensp {\Win_i}X)
  =g\Compl\Sym\Compl\Tensp {\Win_i}X)_{i\in\Nat}\) and hence %
  \((\Curlin(f\Compl\Sym)\Compl\Win_i
  =\Curlin(g \Compl\Sym)\Compl\Win_i)_{i\in\Nat}\) %
  so that \(\Curlin(f\Compl\Sym)=\Curlin(g\Compl\Sym)\) and hence %
  \(f=g\).
\end{remark}

We show that representably $\Sigma$-additive categories 
are always $\Sigma$-additive symmetric monoidal categories, 
with a summabibility structure given by $(\Dbimon \linarrow \_, \vect \Sproj)$, 
and that the sum is always compatible with the closedness and the cartesian 
product whenever those exist.
We rely on the following lemma that gives sufficient conditions to 
build a $\Sigma$-additive structure from a functor (this lemma will be applied 
on the functor $\Dbimon \linarrow \_$).

\begin{lemma} \label{prop:summability-structure-characterization} 
  Let $\categoryLL$ be a category,
  $\S$ an endofunctor, and $\sequence{\Sproj_i \in \category(\S X, X)}$ 
  a $\N$-indexed family of morphisms for all objects $X$
  such that: 
  \begin{enumerate}
    \item $\categoryLL(X, Y)$ is a $\Sigma$-monoid for all objects $X, Y$ and 
    the $0$'s of the $\Sigma$-monoids are zero morphisms ($f \compl 0 = 0$
    and $0 \compl g = 0$ for all $f, g$);
    \item $(\S, \vect \Sproj)$ satisfies
    \ref{def:summability-structure-1},\ref{def:summability-structure-2} and
    additionally, $\sum_{i \in \N} f_i = \left( \sum_{i \in \N}{\Sproj_i} \right) 
    \compl \Spairing{f_i}$;
    \label{prop:summability-structure-characterization-witness} 
    \item $\S$ has a functorial action such that the $\Sproj_i$'s and 
    $\sum_{i \in \N} \Sproj_i$ are natural transformations.
   \end{enumerate}
   Then $\categoryLL$ is a $\Sigma$-additive category.
   Furthermore, if $(\S, \vect \Sproj)$ satisfies 
   \ref{def:summability-structure-4} then $(\S, \vect \Sproj)$  
   is a $\Sigma$-summability structure.  
\end{lemma}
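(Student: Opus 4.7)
The plan is to build the proof around the naturality of $\Sproj_i$ and $\Ssum=\sum_i\Sproj_i$, which together imply $\Sproj_i\comp\S h=h\comp\Sproj_i$ and $\Ssum\comp\S h=h\comp\Ssum$. Once these are available, the witness-based axioms \ref{def:summability-structure-1}--\ref{def:summability-structure-2} combined with the explicit sum formula in assumption (2) make both distributivity properties fall out by direct computation. I would first do everything for $\Nat$-indexed families and then transfer to arbitrary countable indexing sets via reindexing.

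The concrete steps I would follow are these. First, to check left distributivity for an $\Nat$-indexed summable family $\sequence{g_i}$ and any $f\in\cL(X,Y)$: observe that $\Sproj_i\comp(\Spairing{g_i}\comp f)=g_i\comp f$, so by \ref{def:summability-structure-2} the family $\sequence{g_i\comp f}$ is summable with witness $\Spairing{g_i}\comp f$, and assumption (2) gives
\[
\sum_{i}(g_i\comp f)=\Ssum\comp\Spairing{g_i}\comp f=\left(\sum_{i}g_i\right)\comp f.
\]
Second, to prove that every $h\in\cL(Y,Z)$ is $\Sigma$-additive: the naturality of $\Sproj_i$ gives $\Sproj_i\comp\S h=h\comp\Sproj_i$, so $\S h$ is a witness for $\sequence{h\comp\Sproj_i}$, whence by assumption (2) and naturality of $\Ssum$,
\[
\sum_i(h\comp\Sproj_i)=\Ssum\comp\S h=h\comp\Ssum.
\]
For a summable family $\sequence{f_i}$ with witness $\Spairing{f_i}$, one then checks that $\S h\comp\Spairing{f_i}$ is a witness for $\sequence{h\comp f_i}$ (again by naturality of $\Sproj_i$), so this family is summable and
\[
\sum_i(h\comp f_i)=\Ssum\comp\S h\comp\Spairing{f_i}=h\comp\Ssum\comp\Spairing{f_i}=h\comp\sum_i f_i.
\]
Third, \ref{def:summability-structure-3} is immediate: the $\Sproj_i$ are morphisms, hence $\Sigma$-additive by the previous step.

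To lift the proof from $\Nat$-indexed families to arbitrary countable ones, I would invoke \cref{prop:reindexing} together with the assumption that $0$'s are zero morphisms and are absorbing. Given a summable $\family{f_a}$ and an injection $\phi\colon A\injection\Nat$, \cref{prop:summability-alt} supplies a morphism $\Sfamily{\vect f}{\phi}$ whose $i$-th projection is $f_a$ when $i=\phi(a)$ and $0$ otherwise; applying the $\Nat$-indexed argument to $\Famact{\phi}{\vect f}$ and using $h\comp 0=0$ transfers both distributivity equalities to the general case. The main (mild) obstacle is precisely this bookkeeping: one has to be careful that the zero coefficients produced by reindexing remain zero after composition with $h$, which is exactly what the zero-morphism hypothesis in (1) guarantees. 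Finally, if \ref{def:summability-structure-4} is also assumed, then $(\S,\vect\Sproj)$ satisfies all four axioms of \cref{def:pre-presummability-structure}, so it is a $\Sigma$-summability structure on the now $\Sigma$-additive category $\cL$.
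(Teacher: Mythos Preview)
Your proposal is correct and follows essentially the same approach as the paper: establish the $\Nat$-indexed case by manipulating witnesses via naturality of $\Sproj_i$ and $\Ssum$, then transfer to arbitrary countable families by reindexing and the zero-absorption hypothesis. The only cosmetic difference is that the paper performs the reindexing up front (working directly with $\Famact{\phi}{\vect g}$) rather than proving the $\Nat$-case first and lifting afterwards, but the core computations are identical.
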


\begin{proof}
  We first show left distributivity. Assume that $\family{g_a \in \categoryLL(Y, Z)}$ 
is summable. Let $\phi : A \arrow \N$ be an injection. By 
\cref{prop:reindexing}, $\sequence{g_i'} \defEq 
\Famact \phi {(\family{g_a})}$ is summable so 
\cref{prop:summability-structure-characterization-witness} ensures that $\Spairing{g_i'}$ exists.
Furthermore, for any $f \in \categoryLL(X, Y)$,
\[ \left(\sum_{a \in A} g_a \right) \compl f = 
\left( \sum_{i \in \N} g_i' \right) \compl f 
= \left(\sum_{j \in \N} \Sproj_j \right) \compl \Spairing{g_i'} \compl f 
= \left(\sum_{j \in \N} \Sproj_j \right) \compl \Spairing{g_i' \compl f} \]
so by \cref{prop:summability-structure-characterization-witness}, 
$\sequence{g_i' \compl f}$ is summable with sum 
$\left(\sum_{a \in A} g_a \right) \compl f$.
But $\sequence{g_i' \compl f} = \phi^* (\family{g_a \compl f})$ using the 
fact that $0 \compl f = 0$. Thus, by \cref{prop:reindexing}, 
$\family{g_a \compl f}$ is summable with sum 
$\left(\sum_{a \in A} g_a \right) \compl f$.

Then we show that every morphism $g \in \categoryLL(Y, Z)$ is $\Sigma$-additive. 
Let $\family{f_a \in \categoryLL(X, Y)}$ be a summable family.
By \cref{prop:reindexing}, $\sequence{f_i'} \defEq 
\phi^* (\family{f_a})$ is summable so 
\cref{prop:summability-structure-characterization-witness} ensures that $\Spairing{f_i'}$ exists.
Then, 
\[ \Sproj_i \compl \S g \compl \Spairing{f_i'} 
= g \compl \Sproj_i \compl \Spairing{f_i'}
= g \compl f_i' \] 
so $\S g \compl \Spairing{f_i'} = \Spairing{g \compl f_i'}$ and 
by \cref{prop:summability-structure-characterization-witness}, 
$\sequence{g \compl f_i'}$ is summable with sum 
\[ \left(\sum_{j \in \N} \Sproj_j \right) \compl \S g \compl \Spairing{f_i'}
= g \compl \left(\sum_{j \in \N} \Sproj_j \right) \compl \Spairing{f_i'} 
= g \compl \left(\sum_{i \in \N} f_i' \right)
= g \compl \left(\sum_{a \in A} f_a \right) \, . \]
But $\family{g \compl f_i'} = \phi^* \family{g \compl f_a}$, using the 
fact that $g \compl 0 = 0$. Thus, by \cref{prop:reindexing},
$\family{g \compl f_a}$ is summable with sum
$g \compl \left(\sum_{a \in A} f_a \right)$, so $g$ is $\Sigma$-additive.
So $\categoryLL$ is $\Sigma$-additive, and it directly follows 
from the assumptions that $(\S, \vect \Sproj)$ is a summability structure
if it satisfies \ref{def:summability-structure-4}.
\end{proof}

\begin{lemma}\label{lemma:representable-projection-summable}
  If $\categoryLL$ is representably $\Sigma$-additive, then $\sequence{\Sprojl_i}$ is 
  summable with sum $\Wdiag$, and $\sequence{\Sproj_i}$ is summable with sum 
  $\Ssum$.
\end{lemma}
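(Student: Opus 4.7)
The plan is to apply \ref{ax:RS-sum} directly in each case, by exhibiting an explicit candidate for $\Stuplet{\vect f}$ and then reading off the sum from the formula $\Stuplet{\vect f}\Compl(X\Times\Wdiag)\Compl\Rightu^{-1}$.

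First I would handle $\sequence{\Sprojl_i}$. Here $\Sprojl_i=\Win_i\in\cL(\Sone,\Dbimon)$, so by \ref{ax:RS-sum} I must produce $h\in\cL(\Sone\Times\Dbimon,\Dbimon)$ with $h\Compl(\Sone\Times\Win_i)\Compl\Rightu^{-1}=\Win_i$ for every $i\in\Nat$. The natural candidate is $h=\Leftu_{\Dbimon}$: by naturality of $\Leftu$ one has $\Leftu_{\Dbimon}\Compl(\Sone\Times\Win_i)=\Win_i\Compl\Leftu_{\Sone}$, and since $\Leftu_\Sone=\Rightu_\Sone$ on the unit this gives the desired equality. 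Hence $\Stuplet{\vect{\Win}}=\Leftu_{\Dbimon}$ exists, so $\sequence{\Win_i}$ is summable, and by \ref{ax:RS-sum} its sum is
\[
\Leftu_{\Dbimon}\Compl(\Sone\Times\Wdiag)\Compl\Rightu_{\Sone}^{-1}
=\Wdiag\Compl\Leftu_{\Sone}\Compl\Rightu_{\Sone}^{-1}=\Wdiag,
\]
again using naturality of $\Leftu$ and $\Leftu_\Sone=\Rightu_\Sone$.

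Next I would handle $\sequence{\Sproj_i}$, where $\Sproj_i=(\Win_i\Limpl X)\in\cL(\Dbimon\Limpl X,X)$. By \ref{ax:RS-sum} it suffices to produce $h\in\cL((\Dbimon\Limpl X)\Times\Dbimon,X)$ with $h\Compl((\Dbimon\Limpl X)\Times\Win_i)\Compl\Rightu^{-1}=\Sproj_i$. The natural choice is $h=\Evlin_{\Dbimon}$: the required equation is then literally the formula \eqref{eq:projection-elementary-def} defining $\Sproj_i$. So $\Stuplet{\vect{\Sproj}}=\Evlin_{\Dbimon}$ exists, $\sequence{\Sproj_i}$ is summable, and its sum is
\[
\Evlin_{\Dbimon}\Compl((\Dbimon\Limpl X)\Times\Wdiag)\Compl\Rightu^{-1}=\Ssum,
\]
by the analogous formula for $\Ssum$ given in \cref{rem:projection-sum-alternative}.

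The argument is essentially immediate once one unfolds the definitions: no genuine obstacle arises, since the two required witnesses $\Leftu_{\Dbimon}$ and $\Evlin_{\Dbimon}$ are already present in the data of the symmetric monoidal closed structure, and the defining equations of $\Win_i$ and $\Sproj_i$ give the verifications for free. The only thing one has to be careful about is matching the unitors on $\Sone$ in the first half of the argument, and correctly identifying $\Stuplet{\vect{\Sproj}}$ as $\Evlin_{\Dbimon}$ in the second half via \eqref{eq:projection-elementary-def}.
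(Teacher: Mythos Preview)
Your proof is correct and follows essentially the same approach as the paper: the paper also takes $\Leftu_{\Dbimon}$ as the witness for $\sequence{\Win_i}$ and $\Evlin_{\Dbimon}$ (written there as $\uncur(\id_{\Dbimon\Limpl X})$) as the witness for $\sequence{\Sproj_i}$, invoking \ref{ax:RS-sum} and the same naturality/unitor identities you use.
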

  
\begin{proof}
    Observe that $\tensorUnitL_{\Dbimon} \compl (1 \tensor \Sprojl_i) 
    \compl \tensorUnitR_{\Dbimon}^{-1} 
    = \Sprojl_i \compl \tensorUnitL_{1} \compl \tensorUnitR_{\Dbimon}^{-1}
    = \Sprojl_i$, so by \cref{ax:RS-sum} $\family{\Sprojl_i}$ is summable 
    with sum $\tensorUnitL_{\Dbimon} \compl (1 \tensor \Wdiag) 
    \compl \tensorUnitR_{\Dbimon}^{-1} 
    = \Wdiag \compl \tensorUnitL_{1} \compl \tensorUnitR_{\Dbimon}^{-1}
    = \Wdiag$.
  
    Then, observe that 
    $\uncur(\id_{\Dbimon \linarrow X}) \compl (X \tensor \Sprojl_i) \compl 
    \tensorUnitR^{-1} = \Sproj_i$ by \cref{eq:projection-elementary} so 
    by \ref{ax:RS-sum}, 
    $\sequence{\Sproj_i}$ is summable with sum 
    $\uncur(\id_{\Dbimon \linarrow X}) \compl (X \tensor \Wdiag) \compl 
    \tensorUnitR^{-1} = \Ssum$ by \cref{eq:sum-elementary}.
\end{proof}

\begin{lemma}\label{lemma:elem-win-epic-tens-gen}
  If \(\cL\) is a representably $\Sigma$-additive category, then
  for any \(n\in\Nat\) the family of morphisms %
  \((X\Times\Win_{i_1}\Times\cdots\Times\Win_{i_n})_{\Vect i\in\Nat^n}\) %
  is jointly epic.
\end{lemma}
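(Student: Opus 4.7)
I would prove the statement by induction on $n \in \N$.

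The base case $n=1$ is literally the content of \ref{ax:RS-epi}. For the inductive step, assume the family $(X \tensor \Win_{i_1} \tensor \cdots \tensor \Win_{i_n})_{\vec i \in \N^n}$ is jointly epic for every object $X$, and consider two morphisms $f, g : X \tensor \Dbimon^{\tensor(n+1)} \to Y$ such that $f \compl (X \tensor \Win_{i_1} \tensor \cdots \tensor \Win_{i_{n+1}}) = g \compl (X \tensor \Win_{i_1} \tensor \cdots \tensor \Win_{i_{n+1}})$ for all $\vec i \in \N^{n+1}$. The strategy is to peel off the outermost factor of $\Dbimon$ first.

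Apply \ref{ax:RS-epi} to the object $X' = X \tensor \Dbimon^{\tensor n}$: the morphisms $(X' \tensor \Sprojl_{i_{n+1}})_{i_{n+1} \in \N}$ are jointly epic. Hence to show $f = g$ it suffices to show, for each fixed $i_{n+1} \in \N$, that $f \compl (X' \tensor \Sprojl_{i_{n+1}}) = g \compl (X' \tensor \Sprojl_{i_{n+1}})$. Both sides are morphisms from $X \tensor \Dbimon^{\tensor n} \tensor \Sone$ to $Y$, which, after post-composing with $\tensorUnitR^{-1}$ or re-associating, can be viewed as morphisms from $X \tensor \Dbimon^{\tensor n}$ to $Y$. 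By the inductive hypothesis applied to $X$, it suffices to check their equality after precomposition with all $X \tensor \Win_{i_1} \tensor \cdots \tensor \Win_{i_n}$, and these equations follow immediately from the assumed equality on the full family, using functoriality of $\tensor$ and naturality of the unitor/associator to rearrange $(X \tensor \Win_{i_1} \tensor \cdots \tensor \Win_{i_n}) \tensor \Win_{i_{n+1}}$ into $X \tensor \Win_{i_1} \tensor \cdots \tensor \Win_{i_{n+1}}$.

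There is essentially no hard step here: the argument is a clean two-level reduction, first using \ref{ax:RS-epi} at the outermost slot and then the induction hypothesis at the remaining $n$ slots. The only bookkeeping to watch is the handling of the unitor $\tensorUnitR$ and the iterated tensor product $\Dbimon^{\tensor n}$ (which formally requires a choice of parenthesization), but both are handled automatically by coherence in the symmetric monoidal category $\categoryLL$, so they introduce no real obstacle.
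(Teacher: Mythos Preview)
Your proof is correct and follows essentially the same approach as the paper: induction on $n$, with the inductive step peeling off the outermost $\Dbimon$ factor via \ref{ax:RS-epi} applied to $X\Times\Dbimon^{\Times n}$ and then invoking the inductive hypothesis on the remaining $n$ factors. The only cosmetic difference is the order of presentation---the paper first defines $f_i,g_i$ by fixing the last coordinate, shows $f_i=g_i$ via the inductive hypothesis, and then concludes $f=g$ by \ref{ax:RS-epi}---but logically this is the same argument.
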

\begin{proof}
  By induction on \(n\geq 1\).
  The base case is just our assumption that \(\cL\) is a representably
  $\Sigma$-additive category.
  Assume that the property holds for \(n\) and let %
  \(f,g\in\cL(X\Times\Dbimon^{\Times(n+1)})\) be such that %
  \(f\Compl(X\Times\Win_{i_1}\Times\cdots\Times\Win_{i_n}\Times\Win_{i_{n+1}})
  =g\Compl(X\Times\Win_{i_1}\Times\cdots\Times\Win_{i_n}\Times\Win_{i_{n+1}})\)
  for all \(\Vect i\in\Nat^{n+1}\).
  For \(i\in\Nat\) let \(f_i\in\cL(X\Times\Dbimon^{\Times n})\) be defined as
  \[ 
    \begin{tikzcd}
      X\Times\Dbimon^{\Times n}
      \ar[r,"\Invp{\Rightu}"]
      &[1em]
      X\Times\Dbimon^{\Times n}\Times\Sone
      \ar[r,"X\Times\Dbimon^{\Times n}\Times\Win_i"]
      &[2.4em]
      X\Times\Dbimon^{\Times n}\Times\Dbimon
      \ar[r,"f"]
      &[-1em]
      Y
    \end{tikzcd}
  \]
  and similarly for \(g_i\).
  By inductive hypothesis we have \(f_i=g_i\) for all \(i\in\Nat\) and
  hence \(f=g\) since \(\cL\) is a representably $\Sigma$-additive
  category.
\end{proof}

\begin{lemma} \label{prop:elementary-double-witness}
  For all $f \in \categoryLL(X, \Dbimon \linarrow (\Dbimon \linarrow Y))$ and
  $f' \in \categoryLL(X \tensor \Dbimon \tensor \Dbimon, Y)$ such that 
  $f = \cur(\cur(f'))$, 
  \begin{align*}
    \Sproj_i \compl \Sproj_j \compl f &= f' (X \tensor \Sprojl_j \tensor \Sprojl_i) 
  \compl \tensorUnitR_{X \tensor 1}^{-1} \compl \tensorUnitR_X^{-1} \\
  \Ssum \compl \Ssum \compl f &= f' (X \tensor \Wdiag \tensor \Wdiag) 
  \compl \tensorUnitR_{X \tensor 1}^{-1} \compl \tensorUnitR_X^{-1} \, .
  \end{align*} 
\end{lemma}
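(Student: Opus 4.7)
The plan is to iterate equations \eqref{eq:projection-elementary} and \eqref{eq:sum-elementary} twice, and then clean up the result using naturality of the right unitor and bifunctoriality of $\tensor$. The key observation is that $f = \cur(\cur(f'))$ naturally decomposes as $f = \cur_{\Dbimon}(g)$ where $g \defEq \cur_{\Dbimon}(f') \in \cL(X \tensor \Dbimon, \Dbimon \linarrow Y)$.

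First, I would apply \eqref{eq:projection-elementary} to $f = \cur_{\Dbimon}(g)$, obtaining
$\Sproj_j \compl f = g \compl (X \tensor \Sprojl_j) \compl \tensorUnitR_X^{-1}.$
Post-composing with $\Sproj_i$ and applying \eqref{eq:projection-elementary} again to $g = \cur_{\Dbimon}(f')$ yields
$\Sproj_i \compl g = f' \compl ((X \tensor \Dbimon) \tensor \Sprojl_i) \compl \tensorUnitR_{X \tensor \Dbimon}^{-1}.$
Stitching the two together gives
$\Sproj_i \compl \Sproj_j \compl f = f' \compl ((X \tensor \Dbimon) \tensor \Sprojl_i) \compl \tensorUnitR_{X \tensor \Dbimon}^{-1} \compl (X \tensor \Sprojl_j) \compl \tensorUnitR_X^{-1}.$

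The final step is to rearrange the middle part into the desired form. By naturality of $\tensorUnitR^{-1}$ applied to the morphism $X \tensor \Sprojl_j : X \tensor 1 \to X \tensor \Dbimon$, we have $\tensorUnitR_{X \tensor \Dbimon}^{-1} \compl (X \tensor \Sprojl_j) = ((X \tensor \Sprojl_j) \tensor 1) \compl \tensorUnitR_{X \tensor 1}^{-1}$. Then, by bifunctoriality of $\tensor$, $((X \tensor \Dbimon) \tensor \Sprojl_i) \compl ((X \tensor \Sprojl_j) \tensor 1) = (X \tensor \Sprojl_j) \tensor \Sprojl_i$. This produces the stated formula.

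For the second equality, the very same computation goes through with \eqref{eq:sum-elementary} in place of \eqref{eq:projection-elementary} and $\Wdiag$ in place of each $\Sprojl_k$. There is no real obstacle here: this is essentially a two-layer curry/uncurry unfolding, and the only care needed is in bookkeeping for the parentheses and for the precise object at which each unitor and each projection is instantiated.
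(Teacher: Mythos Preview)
Your proof is correct and follows exactly the approach the paper indicates: the paper's proof simply says to apply \eqref{eq:projection-elementary} and \eqref{eq:sum-elementary} twice, and you have carefully spelled out those two applications together with the naturality and bifunctoriality bookkeeping needed to put the result in the stated form.
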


\begin{proof} It suffices to apply 
  \cref{eq:projection-elementary} and \cref{eq:sum-elementary} twice.
\end{proof}

\begin{theorem}  \label{thm:representable-summabibility-structure} 
  \label{thm:representably-sigma-additive-monoidal}
  If $\categoryLL$ is representably $\Sigma$-additive, then
  $\categoryLL$ is a $\Sigma$-additive symmetric monoidal category,
  the sum is compatible with the and 
    $(\Dbimon \linarrow \_, \vect \Sproj)$ is a $\Sigma$-summabibility 
    structure.
\end{theorem}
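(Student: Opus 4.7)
The plan is to apply \cref{prop:summability-structure-characterization} to the functor $\S = \Dbimon \linarrow \_$ equipped with the projections $\Sproj_i = \Win_i \linarrow \_$, and then check the two remaining conditions: axiom \ref{def:summability-structure-4} (to promote the summability structure to a $\Sigma$-summability structure) and axiom \ref{ax:S-sm-dist} (to obtain the symmetric monoidal $\Sigma$-additive structure, equivalently via \cref{prop:S-sm-dist-equivalent}). Throughout we rely on \cref{lemma:representable-projection-summable} and \cref{eq:projection-elementary,eq:sum-elementary,eq:sum-alternative}.

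First, I would check the hypotheses of \cref{prop:summability-structure-characterization}. The $\Sigma$-monoid structure with absorbing zero on each hom-set is exactly \ref{ax:RS-mon}. Joint monicity of $\sequence{\Sproj_i}$ follows from the joint epicity of $\sequence{X\Times \Win_i}$ in \ref{ax:RS-epi} via \cref{prop:monic-epic}. Functoriality of $\S$ and naturality of $\Sproj_i$ and $\Ssum$ follow at once from the functoriality of $\_\Limpl Y$ in its first argument. The non-trivial hypothesis is the witness characterization of summability: given $\vect f = \sequence{f_i}$, axiom \ref{ax:RS-sum} says $\vect f$ is summable iff $\Stuplet{\vect f}$ exists, so by currying $\Stuple{\vect f}=\Curlin\Stuplet{\vect f}$ satisfies $\Sproj_i\compl\Stuple{\vect f}=f_i$ via \cref{eq:projection-elementary}, conversely any witness is of this form since $\Curlin$ is a bijection. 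The formula $\sum_{i\in\Nat}f_i=\Ssum\compl\Spairing{f_i}$ follows by combining \ref{ax:RS-sum} with \cref{eq:sum-alternative}. This gives conditions (1)--(3) of \cref{prop:summability-structure-characterization}, hence $\categoryLL$ is $\Sigma$-additive.

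Next, I would check axiom \ref{def:summability-structure-4} for $(\Sfun,\vect\Sproj)$. Given an $A$-indexed family $\family{f_a\in\cL(X,\Sfun Y)}$ such that $\family<\Nat\times A>[(i,a)]{\Sproj_i\compl f_a}$ is summable, by reindexing via \cref{prop:reindexing} we may take $A=\Nat$. Setting $h_j=\Uncur(f_j)\in\cL(X\Times\Dbimon,Y)$, \cref{eq:projection-elementary} gives $\Sproj_i\compl f_j=h_j\compl(X\Times\Win_i)\compl\Inv\Rightu$; thus axiom \ref{ax:RS-witness} applied with $A=\Nat$ ensures $\sequence{h_j}$ is summable, hence by \ref{ax:RS-sum} there exists $\Stuplet{\sequence{h_j}}\in\cL((X\Times\Dbimon)\Times\Dbimon,Y)$. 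Precomposing with the canonical isomorphism $\sigma:X\Times\Dbimon\Times\Dbimon\to(X\Times\Dbimon)\Times\Dbimon$ built from the symmetry of $\Tens$ and the associator (to swap the roles of the two $\Dbimon$-factors), setting $G=\Stuplet{\sequence{h_j}}\compl\sigma$ and then $\Spairing{f_j}=\Curlin\Curlin G$, the double currying identity \cref{prop:elementary-double-witness} forces $\Sproj_j\compl\Spairing{f_j}=f_j$, witnessing the summability of $\sequence{f_j}$ in $\cL(X,\Sfun Y)$.

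Finally, for axiom \ref{ax:S-sm-dist} (which via \cref{prop:S-sm-dist-equivalent} is equivalent to the symmetric monoidal $\Sigma$-additive structure), given a summable $\sequence{f_i\in\cL(X,Y)}$ I would show that $\sequence{f_i\Times Z}$ is summable with sum $(\sum_i f_i)\Times Z$ (and symmetrically on the left); this is obtained from the witness $\Stuplet{\vect f}$ given by \ref{ax:RS-sum} by forming $\Stuplet{\vect f}\Times Z$ and precomposing with the canonical isomorphism $(X\Times Z)\Times\Dbimon\cong(X\Times\Dbimon)\Times Z$, using also that $0\Times Z=Z\Times 0=0$ from \ref{ax:RS-mon}. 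Absorption and naturality of the symmetry ensure this object produces $f_i\Times Z$ when composed with $((X\Times Z)\Times\Win_i)\compl\Inv\Rightu$, and the corresponding sum equals $(\sum_i f_i)\Times Z$ by the same coherence argument. The main obstacle is exactly this coherence bookkeeping in the argument for \ref{def:summability-structure-4}: once the symmetry/associator data is carefully laid out, all the identities reduce to an application of \cref{prop:elementary-double-witness} combined with the joint epicity of the families $(X\Times\Win_{i_1}\Times\cdots\Times\Win_{i_n})$ provided by \cref{lemma:elem-win-epic-tens-gen}.
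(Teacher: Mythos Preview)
Your proposal is correct and follows essentially the same approach as the paper: verify the hypotheses of \cref{prop:summability-structure-characterization}, establish \ref{def:summability-structure-4} by uncurrying, applying \ref{ax:RS-witness}, and then double-currying after a symmetry swap (exactly as in \cref{prop:elementary-double-witness}), and finally check monoidal compatibility. The only cosmetic difference is in the last step: the paper verifies \ref{ax:S-sm-dist} directly by constructing the witness $\SstrL=\Curlin((\Evlin\Times X_1)\compl((\Dbimon\Limpl X_0)\Times\Sym))$ for the family $\sequence{\Sproj_i\Times X_1}$, whereas you phrase it for an arbitrary summable family $\sequence{f_i}$ by tensoring $\Stuplet{\vect f}$ with $Z$ and swapping; since $\Stuplet{\sequence{\Sproj_i}}=\Evlin$, your construction specializes to the paper's, so the two arguments coincide.
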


\begin{proof}
We first prove that $(\Dbimon \linarrow \_, \vect \Sproj)$ satisfies
the conditions of \cref{prop:summability-structure-characterization}.
The functorial action is simply $\Dbimon \linarrow \_$, the projections 
$\Sproj_{i,X} = \Sprojl_i \linarrow X$ are natural by construction, and 
by \cref{lemma:representable-projection-summable},
$\sum_{i \in \N} \Sproj_{i, X} = \Ssum = \Wdiag \linarrow X$ is also natural.
Then, $(\Dbimon \linarrow \_, \vect \Sproj)$  satisfies 
\ref{def:summability-structure-1} (the joint monicity of the $\Sproj_i$ 
is a consequence of \ref{ax:RS-epi} and \cref{prop:monic-epic}). 
The tuple $(\Dbimon \linarrow \_, \vect \Sproj)$ also satisfies 
\ref{def:summability-structure-2} by \ref{ax:RS-sum}, because the 
existence of $\Stuple{\vect f}$ is equivalent to the existence 
of $\Stuplet{\vect f}$.

We now prove that $(\Dbimon \linarrow \_, \vect \Sproj)$ 
satisfies \ref{def:summability-structure-4}.
It suffices to prove \ref{def:summability-structure-4} for 
$\N$-indexed families, the general case of 
$A$-indexed families immediately follows from \cref{prop:reindexing}.
Assume that $\sequence[j]{f_j \in \categoryLL(X, \Dbimon \linarrow X)}$ 
is a $\N$-indexed family such that 
$\family<\N \times \N>[(j, i)]{\Sproj_i \compl f_j}$ is summable
Then,
\[ \Sproj_i \compl f_j = \Sproj_i \compl \cur(\uncur (f_j)) 
= \uncur(f_j) \compl (X \tensor \Sprojl_i) \compl \tensorUnitR^{-1} \]
by \cref{eq:left-closure}. Thus, by \ref{ax:RS-witness},
$\sequence[j]{\uncur(f_j)}$ is summable, and by \ref{ax:RS-sum}
there exists $f' \in \categoryLL(X \tensor \Dbimon \tensor \Dbimon, Y)$
such that \[ f' \compl (X \tensor \Dbimon \tensor \Sprojl_j) \compl \tensorUnitR_{X \tensor \Dbimon}^{-1} 
= \uncur{f_j} \, . \]
Then, by \cref{eq:projection-elementary} and naturality, 
\[ f' \compl (X \tensor \tensorSym_{\Dbimon, \Dbimon})
\compl (X \tensor \Sprojl_j \tensor \Sprojl_i) \compl \tensorUnitR_{X \tensor \Dbimon}^{-1} 
\compl \tensorUnitR_X^{-1} 
=  f' \compl (X \tensor \Sprojl_i \tensor \Sprojl_j) \compl \tensorUnitR_{X \tensor \Dbimon}^{-1} 
\compl \tensorUnitR_X^{-1}
= \Sproj_i \compl f_j \, . \]
So by \cref{prop:elementary-double-witness}, taking $f = \cur(\cur(f'))$ 
we have $\Sproj_i \compl \Sproj_j \compl f = \Sproj_i \compl f_j$ 
so $\Sproj_j \compl f = f_j$ by joint monicity of the $\Sproj_j$.
Thus, $f$ is a witness for $\sequence[j]{f_j}$, and $\sequence[j]{f_j}$ is summable 
by \ref{def:summability-structure-2}.

Thus, by \cref{prop:summability-structure-characterization},
$\categoryLL$ is a $\Sigma$-additive category and $(\Dbimon \linarrow \_, \vect \Sproj)$
is a $\Sigma$-summability structure.
We now prove that the sum distributes over the monoidal product. It suffices to prove that 
$(\Dbimon \linarrow \_, \vect \Sproj)$ satisfies \ref{ax:S-sm-dist}.
  We already know from \ref{ax:RS-mon} that $0$ is absorbing for the 
  monoidal product. 
  We now show that $\sequence{\Sproj_i \tensor X_1}$ is summable 
  with sum $\Ssum \tensor X_1$.

\[
    f = \begin{tikzcd}[column sep = large]
      {\Dbimon \linarrow X_0 \tensor X_1 \tensor \Dbimon} & {\Dbimon \linarrow X_0 \tensor \Dbimon \tensor X_1} & {X_0 \tensor X_1}
      \arrow["{(\Dbimon \linarrow X_0) \tensor \tensorSym}", from=1-1, to=1-2]
      \arrow["{\ev \tensor X_1}", from=1-2, to=1-3]
    \end{tikzcd} . \]
    Then we can check that 
    \begin{align*}
      \Sproj_i \compl 
    \cur_{\Dbimon}(f) 
    &= f \compl ((\Dbimon \linarrow X_0) \tensor X_1 \tensor \Sprojl_i) \compl 
    \tensorUnitR^{-1}_{\Dbimon \linarrow X_0 \tensor X_1}
    \tag*{by \cref{eq:projection-elementary}} \\
    &= (\ev \tensor X_1) \compl ((\Dbimon \linarrow X_0) \tensor \Sprojl_i \tensor X_1)
    \compl (\tensorUnitR^{-1}_{\Dbimon \linarrow X_0} \tensor X_1) 
    \tag*{by naturality and axioms of symmetric monoidal categories }\\
    &= \Sproj_i \tensor X_1 \tag*{by \cref{eq:projection-elementary-def}.}
    \end{align*}
    So $\sequence{\Sproj_i \tensor X_1}$ is summable with witness $\SstrL = \cur(f)$, 
    and it follows from a similar computation that 
    $\left(\sum_{i \in \N} \Sproj_i\right) \compl \cur(f) 
    = \Ssum \compl \cur(f) = \Ssum \tensor X_1 = 
    \left(\sum_{i \in \N} \Sproj_i\right) \tensor X_1$.
    The existence of $\SstrR$ can be proved similarly (this morphism is actually 
    simpler, because it does not involve $\tensorSym$).
\end{proof}

  \begin{theorem} \label{thm:representable-compatible}
    For any representably $\Sigma$-additive category $\categoryLL$,
    \begin{enumerate}
      \item if $\categoryLL$ is cartesian, then $\categoryLL$ 
      is a cartesian $\Sigma$-additive category;
      \item if $\categoryLL$ is symmetric monoidal closed, then 
      $\categoryLL$ is a symmetric monoidal closed $\Sigma$-additive category.
    \end{enumerate}
  \end{theorem}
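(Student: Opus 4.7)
The plan is to use the characterizations in Propositions \ref{prop:sum-cartesian} and \ref{prop:sum-closure}, which reduce the two statements to showing that a certain family of composite morphisms is summable. In both cases, representability lets us construct a witness by ``transporting'' along the universal property of the product (for part 1) or the Curry transpose (for part 2), using axiom \ref{ax:RS-sum}.

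For part (1), suppose $(f_a^i \in \categoryLL(X, Y_i))_{a \in A,\, i \in I}$ is a family such that, for each $i \in I$, the family $\family{f_a^i}$ is summable. Fix an injection $\phi \colon A \injection \Nat$. By \ref{ax:RS-sum}, each of these summable families has a ``polynomial witness'' $h^i \defEq \Stuplet{\Famact{\phi}{\family{f_a^i}}} \in \categoryLL(X \tensor \Dbimon, Y_i)$. I would then form the tuple $h \defEq \prodPairing{h^i}_i \in \categoryLL(X \tensor \Dbimon, \withFam Y_i)$ and check, by composing with each $\prodProj_i$ and using the fact that $\prodPairing{0}_i = 0$ (a straightforward consequence of \ref{ax:RS-mon}), that $h$ is precisely the polynomial witness of $\Famact{\phi}{\family{\prodPairing{f_a^i}}_i}$. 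Then \ref{ax:RS-sum} combined with \cref{prop:reindexing} shows that $\family{\prodPairing{f_a^i}_i}$ is summable, which is the condition required by \cref{prop:sum-cartesian}.

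For part (2), assume $\categoryLL$ is symmetric monoidal closed and let $\family{f_a \in \categoryLL(Z \tensor X, Y)}$ be summable with polynomial witness $h \defEq \Stuplet{\Famact{\phi}{\vect f}} \in \categoryLL((Z \tensor X) \tensor \Dbimon, Y)$ for some injection $\phi \colon A \injection \Nat$. Using the symmetric monoidal coherence isomorphisms, there is a canonical natural iso $\alpha \in \categoryLL((Z \tensor \Dbimon) \tensor X, (Z \tensor X) \tensor \Dbimon)$; I set $k \defEq \cur_X(h \compl \alpha) \in \categoryLL(Z \tensor \Dbimon, X \linarrow Y)$. The computation then consists in applying the naturality of $\cur_X$ (equation \eqref{eq:Cur-natural}) together with the standard coherence diagrams, to check $k \compl (Z \tensor \Win_j) \compl \tensorUnitR^{-1} = \cur_X(f_a)$ whenever $j = \phi(a)$ and $= 0$ otherwise. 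The null case uses $\cur_X(0) = 0$, which follows by uniqueness of the Curry transpose since $\uncur_X(0) = \ev_X \compl (0 \tensor X) = 0$ (itself a consequence of \ref{ax:RS-mon}). Hence $k$ is a polynomial witness for $\Famact{\phi}{\family{\cur_X(f_a)}}$, so $\family{\cur_X(f_a)}$ is summable by \ref{ax:RS-sum} and \cref{prop:reindexing}, which is the condition required by \cref{prop:sum-closure}.

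The only real subtlety is the symmetric monoidal bookkeeping in part (2), where one must carefully check that the iso $\alpha$ intertwines the two ways of plugging $\Win_j$ into the third tensor factor; this is a routine application of the coherence theorem, but it is the most error-prone step and must be done precisely. Part (1) is essentially formal, since every ingredient (pairing, zero, universal property of the product) is preserved on the nose by the construction $f \mapsto \Stuplet{f}$.
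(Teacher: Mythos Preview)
Your proof is correct, but it takes a somewhat different route from the paper's. The paper does not verify the characterizations in \cref{prop:sum-cartesian} and \cref{prop:sum-closure} on arbitrary families; instead it checks the equivalent summability-structure axioms \ref{ax:S-with} and \ref{ax:S-fun} directly. For part~(1), the paper constructs a single morphism $f = \cur_{\Dbimon}(\prodPairing{\ev \compl (\prodProj_i \tensor \Dbimon)}) \in \categoryLL(\withFam \S X_i, \S \withFam X_i)$ and checks that $\Sproj_j \compl f = \withFam \Sproj_j$, so the specific family $\sequence[j]{\withFam \Sproj_j}$ is summable. For part~(2), the paper avoids the coherence bookkeeping you flag as the ``most error-prone step'' entirely: it unfolds $\Sclos$ as a double Curry transpose and observes that it is an isomorphism with an explicit inverse (another double Curry transpose involving $\tensorSym$), then invokes \cref{prop:Sclos-inverse}.

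Your approach has the virtue of being uniform across both parts---in each case you take an arbitrary summable family, build a polynomial witness, and transport it through the relevant universal property. The paper's approach is more structural: it builds a once-and-for-all morphism in each case, which is why it reuses these morphisms later (e.g.\ $\SprodDist^{-1}$ and $(\Sclos)^{-1}$ appear again in the mate discussion of \cref{sec:bimonoid-bimonad-mate}). In particular, the paper's version of part~(2) is arguably cleaner because the coherence verification becomes trivial: the inverse is simply the Curry transpose of $\ev \compl \ev \compl (\id \tensor \tensorSym)$ taken in the other order, and no explicit $\alpha$ needs to be checked against $\Win_j$.
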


  \begin{proof}
    Assume that $\categoryLL$ is cartesian. We show that $\categoryLL$ is cartesian 
    $\Sigma$-additive. It suffices to show that $(\Dbimon \linarrow \_, \vect \Sproj)$
    satisfies \ref{ax:S-with}.
   Observe that 
   \[ \prodPairing{\ev \compl (\prodProj_i \tensor \Dbimon)} \in 
   \categoryLL\left(\left(\withFam (\Dbimon \linarrow X_i) \right) \tensor \Dbimon,  
     \withFam X_i\right) \]
   so we can define $f = \cur_{\Dbimon}
   (\prodPairing{\ev \compl (\prodProj_i \tensor \Dbimon)}) \in 
   \categoryLL(\withFam (\Dbimon \linarrow X_i), \Dbimon \linarrow \withFam X_i)$.
   Then, 
   \begin{align*}
   \Sproj_j \compl f &= 
   \prodPairing{\ev \compl (\prodProj_i \tensor \Dbimon)} \compl (\withFam X_i \tensor 
   \Sprojl_i) \compl \tensorUnitR^{-1} \tag*{by \cref{eq:projection-elementary}} \\
   &= \prodPairing{\ev \compl (\prodProj_i \tensor \Sprojl_j) \compl \tensorUnitR^{-1}} \\
   &= \prodPairing{\ev \compl (1 \tensor \Sprojl_j) \compl \tensorUnitR^{-1} \compl 
   \prodProj_i} \tag*{by functoriality and naturality} \\
   &= \prodPairing{\Sproj_j \compl \prodProj_i} \tag*{by \cref{eq:projection-elementary-def}} \\
   &= \withFam \Sproj_j
   \end{align*}
   which concludes the proof.
   Then we show that $\categoryLL$ is a symmetric monoidal closed $\Sigma$-additive 
    category.
    First, we know from the proof of 
    \cref{thm:representable-summabibility-structure} that
    $\SstrL_{X_0, X_1} \in 
    \categoryLL\left(\Dbimon \linarrow X_1, \Dbimon \linarrow (X_0 \tensor X_1)\right)$
    is the Curry transpose of 
    \[
    f = \begin{tikzcd}
      {(\Dbimon \linarrow X_0) \tensor X_1 \tensor \Dbimon} & & {(\Dbimon \linarrow X_0) \tensor \Dbimon \tensor X_1} & {X_0 \tensor X_1.}
      \arrow["{(\Dbimon \linarrow X_0) \tensor \tensorSym}", from=1-1, to=1-3]
      \arrow["{\ev \tensor X_1}", from=1-3, to=1-4]
    \end{tikzcd} \]
    It follows that for any object $A$,
    \[ \Sclos_{X} = \cur((\S \ev) \compl \SstrL_{A \linarrow X, A}) \]
    is equal to the double Curry transpose of the following morphism
    \[
    \begin{tikzcd}[column sep = small]
      {(\Dbimon \linarrow (A \linarrow X)) \tensor A \tensor \Dbimon} & & {(\Dbimon \linarrow (A \linarrow X)) \tensor \Dbimon \tensor A} & {(A \linarrow X) \tensor A} & X.
      \arrow["{\id \tensor \tensorSym}", from=1-1, to=1-3]
      \arrow["\ev", from=1-3, to=1-4]
      \arrow["\ev", from=1-4, to=1-5]
    \end{tikzcd} \] 
    It is straightforward to check that this is an isomorphism 
    (whose inverse is given by the double 
    Curry transpose of a similar morphism), so by \cref{prop:Sclos-inverse},
    $(\Dbimon \linarrow \_, \vect \Sproj)$ satisfies \ref{ax:S-fun}.
  \end{proof}

\subsection{Characterization of representably $\Sigma$-additive categories}

We want to characterize under which condition a cartesian symmetric monoidal 
$\Sigma$-additive category is representably $\Sigma$-additive.
This actually boils down to the following condition.

\begin{definition} \labeltext{(Comb-lin)}{ax:comb-lin}
A symmetric monoidal $\Sigma$-additive category that satisfies \ref{ax:D-defined}
satisfies \ref{ax:comb-lin} if for all summable 
family $\sequence{f_i \in \categoryLL(X, Y)}$, the family 
$\sequence{\tensorUnitR_Y \compl (f_i \tensor \prodProj_i) \in 
\categoryLL(X \tensor \Dbimon, Y)}$ is summable.
\end{definition}

\begin{example}
Let us consider the example of probabilistic coherence spaces, 
see \cref{ex:summability-pcoh}.
Assume that $\sequence{x_i} \in \Proba(E)^{\N}$ is summable, that is,
$\sum_{i \in \N} x_i \in \Proba(E)$.
Then for any family $\sequence{\lambda_i \in [0, 1]}$, 
\[ \sum_{i \in \N} \lambda_i x_i \leq \sum_{i \in \N} x_i \in \Proba(E) \] 
so $\sequence{\lambda_i x_i}$ is summable.
This observation corresponds
to an instance of
\ref{ax:comb-lin} where the $x_i$ are considered as morphisms of $\PCOH(1, \probapcs E)$, 
using the fact that $\Dbimon$ is the PCS such that $\Web \Dbimon = \N$ and 
$\Pcoh \Dbimon = {\intcc 0 1}^{\N}$.
\end{example}

\begin{theorem} \label{thm:representably-additive-characterization}
  Assume that $\categoryLL$ is a cartesian symmetric monoidal 
  category that satisfies \ref{ax:D-defined}. 
  The following are equivalent: \begin{enumerate}
  \item $\categoryLL$ is representably $\Sigma$-additive;
  \item $\categoryLL$ is a cartesian symmetric monoidal 
  closed $\Sigma$-additive category (that is, the sum is compatible with the 
  cartesian product, the monoidal product and the closedness) which
  satisfies \ref{ax:comb-lin}.
\end{enumerate}
\end{theorem}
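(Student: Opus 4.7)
My plan is to treat the two directions separately. The forward implication follows almost for free from results already established in the section, so the substance lies in the reverse direction, where the key obstacle will be verifying the joint epicity condition \ref{ax:RS-epi}.

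For $(1) \Rightarrow (2)$, Theorems \ref{thm:representably-sigma-additive-monoidal} and \ref{thm:representable-compatible} already deliver the $\Sigma$-additive, symmetric monoidal closed, cartesian structure with all required compatibilities. So I only need to verify \ref{ax:comb-lin}: given $\sequence{f_i}$ summable, the family $\sequence{\tensorUnitR \compl (f_i \tensor \prodProj_i)}$ sits in $\categoryLL(X \tensor \Dbimon, Y)$, and by \ref{ax:RS-witness} it is enough to check that $\family<\N \times \N>[(i,j)]{\tensorUnitR \compl (f_i \tensor \prodProj_i) \compl (X \tensor \Sprojl_j) \compl \tensorUnitR^{-1}}$ is summable. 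Since $\prodProj_i \compl \Sprojl_j = \Kronecker{i}{j} \id_1$, this doubly-indexed family equals $\Famact{\phi}\sequence{f_i}$ where $\phi(i) = (i,i)$, and it is thus summable by \cref{prop:reindexing}.

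For $(2) \Rightarrow (1)$ I would verify each clause of \cref{def:RS} in turn. Axiom \ref{ax:RS-mon} is immediate: the hom $\Sigma$-monoids are given, while the absorbance properties of $0$ follow from $\Sigma$-additivity of every morphism together with the sum being compatible with composition and with the tensor (both applied to the empty family). For \ref{ax:RS-sum}, the starting observation is that \cref{prop:sum-cartesian}, applied to the families $f_a^i = \Kronecker{a}{i}\id_1$, yields the summability of $\sequence{\Sprojl_j}$ with sum $\Wdiag$. The ``if'' direction of \ref{ax:RS-sum} then follows directly from \ref{ax:comb-lin}: setting $h = \sum_i \tensorUnitR \compl (f_i \tensor \prodProj_i)$, left distributivity and the identity $\prodProj_i \compl \Sprojl_j = \Kronecker{i}{j}\id_1$ give $h \compl (X \tensor \Sprojl_j) \compl \tensorUnitR^{-1} = f_j$. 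The ``only if'' direction uses that compatibility with $\tensor$ promotes $\sequence{\Sprojl_j}$ summable with sum $\Wdiag$ to $\sequence{X \tensor \Sprojl_j}$ summable with sum $X \tensor \Wdiag$, and then $\Sigma$-additivity of $h$ transports this summability to $\sequence{f_j}$ with the expected sum.

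The crux of the reverse direction is \ref{ax:RS-epi}, which looks stubbornly circular because every candidate summability witness would a priori require joint epicity for its uniqueness. The way out is a ``partition of unity'' identity: applying \cref{prop:sum-cartesian} to the values $\prodProj_j \compl \Sprojl_i \compl \prodProj_i = \Kronecker{j}{i}\prodProj_i$ (each $j$-slice summable by \cref{prop:zero-neutral}) yields that $\sequence{\Sprojl_i \compl \prodProj_i}$ is summable in $\categoryLL(\Dbimon, \Dbimon)$ with sum $\id_\Dbimon$. Since every morphism is $\Sigma$-additive, any $f : \Dbimon \to Y$ therefore decomposes as $f = \sum_i (f \compl \Sprojl_i) \compl \prodProj_i$, from which joint epicity of $\sequence{\Sprojl_i}$ is algebraically obvious. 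Closedness then promotes this to joint epicity of $(X \tensor \Sprojl_i)_i$ via \cref{rk:elementary-closed-win-epicity}. I expect this partition-of-unity trick to be the main conceptual step; once it is located, the remaining work is routine.

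Finally, for \ref{ax:RS-witness}, I would tensor the identity $\id_\Dbimon = \sum_i \Sprojl_i \compl \prodProj_i$ with $X$ (using compatibility with $\tensor$) to get $\id_{X \tensor \Dbimon} = \sum_i (X \tensor \Sprojl_i) \compl (X \tensor \prodProj_i)$, whence $\Sigma$-additivity of each $h_a$ yields $h_a = \sum_j f_{a,j} \compl \pi_j'$ with $f_{a,j} = h_a \compl (X \tensor \Sprojl_j) \compl \tensorUnitR^{-1}$ and $\pi_j' = \tensorUnitR \compl (X \tensor \prodProj_j)$. By \ref{ax:pa} it suffices to show the doubly-indexed family $\family<A \times \N>{f_{a,j} \compl \pi_j'}$ summable; partitioning over $j$, the $a$-slice collapses by left distributivity to $t_j \compl \pi_j' = \tensorUnitR \compl (t_j \tensor \prodProj_j)$ where $t_j = \sum_a f_{a,j}$, and the outstanding summability of $\sequence[j]{\tensorUnitR \compl (t_j \tensor \prodProj_j)}$ is exactly what \ref{ax:comb-lin} delivers when applied to the summable family $\sequence[j]{t_j}$.
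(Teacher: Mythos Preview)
Your proof is correct and follows essentially the same skeleton as the paper's, with one notable simplification and one minor variation.

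In the forward direction, the paper takes a more constructive route for \ref{ax:comb-lin}: it builds the monoid multiplication \(\Dbimonm \in \categoryLL(\Dbimon \tensor \Dbimon, \Dbimon)\) and exhibits \(\Stuplet{\vect f} \compl (X \tensor \Dbimonm)\) as an explicit witness for \(\sequence{\tensorUnitR \compl (f_i \tensor \prodProj_i)}\). Your argument via \ref{ax:RS-witness} and reindexing is cleaner and avoids introducing \(\Dbimonm\) at this stage; it trades a concrete witness for an abstract summability criterion, which is a good deal here.

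In the reverse direction, both proofs hinge on the same partition-of-unity identity \(\id_\Dbimon = \sum_i \Sprojl_i \compl \prodProj_i\). The one difference is how joint epicity of the \(X \tensor \Sprojl_i\) is obtained: you prove epicity of the \(\Sprojl_i\) first and then invoke closedness via \cref{rk:elementary-closed-win-epicity}, whereas the paper tensors the partition of unity with \(X\) directly (using compatibility of sums with \(\tensor\)) and reads off epicity of the \(X \tensor \Sprojl_i\) without appealing to closedness. You in fact perform exactly this tensoring step later when handling \ref{ax:RS-witness}, so you could have reused it for \ref{ax:RS-epi} as well and dropped the closedness hypothesis there. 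The paper's route is thus marginally more economical in its assumptions, but since (2) as stated includes closedness, your argument is perfectly sound.
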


\subsubsection{Proof of the forward implication of 
\cref{thm:representably-additive-characterization}}

Assume that $\categoryLL$ is a cartesian symmetric monoidal 
category that is representably $\Sigma$-additive. 
Then we know by \cref{thm:representable-summabibility-structure,thm:representable-compatible} 
that $\categoryLL$ is a cartesian symmetric monoidal $\Sigma$-additive category.
It remains to show \ref{ax:comb-lin}. Assume that $\vect f = 
\sequence{f_i \in \categoryLL(X, Y)}$
is summable. We need to prove that $\sequence{\tensorUnitR_Y \compl (f_i \tensor 
\prodProj_i)}$ is summable.
By \ref{ax:RS-sum}, it suffices to find 
$h \in \categoryLL(X \tensor \Dbimon \tensor \Dbimon, Y)$ 
such that 
\[ h \compl (X \tensor \Dbimon \tensor \Sprojl_i) \compl 
\tensorUnitR_{X \tensor \Dbimon}^{-1} 
= \tensorUnitR_Y \compl (f_i \tensor \prodProj_i) \, . \] 
Observe that for each \(i\in\Nat\) we have
\(\Leftu\Compl(\Wproj_i\Times\Wproj_i)
\in\cL(\Dbimon\Times\Dbimon,\Sone)\). We set %
\(\Dbimonm
=\Tuple{\tensorUnitR_1\Compl(\Wproj_i\Times\Wproj_i)}_{i\in\Nat}
\in\cL(\Dbimon\Times\Dbimon,\Dbimon)\) and we define 
\[ h =
\begin{tikzcd}
	{X \tensor \Dbimon \tensor \Dbimon} & {X \tensor \Dbimon} & Y
	\arrow["{X \tensor \Dbimonm}", from=1-1, to=1-2]
	\arrow["{\Stuplet{\vect f}}", from=1-2, to=1-3]
\end{tikzcd} \, . \] 
Observe that 
\[ \prodProj_j \compl \Dbimonm \compl (\Dbimon \tensor \Sprojl_i) \compl 
\tensorUnitR_{\Dbimon}^{-1} 
= \tensorUnitR_1 \compl (\prodProj_j \tensor \prodProj_j)
\compl (\Dbimon \tensor \Sprojl_i) \compl \tensorUnitR_{\Dbimon}^{-1} 
= \begin{cases} 
  \tensorUnitR_1 \compl (\prodProj_j \tensor 1) \compl \tensorUnitR_{\Dbimon}^{-1} 
  = \prodProj_i \text{ if } i = j \\
0 \text{ otherwise}
\end{cases} \]
\[ \prodProj_j \compl \Sprojl_i \compl \prodProj_i = \begin{cases} 
  \prodProj_i \text{ if } i = j \\ 
  0 \text{ otherwise} \end{cases} \]
Thus $\Dbimonm \compl (\Dbimon \tensor \Sprojl_i) \compl 
\tensorUnitR_{\Dbimon}^{-1} = \Sprojl_i \compl \prodProj_i$
by joint monicity of the $\prodProj_j$.
It follows that 
\begin{align*} 
  h \compl (X \tensor \Dbimon \tensor \Sprojl_i) \compl 
\tensorUnitR_{X \tensor \Dbimon}^{-1}
&= \Stuplet{\vect f} 
\compl (X \tensor (\Dbimonm \compl (\Dbimon \tensor \Sprojl_i) 
\compl \tensorUnitR_{\Dbimon}^{-1})) \\
&= \Stuplet{\vect f} \compl (X \tensor \Sprojl_i) 
\compl (X \tensor \prodProj_i) \\ 
&= f_i \compl \tensorUnitR_{X} \compl (X \tensor \prodProj_i) \\
&= \tensorUnitR_Y \compl (f_i \tensor \prodProj_i) \, .
\end{align*}
This ends the proof.

\subsubsection{Proof of the backward implication of 
\cref{thm:representably-additive-characterization}}

We assume now that $\categoryLL$ is a cartesian symmetric monoidal 
$\Sigma$-additive category which
satisfies \ref{ax:comb-lin}. Remember that by assumption,
 $\categoryLL$ satisfies \ref{ax:RS-mon}.

\begin{proposition} \label{prop:injections-summable}
  The family $\sequence{\Sprojl_i}$ is summable with sum $\Wdiag$. 
  The family $\sequence{\Sprojl_i \compl \prodProj_i \in \cL(\Dbimon, \Dbimon)}$ 
  is summable with sum $\id_{\Dbimon}$.
 \end{proposition}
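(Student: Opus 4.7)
Both parts reduce to unfolding the characterizations of the cartesian product and of \ref{ax:comb-lin} and using the joint monicity of the projections $\prodProj_j$ to identify sums.

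For part~1, the plan is to exploit directly the cartesian $\Sigma$-additive structure. By definition of $\Sprojl_i$, we have $\Sprojl_i = \prodPairing<j>{\Kronecker{i}{j} \id_1}$. For each fixed $j$, the family $\sequence[i]{\Kronecker{i}{j} \id_1}$ has exactly one non-zero term (at $i=j$), so it is summable with sum $\id_1$ by~\ref{ax:unary} together with \cref{prop:zero-neutral}. The compatibility of the sum with the cartesian product (\cref{def:sum-cartesian}) then gives that $\sequence{\Sprojl_i}$ is summable with sum $\prodPairing<j>{\sum_i \Kronecker{i}{j} \id_1} = \prodPairing<j>{\id_1} = \Wdiag$.

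For part~2, the plan is to apply \ref{ax:comb-lin} to the family $\sequence{\Sprojl_i}$, which is summable by part~1 (taking $X = \Sone$ and $Y = \Dbimon$). This yields summability of the family
\[ \sequence{\tensorUnitR_{\Dbimon} \compl (\Sprojl_i \tensor \prodProj_i) \in \cL(\Sone \tensor \Dbimon, \Dbimon)} \, . \]
Precomposing with $\tensorUnitL_{\Dbimon}^{-1}$ preserves summability by left distributivity of the sum over composition, so the family $\sequence{\tensorUnitR_{\Dbimon} \compl (\Sprojl_i \tensor \prodProj_i) \compl \tensorUnitL_{\Dbimon}^{-1}}$ is summable. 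A short coherence computation using naturality of $\tensorUnitR$ and $\tensorUnitL$ together with the standard identity $\tensorUnitR_\Sone = \tensorUnitL_\Sone$ shows that for $f : \Sone \to A$ and $g : B \to \Sone$, one has $\tensorUnitR_A \compl (f \tensor g) \compl \tensorUnitL_B^{-1} = f \compl g$; specializing to $f = \Sprojl_i$, $g = \prodProj_i$ gives that the family $\sequence{\Sprojl_i \compl \prodProj_i}$ is summable.

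To identify its sum with $\id_{\Dbimon}$, the plan is to post-compose by the projections $\prodProj_j$ and use their $\Sigma$-additivity (which holds since $\categoryLL$ is $\Sigma$-additive): for each $j \in \N$,
\[ \prodProj_j \compl \sum_{i \in \N} \Sprojl_i \compl \prodProj_i = \sum_{i \in \N} (\prodProj_j \compl \Sprojl_i) \compl \prodProj_i = \sum_{i \in \N} \Kronecker{i}{j}\, \prodProj_i = \prodProj_j \, ,\]
using \ref{ax:unary} and \cref{prop:zero-neutral} for the last equality. Since the $\prodProj_j$ are jointly monic (as projections of a categorical product), this forces $\sum_{i \in \N} \Sprojl_i \compl \prodProj_i = \id_{\Dbimon}$. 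The only mildly delicate step is the unitor coherence identity, which is entirely routine; the rest is straight bookkeeping with joint monicity and distributivity.
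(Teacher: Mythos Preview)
Your proof of part~1 is correct and essentially identical to the paper's.

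For part~2, your argument via \ref{ax:comb-lin} is correct, but the paper takes a simpler and more uniform route that avoids \ref{ax:comb-lin} entirely. The paper just repeats the part~1 argument: for each fixed $j$, the family $\sequence[i]{\prodProj_j \compl \Sprojl_i \compl \prodProj_i}$ has a single non-zero term $\prodProj_j$ (at $i=j$), hence is summable with sum $\prodProj_j$ by \ref{ax:unary}; then compatibility of the sum with the cartesian product immediately gives that $\sequence{\Sprojl_i \compl \prodProj_i}$ is summable with sum $\prodPairing<j>{\prodProj_j} = \id_{\Dbimon}$. Your detour through \ref{ax:comb-lin}, the tensor unitors, and the coherence identity $\tensorUnitR_A \compl (f \tensor g) \compl \tensorUnitL_B^{-1} = f \compl g$ is valid, but it is heavier machinery than needed here and obscures the parallelism with part~1.
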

 
 \begin{proof}
 For all $i \in \N$, the family $\sequence{\prodProj_i \compl \Sprojl_i}$ is 
 summable with sum $\id_{1}$ by \ref{ax:unary}. Thus, $\sequence{\Sprojl_i}$ is 
 summable with sum $\prodPairing<\N>{\id_1} = \Wdiag$, by compatibility between the sum and 
 the cartesian product.
 Similarly, for all $i \in \N$, the family $\sequence{\prodProj_i \compl 
 \Sprojl_i \compl \prodProj_i}$ is summable with sum $\prodProj_i$ by \ref{ax:unary}.
 Thus, $\sequence{\Sprojl_i \compl \prodProj_i}$ is 
 summable with sum $\prodPairing<\N>{\prodProj_i} = \id_{\Dbimon}$.
 \end{proof}

\begin{lemma}
  The morphisms $X \tensor \Sprojl_i$ are jointly epic, so 
  $\categoryLL$ satisfies \ref{ax:RS-epi}.
  \end{lemma}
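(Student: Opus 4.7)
The plan is to reduce the joint epicity of $\sequence{X \tensor \Sprojl_i}$ to the fact, already established in \cref{prop:injections-summable}, that $\sequence{\Sprojl_i \compl \prodProj_i}$ is summable with sum $\id_{\Dbimon}$. The key observation is that every morphism in $\categoryLL$ is $\Sigma$-additive (we are in a $\Sigma$-additive category, not merely left $\Sigma$-additive), which means composition on the left distributes freely over countable sums.

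First I would transport the summable family $\sequence{\Sprojl_i \compl \prodProj_i}$ through the tensor. By distributivity of the sum over $\tensor$ in a symmetric monoidal $\Sigma$-additive category (\cref{def:sum-sm}), the family $\sequence{X \tensor (\Sprojl_i \compl \prodProj_i)}$ is summable with sum $X \tensor \id_{\Dbimon} = \id_{X \tensor \Dbimon}$. By functoriality of $\tensor$, each term rewrites as $(X \tensor \Sprojl_i) \compl (X \tensor \prodProj_i)$, so
\[ \sum_{i \in \N} (X \tensor \Sprojl_i) \compl (X \tensor \prodProj_i) = \id_{X \tensor \Dbimon}\,. \]

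Next, given $f, g \in \categoryLL(X \tensor \Dbimon, Y)$ such that $f \compl (X \tensor \Sprojl_i) = g \compl (X \tensor \Sprojl_i)$ for every $i \in \N$, I would use the $\Sigma$-additivity of $f$ and $g$ to compute
\[ f = f \compl \id_{X \tensor \Dbimon} = \sum_{i \in \N} f \compl (X \tensor \Sprojl_i) \compl (X \tensor \prodProj_i) = \sum_{i \in \N} g \compl (X \tensor \Sprojl_i) \compl (X \tensor \prodProj_i) = g \,, \]
where the existence of the intermediate sums is guaranteed by additivity applied to the summable family above. This yields $f = g$ and hence joint epicity of the $X \tensor \Sprojl_i$.

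There is no serious obstacle here: both ingredients (summability of $\sequence{\Sprojl_i \compl \prodProj_i}$ with sum $\id_{\Dbimon}$, and distributivity of sums over the tensor) are hypotheses we have already paid for. One might alternatively invoke \cref{rk:elementary-closed-win-epicity} to reduce the claim to the case $X = \Sone$ using closedness, but the argument above is direct and does not need to appeal to the internal hom; it uses only the cartesian structure on $\Dbimon$, the $\Sigma$-additive structure of $\categoryLL$, and the compatibility of sums with $\tensor$. Note that \ref{ax:comb-lin} plays no role in this particular step; it is the other axioms of the characterization that call for it.
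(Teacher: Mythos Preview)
Your proof is correct and takes essentially the same approach as the paper's: both reduce the claim to \cref{prop:injections-summable} by tensoring the identity $\sum_i \Sprojl_i\compl\prodProj_i=\id_{\Dbimon}$ with $X$ and then distributing $f$ and $g$ over the resulting sum using $\Sigma$-additivity. The paper compresses your two distributivity steps (tensor, then composition) into a single line, but the argument is the same.
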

  
  \begin{proof}
  Assume that $f \compl (X \tensor \Sprojl_i) = g \compl (X \tensor \Sprojl_i)$ for all 
  $i \in \N$. Remember that by \cref{prop:injections-summable}, $\sequence{\Sprojl_i \compl 
  \prodProj_i}$ is summable with sum $\id_{\Dbimon}$. Thus,
  \begin{align*}
  f &= f \compl \left(X \tensor \psum (\Sprojl_i \compl \prodProj_i) \right) \sumsub 
  \psum (f \compl (X \tensor \Sprojl_i \prodProj_i)) \\
  g &= g \compl \left(X \tensor \psum (\Sprojl_i \compl \prodProj_i) \right) \sumsub 
  \psum (g \compl (X \tensor \Sprojl_i \prodProj_i))
  \end{align*}
  So $f = g$, since $f \compl (X \tensor \Sprojl_i \prodProj_i) = 
  g \compl (X \tensor \Sprojl_i \prodProj_i)$ by assumption.
  \end{proof}

  \begin{lemma} \label{lemma:elementary-witness}
  Assume that $\vect f = \sequence{f_i \in \categoryLL(X, Y)}$ is summable. 
  Then $\vect f$ has a witness
  \[ \Stuplet{\vect f} = 
  \sum_{i \in \N} \tensorUnitR_Y \compl (f_i \tensor \prodProj_i) \, . \] 
  \end{lemma}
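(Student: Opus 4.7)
The plan is to verify that the morphism $h := \sum_{i \in \N} \tensorUnitR_Y \compl (f_i \tensor \prodProj_i)$ is well-defined and satisfies the witness equation characterizing $\Stuplet{\vect f}$, namely $h \compl (X \tensor \Sprojl_j) \compl \tensorUnitR_X^{-1} = f_j$ for every $j \in \N$ (see the discussion just before \cref{lemma:unique-sum-representable}).

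First I would show that $h$ exists. Since $\vect f = \sequence{f_i}$ is summable by hypothesis, \ref{ax:comb-lin} applied directly gives that the family $\sequence{\tensorUnitR_Y \compl (f_i \tensor \prodProj_i)}$ is summable, so the sum defining $h$ makes sense.

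Next I would verify the witness equation. Fix $j \in \N$. By left $\Sigma$-additivity (left distributivity of composition over sums),
\[ h \compl (X \tensor \Sprojl_j) \compl \tensorUnitR_X^{-1} \sumsub \sum_{i \in \N} \tensorUnitR_Y \compl (f_i \tensor \prodProj_i) \compl (X \tensor \Sprojl_j) \compl \tensorUnitR_X^{-1}. \]
Using bifunctoriality of $\tensor$, each summand equals $\tensorUnitR_Y \compl (f_i \tensor (\prodProj_i \compl \Sprojl_j)) \compl \tensorUnitR_X^{-1}$. Recall from the definition of $\Sprojl_i = \Win_i = \prodPairing{\Kronecker{k}{i}\id}$ that $\prodProj_i \compl \Sprojl_j = \Kronecker{i}{j}\id_1$. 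Thus for $i = j$ the summand reduces, by naturality of $\tensorUnitR$, to $f_j$; for $i \neq j$ it becomes $\tensorUnitR_Y \compl (f_i \tensor 0) \compl \tensorUnitR_X^{-1}$, which is zero by the absorption property in \ref{ax:RS-mon}.

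Finally, by \cref{prop:zero-neutral}, erasing the zero summands reduces the infinite sum to the one-element family $(f_j)$, which is summable with sum $f_j$ by \ref{ax:unary}. This yields $h \compl (X \tensor \Sprojl_j) \compl \tensorUnitR_X^{-1} = f_j$, so $h$ is the required witness $\Stuplet{\vect f}$. There is no real obstacle here; the proof is a direct book-keeping argument combining \ref{ax:comb-lin}, left distributivity, bifunctoriality, the definition of $\Sprojl_i$, absorption of $0$ under $\tensor$, and the neutrality of $0$ in the $\Sigma$-monoid.
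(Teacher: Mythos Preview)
Your proof is correct and follows essentially the same approach as the paper's own proof, just with more of the routine details made explicit. The paper's version compresses the computation into a single displayed line invoking \ref{ax:unary}, while you spell out the bifunctoriality step, the absorption $f_i \tensor 0 = 0$, and the use of \cref{prop:zero-neutral} to discard zero summands; these are exactly the ingredients the paper's terse equality relies on.
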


  \begin{proof} 
    Assume that $\vect f = \sequence{f_i}$ is summable. 
    Let $h = \sum_{i \in \N} \tensorUnitR_Y \compl (f_i \tensor \prodProj_i) 
    \in \categoryLL(X \tensor \Dbimon, Y)$, which is defined by \ref{ax:comb-lin}.
    Then, thanks to \ref{ax:unary},
    \[ h \compl (X \tensor \Sprojl_j) \compl \tensorUnitR^{-1}_X 
    \sumsub \sum_{i \in \N} \left(\tensorUnitR_Y \compl 
    (f_i \tensor \prodProj_i) \compl (X \tensor \Sprojl_j) 
    \compl \tensorUnitR_X^{-1}\right)
    = \tensorUnitR_Y \compl (f_j \tensor 1) \compl \tensorUnitR_X^{-1} = f_j \, . \qedhere \] 
  \end{proof}

\begin{lemma} \label{lemma:rs-sum}
The category $\categoryLL$ satisfies \ref{ax:RS-sum}.
\end{lemma}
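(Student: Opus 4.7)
The plan is to verify the three parts of \ref{ax:RS-sum}: (i) if $\vect f$ is summable, then a witness $\Stuplet{\vect f}$ exists; (ii) if a witness $\Stuplet{\vect f}$ exists, then $\vect f$ is summable; (iii) in either case, $\sum_{i \in \N} f_i = \Stuplet{\vect f} \compl (X \tensor \Wdiag) \compl \tensorUnitR^{-1}$.

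For (i), this is exactly the content of \cref{lemma:elementary-witness}, which uses \ref{ax:comb-lin} to construct the explicit witness $\Stuplet{\vect f} = \sum_{i \in \N} \tensorUnitR_Y \compl (f_i \tensor \prodProj_i)$. Note that by joint monicity of the $X \tensor \Sprojl_i$ (which follows from the previous lemma establishing \ref{ax:RS-epi}), the witness is unique, so we do not need to worry about ambiguity in the formula for $\sum_i f_i$.

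For (ii), assume $\Stuplet{\vect f}$ exists, so $\Stuplet{\vect f} \compl (X \tensor \Sprojl_i) \compl \tensorUnitR^{-1} = f_i$ for all $i$. By \cref{prop:injections-summable}, $\sequence{\Sprojl_i}$ is summable with sum $\Wdiag$; since $\categoryLL$ is a symmetric monoidal $\Sigma$-additive category by assumption, the sum distributes over $\tensor$, so $\sequence{X \tensor \Sprojl_i}$ is summable with sum $X \tensor \Wdiag$. Then, using that every morphism is $\Sigma$-additive in a $\Sigma$-additive category, $\Stuplet{\vect f}$ is additive, and left distributivity applies post-composition with $\tensorUnitR^{-1}$, yielding that $\sequence{\Stuplet{\vect f} \compl (X \tensor \Sprojl_i) \compl \tensorUnitR^{-1}} = \sequence{f_i}$ is summable. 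This same chain of distributivities gives the sum formula
\[ \sum_{i \in \N} f_i = \Stuplet{\vect f} \compl (X \tensor \Wdiag) \compl \tensorUnitR^{-1}, \]
which settles (iii) in the backward direction.

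For (iii) in the forward direction, plug the explicit formula from \cref{lemma:elementary-witness} into the right-hand side: by left distributivity and distributivity of the sum over $\tensor$,
\[ \Stuplet{\vect f} \compl (X \tensor \Wdiag) \compl \tensorUnitR^{-1} \sumsub \sum_{i \in \N} \tensorUnitR_Y \compl (f_i \tensor (\prodProj_i \compl \Wdiag)) \compl \tensorUnitR^{-1} = \sum_{i \in \N} f_i, \]
using $\prodProj_i \compl \Wdiag = \id_1$ and naturality of the right unitor. No step is genuinely hard; the main subtlety is simply to keep track of which ambient axioms (distributivity over $\tensor$, additivity of all morphisms, \ref{ax:unary}) justify each manipulation of infinite sums.
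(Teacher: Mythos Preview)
Your proof is correct and follows essentially the same approach as the paper: part (i) is \cref{lemma:elementary-witness}, and parts (ii)--(iii) are obtained by distributing the sum $\Wdiag=\sum_i\Sprojl_i$ through $X\tensor\_$ and composition. Your separate computation for (iii) in the forward direction is harmless but redundant, since once a witness exists (whether produced by \cref{lemma:elementary-witness} or assumed), the backward argument already yields the sum formula.
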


\begin{proof} 
Any summable family $\vect f = \sequence{f_i}$ has a witness by \cref{lemma:elementary-witness}.
Conversely, assume that $\Stuplet{\vect f}$ exists.
Using the fact that $\sequence{\Sprojl_i}$ is summable with sum 
$\Wdiag$, and the distributivity of the sum over composition and over the tensor product,
we have that 
\[ \Stuplet{\vect f} \compl (X \tensor \Wdiag) 
\compl \tensorUnitR^{-1}_X 
= \Stuplet{\vect f} \left(X \tensor \left(\sum_{i \in \N} \Sprojl_i \right) \right) 
\compl \tensorUnitR^{-1}_X 
\sumsub \sum_{i \in \N} \left( \Stuplet{\vect f}
\compl (X \tensor \Sprojl_i) \compl \tensorUnitR^{-1}_X \right)
= \sum_{i \in \N} f_i \, . \]
It follows that $\sequence{f_i}$ is summable with sum 
$\Stuplet{\vect f} \compl (X \tensor \Wdiag) 
\compl \tensorUnitR^{-1}_X$.
\end{proof}

It only remains to show \ref{ax:RS-witness}. First, we prove a 
stronger version of \ref{ax:comb-lin} that directly follows from 
\ref{ax:comb-lin}.

\begin{lemma} \label{lemma:comb-lin}
For any summable family $\family<\N \times A>[(i,a)]{f_{i, a}}$, 
$\family<\N \times A>[(i,a)]{\tensorUnitR_Y \compl (f_{i, a}
\tensor \prodProj_i) \in \categoryLL(X \tensor \Dbimon, Y)}$ is summable.
\end{lemma}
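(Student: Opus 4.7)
The plan is to reduce the $\N \times A$-indexed summability statement to the $\N$-indexed axiom \ref{ax:comb-lin} by means of partition associativity \ref{ax:pa}, partitioning the index set $\N \times A$ along its first coordinate. Write $k_{i,a} \defEq \tensorUnitR_Y \compl (f_{i,a} \tensor \prodProj_i)$ for the family whose summability we want to establish. By \ref{ax:pa}, it is enough to show that for every $i \in \N$ the inner family $\family<A>[a]{k_{i,a}}$ is summable, and that the resulting $\N$-indexed family of inner sums $\sequence[i]{\sum_{a \in A} k_{i,a}}$ is summable.

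The first step, summability of the inner family at fixed $i$, is essentially automatic: since $\family<\N \times A>[(i,a)]{f_{i,a}}$ is summable, applying \ref{ax:pa} gives that $\family<A>[a]{f_{i,a}}$ is summable with sum some $g_i \in \cL(X,Y)$, and $\sequence[i]{g_i}$ is itself summable. The operation $u \mapsto \tensorUnitR_Y \compl (u \tensor \prodProj_i)$ is $\Sigma$-additive in $u$: composition on the left is $\Sigma$-additive by $\Sigma$-additivity of $\cL$, and tensoring on the right by a fixed morphism preserves summable sums by \cref{def:sum-sm} (as $\cL$ is symmetric monoidal $\Sigma$-additive, which holds by assumption). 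Therefore $\family<A>[a]{k_{i,a}}$ is summable and $\sum_{a \in A} k_{i,a} = \tensorUnitR_Y \compl (g_i \tensor \prodProj_i)$.

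The second step is precisely the content of \ref{ax:comb-lin} applied to the summable family $\sequence[i]{g_i}$: it tells us that $\sequence[i]{\tensorUnitR_Y \compl (g_i \tensor \prodProj_i)}$ is summable, i.e.\ $\sequence[i]{\sum_{a \in A} k_{i,a}}$ is summable. Invoking the converse direction of \ref{ax:pa} (``if the iterated sum exists, so does the joint sum''), we conclude that $\family<\N \times A>[(i,a)]{k_{i,a}}$ is summable, as required.

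I do not expect a significant obstacle here: the proof is a bookkeeping exercise combining partition associativity with the distributivity of $\Sigma$-sums over composition and over $\tensor$. The only subtle point is that \ref{ax:comb-lin} is stated only for $\N$-indexed families, so one must genuinely carry out the reduction via \ref{ax:pa} rather than quoting \ref{ax:comb-lin} directly on the $\N\times A$-indexed family.
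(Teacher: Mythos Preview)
Your proposal is correct and follows essentially the same approach as the paper: partition along the $\N$-coordinate via \ref{ax:pa}, use distributivity of sums over $\tensor$ and composition to identify the inner sums as $\tensorUnitR_Y \compl (g_i \tensor \prodProj_i)$ with $g_i = \sum_{a \in A} f_{i,a}$, apply \ref{ax:comb-lin} to the summable family $\sequence{g_i}$, and conclude by \ref{ax:pa}. The paper presents the same steps in a slightly different order but the argument is identical.
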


\begin{proof}
By \ref{ax:pa}, $\sequence{\sum_{a \in A} f_{i,a}}$ is summable, so by 
\ref{ax:comb-lin} 
$\sequence{\tensorUnitR_X \compl \left((\sum_{a \in A} f_{i,a}) 
\tensor \prodProj_i \right) }$ is also summable. 
But $\tensorUnitR_X \compl \left((\sum_{a \in A} f_{i,a}) 
\tensor \prodProj_i \right) \sumsub 
\sum_{a \in A} \left( \tensorUnitR_X \compl (f_{i,a} \tensor \prodProj_i) \right)$.
So by \ref{ax:pa} again, $\family<\N \times A>[(i,a)]
{\tensorUnitR_Y \compl (f_{i, a}\tensor \prodProj_i)}$ is summable.
\end{proof}

\begin{lemma} \label{lemma:rs-witness}
The category $\categoryLL$ satisfies \ref{ax:RS-witness}.
\end{lemma}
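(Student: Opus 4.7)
The plan is to build a witness for each $h_a$ explicitly by slicing the doubly-indexed summability hypothesis along $\N$, and then recovering summability of the family $\family{h_a}$ by a repartition argument via \ref{ax:pa}. All the ingredients are already available: Lemma~\ref{lemma:elementary-witness} tells us what a canonical witness looks like, and Lemma~\ref{lemma:comb-lin} guarantees the summability we need to form it at the level of $\cL(X\tensor\Dbimon,Y)$ rather than just $\cL(X,Y)$.

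More concretely, I will set $f_{a,j} \defEq h_a \compl (X \tensor \Sprojl_j) \compl \tensorUnitR^{-1} \in \cL(X,Y)$, so that by hypothesis the family $\family<A\times\N>[(a,j)]{f_{a,j}}$ is summable. Lemma~\ref{lemma:comb-lin} then promotes this to summability of $\family<A\times\N>[(a,j)]{\tensorUnitR_Y \compl (f_{a,j} \tensor \prodProj_j)}$ in $\cL(X \tensor \Dbimon, Y)$. Fixing $a \in A$ and restricting to the subfamily indexed by $\N$ (which is summable by \ref{ax:pa}), Lemma~\ref{lemma:elementary-witness} yields the identity $\sum_{j \in \N} \tensorUnitR_Y \compl (f_{a,j} \tensor \prodProj_j) = \Stuplet{\sequence{f_{a,j}}_j}$, i.e.\ a morphism whose composition with $(X \tensor \Sprojl_k)\compl\tensorUnitR^{-1}$ gives back $f_{a,k}$.

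The key observation is that $h_a$ itself satisfies this characterization by the very definition of $f_{a,k}$, so the joint epicity of the $X \tensor \Sprojl_k$ (established just before Lemma~\ref{lemma:elementary-witness} from \ref{ax:comb-lin} and summability of $\sequence{\Sprojl_i \compl \prodProj_i}$) forces
\[ h_a \;=\; \sum_{j \in \N} \tensorUnitR_Y \compl (f_{a,j} \tensor \prodProj_j) \, . \]
Finally, applying \ref{ax:pa} to the partition $A \times \N = \bigsqcup_{a \in A} \{a\} \times \N$ of the summable family $\family<A \times \N>[(a,j)]{\tensorUnitR_Y \compl (f_{a,j}\tensor \prodProj_j)}$ yields summability of $\family[a]{h_a}$, which is exactly \ref{ax:RS-witness}.

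I do not expect a serious obstacle; the only point that requires care is the uniqueness-of-witness step that identifies $h_a$ with the explicit sum, which hinges on joint epicity of the $X\tensor\Sprojl_k$ — a fact already in hand at this stage of the proof. Everything else is bookkeeping with \ref{ax:pa}.
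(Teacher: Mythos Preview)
Your proposal is correct and follows essentially the same approach as the paper's proof: define $f_{a,j}$, apply Lemma~\ref{lemma:comb-lin} to obtain summability of the family $\family<A\times\N>[(a,j)]{\tensorUnitR_Y \compl (f_{a,j}\tensor\prodProj_j)}$, identify $h_a$ with $\sum_{j}\tensorUnitR_Y \compl (f_{a,j}\tensor\prodProj_j)$ via Lemma~\ref{lemma:elementary-witness} and uniqueness of the witness, and conclude by \ref{ax:pa}. Your explicit invocation of joint epicity for the uniqueness step is a bit more detailed than the paper's presentation, but the argument is the same.
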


\begin{proof}
Assume that $\family<I>[i]{h_i \in \categoryLL(X \tensor \Dbimon, Y)}$
is such that 
$\family<I \times \N>[(i, j)]{h_i \compl (X \tensor \Sprojl_j) 
\compl \tensorUnitR^{-1}}$ is summable.
Let $f_{i, j} = h_i \compl (X \tensor \Sprojl_j) 
\compl \tensorUnitR^{-1}$.
Then by \cref{lemma:comb-lin}, 
$\family<I \times \N>[(i, j)]{\tensorUnitR_Y \compl (f_{i, j} \tensor \prodProj_j)}$
is summable.
But it follows from \cref{lemma:elementary-witness} that for all $i$,
\[ h_i = \sum_{j \in \N} \tensorUnitR_Y \compl (f_{i, j} \tensor \prodProj_j) \, .\]
Thus, $\family<I>[i]{h_i}$ is summable by \ref{ax:pa}.
\end{proof}

This concludes the proof of the reverse implication of 
\cref{thm:representably-additive-characterization}.

\subsection{The bimonoid structure of \(\Dbimon\)}

\label{sec:bimonoid}

In this section, $\categoryLL$ is a representably $\Sigma$-additive category.
Our goal is to equip $\Dbimon$ with a bimonoid
structure which, as we will see in \cref{sec:bimonoid-bimonad-mate}, 
is deeply related to the bimonad structure on $\S
= \Dbimon \linarrow \_$
introduced in~\cref{sec:bimonad}.

\begin{definition}
A comonoid is a triple %
\(C=(\Comonca C,\Comonu_C,\Comonm_C)\) where the counit %
\(\Comonu_C\in\cL(\Comonca C,\Sone)\) and the comultiplication %
\(\Comonm_C\in\cL(\Comonca C,\Comonca C\Times\Comonca C)\) satisfy the
following commutations
\begin{equation} \label{eq:commonoid}
  \begin{tikzcd}
    \Comonca C \ar[r,"\Comonm_C"] \ar[dr,swap,"\Invp{\Leftu_{\Comonca
        C}}"] & {\Comonca C}\Times{\Comonca C}
    \ar[d,"{\Comonu_C}\Times{\Comonca C}"]
    \\
    & \Tens\Sone{\Comonca C}
  \end{tikzcd}
  \Diagsep
  \begin{tikzcd}
    \Comonca C \ar[rr,"\Comonm_C"] \ar[d,swap,"\Comonm_C"]
    &[1em]&[1em] \Tens{\Comonca C}{\Comonca C} \ar[d,"\Tens{\Comonca
      C}{\Comonm_C}"]
    \\
    \Tens{\Comonca C}{\Comonca C} \ar[r,"\Tens{\Comonm_C}{\Comonca
      C}"] & \Tens{\Tensp{\Comonca C}{\Comonca C}}{\Comonca C}
    \ar[r,"\Assoc_{\Comonca C,\Comonca C,\Comonca C}"] &
    \Tens{\Comonca C}{\Tensp{\Comonca C}{\Comonca C}}
  \end{tikzcd}
\end{equation}
A comonoid is cocommutative if the following diagram commutes.
\begin{equation} \label{eq:cocommutative}
\begin{tikzcd}
  \Comonca C \ar[r,"\Comonm_C"] \ar[dr,swap,"\Comonm_C"] &
  \Tens{\Comonca C}{\Comonca C} \ar[d,"\Sym_{\Comonca C,\Comonca
    C}"]
  \\
  & \Tens{\Comonca C}{\Comonca C}
\end{tikzcd}
\end{equation}
A comonoid morphism from a comonoid \(C\) to a comonoid \(D\) is an %
\(f\in\cL(\Comonca C,\Comonca D)\) such that the two following
diagrams commute
\begin{equation} \label{eq:comonoid-morphism}
  \begin{tikzcd}
    \Comonca C
    \ar[r,"f"]
    \ar[rd,swap,"\Comonu_C"]
    &
    \Comonca D
    \ar[d,"\Comonu_D"]\\
    &
    \Sone
  \end{tikzcd}
  \Diagsep
  \begin{tikzcd}
    \Comonca C
    \ar[d,swap,"\Comonm_C"]
    \ar[r,"f"]
    &
    \Comonca D
    \ar[d,"\Comonm_D"]\\
    \Tens{\Comonca C}{\Comonca C}
    \ar[r,"\Tens ff"']
    &
    \Tens{\Comonca D}{\Comonca D}
  \end{tikzcd}
\end{equation}
\end{definition}
\begin{definition} 
Dually, a monoid is a triple \(M=(\Monca M,\Monu_M,\Monm_M)\) where the unit %
\(\Monu_M\in\cL(\Monca M,\Sone)\) and the comultiplication %
\(\Monm_M\in\cL(\Monca M,\Monca M\Times\Monca M)\) satisfy commutations 
dual to \cref{eq:commonoid}. 
A monoid is commutative if it satisfies a commutation dual to
\cref{eq:cocommutative}. 
A morphism from a monoid $M$ to 
a monoid $N$ is an $f \in \cL(\Comonca M, \Comonca N)$ which satisfies
commutations dual to \cref{eq:comonoid-morphism}.
\end{definition}

\begin{definition}
A bicommutative bimonoid is a tuple $(A, \Monu_A,\Monm_A, \Comonu_A,\Comonm_A)$ 
such that $(A, \Monu_A, \Monm_A)$ is a commutative monoid, $(A, \Comonu_A, \Comonm_A)$ is 
a commutative comonoid, and 
such that one of the following equivalent assertion holds.
\begin{enumerate}
  \item $\Monu_A$ and $\Monm_A$ are morphism of comonoids;
  \item $\Comonu_A$ and $\Comonm_A$ are morphisms of monoids.
\end{enumerate}
\end{definition}
The corresponding diagrams will be given in the proof of 
\cref{thm:Dbimon-bimonoid}.

We equip \(\Dbimon\) with a co commutative comonoid structure.
The counit is the projection \(\Dbimoncu \in\cL(\Dbimon,\Sone)\).
We define a comultiplication
\(\Dbimoncm\in\cL(\Dbimon,\Dbimon\Times\Dbimon)\) which satisfies
\(\Dbimoncm\Compl\Win_n\Compl\Leftu_{\Sone}
=\sum_{i=0}^n\Win_i\Times\Win_{n-i}\).
As we saw in \cref{lemma:representable-projection-summable},
$\sequence[n]{\Sprojl_n}$ is summable of sum $\Dbimonu$.
Then, 
\[ \Dbimonu \tensor \Dbimonu = \left( \sum_{i \in \N} \Sprojl_i \right) 
\tensor \left(\sum_{j \in \N} \Sprojl_j \right) 
\sumsub \sum_{i,j \in \N}(\Sprojl_i \tensor \Sprojl_j)\]
by \ref{ax:S-sm-dist}. Thus, by \ref{ax:pa} 
$(\sum_{i=0}^n \Sprojl_i \tensor \Sprojl_{n-i})_{n \in \N}$ is 
defined and summable.
This means that there exists $h \in \categoryLL(1 \tensor 1 \tensor \Dbimon, 
\Dbimon \tensor \Dbimon)$ such that 
\[ h \compl (1 \tensor 1 \tensor \Sprojl_n) \compl \tensorUnitR_{1 \tensor 1}^{-1} = 
\left( \sum_{i=0}^n \Sprojl_i \tensor \Sprojl_{n-i} \right) . \] 
Let  \[ \Dbimoncm =
\begin{tikzcd}
	\Dbimon & {1 \tensor \Dbimon} & {1 \tensor 1 \tensor \Dbimon} & {\Dbimon \tensor \Dbimon .}
	\arrow["{\tensorUnitL_{\Dbimon}^{-1}}", from=1-1, to=1-2]
	\arrow["{\tensorUnitL_{1 \tensor \Dbimon}^{-1}}", from=1-2, to=1-3]
	\arrow["h", from=1-3, to=1-4]
\end{tikzcd} \]
Then it is straightforward to check that 
\(\Dbimoncm\Compl\Win_n
=(\sum_{i=0}^n\Win_i\Times\Win_{n-i}) \Compl\Leftu_{\Sone}^{-1}\).
The joint epicity of the \(\Win_n\)'s implies that
\(\Dbimoncm\) is uniquely characterized by these equations.

\begin{lemma}
  The triple \((\Dbimon,\Dbimoncu,\Dbimoncm)\) is a cocommutative comonoid.
\end{lemma}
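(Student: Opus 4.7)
The plan is to verify each of the three defining diagrams of a cocommutative comonoid by post-composing with each $\Win_n$ and exploiting the joint epicity of the family $(\Win_n)_{n\in\Nat}$, which reduces the problem to equalities between explicit (finite) sums of tensors of injections.

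First, for the counit axiom $(\Dbimoncu\Times\Dbimon)\Compl\Dbimoncm = \Invp{\Leftu_{\Dbimon}}$, I would compute
\[
(\Dbimoncu\Times\Dbimon)\Compl\Dbimoncm\Compl\Win_n
=(\Dbimoncu\Times\Dbimon)\Compl\left(\sum_{i=0}^n\Win_i\Times\Win_{n-i}\right)\Compl\Invp{\Leftu_{\Sone}}
\]
and distribute $(\Dbimoncu\Times\Dbimon)$ inside the sum using left distributivity together with distributivity of $\Times$ over sums (which holds by \Cref{thm:representably-sigma-additive-monoidal}). Since $\Dbimoncu=\Wproj_0$, we have $\Dbimoncu\Compl\Win_i=\id$ if $i=0$ and $0$ otherwise, so only the term $i=0$ survives, yielding $\Invp{\Leftu_\Dbimon}\Compl\Win_n$ by naturality of $\Leftu$. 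The symmetric counit axiom is handled analogously. Joint epicity of the $\Win_n$'s then closes the argument.

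For coassociativity, I would compute both composites post-composed with $\Win_n$. On the left branch, one iteration of the defining equation of $\Dbimoncm$ gives a sum indexed by pairs $(i,j)$ with $i+j=n$; applying $\Dbimoncm$ again in the second factor and distributing yields a sum indexed by triples $(i,j,k)$ with $i+j+k=n$, of terms $\Win_i\Times(\Win_j\Times\Win_k)$ (up to unitors and naturality). The right branch, by the same computation, gives the same family of triples but grouped as $(\Win_i\Times\Win_j)\Times\Win_k$. The associator turns the second into the first, and partition-associativity \ref{ax:pa} together with the fact that $\{(i,j,k)\in\Nat^3\mid i+j+k=n\}$ admits the two natural partitions (first by $i$, then by $k$) ensures equality. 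Cocommutativity is proved similarly: post-composing with $\Win_n$ and distributing $\Sym$ inside the finite sum yields $\sum_{i=0}^n\Win_{n-i}\Times\Win_i$, which equals $\sum_{i=0}^n\Win_i\Times\Win_{n-i}$ by reindexing through the bijection $i\mapsto n-i$ (\Cref{prop:reindexing-bijection}), up to the coherence identity $\Sym_{\Sone,\Sone}\Compl\Invp{\Leftu_\Sone}=\Invp{\Leftu_\Sone}$.

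The main obstacle is bookkeeping: every manipulation inside the finite sums must invoke either left distributivity, bilinearity of $\Times$ over sums (\Cref{def:sum-sm}), or additivity/naturality of the coherence isomorphisms. None of these are deep, but they must be applied carefully so that the sums remain well-defined at each step (here all sums involved are finite, so summability is automatic by \ref{ax:unary} and \ref{ax:pa}). Once these manipulations are in place, joint epicity of $(\Win_n)_{n\in\Nat}$ converts each pointwise identity into the desired commuting diagram, establishing the comonoid and cocommutativity axioms.
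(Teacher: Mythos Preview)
Your proposal is correct and follows essentially the same approach as the paper: post-compose each diagram with $\Win_n$, expand using the defining equation $\Dbimoncm\Compl\Win_n\Compl\Leftu_{\Sone}=\sum_{i=0}^n\Win_i\Times\Win_{n-i}$, distribute and simplify, then conclude by joint epicity of the $\Win_n$'s. The paper's proof is just a slightly more explicit write-up of exactly these computations (it happens to verify the right counit diagram rather than the left, and writes out the coassociativity chain in full), with the same use of $\Leftu_{\Sone}=\Rightu_{\Sone}$ for cocommutativity.
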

\begin{proof}
  We first prove the counit axiom, which corresponds to the following 
  diagram:
  \[
    \begin{tikzcd}
      \Dbimon
      \ar[r,"\Dbimoncm"]
      \ar[dr,swap,"\Inv{\Rightu}"]
      &
      \Dbimon\Times\Dbimon
      \ar[d,"\Dbimon\Times\Dbimoncu"]\\
      &
      \Dbimon\Times\Sone
    \end{tikzcd}
  \]
 Let \(n\in\Nat\), we have
  \begin{align*}
    (\Dbimon\Times\Dbimoncu)\Compl\Dbimoncm\Compl\Win_n
    &=(\Dbimon\Times\Dbimoncu)\Compl
    \sum_{i=0}^n\left(\Win_i\Times\Win_{n-i}\right)\Inv{\Leftu}\\
    &=\sum_{i=0}^n(\Win_i\Times(\Dbimoncu\Compl\Win_{n-i}))\Inv{\Leftu}
  \end{align*}
  but
  \(\Dbimoncu\Compl\Win_{n-i}=\Wproj_0\Compl\Win_{n-i}
  =\Kronecker{n-i}{0}\Id_{\Sone}\) %
  and hence %
  \((\Dbimon\Times\Dbimoncu)\Compl\Dbimoncm\Compl\Win_n
  =(\Win_n\Times\Id_{\Sone})\Compl\Inv{\Leftu_{\Sone}}
  =\Inv{\Leftu_{\Dbimon}}\).
  So the diagram commutes by joint epicity of the \(\Win_n\)'s.

  Next we prove the associativity axiom, which corresponds to the 
  following diagram:
  \[
    \begin{tikzcd}
      \Dbimon
      \ar[r,"\Dbimoncm"]
      \ar[d,swap,"\Dbimoncm"]
      &[-2em]
      \Dbimon\Times\Dbimon
      \ar[r,"\Dbimoncm\Times\Dbimon"]
      &
      (\Dbimon\Times\Dbimon)\Times\Dbimon
      \ar[d,"\Assoc"]
      \\
      \Dbimon\Times\Dbimon
      \ar[rr,"\Dbimon\Times\Dbimoncm"]
      &&
      \Dbimon\Times(\Dbimon\Times\Dbimon)
    \end{tikzcd}
  \]
  Let \(n\in\Nat\), we have
  \begin{align*}
    \Assoc\Compl(\Dbimoncm\Times\Dbimon)\Compl\Dbimoncm\Compl\Win_n
    &=\Assoc
      \Compl(\Dbimoncm\Times\Dbimon)
      \Compl\left(\sum_{i=0}^n\Win_i\Times\Win_{n-i}\right)
      \Compl\Inv{\Leftu}\\
    &=\Assoc
      \Compl\left(\sum_{i=0}^n(\Dbimoncm\Compl\Win_i)\Times\Win_{n-i}\right)
      \Compl\Inv{\Leftu}\\
    &=\Assoc
      \Compl\left(\sum_{i=0}^n\left(\sum_{j=0}^i\Win_j\Times\Win_{i-j}\right)\Times\Win_{n-i}\right)
      \Compl(\Inv{\Leftu}\Times\Sone)\Compl\Inv{\Leftu}\\
    &\sumsub\Assoc
      \Compl\left(\sum_{\Biind{i,j,k\in\Nat}{i+j+k=n}}
      (\Win_i\Times\Win_j)\Times\Win_k\right)
      \Compl(\Inv{\Leftu}\Times\Sone)\Compl\Inv{\Leftu} \\
    &=\left(\sum_{\Biind{i,j,k\in\Nat}{i+j+k=n}}
      \Win_i\Times(\Win_j\Times\Win_k)\right)
      \Compl(\Sone\Times\Inv{\Leftu})\Compl\Inv{\Leftu}\\
    &=(\Dbimon\Times\Dbimoncm)\Compl\Dbimoncm\Compl\Win_n
      \text{\quad by a similar computation}
  \end{align*}
  and the announced diagram commutes by joint epicity of the
  \(\Win_n\)'s.

  Finally, we prove the commutativity of the comonoid, which 
  corresponds to the following diagram:
  \[
    \begin{tikzcd}
      \Dbimon
      \ar[r,"\Dbimoncm"]
      \ar[rd,swap,"\Dbimoncm"]
      &
      \Dbimon\Times\Dbimon
      \ar[d,"\Sym"]
      \\
      &
      \Dbimon\Times\Dbimon
    \end{tikzcd}
  \]
  This is done similarly using the fact that %
  \(\Sym\Compl\left(\sum_{i=0}^n\Win_i\Times\Win_{n-i}\right)\Inv{\Leftu}
  =\left(\sum_{i=0}^n\Win_{n-i}\Times\Win_i\right)\Sym\Compl\Inv{\Leftu}\) and
  that %
  \(\Sym\Compl\Inv{\Leftu}=\Inv{\Leftu}\) because
  \(\Leftu_{\Sone}=\Rightu_{\Sone}\).
\end{proof}

Next we equip \(\Dbimon\) with a commutative monoid structure.
The unit of the monoid is the diagonal \(\Wdiag\in\cL(\Sone,\Dbimon)\).
Next for each \(i\in\Nat\) we have
\(\Leftu\Compl(\Wproj_i\Times\Wproj_i)
\in\cL(\Dbimon\Times\Dbimon,\Sone)\), and we set %
\(\Dbimonm
=\Tuple{\Leftu\Compl(\Wproj_i\Times\Wproj_i)}_{i\in\Nat}
\in\cL(\Dbimon\Times\Dbimon,\Dbimon)\).
Notice that %
\(\Dbimonm\Compl(\Win_i\Times\Win_j)
=\Kronecker ij\Win_i\Leftu_{\Sone}
=\Kronecker ij\Win_i\Rightu_{\Sone}\) %
for all \(i,j\in\Nat\), which fully characterizes \(\Dbimonm\) by
\Cref{lemma:elem-win-epic-tens-gen}.

\begin{lemma}
  The triple \((\Dbimon,\Dbimonu,\Dbimonm)\) is a commutative monoid in \(\cL\).
\end{lemma}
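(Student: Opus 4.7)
The plan is to verify each of the four axioms (left unit, right unit, associativity, commutativity) by exploiting the joint epicity of the families $(\Win_{i_1}\Times\cdots\Times\Win_{i_n})_{\Vect i\in\Nat^n}$ provided by \Cref{lemma:elem-win-epic-tens-gen}. It suffices to precompose each side of the desired commutation by an arbitrary such tensor of injections and reduce to equalities involving only Kronecker symbols and the unitors/symmetry of $\Sone$ (which coincide because $\Leftu_\Sone = \Rightu_\Sone$ and $\Sym_{\Sone,\Sone} = \Inv{\Leftu_\Sone}\Compl\Rightu_\Sone$). Throughout, the basic fact $\Wproj_i\Compl\Wdiag = \Id_\Sone$ (which follows from $\Wdiag = \Tuple{\Id_\Sone}_{i\in\Nat}$) and the characterization $\Dbimonm\Compl(\Win_i\Times\Win_j) = \Kronecker ij\Win_i\Leftu_\Sone$ do all the work.

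For the right unit axiom $\Dbimonm\Compl(\Dbimon\Times\Wdiag)\Compl\Inv{\Rightu_\Dbimon} = \Id_\Dbimon$, I would precompose with $\Win_j\Compl\Inv{\Leftu_\Sone}$ on $\Sone$; using uniqueness of tuples and $\Wproj_i\Compl\Wdiag = \Id_\Sone$, one computes
\[
  \Wproj_i\Compl\Dbimonm\Compl(\Win_j\Times\Wdiag)
  = \Leftu_\Sone\Compl(\Wproj_i\Win_j\Times\Wproj_i\Wdiag)
  = \Kronecker ij\,\Leftu_\Sone,
\]
so the composite coincides with $\Wproj_i\Compl\Win_j\Compl\Leftu_\Sone$ for every $i$, and joint monicity of the $\Wproj_i$ (or alternatively uniqueness in the universal property of the product) concludes. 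The left unit is symmetric. For commutativity, precomposing $\Dbimonm\Compl\Sym_{\Dbimon,\Dbimon}$ with $\Win_i\Times\Win_j$ and using naturality of $\Sym$ gives $\Dbimonm\Compl(\Win_j\Times\Win_i)\Compl\Sym_{\Sone,\Sone} = \Kronecker ji\Win_j\Leftu_\Sone\Compl\Sym_{\Sone,\Sone}$; since $\Kronecker ij = \Kronecker ji$ forces $i=j$ on any nonzero summand and $\Sym_{\Sone,\Sone}$ acts as the identity on $\Sone\Times\Sone$ modulo the unitor identifications, this equals $\Kronecker ij\Win_i\Leftu_\Sone = \Dbimonm\Compl(\Win_i\Times\Win_j)$, and \Cref{lemma:elem-win-epic-tens-gen} concludes.

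For associativity, I would precompose both sides of
\[
  \Dbimonm\Compl(\Dbimon\Times\Dbimonm)\Compl\Assoc_{\Dbimon,\Dbimon,\Dbimon}
  = \Dbimonm\Compl(\Dbimonm\Times\Dbimon)
\]
with $(\Win_i\Times\Win_j)\Times\Win_k$. The right-hand side computes to $\Kronecker ij\Kronecker ik\,\Win_i\Leftu_\Sone\Leftu_\Sone$ (after absorbing $\Kronecker ij\Win_i = \Kronecker ij\Win_j$ into the second application of $\Dbimonm$), while the left-hand side, after pushing the associator across using naturality and the characterization of $\Dbimonm$, yields $\Kronecker jk\Kronecker ij\,\Win_i$ composed with unitors on $\Sone$. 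Both expressions are nonzero only when $i=j=k$ and then agree, because the unitor-associator coherence on $\Sone\Times\Sone\Times\Sone$ forces the two isomorphisms to $\Sone$ to coincide.

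The main obstacle will be the associativity bookkeeping: one has to be careful that the two candidate morphisms $\Sone\Times\Sone\Times\Sone \to \Sone$ built out of $\Leftu_\Sone$ and $\Rightu_\Sone$ in different orders (and composed with an associator) agree, but this follows from Mac Lane's coherence theorem for symmetric monoidal categories, so no substantive diagram chase is actually needed beyond applying the characterization of $\Dbimonm$ on injections.
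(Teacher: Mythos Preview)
Your proposal is correct and follows essentially the same approach as the paper: verify each axiom by precomposing with tensors of the injections $\Win_i$ and invoking \Cref{lemma:elem-win-epic-tens-gen}. The one minor difference is in the unit axiom, where the paper expands $\Wdiag=\sum_{j\in\Nat}\Win_j$ and distributes $\Dbimonm$ over the sum (using the characterization $\Dbimonm\Compl(\Win_i\Times\Win_j)=\Kronecker ij\Win_i\Leftu_\Sone$), whereas you work directly from the tupling definition $\Wproj_i\Compl\Dbimonm=\Leftu\Compl(\Wproj_i\Times\Wproj_i)$ together with $\Wproj_i\Compl\Wdiag=\Id_\Sone$; your route is arguably more elementary since it avoids invoking summability, while the paper's route keeps the argument uniform with its treatment of the comonoid structure.
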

\begin{proof}
  We first prove the unit axiom, which corresponds to the following 
  diagram:
  \[
    \begin{tikzcd}
      \Dbimon\Times\Sone
      \ar[r,"\Dbimon\Times\Dbimonu"]
      \ar[dr,swap,"\Rightu"]
      &
      \Dbimon\Times\Dbimon
      \ar[d,"\Dbimonm"]
      \\
      &
      \Dbimon
    \end{tikzcd}
  \]
  so let \(i\in\Nat\), we have
  \begin{align*}
    \Dbimonm\Compl(\Dbimon\Times\Dbimonu)\Compl(\Win_i\Times\Sone)
    &=\Dbimonm\Compl(\Win_i\Times\Wdiag)\\
    &=\Dbimonm\Compl(\Win_i\Times(\sum_{j\in\Nat}\Win_j))\\
    &=\Dbimonm\Compl\sum_{j\in\Nat}\Win_i\Times\Win_j\\
    &=\sum_{j\in\Nat}\Kronecker ij\Win_i\Rightu_{\Sone}\\
    &=\Rightu_{\Dbimon}\Compl(\Win_i\Times\Sone)
      \text{\quad by naturality of }\Rightu
  \end{align*}
  which proves the commutation by \Cref{lemma:elem-win-epic-tens-gen}.

  Next we prove the associativity axiom, which corresponds to the 
  following diagram. 
  \[
    \begin{tikzcd}
      (\Dbimon\Times\Dbimon)\Times\Dbimon
      \ar[r,"\Assoc"]
      \ar[d,swap,"\Dbimonm\Times\Dbimon"]
      &
      \Dbimon\Times(\Dbimon\Times\Dbimon)
      \ar[r,"\Dbimon\Times\Dbimonm"]
      &
      \Dbimon\Times\Dbimon
      \ar[d,"\Dbimonm"]
      \\
      \Dbimon\Times\Dbimon
      \ar[rr,"\Dbimonm"]
      &&
      \Dbimon
    \end{tikzcd}
  \]
  so let \(i,j,k\in\Nat\), we have
  \begin{align*}
    \Dbimonm
    \Compl(\Dbimon\Times\Dbimonm)
    \Compl\Assoc
    \Compl((\Win_i\Times\Win_j)\Times\Win_k)
    &=\Dbimonm
    \Compl(\Dbimon\Times\Dbimonm)
      \Compl(\Win_i\Times(\Win_j\Times\Win_k))\\
    &=\Dbimonm
      \Compl(\Win_i\Times(\Kronecker jk\Win_j\Compl\Leftu))\\
    &=\Kronecker jk\Dbimonm
      \Compl(\Win_i\Times\Win_j)
      \Compl(\Sone\Times\Leftu)\\
    &=\Kronecker jk\Kronecker ij\Win_i
      \Compl\Leftu\Compl(\Sone\Times\Leftu)\\
    &=\Dbimonm
      \Compl(\Dbimonm\Times\Dbimon\Compl)
      \Compl((\Win_i\Times\Win_j)\Times\Win_k)
  \end{align*}
  by a similar computation, so the diagram commutes by
  \Cref{lemma:elem-win-epic-tens-gen}.
  Finally, we prove the commutativity of the monoid, which 
  corresponds to the following diagram.
  \[
    \begin{tikzcd}
      \Dbimon\Times\Dbimon
      \ar[r,"\Sym"]
      \ar[rd,swap,"\Dbimonm"]
      &
      \Dbimon\Times\Dbimon
      \ar[d,"\Dbimonm"]\\
      &
      \Dbimon
    \end{tikzcd}
  \]
  using again the fact that \(\Leftu_{\Sone}=\Rightu_{\Sone}\).
\end{proof}

\begin{theorem} \label{thm:Dbimon-bimonoid}
  The tuple \((\Dbimon,\Dbimonu,\Dbimonm,\Dbimoncu,\Dbimoncm)\) is a
  bicommutative bimonoid.
\end{theorem}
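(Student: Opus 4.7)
The plan is to verify the four bimonoid compatibility diagrams, since we have already established that $(\Dbimon, \Dbimonu, \Dbimonm)$ is a commutative monoid and $(\Dbimon, \Dbimoncu, \Dbimoncm)$ is a cocommutative comonoid. The required diagrams are: (i) $\Dbimoncu \compl \Dbimonu = \Id_{\Sone}$; (ii) the counit $\Dbimoncu$ is a monoid morphism, i.e.\ $\Dbimoncu \compl \Dbimonm = \Leftu \compl (\Dbimoncu \Times \Dbimoncu)$; (iii) the unit $\Dbimonu$ is a comonoid morphism, i.e.\ $\Dbimoncm \compl \Dbimonu = (\Dbimonu \Times \Dbimonu) \compl \Inv{\Leftu}$; and (iv) the Hopf-style compatibility between $\Dbimonm$ and $\Dbimoncm$, expressing that $\Dbimoncm \compl \Dbimonm$ equals the ``interchanged'' composite $(\Dbimonm \Times \Dbimonm) \compl \Assoc^{\pm 1} \compl (\Dbimon \Times \Sym \Times \Dbimon) \compl \Assoc^{\pm 1} \compl (\Dbimoncm \Times \Dbimoncm)$ modulo the structural isomorphisms.

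Items (i), (ii) and (iii) should follow from direct and short computations. For (i), unfolding definitions gives $\Dbimoncu \compl \Dbimonu = \Wproj_0 \compl \Wdiag = \Id_{\Sone}$ since $\Wdiag = \Tuple{\Id_{\Sone}}_{i \in \Nat}$. For (ii), evaluating on $\Win_i \Times \Win_j$ and using $\Dbimonm \compl (\Win_i \Times \Win_j) = \Kronecker{i}{j} \Win_i \compl \Leftu$, together with $\Wproj_0 \compl \Win_i = \Kronecker{i}{0} \Id_{\Sone}$, shows that both sides agree; by the joint epicity of the $(\Win_i \Times \Win_j)_{i,j \in \Nat}$ (an instance of Lemma~\ref{lemma:elem-win-epic-tens-gen} with $X = \Sone$), the equality follows. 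For (iii), composing with $\Leftu_{\Sone}$ on the right and using $\Dbimoncm \compl \Win_n \compl \Leftu = \sum_{i=0}^{n} \Win_i \Times \Win_{n-i}$ with $n = 0$ (since $\Dbimonu$ factors through $\Win_0$ once we have summed over all indices) reduces to checking the equality on the generators, after which summability of $(\Win_n)_{n \in \Nat}$ established in Lemma~\ref{lemma:representable-projection-summable} handles the general diagonal $\Wdiag = \sum_{n \in \Nat} \Win_n$.

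The main obstacle is (iv), the bialgebra-type axiom. The plan is to test both sides on a generic $\Win_i \Times \Win_j$ and use Lemma~\ref{lemma:elem-win-epic-tens-gen} to conclude. The left hand side evaluates to $\Dbimoncm \compl \Dbimonm \compl (\Win_i \Times \Win_j) = \Kronecker{i}{j} \Dbimoncm \compl \Win_i \compl \Leftu$, which is $0$ when $i \neq j$ and equals $(\sum_{k=0}^{i} \Win_k \Times \Win_{i-k}) \compl \Inv{\Leftu} \compl \Leftu$ when $i=j$. The right hand side, after pushing the $\Win$'s through $\Dbimoncm$ via its defining equation and using that $\Dbimonm$ and the monoidal product distribute over countable summability (Theorem~\ref{thm:representably-sigma-additive-monoidal}), expands into a double sum $\sum_{k=0}^{i} \sum_{\ell=0}^{j} (\Dbimonm \Times \Dbimonm) \compl (\Dbimon \Times \Sym \Times \Dbimon) \compl ((\Win_k \Times \Win_{i-k}) \Times (\Win_\ell \Times \Win_{j-\ell})) \compl (\text{unitors})$. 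After moving the $\Win$'s past $\Sym$ (which is just the swap $\Win_{i-k} \Times \Win_\ell \mapsto \Win_\ell \Times \Win_{i-k}$ on the unit) and applying $\Dbimonm \compl (\Win_k \Times \Win_\ell) = \Kronecker{k}{\ell} \Win_k \compl \Leftu$ in each factor, only the diagonal terms $k = \ell$ and $i-k = j-\ell$, i.e.\ $k = \ell$ and $i = j$, survive, yielding $0$ when $i \neq j$ and $\sum_{k=0}^{i} \Win_k \Times \Win_{i-k}$ (modulo unitors) when $i = j$. Both sides therefore coincide, and (iv) follows by joint epicity.

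The only subtle point is handling the structural isomorphisms $\Leftu, \Rightu, \Assoc, \Sym$ cleanly, which is a coherence bookkeeping issue rather than a mathematical one; it is also worth observing that the distributivity of the sum over $\Times$ (hence the interchange of $\sum$ with the ``pasting'' constructions above) is crucial and is exactly what axiom~\ref{ax:S-sm-dist} guarantees for our representably $\Sigma$-additive category.
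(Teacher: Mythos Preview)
Your proposal is correct and follows essentially the same route as the paper's proof: verify the four bimonoid compatibilities directly, using the joint epicity of the $\Win_i$'s (and their tensor products, via Lemma~\ref{lemma:elem-win-epic-tens-gen}) together with the explicit characterizations $\Dbimonm\Compl(\Win_i\Times\Win_j)=\Kronecker ij\Win_i\Compl\Leftu$ and $\Dbimoncm\Compl\Win_n=(\sum_{k=0}^n\Win_k\Times\Win_{n-k})\Compl\Inv\Leftu$. Your treatment of (iv) in particular matches the paper's computation almost line for line.

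One small correction: in your sketch of (iii) you write that ``$\Dbimonu$ factors through $\Win_0$'', which is not right---$\Dbimonu=\Wdiag$, not $\Win_0$. The correct argument (which you also indicate) is to use $\Wdiag=\sum_{n\in\Nat}\Win_n$ from Lemma~\ref{lemma:representable-projection-summable}, push the sum through $\Dbimoncm$ by left distributivity, apply the defining equation of $\Dbimoncm$ at each $\Win_n$, and then compare with $\Dbimonu\Times\Dbimonu=(\sum_i\Win_i)\Times(\sum_j\Win_j)\sumsub\sum_{i,j}\Win_i\Times\Win_j$ using \ref{ax:pa}. This is exactly what the paper does; no testing on generators is needed here since both sides are single morphisms from $\Sone$.
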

\begin{proof}
  First, we prove the compatibility between the unit and the counit:
  \[
    \begin{tikzcd}
      \Sone
      \ar[r,"\Dbimonu"]
      \ar[dr,swap,"\Id"]
      &
      \Dbimon
      \ar[d,"\Dbimoncu"]
      \\
      &
      \Sone
    \end{tikzcd}
  \]
  This holds because \(\Dbimoncu\Compl\Dbimonu=\Id_{\Sone}\).
  Next we prove the compatibility between the unit and the comultiplication
  (we keep the unitors of the tensor $\tensorUnitL$ and $\tensorUnitR$
  implicit):
\[ \begin{tikzcd}
	1 & \Dbimon \\
	& {\Dbimon \tensor \Dbimon}
	\arrow["\Dbimonu", from=1-1, to=1-2]
	\arrow["{\Dbimonu \tensor \Dbimonu}"', from=1-1, to=2-2]
	\arrow["\Dbimoncm", from=1-2, to=2-2]
\end{tikzcd} \]
Remember that 
\[ \Dbimonu \tensor \Dbimonu = \left( \sum_{i \in \N} \Sprojl_i \right) 
\tensor \left(\sum_{j \in \N} \Sprojl_j \right) \sumsub \sum_{i, j \in \N}œ(\Sprojl_i \tensor \Sprojl_j)\]
by \ref{ax:S-sm-dist}, and
\[ \Dbimoncm \compl \Dbimonu = \Dbimoncm \compl \left( \sum_{n \in \N} \Sprojl_n \right) 
= \sum_{n \in \N} (\Dbimoncm \compl \Sprojl_n) 
= \sum_{n \in \N} (\sum_{i \in \N} \Sprojl_i \tensor \Sprojl_{n-i}) \] 
by characterization of $\Dbimoncm$. Those two sums are equal, by \ref{ax:pa}.

Next we prove the compatibility between the counit and the multiplication (again,
the unitors are left implicit):
\[
\begin{tikzcd}
	{\Dbimon \tensor \Dbimon} & \Dbimon \\
	& 1
	\arrow["\Dbimonm", from=1-1, to=1-2]
	\arrow["{\Dbimoncu \tensor \Dbimoncu}"', from=1-1, to=2-2]
	\arrow["\Dbimoncu", from=1-2, to=2-2]
\end{tikzcd} \]
The diagram commutes by the following computation and the joint epicity of the 
$\Sprojl_i \tensor \Sprojl_j$.
\[ \Dbimoncu \compl \Dbimonm \compl (\Sprojl_i \tensor \Sprojl_j)  
= \Dbimoncu \compl (\Kronecker ij\Win_i\Rightu_{\Sone}) 
= \begin{cases} \id \text{ if } i = j = 0 \\
  0 \text{ otherwise}
\end{cases} = (\Dbimoncu \tensor \Dbimoncu) \compl (\Sprojl_i \tensor \Sprojl_j)
\, . \]

  Finally, we prove the compatibility between the multiplication and 
  the comultiplication:
  \[ 
    \begin{tikzcd}
      \Dbimon\Times\Dbimon
      \ar[r,"\Dbimonm"]
      \ar[d,swap,"\Dbimoncm\Times\Dbimoncm"]
      &
      \Dbimon
      \ar[r,"\Dbimoncm"]
      &
      \Dbimon\Times\Dbimon
      \\
      (\Dbimon\Times\Dbimon)\Times(\Dbimon\Times\Dbimon)
      \ar[rr,"\Sym_{2,3}"]
      &&
      (\Dbimon\Times\Dbimon)\Times(\Dbimon\Times\Dbimon)
      \ar[u,swap,"\Dbimonm\Times\Dbimonm"]
    \end{tikzcd}
  \]
  where \(\Sym_{2,3}\) is defined using the canonical isomorphisms
  \(\Assoc\) and \(\Sym\) of the symmetric monoidal structure of \(\cL\), and is
  characterized by
  \(\Sym_{2,3}
  \Compl((\Win_{i_1}\Times\Win_{i_2})\Times(\Win_{i_3}\Times\Win_{i_4}))
  =(\Win_{i_1}\Times\Win_{i_3})\Times(\Win_{i_2}\Times\Win_{i_4})\).
  We have
  \begin{align*}
    \Dbimoncm
    \Compl\Dbimonm
    \Compl(\Win_i\Times\Win_j)
    =\Kronecker ij\Dbimoncm\Compl\Win_i\Compl\Leftu
    =\Kronecker ij\left(\sum_{k=0}^i\Win_k\Times\Win_{i-k}\right)\Compl\Leftu
  \end{align*}
  and
  \begin{align*}
    (\Dbimonm\Times\Dbimonm)
    \Compl\Sym_{2,3}
    &
    \Compl(\Dbimoncm\Times\Dbimoncm)
    \Compl(\Win_i\Times\Win_j)\\
    &=(\Dbimonm\Times\Dbimonm)
    \Compl\Sym_{2,3}
      \Compl\left(\left(\sum_{i_1+i_2=i}\Win_{i_1}\Times\Win_{i_2}\right)
      \Times\left(\sum_{j_1+j_2=j}\Win_{j_1}\Times\Win_{j_2}\right)\right)
      \Compl(\Inv\Leftu\Times\Inv\Leftu)\\
    &=(\Dbimonm\Times\Dbimonm)
      \Compl\Sym_{2,3}
      \Compl\left(\sum_{\Biind{i_1+i_2=i}{j_1+j_2=j}}
      (\Win_{i_1}\Times\Win_{i_2})\Times(\Win_{j_1}\Times\Win_{j_2})\right)
      \Compl(\Inv\Leftu\Times\Inv\Leftu)\\
    &=(\Dbimonm\Times\Dbimonm)
      \Compl\left(\sum_{\Biind{i_1+i_2=i}{j_1+j_2=j}}
      (\Win_{i_1}\Times\Win_{j_1})\Times(\Win_{i_2}\Times\Win_{j_2})\right)
      \Compl\Sym_{2,3}
      \Compl(\Inv\Leftu\Times\Inv\Leftu)\\
    &=\left(\sum_{\Biind{i_1+i_2=i}{j_1+j_2=j}}
      \Kronecker{i_1}{j_1}\Kronecker{i_2}{j_2}
      \Win_{i_1}\Times\Win_{i_2}\right)
      \Compl(\Leftu\Times\Leftu)
      \Compl\Sym_{2,3}
      \Compl(\Inv\Leftu\Times\Inv\Leftu)
  \end{align*}
  which ends the proof that
  \(\Dbimoncm \Compl\Dbimonm
  \Compl(\Win_i\Times\Win_j)=(\Dbimonm\Times\Dbimonm) \Compl\Sym_{2,3}
  \Compl(\Dbimoncm\Times\Dbimoncm) \Compl(\Win_i\Times\Win_j)\) %
  for all \(i,j\in\Nat\) upon observing that if %
  \(\Kronecker{i_1}{j_1}\Kronecker{i_2}{j_2}=1\) then
  \(i=i_1+i_2=j_1+j_2=j\) and
  \((\Leftu_{\Sone}\Times\Leftu_{\Sone}) \Compl\Sym_{2,3}
  \Compl(\Inv\Leftu_{\Sone}\Times\Inv\Leftu_{\Sone})
  =\Id_{(\Sone\Times\Sone)\Times(\Sone\Times\Sone)}\).
  The diagram then commutes by \Cref{lemma:elem-win-epic-tens-gen}.
\end{proof}

\subsection{Correspondence between the bimonad $\S$ and the bimonoid structure}
\label{sec:bimonoid-bimonad-mate}
We can define a functor %
\(\Sfuntens:\cL\to\cL\) by \(\Sfuntens X=X\Times\Dbimon\) %
and similarly on morphisms:
\(\Sfuntens f=f\Times\Dbimon\in\cL(\Sfuntens X,\Sfuntens Y)\) %
if \(f\in\cL(X,Y)\).
It is folklore that the bimonoid structure of \(\Dbimon\) induces straightforwardly a
bimonad structure on this functor, see for instance example 2.10 of~\cite{Bruguieres11}
in the setting of braided categories.

The comonad structure $\comonadSl$ is given by
\[
  \begin{tikzcd}
    \Sfuntens X=X\Times\Dbimon
    \ar[r,"X\Times\Dbimoncu"]
    &
    X\Times\Sone
    \ar[r,"\Rightu"]
    &
    X
  \end{tikzcd} \]
  \[ 
  \begin{tikzcd}
    {\Sl X = X \tensor \Dbimon} & {X \tensor (\Dbimon \tensor \Dbimon)} 
    & {(X \tensor \Dbimon) \tensor \Dbimon = \Sl^2 X}
    \arrow["{X \tensor \Dbimoncm}", from=1-1, to=1-2]
    \arrow["{\tensorAssoc}", from=1-2, to=1-3]
  \end{tikzcd} \]
The monad structure $\monadSl$ is given by 
\begin{center}
  \begin{tikzcd}
    X & {X \tensor 1} & {X \tensor \Dbimon = \Sl X}
    \arrow["{\tensorUnitR^{-1}}", from=1-1, to=1-2]
    \arrow["{X \tensor \Dbimonu}", from=1-2, to=1-3]
  \end{tikzcd}

  \begin{tikzcd}
    \Sfuntens^2X=(X\Times\Dbimon)\Times\Dbimon
    \ar[r,"\Assoc"]
    &
    X\Times(\Dbimon\Times\Dbimon)
    \ar[r,"X\Times\Dbimonm"]
    &
    X\Times\Dbimon
  \end{tikzcd}
\end{center}
and the distributive law is (keeping the associativity isomorphisms implicit)
\begin{center}
  \begin{tikzcd}
    \Sfuntens^2X=X\Times\Dbimon\Times\Dbimon
    \ar[r,"X\Times\Sym"]
    &[4em]
    X\Times\Dbimon\Times\Dbimon.
  \end{tikzcd}
\end{center}

Now the functor \(\Sfun=(\Dbimon\Limpl\_):\cL\to\cL\) is the right
adjoint of \(\Sfuntens\) and hence, as shown in
\cref{sec:mate-monade-comonade}, \(\Sfun\) inherits 
from the mate construction a bimonad
structure which is exactly the same as the one described in
Section~\ref{sec:bimonad}. The different constructions are summarized in 
\cref{fig:bimonoid-to-bimonad}. Let us write down the details for one of those 
constructions.

\begin{figure}
  \begin{center}
  \begin{tabular}{c c c c c}
    Bimonoid $\Dbimon$ & & Bimonad $\Sl$ & & Bimonad $\S$ \\ \\

    Projection $\prodProj_i \in \categoryLL(\Dbimon, 1)$ & & 
    $\tensorUnitR \compl (X \tensor \prodProj_i) \in \categoryLL(X \tensor \Dbimon, X)$ 
    & & Injection $\Sinj_i$ \\
    Injection $\Sprojl_i \in \categoryLL(1, \Dbimon)$ & & 
    $(X \tensor \Sprojl_i) \compl \tensorUnitR^{-1} \in \categoryLL(X, X \tensor \Dbimon)$
    & & Projection $\Sproj_i$ \\ \\ 

    Monoid unit $\Dbimonu$ & & Monad unit $(X \tensor \Dbimonu) \compl \tensorUnitR^{-1}$
    & & Comonad unit $\Ssum$ \\
    Monoid multiplication $\Dbimonm$ & & Monad sum $(X \tensor \Dbimonm) \compl \tensorAssoc$
    & & Comonad sum $\Slift$ \\
    Comonoid unit $\Dbimoncu$ & $\Longleftrightarrow$ & 
    Comonad unit $\tensorUnitR \compl (X \tensor \Dbimoncu)$ 
    & $\overset{\text{mates}}{\Longleftrightarrow}$& Monad unit $\Sinj_0$ \\
    Comonoid multiplication $\Dbimoncm$ & & Comonad sum $\tensorAssoc \compl (X \tensor \Dbimoncm)$
    & & Monad sum $\SmonadSum$ \\
    Commutativity $\tensorSym$ & & Distributive law $X \tensor \tensorSym$
    & & Distributive law $\Sswap$ \\ \\
  \end{tabular}
\end{center}
\caption{Bimonoid and bimonad relations}
\label{fig:bimonoid-to-bimonad}
\end{figure}

By \cref{rem:mate-simplified} instantiated in  
$\ladj \dashv \radj = \idfun \dashv \idfun$ and 
$\ladj['] \dashv \radj['] = \_ \tensor \Dbimon \dashv \Dbimon \linarrow \_$, 
the mate construction maps a natural transformation
$l_X \in \categoryLL(X \tensor \Dbimon, X)$ to a 
natural transformation 
$r_X \in \categoryLL(X, \Dbimon \linarrow X)$ defined as 
$r_X = \cur(l_X)$. 
Recall from \cref{eq:projection-elementary} that
\[ \Sproj_i \compl r_X
= l_X \compl (X \tensor \Sprojl_i) \compl \tensorUnitR^{-1} \, .\]
This equation actually corresponds to the 
compositionality of the mate construction, using the fact that 
$\Sproj_i$ is itself the mate of $(X \tensor \Sprojl_i) \compl \tensorUnitR^{-1}$.
Then the joint monicity of the $\Sproj_i$ and the joint epicity of the $
X \tensor \Sprojl_i$ 
implies that the equation above provides a complete characterization of 
$l$ from $r$, and of $r$ from $l$.
In particular, the mate of $\tensorUnitR \compl (X \tensor \prodProj_i)
\in \categoryLL(X \tensor \Dbimon, X)$ must be
$\Sinj_i \in \categoryLL(X, \Dbimon \linarrow X)$.

The same kind of argument applies for the rest of the bimonad structure.
The mate construction also relates the lax monoidal structures on 
$\S$ with oplax monoidal  
structures on $\Sl$, let us provide some details.

\begin{definition} We define a natural transformation (up to associativity)
\begin{equation} \label{eq:Sdistl}
  \Sdistl_{X_0, X_1} = 
\begin{tikzcd}
	{X_0 \tensor X_1 \tensor \Dbimon} &[2em] {X_0 \tensor X_1 \tensor \Dbimon \tensor \Dbimon} &[2em] {X_0 \tensor \Dbimon \tensor X_1 \tensor \Dbimon .}
	\arrow["{X_0 \tensor X_1 \tensor \Dbimoncm}", from=1-1, to=1-2]
	\arrow["{X_0 \tensor \tensorSym \tensor \Dbimon}", from=1-2, to=1-3]
\end{tikzcd}
\end{equation}
That is, 
$\Sdistl_{X_0, X_1} \in \categoryLL(\Sl (X_0 \tensor X_1), \Sl X_0 \tensor \Sl X_1)$
\end{definition}
The natural transformation $\Sdistl$ is characterized by the equation
$\Sdistl_{X_0, X_1} \compl (X_0 \tensor X_1 \tensor \Sprojl_n)
= \sum_{k=0}^{n} (X_0 \tensor \Sprojl_k \tensor X_1 \tensor \Sprojl_{n-k})$.
Applying the mate construction of \cref{sec:mate} 
with $\lfun = \rfun = \_ \tensor \_$ and taking
$\ladj['] \dashv \radj['] =  \_ \tensor \Dbimon \dashv \Dbimon \linarrow \Dbimon \linarrow \_$
and $\ladj \dashv \radj = (\_ \tensor \Dbimon) \times (\_ \tensor \Dbimon)
\dashv (\Dbimon \linarrow \_) \times (\Dbimon \linarrow \_)$ (this is the product 
of the adjunction $\_ \tensor \Dbimon \dashv \Dbimon \linarrow \_$ with itself, 
see \cref{sec:mate-oplax-lax}) yields a natural transformation 
\[ r = \cur((\ev \tensor \ev) \compl \Sdistl_{\Dbimon \linarrow X_0, \Dbimon \linarrow X_1}) 
\in \categoryLL((\Dbimon \linarrow X_0) \tensor (\Dbimon \linarrow X_1),
\Dbimon \linarrow (X_0 \tensor X_1)) . \]
The compositionality of the mate construction implies that
$\Sproj_i \compl r
= \sum_{k=0}^i \Sproj_i \tensor \Sproj_{i-k}$ so $r = \Sdist$ 
(where $\Sdist$ is defined in \cref{sec:summability-tensor}). Recall that 
$(\monadS, \Sinj_0, \Sdist)$ is a lax monoidal monad. By
\cref{thm:mate-oplax-lax}, monoidality is preserved through 
the mate construction so $(\comonadSl, 
\tensorUnitR \compl (\id \tensor \Dbimoncu), \Sdistl)$ is
an oplax monoidal comonad.
This oplax structure is the one associated with the 
following costrengths (we keep the use of the associator $\tensorAssoc$ 
of the monoidal product implicit) 
\[ \begin{split} 
  X_0 \tensor \tensorSym_{X_1, \Dbimon} &\in \categoryLL(\Sl (X_0 \tensor X_1), \Sl X_0 \tensor X_1) \\
\id &\in \categoryLL(\Sl(X_0 \tensor X_1), X_0 \tensor \Sl X_1).
\end{split} \]
Those costrengths were implicitly used when defining $\Sdistl$ in \cref{eq:Sdistl}.
The mates of those two natural transformations are 
$\SstrL$ and $\SstrR$ respectively\footnote{This explains the 
definition of $\SstrL$ in the
proof of the forward implication of \cref{thm:representable-compatible}}.

Finally, assume that $\cL$ is cartesian. 
There is a natural transformation 
\[ \prodPairing{\prodProj_i \tensor \Dbimon}
\in \categoryLL((\withFam X_i) \tensor \Dbimon, \withFam (X_i \tensor \Dbimon)) \]
whose mate is precisely $\SprodDist^{-1}$. This was implicitly shown in the proof of 
\cref{thm:representable-compatible} when proving that 
\[ \SprodDist^{-1} 
= \Spairing{\withFam \Sproj_i} 
= \cur_{\Dbimon} (\prodPairing{\ev \compl (\prodProj_i \tensor \Dbimon)}) \in 
   \categoryLL(\withFam (\Dbimon \linarrow X_i), \Dbimon \linarrow \withFam X_i) \, . \]

\subsection{A mate to the distributive law $\Sdl$}

Let \(\cL\) be a representably $\Sigma$-additive category.
We assume moreover that \(\cL\) is a $\Sigma$-additive resource category (see \cref{def:sigma-additive-resource-category}).
Applying the results of \cref{sec:compatibility-adjunction}, we can 
show that the mate construction 
induces a bijection between the Taylor expansions 
$\Sdl : \oc \S \naturalTrans \S \oc$ described in \cref{sec:Taylor}
and distributive laws
$\Sdlmate : \Sl \oc \naturalTrans \oc \Sl$
subject to commutations that we now detail.

The first commutation, \ref{ax:Sdlmate-chain}, 
means that $\Sdlmate$ is a distributive law
between the functor $\Sl$ and the comonad $\oc\_$.
\begin{equation*} \labeltext{($\Sdlmate$-chain)}{ax:Sdlmate-chain}
  \text{\ref{ax:Sdlmate-chain}} \quad
  \begin{tikzcd}
    {!X \tensor \Dbimon} & {\oc (X \tensor \Dbimon)} \\
    & {X \tensor \Dbimon}
    \arrow["{\Sdlmate_X}", from=1-1, to=1-2]
    \arrow["\der", from=1-2, to=2-2]
    \arrow["{\der \tensor \Dbimon}"', from=1-1, to=2-2]
  \end{tikzcd} \quad 
  \begin{tikzcd}
	  {\oc X \tensor \Dbimon} && {\oc(X \tensor \Dbimon)} \\
	  {\oc \oc X \tensor \Dbimon} & {\oc (\oc X \tensor \Dbimon)} & {\oc \oc (X \tensor \Dbimon)}
	  \arrow["{\dig \tensor \Dbimon}"', from=1-1, to=2-1]
	  \arrow["{\Sdlmate_{\oc X}}"', from=2-1, to=2-2]
	  \arrow["{\oc \Sdlmate_X}"', from=2-2, to=2-3]
	  \arrow["{\Sdlmate_X}", from=1-1, to=1-3]
	  \arrow["\dig", from=1-3, to=2-3]
  \end{tikzcd}
\end{equation*}
 By \cref{prop:mate-dl-comonad},
\ref{ax:Sdlmate-chain} holds if and only if 
$\Sdl$ is a distributive law between the functor $\S$ and the comonad $\oc\_$,
that is, if \ref{ax:Sdl-chain} holds.

The commutation of \ref{ax:Sdlmate-local} means that the natural transformation
$(X \tensor \Sprojl_0) \compl \tensorUnitR^{-1} \in \categoryLL(X, X \tensor \Dbimon)$
is a morphism of distributive laws.
\begin{equation*} \labeltext{($\Sdlmate$-local)}{ax:Sdlmate-local}
  \text{\ref{ax:Sdlmate-local}} \quad
\begin{tikzcd}
	{!X} \\
	{!X \tensor 1} & {!(X \tensor 1)} \\
	{!X \tensor \Dbimon} && {!(X \tensor \Dbimon)}
	\arrow["{\Sdlmate_X}", from=3-1, to=3-3]
	\arrow["\tensorUnitR"', from=1-1, to=2-1]
	\arrow["{!X \tensor \Sprojl_0}"', from=2-1, to=3-1]
	\arrow["{! \tensorUnitR}", from=1-1, to=2-2]
	\arrow["{!(X \tensor \Sprojl_0)}", from=2-2, to=3-3]
\end{tikzcd}
\end{equation*} By \cref{prop:mate-dl-morphism}, this 
is a morphism of distributive laws 
if and only if $\Sproj_0$ (its mate) is a morphism of distributive laws, that is, 
if \ref{ax:Sdl-local} commutes.

The commutation of \ref{ax:Sdlmate-add} means that $\Sdlmate$ is a distributive 
law between the functor $\oc\_$ and the comonad structure on $\Sl$
described in \cref{sec:bimonoid-bimonad-mate}.
\begin{equation*} \labeltext{($\Sdlmate$-add)}{ax:Sdlmate-add}
  \text{\ref{ax:Sdlmate-add}} \allowbreak
\begin{tikzcd}
	{!X \tensor \Dbimon} & {!(X \tensor \Dbimon)} \\
	& {!(X \tensor 1)} \\
	{!X \tensor 1} & {!X}
	\arrow["{\Sdlmate_X}", from=1-1, to=1-2]
	\arrow["{!(X \tensor \prodProj_0)}", from=1-2, to=2-2]
	\arrow["{!\tensorUnitR}", from=2-2, to=3-2]
	\arrow["{!X \tensor \prodProj_0}"', from=1-1, to=3-1]
	\arrow["\tensorUnitR"', from=3-1, to=3-2]
\end{tikzcd}
\begin{tikzcd}
	{!X \tensor \Dbimon} && {!(X \tensor \Dbimon)} \\
	{!X \tensor (\Dbimon \tensor \Dbimon)} && {!(X \tensor (\Dbimon \tensor \Dbimon))} \\
	{(!X \tensor \Dbimon) \tensor \Dbimon} & {!(X \tensor \Dbimon) \tensor \Dbimon} & {!((X \tensor \Dbimon) \tensor \Dbimon)}
	\arrow["{!X \tensor \Dbimoncm}"', from=1-1, to=2-1]
	\arrow["\tensorAssoc"', from=2-1, to=3-1]
	\arrow["{\Sdlmate_X}", from=1-1, to=1-3]
	\arrow["{\Sdlmate_X \tensor \Dbimon}"', from=3-1, to=3-2]
	\arrow["{\Sdlmate_{X \tensor \Dbimon}}"', from=3-2, to=3-3]
	\arrow["{!(X \tensor \Dbimoncm)}", from=1-3, to=2-3]
	\arrow["{! \tensorAssoc}", from=2-3, to=3-3]
\end{tikzcd}
\end{equation*} 
By \cref{prop:mate-dl-comonad-monad},
\ref{ax:Sdlmate-add} holds if and only if $\Sdl$ is a distributive law 
between the functor $\oc\_$ and the comonad $\comonadS$, that is if 
\ref{ax:Sdl-add} holds.

In the next diagrams, the use of the associator $\tensorAssoc$ 
of the symmetric monoidal structure is kept implicit.
The commutation of \ref{ax:Sdlmate-Schwarz} means that 
$X \tensor \tensorSym$ is a morphism of distributive laws. 
\begin{equation*} \labeltext{($\Sdlmate$-Schwarz)}{ax:Sdlmate-Schwarz}
  \text{\ref{ax:Sdlmate-Schwarz}} \quad
\begin{tikzcd}
	{!X \tensor \Dbimon \tensor \Dbimon} & {!(X \tensor \Dbimon) \tensor \Dbimon} & {!(X \tensor \Dbimon \tensor \Dbimon)} \\
	{!X \tensor \Dbimon \tensor \Dbimon} & {!(X \tensor \Dbimon) \tensor \Dbimon} & {!(X \tensor \Dbimon \tensor \Dbimon)}
	\arrow["{\Sdlmate_X \tensor \Dbimon}", from=1-1, to=1-2]
	\arrow["{\Sdlmate_{X \tensor \Dbimon}}", from=1-2, to=1-3]
	\arrow["{!X \tensor \tensorSym}"', from=1-1, to=2-1]
	\arrow["{\Sdlmate_X \tensor \Dbimon}"', from=2-1, to=2-2]
	\arrow["{\Sdlmate_{X \tensor \Dbimon}}"', from=2-2, to=2-3]
	\arrow["{!(X \tensor \tensorSym)}", from=1-3, to=2-3]
\end{tikzcd}
\end{equation*}
By \cref{prop:mate-dl-morphism},\ref{ax:Sdlmate-Schwarz} holds 
if and only if $\Sswap$ (its mate) is a morphism of distributive laws, that is,
if \ref{ax:Sdl-Schwarz} hold.

The commutation of \ref{ax:Sdlmate-with} means that 
$\seelyTwo$ is a morphism of distributive laws between the composition 
of $\Sdlmate$ with $\Sdistl$, and the composition of 
$\Sdlmate$ with $\prodPair{\prodProj_0 \tensor \Dbimon}
{\prodProj_1 \tensor \Dbimon}$.
\begin{equation*} \labeltext{($\Sdlmate$-$\with$)}{ax:Sdlmate-with}
  \text{\ref{ax:Sdlmate-with}} \quad
\begin{tikzcd}[column sep = large]
	{!X_0 \tensor !X_1 \tensor \Dbimon} & {!X_0 \tensor \Dbimon \tensor !X_1 \tensor \Dbimon} & {!(X_0 \tensor \Dbimon) \tensor !(X_1 \tensor \Dbimon)} \\
	{!(X_1 \with X_2) \tensor \Dbimon} & {!((X_0 \with X_1) \tensor \Dbimon)} & {!((X_0 \tensor \Dbimon) \with (X_1 \tensor \Dbimon))}
	\arrow["{\Sdlmate \tensor \Sdlmate}", from=1-2, to=1-3]
	\arrow["{\seelyTwo \tensor \Dbimon}"', from=1-1, to=2-1]
	\arrow["\seelyTwo", from=1-3, to=2-3]
	\arrow["{\Sdistl}", from=1-1, to=1-2]
	\arrow["\Sdlmate"', from=2-1, to=2-2]
	\arrow["{!\prodPair{\prodProj_0 \tensor \Dbimon}{\prodProj_1 \tensor \Dbimon}}"', from=2-2, to=2-3]
\end{tikzcd}
\end{equation*}
 But as we saw in \cref{sec:bimonoid-bimonad-mate},
the mate of $\Sdistl$ is $\Sdist$ and the mate of 
$\prodPair{\prodProj_0 \tensor \Dbimon}
{\prodProj_1 \tensor \Dbimon}$ is $\SprodDist^{-1}$. 
By compositionality of the mate construction and 
\cref{prop:mate-dl-morphism-vertical}, \ref{ax:Sdlmate-with}
holds if and only if 
\ref{ax:Sdl-with} holds.

Finally, the commutations \ref{ax:Sdlmate-lin} and \ref{ax:Sdlmate-Taylor} 
means that $\Sdlmate$ is a distributive law between the functor $\oc\_$ and 
the monad structure on $\Sl$ described in \cref{sec:bimonoid-bimonad-mate}.

\begin{equation*} \labeltext{($\Sdlmate$-lin)}{ax:Sdlmate-lin}
  \text{\ref{ax:Sdlmate-lin}} \quad
  \begin{tikzcd}
	{(!X \tensor \Dbimon) \tensor \Dbimon} & {!(X \tensor \Dbimon) \tensor \Dbimon} & {!((X \tensor \Dbimon) \tensor \Dbimon)} \\
	{!X \tensor (\Dbimon \tensor \Dbimon)} && {!(X \tensor (\Dbimon \tensor \Dbimon))} \\
	{!X \tensor \Dbimon} && {!(X \tensor \Dbimon)}
	\arrow["{\Sdlmate_X \tensor \Dbimon}", from=1-1, to=1-2]
	\arrow["{\Sdlmate_{X \tensor \Dbimon}}", from=1-2, to=1-3]
	\arrow["\tensorAssoc"', from=1-1, to=2-1]
	\arrow["{!X \tensor \Dbimonm}"', from=2-1, to=3-1]
	\arrow["{\Sdlmate_X}"', from=3-1, to=3-3]
	\arrow["{! \tensorAssoc}", from=1-3, to=2-3]
	\arrow["{!(X \tensor \Dbimonm)}", from=2-3, to=3-3]
\end{tikzcd}
\end{equation*}
\begin{equation*} \labeltext{($\Sdlmate$-analytic)}{ax:Sdlmate-Taylor}
  \text{\ref{ax:Sdlmate-Taylor}} \quad
  \begin{tikzcd}
	{!X} \\
	{!X \tensor 1} & {!(X \tensor 1)} \\
	{!X \tensor \Dbimon} && {!(X \tensor \Dbimon)}
	\arrow["{\tensorUnitR^{-1}}"', from=1-1, to=2-1]
	\arrow["{!X \tensor \Dbimonu}"', from=2-1, to=3-1]
	\arrow["{\Sdlmate_X}"', from=3-1, to=3-3]
	\arrow["{! \tensorUnitR^{-1}}", from=1-1, to=2-2]
	\arrow["{!(X \tensor \Dbimonu)}", from=2-2, to=3-3]
\end{tikzcd}
\end{equation*}
By \cref{prop:mate-dl-monad-comonad}, these diagrams commute if and only if 
$\Sdl$ is a distributive law between the functor $\oc\_$ and the comonad
$\comonadS$, that is, if and only if \ref{ax:Sdl-lin} and \ref{ax:Sdl-Taylor}
hold. 
\begin{definition} A left-sided Taylor expansion is a natural transformation 
  $\Sdlmate : \Sl \oc \naturalTrans \oc \Sl$
  that satisfies \ref{ax:Sdlmate-chain}, \ref{ax:Sdlmate-local},
  \ref{ax:Sdlmate-add}, \ref{ax:Sdlmate-Schwarz}, \ref{ax:Sdlmate-with},
  \ref{ax:Sdlmate-lin}. A left-sided Taylor expansion is analytic 
  if it also follows \ref{ax:Sdlmate-Taylor}.
\end{definition}

The following result summarizes what we have proved in this section.

\begin{theorem} \label{thm:Sdl-mate}
  Let $\Sdl : \oc \S \naturalTrans \S \oc$ and 
  $\Sdlmate : \Sl \oc \naturalTrans \oc \Sl$ be mates. Then 
  $\Sdl$ is an analytic Taylor expansion if and only if 
  $\Sdlmate$ is a left-sided analytic Taylor expansion.
\end{theorem}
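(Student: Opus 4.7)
The plan is to observe that the statement is an assembly of the axiom-by-axiom correspondences that were already worked out in the preceding discussion. Fix the adjunction $\_\tensor\Dbimon\dashv\Dbimon\linarrow\_$; since every component of $\Sdl$ is of the form $\cur(\uncur(\Sdl))$ and the mate construction specialized to this adjunction sends a natural transformation $\Sdl:\oc\S\naturalTrans\S\oc$ to $\Sdlmate:\Sl\oc\naturalTrans\oc\Sl$ whose inverse assignment is again via currying (see \cref{rem:mate-simplified}), the correspondence $\Sdl\leftrightarrow\Sdlmate$ is a bijection. What remains is to verify that each of the six (resp.\ seven, in the analytic case) axioms matches exactly under this bijection.

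I would then treat the axioms one by one, invoking the pre-established correspondences. First, \ref{ax:Sdl-chain} $\Leftrightarrow$ \ref{ax:Sdlmate-chain}: both express that the same natural transformation is a distributive law of the comonad $\oc\_$ with a plain functor ($\S$ on one side, $\Sl$ on the other), and \cref{prop:mate-dl-comonad} gives the equivalence. Next, \ref{ax:Sdl-local} $\Leftrightarrow$ \ref{ax:Sdlmate-local} via \cref{prop:mate-dl-morphism}, using that $\Sproj_0$ and $(X\tensor\Sprojl_0)\comp\tensorUnitR^{-1}$ are mates (this is the general computation of \cref{sec:bimonoid-bimonad-mate} summarized in \cref{fig:bimonoid-to-bimonad}). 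For \ref{ax:Sdl-add} $\Leftrightarrow$ \ref{ax:Sdlmate-add}, I use \cref{prop:mate-dl-comonad-monad}, noting that $\monadS$ (on the $\S$-side) and $\comonadSl$ (on the $\Sl$-side) are mate monad/comonad pairs by the correspondence of \cref{sec:bimonoid-bimonad-mate}. Similarly \ref{ax:Sdl-lin} together with \ref{ax:Sdl-Taylor} $\Leftrightarrow$ \ref{ax:Sdlmate-lin} together with \ref{ax:Sdlmate-Taylor} via \cref{prop:mate-dl-monad-comonad}, exploiting that $\comonadS$ and $\monadSl$ are also mates. The axiom \ref{ax:Sdl-Schwarz} $\Leftrightarrow$ \ref{ax:Sdlmate-Schwarz} follows from \cref{prop:mate-dl-morphism} since $\Sswap$ and $X\tensor\tensorSym$ are mates.

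The only genuinely subtle step will be \ref{ax:Sdl-with} $\Leftrightarrow$ \ref{ax:Sdlmate-with}. Here the natural transformations involved are composites: on the $\S$-side one composes $\Sdl$ with $\Sdist$ on one branch and with $\SprodDist^{-1}$ on the other, while on the $\Sl$-side one composes $\Sdlmate$ with $\Sdistl$ and with $\prodPairing{\prodProj_i\tensor\Dbimon}$ respectively. To relate them I would first use \cref{sec:bimonoid-bimonad-mate} to record that $\Sdistl$ is the mate of $\Sdist$ and $\prodPairing{\prodProj_i\tensor\Dbimon}$ is the mate of $\SprodDist^{-1}$, then invoke the horizontal compositionality of the mate construction (\cref{prop:mate-horizontal-composite}) to deduce that the pasted distributive laws on each side are mates, and finally apply \cref{prop:mate-dl-morphism-vertical} with $\alpha=\seelyTwo$ to convert the morphism-of-distributive-laws condition on one side into the corresponding condition on the other.

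Putting these equivalences together yields the full biconditional; the analytic case is handled by the same argument together with the extra \ref{ax:Sdl-Taylor}/\ref{ax:Sdlmate-Taylor} equivalence already noted. The main obstacle is bookkeeping around the with-axiom, where one must be careful to identify the correct left and right adjoint factorizations before applying the compositionality lemmas; all remaining equivalences are direct instances of the general results of \cref{sec:compatibility-adjunction}.
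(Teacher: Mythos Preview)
Your proof is correct and follows essentially the same approach as the paper: the theorem is explicitly stated as summarizing the axiom-by-axiom correspondences developed in the preceding discussion, and you have accurately reproduced each of those correspondences together with the appropriate lemma invocations (\cref{prop:mate-dl-comonad}, \cref{prop:mate-dl-morphism}, \cref{prop:mate-dl-comonad-monad}, \cref{prop:mate-dl-monad-comonad}, and the compositionality of mates together with \cref{prop:mate-dl-morphism-vertical} for the \ref{ax:Sdl-with} case).
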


\subsection{The Taylor coalgebra structure of \(\Dbimon\)}

An important structure can be derived from the cartesian structure 
of $\cL$ and its resource comonad: a lax
symmetric monoidality of the comonad \(\Oc\_\), 
from the symetric monoidal category \((\cL,\Times,\Sone)\) to 
itself (see \cref{rem:hopf-monad} for a definition
of a lax symmetric monoidal comonad). 
This monoidality 
turns $\categoryLL$ into a \emph{linear category}, 
see~\cite{Bierman95}.

More precisely we have a morphism \(\Ocmonz\in\cL(\Sone,\Oc\Sone)\)
and a natural transformation
\(\Ocmont_{X,Y}\in\cL(\Oc X\Times\Oc Y,\Oc{(X\Times Y)})\) which
satisfy the coherence diagrams of 
\cref{def:smf} and coherence diagrams similar to 
\cref{def:monoidal-monad}. They can be defined as the following
compositions of morphisms
\[ \Ocmonz = \begin{tikzcd}
    \Sone
    \ar[r,"\Seelyz"]
    &[-1em]
    \Oc\Top
    \ar[r,"\Digg_\Top"]
    &[1.4em]
    \Occ\Top
    \ar[r,"\Oc{\Invp{\Seelyz}}"]
    &[1.4em]
    \Oc\Sone
  \end{tikzcd}
\]
\[ \Ocmont_{X, Y} = \begin{tikzcd}
    \Oc X\Times\Oc Y
    \ar[r,"\Seelyt_{X,Y}"]
    &
    \Oc{(X\With Y)}
    \ar[r,"\Digg_{X\With Y}"]
    &[1em]
    \Occ{(X\With Y)}
    \ar[r,"\Oc{(\Invp{\Seelyt_{X,Y}})}"]
    &[2.4em]
    \Oc{(\Oc X\Times\Oc Y)}
    \ar[r,"\Oc{(\Digg_X\Times\Digg_Y)}"]
    &[3.2em]
    \Oc{(X\Times Y)}
  \end{tikzcd} \]

Particularly important is the associated coEilenberg-Moore category
\(\Emoc\cL\) whose objects are the coalgebras of \(\Oc\_\), that is, the
pairs \(P=(\Coalgca P,\Coalgst P)\) where \(\Coalgca P\) is an object
of \(P\) and \(\Coalgst P\in\cL(\Coalgca P,\Oc{\Coalgca P})\) makes
the two following diagrams commute
\[
  \begin{tikzcd}
    \Coalgca P
    \ar[r,"\Coalgst P"]
    \ar[rd,swap,"\Id"]
    &
    \Oc{\Coalgca P}
    \ar[d,"\Deru_{\Coalgca P}"]
    \\
    &
    \Coalgca P
  \end{tikzcd}
  \Diagsep
  \begin{tikzcd}
    \Coalgca P
    \ar[r,"\Coalgst P"]
    \ar[d,swap,"\Coalgst P"]
    &
    \Oc{\Coalgca P}
    \ar[d,"\Digg_{\Coalgca P}"]
    \\
    \Oc{\Coalgca P}
    \ar[r,"\Oc{\Coalgst P}"']
    &
    \Occ{\Coalgca P}    
  \end{tikzcd}
\]
In this category, an element of \(\Emoc\cL(P,Q)\) is called a $\Oc$-coalgebra 
morphism and is an %
\(f\in\cL(\Coalgca P,\Coalgca Q)\) such that
\[
  \begin{tikzcd}
    \Coalgca P
    \ar[r,"f"]
    \ar[d,swap,"\Coalgst P"]
    &
    \Coalgca Q
    \ar[d,"\Coalgst Q"]
    \\
    \Oc{\Coalgca P}
    \ar[r,"\Oc{f}"']
    &
    \Oc{\Coalgca Q}
  \end{tikzcd}
\]

It is well known that \((\Sone,\Ocmonz)\) is a $!$-colagebra 
that we simply denote as \(\Sone\) (so that
\(\Coalgst{\Sone}=\Ocmonz\)) and that, given objects \(P_1\) and
\(P_2\) of \(\Emoc\cL\), the object
\(\Coalgca{P_1}\Times\Coalgca{P_2}\) can be equipped with a
\(\oc\)-coalgebra structure
\begin{center}
  \begin{tikzcd}
    \Coalgca{P_1}\Times\Coalgca{P_2}
    \ar[r,"\Coalgst{P_1}\Times\Coalgst{P_2}"] %
    &[2em]
    \Oc{\Coalgca{P_1}}\Times\Oc{\Coalgca{P_2}}
    \ar[r,"\Ocmont"]
    &[-1em]
    \Oc{(\Coalgca{P_1}\Times\Coalgca{P_2}).}
  \end{tikzcd}
\end{center}
We use \(P_1\Times P_2\) to denote this coalgebra, so that %
\(\Coalgst{P_1\Times P_2}=\Ocmont\Compl(\Coalgst{P_1}\Times\Coalgst{P_2})\).
This coalgebra structure corresponds the lifting of the 
symmetric monoidal structure of $\categoryLL$ to 
$\Emoc\cL$ mentioned in \cref{rem:hopf-monad}.

Each coalgebra \(P\) can be equipped with a weakening
morphism \(\Coalgw_P\in\Emoc\cL(P,\Sone)\) and a contraction morphism
\(\Coalgc_P\in\Emoc\cL(P,P\Times P)\) which can be defined 
from $\weak$ and $\contr$ (see \cref{eq:weakening,eq:contraction})
as follows.
\[
\Coalgw_P = \begin{tikzcd}
	{\Coalgca P} & {\Oc{\Coalgca P}} & 1
	\arrow["{\Coalgst P}", from=1-1, to=1-2]
	\arrow["{\weak_{\Coalgca P}}", from=1-2, to=1-3]
\end{tikzcd} \] 
\[  \Coalgc_P = 
\begin{tikzcd}
	{\Coalgca P} & {\Oc{\Coalgca P}} & {\Oc{\Coalgca P} \tensor \Oc{\Coalgca P}} & {\Coalgca P \tensor \Coalgca P}
	\arrow["{\Coalgst P}", from=1-1, to=1-2]
	\arrow["{\contr_{\Coalgca P}}", from=1-2, to=1-3]
	\arrow["{\der \tensor \der}", from=1-3, to=1-4]
\end{tikzcd} \]

\Cref{th:resource-EM-comonoid,th:resource-EM-cartesian} that follow 
are non-trivial results, we refer to Proposition 28 of~\cite{Mellies09}
for a proof.
\begin{theorem} \label{th:resource-EM-comonoid}
  For any object \(P\) of \(\Emoc\cL\), the triple %
  \((P,\Coalgw_P,\Coalgc_P)\) is a commutative comonoid in \(\Emoc\cL\).
\end{theorem}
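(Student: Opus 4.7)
The strategy is to exploit the standard fact that in any Seely/linear category, the resource comonad $\Oc\_$ equips every object $\Oc X$ with a cocommutative comonoid structure $(\weak_X, \contr_X)$ in $\cL$, and to transport this structure along the coalgebra map $\Coalgst P$. At a high level, the claim amounts to saying that the coEilenberg--Moore category $\Emoc \cL$ is cartesian with product $\tensor$ and terminal object $\Sone$, so every object has a canonical commutative comonoid structure, and $(\Coalgw_P,\Coalgc_P)$ is exactly this structure.

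First, I would verify that $\Coalgw_P \in \Emoc\cL(P,\Sone)$ and $\Coalgc_P \in \Emoc\cL(P, P \Times P)$, i.e.\ that they are $\Oc$-coalgebra morphisms. For $\Coalgw_P$, this rests on the standard fact (in any linear category) that $\weak_X : (\Oc X,\Digg_X) \to (\Sone,\Ocmonz)$ is a coalgebra morphism, which follows by unfolding $\weak_X = (\Seelyz)^{-1} \comp \Oc 0$ and using naturality of $\Digg$ together with the defining equations of $\Ocmonz$. Composing with the coalgebra morphism $\Coalgst P : P \to (\Oc\Coalgca P, \Digg)$ (which \emph{is} a coalgebra morphism by the comultiplication axiom of $P$) yields the claim. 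For $\Coalgc_P$, the analogous claim uses that $\contr_X : (\Oc X,\Digg_X) \to (\Oc X,\Digg_X) \Times (\Oc X,\Digg_X)$ is a coalgebra morphism (again standard), then one needs to move the external $\Der \tensor \Der$ across using the counit-coaction law $\Der_{\Coalgca P} \comp \Coalgst P = \Id$: schematically $\Coalgst{P \Times P} \comp \Coalgc_P = \Ocmont \comp (\Coalgst P \tensor \Coalgst P) \comp (\Der \tensor \Der) \comp \contr \comp \Coalgst P$ reduces, via naturality of $\contr$ and $\Ocmont$ together with the Seely coherence, to $\Oc \Coalgc_P \comp \Coalgst P$.

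Once both are known to live in $\Emoc\cL$, verifying the comonoid axioms (counit, coassociativity, cocommutativity) reduces, through the faithful forgetful functor to $\cL$, to the corresponding equations for $(\Oc \Coalgca P, \weak, \contr)$. Concretely, each axiom is obtained by precomposing the known cocommutative comonoid diagram for $\Oc \Coalgca P$ with $\Coalgst P$ and collapsing the resulting $\Der \comp \Coalgst P = \Id$ wherever it appears (which occurs in each of counit, coassociativity and cocommutativity after unfolding $\Coalgw$ and $\Coalgc$). Cocommutativity in particular follows immediately from cocommutativity of $\contr$ and naturality of $\Sym$.

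The main obstacle is the verification that $\Coalgc_P$ is a coalgebra morphism: the diagram to check mixes $\Coalgst P$, $\contr$, $\Ocmont$, $\Digg$ and $\Der \tensor \Der$, and the cleanest route goes through the coherence diagrams relating the Seely isomorphisms $\Seelyt$ with $\Digg$ and $\Der$ (as in~\cite{Mellies09}, Proposition~28). Because this verification is carried out there in full, I would simply invoke that reference for the detailed diagram chase rather than redo it here, stressing that in our setting no new ingredient is needed beyond the standard Seely structure already assumed in \cref{def:resource-comonad}.
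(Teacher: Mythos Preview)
Your proposal is correct and aligns with the paper's own treatment: the paper does not give an explicit proof but simply refers to Proposition~28 of~\cite{Mellies09}, which is exactly the reference you invoke for the core diagram chase. Your sketch of the argument (transporting the cocommutative comonoid structure of $\Oc\Coalgca P$ along $\Coalgst P$ and checking that $\Coalgw_P$, $\Coalgc_P$ are coalgebra morphisms) is the standard one underlying that reference, so you have in fact supplied more detail than the paper itself.
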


Given \((f_i\in\Emoc\cL(P_i,Q_i))_{i=1,2}\), it is easy to check that %
\(f_1\Times f_2\in\Emoc\cL(P_1\Times P_2,Q_1\Times Q_2)\).
And therefore if \((f_i\in\Emoc\cL(P,Q_i))_{i=1,2}\), one can define %
\(\Tupleoc{f_1,f_2}
=(f_1\Times f_2)\Compl\Coalgc_Q\in\Emoc\cL(Q,P_1\Times P_2)\).
We can also define projections
\(\Coalgpr_i\in\Emoc\cL(P_1\Times P_2,P_i)\), for instance the first
projection is simply
\begin{center}
  \begin{tikzcd}
    \Coalgca{P_1}\Times\Coalgca{P_2}
    \ar[r,"\Coalgca{P_1}\Times\Coalgw_{P_2}"]
    &[2em]
    \Coalgca{P_1}\Times\Sone
    \ar[r,"\Rightu"]
    &[-1em]
    \Coalgca{P_1}.
  \end{tikzcd}
\end{center}

\begin{theorem}\label{th:resource-EM-cartesian}
  The category \(\Emoc\cL\) is cartesian, with \(\Sone\) as terminal
  object and \((P_1,\Times P_2,\Coalgpr_1,\Coalgpr_2)\) as cartesian
  product of \(P_1\) and \(P_2\).
  Given \((f_i\in\Emoc\cL(P,Q_i))_{i=1,2}\),
  \(\Tupleoc{f_1,f_2} =(f_1\Times
  f_2)\Compl\Coalgc_Q\in\Emoc\cL(Q,P_1\Times P_2)\) is the unique
  morphism such that \(\Coalgpr_i\Compl\Tupleoc{f_1,f_2}=f_i\) for
  \(i=1,2\).
\end{theorem}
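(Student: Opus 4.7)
The strategy is to invoke a Fox-style principle: Theorem \ref{th:resource-EM-comonoid} already endows each object $P$ of $\Emoc\cL$ with a commutative comonoid structure $(P,\Coalgw_P,\Coalgc_P)$, and once one checks that this structure is \emph{natural} with respect to coalgebra morphisms, the cartesian product of $\Emoc\cL$ is forced to coincide with the lifted tensor $\Times$, with terminal object $\Sone$, projections $\Coalgpr_i$, and tuplings induced by $\Coalgc_P$.

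First I would verify that every $f \in \Emoc\cL(P,Q)$ is automatically a morphism of comonoids, that is, $\Coalgw_Q \Compl f = \Coalgw_P$ and $(f \Times f) \Compl \Coalgc_P = \Coalgc_Q \Compl f$. Unfolding the definitions $\Coalgw = \weak \Compl \Coalgst$ and $\Coalgc = (\Deru \Times \Deru) \Compl \contr \Compl \Coalgst$, this reduces to the coalgebra equation $\Oc f \Compl \Coalgst P = \Coalgst Q \Compl f$ combined with the naturality of $\weak$, $\contr$ and $\Deru$ in $\cL$. Next I would show that $\Sone$ is terminal in $\Emoc\cL$. The morphism $\Coalgw_P$ is a coalgebra morphism $P \to \Sone$: this follows from a direct diagram chase using the standard compatibility $\Ocmonz \Compl \weak = \Oc \weak \Compl \Digg$ of the Seely structure together with the coalgebra axioms of $P$. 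Uniqueness is then automatic, since any $g \in \Emoc\cL(P,\Sone)$ is, by the previous step, a comonoid morphism into the trivial comonoid $\Sone$, and the counit equation (with $\Coalgw_\Sone = \Id_\Sone$) forces $g = \Coalgw_P$.

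For the product part, I would check that each $\Coalgpr_i$ is a coalgebra morphism, as a composite of $\Rightu$ (respectively $\Leftu$) with the tensor of an identity and a weakening, both of which live in $\Emoc\cL$ by what precedes. Similarly, $\Tupleoc{f_1,f_2} = (f_1 \Times f_2) \Compl \Coalgc_P$ is a coalgebra morphism as a composite of coalgebra morphisms. The identities $\Coalgpr_i \Compl \Tupleoc{f_1,f_2} = f_i$ then reduce, via the naturality of $\Rightu$ and $\Leftu$, to the counit law of $(P,\Coalgw_P,\Coalgc_P)$. For uniqueness of the tupling, given $h \in \Emoc\cL(P, Q_1 \Times Q_2)$ with $\Coalgpr_i \Compl h = f_i$, I would use that $h$ is itself a comonoid morphism (by the first step) and that the comonoid structure of $Q_1 \Times Q_2$ is the tensor of those of $Q_1$ and $Q_2$ to rewrite $h$ through $\Coalgc_{Q_1 \Times Q_2} \Compl h$ and collapse it, via counits and projections, to $(f_1 \Times f_2) \Compl \Coalgc_P$. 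The main obstacle is this last diagram chase: it requires pasting the counit and cocommutativity laws with the explicit form of the comonoid structure on a tensor, but it introduces no ingredient beyond those supplied by Theorem \ref{th:resource-EM-comonoid} and the naturality step above.
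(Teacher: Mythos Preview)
The paper does not prove this theorem itself: it simply states that \Cref{th:resource-EM-comonoid,th:resource-EM-cartesian} are non-trivial and refers to Proposition~28 of~\cite{Mellies09}. Your Fox-style argument (naturality of the induced comonoid structure on coalgebras, then deducing that the lifted tensor is cartesian) is exactly the standard proof and is essentially what that reference does, so your proposal is correct and aligned with the intended approach.
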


For any object \(X\) of \(\cL\), the pair
\(\Coalgfree X=(\Oc X,\Digg_X)\) is an object of \(\Emoc\cL\).
This defines a functor \(\Coalgfree:\cL\to\Emoc\cL\) which maps
\(f\in\cL(X,Y)\) to \(\Oc f\in\Emoc\cL(\Coalgfree X,\Coalgfree Y)\) as
easily checked.
The coalgebra \(\Coalgfree X\) is the cofree coalgebra generated by
\(X\) in the sense that, for any object \(P\) of \(\Emoc\cL\) and any
\(f\in\cL(\Coalgca P,X)\), there is exactly one morphism
\(\Prom f\in\Emoc\cL(P,\Coalgfree X)\) such that
\(\Deru_X\Compl\Prom f=f\). 

The ``image'' of this functor is a full subcategory which can be
described, up to equivalence, as the coKleisli category of the \(\oc\_\)
comonad.

\begin{definition}\label{def:elementary-Taylor-structure}
  An analytic coalgebra on \(\cL\) is a morphism %
  \(\Dbimonca\in\cL(\Dbimon,\Oc\Dbimon)\) such that
  \((\Dbimon,\Dbimonca)\) is a $\Oc$-coalgebra, the four
  structure maps of the bimonoid
  \((\Dbimon,\Dbimonu,\Dbimonm,\Dbimoncu,\Dbimoncm)\) are $\Oc$-coalgebra 
  morphisms, and such that 
  $\Sprojl_0$ is a $\Oc$-colagebra morphism.
  When such an analytic coalgebra is given, and when there are no
  possible ambiguities, we simply use \(\Dbimon\) to denote the
  coalgebra \((\Dbimon,\Dbimonca)\).
\end{definition}

Let us make these conditions more explicit.
The fact that \((\Dbimon,\Dbimonca)\) is a $\Oc$-colagebra
means that the two following diagrams commute
\[
  \begin{tikzcd}
    \Dbimon
    \ar[r,"\Dbimonca"]
    \ar[rd,swap,"\Id_{\Dbimon}"]
    &
    \Oc\Dbimon
    \ar[d,"\Deru_{\Dbimon}"]
    \\
    &
    \Dbimon
  \end{tikzcd}
  \Diagsep
  \begin{tikzcd}
    \Dbimon
    \ar[r,"\Dbimonca"]
    \ar[d,swap,"\Dbimonca"]
    &
    \Oc\Dbimon
    \ar[d,"\Digg_{\Dbimon}"]
    \\
    \Oc\Dbimon
    \ar[r,"\Oc{\Dbimonca}"']
    &
    \Occ\Dbimon
  \end{tikzcd}
\]

The fact that \(\Dbimoncu\in\Emoc\cL(\Dbimon,\Sone)\) means that
\[
  \begin{tikzcd}
    \Dbimon
    \ar[r,"\Dbimonca"]
    \ar[d,swap,"\Wproj_0"]
    &
    \Oc\Dbimon
    \ar[d,"\Oc\Termm"]
    \\
    \Sone
    \ar[r,"\Seelyz"']
    &
    \Oc\Top
  \end{tikzcd}
\]
and the fact that \(\Dbimoncm\in\Emoc\cL(\Dbimon,\Dbimon\Times\Dbimon)\)
means that
\[
  \begin{tikzcd}
    \Dbimon
    \ar[rr,"\Dbimonca"]
    \ar[d,swap,"\Dbimoncm"]
    &&
    \Oc\Dbimon
    \ar[d,"\Oc{\Dbimoncm}"]
    \\
    \Dbimon\Times\Dbimon
    \ar[r,"\Dbimonca\Times\Dbimonca"']
    &
    \Oc{\Dbimon}\Times\Oc{\Dbimon}
    \ar[r,"\Ocmont"']
    &[-1em]
    \Oc{(\Dbimon\Times\Dbimon)}
  \end{tikzcd}
\]
The facts that \(\Dbimonu\in\Emoc\cL(\Sone,\Dbimon)\) and 
$\Sprojl_0 \in \Emoc\cL(\Sone, \Dbimon)$ mean that the following diagrams commute
\[
  \begin{tikzcd}
    \Sone
    \ar[r,"\Ocmonz"]
    \ar[d,swap,"\Win_0"]
    &
    \Oc\Sone
    \ar[d,"\Oc{\Win_0}"]
    \\
    \Dbimon
    \ar[r,"\Dbimonca"']
    &
    \Oc\Dbimon
  \end{tikzcd} \Diagsep
  \begin{tikzcd}
    \Sone
    \ar[r,"\Ocmonz"]
    \ar[d,swap,"\Wdiag"]
    &
    \Oc\Sone
    \ar[d,"\Oc\Wdiag"]
    \\
    \Dbimon
    \ar[r,"\Dbimonca"']
    &
    \Oc\Dbimon
  \end{tikzcd}
\]
and the fact that
\(\Dbimonm\in\Emoc\cL(\Dbimon\Times\Dbimon,\Dbimon)\) means
\[
  \begin{tikzcd}
    \Dbimon
    \ar[rr,"\Dbimonca"]
    &&
    \Oc\Dbimon
    \\
    \Dbimon\Times\Dbimon
    \ar[u,swap,swap,"\Dbimonm"]
    \ar[r,"\Dbimonca\Times\Dbimonca"']
    &
    \Oc{\Dbimon}\Times\Oc{\Dbimon}
    \ar[r,"\Ocmont"']
    &[-1em]
    \Oc{(\Dbimon\Times\Dbimon)}
    \ar[u,swap,"\Oc{\Dbimonm}"]
  \end{tikzcd}
\]

\begin{theorem}\label{th:dbimon-comon-contr}
  We have
  \(\Dbimoncu=\Coalgw_\Dbimon\in\Emoc\cL(\Dbimon,\Sone)\) and
  \(\Dbimoncm=\Coalgc_\Dbimon\in\Emoc\cL(\Dbimon,\Dbimon\Times\Dbimon)\).
\end{theorem}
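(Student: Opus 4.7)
My plan is to exploit the fact that $\Emoc\cL$ is cartesian (\cref{th:resource-EM-cartesian}) together with the well-known principle that, in any cartesian category, each object admits a \emph{unique} commutative comonoid structure---given by the terminal morphism and the diagonal. By \cref{th:resource-EM-comonoid}, $(\Dbimon,\Coalgw_\Dbimon,\Coalgc_\Dbimon)$ is precisely this canonical comonoid on $\Dbimon$ in $\Emoc\cL$. On the other hand, by the definition of an analytic coalgebra (\cref{def:elementary-Taylor-structure}), $\Dbimoncu$ and $\Dbimoncm$ are $\oc$-coalgebra morphisms, hence morphisms of $\Emoc\cL$; and since $(\Dbimon,\Dbimoncu,\Dbimoncm)$ is a (cocommutative) comonoid in $\cL$ and the forgetful functor $\Emoc\cL \to \cL$ is faithful, the comonoid diagrams also commute in $\Emoc\cL$. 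Thus we have two comonoid structures on the same object of $\Emoc\cL$, and the goal is to show they coincide.

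For the first equality $\Dbimoncu = \Coalgw_\Dbimon$: both sides are morphisms in $\Emoc\cL(\Dbimon,\Sone)$, and by \cref{th:resource-EM-cartesian} the object $\Sone$ is terminal in $\Emoc\cL$. The equality then follows from the uniqueness of the terminal morphism.

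For the second equality $\Dbimoncm = \Coalgc_\Dbimon$: both sides are morphisms in $\Emoc\cL(\Dbimon,\Dbimon\Times\Dbimon)$, and by the universal property of the product in $\Emoc\cL$ (\cref{th:resource-EM-cartesian}) it suffices to check that $\Coalgpr_i\Compl\Dbimoncm = \Id_\Dbimon$ for $i=1,2$, since $\Coalgc_\Dbimon = \Tupleoc{\Id_\Dbimon,\Id_\Dbimon}$ trivially satisfies this. Unfolding the definition of $\Coalgpr_1$, namely $\Rightu\Compl(\Dbimon\Times\Coalgw_\Dbimon)$, and substituting the first part of the theorem ($\Coalgw_\Dbimon = \Dbimoncu$) yields
\[
\Coalgpr_1\Compl\Dbimoncm \;=\; \Rightu\Compl(\Dbimon\Times\Dbimoncu)\Compl\Dbimoncm \;=\; \Id_\Dbimon,
\]
by the right counit axiom of the comonoid $(\Dbimon,\Dbimoncu,\Dbimoncm)$ (\cref{eq:commonoid}). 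The case of $\Coalgpr_2$ is handled identically using in addition cocommutativity (\cref{eq:cocommutative}). No step is expected to present a genuine obstacle; the entire proof is an application of the universal properties of the terminal object and product in $\Emoc\cL$, combined with the counit axiom of the comonoid structure on $\Dbimon$.
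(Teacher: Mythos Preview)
Your proof is correct and follows the same approach as the paper: both exploit the fact that in the cartesian category $\Emoc\cL$ each object carries a \emph{unique} comonoid structure (given by the terminal map and the diagonal), so the two comonoid structures on $\Dbimon$ must agree. The paper simply cites this well-known principle in one line, whereas you unfold it explicitly via the universal properties of $\Sone$ and the product; a minor remark is that for $\Coalgpr_2$ you can invoke the left counit axiom directly (which is the one displayed in \cref{eq:commonoid}) rather than routing through cocommutativity.
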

\begin{proof}
  It is well known that in any cartesian category, 
  the comonoid structure of any object
  is unique and is given by the unique map to the terminal object and the 
  diagonal maps. So the comonoid 
  \((P,\Coalgw_P,\Coalgc_P)\) is necessarily equal to the comonoid 
  \((P,\Dbimoncu,\Dbimoncm)\).
\end{proof}

\begin{theorem} \label{thm:coalgebra-to-Sdlmate}
  Any analytic coalgebra $\Dbimonca$
  induces a left-sided analytic Taylor expansion, defined as 
  \[ \Sdlmate = 
  \begin{tikzcd}
    {!X \tensor \Dbimon} & {!X \tensor !\Dbimon} & {!(X \tensor \Dbimon) .}
    \arrow["{!X \tensor \Dbimonca}", from=1-1, to=1-2]
    \arrow["\Ocmont", from=1-2, to=1-3]
  \end{tikzcd} \]
\end{theorem}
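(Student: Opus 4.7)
The plan is to verify each of the seven axioms \ref{ax:Sdlmate-chain}, \ref{ax:Sdlmate-local}, \ref{ax:Sdlmate-add}, \ref{ax:Sdlmate-Schwarz}, \ref{ax:Sdlmate-with}, \ref{ax:Sdlmate-lin}, \ref{ax:Sdlmate-Taylor} directly by diagram-chasing, using the definition $\Sdlmate_X = \Ocmont_{X,\Dbimon} \comp (\Oc X \tensor \Dbimonca)$ and invoking the corresponding structural property of $\Dbimonca$ from \cref{def:elementary-Taylor-structure}. Naturality of $\Sdlmate$ in $X$ is immediate from the naturality of $\Ocmont$ in its first argument, so we focus on the six commuting diagrams.

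First I would handle \ref{ax:Sdlmate-chain}. The two triangles correspond exactly to the two coalgebra laws for $(\Dbimon,\Dbimonca)$: the compatibility with $\Deru$ follows from $\Deru_\Dbimon \comp \Dbimonca = \Id$ together with the standard coherence $\Deru_{X\tensor Y} \comp \Ocmont_{X,Y} = \Deru_X \tensor \Deru_Y$ (a consequence of $\Ocmont$ being the monoidality associated to the linear category structure), and the compatibility with $\Digg$ follows from $\Digg_\Dbimon \comp \Dbimonca = \Oc\Dbimonca \comp \Dbimonca$ together with the analogous coherence $\Digg_{X\tensor Y} \comp \Ocmont = \Oc\Ocmont \comp \Ocmont \comp (\Digg_X \tensor \Digg_Y)$.

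Next I would handle the three axioms that arise from the bimonoid structure maps being $\Oc$-coalgebra morphisms. For \ref{ax:Sdlmate-local}, I invoke $\Sprojl_0 \in \Emoc{\cL}(\Sone,\Dbimon)$ together with the unit coherence $\Ocmont \comp (\Oc X \tensor \Ocmonz) = \Oc\tensorUnitR^{-1} \comp \tensorUnitR$. The axiom \ref{ax:Sdlmate-add} splits into two diagrams: the one involving $\Dbimoncu$ uses $\Dbimoncu = \Coalgw_\Dbimon$ (by \cref{th:dbimon-comon-contr}) being an $\Oc$-coalgebra morphism together with the $\Ocmonz$/$\Ocmont$ coherence, and the one involving $\Dbimoncm$ uses $\Dbimoncm = \Coalgc_\Dbimon$ being a coalgebra morphism together with the associativity coherence of $\Ocmont$. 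Analogously, \ref{ax:Sdlmate-lin} uses that $\Dbimonm$ is a coalgebra morphism, and \ref{ax:Sdlmate-Taylor} uses that $\Dbimonu$ is a coalgebra morphism.

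The axiom \ref{ax:Sdlmate-Schwarz} follows from naturality of $\tensorSym$ together with the naturality of $\Ocmont$: both sides reduce to $\Ocmont \comp (\Oc X \tensor \Oc\tensorSym_{\Dbimon,\Dbimon}) \comp (\Oc X \tensor \Dbimonca \tensor \Dbimonca)$ up to the symmetric monoidal coherences, so no property of the coalgebra beyond its existence is needed. The main obstacle, as expected, will be \ref{ax:Sdlmate-with}: here one must rewrite $\Sdistl$ using \cref{eq:Sdistl} and trace through $\seelyTwo$ and the lax monoidality $\Ocmont$, invoking both the fact that $\Dbimoncm$ is a coalgebra morphism and the compatibility of $\Ocmont$ with the Seely isomorphism $\seelyTwo$ (which is part of the data making $\oc\_$ a lax symmetric monoidal comonad from $(\cL,\With)$ to $(\cL,\tensor)$). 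Concretely, one checks the diagram by substituting the explicit formula for $\Sdistl$ and using the bimonoid compatibility expressed by the fact that $\Dbimoncm$ is a morphism of coalgebras from $(\Dbimon,\Dbimonca)$ to $(\Dbimon\tensor\Dbimon,\Ocmont\comp(\Dbimonca\tensor\Dbimonca))$. Once these identifications are in place, the diagram reduces to a chain of naturality squares for $\Ocmont$ and $\seelyTwo$, and the analyticity clause \ref{ax:Sdlmate-Taylor} is then a small additional check using $\Dbimonu \in \Emoc{\cL}(\Sone,\Dbimon)$.
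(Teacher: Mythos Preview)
Your proposal is correct and follows essentially the same strategy as the paper: verify each of the seven axioms by a direct diagram chase, matching each axiom to the corresponding structural property of $\Dbimonca$ (the coalgebra laws for \ref{ax:Sdlmate-chain}, $\Win_0$ being a coalgebra morphism for \ref{ax:Sdlmate-local}, $\Dbimoncu$ and $\Dbimoncm$ for \ref{ax:Sdlmate-add}, $\Dbimonm$ for \ref{ax:Sdlmate-lin}, $\Dbimonu$ for \ref{ax:Sdlmate-Taylor}). The paper spells out only the chases for \ref{ax:Sdlmate-Taylor} and \ref{ax:Sdlmate-lin} and defers the remaining ones to Theorem~19 of~\cite{Ehrhard23-cohdiff}, but the structure of the argument is exactly what you describe.

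One small point worth noting: you flag \ref{ax:Sdlmate-with} as ``the main obstacle'' and plan to invoke that $\Dbimoncm$ is a coalgebra morphism. The paper observes instead that \ref{ax:Sdlmate-with} (like \ref{ax:Sdlmate-Schwarz}) \emph{does not rely on any of the assumptions on $\Dbimonca$}: it follows purely from the coherences between $\seelyTwo$, $\Ocmont$ and the cartesian structure. Your route still works, because by \cref{th:dbimon-comon-contr} one has $\Dbimoncm=\Coalgc_\Dbimon$, and $\Coalgc_P$ is automatically a coalgebra morphism for any coalgebra $P$ --- so the property you invoke is not an extra hypothesis. But the paper's remark is conceptually sharper: it tells you that \ref{ax:Sdlmate-with} is \emph{not} where the analytic-coalgebra hypotheses are doing work, and hence is not the obstacle you anticipate.
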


\begin{proof} 
We show that $\Sdlmate$ satisfies \ref{ax:Sdlmate-Taylor} with the diagram chase below.
The only crucial argument involved is the fact that \(\Dbimonu\in\Emoc\cL(\Sone,\Dbimon)\).
\[
\begin{tikzcd}
	{!X } & {!(X \tensor 1)} \\
	{!X \tensor 1} & {!X \tensor !1} \\
	{!X \tensor \Dbimon} & {!X \tensor !\Dbimon} & {!(X \tensor \Dbimon)}
	\arrow["{!X \tensor 1}"', from=1-1, to=2-1]
	\arrow["{!X \tensor \Ocmonz}"', from=2-1, to=2-2]
	\arrow["\Ocmont"', from=2-2, to=1-2]
	\arrow["{!\tensorUnitR^{-1}}", from=1-1, to=1-2]
	\arrow["{!X \tensor \Dbimonu}"', from=2-1, to=3-1]
	\arrow["{!X \tensor \Sdlmate}"', from=3-1, to=3-2]
	\arrow["{!X \tensor !\Dbimonu}", from=2-2, to=3-2]
	\arrow["{!(X \tensor \Dbimonu)}", from=1-2, to=3-3]
	\arrow["\Ocmont"', from=3-2, to=3-3]
\end{tikzcd} \]
We show that $\Sdlmate$ satisfies
  \ref{ax:Sdlmate-lin} with the diagram chase below
  (any use of the associator $\alpha$ of the symmetric monoidal 
  structure is kept implicit). 
  The only crucial argument involved is the fact that 
  \(\Dbimonm\in\Emoc\cL(\Dbimon\Times\Dbimon,\Dbimon)\). 
\[ 
\begin{tikzcd}
	{\oc X \tensor \Dbimon \tensor \Dbimon} & {!X \tensor ! \Dbimon \tensor \Dbimon} & {!(X \tensor \Dbimon) \tensor \Dbimon} & {!(X \tensor \Dbimon) \tensor !\Dbimon} & {!(X \tensor \Dbimon \tensor \Dbimon)} \\
	&& {!X \tensor ! \Dbimon \tensor !\Dbimon} & {!X \tensor ! (\Dbimon \tensor \Dbimon)} \\
	{!X \tensor \Dbimon} &&& {!X \tensor !\Dbimon} & {!(X \tensor \Dbimon)}
	\arrow["{!X \tensor \Sdlmate \tensor \Dbimon}", from=1-1, to=1-2]
	\arrow["{\Ocmont \tensor \Dbimon}", from=1-2, to=1-3]
	\arrow["{!(X \tensor \Dbimon) \tensor \Sdlmate}", from=1-3, to=1-4]
	\arrow["\Ocmont", from=1-4, to=1-5]
	\arrow["{!X \tensor \Dbimon \tensor \Sdlmate}"{description}, from=1-2, to=2-3]
	\arrow["{\Ocmont \tensor !\Dbimon}"{description}, from=2-3, to=1-4]
	\arrow["{!X \tensor \Ocmont}"', from=2-3, to=2-4]
	\arrow["\Ocmont"{description}, from=2-4, to=1-5]
	\arrow["{!X \tensor \Sdlmate \tensor \Sdlmate}"', from=1-1, to=2-3]
	\arrow["{!X \tensor \Dbimonm}"', from=1-1, to=3-1]
	\arrow["{!X \tensor \Sdlmate}"', from=3-1, to=3-4]
	\arrow["\Ocmont"', from=3-4, to=3-5]
	\arrow["{!(X \tensor \Dbimonm)}", from=1-5, to=3-5]
	\arrow["{!X \tensor !\Dbimonm}", from=2-4, to=3-4]
	\arrow[draw=none, from=2-3, to=3-1]
\end{tikzcd}  \]

The other computations are
  similar and can be found in Theorem 19 of~\cite{Ehrhard23-cohdiff} (they do not involve any 
  argument on the summability structure, so they directly carry to our setting). 
  Note that the proofs of \ref{ax:Sdlmate-Schwarz} and \ref{ax:Sdlmate-with} 
  do not rely on any
  of the assumptions on $\Dbimonca$.
\end{proof}

\begin{theorem} \label{thm:Sdlmate-to-coalgebra}
  Any left-sided analytic Taylor expansion
  induces an analytic coalgebra given by 
  \[ \Dbimonca = 
\begin{tikzcd}
	\Dbimon & {1 \tensor \Dbimon} & {!1 \tensor \Dbimon} & {!(1 \tensor \Dbimon)} & {!\Dbimon .}
	\arrow["{\tensorUnitL^{-1}}", from=1-1, to=1-2]
	\arrow["{\Ocmonz \tensor \Dbimon}", from=1-2, to=1-3]
	\arrow["{\Sdlmate_1}", from=1-3, to=1-4]
	\arrow["{! \tensorUnitL}", from=1-4, to=1-5]
\end{tikzcd} \]
Furthermore, $\Sdlmate =  \begin{tikzcd}
  {!X \tensor \Dbimon} & {!X \tensor !\Dbimon} & {!(X \tensor \Dbimon)}
  \arrow["{!X \tensor \Dbimonca}", from=1-1, to=1-2]
  \arrow["\Ocmont", from=1-2, to=1-3]
\end{tikzcd}$.
\end{theorem}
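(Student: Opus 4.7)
The plan is to show that the map $\Sdlmate \mapsto \Dbimonca$ defined in the statement is inverse to the construction $\Dbimonca \mapsto \Sdlmate$ of Theorem~\ref{thm:coalgebra-to-Sdlmate}. This requires three verifications: that $(\Dbimon, \Dbimonca)$ is a $\oc$-coalgebra, that $\Sprojl_0$ and the four bimonoid structure maps $\Dbimonu, \Dbimonm, \Dbimoncu, \Dbimoncm$ are $\oc$-coalgebra morphisms, and that the factorization $\Sdlmate_X = \Ocmont \comp (!X \tensor \Dbimonca)$ holds for all $X$.

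For the first verification, the counit law $\der \comp \Dbimonca = \id_{\Dbimon}$ and the comultiplication law $\dig \comp \Dbimonca = \oc \Dbimonca \comp \Dbimonca$ both follow from the two diagrams of \ref{ax:Sdlmate-chain} instantiated at $X = 1$. The calculation uses the standard coherence of the lax monoidal structure $(\Ocmonz, \Ocmont)$ of $\oc\_$ with the counit $\der$ and comultiplication $\dig$ (namely $\der_1 \comp \Ocmonz$ being the inverse unitor and $\dig_1 \comp \Ocmonz = \oc \Ocmonz \comp \Ocmonz$ up to unitors), which are folklore in any Seely category. For the second verification, each of the five $\oc$-coalgebra morphism conditions unpacks, via further monoidal coherence, into one of the left-sided Taylor axioms at $X = 1$: $\Sprojl_0$ from \ref{ax:Sdlmate-local}, $\Dbimonu$ from \ref{ax:Sdlmate-Taylor}, $\Dbimonm$ from \ref{ax:Sdlmate-lin}, and $\Dbimoncu$ and $\Dbimoncm$ from the two parts of \ref{ax:Sdlmate-add} respectively. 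These are essentially the calculations in the proof of Theorem~\ref{thm:coalgebra-to-Sdlmate} run in reverse, and should be largely mechanical.

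The third verification, establishing the factorization $\Sdlmate_X = \Ocmont \comp (!X \tensor \Dbimonca)$ for arbitrary $X$, is the main obstacle. Setting $\tilde\Sdlmate_X := \Ocmont \comp (!X \tensor \Dbimonca)$, a direct substitution together with the lax monoidal unit coherence $!\tensorUnitL \comp \Ocmont \comp (\Ocmonz \tensor !\Dbimon) \comp \tensorUnitL^{-1}_{!\Dbimon} = \id_{!\Dbimon}$ shows immediately that $\tilde\Sdlmate_1 = \Sdlmate_1$. The delicate point is extending this equality from $X=1$ to all $X$: this requires showing that a left-sided analytic Taylor expansion is completely determined by its value at $X = 1$. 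The cleanest route I see is to derive from the digging square of \ref{ax:Sdlmate-chain} together with naturality of $\Sdlmate$ a strength-like identity expressing $\Sdlmate_X$ as a composition involving $\Sdlmate_1$, $\Ocmont$, and the unitor $\tensorUnitL$; specializing this identity using the definition of $\Dbimonca$ should collapse everything to the desired formula. Managing the proliferation of monoidal coherence isomorphisms and identifying the precise strength-like reformulation of $\Sdlmate$ will be the chief technical hurdle.
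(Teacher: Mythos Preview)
Your overall strategy is right, but the order of your three verifications is backwards, and this matters. The paper establishes the factorization $\Sdlmate_X = \Ocmont \compl (\oc X \tensor \Dbimonca)$ \emph{first} (by invoking Theorem~17 of~\cite{Ehrhard23-cohdiff}, whose argument does not depend on the summability structure and so carries over verbatim), and only then turns to the coalgebra and coalgebra-morphism conditions. The reason is that the factorization is an ingredient in some of those later verifications: in the paper's diagram chase showing that $\Dbimonm$ is a $\oc$-coalgebra morphism, the factorization is invoked not only at $X=\Sone$ but also at $X=\Sone\tensor\Dbimon$ (the cells labeled $(a)$). Your plan to ``run the calculations of Theorem~\ref{thm:coalgebra-to-Sdlmate} in reverse'' tacitly assumes the factorization at these other instances, since in that theorem $\Sdlmate$ is \emph{defined} by the factorization. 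So your step~2 cannot be completed before your step~3.

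As for step~3 itself, your instinct that the identity should follow from \ref{ax:Sdlmate-chain} and naturality is sound, but the actual derivation (in~\cite{Ehrhard23-cohdiff}) is more specific than a vague ``strength-like identity'': it uses the compatibility of $\Ocmont$ with $\dig$ that comes from the lax monoidal comonad structure, together with the digging square of \ref{ax:Sdlmate-chain}, to produce the required equation. This is not hard once you know where to look, but it is not something that falls out of naturality alone. Once you have the factorization in hand, your verifications~1 and~2 go through essentially as you describe; the paper gives explicit diagram chases for $\Dbimonu$ (from \ref{ax:Sdlmate-Taylor}) and $\Dbimonm$ (from \ref{ax:Sdlmate-lin} plus the factorization), and defers the remaining cases to~\cite{Ehrhard23-cohdiff}.
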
 

\begin{proof} The fact that 
  $\Sdlmate = \Ocmont \compl (\oc X \tensor \Sdlmate)$ can be found in 
  Theorem 17 of~\cite{Ehrhard23-cohdiff}. Again, no argument on the summability
  structure are used so the proof carry directly to our setting.
  
  We prove that \(\Dbimonu\in\Emoc\cL(\Sone,\Dbimon)\)
  with the following diagram chase. The only crucial argument involved is 
  \ref{ax:Sdlmate-Taylor}.
\[\begin{tikzcd}
	1 &&&& {!1} \\
	& 1 && {!1} \\
	& {1 \tensor 1} & {!1 \tensor 1} & {!(1 \tensor 1)} \\
	\Dbimon & {1 \tensor \Dbimon} & {!1 \tensor \Dbimon} & {!(1 \tensor \Dbimon)} & {!\Dbimon}
	\arrow["\Dbimonu"', from=1-1, to=4-1]
	\arrow["{\tensorUnitL^{-1}}"', from=4-1, to=4-2]
	\arrow["{\Ocmonz \tensor \Dbimon}"', from=4-2, to=4-3]
	\arrow["{\Sdlmate_1}"', from=4-3, to=4-4]
	\arrow["{!\tensorUnitL^{-1}}"', from=4-4, to=4-5]
	\arrow["{! \Dbimonu}", from=1-5, to=4-5]
	\arrow["\Ocmonz", from=1-1, to=1-5]
	\arrow["{\tensorUnitL^{-1}}"', from=1-1, to=3-2]
	\arrow["{1 \tensor \Dbimonu}"{description}, from=3-2, to=4-2]
	\arrow["{\Ocmonz \tensor 1}", from=3-2, to=3-3]
	\arrow["{!1 \tensor \Dbimonu}"{description}, from=3-3, to=4-3]
	\arrow["\tensorUnitR", from=3-3, to=2-4]
	\arrow["{!\tensorUnitR^{-1}}"', from=2-4, to=3-4]
	\arrow["{!(1 \tensor \Dbimonu)}", from=3-4, to=4-4]
	\arrow["\tensorUnitR"', from=3-2, to=2-2]
	\arrow["\Ocmonz", from=2-2, to=2-4]
	\arrow[Rightarrow, no head, from=1-1, to=2-2]
	\arrow[Rightarrow, no head, from=1-5, to=2-4]
	\arrow["{!\tensorUnitL}"', from=3-4, to=1-5]
	\arrow["{\text{\ref{ax:Sdlmate-Taylor}}}"{description}, draw=none, from=4-3, to=3-4]
\end{tikzcd}\] 

We prove that \(\Dbimonm\in\Emoc\cL(\Dbimon\Times\Dbimon,\Dbimon)\) with the 
following diagram chase. The 
proof involves \ref{ax:Sdlmate-lin} and the fact that 
$\Sdlmate = \Ocmont \compl (\oc X \tensor \Sdlmate)$ (commutation $(a)$).
\[
\begin{tikzcd}
	{\Dbimon \tensor \Dbimon} & {!\Dbimon \tensor !\Dbimon} &&& {!\Dbimon \tensor !\Dbimon} & {!(\Dbimon \tensor \Dbimon)} \\
	& {1 \tensor !\Dbimon \tensor !\Dbimon} & {!1 \tensor !\Dbimon \tensor !\Dbimon} && {!(1 \tensor \Dbimon) \tensor !\Dbimon} \\
	&& {!1 \tensor !\Dbimon \tensor \Dbimon} & {!(1 \tensor \Dbimon) \tensor \Dbimon} & {!(1 \tensor \Dbimon) \tensor !\Dbimon} \\
	& {1 \tensor \Dbimon \tensor \Dbimon} & {!1 \tensor \Dbimon \tensor \Dbimon} & {!(1 \tensor \Dbimon) \tensor \Dbimon} & {!(1 \tensor \Dbimon \tensor \Dbimon)} \\
	\Dbimon & {1 \tensor \Dbimon} & {!1 \tensor \Dbimon} && {!(1 \tensor \Dbimon)} & {!\Dbimon}
	\arrow["\Ocmont", from=1-5, to=1-6]
	\arrow["\Dbimonm"', from=1-1, to=5-1]
	\arrow["{\tensorUnitL^{-1}}"', from=5-1, to=5-2]
	\arrow["{\Ocmonz \tensor \Dbimon}"', from=5-2, to=5-3]
	\arrow["{\Sdlmate_1}"', from=5-3, to=5-5]
	\arrow["{!\tensorUnitL}"', from=5-5, to=5-6]
	\arrow["{! \Dbimonm}", from=1-6, to=5-6]
	\arrow["{!1 \tensor \Dbimonm}", from=4-3, to=5-3]
	\arrow["{\Sdlmate_1 \tensor \Dbimon}"', from=4-3, to=4-4]
	\arrow["{\Sdlmate_{1 \tensor \Dbimon}}"', from=4-4, to=4-5]
	\arrow["{!(1 \tensor \Dbimonm)}"', from=4-5, to=5-5]
	\arrow["{\text{\ref{ax:Sdlmate-lin}}}"{description}, draw=none, from=5-3, to=4-5]
	\arrow["{!1 \tensor \Dbimonca \tensor \Dbimon}", from=4-3, to=3-3]
	\arrow["\Ocmont", from=3-5, to=4-5]
	\arrow["{!1 \tensor !\Dbimon \tensor \Dbimonca}", from=3-3, to=2-3]
	\arrow["{\Ocmonz \tensor \Dbimon \tensor \Dbimon}", from=4-2, to=4-3]
	\arrow["{1 \tensor \Dbimonm}", from=4-2, to=5-2]
	\arrow["{\tensorUnitL^{-1}}"', from=1-1, to=4-2]
	\arrow["{1 \tensor \Dbimonca \tensor \Dbimonca}"', from=4-2, to=2-2]
	\arrow["{\Ocmonz \tensor !\Dbimon \tensor !\Dbimon}", from=2-2, to=2-3]
	\arrow["{\Dbimonca \tensor \Dbimonca}", from=1-1, to=1-2]
	\arrow["{\tensorUnitL^{-1}}"', from=1-2, to=2-2]
	\arrow[Rightarrow, no head, from=1-2, to=1-5]
	\arrow["{\Ocmont \tensor \Dbimon}", from=3-3, to=3-4]
	\arrow["{!(1 \tensor \Dbimon) \tensor \Dbimonca}", from=3-4, to=3-5]
	\arrow[Rightarrow, no head, from=3-4, to=4-4]
	\arrow["{(a)}"{description}, draw=none, from=4-3, to=3-4]
	\arrow["{(a)}"{description}, draw=none, from=4-4, to=3-5]
	\arrow["{\Ocmont \tensor ! \Dbimon}", from=2-3, to=2-5]
	\arrow["{!\tensorUnitL^{-1} \tensor !\Dbimon}"', from=1-5, to=2-5]
	\arrow[Rightarrow, no head, from=2-5, to=3-5]
	\arrow["{! \tensorUnitL^{-1}}", from=1-6, to=4-5]
\end{tikzcd}\]
The other computations rely on similar arguments and can be found
in \cite{Ehrhard23-cohdiff}.
\end{proof}

\begin{corollary} \label{cor:Taylor-structure-induces-differential}
  The constructions of \cref{thm:coalgebra-to-Sdlmate} and \cref{thm:Sdlmate-to-coalgebra}
  are inverse of each other. Thus, for any representably $\Sigma$-additive category, there
  is a bijective correspondence between analytic Taylor expansions 
  and analytic coalgebras. 
\end{corollary}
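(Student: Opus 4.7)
The plan is to combine the result of Theorem \ref{thm:Sdl-mate} with the two constructions of Theorems \ref{thm:coalgebra-to-Sdlmate} and \ref{thm:Sdlmate-to-coalgebra}. By \cref{thm:Sdl-mate}, the mate construction already provides a bijection between analytic Taylor expansions $\Sdl:\oc\S\naturalTrans\S\oc$ and left-sided analytic Taylor expansions $\Sdlmate:\Sl\oc\naturalTrans\oc\Sl$. Therefore it suffices to show that the pair of maps between analytic coalgebras $\Dbimonca$ and left-sided analytic Taylor expansions $\Sdlmate$ described in \cref{thm:coalgebra-to-Sdlmate,thm:Sdlmate-to-coalgebra} are mutually inverse.

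One of the two round-trips is essentially free: starting from a left-sided analytic Taylor expansion $\Sdlmate$, producing $\Dbimonca$ as in \cref{thm:Sdlmate-to-coalgebra}, and then rebuilding the Taylor expansion via \cref{thm:coalgebra-to-Sdlmate} recovers $\Sdlmate$ exactly, because the equality $\Sdlmate = \Ocmont\compl(\oc X\tensor\Dbimonca)$ is asserted as part of the statement of \cref{thm:Sdlmate-to-coalgebra} itself.

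The remaining direction is the essential computation. Starting from an analytic coalgebra $\Dbimonca\in\cL(\Dbimon,\oc\Dbimon)$, define $\Sdlmate = \Ocmont\compl(\oc X\tensor\Dbimonca)$ as in \cref{thm:coalgebra-to-Sdlmate}, and let $\Dbimonca'$ be the coalgebra extracted from $\Sdlmate$ by \cref{thm:Sdlmate-to-coalgebra}. Unfolding the definition, I would compute
\[
\Dbimonca' = !\tensorUnitL \compl \Sdlmate_1 \compl (\Ocmonz\tensor\Dbimon)\compl\tensorUnitL^{-1}
= !\tensorUnitL\compl\Ocmont\compl(\Ocmonz\tensor\Dbimonca)\compl\tensorUnitL^{-1}.
\]
By naturality of $\tensorUnitL$, this equals $!\tensorUnitL\compl\Ocmont\compl(\Ocmonz\tensor\oc\Dbimon)\compl\tensorUnitL^{-1}\compl\Dbimonca$. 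The leftmost four morphisms form precisely the left unit coherence diagram for the lax symmetric monoidal functor $(\oc\_,\Ocmonz,\Ocmont)$, which the Seely structure of \cref{def:resource-comonad} makes commute, so the composite is the identity on $\oc\Dbimon$ and we obtain $\Dbimonca' = \Dbimonca$.

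The main obstacle is essentially bookkeeping rather than conceptual: one must verify that the extracted $\Dbimonca'$ is indeed obtained by instantiating $\Sdlmate$ at $X=\Sone$ and that the unit coherence of $(\oc\_,\Ocmonz,\Ocmont)$ cleanly collapses the sandwich of unitors. Neither the axioms on $\Dbimonca$ (coalgebra laws, being a bimonoid morphism, $\Sprojl_0$ being a coalgebra morphism) nor the axioms of a left-sided analytic Taylor expansion intervene in this round-trip verification, which is why the bijection can be established purely on the level of underlying natural transformations, with the compatibility with all the other structure handled separately by \cref{thm:coalgebra-to-Sdlmate,thm:Sdlmate-to-coalgebra}.
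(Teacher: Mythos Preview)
Your proposal is correct and follows essentially the same approach as the paper: reduce via \cref{thm:Sdl-mate} to the bijection between analytic coalgebras and left-sided analytic Taylor expansions, note that one round-trip is contained in the statement of \cref{thm:Sdlmate-to-coalgebra}, and check the other round-trip by a direct computation. In fact you give more detail than the paper, which merely calls the second round-trip a ``straightforward computation''; your identification of it as the left unit coherence of the lax monoidal structure $(\oc,\Ocmonz,\Ocmont)$ is exactly the right ingredient.
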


\begin{proof} We already know from \cref{thm:Sdlmate-to-coalgebra}
  that if $\Dbimonca \defEq !\tensorUnitL \compl \Sdlmate_1 \compl 
  (\Ocmonz \tensor \Dbimon) \compl \tensorUnitL^{-1}$ then 
  $\Sdlmate = \Ocmont \compl (!X \tensor \Dbimonca)$. 
  Conversely, assume that 
  $\Sdlmate \defEq \Ocmont \compl (!X \tensor \Dbimonca)$. Then we can check by 
  a straightforward computation that 
  $\Dbimonca = !\tensorUnitL \compl \Sdlmate_1 \compl 
  (\Ocmonz \tensor \Dbimon) \compl \tensorUnitL^{-1}$. So 
  the constructions of \cref{thm:coalgebra-to-Sdlmate} and \cref{thm:Sdlmate-to-coalgebra}
  are inverse of each other. Then, we know from \cref{thm:Sdl-mate} that 
  the existence of a left-sided analytic Taylor expansion 
  $\Sdlmate$ is equivalent to the existence of an analytic Taylor 
  expansion $\Sdl$.
\end{proof}

\begin{definition} A representably analytic category is a representably
  $\Sigma$-additive category equipped with an analytic coalgebra (and equivalently
  an analytic Taylor expansion).
\end{definition}

\subsection{A remarkable isomorphism}
In this section we assume that \(\cL\) is a representably
$\Sigma$-additive category equipped with an analytic coalgebra.
We prove that under the assumption that the resource comonad is finitary
(see \cref{def:resource-finitary}), the coalgebra $(\Dbimon, \Dbimonca)$ 
is isomorphic to the free coalgebra $(\Oc 1, \dig)$.

Given a $\Oc$-coalgebra \(P\), let %
\(\Coalgcgen n\in\Emoc\cL(P,\Tensexp Pn)\) be the \(n\)-ary version of
the comultiplication of the comonoid \(P\), so that %
\(\Coalgcgen0=\Coalgw\), \(\Coalgcgen1=\Id\) and \(\Coalgcgen2=\Coalgc\).
We keep any use of the monoidal associativity $\tensorAssoc$ implicit.

In the case where \(P\) is the coalgebra \((\Dbimon,\Dbimonca)\), we
know by \Cref{th:dbimon-comon-contr} that
\(\Coalgc_\Dbimon=\Dbimoncm\) and hence we have
\begin{align*}
  \Coalgcgen n_\Dbimon\Compl\Win_k\Compl\Soneiso n
  =\sum_{\Biind{\Vect i\in\Nat^k}{i_1+\cdots+i_k=n}}
  \Win_{i_1}\Times\cdots\Times\Win_{i_k}
\end{align*}
where \(\Soneiso n\in\cL(\Tensexp\Sone n,\Sone)\) is the unique
canonical isomorphism induced by the symmetric monoidal structure of \(\cL\).

\begin{lemma}\label{lemma:dbimon-comon-projone}
  For all \(n\in\Nat\) we have
  \[
    \begin{tikzcd}
      \Dbimon
      \ar[r,"\Coalgcgen n_\Dbimon"]
      \ar[drr,swap,"\Wproj_n"]
      &
      \Tensexp\Dbimon n
      \ar[r,"\Tensexp{\Wproj_1}n"]
      &
      \Tensexp\Sone n
      \ar[d,"\Soneiso n"]
      \\
      &&
      \Sone
    \end{tikzcd}
  \]
\end{lemma}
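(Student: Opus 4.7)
The plan is to verify the equality by precomposing both sides with the injections $\Win_k\in\cL(\Sone,\Dbimon)$ for $k\in\Nat$, appealing to the joint epicity of the family $(\Win_k)_{k\in\Nat}$ guaranteed by \cref{rk:elementary-closed-win-epicity} in the closed representably $\Sigma$-additive setting. Since both morphisms in the claim go from $\Dbimon$ to $\Sone$, it suffices to show that their precompositions with each $\Win_k$ agree with $\Wproj_n\Compl\Win_k=\Kronecker kn\,\Id_\Sone$.

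First I would expand $\Coalgcgen n_\Dbimon\Compl\Win_k$ using the formula displayed immediately above the lemma, read with the indexing convention that makes the types agree (a vector $\Vect i\in\Nat^n$ summing to $k$, so that both sides live in $\cL(\Tensexp\Sone n,\Tensexp\Dbimon n)$):
\[
\Coalgcgen n_\Dbimon\Compl\Win_k\Compl\Soneiso n
= \sum_{\Biind{\Vect i\in\Nat^n}{i_1+\cdots+i_n=k}}
\Win_{i_1}\Times\cdots\Times\Win_{i_n}\,.
\]
This reduces the claim to an equation between explicit tensor products of the injections.

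Next I would post-compose with $\Tensexp{\Wproj_1}n$, use functoriality of $\Times$, and invoke the defining relation $\Wproj_1\Compl\Win_i=\Kronecker i1\,\Id_\Sone$ to collapse each summand to $\Kronecker{i_1}1\cdots\Kronecker{i_n}1\,\Id_{\Tensexp\Sone n}$. The distributivity of composition over these sums is legitimate by \ref{ax:S-sm-dist}, which is automatic in a representably $\Sigma$-additive category by \cref{thm:representably-sigma-additive-monoidal}. Only the single summand with $i_1=\cdots=i_n=1$ survives, and this can occur in the sum only when $k=n$. Post-composing with the canonical iso $\Soneiso n$ then yields $\Kronecker kn\,\Id_\Sone$, matching $\Wproj_n\Compl\Win_k$ on the nose, so joint epicity of the $\Win_k$'s closes the argument.

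I expect no real obstacle: the argument is essentially bookkeeping around the universal property of the countable cartesian product $\Dbimon=\With_{i\in\Nat}\Sone$ and the fact that $\Dbimoncm$ is just the diagonal in the comonoid sense. The only care needed is to keep the index convention coherent between the displayed formula for $\Coalgcgen n_\Dbimon$ and its use here, and to remember that distributivity of composition over partially defined sums is licit only because of representable $\Sigma$-additivity.
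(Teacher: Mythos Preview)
Your proposal is correct and follows essentially the same approach as the paper: precompose with $\Win_k$, unfold the formula for $\Coalgcgen n_\Dbimon\Compl\Win_k$, apply $\Tensexp{\Wproj_1}n$ to kill all summands except the one with $i_1=\cdots=i_n=1$, and conclude by joint epicity of the $\Win_k$'s. Your remark about the indexing convention (a vector in $\Nat^n$ summing to $k$, rather than the paper's $\Nat^k$) is a correct reading of what is intended.
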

\begin{proof}
  Given \(k\in\Nat\), we have
  \begin{align*}
    \Soneiso n
    \Compl\Tensexp{\Wproj_1}n
    \Compl\Coalgcgen n_\Dbimon
    \Compl\Win_k
    &=\Soneiso n
      \Compl\Tensexp{\Wproj_1}n
      (\sum_{\Biind{\Vect i\in\Nat^n}{i_1+\cdots+i_n=k}}
      \Win_{i_1}\Times\cdots\Times\Win_{i_n})
    \Compl\Invp{\Soneiso n}\\
    &=\Kronecker{k}{n}\Id_{\Sone}\\
    &=\Wproj_n\Compl\Win_k
  \end{align*}
  so that the diagram commutes by joint epicity of the \(\Win_k\)'s.
\end{proof}

\begin{definition}
  For all object \(X\), we define a natural morphism
  \(\Dergen n_X\in\cL(\Oc X,\Tensexp Xn)\) as
  \[
    \begin{tikzcd}
      \Oc X
      \ar[r,"\Coalgcgen n_{\Coalgfree X}"]
      &
      \Tensexp{(\Oc X)}n
      \ar[r,"\Tensexp{\Deru_X}n"]
      &
      \Tensexp Xn
    \end{tikzcd}
  \]
  and call it \emph{\(n\)-ary generalized dereliction}. This terminology
  is folklore in LL.
\end{definition}

For each \(n\in\Nat\) we define
\(\Dbimondegi n=\Soneiso n\Compl\Dergen
n_{\Sone}\in\cL(\Oc\Sone,\Sone)\), and then we define
\(\Dbimondeg=\Tuple{\Dbimondegi
  n}_{n\in\Nat}\in\cL(\Oc\Sone,\Dbimon)\).
In other words, thanks to the compatibility 
between the sum and the cartesian product, we
have %
\begin{align}\label{eq:dbimondeg-as-sum}
  \Dbimondeg=
  \sum_{k=0}^\infty
  \Win_k\Compl\Soneiso k
  \Compl\Tensexp{\Deru_{\Sone}}n
  \Compl\Coalgcgen n_{\Coalgfree\Sone}\,.
\end{align}

Conversely, we define
\(\Dbimondeginv\in\cL(\Dbimon,\Oc\Sone)\) as
\[
  \begin{tikzcd}
    \Dbimon
    \ar[r,"\Dbimonca"]
    &
    \Oc\Dbimon
    \ar[r,"\Oc{\Wproj_1}"]
    &
    \Oc\Sone .
  \end{tikzcd}
\]
\begin{lemma}\label{lemma:dbimondeg-iso-1}
  \(\Dbimondeg\Compl\Dbimondeginv=\Id_\Dbimon\).
\end{lemma}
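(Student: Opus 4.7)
The plan is to verify the identity pointwise against the projections $\Wproj_n$, which are jointly monic, and reduce the whole computation to the already-proved \Cref{lemma:dbimon-comon-projone}.

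First, I would reduce the problem by joint monicity of the $\Wproj_n$'s: it suffices to check that $\Wproj_n\Compl\Dbimondeg\Compl\Dbimondeginv=\Wproj_n$ for every $n\in\Nat$. Unfolding the definitions, $\Wproj_n\Compl\Dbimondeg=\Dbimondegi n=\Soneiso n\Compl\Dergen n_{\Sone}$ and $\Dbimondeginv=\Oc{\Wproj_1}\Compl\Dbimonca$, so the goal becomes
\[
\Soneiso n\Compl\Dergen n_{\Sone}\Compl\Oc{\Wproj_1}\Compl\Dbimonca
=\Wproj_n .
\]

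Next, I would establish naturality of $\Dergen n$ in its object argument. Since $\Deru$ is a natural transformation and $\Oc f\in\Emoc\cL(\Coalgfree X,\Coalgfree Y)$ for any $f\in\cL(X,Y)$, the $n$-fold contraction $\Coalgcgen n_{\Coalgfree{(\_)}}$ is also natural in $X$, and therefore $\Dergen n_Y\Compl\Oc f=\Tensexp fn\Compl\Dergen n_X$. Applying this with $f=\Wproj_1\in\cL(\Dbimon,\Sone)$ lets me rewrite the goal as
\[
\Soneiso n\Compl\Tensexp{\Wproj_1}n\Compl\Dergen n_{\Dbimon}\Compl\Dbimonca=\Wproj_n .
\]

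The main (but routine) step is then to simplify $\Dergen n_{\Dbimon}\Compl\Dbimonca$ using the coalgebra axioms of $(\Dbimon,\Dbimonca)$. Because $\Dbimonca\in\Emoc\cL((\Dbimon,\Dbimonca),\Coalgfree\Dbimon)$ (second coalgebra axiom), and because $\Coalgcgen n$ is natural with respect to coalgebra morphisms, we get $\Coalgcgen n_{\Coalgfree\Dbimon}\Compl\Dbimonca=\Tensexp{\Dbimonca}n\Compl\Coalgcgen n_\Dbimon$. Composing with $\Tensexp{\Deru_\Dbimon}n$ and using the counit axiom $\Deru_\Dbimon\Compl\Dbimonca=\Id_\Dbimon$ gives $\Dergen n_\Dbimon\Compl\Dbimonca=\Coalgcgen n_\Dbimon$. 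Substituting into what remains to prove yields exactly
\[
\Soneiso n\Compl\Tensexp{\Wproj_1}n\Compl\Coalgcgen n_\Dbimon=\Wproj_n,
\]
which is the content of \Cref{lemma:dbimon-comon-projone}. The main obstacle, if any, is merely bookkeeping in the coalgebra-morphism argument that $\Coalgcgen n$ commutes with $\Dbimonca$; everything else is direct unfolding and an appeal to the previous lemma.
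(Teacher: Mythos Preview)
Your proposal is correct and follows essentially the same approach as the paper: reduce to each projection $\Wproj_n$, unfold $\Dbimondegi n$ and $\Dbimondeginv$, use naturality of $\Dergen n$ (which the paper splits into naturality of $\Coalgcgen n$ on coalgebra morphisms plus naturality of $\Deru$), then use the coalgebra axioms of $(\Dbimon,\Dbimonca)$ to reduce $\Dergen n_\Dbimon\Compl\Dbimonca$ to $\Coalgcgen n_\Dbimon$, and finally invoke \Cref{lemma:dbimon-comon-projone}. The only cosmetic difference is that you package naturality of $\Dergen n$ in one step whereas the paper writes it out in two.
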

\begin{proof}
  Given \(n\in\Nat\), we have
  \begin{align*}
    \Dbimondegi n\Compl\Dbimondeginv
    &=\Soneiso n
      \Compl\Tensexp{\Deru_{\Sone}}n
      \Compl\Coalgcgen n_{\Coalgfree{\Sone}}
      \Compl\Oc{\Wproj_{\Sone}}
      \Compl\Dbimonca\\
    &=\Soneiso n
      \Compl\Tensexp{\Deru_{\Sone}}n
      \Compl\Tensexp{\Oc{\Wproj_{\Sone}}}n
      \Compl\Coalgcgen n_{\Coalgfree{\Dbimon}}
      \Compl\Dbimonca\\
    &=\Soneiso n
      \Compl\Tensexp{\Wproj_{\Sone}}n
      \Compl\Tensexp{\Deru_{\Dbimon}}n
      \Compl\Coalgcgen n_{\Coalgfree{\Dbimon}}
      \Compl\Dbimonca\\
    &=\Soneiso n
      \Compl\Tensexp{\Wproj_{\Sone}}n
      \Compl\Coalgcgen n_\Dbimon\\
    &=\Wproj_n
  \end{align*}
  by \Cref{lemma:dbimon-comon-projone} and hence %
  \(\Dbimondeg\Compl\Dbimondeginv
  =\Tuple{\Dbimondegi n}_{n\in\Nat}\Compl\Dbimondeginv
  =\Tuple{\Dbimondegi n\Compl\Dbimondeginv}_{n\in\Nat}
  =\Tuple{\Wproj_n}_{n\in\Nat}
  =\Id_\Dbimon\).
\end{proof}

\begin{definition} %
  \label{def:resource-finitary}
  The resource category \(\cL\) is \emph{finitary} if the generalized
  derelictions \((\Dergen n_{\Sone})_{n\in\Nat}\) are jointly monic,
  that is \(\Dbimondeg\) is monic.
\end{definition}
The intuition is that the ``points'' of \(\Oc\Sone\) contain only a
finite amount of information.
This is typically the case when the definition of the exponential is
based on multisets.

\begin{theorem} %
  \label{th:dbimondeg-iso}
  If the representably analytic 
  category \(\cL\) is finitary, then
  \(\Dbimondeginv\Compl\Dbimondeg=\Id_{\Sone}\).
  Therefore, the $\Oc$-coalgebras \(\Dbimon\) and \(\Oc\Sone\) are isomorphic
  in \(\Emoc\cL\).
\end{theorem}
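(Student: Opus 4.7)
The plan is to combine \Cref{lemma:dbimondeg-iso-1} with the finitary hypothesis to upgrade $\Dbimondeg$ from a split epimorphism to a genuine isomorphism, and then to check that the inverse $\Dbimondeginv$ respects the coalgebra structures.

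First, \Cref{lemma:dbimondeg-iso-1} already gives $\Dbimondeg \Compl \Dbimondeginv = \Id_\Dbimon$, so $\Dbimondeg$ is a split epimorphism with right inverse $\Dbimondeginv$. Assuming $\cL$ is finitary, $\Dbimondeg$ is moreover monic. A morphism which is simultaneously monic and split epic is necessarily an isomorphism, and its two-sided inverse is forced to coincide with the right inverse. Concretely, from $\Dbimondeg \Compl (\Dbimondeginv \Compl \Dbimondeg) = (\Dbimondeg \Compl \Dbimondeginv) \Compl \Dbimondeg = \Dbimondeg = \Dbimondeg \Compl \Id_{\Oc\Sone}$ and the monicity of $\Dbimondeg$, one deduces $\Dbimondeginv \Compl \Dbimondeg = \Id_{\Oc\Sone}$. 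This gives the claimed equation in $\cL$.

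It remains to promote this isomorphism to the Eilenberg–Moore category $\Emoc\cL$. I would verify directly that $\Dbimondeginv$ is a $\Oc$-coalgebra morphism from $(\Dbimon, \Dbimonca)$ to $(\Oc\Sone, \Digg_\Sone)$, i.e.\ that $\Digg_\Sone \Compl \Dbimondeginv = \Oc{\Dbimondeginv} \Compl \Dbimonca$. Unfolding the definition $\Dbimondeginv = \Oc{\Wproj_1} \Compl \Dbimonca$, the left-hand side becomes $\Digg_\Sone \Compl \Oc{\Wproj_1} \Compl \Dbimonca$, which by naturality of $\Digg$ equals $\Oc{\Oc{\Wproj_1}} \Compl \Digg_\Dbimon \Compl \Dbimonca$; applying the coalgebra axiom $\Digg_\Dbimon \Compl \Dbimonca = \Oc{\Dbimonca} \Compl \Dbimonca$ and using functoriality of $\Oc\_$ rewrites this as $\Oc(\Oc{\Wproj_1} \Compl \Dbimonca) \Compl \Dbimonca = \Oc{\Dbimondeginv} \Compl \Dbimonca$, as required. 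Since $\Dbimondeginv$ is an iso in $\cL$ and a coalgebra morphism, its inverse $\Dbimondeg$ is automatically a coalgebra morphism too, so $\Dbimondeg : \Oc\Sone \to \Dbimon$ is an isomorphism in $\Emoc\cL$.

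The only non-routine ingredient here is the first step: the conversion of the one-sided inverse of \Cref{lemma:dbimondeg-iso-1} into a two-sided inverse, which rests entirely on the finitary hypothesis. Once monicity of $\Dbimondeg$ is in hand, the rest is a short naturality/coalgebra-axiom computation.
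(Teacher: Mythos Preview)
Your proof is correct and follows essentially the same approach as the paper: use \Cref{lemma:dbimondeg-iso-1} to get a one-sided inverse, then apply monicity of $\Dbimondeg$ (the finitary hypothesis) to cancel and obtain the other side, and finally observe that $\Dbimondeginv$ is a coalgebra morphism. The paper simply asserts the last point as ``clear'', whereas you spell out the naturality/coalgebra-axiom computation; your verification is correct and a welcome elaboration.
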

\begin{proof}
  We have \(\Dbimondeg\Compl\Dbimondeginv\Compl\Dbimondeg=\Dbimondeg\)
  by \Cref{lemma:dbimondeg-iso-1} and hence
  \(\Dbimondeginv\Compl\Dbimondeg=\Id\) by monicity of \(\Dbimondeg\).
  The second statement results from the fact that clearly
  \(\Dbimondeginv\in\Emoc\cL(\Dbimon,\Oc\Sone)\).
\end{proof}

\begin{remark}
  \label{rk:dbimon-ocone-iso-interp}
  So when \(\cL\) is a finitary representably analytic category
  which is closed, the objects \(\Oc\Sone\Limpl X\) and \(\Sfun X\)
  are isomorphic. This means that a morphism \(f\) from \(\Sone\) to
  \(X\) in \(\Kloc\cL\) is the same thing as a summable family
  \(\Vect x\) of elements of \(X\): \(f\) can be considered as a power
  series on the object \(\Sone\) of scalars, whose coefficients are
  the \(x_i\)'s, that is \(f(t)=\sum_{n=0}^\infty t^nx_n\).
  This power series is well known in traditional analysis and is 
  called the generating function of the sequence $\Vect x$.
  It validates the intuition developed throughout \cref{sec:summability-structure}
  that the elements of $\S X$ are power series.
  This observation is particularly remarkable in $\PCOH$ (see \cref{sec:pcoh}), in which 
  $\S 1$ is the set of (sub) probability distributions on $\N$. 
  The isomorphism 
  between $\S 1$ and $!1 \linarrow 1$ means that such (sub) probability distributions
  on $\N$
  can be seen as analytic maps from $[0,1]$ to $[0,1]$ with positive coefficients.
  This analytic map is exactly the \emph{probability generating function} found in 
  probability theory.

  More generally, if \(f\in\Kloc\cL(X,Y)=\cL(\Oc X,Y)\), we can define
  \(h\in\cL(\Tens{\Oc X}{\Oc\Sone},Y)\) as the following composition
  of morphism
  \[
    \begin{tikzcd}
      \Tens{\Oc X}{\Oc\Sone}
      \ar[r,"\Ocmont"]
      &[-1em]
      \Oc{\Tensp X\Sone}
      \ar[r,"\Oc\Rightu"]
      &[-1em]
      \Oc X
      \ar[r,"f"]
      &[-1em]
      Y
    \end{tikzcd}
  \]
  which can be seen as a two parameter analytic function which, by the
  isomorphism between \(\Sone\) and \(\Dbimon\), can be considered as a
  summable family \((h_n\in\Kloc\cL(X,Y))_{n\in\Nat}\).
  This means intuitively that we can write %
  \(f(tx)=\sum_{n=0}^\infty t^nh_n(x)\), that is, \(h_n\) is the
  \(n\)-homogeneous component of \(f\) which can be considered as a
  ``degree $n$ polynomial'' morphism \(X\to Y\).
  This morphism can also be obtained as %
  \(\Sproj_n\Compl\Tayfun(f)\Compl\Oc\Sinj_1\), using the Taylor
  functor.
\end{remark}

\subsection{The case of Lafont resource categories}
\label{sec:Lafont-categories}

If \(\cL\) is a symmetric monoidal category, one defines the category \(\Cm\cL\) of
commutative \(\Times\)-comonoids, whose objects are the commutative comonoid 
and whose morphisms are the comonoid morphisms (see~\cref{sec:bimonoid}). Identity 
and composition are defined as in $\cL$.
There is an obvious forgetful functor %
\(\Cmforg:\Cm\cL\to\cL\) which maps \(C\) to \(\Comonca C\) and acts
as the identity on morphisms.

\begin{definition} (\cite{Lafont88})
  The symmetric monoidal category \(\cL\) is a \emph{Lafont category} if the functor
  \(\Cmforg\) has a right adjoint.
\end{definition}

\begin{theorem} 
  Any cartesian Lafont symmetric monoidal category \(\cL\) has a canonical structure of
  resource category.
\end{theorem}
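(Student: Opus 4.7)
The plan is to construct the resource comonad structure directly from the right adjoint to $\Cmforg$, deriving every piece of structure from the universal property of the cofree commutative comonoid.

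First, I would set up the comonad. Let $\oc\_ : \cL \to \Cm\cL$ be the right adjoint of $\Cmforg$, with counit at $X$ denoted $\Deru_X : \Comonca{\oc X} \to X$ in $\cL$ and unit at $C$ denoted $\eta_C : C \to \oc \Comonca C$ in $\Cm\cL$. Composing with $\Cmforg$ on both sides yields the comonad $(\Cmforg \comp \oc\_, \Deru, \Digg)$ on $\cL$, where $\Digg_X = \Cmforg(\eta_{\oc X})$. The comonad axioms are the standard consequence of the adjunction. By a slight abuse of notation, I write $\oc X$ for $\Cmforg(\oc X)$, keeping in mind that $\oc X$ comes equipped with a canonical commutative comonoid structure $(\oc X, \weak_X, \contr_X)$.

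Second, I would establish the Seely isomorphisms. The key observation is that the symmetric monoidal structure of $\cL$ endows $\Cm\cL$ with a cartesian product given by $\tensor$: a comonoid morphism $f : C \to A \tensor B$ is the same datum as a pair of comonoid morphisms $C \to A$ and $C \to B$ (the tupling sends $(f_1, f_2)$ to $(f_1 \tensor f_2) \comp \Comonm_C$, which is a comonoid morphism thanks to cocommutativity of $C$), and $(1, \id_1, \tensorUnitL^{-1})$ is terminal in $\Cm\cL$. As a right adjoint between categories with finite products (here $\cL$ has $\with, \top$ and $\Cm\cL$ has $\tensor, 1$), the functor $\oc\_ : \cL \to \Cm\cL$ preserves these products. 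This gives canonical isomorphisms $\Seelyz : 1 \cong \oc \top$ and $\Seelyt_{X,Y} : \oc X \tensor \oc Y \cong \oc(X \with Y)$ in $\Cm\cL$, hence in $\cL$ via $\Cmforg$. Naturality of $\Seelyt$ in $X$ and $Y$ is automatic from the universal property.

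Third, I would verify that $(\oc\_, \Seelyz, \Seelyt)$ is a strong symmetric monoidal functor $(\cL, \with, \top) \to (\cL, \tensor, 1)$ and that the required coherence diagrams with $\Deru$ and $\Digg$ hold. The monoidal coherence diagrams (unitors, associator, symmetry) all reduce to equalities between two parallel comonoid morphisms into a cofree comonoid $\oc Z$, so by the universal property it suffices to check them after postcomposition with $\Deru_Z$, where they become instances of the coherence of the cartesian structure of $\cL$. Similarly, the compatibility of $\Digg$ with $\Seelyt$ is proved by observing that both $\Digg_{X \with Y} \comp \Seelyt_{X,Y}$ and $\oc \Seelyt_{X,Y} \comp \Seelyt_{\oc X, \oc Y} \comp (\Digg_X \tensor \Digg_Y)$ are comonoid morphisms $\oc X \tensor \oc Y \to \oc \oc(X \with Y)$ whose postcompositions with $\Deru$ agree, hence they are equal by uniqueness.

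The main obstacle will be showing that $\tensor$ is genuinely the categorical product in $\Cm\cL$, that is, verifying the universal property of the tupling $\langle f_1, f_2 \rangle = (f_1 \tensor f_2) \comp \Comonm_C$. The existence part is a routine diagram chase using the comonoid axioms and the naturality of $\tensorSym$; the uniqueness part uses the fact that any comonoid morphism $h : C \to A \tensor B$ satisfies $h = (\Comonu_A \tensor B) \comp h$-projections followed by the counit axiom, which forces $h = \langle (\tensorUnitL \comp (\Comonu_A \tensor B) \comp h), (\tensorUnitR \comp (A \tensor \Comonu_B) \comp h) \rangle$. Once this is in hand, the rest of the construction is formal, and the coherence with $\Deru$ and $\Digg$ follows by the universal property argument sketched above.
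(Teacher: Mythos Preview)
Your proposal is correct and matches the standard argument the paper defers to: the paper does not give its own proof but cites \cite{Mellies09}, and the construction it sketches afterward (the comonad from the adjunction, the Seely isomorphisms arising because the right adjoint $\Comonfree$ preserves the cartesian products of $\Cm\cL$, which are given by $\tensor$) is exactly what you unfold. The key lemma you flag as the main obstacle, that $\tensor$ is the cartesian product in $\Cm\cL$, is stated separately in the paper as the next theorem, also with a reference to \cite{Mellies09}.
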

This is a standard result, see for instance~\cite{Mellies09}.
It means that \(\cL\) is endowed with a particular
resource structure %
\((\oc,\Deru,\Digg,\Seelyz,\Seelyt)\) that we describe now.
The resource modality which arises in that way is often called the
\emph{free exponential} of \(\cL\).

Let \(\Comonfree:\cL\to\Cm\cL\) be the right adjoint of
\(\Cmforg\).
By this adjunction, the functor \(\Cmforg\Comp\Comonfree:\cL\to\cL\)
has a structure of comonad: this is our resource comonad
\((\Oc,\Deru,\Digg)\).
For the Seely isomorphisms, we must first notice the following
property, see for example Corollary 18 of \cite{Mellies09}.
\begin{theorem}
  The category \(\Cm\cL\) is cartesian, with terminal object %
  \((\Sone,\Id_{\Sone},\Leftu_{\Sone}=\Rightu_{\Sone})\) and cartesian
  product of \(C_1\) and \(C_2\) the triple %
  \((\Comonca{C_1}\Times\Comonca{C_2},\Comonu,\Comonm)\) where the unit
  and the multiplication are given by
  \[
    \begin{tikzcd}
      \Tens{\Coalgca{C_1}}{\Coalgca{C_2}}
      \ar[r,"\Tens{\Comonu_{C_1}}{\Comonu_{C_2}}"]
      &[1.4em]
      \Tens{\Sone}{\Sone}
      \ar[r,"\Tens{\Leftu}{\Leftu}"]
      &
      \Sone
    \end{tikzcd}
  \]
  \[
    \begin{tikzcd}
      \Tens{\Tensp{\Coalgca{C_1}}{\Coalgca{C_2}}}
      {\Tensp{\Coalgca{C_1}}{\Coalgca{C_2}}}
      \ar[r,"\Sym_{2,3}"]
      &
      \Tens{\Tensp{\Coalgca{C_1}}{\Coalgca{C_1}}}
      {\Tensp{\Coalgca{C_2}}{\Coalgca{C_2}}}
      \ar[r,"\Tens{\Comonm_{C_1}}{\Comonm_{C_2}}"]
      &[1em]
      \Tens{\Coalgca{C_1}}{\Coalgca{C_2}}
    \end{tikzcd}
  \]
  where \(\Sym_{2,3}\) is defined using the coherence isos of the SM
  structure of \(\cL\).
  The first projection is
  \[
    \begin{tikzcd}
      \Tens{\Coalgca{C_1}}{\Coalgca{C_2}}
      \ar[r,"\Tens{\Comonu_{C_1}}{\Coalgca{C_2}}"]
      &[1em]
      \Tens{\Sone}{\Coalgca{C_2}}
      \ar[r,"\Leftu"]
      &[-1em]
      \Coalgca{C_2}
    \end{tikzcd}
  \]
  and similarly for the second one.
\end{theorem}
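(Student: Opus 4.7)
The plan is to break the theorem into four verifications, the first two easy and the last two relying on standard coherence manipulations.

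First, I would check that $(\Sone,\Id_\Sone,\Invp{\Leftu_\Sone})$ is a commutative comonoid. The counit axiom reduces to $\Leftu_\Sone\Compl(\Id_\Sone\Times\Id_\Sone)\Compl\Invp{\Leftu_\Sone}=\Id_\Sone$, the coassociativity reduces to a standard coherence diagram involving only $\Leftu_\Sone$, $\Rightu_\Sone$ and $\Assoc$, and cocommutativity follows from $\Sym_{\Sone,\Sone}\Compl\Invp{\Leftu_\Sone}=\Invp{\Leftu_\Sone}$, which holds because $\Leftu_\Sone=\Rightu_\Sone$ and by naturality of $\Sym$. Terminality is then immediate: for any commutative comonoid $C$, the morphism $\Comonu_C$ is itself a comonoid morphism (the two defining diagrams of \Cref{eq:comonoid-morphism} collapse by the counit axiom and cocommutativity of $C$), and if $f\in\Cm\cL(C,\Sone)$ then $f=\Id_\Sone\Compl f=\Comonu_\Sone\Compl f=\Comonu_C$ by the compatibility of $f$ with the counits.

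Next, I would verify that the triple $(\Comonca{C_1}\Times\Comonca{C_2},\Comonu,\Comonm)$ defined in the statement is a commutative comonoid. This is pure symmetric monoidal bookkeeping: the counit law reduces to the counit laws of $C_1$ and $C_2$ via the coherence of $\Leftu,\Rightu,\Assoc$; coassociativity follows by tensoring the coassociativity diagrams of $C_i$ and reshuffling with $\Sym_{2,3}$ using the hexagon and pentagon axioms; and cocommutativity comes from cocommutativity of each $C_i$ combined with a naturality argument for $\Sym$. No deep argument is needed, just patient chasing.

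Third, I would check that the projections $\Comonpr_i$ defined in the statement lie in $\Cm\cL$. Compatibility with the counit is immediate from the counit axiom of $C_i$; compatibility with the comultiplication is precisely the naturality of $\Leftu$ combined with cocommutativity and the $\Sym_{2,3}$-based definition of $\Comonm$.

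Finally, the main content is the universal property. Given a commutative comonoid $D$ and morphisms $f_i\in\Cm\cL(D,C_i)$, I would propose $\langle f_1,f_2\rangle\defEq(f_1\Times f_2)\Compl\Comonm_D\in\cL(\Comonca D,\Comonca{C_1}\Times\Comonca{C_2})$. Checking that $\Comonpr_i\Compl\langle f_1,f_2\rangle=f_i$ uses the fact that $f_i$ preserves the counit and the counit axiom for $D$. That $\langle f_1,f_2\rangle$ is itself a comonoid morphism uses that each $f_i$ preserves $\Comonm$ together with cocommutativity and coassociativity of $D$; the hardest coherence here is the compatibility with $\Comonm$, which relies on $\Sym_{2,3}$ together with $\Comonm_D\Compl\Comonm_D=(\Comonm_D\Times\Comonm_D)\Compl\Sym_{2,3}^{-1}\Compl\Comonm_D\Compl\Comonm_D$-type manipulations (this is essentially the comonoid analogue of showing a diagonal is a monoid morphism when the target is commutative). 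For uniqueness, if $g\in\Cm\cL(D,C_1\Times C_2)$ satisfies $\Comonpr_i\Compl g=f_i$, then by compatibility of $g$ with $\Comonm$ one gets $(f_1\Times f_2)\Compl\Comonm_D=(\Comonpr_1\Times\Comonpr_2)\Compl\Comonm_{C_1\Times C_2}\Compl g$; then a direct computation shows $(\Comonpr_1\Times\Comonpr_2)\Compl\Comonm_{C_1\Times C_2}=\Id_{\Comonca{C_1}\Times\Comonca{C_2}}$ using the counit laws, so $g=\langle f_1,f_2\rangle$. The only real obstacle in the whole proof is keeping the $\Sym_{2,3}$, $\Assoc$, and unitor bookkeeping straight in the compatibility of $\langle f_1,f_2\rangle$ with $\Comonm$, which is precisely the step where commutativity of the target is essential.
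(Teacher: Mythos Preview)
Your proof sketch is correct and follows the standard argument for this well-known fact. The paper does not actually prove this theorem: it simply states it and refers the reader to Corollary~18 of \cite{Mellies09}, so there is no ``paper's own proof'' to compare against beyond the citation. Your four-step decomposition (terminal object, product comonoid, projections, universal property via $\langle f_1,f_2\rangle=(f_1\Times f_2)\Compl\Comonm_D$) is exactly the classical one, and you have correctly identified that cocommutativity of $D$ is the key ingredient in showing that $\langle f_1,f_2\rangle$ respects the comultiplication.

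One small remark: the paper's statement appears to have the arrows in the comultiplication diagram reversed (and labels the ``first projection'' with a map landing in $\Coalgca{C_2}$), so your silent correction to $\Invp{\Leftu_\Sone}$ for the comultiplication of $\Sone$ is the right reading.
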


As a right adjoint, the functor \(\Comonfree\) preserves cartesian
products, and the Seely isomorphisms embody this preservation.

\begin{theorem}\label{th:Lafont-elem-summable-Taylor}
  Any Lafont representably \(\Sigma\)-additive category has an analytic
  coalgebra, so it is a representably analytic category.
\end{theorem}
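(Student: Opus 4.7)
By \Cref{cor:Taylor-structure-induces-differential}, it suffices to construct an analytic coalgebra $\Dbimonca \in \cL(\Dbimon, \Oc\Dbimon)$. The plan is to exploit the Lafont adjunction $\Cmforg \dashv \Comonfree$ with $\Comonfree X = (\Oc X, \Coalgw_X, \Coalgc_X)$: it provides the universal property that for any commutative comonoid $C$ and any $f \in \cL(\Comonca C, X)$, there is a unique comonoid morphism $\widehat{f} : C \to \Comonfree X$ with $\Deru_X \Compl \widehat{f} = f$. By \Cref{thm:Dbimon-bimonoid}, the tuple $(\Dbimon, \Dbimoncu, \Dbimoncm)$ is a commutative comonoid, so I define
\[
  \Dbimonca \defEq \widehat{\Id_\Dbimon} \in \cL(\Dbimon, \Oc\Dbimon).
\]
By construction, $\Dbimonca$ is a comonoid morphism $(\Dbimon, \Dbimoncu, \Dbimoncm) \to (\Oc\Dbimon, \Coalgw_\Dbimon, \Coalgc_\Dbimon)$ satisfying $\Deru_\Dbimon \Compl \Dbimonca = \Id_\Dbimon$.

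The next step is to check that $(\Dbimon, \Dbimonca)$ is indeed a $\Oc$-coalgebra. The counit axiom holds by definition, and the coassociativity axiom $\Digg \Compl \Dbimonca = \Oc\Dbimonca \Compl \Dbimonca$ follows from the uniqueness part of the universal property of $\Comonfree(\Oc\Dbimon)$: both sides are comonoid morphisms from $(\Dbimon, \Dbimoncu, \Dbimoncm)$ into $\Comonfree(\Oc\Dbimon)$ (for the left side, use that $\Digg = \widehat{\Id_{\Oc \Dbimon}}$ is a comonoid morphism in the Lafont setup; for the right side, use that $\Oc\Dbimonca = \Cmforg(\Comonfree(\Dbimonca))$ is a comonoid morphism as the image of a morphism under $\Comonfree$), and both become $\Dbimonca$ after post-composition with $\Deru_{\Oc\Dbimon}$ (using the comonad axiom and naturality of $\Deru$ respectively).

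The main verification, and the only place where genuine bookkeeping occurs, is to show that the five structure maps $\Dbimoncu, \Dbimoncm, \Dbimonu, \Dbimonm, \Win_0$ are all $\Oc$-coalgebra morphisms. The uniform template is: for each such map $\phi : A \to B$ (where $A, B \in \{\Sone, \Dbimon, \Dbimon\Times\Dbimon\}$), I need to show $\Oc\phi \Compl \Coalgst A = \Coalgst B \Compl \phi$, where $\Coalgst\Sone = \Ocmonz$, $\Coalgst\Dbimon = \Dbimonca$, and $\Coalgst{\Dbimon\Times\Dbimon} = \Ocmont \Compl (\Dbimonca \Times \Dbimonca)$. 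I apply the universal property of $\Comonfree B$: both sides are morphisms $A \to \Oc B$; I verify they are comonoid morphisms from $A$ (with its bimonoid-comonoid structure) to $\Comonfree B$, and that they post-compose with $\Deru_B$ to give the same map $\phi$. The comonoid-morphism status is ensured by: \Cref{thm:Dbimon-bimonoid} (each of $\Dbimoncu, \Dbimoncm, \Dbimonu, \Dbimonm$ is a comonoid morphism, the crux of the bimonoid axioms), the fact that $\Win_0$ is a comonoid morphism (a routine check using $\Dbimoncm \Compl \Win_0 = (\Win_0 \Times \Win_0)\Compl \Invp{\Leftu}$ which follows from the characterization of $\Dbimoncm$), and standard Lafont-categorical facts that $\Ocmonz, \Ocmont$, and $\Oc f$ for any $f$ are comonoid morphisms with respect to the Seely structures. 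The post-composition with $\Deru$ computation is always a one-line diagram chase using $\Deru \Compl \Dbimonca = \Id$, $\Deru \Compl \Ocmonz = \Id_\Sone$, $\Deru \Compl \Ocmont = \Deru \Times \Deru$, and naturality of $\Deru$.

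The main obstacle, such as it is, lies in this uniform template: one must carefully verify that $\Ocmonz$ and $\Ocmont$ are comonoid morphisms with respect to the Seely comonoid structures and that $\Deru \Compl \Ocmont = \Deru \Times \Deru$ (and the unit counterpart). These are well-known but not entirely trivial facts about Seely categories; they follow from the fact that $(\Ocmonz, \Ocmont)$ witnesses the strong symmetric monoidal lifting of $\Comonfree$ from $(\cL, \With)$ to $(\Cm\cL, \Times)$, which in turn comes from the right-adjoint $\Comonfree$ preserving cartesian products (the Seely isomorphisms). Once these facts are in hand, each of the five verifications takes only a few lines, and the theorem is established.
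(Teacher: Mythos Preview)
Your proposal is correct and follows essentially the same approach as the paper. The paper's proof is much terser---it simply invokes that $(\Dbimon,\Dbimoncu,\Dbimoncm)$ is a commutative comonoid and that $\Dbimoncu,\Dbimoncm,\Dbimonu,\Dbimonm,\Win_0$ are comonoid morphisms---relying implicitly on the standard Lafont-category fact that the coEilenberg--Moore category of the free exponential is (equivalent to) the category of commutative comonoids, so that any comonoid automatically carries a $\Oc$-coalgebra structure and any comonoid morphism is automatically a $\Oc$-coalgebra morphism; you spell out this mechanism explicitly via the universal property, which is exactly the content underlying the paper's one-line argument.
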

\begin{proof}
  This is an immediate consequence of the fact that %
  \((\Dbimon,\Dbimoncu,\Dbimoncm)\) is a commutative comonoid and of
  the fact that %
  \(\Dbimoncu\), \(\Dbimoncm\), \(\Dbimonu\), \(\Dbimonm\) and
  \(\Win_0\) are comonoid morphisms.
\end{proof}

\begin{remark}
  There are Lafont categories which are not 
  $\Sigma$-additive, such as the category of Köthe spaces~(\cite{Ehrhard02}) or 
  the category of finiteness spaces~(\cite{Ehrhard05}) 
  (taking a semiring that is not positive, such as $\R$).
  As mentioned in \cref{rem:summability-general}, it might be possible
  to use the more general notion of partial commutative monoid in
  order to capture those models.
\end{remark}

\section{Models of Taylor expansion as representably analytic categories}
\label{sec:examples-taylor}

We present first fundamental models of LL which are already well known
models of Taylor expansion: the relational model and the weighted
relational model. We explain how the known Taylor expansion in those
models actually arises from a representably analytic structure.
This motivates that our theory of Taylor expansion is a generalization of the 
existing one.

\subsection{Some notation for multisets and semi-rings}
\label{sec:notation-multiset}
If \(A\) is a set, a (finite) multiset of elements of \(A\) is a
function \(m:A\to\Nat\) whose support %
\(\Msetsupp m=\{a\in A\St m(a)\not=0\}\) is a finite set.
Intuitively, \(m(a)\) is the number of occurrences of \(a\) in \(m\),
and we write \(a\in m\) if \(a\in\Msetsupp m\).
We use \(\Msetempty\) for the empty multiset such that
\(\Msetsupp\Msetempty=\emptyset\).
We use \(\Mfin A\) for the set of all finite multisets of all elements
of \(A\), that we consider as a commutative monoid (actually it is the
free commutative monoid generated by \(A\)), whose operation
is denoted additively: if $m_1, \ldots, m_n\in\Mfin A$, then we write
$m_1 + \cdots + m_n$ for their pointwise sum.

If \(\Vect a=(\List a1k)\in A^k\), we use %
\(\Mset{\Vect a}=\Mset{\List a1k}\) for the element \(m\) of
\(\Mset A\) which contains the elements of \(\Vect a\), taking
multiplicities into account, that is \(m(a)\) is the number of
\(i\in\{1,\dots,k\}\) such that \(a_i=a\).
For any multiset $m = \Mset{\List a1k} \in \Mfin A$ and $i \in I$,
we write $i \cdot m \in \Mfin {I \times A}$ the multiset 
$\Mset{(i, a_1), \ldots, (i, a_n)}$.

The size, or cardinality, of a multiset \(m\) is
\(\Mscard m=\sum_{a\in A}m(a)\in\Nat\).
We use \(\Mfinnz A\) for the set of all \(m\in\Mfin A\) such that
\(\Mscard m>0\).
We set \(\Factor m=\prod_{a\in A}\Factor{m(a)}\in\Nat\) and call this
number the factorial of \(m\).
For any \(\Vect m\in\Mfin A^n\), the quotient
\[ \Multinom{\Vect m}
=\frac{\Factor{(m_1+\cdots+m_n)}}{\Factor{m_1}\cdots\Factor{m_n}}\] is
an integer and is called the \emph{multinomial coefficient} of
\(\Vect m\). 

For any vector \(u\in \Rcal^A\) indexed by $A$, where \(\Rcal\) is a semiring 
(with standard algebraic notations) and \(m\in\Mfin A\) we set
\begin{equation} \label{eq:multiset-power}
  u^m=\prod_{a\in A}u_a^{m(a)}
\end{equation}
Any integer $k \in \N$ can be seen as the element 
$\sum_{i=1}^k 1 \in \Rcal$.
Then, for any \(u(1),\dots,u(k)\in \Rcal^A\), the usual multinomial formula generalizes to
\begin{equation} \label{eq:multinom-formula-multiset}
  \Big(\sum_{i=1}^k u(i)\Big)^m
  =\sum_{\Biind{\Vect m\in\Mfin A^k}{m_1+\cdots+m_k=m}}
  \Multinom{\Vect m}\prod_{i=1}^ku(i)^{m_i}\,.
\end{equation}

\begin{remark}
  Given a family of vectors \((u_i\in \Rcal^A)_{i \in I}\)
  we use the notations
  \(u_i\) or \(u(i)\) for the corresponding element of \(\Rcal^A\),
  depending on the context, so that for any $a \in A$ we can write 
  $u(i)_a$ instead of the ugly \({u_i}_a\).
\end{remark}

\subsection{Arbitrary sums are representable}

\label{sec:arbitrary-sum}

The usual axiomatization of categorical models of 
the syntactical Taylor expansion (\cite{Manzonetto12}) require 
the category to feature arbitrary countable sums.  
We show in this section that such infinite sum is always representable, 
assuming that the category satisfies \ref{ax:D-defined}.

It is well-known that any category enriched over commutative 
monoids that feature finite products (or finite coproducts) has
finite biproducts, see chapter 8 of \cite{Maclane71} for 
example\footnote{This reference assumes that the homset is 
an abelian group, but the proof carries directly to 
commutative monoids}.
Such categories are usually called semi-additive categories, but 
the naming conventions of differential categories call them
\emph{additive}.
We show a similar result on countable biproducts and countable sums.

\begin{definition}
  A (symmetric monoidal) $\Sigma$-additive category $\categoryLL$ has 
  \emph{total sums} if for all pairs of objects $X, Y$, 
  the sum of the $\Sigma$-monoid structure of $\categoryLL(X, Y)$ is a total operation. 
\end{definition}

Observe that if $\categoryLL$ is a $\Sigma$-additive category 
with total sum, then 
the sum is compatible with all cartesian products that exist in $\categoryLL$, 
by \cref{prop:sum-cartesian}.

\begin{proposition} \label{prop:countable-biproducts}
  Any countable cartesian 
product in a $\Sigma$-additive category with total sums is a countable biproduct.
\end{proposition}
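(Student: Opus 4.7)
The plan is to construct explicit coproduct injections from the cartesian product data, reducing everything to one sum identity that genuinely requires totality. Given a countable family $(X_i)_{i \in \N}$ with cartesian product $\withFam X_j$ and projections $\prodProj_j$, I would define $\iota_i \in \categoryLL(X_i, \withFam X_j)$ by $\iota_i = \prodPairing[j]{\Kronecker{i}{j} \id_{X_i}}$, where $\Kronecker{i}{j} \id_{X_i}$ denotes $\id_{X_i}$ when $i = j$ and $0$ otherwise. These exist by the universal property of the product, and by construction satisfy $\prodProj_j \comp \iota_i = \Kronecker{i}{j} \id_{X_i}$.

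The main technical step is the \emph{completeness identity}
\[ \sum_{i \in \N} \iota_i \comp \prodProj_i = \id_{\withFam X_j} \, , \]
in which the left-hand side is defined precisely because sums are total. To verify it, I would compose both sides on the left with an arbitrary $\prodProj_j$: since every morphism (in particular $\prodProj_j$) is $\Sigma$-additive, the sum distributes to give $\sum_{i \in \N} \Kronecker{i}{j} \prodProj_i$, which collapses to $\prodProj_j$ by \cref{prop:zero-neutral}. Joint monicity of the projections (uniqueness of the tupling) then yields the identity.

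With the completeness identity in hand, the coproduct universal property is purely formal. For a family $(f_i : X_i \to Y)_{i \in \N}$, I would set $f = \sum_{i \in \N} f_i \comp \prodProj_i$, which exists by totality; left distributivity of composition over sums together with the Kronecker relations gives $f \comp \iota_j = f_j$. For uniqueness, if $g \comp \iota_j = f_j$ for all $j$, then substituting the completeness identity into $g = g \comp \id$ and using $\Sigma$-additivity of $g$ yields $g = \sum_i g \comp \iota_i \comp \prodProj_i = \sum_i f_i \comp \prodProj_i = f$.

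The only real obstacle is the completeness identity itself; everything else is routine manipulation using left $\Sigma$-additivity, additivity of projections, and the compatibility of sums with the cartesian product (automatic under totality by \cref{prop:sum-cartesian}). Note that no use of the summability structure $\S$ or of any representability hypothesis is needed in this argument.
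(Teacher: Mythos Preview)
Your proposal is correct and follows essentially the same approach as the paper: define the injections $\iota_i$ via the Kronecker pairing, establish the completeness identity $\sum_i \iota_i \comp \prodProj_i = \id$, and derive the coproduct structure with copairing $\sum_i f_i \comp \prodProj_i$. The only cosmetic difference is that the paper invokes the compatibility of sums with the cartesian product (\cref{prop:sum-cartesian}) directly to prove the completeness identity, whereas you unfold this by composing with $\prodProj_j$ and using its $\Sigma$-additivity; these are the same argument at different levels of granularity.
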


\begin{proof} The proof mirrors the one on additive 
  categories, see chapter 8 of \cite{Maclane71}.
  Let $\Sinj_i \in \categoryLL(X_i, \withFam X_i)$ be the morphism 
  defined by 
  \[ \prodProj_j \compl \Sinj_i = \begin{cases}
    \id \text{ if } i = j \\
    0 \text{ otherwise.}
  \end{cases} \] 
  Then using the compatibility between the sums and the cartesian product, 
  we can easily show that 
  \[ \sum_{i \in I} \Sinj_i \compl \prodProj_i = \id \, .\] 
  Using this equation, it is easy to check that $\withFam X_i$ is also 
  a coproduct, with the injections maps given by the $\Sinj_i$ and 
  copairing of $\family<I>{f_i \in \category(X_i, Y)}$ defined as 
  $\coprodPairing{f_i} = \sum f_i \comp \prodProj_i$.
\end{proof}

\begin{theorem} \label{thm:omega-additive-representable}
  Any symmetric monoidal $\Sigma$-additive category 
  with total sums which satisfies \ref{ax:D-defined}
  is a representably $\Sigma$-additive category. 
  The $\Sigma$-summability structure $\S X = \Dbimon \linarrow X$ is a cartesian 
  product:
  $\S X = \withFam<\N> X$ and $\Sproj_i = \prodProj_i$.
\end{theorem}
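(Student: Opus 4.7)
The plan is to verify the four conditions of Definition~\ref{def:RS} making a category representably $\Sigma$-additive, and then observe that totality of sums upgrades the resulting summability structure into an honest countable cartesian product.

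\emph{Biproduct structure.} By Proposition~\ref{prop:countable-biproducts}, the countable cartesian product $\Dbimon = \withFam<\N> 1$ granted by \ref{ax:D-defined} is in fact a countable biproduct. In particular, there exist injections $\Sprojl_i \in \categoryLL(1, \Dbimon)$ satisfying $\prodProj_j \compl \Sprojl_i = \Kronecker ij \id$ and the biproduct identity
\begin{equation*}
  \id_\Dbimon = \sum_{i \in \N} \Sprojl_i \compl \prodProj_i.
\end{equation*}

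\emph{Verifying the axioms of Definition~\ref{def:RS}.} Tensoring this identity on the left by $X$ and using distributivity of the sum over $\tensor$ (Definition~\ref{def:sum-sm}) yields $\id_{X \tensor \Dbimon} = \sum_{i \in \N} (X \tensor \Sprojl_i) \compl (X \tensor \prodProj_i)$. If $f \compl (X \tensor \Sprojl_i) = g \compl (X \tensor \Sprojl_i)$ for all $i$, precomposing with $X \tensor \prodProj_i$ and summing gives $f = g$ by left distributivity; this is \ref{ax:RS-epi}. Axiom \ref{ax:RS-mon} is immediate from the assumption on $\categoryLL$, with $0 \tensor Y = Y \tensor 0 = 0$ coming from the empty case of Definition~\ref{def:sum-sm}. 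Axiom \ref{ax:RS-witness} is vacuous since all families are summable by totality. For \ref{ax:RS-sum}, given any $\vect f = \sequence{f_i} \in \categoryLL(X,Y)^\N$, set
\begin{equation*}
  \Stuplet{\vect f} \defEq \sum_{i \in \N} f_i \compl \tensorUnitR \compl (X \tensor \prodProj_i),
\end{equation*}
which exists by totality. Using left distributivity together with $\prodProj_j \compl \Sprojl_i = \Kronecker ij \id$ (so that the off-diagonal terms vanish) confirms $\Stuplet{\vect f} \compl (X \tensor \Sprojl_i) \compl \tensorUnitR^{-1} = f_i$, and a similar calculation using $\prodProj_i \compl \Wdiag = \id_1$ gives $\Stuplet{\vect f} \compl (X \tensor \Wdiag) \compl \tensorUnitR^{-1} = \sum_{i \in \N} f_i$. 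Joint epicity of the $X \tensor \Sprojl_i$ makes $\Stuplet{\vect f}$ unique, so \ref{ax:RS-sum} holds. Hence $\categoryLL$ is representably $\Sigma$-additive.

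\emph{Identifying $\Dbimon \linarrow X$ with $\withFam<\N> X$.} By Theorem~\ref{thm:representably-sigma-additive-monoidal}, the pair $(\Dbimon \linarrow \_, \vect \Sproj)$ is a $\Sigma$-summability structure on $\categoryLL$. The projections $\Sproj_i$ are jointly monic by \ref{def:summability-structure-1}, and totality of sums together with \ref{def:summability-structure-2} means every $\sequence{f_i} \in \categoryLL(Z, X)^\N$ admits a unique witness $\Spairing{f_i} \in \categoryLL(Z, \Dbimon \linarrow X)$ characterised by $\Sproj_i \compl \Spairing{f_i} = f_i$. This is exactly the universal property of the countable cartesian product, so $\Dbimon \linarrow X$ endowed with the projections $\Sproj_i$ realises $\withFam<\N> X$ with $\Sproj_i = \prodProj_i$. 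The only genuinely delicate step is the biproduct distribution under $X \tensor \_$, which is where compatibility of the $\Sigma$-monoid with the tensor is truly used; everything else follows by routine unfolding of the representable framework and the universal property of a witness in a category where all families are summable.
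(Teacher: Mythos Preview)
Your proof is correct and follows essentially the same approach as the paper: you verify the four axioms of Definition~\ref{def:RS} using the biproduct structure from Proposition~\ref{prop:countable-biproducts}, construct the same witness $\Stuplet{\vect f}$ (your expression $\sum_i f_i \compl \tensorUnitR \compl (X \tensor \prodProj_i)$ equals the paper's $\sum_i \tensorUnitR \compl (f_i \tensor \prodProj_i)$ by naturality of $\tensorUnitR$), and then observe that totality makes witnesses always exist so that $\Dbimon \linarrow X$ satisfies the universal property of the countable product. Your spelling-out of why the biproduct identity yields \ref{ax:RS-epi} is a bit more explicit than the paper's one-line appeal to Proposition~\ref{prop:countable-biproducts}, but the substance is identical.
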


\begin{proof} The category $\categoryLL$ satisfies \ref{ax:RS-mon} and 
  \ref{ax:RS-witness} by assumption and
  \ref{ax:RS-epi} by \cref{prop:countable-biproducts} above. 
  Furthermore, for any family $\vect f = \sequence{f_i \in \categoryLL(X, Y)}$ let 
  $h = \sum_{i \in \N}{\tensorUnitR \compl (f_i \tensor \prodProj_i)} \in 
  \categoryLL(X \tensor \Dbimon, Y)$.
  Then $h \comp (X \tensor \Sprojl_i) = f_i$ and 
  $h \compl (X \tensor \Wdiag) = \sum_{i \in \N} f_i$ so 
  $\categoryLL$ satisfies \ref{ax:RS-sum} and as such is a representably
  $\Sigma$-additive category.
  It follows from the fact that the sum is total that the witness 
  of $\sequence{f_i}$ always exists, so $\Dbimon \linarrow X$ is a cartesian 
  product with projection $\Sproj_i$ and tupling 
  $\Spairing{f_i}$. 
\end{proof}

\begin{remark}
Alternatively, \cref{thm:omega-additive-representable} is a direct consequence 
of \cref{thm:representably-additive-characterization}. In fact, the proof above 
is a direct simplification of the proof of the reverse direction of 
\cref{thm:representably-additive-characterization} using the fact that 
every indexed family is summable.
\end{remark}

\subsection{The relational model}

The most fundamental model of LL is doubtlessly the relational model $\REL$. 
In this category, objects are sets and morphisms are relations, that is,
$\REL(X, Y) = \sets{X \times Y}$. Composition and identity are defined 
as in \cref{sec:wcs}. If $s \in \REL(E, F)$ and $t \in \REL(F, G)$
\[ \id_E = \{(a, a) | a \in E\} \quad 
t \compl s = \{(a,c)\in E\times G
\St\exists b\in E \text{ s.t. } (a,b)\in s\text{ and }(b,c)\in t\} . \]
The category $\REL$ is symmetric monoidal. We set $\Sone = \{\Sonelem\}$ and
$ E_1\Times E_2= E_1 \times E_2$.
Then for any $s_1 \in \REL(E_1, F_1)$ and $s_2 \in \REL(E_2, F_2)$, we set
\[ s_1 \tensor s_2 = 
\{((a_1,a_2),(b_1,b_2))\St (a_i,b_i)\in s_i\text{ for }i=1,2\} . \]
This is a bifunctor, and the symmetric monoidal structure is given 
by the following natural isomorphisms.
\begin{align*}
  \Rightu_E &=\{((a,\Sonelem),a)\St a\in
  E\}\in\REL(E\Times\Sone,E) \\ 
  \Leftu_E &=\{((\Sonelem,a),a)\St a\in E\}\in\REL(\Sone\Times
  E,E) \\
    \Assoc_{E_1,E_2,E_3} &=\{(((a_1,a_2),a_3),(a_1,(a_2,a_3))) \St a_i
  \in E_i\text{ for }i=1,2,3\} \numberthis \label{eq:smc-rel} \\ 
  & \in\REL((E_1\Times E_2)\Times E_3,E_1\Times(E_2\Times E_3)) \\
  \Sym_{E_1,E_2} &=\{((a_1,a_2),(a_2,a_1)) \St a_i\in E_i\text{ for
  }i=1,2\} \in \REL(E_1 \tensor E_2, E_2 \tensor E_1) .
  \end{align*} 
This symetric monoidal category is closed, with internal hom of \(E\) and \(F\) the pair %
\((E\Limpl F,\Evlin)\) where $E \linarrow F = E \times F$ and
\[ \ev =\{(((a,b),a),b)\St a\in E\text{ and }b\in F\} \, . \]
The transpose of \(s \in \REL(G\Times E,F)\) is %
\(\Curlin(s)=\{(c,(a,b))\St ((c,a),b)\in s\}\in\REL(G,E\Limpl F)\).
The category is also \Staraut{} with \(\Sbot = \Sone 
= \{ \Sonelem \}\) as dualizing object.
This means that the morphism
\(\Curlin(\Evlin\Compl\Sym_{E\Limpl\Sbot,E})
\in\REL(E,(E\Limpl\Sbot)\Limpl\Sbot)\) %
is an iso.
In fact, if we define $\_^{\Sbot}$ as the functor $\_ \linarrow
\Sbot$, we have $E^{\Sbot} = E$ and 
$s^{\Sbot} = \{(b, a) \St (a, b) \in s\}$
up to a trivial isomorphism.
The morphism $s^{\bot}$ is called the transpose of $s$.

The category \(\REL\) is cartesian.
If \((E_i)_{i\in I}\) is a family of objects, then we define %
\[\withFam E_i = \Union_{i\in I}\{i\} \times E_i \]
as the disjoint union of the $E_i$.
The projections are given by the relations
\[ \Wproj_i=\{((i,a),a)\St i\in I\text{ and }a\in E_i\} 
\in \REL(\withFam E_i) .\] 
For any family
\((s_i\in\REL(F,E_i))_{i\in I}\) there is exactly one morphism %
\(\Tuple{s_i}_{i\in I}\in\REL(F,\withFam E_i)\) such that %
\(\forall j\in I\ \Wproj_j\Compl\Tuple{s_i}_{i\in I}=s_j\), namely
\begin{align*}
  \Tuple{s_i}_{i\in I}=\{(b,(i,a))\St i\in I\text{ and }(b,a)\in s_i\} .
\end{align*}
By \Starauty,
the category \(\REL\) is also cocartesian with
coproduct %
\[ \Bplus_{i\in I}E_i=\Orth{(\withFam {\Orth{E_i}})} 
= \withFam E_i . \] 
The corresponding injections are the dual of the projections, %
\((\Pin_i=\{(a,(i,a))\St i\in I\text{ and }a\in E_i\}
\in\REL(E_i,\Bplus_{j\in J}E_j))_{i\in I}\).
Observe that the coproduct and the product in $\REL$ are the same, even when the 
index set is infinite.
This means that $\REL$ has arbitrary biproduct (we will see that $\REL$ is a 
$\Sigma$-additive category with total sums).
Both the terminal object \(\Stop = \emptyset \) and the
initial object \(\Szero=\emptyset\) are given by the 
biproduct over the empty family $I = \emptyset$.

\begin{theorem}\label{th:rel-lafont-cat} 
  The symmetric monoidal category \(\REL\) is a Lafont category.
\end{theorem}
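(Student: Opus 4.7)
The plan is to construct the right adjoint $\Comonfree : \REL \to \Cm\REL$ explicitly using finite multisets, following the standard construction of the free exponential of $\REL$. For any set $E$, define $\Comonca{\Comonfree E} = \Mfin E$ equipped with the comonoid structure given by the counit $\Comonu = \{(\Msetempty, \Sonelem)\}\in\REL(\Mfin E,\Sone)$ and the comultiplication $\Comonm = \{(m_1 + m_2, (m_1, m_2)) \St m_1, m_2 \in \Mfin E\}\in\REL(\Mfin E,\Mfin E\Times\Mfin E)$. A routine verification using the fact that $(\Mfin E, +, \Msetempty)$ is a free commutative monoid shows that this makes $(\Mfin E, \Comonu, \Comonm)$ a commutative comonoid in $\REL$. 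The functor $\Comonfree$ then sends $s \in \REL(E, F)$ to the comonoid morphism relating $\Mset{a_1, \ldots, a_n}$ to $\Mset{b_1, \ldots, b_n}$ whenever there is some enumeration with $(a_i, b_i) \in s$ for all $i$.

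Next I would exhibit the universal property which provides the counit of the adjunction. The candidate counit at $E$ is the dereliction $\Deru_E = \{(\Mset{a}, a) \St a \in E\} \in \REL(\Mfin E, E)$. Given a commutative comonoid $C = (\Comonca C, \Comonu_C, \Comonm_C)$ and a morphism $f \in \REL(\Comonca C, E)$, I would define $\Prom f \in \REL(\Comonca C, \Mfin E)$ by
$$\Prom f = \{(c, \Mset{a_1, \ldots, a_n}) \St n \in \Nat,\ (c, (a_1, \ldots, a_n)) \in f^{\Tens n} \Compl \Coalgcgen n_C\},$$
where $\Coalgcgen n_C$ is the $n$-ary iterated comultiplication of $C$ introduced earlier in the excerpt. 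Cocommutativity of $C$ guarantees that $\Coalgcgen n_C$ is invariant under the action of the symmetric group on $C^{\Tens n}$, so membership in $\Prom f$ does not depend on the choice of enumeration of the multiset and the definition is well-posed.

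The substantive checks are then: (i) $\Deru_E \Compl \Prom f = f$, which is immediate since $\Coalgcgen 1_C = \Id$; (ii) $\Prom f$ is a comonoid morphism, which reduces the counit equation to $\Coalgcgen 0_C = \Comonu_C$, and the comultiplication equation to the fact that the iterated comultiplications split as $\Coalgcgen{n_1+n_2}_C = (\Coalgcgen{n_1}_C \Times \Coalgcgen{n_2}_C) \Compl \Comonm_C$, together with the combinatorial identity that a multiset $m$ of size $n_1+n_2$ splits as $m_1 + m_2$ with $|m_i| = n_i$ in exactly the ways parametrized by its enumerations; (iii) uniqueness of $\Prom f$ among comonoid morphisms $g$ satisfying $\Deru_E \Compl g = f$.

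For (iii), I would argue by induction on $|m|$ that $(c, m) \in g$ iff $(c, m) \in \Prom f$. The base case $m = \Msetempty$ uses $\Comonu_{\Mfin E} \Compl g = \Comonu_C$; the inductive step splits $m = \Mset{a} + m'$ and uses $\Comonm_{\Mfin E} \Compl g = (g \Tens g) \Compl \Comonm_C$ together with the induction hypothesis applied on the $m'$-side and the relation $\Deru_E \Compl g = f$ on the $\Mset{a}$-side. The main obstacle is the combinatorial bookkeeping of multiset decompositions in step (ii) and (iii), where one must carefully reconcile the symmetry of multisets with the cocommutativity of $C$; this is precisely where the commutativity hypothesis on comonoids of $\Cm\REL$ is used in an essential way.
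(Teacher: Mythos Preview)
Your proof is correct and follows the standard construction. The paper does not actually prove this theorem, stating only that ``this result is well known folklore in LL'' and then describing the resulting free exponential modality \((\Oc,\Deru,\Digg,\Seelyz,\Seelyt)\); your construction of \(\Comonfree E=\Mfin E\) with the splitting comultiplication and your verification of the universal property via \(\Prom f\) is precisely the argument whose outcome the paper records, and your counit \(\Deru_E\) and induced functorial action match the paper's description verbatim.
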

This result is well known folklore in LL. 
Any Lafont category is endowed with a canonical free exponential modality.
The free exponential modality 
\((\Oc{},\Deru,\Digg,\Seelyz,\Seelyt)\) of $\REL$ can be described as 
follows.
First \(\Oc E=\Mfin{E}\) is the set of finite multisets of
elements of \(E\).
If \(s\in\REL(E,F)\) then
\begin{equation} \label{eq:exponential-rel}
  \Oc s=\{(\Mset{\List a1n},\Mset{\List b1n})
  \St n\in\Nat\text{ and }((a_i,b_i)\in s)_{i=1}^n\} .
\end{equation}
The counit and comultiplication of the comonad are
\begin{align*}
  \Deru_E&=\{(\Mset a,a)\St a\in E\} \in \REL(\Oc E, E)\\
  \Digg_E&=\{(m_1+\cdots+m_n,\Mset{\List m1n})
           \St n\in\Nat\text{ and }(m_i\in\Mfin{E})_{i=1}^n\}
           \in \REL(\Oc E, \Oc \Oc E)\,.
\end{align*}
The Seely isomorphisms are %
\(\Seelyz=\{(\Sonelem,\Msetempty)\} \in \REL(1, \Oc \bot)\) and %
\[ \Seelyt_{E_1,E_2} =\{((m_1,m_2),\Msetact 1{m_1}+\Msetact 2{m_2})\St
(m_1,m_2) \in \Oc E_1\times \Oc E_1\} 
\in \REL(\Oc E_1 \Times \Oc E_2, \Oc(E_1 \with E_2))\] %
where \(\Msetact i{\Mset{\List a1n}}=\Mset{(i,a_1),\dots,(i,a_n)}\), 
recall \cref{sec:notation-multiset}.
The induced symmetric lax monoidal structure \((\Ocmonz,\Ocmont)\) is
easily proven to be
\begin{align*}
  \Ocmonz
  &=\{(\Sonelem,k\Mset{\Sonelem})\St k\in\Nat\}\in\REL(\Sone,\Oc\Sone)\\
  \Ocmont_{E,F}
  &=\{((\Mset{\List a1k},\Mset{\List b1k}),\Mset{(a_1,b_1),\dots,(a_k,b_k)})\\
  &\hspace{6em}
    \St k\in\Nat,\ \List a1k\in E\text{ and }\List b1k\in F\}
    \in\REL(\Tens{\Oc E}{\Oc F},\Oc(\Tens EF))\,.
\end{align*}

\subsubsection{Differential and Taylor expansion in $\REL$}
The category \(\REL\) has zero morphisms: take
\(0_{E,F}=\emptyset\in\REL(E,F)\).
The category $\REL$ is a differential category, see~\cite{Ehrhard18}. 
The addition of two relations is given by the union,
and the derivative is given by 
a natural transformation called the \emph{deriving transformation}.
\[ \coderbis_E = \{((m, a), m + [a]) \St a \in E, m \in \Oc E \} \in \REL(\Oc E \tensor E, 
\Oc E) . \]
The coKleisli category $\Kloc \REL$ is a cartesian differential category. 
The derivative of a morphism $s \in \Kloc \REL(E, F)$ is given by 
\[
\d s = \begin{tikzcd}
	{\Oc(E \with E)} & {\Oc E \tensor \Oc E} & {\Oc E \tensor E} & {\Oc E} & F
	\arrow["{(\seelyTwo)^{-1}}", from=1-1, to=1-2]
	\arrow["{\Oc E \tensor \der}", from=1-2, to=1-3]
	\arrow["{\coderbis_E}", from=1-3, to=1-4]
	\arrow["s", from=1-4, to=1-5]
\end{tikzcd} \] 
which is the standard definition of the derivative in the co-Kleisli category 
of a differential category, see \cite{Blute09}. 
We can check that 
\[ \d s = \{(0 \cdot m + 1 \cdot [a_1], b) \St (m+[a_1], b) \in s \} . \] 
Then, iterating the deriving transformation provides 
a natural transformation 
\[ \coderbisn_E : \{((m, a_1, \ldots, a_n), m + [a_1 + \cdots + a_n]) \St 
(a_i \in E)_{i=1}^n, 
m \in \Oc E\} \in \REL(\Oc E \tensor E^{\tensor n}, \Oc E) \] 
and a construction similar to the one above provides an $n$-th derivative for $s
\in \Kloc\REL (E, F)$.
\[
\begin{tikzcd}[column sep = small]
	\dn s = {\Oc(E \with E^{\with n})} & & & {\Oc E \tensor (\Oc E)^{\tensor n}} & & & {\Oc E \tensor E^{\tensor n}} & &  {\Oc E} & F
	\arrow["{(\seely{n+1})^{-1}}", from=1-1, to=1-4]
	\arrow["{\Oc E \tensor \der^{\tensor n}}", from=1-4, to=1-7]
	\arrow["{\coderbisn_E}", from=1-7, to=1-9]
	\arrow["s", from=1-9, to=1-10]
      \end{tikzcd} \]
We can check that $\dn s = \{(0 \cdot m + 1 \cdot [a_1]  \cdots + n \cdot [a_n], b) \St 
(m + [a_1, \ldots, a_n], b) \in s \}$.
In particular, the formal counterpart of $n$-homogeneous function 
$u \mapsto \deriven[0]{s}{n} \cdot (u, \ldots, u)$
is given by the morphism
\[
\begin{tikzcd}
	\dnz s = {\Oc E} & & {!(E \with E^{\with n})} & F .
	\arrow["\Oc \prodtuple{0, \id, \ldots, \id}", from=1-1, to=1-3]
	\arrow["{\dn s}", from=1-3, to=1-4]
      \end{tikzcd} \]
Observe that $\dnz s = \{ (m, b) \in s \St \Mscard m = n \}$.
Then the fact that the category $\REL$ is Taylor in the sense of \cite{Manzonetto12}
boils down to the equation 
\begin{equation} \label{eq:taylor-rel}
   s = \bigcup_{n \in \N} \dnz s
\end{equation}
which is a clear counterpart to the Taylor expansion at $0$
\[ f(u) = \sum_{n=0}^{\infty} \frac{1}{n!} \deriven[0]{f}{n} 
  (u, \ldots, u) \]
except that the coefficient $\frac{1}{n!}$ does not appear, the reason being that the 
sum in $\REL$ is idempotent. Note that for the sake of conciseness, we performed 
a Taylor expansion at $0$, but we could perform it at any point.

\subsubsection{Differential and Taylor expansion in $\REL$ as an analytic structure}

Let us explain how this Taylor expansion comes from an analytic structure.
First, observe that \cref{eq:taylor-rel} requires an infinite union.
This infinite union provides to $\REL$ the structure of 
a symmetric monoidal $\Sigma$-additive category.
Thus, by \cref{thm:omega-additive-representable}, $\REL$ is a 
representably $\Sigma$-additives category.
As such, the following result directly follows from \cref{th:coh-lafont-elem-sumable}.

\begin{theorem} $\REL$ is a Lafont representably $\Sigma$-additive category, 
  thus it is a representably analytic category.
\end{theorem}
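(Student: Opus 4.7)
The plan is to exhibit $\REL$ as satisfying the hypotheses of \cref{thm:omega-additive-representable} and then invoke \cref{th:Lafont-elem-summable-Taylor}. The work essentially reduces to bookkeeping, since the bulk of the structural machinery has already been set up.

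First I would endow $\REL$ with its $\Sigma$-additive structure. For any pair of objects $E, F$, define the sum of an arbitrary indexed family $\family{s_a \in \REL(E,F)}$ to be the union $\bigcup_{a \in A} s_a$. This sum is total (every indexed family is summable), so the axioms \ref{ax:unary} and \ref{ax:pa} reduce to the standard fact that unions are associative under arbitrary partitioning, and the empty family has union $\emptyset = 0_{E,F}$. Left distributivity $(\bigcup_a s_a) \comp t = \bigcup_a (s_a \comp t)$ follows from the definition of relational composition. Hence $\REL$ is a $\Sigma$-additive category with total sums. The compatibility of unions with $\tensor$ and with the closed structure is immediate from the explicit formulas in \cref{eq:smc-rel} and the definition of $\ev$, so $\REL$ is in fact a symmetric monoidal closed $\Sigma$-additive category.

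Next I would verify \ref{ax:D-defined}. Since $\REL$ has all small products (the disjoint union), the object $\Dbimon = \withFam<\N> 1 = \N$ exists. Closedness, already recorded, gives the internal hom $\Dbimon \linarrow X = \N \times X$ for every $X$, so \ref{ax:D-defined} holds. Applying \cref{thm:omega-additive-representable}, we conclude that $\REL$ is representably $\Sigma$-additive, with $\S X = \Dbimon \linarrow X$ being in this case the countable biproduct $\withFam<\N> X$ and the projections $\Sproj_i$ being the categorical projections $\prodProj_i$.

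Finally, by \cref{th:rel-lafont-cat}, $\REL$ is a Lafont category. Combining representable $\Sigma$-additivity with the Lafont property, \cref{th:Lafont-elem-summable-Taylor} applies and provides an analytic coalgebra structure $\Dbimonca \in \REL(\Dbimon, \Oc \Dbimon)$ on the object of degrees; by \cref{cor:Taylor-structure-induces-differential} this yields an analytic Taylor expansion on $\REL$. The only subtlety worth emphasizing — and the closest thing to an obstacle — is checking that the unions which define the $\Sigma$-monoid structure are genuinely compatible with all the surrounding structure (tensor, closure, cartesian product, and the free exponential $\Oc\_$), but each of these is a routine unfolding of the explicit combinatorial formulas for $\tensor$, $\linarrow$, $\withFam$, and $\Oc\_$ in $\REL$, and in fact is absorbed into the general theorems \cref{thm:omega-additive-representable} and \cref{th:Lafont-elem-summable-Taylor} once total summability is established.
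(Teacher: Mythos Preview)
Your proposal is correct and follows essentially the same approach as the paper: establish that $\REL$ is a symmetric monoidal $\Sigma$-additive category with total sums via union, apply \cref{thm:omega-additive-representable} to get representable $\Sigma$-additivity, invoke \cref{th:rel-lafont-cat} for the Lafont property, and conclude via \cref{th:Lafont-elem-summable-Taylor}. The paper states this more tersely, but the logical structure is identical.
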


The object \(\Dbimon=\withFam<\N> \Sone\) is the set of natural numbers.
The bimonoid structure of \(\Dbimon\) is given by
\begin{align*}
  \text{comonoid} \quad \Dbimoncu&=\{(0,\Sonelem)\}\in\REL(\Dbimon,\Sone)\\
  \Dbimoncm&=\{(n,(i,j))\St n,i,j\in\Nat\text{ and }n=i+j\}
             \in\REL(\Dbimon,\Dbimon\Times\Dbimon) \\
  \text{monoid} \quad 
  \Dbimonu&=\{(\Sonelem,n)\St n\in\Nat\}\in\REL(\Sone,\Dbimon)\\
  \Dbimonm&=\{((i,i),i) \St i\in\Nat\}
             \in\REL(\Dbimon\Times\Dbimon,\Dbimon)\,.
\end{align*}
We have seen that when the category is $\Sigma$-additive,  
\(\Sfun = \Dbimon \linarrow \_\) is given by 
the countable product of $X$ with itself. This functor 
has a bimonad structure induced by the
bimonoid structure of \(\Dbimon\), see \Cref{fig:bimonoid-to-bimonad}.
This structure is quite easy to describe.
\begin{align*} 
  \text{Monad unit: }\Sinj_0 &=\{(a,(0,a))\St a\in E\}\in\REL(E,\Sfun E) \\ 
  \text{Monad sum: } \SmonadSum &=\{((i,(j,a)),(i+j,a))\St i,j\in\Nat\text{ and }a\in E\}
  \in\REL(\Sfun^2E,\Sfun E) \\
  \text{Comonad unit: } \Ssum &=\{((i,a),a)\St i\in\Nat\text{ and }a\in E\}\in\REL(\Sfun
  E,S) \\
  \text{Comonad sum: } \Slift &=\{((i,a),(i,(i,a)))\St i\in\Nat\text{ and }a\in E\}
  \in\REL(\Sfun E,\Sfun^2E) \\ 
  \text{Distributive law: }\Sswap &=\{((i,(j,a)),(j,(i,a)))\St i,j\in\Nat\text{ and }a\in E\}
  \in\REL(\Sfun^2E,\Sfun^2E) .
  \end{align*}

By \Cref{th:Lafont-elem-summable-Taylor}, \(\Dbimon\) has an analytic coalgebra, 
that is, a \(\Oc\)-coalgebra structure
\(\Dbimonca\in\REL(\Dbimon,\Oc\Dbimon)\) which is given by
\begin{align*}
  \Dbimonca=\{(n,\Mset{\List i1k})
  \St k\in\Nat\text{ and }\List i1k\in\Nat\text{ with }i_1+\cdots+i_k=n\}\,.
\end{align*}
This is simply due to the fact that the \(k\)-ary version %
\(\Dbimon\to\Tensexp\Dbimon k\) is the relation %
\(\{(n,(\List i1k))\St i_1+\cdots+i_k=n\}\).
As seen in \cref{cor:Taylor-structure-induces-differential},
this coalgebra structure induces the distributive law %
\(\Sdl_E=\Curlin(\Sdl'_E)\in\REL(\Oc\Sfun E,\Sfun\Oc E)\) where %
\(\Sdl'_E\in\REL(\Oc(\Dbimon\Limpl E)\Times\Dbimon,\Oc E)\) is
\[
  \begin{tikzcd}
    \Oc(\Dbimon\Limpl E)\Times\Dbimon
    \ar[r,"\Tens{\Oc(\Dbimon\Limpl E)}{\Dbimonca}"]
    &[2em]
    \Oc(\Dbimon\Limpl E)\Times\Oc\Dbimon
    \ar[r,"\Ocmont"]
    &[-0.8em]
    \Oc((\Dbimon\Limpl E)\Times\Dbimon)
    \ar[r,"\Oc\Evlin"]
    &[-0.6em]
    \Oc E
  \end{tikzcd}
\]
so that
\begin{multline*}
  \Sdl_E
  =\{(\Mset{(i_1,a_1),\dots,(i_k,a_k)},(i_1+\dots+i_k,\Mset{\List a1k}))\\
  \St k\in\Nat,\ \List i1k\in\Nat\text{ and }\List a1k\in E\}\,.
\end{multline*}
Remember that the extension of \(\Sfun\) to the coKleisli category
\(\Kloc\REL\) is the functor \(\Tayfun:\Kloc\REL\to\Kloc\REL\) which
maps an object \(E\) to \(\Sfun E\)
and a morphism \(s\in\Kloc\REL(E,F)\) to
$\Tayfun s \defEq (\Sfun s)\Compl\Sdl_E$
\begin{align*}
  \Tayfun s
  &=\{(\Mset{(i_1,a_1),\dots,(i_k,a_k)},(i_1+\cdots+i_k,b))\\
  &\hspace{8em}\St k\in\Nat,\ \List i1k\in\Nat\text{ and }
    (\Mset{\List a1k},b)\in s\}\,.
\end{align*}
The morphism \(\Sinj_1=(\Wproj_1\Limpl E)\in\REL(E,\Sfun E)\) satisfies %
$\Sinj_1=\{(a,(1,a))\St a\in\Web E\}$
so that 
\[\Oc{\Sinj_1}=\{(\Mset{\List a1k},1 \cdot \Mset{a_1,\dots,a_k})
\St k\in\Nat\text{ and }\List a1k\in E\}\in\REL(\Oc E,\Oc{\Sfun E}) .\]
It follows that %
\[ \Tayfun(s)\Compl\Oc{\Sinj_1} =\{(\Mset{\List a1k},(k,b))\St
(\Mset{\List a1k},b)\in s\} \in\Kloc\REL(E,\Sfun F) . \] 
Then \(\Sproj_k\Compl\Tayfun(s)\Compl\Oc{\Sinj_1} = \dnz[k] s \). 
This means that the usual Taylor expansion at $0$ can be recovered from 
this functor, as seen in \cref{eq:recover-taylor} (remember that in this setting,
$\frac{1}{k!} = 1$ because the sum in $\REL$ is idempotent).
Then the axiom \ref{ax:Sdl-Taylor} implies that for any $s$, 
\[ s = \Ssum \compl \D s \compl \Oc \Sinj_1 
= \bigcup_{k \in \N} \Sproj_k\Compl\Tayfun(s)\Compl\Oc{\Sinj_1} 
= \bigcup_{k \in \N} \dnz[k] s. \] 
This is precisely \cref{eq:taylor-rel}, so \ref{ax:Sdl-Taylor} captures the fact 
that morphisms in $\REL$ are equal to their Taylor expansion.

\subsection{The weighted relation model}
\label{sec:wrel}

The weighted relational model introduced by \cite{Lamarche92}
and \cite{Laird13}\ is a 
quantitative generalization of the relational model.
Let $\Rcal$ be a complete semi-ring\footnote{In \cite{Laird13}, $\Rcal$ is 
assumed to be continuous, which is a stronger assumption than complete. See~\cite{Ong17} for an introduction 
of $\WREL$ over complete semirings.}. Then $\WREL$ is the category whose objects 
are sets whose morphisms are matrices, $\WREL(E, F) = \Rcal^{E \times F}$. 
Identity is given by the matrix $\id_{a, b} = \Kronecker a b$, and the composition 
of $s \in \WREL(E, F)$ with $t \in \WREL(F, G)$ is given by matrix 
composition\footnote{We write this product in the reverse order wrt.~the usual
algebraic conventions on matrices, because it is the notion of
composition in our category, and we respect the standard order of
factors when writing a composition in a category.
This is a well known and unfortunate mismatch of conventions.}.
\[(t \compl s)_{a, c} = \sum_{j \in F} s_{a, b} t_{b, c}.\]
This sum is well-defined, precisely because the semiring is complete
and allows arbitrary countable sums.
For any $s \in \WREL(E, F)$ and $x \in \Rcal^E$, we define $\Matappa sx\in\Rcal^F$ by 
\[ (\Matappa s x)_b = \sum_{a \in E} s_{a, b} x_a . \]
This operation is functorial: for any $E, F, G$, any $x \in \Rcal^E$, any $s \in \WREL(E, F)$ 
and $t \in \WREL(F, G)$,
$\Matappa t {(\Matappa s x)} = \Matappa {(t \compl s)} x$
and $\Matappa \id x = x$.

Observe that $\REL$ coincides with the category $\WREL[\Bring]$, where $\Bring = 
({0, 1}, \vee, \wedge)$ is the boolean semiring ($\vee$ is the disjunction, and 
$\wedge$ the conjunction). Indeed, 
any relation $s \in \REL(E, F)$ can be seen as a matrix 
$s' \in \WREL[\Bring](E, F)$ defined as 
\begin{equation} \label{eq:relation-as-matrix}
  s'_{a, b} = \begin{cases}
  1 \text{ if } (a, b) \in s \\
  0 \text{ otherwise.}
\end{cases} \end{equation}

\begin{remark} \label{rem:from-ll-to-wrel}
  In fact, any relation $r \subseteq E \times F$ can be seen as a morphism 
  $r \in \WREL(E, F)$ for any $\Rcal$, using \cref{eq:relation-as-matrix}
  above.
  This translation is functorial when restricted to 
  relations $r \subseteq E \times F$ 
  that are the graphs of bijective functions (that is, for 
  all $x \in E$, there is a unique $y \in F$ such that 
  $(x, y) \in r$). Most of the structural morphisms of the LL structure of
  $\REL$ consist in such relations, and thus this construction
  provides for free structural morphisms in $\WREL$ that satisfies 
  the commutations required by LL.  
  Note however that this construction does not preserve 
  the structure of $\REL$ in general. It does not preserve the sum, and it is not 
  even functorial when considered on all relations\footnote{Unless $1+1 = 1$, which is 
  not at all a common assumption}. 
\end{remark}

There is a bifunctor on $\WREL$ defined by $E \tensor F = E \times F$
(as in $\REL$) and for any $(s^i \in \WREL(E_i, F_i))_{i = 1,2}$,
\[ (s^1 \tensor s^2)_{(a_1, a_2), (b_1, b_2)} = s^1_{a_1, b_1} \cdot s^2_{a_2, b_2} 
\in \WREL(E_1 \tensor E_2, F_1 \tensor F_2) . \]
By \cref{rem:from-ll-to-wrel}, the symmetric monoidal structure of $\REL$ 
can be translated to $\WREL$, providing $\WREL$ 
with a symmetric monoidal structure. 
For example, $\tensorSym_{(a_1, a_2), (a_1', a_2')} = \Kronecker{a_1}{a_1'} \Kronecker{a_2}{a_2'}$.
Furthermore, $\WREL$ is closed with $E \linarrow F = E \times F$ and 
for any $s \in \WREL(G \tensor E, F)$, 
$\cur(s)_{a, (b, c)} = s_{(a, b), c}$.  The evaluation map 
$\ev \in \WREL(E \linarrow F \tensor E, F)$ 
is given by the translation of the evaluation
$\ev \in \REL(E \linarrow F \tensor E, F)$ into $\WREL$,
that is, $\ev_{((a, b), a'), b'} = \Kronecker {a} {a'} \Kronecker b {b'}$.

The category is $*$-autonomous, with $\bot = 1 = {*}$ as dualizing object. 
If $\_^{\bot}$ is the functor $\_ \linarrow \bot$, then as in $\REL$, 
$E^{\bot} = E$ and $s^{\bot}_{a, b} = s_{b, a}$ up to a trivial isomorphism
(which is given by the isomorphism in $\REL$ and by \cref{rem:from-ll-to-wrel}).
So the dual in $\WREL$ corresponds to 
the notion of matrix transposition. 
The category $\WREL$ is cartesian, with 
\[ \withFam E_i = \bigcup_{i \in I} \{i\} \times E_i \, . \]
The tupling of $(f(i) \in \WREL(E, F_i))$ is defined by 
\[(\prodPairing{f(i)})_{a, (i, b)} = f(i)_{a, b} \]
and the projections are the translations given by \cref{rem:from-ll-to-wrel}
of the projections in $\REL$, 
that is, $(\prodProj_i)_{(j, a), b} = \Kronecker i j \Kronecker a b$.
This cartesian product is dual to itself, and $\WREL$ has countable biproducts.

\begin{theorem} \label{th:wrel-lafont-cat}
  (Corollary III.6 of \cite{Laird13})
  The symmetric monoidal category \(\WREL\) is a Lafont category.
\end{theorem}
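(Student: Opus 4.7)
The plan is to exhibit, for every object $E$ of $\WREL[\Rcal]$, a free commutative $\tensor$-comonoid structure on the set $\Oc E = \Mfin E$ of finite multisets over $E$, and to verify the universal property by an explicit calculation reminiscent of the $\REL$ case (Theorem~\ref{th:rel-lafont-cat}). The counit and comultiplication of the comonoid structure on $\Oc E$ will be the lifts (in the sense of \cref{rem:from-ll-to-wrel}) of the weakening and contraction relations in $\REL$: concretely, $(\weak_E)_{m,*}$ is $1$ exactly when $m=\Msetempty$, and $(\contr_E)_{m,(m_1,m_2)}$ is $1$ exactly when $m=m_1+m_2$. Because the underlying relations in $\REL$ are graphs of (total or partial) bijective functions, the comonoid axioms (counitality, coassociativity, cocommutativity) translate for free, and the dereliction $(\Deru_E)_{m,a}=\Kronecker{m}{[a]}$ plays the role of the ``forgetting'' arrow from the would-be free object.

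Next, given a commutative $\tensor$-comonoid $(C,\Comonu_C,\Comonm_C)$ in $\WREL[\Rcal]$ and a morphism $s\in\WREL[\Rcal](C,E)$, I would define the promotion $\Prom{s}\in\WREL[\Rcal](C,\Oc E)$ by the formula
\[
\Prom{s}_{u,m}
\;=\;
\sum_{(u_1,\dots,u_k)\in C^k}(\Comonm_C^{(k)})_{u,(u_1,\dots,u_k)}\,
s_{u_1,a_1}\cdots s_{u_k,a_k},
\]
where $k=\Mscard m$ and $(a_1,\dots,a_k)$ is any linearization of $m$; here $\Comonm_C^{(k)}\in\WREL[\Rcal](C,C^{\tensor k})$ is the iterated $k$-ary comultiplication, defined uniquely by coassociativity. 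Cocommutativity of $\Comonm_C$ guarantees independence from the choice of linearization, which is the first thing that needs checking; this is a symmetric-group averaging argument whose underlying combinatorics is invisible because $m$ already collapses orderings.

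The bulk of the proof is the verification that $\Prom{s}$ is a comonoid morphism from $C$ to $(\Oc E,\weak_E,\contr_E)$, and that $\Deru_E\Compl\Prom{s}=s$. Compatibility with $\weak$ comes from the counit axiom in $C$, compatibility with $\contr$ reduces to coassociativity of $\Comonm_C^{(k)}$ split along $k=k_1+k_2$ (using that the only multiset contributions on the right-hand side pair $(m_1,m_2)$ with $m_1+m_2=m$), and $\Deru_E\Compl\Prom{s}=s$ is immediate from $\Comonm_C^{(1)}=\Id_C$. For uniqueness, if $t\in\WREL[\Rcal](C,\Oc E)$ satisfies $\Deru_E\Compl t=s$ and is a comonoid morphism, then iterating the contraction-compatibility axiom gives $(\Deru_E^{\tensor k})\Compl \Comonm_{\Oc E}^{(k)}\Compl t = s^{\tensor k}\Compl\Comonm_C^{(k)}$, which forces the coefficient $t_{u,m}$ to be precisely the sum defining $\Prom{s}_{u,m}$ once one observes that in $\WREL[\Rcal]$ the matrix $(\Deru_E^{\tensor k}\Compl\Comonm_{\Oc E}^{(k)})_{m,(a_1,\dots,a_k)}$ is $1$ whenever $[a_1,\dots,a_k]=m$ and $0$ otherwise.

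The step I expect to be the main obstacle is the contraction-compatibility diagram, because the multinomial bookkeeping on the $\Oc E$ side (where $(\contr_E)_{m,(m_1,m_2)}=\Kronecker{m}{m_1+m_2}$) has to be matched exactly against the algebraic bookkeeping on the $C$ side (where $\Comonm_C^{(k_1+k_2)}$ arises by coassociatively splitting the $k=k_1+k_2$ arguments between the two halves). Both depend on the fact that sums in $\Rcal$ are defined and behave correctly with the infinitary products intrinsic to $\WREL[\Rcal]$, which is where completeness of $\Rcal$ enters. Once this diagram is established, the rest is formal and yields the required right adjoint $\Comonfree$ to $\Cmforg$, proving that $\WREL[\Rcal]$ is a Lafont category.
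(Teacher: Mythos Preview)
The paper does not give its own proof of this theorem; it simply cites Corollary~III.6 of~\cite{Laird13} and then describes the resulting free exponential. Your proposal is therefore not being compared to an argument in the paper, but to the standard construction which the cited reference carries out. Your outline is the correct one and would go through: define the promotion $\Prom{s}$ via the iterated comultiplication of the source comonoid, check that it is a comonoid morphism by reducing to coassociativity and counitality of $C$, and recover uniqueness from the fact that $\Deru_E^{\tensor k}\Compl\Comonm_{\Oc E}^{(k)}$ has the $0/1$ matrix you describe. The completeness of $\Rcal$ is indeed what makes the infinite sums in $\Prom{s}_{u,m}$ and in the contraction-compatibility computation well-defined.

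One small imprecision worth fixing: you justify the comonoid axioms for $(\Oc E,\weak_E,\contr_E)$ by saying the underlying relations are ``graphs of (total or partial) bijective functions''. The contraction relation $\{(m,(m_1,m_2))\St m=m_1+m_2\}$ is the graph of the surjection $(m_1,m_2)\mapsto m_1+m_2$, which is not bijective. What actually makes the $\REL$ diagrams transfer to $\WREL$ with $0/1$ coefficients is that in each composite appearing in the counitality, coassociativity and cocommutativity squares, the intermediate index is uniquely determined by the outer ones (because at least one of the composed relations is functional in the appropriate direction). This is easy to verify directly, but the blanket appeal to \cref{rem:from-ll-to-wrel} does not quite cover it.
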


The canonical exponential modality associated to this free construction is 
given by the functor $\Oc\_$ that maps an object $E$ to $\Oc E = \Mfin E$
and a morphism $s \in \WREL(E, F)$ to
\begin{equation} \label{eq:exponential-v1} (\Oc s)_{m, [b_1, \ldots, b_n]} = 
\sum_{\substack{(a_1, \ldots, a_n) \text{ s.t. } \\ [a_1, \ldots, a_n] = m}} 
\prod_{i = 1}^n s_{a_i, b_i} 
\in \WREL(\Oc E, \Oc F) . \end{equation}
This formula is a straightforward generalization of 
the formula of the exponential in $\REL$ given in \cref{eq:exponential-rel}.
The comonad structure and the Seely isomorphisms correspond to the translation 
into $\WREL$
of the exponential modality structure of $\REL$.

The exponential given in \cref{eq:exponential-v1}
has an alternative formulation. For any $(a_1, \dots, a_n)$ 
such that $m = [a_1, \ldots, a_n]$, define 
$r_{a_1, \ldots, a_n} \in \Mfin{E \times F}$ by 
$r_{a_1, \ldots, a_n}(a, b) = \Scard{\{k \St a_k = a, b_k = b\}}$
so that by definition of $s^{r_{a_1, \ldots, a_n}}$ (recall \cref{eq:multiset-power})
\[ (\Oc s)_{m, [b_1, \ldots, b_n]} = 
\sum_{\substack{(a_1, \ldots, a_n) \text{ s.t. } \\ [a_1, \ldots, a_n] = m}} 
s^{r_{a_1, \ldots, a_n}} \, .\]
Observe that $r_{a_1, \ldots, a_n}(a, b) \in \Mstrans mp$ 
where \(\Mstrans mp\) for the set of all
\(r\in\Mfin{E\times F}\) such that
\begin{align*}
  \forall a\in E \ m(a)=\sum_{b\in F}r(a,b)
  \text{\quad and\quad}
  \forall b\in F \ p(b)=\sum_{a\in E}r(a,b)\,.
\end{align*}
Notice that if \(r\in \Mstrans mp\) then %
\(\Mscard r=\Mscard m=\Mscard p\) so that \(\Mstrans mp\) is non-empty
iff \(\Mscard m=\Mscard p\). Now, given any $r \in \Mstrans mp$, the number of tuples
$(a_1, \ldots, a_n)$ such that $m = [a_1, \ldots, a_n]$ and 
$r = r_{a_1, \ldots, a_n}$ is equal to 
\begin{align*}
  \Multinomb pr
  =\prod_{b\in F}\frac{\Factor{p(b)}}{\prod_{a\in E}\Factor{r(a,b)}}
\end{align*}
which belongs to \(\Rcal\); this is a generalized
multinomial coefficient. This enumeration can be obtained using the \emph{class equation}
of group theory.
Thus, we have
\begin{equation}  \label{eq:exponential-v2} 
  (\Oc s)_{m,p}=\sum_{r\in\Mstrans mp}\Multinomb pr s^r \, .
\end{equation}

Any morphism $s\in\WREL(\Oc E,F)$ induces a function 
$\Rcal^E \arrow \Rcal^F$ defined by a power series.
Let $\Prom x \in \Rcal^{\Oc E}$ by $(\Prom x)_m = x^m$, and
\begin{align*}
  \Fun s:\Rcal^E &\to\Rcal^F\\
  x&\mapsto\Matappa s{\Prom x} \text{ that is, } (\Fun s(x))_b
  =\sum_{m\in \Oc E,b\in F}s _{m,b}x^m .
\end{align*}

\begin{definition} \label{def:entire}
We call the functions defined in that way called \emph{entire functions}, following the terminology 
of \cite{DanosEhrhard11}.
\end{definition}

\begin{remark} \label{rem:analytic-not-bijective}
This operation is not a bijection between morphisms and entire functions.
For example, let 
$\Rcal = (\Rbar, +, \times)$. Then the function $\Rbar^{\{*\}} \arrow \Rbar^{\{*\}}$
that maps $0$ to $0$ and any $x > 0$ to $\infty$ is entire, and can be written in multiple 
ways, such as $f(x) = \infty \times x$ of $f(x) = \infty \times x^2$.
So if we define $s(1)_{(m, *) = \Kronecker{m}{[*]}}$ and $s(2)_{(m, *)} 
= \Kronecker{m}{[*, *]}$ we have $f = \Fun {s(1)} = \Fun{s(2)}$ even though
$s(1) \neq s(2)$.
We will see in~\cref{sec:pcoh} that the category of 
probabilistic coherent spaces is much better behaved from this point of view.
\end{remark}

\subsubsection{Differential and Taylor expansion in $\WREL$}

The category $\WREL$ is 
a differential category whose derivative has the same definition as the derivative 
of the historical model of Köthe spaces, see \cite{Ehrhard02}.
This derivative corresponds to the formal derivative 
of power series that generalizes the deriving transformation
of $\REL$ to arbitrary complete semirings.
\begin{align*}
  \coderbis_{(m, a), p} &= p(a) \Kronecker{m + [a]}p \in \WREL(\Oc E \tensor E, \Oc E) \\
  \coderbisn_{(m, a_1, \ldots, a_n), p} &= \frac{p!}{m!} \Kronecker{m + [a_1, \ldots, a_n]}p \in 
\WREL(\Oc E \tensor E^{\tensor n}, \Oc E) .
\end{align*}
The coefficient $\frac{p!}{m!}$ in this derivative enumerates the number of ways the multiset 
$[a_1, \ldots, a_n]$ can be inserted into $m$ in order to produce $p$.

We can check that the $n$-th derivative of $s \in \WREL(E, F)$ is equal to
\[ (\dn s)_{0 \cdot m + 1 \cdot m_1 + \cdots + n \cdot m_n, b} = \begin{cases}
  \frac{(m + [a_1, \ldots, a_n])!}{m!} s_{m + [a_1, \ldots, a_n], b} 
  \text{ if } \forall i,  m_i = [a_i] \\
  0 \text{ otherwise.}
\end{cases} \] It follows that 
\begin{align*} (\dnz s)_{m, b} 
  &= \sum_{\Biind{m_1, \ldots, m_n \text{ s.t.}}{m = m_1 + \ldots + m_n}} 
  (\dn s)_{0\cdot[] + 1 \cdot m_1 \cdots + n \cdot m_n} \\
  &= \sum_{\Biind{(a_1, \ldots, a_n) \text{ s.t.}}{m = [a_1, \ldots, a_n]}} m! \cdot s_{m, b} \\ 
  &= \begin{cases}
    n! \cdot s_{m, b} \text{ if } \Mscard m = n \\ 
    0 \text { otherwise} 
  \end{cases}
\end{align*}
upon observing that if $\Mscard{m} = n$ then the number of tuple $(a_1, \ldots, a_n)$ such that 
$m = [a_1, \ldots, a_n]$ is equal to $\frac{n!}{m!}$ (using the class formula of group theory).

So, \emph{assuming that $n!$ has a multiplicative inverse in $\Rcal$},  
$\frac{1}{n!} \dnz s$ is the $n$-th homogeneous component of $s$,
and we have the Taylor formula
\begin{equation} \label{eq:taylor-wrel}
   s = \sum_{n \in \N} \frac{1}{n!} \dnz s \, .
\end{equation}
This sum is well defined because $\Rcal$ is complete.
Taking $\Rcal = \Bring$, we recover the Taylor expansion in $\REL$ (in $\Bring$, 
the multiplicative inverse of $n! = 1$ is $1$).

As already mentionned, the notion of derivative in $\WREL$ corresponds to the notion of derivative of 
formal power series.
More precisely, we can check that
\begin{equation} \label{eq:explicit-derivative}
  \Fun{\dn s}(x, u^1, \ldots, u^n)_b
   =\sum_{m\in\Mfin{\Web E}}
   \sum_{\Vect a\in\Web E^d}
   \frac{\Factor{(m+\Mset{\Vect a})}}{\Factor m}
   s_{m+\Mset{\Vect a},b}x^m u^1_{a_1}\cdots u^d_{a_d} \, .
 \end{equation}
 We will write this quantity $\Deriv{s}{n}(x)(u^1, \ldots, u^n)$, 
 because it corresponds to the entire function given by 
 the formal derivative of the formal power series given by $s$.
 
 \begin{remark} Taking $s(1)_{(m, *) = \Kronecker{m}{[*]}}$ and $s(2)_{(m, *)} 
= \Kronecker{m}{[*, *]}$ 
in $\WREL[\Rbar](\oc \Sone, \Sone)$ 
as in \cref{rem:analytic-not-bijective}, we have 
$\Fun s(1) = \Fun s(2)$, yet 
\[ \id = \Fun {\dnz[1] s(1)} \neq \Fun{\dnz[1] s(2)} = 0 . \]
So the derivative $\d s$ is very dependent on the coefficients of the formal power series 
given by $s$, and is not determined at all by the induced function $\Fun s$ 
(observe that $\Fun {s(1)} = \Fun {s(2)}$ is not even derivable at $0$ in the sense 
of traditional real analysis).
Again, the situation is much more satisfying  in probabilistic coherence spaces.
 \end{remark}

 \subsubsection{Differential and Taylor expansion in $\WREL$ as an analytic structure}
 We now describe how this Taylor structure arises from an analytic coalgebra.
The category $\WREL$ is a symmetric monoidal $\Sigma$-additive category, where the sum 
of $\family<I>[i]{s^i \in \WREL(E, F)}$
is defined as the pointwise sum
\[ \left(\sum_{i \in I} s^i \right)_{a, b} \defEq \sum_{i \in I} s^i_{a, b}  \, . \]
Observe that when $\Rcal = \Bring$, this sum coincides with the sum in $\REL$.
By \cref{thm:omega-additive-representable}, $\WREL$ is a 
representably $\Sigma$-summable category, and thus the following result 
holds.

\begin{theorem} $\WREL$ is a Lafont representably $\Sigma$-additive category, 
  thus it is a representably analytic category.
\end{theorem}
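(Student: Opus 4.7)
The statement is essentially a corollary of the general machinery developed earlier in the paper, so the plan is to invoke the appropriate combination of previous theorems rather than redo any heavy calculation.

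First, I would record that the Lafont property is already given by \cref{th:wrel-lafont-cat}. The remaining substantive task is to verify that $\WREL$ is representably $\Sigma$-additive. By \cref{thm:omega-additive-representable}, it suffices to check three things: $\WREL$ is a symmetric monoidal $\Sigma$-additive category, the sum is total, and \ref{ax:D-defined} holds. For the $\Sigma$-monoid structure on each hom-set, define summability of a family $(s^a)_{a\in A}\in\WREL(E,F)^A$ to always hold, with sum $\bigl(\sum_a s^a\bigr)_{e,f}=\sum_a s^a_{e,f}$, which is well defined because $\Rcal$ is a complete semiring. The partition-associativity and unary axioms reduce to the corresponding axioms of the coefficient summation in $\Rcal$. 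Compatibility with composition, the tensor and the cartesian product then follows from the distributivity of the semiring product over countable sums in $\Rcal$: each coefficient of a composite is itself a convergent sum indexed by a product of webs, so exchanging the order of summation is legitimate by completeness. In particular, every morphism is $\Sigma$-additive, so $\WREL$ is a $\Sigma$-additive category with total sum.

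Next, I would check \ref{ax:D-defined}: the countable cartesian product $\Dbimon=\withFam<\N>\Sone$ is concretely the set $\N$, and for any object $X$ the internal hom $\Dbimon\linarrow X=\N\times X$ exists since $\WREL$ is symmetric monoidal closed, as recalled in the excerpt. Having assembled these ingredients, \cref{thm:omega-additive-representable} applies and yields that $\WREL$ is representably $\Sigma$-additive, with $\Sfun X=\Dbimon\linarrow X=\withFam<\N> X$ and $\Sproj_i=\prodProj_i$.

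Finally, combining the Lafont property with representable $\Sigma$-additivity, \cref{th:Lafont-elem-summable-Taylor} produces the analytic coalgebra $\Dbimonca\in\WREL(\Dbimon,\Oc\Dbimon)$ directly from the bimonoid structure of $\Dbimon$, and hence $\WREL$ is a representably analytic category. There is no real obstacle here; the only slightly delicate point is bookkeeping the exchanges of countable sums in the verification of distributivity of $+$ over composition and over $\tensor$, which however are immediate from the assumed completeness of $\Rcal$. The nontrivial content, namely that the induced Taylor functor $\Tayfun$ genuinely computes the classical Taylor expansion of entire functions displayed in \cref{eq:taylor-wrel,eq:explicit-derivative}, is a matter of unfolding the formulas for $\Sdl$ produced from $\Dbimonca$ and is stated separately in \cref{thm:Taylor-expansion-wrel}.
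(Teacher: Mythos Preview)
Your proposal is correct and follows essentially the same route as the paper: the paper records that $\WREL$ has total pointwise sums making it a symmetric monoidal $\Sigma$-additive category, invokes \cref{thm:omega-additive-representable} to obtain representable $\Sigma$-additivity, and then applies \cref{th:Lafont-elem-summable-Taylor} together with the Lafont property from \cref{th:wrel-lafont-cat}. Your version is somewhat more explicit in spelling out the verification of the hypotheses (totality, \ref{ax:D-defined}, distributivity over composition and $\tensor$), but the structure of the argument is the same.
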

The object $\Dbimon$ is again the set of natural numbers. Its bimonoid structure 
and its analytic coalgebra
is given by the translation of the bimonoid structure and the analytic 
coalgebra on $\Dbimon$ in $\REL$. 
The quantitative aspect of $\WREL$ only arises
when computing the induced distributive law
\(\Sdl_E=\Curlin(\Sdl'_E)\in\WREL(\Oc\Sfun E,\Sfun\Oc E)\) where %
\(\Sdl'_E\in\WREL(\Oc(\Dbimon\Limpl E)\Times\Dbimon,\Oc E)\) is
\[
  \begin{tikzcd}
    \Oc(\Dbimon\Limpl E)\Times\Dbimon
    \ar[r,"\Tens{\Oc(\Dbimon\Limpl E)}{\Dbimonca}"]
    &[2em]
    \Oc(\Dbimon\Limpl E)\Times\Oc\Dbimon
    \ar[r,"\Ocmont"]
    &[-0.8em]
    \Oc((\Dbimon\Limpl E)\Times\Dbimon)
    \ar[r,"\Oc\Evlin"]
    &[-0.6em]
    \Oc E \, .
  \end{tikzcd}
\]
One checks very easily that
$\Ocmont\Compl\Tensp{\Oc(\Dbimon\Limpl E)}{\Dbimonca}$ has only
\(0\) and \(1\) coefficients and is the translation
of the morphism \(\Ocmont\Compl\Tensp{\Oc(\Dbimon\Limpl E)}{\Dbimonca}\) 
of $\REL$:
\begin{align*}
  &(\Ocmont\Compl\Tensp{\Oc(\Dbimon\Limpl E)}{\Dbimonca}
  )_{(\Mset{(i_1,a_1),\dots,(i_k,a_k)},n),q}\\
  &\hspace{4em}=
  \begin{cases}
    1 & \text{if }q=\Mset{((i_1,a_1),j_1),\dots,((i_k,a_k),j_k)}\\
    &\text{with }\List j1k\in\Nat\text{ such that }j_1+\cdots+j_k=n\\
    0 & \text{otherwise,}
  \end{cases}
\end{align*}

Now, \(\Evlin\in\WREL((\Dbimon\Limpl E)\Times\Dbimon,E)\) is
characterized by \(\Evlin_{((i,a),j),b}=\Kronecker ij\Kronecker ab\)
and hence if \(\Oc\Evlin_{q,m}\not=0\) with %
\(q\in\Mfin{(\Nat\times E)\times\Nat}\) and \(m\in\Mfin{E}\),
we must have %
\(m=\Mset{\List a1k}\) and %
\(q=\Mset{((i_1,a_1),i_1),\dots,((i_k,a_k),i_k)}\) %
for some \(k\in\Nat\), \(\List a1k\in\Web E\) and \(\List i1k\in\Nat\).
For such multisets \(m\) and \(q\), the set \(\Mstrans qm\) has exactly
one element %
\(r\in\Mfin{((\Nat\times\Web E)\times\Nat)\times\Web E}\) such that %
\(\Evlin^r\not=0\), namely %
\(r=\Mset{(((i_1,a_1),i_1),a_1),\dots,(((i_k,a_k),i_k),a_k)}\), and we
have therefore
\begin{align*}
  \Oc\Evlin_{q,m}
  =\Multinomb mr
  =\prod_{a\in\Web E}\frac{\Factor{m(a)}}
  {\prod_{i\in\Nat}\Factor{p(i,a)}}=\frac{\Factor m}{\Factor p}\in\Nat
\end{align*}
where \(p\in\Mfin{\Nat\times\Web E}\) is defined by %
\(p(i,a)=q((i,a),i)=r(((i,a),i),a)\).
It follows that, for \(p\in\Mfin{\Nat\times\Web E}\), \(n\in\Nat\) and
\(m=\Mset{\List a1k}\in\Mfin{\Web E}\), we have
\begin{align*}
  \Sdl_{p,(n,m)}=
  \begin{cases}
    \frac{\Factor m}{\Factor p}
    &\text{if }p=\Mset{(i_1,a_1),\dots,(i_k,a_k)}\text{ with }
    i_1+\cdots+i_k=n\\
    0 & \text{otherwise.}
  \end{cases}
\end{align*}
This is a quantitative counterpart of the distributive law in 
$\REL$. 

\begin{remark} \label{rem:derivative-no-coef}
  It is interesting, and slightly puzzling, to observe that the numerical
  coefficients associated with the computation of derivatives (the
  \(42\) in the derivative \(42x^{41}\) of \(x^{42}\)) are generated,
  in the definition of \(\Sdl\), by the exponential combined with the
  evaluation map when computing \(\Oc\Evlin\) ---~which by the way is a
  purely LL morphism, not using any differential structure~---, and
  not by the \(\Dbimonca\) morphism itself, which seems to be ``the
  truly differential part'' of this definition.
\end{remark}

It follows that the induced Taylor functor \(\Tayfun:\Kloc\WREL\to\Kloc\WREL\)
maps a set $E$ to the set \(\Sfun E = \N \times E \) and 
a morphism \(s\in\Kloc\WREL(E,Y)\) to %
\(t=\Tayfun(s)\in\Kloc\WREL(\Sfun E,\Sfun Y)\) given by
\begin{align*}
  \Tayfun(s)_{\Mset{(i_1,a_1),\dots,(i_k,a_k)},(n,b)}=
  \Kronecker{n}{i_1+\cdots+i_k}
  \frac{\Factor{\Mset{\List a1k}}}
  {\Factor{\Mset{(i_1,a_1),\dots,(i_k,a_k)}}}
    s_{\Mset{\List a1k},b}\,.
\end{align*}
In particular, $(\Tayfun(s) \compl \Oc \Sinj_1)_{[a_1, \ldots, a_k], n} 
= \Kronecker{n}{k}  s_{[a_1, \ldots, a_k], b}$ 
so $\Sproj_n \compl \Tayfun(s) \compl \Oc \Sinj_1$
is the $n$-th homogeneous component of $s$. 
It follows from \ref{ax:Sdl-Taylor} that whenever the numbers $n!$ 
have multiplicative inverses,
\begin{equation} \label{eq:analytic-wrel}
  s = \Ssum \compl \D s \compl \Oc \Sinj_1 
= \sum_{n \in \N} \Sproj_n\Compl\Tayfun(s)\Compl\Oc{\Sinj_1} 
= \sum_{n \in \N} \frac{1}{n!} \dnz s
\end{equation}
so \cref{eq:taylor-wrel} boils down to \ref{ax:Sdl-Taylor}
(we did Taylor expansion at 
$0$ for the sake of conciseness, but we could do a Taylor expansion at any point
in a similar way).

\begin{remark} \label{rem:taylor-quotient}
  A key property of our Taylor expansion is that it structuraly 
  quotient the usual $n$-th derivative by the right coefficient, as 
  crucially used in \cref{eq:analytic-wrel}.
  In particular, it is quite remarkable that 
  our theory of analytic categories and analytic structure does not require $n!$ to 
  have a multiplicative inverse, in contrast to the Taylor formula given by 
  \cref{eq:taylor-wrel}. For example, $\WREL[\overline{\N}]$ is a representably
  analytic category, although it is not usually considered to be a model of 
  syntactical Taylor expansion.
\end{remark}

We give now an explicit formula for $\Fun{\D(s)} : \Rcal^{\Sfun E}\to\Rcal^{\Sfun Y}$ 
and we show that this function coincides with 
the intuitive formula given in \cref{sec:Taylor-operator}.
Notice that a multiset %
\(p=\Mset{(i_1,a_1),\dots,(i_k,a_k)}\) can be written \emph{in a
  unique way} \(p=\Msetact{j_1}{m_1}+\cdots+\Msetact{j_l}{m_l}\) 
  (recall the notation of \cref{sec:notation-multiset}),
  where
\(j_1<\cdots<j_l\in\Nat\) and %
\((m_i\in\Mfinnz{\Web E})_{i=1}^l\) are such that
\(m_1+\cdots+m_l=m=\Mset{\List a1k}\).
The condition that the \(m_i\)'s are non-empty is crucial for this
enumeration to be bijective: the only way to get \(p=\Msetempty\) is by
choosing \(l=0\).
It follows that the associated function %
\(\Fun{\Tayfun(s)}: \Rcal^{\Sfun E}\to\Rcal^{\Sfun Y}\) satisfies, for
each \(\Vect x\in\Rcal^{\Sfun E}\) and \((n,b)\in \Sfun Y\)
\begin{equation} \label{eq:T-functional}
  \Fun{\Tayfun(s)}(\Vect x)_{n,b}
  =
  \sum_{l=0}^\infty
  \sum_{\Vect m\in\Mfinnz{\Web E}^l}
  \sum_{\Biind{j_1<\cdots<j_l\in\Nat}
  {j_1\Mscard{m_1}+\cdots+j_l\Mscard{m_l}=n}}
  \Multinom{\Vect m}
  s_{m_1+\cdots+m_l,b}
  x(j_1)^{m_1}\cdots x(j_l)^{m_l}
\end{equation}
where we recall that
\(\Multinom{\Vect m}
=\frac{\Factor{(m_1+\cdots+m_l)}}{\Factor{m_1}\cdots\Factor{m_l}}\in\N\)
is the multinomial coefficient of the sequence \(\Vect m\) of multisets.

\begin{example} 
  For example, for \(n=0\), there are two ways to fulfill the condition %
\(j_1\Mscard{m_1}+\cdots+j_l\Mscard{m_l}=n\): either with \(l=0\), or
with \(l=1\) and \(j_1=0\), and then \(m_1\) can be any element of
\(\Mfinnz{\Web E}\).
Therefore, we have
\begin{align*}
 \Fun{\Tayfun(s)}(\Vect x)_{0,b}
 =s_{\Msetempty,b}+\sum_{m\in\Mfinnz{\Web E}}s_{m,b}x(0)^m
 =\Fun s(x(0))_b
\end{align*}
so that \(\Fun{\Tayfun(s)}(\Vect x)(0)=\Fun s(x(0))\). 
This equation corresponds to the axiom 
\ref{ax:Sdl-local}.
\end{example}

\begin{theorem} \label{thm:Taylor-expansion-wrel}
  For all \(n\in\Natnz\)
  \begin{align*}
    \Fun{\Tayfun(s)}(\Vect x)_{n,b}
    =\sum_{k=0}^\infty
    \sum_{\Vect n\in(\Natnz)^k}
    \sum_{\Biind{0<i_1<\cdots<i_k\in\Nat}{n_1i_1+\cdots+n_ki_k=n}}
    \frac 1{\prod_{i=1}^k\Factor{n_i}}
    \Deriv{s}{n_1+\cdots+n_k}(x(0))
    (\Rep{x(i_1)}{n_1},\dots,\Rep{x(i_k)}{n_k})_b
  \end{align*}
  where \(\Rep xl\) is the list of repeated arguments %
  \(\overbrace{x,\dots,x}^{n\times}\).
\end{theorem}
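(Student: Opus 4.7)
The plan is to start from the explicit formula \eqref{eq:T-functional}, reorganize the sum so that the contribution at index $0$ is isolated, and then recognize the formula for $\Deriv{s}{d}$ given in \eqref{eq:explicit-derivative}. Because $n \geq 1$, the condition $j_1\Mscard{m_1}+\cdots+j_l\Mscard{m_l}=n$ allows at most one $j_r$ equal to $0$, and this must be $j_1$. So I would split \eqref{eq:T-functional} into two cases according to whether $j_1 \geq 1$ or $j_1=0$, renaming $m_0 \defEq m_1$ and re-indexing $j'_r \defEq j_{r+1}$, $m'_r \defEq m_{r+1}$ in the second case. Using the fact that $\Multinom{\Msetempty,\Vect m} = \Multinom{\Vect m}$, that $x(0)^{\Msetempty}=1$, and that the empty multiset is neutral for addition, both cases merge into a single expression
\[
\Fun{\Tayfun(s)}(\Vect x)_{n,b}
  = \sum_{l=0}^{\infty}\sum_{\substack{m_0\in\Mfin{\Web E}\\ \Vect m\in\Mfinnz{\Web E}^{l}}}
  \sum_{\substack{0<j_1<\cdots<j_l\\ j_1\Mscard{m_1}+\cdots+j_l\Mscard{m_l}=n}}
  \Multinom{m_0,\Vect m}\, s_{m_0+m_1+\cdots+m_l,b}\,
  x(0)^{m_0}\prod_{r=1}^l x(j_r)^{m_r},
\]
where the value $m_0 = \Msetempty$ recovers the original $j_1\geq 1$ case.

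Next, I would re-parametrize using $k \defEq l$, $i_r \defEq j_r$, and $n_r \defEq \Mscard{m_r}$. Since each $m_r \in \Mfinnz{\Web E}$ and the condition $\sum j_r \Mscard{m_r}=n$ becomes $\sum n_r i_r = n$, the sum over $\Vect m \in \Mfinnz^k$ can be stratified by the sequence $\Vect n \in (\Natnz)^k$ of sizes. This yields the equivalent expression
\[
\sum_{k=0}^\infty\sum_{\Vect n\in(\Natnz)^k}
\sum_{\substack{0<i_1<\cdots<i_k\\ n_1 i_1+\cdots+n_k i_k=n}}
\sum_{\substack{m_0\in\Mfin{\Web E}\\ (m_r\in\Mfinnz{\Web E},\ \Mscard{m_r}=n_r)_{r=1}^k}}
\Multinom{m_0,\Vect m}\, s_{m_0+\sum m_r,b}\,x(0)^{m_0}\prod_{r=1}^k x(i_r)^{m_r}.
\]

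The final step is to identify the inner sum over $(m_0,m_1,\dots,m_k)$ with $\tfrac{1}{\prod_r \Factor{n_r}}\,\Deriv{s}{n_1+\cdots+n_k}(x(0))(\Rep{x(i_1)}{n_1},\dots,\Rep{x(i_k)}{n_k})_b$. Starting from \eqref{eq:explicit-derivative} with $d = n_1+\cdots+n_k$, the evaluation on the list $(\Rep{x(i_1)}{n_1},\dots,\Rep{x(i_k)}{n_k})$ produces monomials $\prod_r \prod_{s=1}^{n_r} x(i_r)_{a^{(r)}_s}$. Grouping tuples $\Vect a^{(r)}$ according to the multiset $m_r \defEq \Mset{\Vect a^{(r)}}$ they induce gives a combinatorial factor of $\Factor{n_r}/\Factor{m_r}$ for each $r$. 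Combined with the coefficient $\Factor{(m_0+\sum m_r)}/\Factor{m_0}$ already present in \eqref{eq:explicit-derivative}, one recovers exactly $\Factor{n_r}\cdot \Multinom{m_0,\Vect m}$. Dividing by $\prod_r \Factor{n_r}$ produces the required equality.

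The main obstacle is purely combinatorial: it is the careful bookkeeping that reconciles the multinomial coefficient $\Multinom{m_0,\Vect m}=\frac{\Factor{(m_0+\sum m_r)}}{\Factor{m_0}\prod_r \Factor{m_r}}$ appearing in \eqref{eq:T-functional} with the coefficient $\frac{\Factor{(m+\Mset{\Vect a})}}{\Factor m}$ appearing in \eqref{eq:explicit-derivative}, once the tuples $\Vect a$ are gathered into multisets. The rest of the argument is just a reindexing, but this matching of factorials (together with the necessary distinction of the $j_1=0$ case) is where the calculation actually happens.
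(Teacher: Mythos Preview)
Your proof is correct and is essentially the paper's argument run in reverse: the paper starts from the right-hand side, expands \(\Deriv{s}{d}\) via \eqref{eq:explicit-derivative}, converts tuples to multisets using the count \(\Factor{n_r}/\Factor{m_r}\), and then splits the resulting sum according to whether the base multiset \(m_0\) is empty or not to recover \eqref{eq:T-functional}; you start from \eqref{eq:T-functional}, perform the same case split on \(j_1=0\) versus \(j_1\geq 1\), and then invert the same tuples-to-multisets identification. One small remark: the reason at most one \(j_r\) can equal \(0\) is the strict ordering \(j_1<\cdots<j_l\), not the hypothesis \(n\geq 1\); the latter only ensures that the \(l=0\) term vanishes.
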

It follows from the right-hand expression above that
\[ \Fun{\Sproj_n \compl \Tayfun(s)}(\Vect x) =  \sum_{m \in \mpart{n}}
\frac{1}{m!} \Deriv{s}{\Mscard m}(x(0))
(\Rep{x(i_1)}{m(1)},\dots,\Rep{x(i_k)}{m(k)})  \]
where $\mpart{n} \defEq \{ m \in \mfin(\N^*) | \sum_{i \in \N^*} i\, m(i) =
n\}$ upon
identifying a pair \((\Vect i,\Vect n)\in\Nat^k\times(\Natnz)^k\) such
that \(0<i_1<\cdots<i_k\) with the element \(m\) of \(\Mfin\Natnz\)
such that \(\Msetsupp m=\{\List i1k\}\), which maps \(i_j\) to \(n_j\)
for \(j=1,\dots,k\). This corresponds to
\Cref{eq:Taylor-functor-formula} given in \cref{sec:Taylor-operator} 
and corresponds to a compositional formulation of Taylor expansion, 
thanks to the Faà di Bruno formula.
In particular, we recover all the terms of
the usual Taylor expansion by taking
\begin{align*}
  \Fun{\Tayfun(s)}(x,u,0,0,\dots)
  =\big(\frac 1{\Factor n}\Deriv{s}n(x)(\Rep un)\big)_{n\in\Nat} \, .
\end{align*}

\begin{proof}
Let \(v\) be the right-hand expression in the statement of the theorem, we have by \cref{eq:explicit-derivative}
\begin{align*}
  v&=\sum_{k=0}^\infty
  \sum_{\Vect n\in(\Natnz)^k}
  \sum_{\Biind{0<i_1<\cdots<i_k\in\Nat}{n_1i_1+\cdots+n_ki_k=n}}
  \frac 1{\prod_{i=1}^k\Factor{n_i}}
     \sum_{m\in\Mfin{\Web E}}\\
  &\hspace{2em}
  \sum_{\Vect a\in\Web E^{n_1+\dots+n_k}}
  \frac{\Factor{(m+\Mset{\Vect a})}}{\Factor m}
    s_{m+\Mset{\Vect a},b}{x(0)}^m 
    \prod_{j=1}^{n_1}x(i_1)_{a_j}
    \prod_{j=n_1+1}^{n_2}x(i_2)_{a_j}
    \cdots
    \prod_{j=n_1+\cdots+n_{k-1}+1}^{n_1+\cdots+n_k}x(i_k)_{a_j}\\
  &=\sum_{k=0}^\infty
  \sum_{\Vect n\in(\Natnz)^k}
  \sum_{\Biind{0<i_1<\cdots<i_k\in\Nat}{n_1i_1+\cdots+n_ki_k=n}}
  \frac 1{\prod_{i=1}^k\Factor{n_i}}
     \sum_{m\in\Mfin{\Web E}}\\
  &\hspace{2em}
  \sum_{\Biind{\Vect m\in\Mfin{\Web E}^k}{(\Mscard{m_i}=n_i)_{i=1}^k}}
  \frac{\Factor{(m+m_1+\cdots+m_k)}}{\Factor m}
    s_{m+m_1+\cdots+m_k,b}{x(0)}^m 
    x(i_1)^{m_1}\cdots x(i_k)^{m_k}
    \prod_{i=1}^k\frac{\Factor{n_i}}{\Factor{m_i}}
\end{align*}
because \(\frac{\Factor{n_i}}{\Factor{m_i}}\) is the number of
enumerations of the elements of \(m_i\) (taking multiplicities into
account).
So by simplification of the $\prod_{i=1}^k n_i!$ and definition of 
$\Multinom{m,\Vect m}$ we have
\begin{align*}
  v&=\sum_{k=0}^\infty
     \sum_{\Vect n\in(\Natnz)^k}
     \sum_{\Biind{0<i_1<\cdots<i_k\in\Nat}{n_1i_1+\cdots+n_ki_k=n}}
     \sum_{m\in\Mfin{\Web E}}\\
   &\hspace{4em}
     \sum_{\Biind{\Vect m\in\Mfin{\Web E}^k}{(\Mscard{m_i}=n_i)_{i=1}^k}}
     \Multinom{m,\Vect m}
     s_{m+m_1+\cdots+m_k,b}{x(0)}^m 
     x(i_1)^{m_1}\cdots x(i_k)^{m_k}
\end{align*}
This sum is the same as $\Fun{\Tayfun(s)}(\Vect x)_{n,b}$ given in 
\cref{eq:T-functional} up to a straightforward reindexing as we show now:
\begin{align*}
   v &=\sum_{k=0}^\infty
     \sum_{m\in\Mfinnz{\Web E}}
     \sum_{\Vect m\in\Mfinnz{\Vect E}^k}
     \sum_{\Biind{0<i_1<\cdots<i_k\in\Nat}
     {\Mscard{m_1}i_1+\cdots+\Mscard{m_k}i_k=n}}
     \Multinom{m,\Vect m}
     s_{m+m_1+\cdots+m_k,b}{x(0)}^m 
     x(i_1)^{m_1}\cdots x(i_k)^{m_k}\\
   &\hspace{3em}+\sum_{k=0}^\infty
     \sum_{\Vect m\in\Mfinnz{\Vect E}^k}
     \sum_{\Biind{0<i_1<\cdots<i_k\in\Nat}
     {\Mscard{m_1}i_1+\cdots+\Mscard{m_k}i_k=n}}
     \Multinom{\Msetempty,\Vect m}
     s_{m_1+\cdots+m_k,b} 
     x(i_1)^{m_1}\cdots x(i_k)^{m_k}\\
   &=\sum_{l=1}^\infty
     \sum_{\Vect m\in\Mfinnz{\Vect E}^l}
     \sum_{\Biind{i_1=0<i_2<\cdots<i_l\in\Nat}
     {\Mscard{m_1}i_1+\cdots+\Mscard{m_l}i_l=n}}
     \Multinom{\Vect m}
     s_{m_1+\cdots+m_l,b}
     x(i_1)^{m_1}\cdots x(i_l)^{m_l}\\
   &\hspace{3em}+\sum_{k=0}^\infty
     \sum_{\Vect m\in\Mfinnz{\Vect E}^l}
     \sum_{\Biind{0<i_1<\cdots<i_k\in\Nat}
     {\Mscard{m_1}i_1+\cdots+\Mscard{m_k}i_k=n}}
     \Multinom{\Vect m}
     s_{m_1+\cdots+m_k,b} 
     x(i_1)^{m_1}\cdots x(i_k)^{m_k}\\
   &=\Fun{\Tayfun(s)}(\Vect x)_{n,b} \quad
     \qedhere
\end{align*}
\end{proof}

\section{Examples of representably analytic categories with partial sums}
\label{sec:elementary-Taylor-examples}

We have shown that Taylor expansion in $\REL$ and $\WREL$ 
boils down to a representably analytic structure, suggesting that the theory of Taylor 
expansion developed in this article is a generalization of the existing one 
and provides structural insights on Taylor expansion.
We now exhibit a 
similar structure in multiple models of LL which are not even 
differential categories, suggestion that this new theory of Taylor expansion is much 
more general than the existing one.

\begin{remark}
  All of the examples considered are \emph{web based} models of LL, and 
  can be considered seen as refinements of $\REL$ and $\WREL$ by adding coherence (or quantitative)
  constraints.
  As such, their analytic structure is the same as the one in $\REL$ and $\WREL$.
  For now, we do not know examples of analytic category which are not of this kind.
  This requires further investigations.
  A good candidate would be the category of inegrable cones, recently introduced
  in \cite{Ehrhard22-cones}.
\end{remark}

\subsection{Weak coherence spaces}
\label{sec:wcs}

Our first example is based on a notion of coherence space introduced
in~\cite{Lamarche95} and which also arise naturally in the setting of
Indexed Linear Logic in~\cite{BucciarelliEhrhard01}.
The nice feature of this model, from the point of view of coherent
differentiation, is that it has a non-trivial and very simple notion
of summability.

\begin{definition}
  A \emph{weak coherence space} (WCS) is a structure %
  \(E=(\Web E,\Scoh E)\) where \(\Web E\) is a set and \(\Scoh E\) is
  a binary symmetric relation on \(\Web E\).
\end{definition}

The main difference with respect to the ordinary coherence spaces of 
\cref{sec:coh} is that in a WCS two elements of $\Web E$ 
can only be strictly coherent or strictly incoherent, 
whereas in ordinary coherence spaces there is a third 
possibility which occurs when the two points are equal. 
This ``ternary'' coherence setting is generalized in the non-uniform 
coherence spaces of \cref{sec:nucs}.

\begin{definition}
  A \emph{clique} of a weak coherence space \(E\) is a subset \(x\)
  of \(\Web E\) such that \(\forall a,a'\in X\ a\Scoh Ea'\).
  We use \(\Cl E\) for the set of all cliques of \(E\).
\end{definition}

\begin{definition}
  If \(E\) and \(F\) are WCS, we define a WCS \(E\Limpl F\) by %
  \(\Web{E\Limpl F}=\Web E\times\Web F\) and %
  \((a,b)\Scoh{E\Limpl F}(a',b')\) if \(a\Scoh Ea'\Implies b\Scoh Fb'\).
\end{definition}

Obviously \(\Id_E=\{(a,a)\St a\in\Web E\}\in\Cl{E\Limpl E}\) and if %
\(s\in\Cl{E\Limpl F}\) and \(t\in\Cl{F\Limpl G}\) then the relational
composition $t\Compl s$ 
belongs to \(\Cl{E\Limpl G}\).

\begin{definition}
  The category \(\WCS\) has the WCS as objects, and %
  \(\WCS(E,F)=\Cl{E\Limpl F}\), with identities defined as diagonal
  relations and composition as relational composition.
\end{definition}

\begin{definition}
  The \emph{dual} of \(E\) is \(\Orth E\), the WCS whose web is
  \(\Web E\) and \(a\Scoh{\Orth E}a'\) if \(\neg(a\Scoh Ea')\)
  (and then we also write \(a\Sincoh Ea'\)).
\end{definition}

The WCS \(\Sone\) is defined by \(\Web\Sone=\{\Sonelem\}\) and
\(\Sonelem\Scoh\Sone\Sonelem\).
So that \(\Cl\Sone=\{\emptyset,\{\Sonelem\}\}\).
Then one define \(\Sbot=\Orth\Sone\) so that
\(\Web\Sbot=\{\Sonelem\}\), with \(\Sonelem\Sincoh\Sbot\Sonelem\) so
that \(\Cl\Sbot=\{\emptyset\}\).
Observe that weak coherence spaces are not \emph{isomix}, the objects 
$\Sbot$ and $\Sone$ are not isomorphic. This contrasts with 
the coherence spaces of \cref{sec:coh} and the 
non uniform coherence spaces of \cref{sec:nucs}.

If \(E_1\) and \(E_1\) are WCS, we set
\(E_1\Times E_2=\Orth{(E_1\Limpl{\Orth{E_2}})}\), that is %
\((a_1,a_2)\Scoh{E_1\Times E_2}(a'_1,a'_2)\) if %
\(a_i\Scoh{E_i}a'_i\) for \(i=1,2\).

\begin{lemma}
  If \((s_i\in\WCS(E_i,F_i))_{i=1,2}\) then %
  \(s_1\Times s_2\subseteq\Web{E_1\Times E_2}\times\Web{F_1\Times
    F_2}\) defined by %
  \(s_1\Times s_2
  =\{((a_1,a_2),(b_1,b_2))\St (a_i,b_i)\in s_i\text{ for }i=1,2\}\)
  belongs to \(\WCS(E_1\Times E_2,F_1\Times F_2)\).
  The operation \(\Times\) defined in that way is a functor %
  \(\WCS^2\to\WCS\).
\end{lemma}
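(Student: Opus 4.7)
The plan is to prove the two assertions separately: first that $s_1 \Times s_2$ is indeed a clique in $(E_1 \Times E_2) \Limpl (F_1 \Times F_2)$, and second that this assignment is functorial. Both facts are essentially verifications, so the proof will be a direct unfolding of the coherence relations.

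For the clique property, I would pick two arbitrary elements $((a_1,a_2),(b_1,b_2))$ and $((a_1',a_2'),(b_1',b_2'))$ of $s_1 \Times s_2$, and check that they are coherent in $\Web{(E_1 \Times E_2) \Limpl (F_1 \Times F_2)}$. Unfolding the definition of $\Scoh{E \Limpl F}$, this amounts to showing the implication $(a_1,a_2) \Scoh{E_1 \Times E_2} (a_1',a_2') \Implies (b_1,b_2) \Scoh{F_1 \Times F_2} (b_1',b_2')$. Assuming the premise, the definition of $\Scoh{E_1 \Times E_2}$ (which comes from $E_1 \Times E_2 = \Orth{(E_1 \Limpl \Orth{E_2})}$) gives $a_i \Scoh{E_i} a_i'$ for $i=1,2$. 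Since each $s_i$ is a clique in $E_i \Limpl F_i$, this in turn forces $b_i \Scoh{F_i} b_i'$, and then once again by definition of the coherence on $F_1 \Times F_2$, we conclude $(b_1,b_2) \Scoh{F_1 \Times F_2} (b_1',b_2')$.

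For functoriality, the two required equalities are that $\Id_{E_1} \Times \Id_{E_2} = \Id_{E_1 \Times E_2}$ and that $(t_1 \Times t_2) \Compl (s_1 \Times s_2) = (t_1 \Compl s_1) \Times (t_2 \Compl s_2)$ whenever $s_i \in \WCS(E_i,F_i)$ and $t_i \in \WCS(F_i,G_i)$. Both are routine set-theoretic verifications on relations, identical to the ones establishing the functoriality of the cartesian product in $\REL$ (see \cref{eq:smc-rel} and the surrounding discussion); in fact they hold purely at the level of the underlying relations and do not involve the coherence relations at all.

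I expect no real obstacle here: the lemma is essentially the observation that the usual cartesian product bifunctor on $\REL$ restricts to one on $\WCS$, and the only nontrivial content is that coherence is preserved, which follows immediately from the fact that both $\Scoh{E_1 \Times E_2}$ and $\Scoh{F_1 \Times F_2}$ are defined componentwise while $\Scoh{E_i \Limpl F_i}$ is defined as an implication between single-component coherences.
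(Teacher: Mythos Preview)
Your proposal is correct and matches the approach the paper has in mind: the paper simply states ``The proof is straightforward'' without spelling out any details, and your verification of the clique property (reducing the componentwise coherence on $E_1\Times E_2$ and $F_1\Times F_2$ to the implication defining $\Scoh{E_i\Limpl F_i}$) together with the purely relational check of functoriality is exactly the intended unfolding.
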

The proof is straightforward.
This bifunctor, together with its neutral element \(\Sone\), turns
\(\WCS\) into an symmetric monoidal category, by taking the same coherence isomorphisms
as the ones in $\REL$.
This symmetric monoidal category is closed, with internal hom of \(E\) and \(F\) the pair %
\((E\Limpl F,\Evlin)\) taken as in $\REL$, that is, %
\[ \Evlin=\{(((a,b),a),b)\St a\in\Web E\text{ and }b\in\Web F\} \, . \]
Let us check that \(\Evlin\in\WCS((E\Limpl F)\Times E,F)\), so let
\(a,a'\in\Web E\) and \(b,b'\in\Web F\) with %
\(((a,b),a)\Scoh{(E\Limpl F)\Times E}((a',b'),a')\).
This implies \(a\Scoh Ea'\) and also \((a,b)\Scoh{E\Limpl F}(a',b')\).
Therefore, we have \(b\Scoh Fb'\) as required.
The transpose of \(s\in\WCS(G\Times E,F)\) is the same as the one in $\REL$,%
\[ \Curlin(s)=\{(c,(a,b))\St ((c,a),b)\in s\}\in\WCS(G,E\Limpl F) .\]

It follows that the symmetric monoidal closed category \(\WCS\) is \Staraut{},
with \(\Sbot\) as dualizing object, because $\REL$ is.
The isomorphism \(\Curlin(\Evlin\Compl\Sym_{E\Limpl\Sbot,E})
\in\cL(E,(E\Limpl\Sbot)\Limpl\Sbot)\) 
can also be expressed more simply by observing that there is
a simple iso in %
\(\WCS(E\Limpl\Sbot,\Orth E)\), namely the relation %
\(\{((a,\Sonelem),a)\St a\in\Web E\}\) which is actually a bijection from %
\(\Web{E\Limpl\Sbot}\) to \(\Web{\Orth E}\).

The category \(\WCS\) is cartesian.
If \((E_i)_{i\in I}\) is a family of such objects, then we define %
\(\withFam E_i\)  as the WCS whose web is %
\(\Web{\withFam E_i}=\Union_{i\in I}\{i\}\times\Web{E_i}\) %
and whose coherence relation is defined by saying that %
\((i,a)\Scoh{\withFam E_i}(i',a')\) if %
\(i=i'\Implies a\Scoh{E_i}a'\).
Indeed, the projections in $\REL$ %
\[ (\Wproj_i=\{((i,a),a)\St i\in I\text{ and }a\in\Web{E_i}\})_{i\in I}\] %
satisfy \(\Wproj_i\in\WCS(\withFam[j] E_j,E_i)\).
The tupling is the same as the tupling in $\REL$: for any family %
\((s_i\in\WCS(F,E_i))_{i\in I}\) there is exactly one morphism %
\(\Tuple{s_i}_{i\in I}\in\WCS(F,\withFam E_i)\) such that %
\(\forall j\in I\ \Wproj_j\Compl\Tuple{s_i}_{i\in I}=s_j\), namely
\begin{align*}
  \Tuple{s_i}_{i\in I}=\{(b,(i,a))\St i\in I\text{ and }(b,a)\in s_i\}
\end{align*}
so the \(\Wproj_i\)'s are the projections of this cartesian product.

By \Starauty,
the category \(\WCS\) is also cocartesian with
coproduct %
\(\Bplus_{i\in I}E_i=\Orth{(\withFam \Orth{E_i})}\) whose web
is %
\(\Web{\Bplus_{i\in I}E_i}=\Union_{i\in I}\{i\}\times\Web{E_i}\) and
coherence is given by %
\((i,a)\Scoh{\Bplus_{j\in I}E_j}(i',a')\) if \(i=i'\) and
\(a\Scoh{E_i}a'\).
The corresponding injections are %
\((\Pin_i=\{(a,(i,a))\St i\in I\text{ and }a\in\Web{E_i}\}
\in\WCS(E_i,\Bplus_{j\in J}E_j))_{i\in I}\).

In the special case where \(I=\emptyset\), the product of the empty
family is the terminal object \(\Stop=(\emptyset,\emptyset)\) and the
coproduct of the empty family is the initial object
\(\Szero=(\emptyset,\emptyset)=\Stop\).

\begin{theorem}\label{th:wcs-lafont-cat}
  The symmetric monoidal category \(\WCS\) is a Lafont category.
\end{theorem}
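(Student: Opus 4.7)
The plan is to lift the Lafont structure of \(\REL\) (\cref{th:rel-lafont-cat}) to \(\WCS\) by enriching the free commutative comonoid \(\Mfin{\Web E}\) with an appropriate coherence relation. The forgetful functor \(\WCS\to\REL\) (identity on the underlying relations) is strict symmetric monoidal, so all the structure maps from \(\REL\) transfer, provided coherence is preserved throughout. Concretely, for each WCS \(E\), I would set \(\Web{\Oc E}=\Mfin{\Web E}\) and
\[
  m\Scoh{\Oc E}m'\iff \forall a\in\Msetsupp m,\ \forall a'\in\Msetsupp{m'},\ a\Scoh E a',
\]
and take the counit \(\Deru_E=\{(\Mset a,a)\}\), the weakening \(\Comonu_{\Oc E}=\{(\Msetempty,\Sonelem)\}\), and the contraction \(\Comonm_{\Oc E}=\{(m_1+m_2,(m_1,m_2))\}\) from the \(\REL\) case. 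Each of these is verified to be a \(\WCS\)-morphism by the observation that \(\Msetsupp{m_i}\subseteq\Msetsupp m\) whenever \(m=m_1+m_2\), so cross-coherence of \(\Msetsupp m\) and \(\Msetsupp{m'}\) restricts to cross-coherence of the corresponding sub-supports; the commutative comonoid axioms hold automatically, inherited from \(\REL\).

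For the universal property, let \((C,\Comonu_C,\Comonm_C)\) be a commutative comonoid in \(\WCS\) and \(f\in\WCS(C,E)\). The Lafont structure of \(\REL\) provides a unique \(f^{!}\in\REL(C,\Oc E)\) that is a comonoid morphism and satisfies \(\Deru_E\Compl f^{!}=f\); explicitly, \((c,\Mset{b_1,\dots,b_n})\in f^{!}\) iff there exists \((c_1,\dots,c_n)\in\Web C^n\) with \((c,(c_1,\dots,c_n))\) in the iterated comultiplication \(\Comonm^{(n)}_C\) and \((c_i,b_i)\in f\) for each \(i\) (the \(n=0\) case being \((c,\Sonelem)\in\Comonu_C\)). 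The critical step is to show that \(f^{!}\) preserves coherence: given \(c\Scoh C c'\) together with witnessing decompositions of \((c,\Mset{b_1,\dots,b_n})\) and \((c',\Mset{b'_1,\dots,b'_{n'}})\), one must derive \(b_i\Scoh E b'_j\) for all \(i,j\). Since \(f\) is coherence-preserving, this reduces to the \emph{cross-coherence lemma}: \(c_i\Scoh C c'_j\) for all \(i,j\). Uniqueness of \(f^{!}\) in \(\WCS\) is inherited from uniqueness in \(\REL\).

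The cross-coherence lemma is the main obstacle. I plan to prove it by induction on \(n+n'\), using commutativity of \(\Comonm_C\) to reduce to the case \(i=j=1\). The base case \(n=n'=1\) is immediate because \(\Comonm^{(1)}_C=\Id\). For the inductive step with \(n\geq 2\), I would factor \(\Comonm^{(n)}_C=(\Comonm_C\Times\Id^{n-2})\Compl\Comonm^{(n-1)}_C\) to produce an intermediate \(\hat c\) with \((c,(\hat c,c_3,\dots,c_n))\in\Comonm^{(n-1)}_C\) and \((\hat c,(c_1,c_2))\in\Comonm_C\); the induction hypothesis then gives \(\hat c\Scoh C c'_1\). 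Invoking the right counit axiom of the comonoid, namely \((\Id\Times\Comonu_C)\Compl\Comonm_C=\Rightu^{-1}\), yields some \(u\) with \((u,\Sonelem)\in\Comonu_C\) and \((c'_1,(c'_1,u))\in\Comonm_C\); applying the coherence-preservation of \(\Comonm_C\) to the coherent pair \(\hat c\Scoh C c'_1\) with decompositions \((c_1,c_2)\) and \((c'_1,u)\) then delivers \(c_1\Scoh C c'_1\), closing the induction.
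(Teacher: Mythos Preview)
Your proposal is correct, and in fact goes well beyond what the paper offers: the paper states this theorem without proof, merely describing the resulting free exponential (the comonad $(\Oc,\Deru,\Digg)$ and the Seely isomorphisms) and verifying that the structure maps are \(\WCS\)-morphisms, but never establishing the universal property of the free commutative comonoid. Your argument fills precisely this gap. The cross-coherence lemma you isolate---that components of iterated comultiplications of two coherent points of a commutative comonoid are pairwise coherent---is the genuine technical content, and your inductive proof via the counit axiom (to extract a decomposition \((c'_1,(c'_1,u))\in\Comonm_C\) against which to play the coherence-preservation of \(\Comonm_C\)) is sound. One small remark: in your reduction of the general case to \(i=j=1\), you should also note that the symmetry between \(n\) and \(n'\) lets you assume \(n\geq 2\) whenever \(n+n'\geq 3\), so the induction covers all cases.
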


Let us describe the free ressource modality. This modality is 
similar to the one of $\REL$.
The object $\oc E$
is defined as \(\Web{\Oc E}=\Mfin{\Web E}\) 
and \(m\Scoh{\Oc E}m'\) if \(a\Scoh Ea'\) for
all \(a\in m\) and \(a'\in m'\).
If \(s\in\WCS(E,F)\) then
\begin{align*}
  \Oc s=\{(\Mset{\List a1n},\Mset{\List b1n})
  \St n\in\Nat\text{ and }((a_i,b_i)\in s)_{i=1}^n\}
\end{align*}
and one checks easily that \(\Oc s\in\WCS(\Oc E,\Oc F)\):
let \((m,p),(m',p')\in\Oc s\), we must prove that %
\((m,p)\Scoh{\Oc E\Limpl\Oc F}(m',p')\), so assume that %
\(m\Scoh{\Oc E}m'\) and let us prove that \(p\Scoh{\Oc F}p'\).
Let \(b\in p\) and \(b'\in p'\).
There are \(a\in m\) and \(a'\in m'\) such that \((a,b),(a',b')\in s\).
We have \(a\Scoh Ea'\) and \((a,b)\Scoh{E\Limpl F}(a',b')\) %
and hence \(b\Scoh Fb'\).

The counit and comultiplication of the comonad are the same as in $\REL$,
\begin{align*}
  \Deru_E&=\{(\Mset a,a)\St a\in\Web E\}\\
  \Digg_E&=\{(m_1+\cdots+m_n,\Mset{\List m1n})
           \St n\in\Nat\text{ and }(m_i\in\Mfin{\Web E})_{i=1}^n\}\,.
\end{align*}
It is obvious that \(\Deru_E\in\cL(\Oc E,E)\), let us check that %
\(\Digg_E\in\cL(\Oc E,\Occ E)\) so let %
\((m,M),(m',M')\in\Digg_E\) and assume that \(m\Scoh{\Oc E}m'\), we
must prove that \(M\Scoh{\Oc E}M'\).
Let \(p\in M\) and \(p'\in M'\) so that \(m=p+m_1\) and \(m'=p'+m'_1\)
for some multisets \(m_1\) and \(m'_1\).
Since \(m\Scoh{\Oc E}m'\) we have \(p\Scoh{\Oc E}p'\).

The Seely isomorphisms are also the same as the ones of $\REL$, %
\(\Seelyz=\{(\Sonelem,\Msetempty)\}\) and %
\[ \Seelyt_{E_1,E_2} =\{((m_1,m_2),\Msetact 1{m_1}+\Msetact 2{m_2})\St
(m_1,m_2) \in\Web{\Oc E_1}\times\Web{\Oc E_1}\}\] %
where \(\Msetact i{\Mset{\List a1n}}=\Mset{(i,a_1),\dots,(i,a_n)}\).
We have \(\Seelyz\in\cL(\Sone,\Oc\Stop)\) because
\(\Msetempty\Scoh{\Oc\Stop}\Msetempty\), and \(\Seelyz\) is an iso
because \(\Sonelem\Scoh{\Sone}\Sonelem\).
The fact that the relation \(\Seelyt_{E_1,E_2}\) is a bijection is
obvious, and it is easy to check that it is an isomorphism in %
\(\WCS(\Oc{E_1}\Times\Oc{E_2},\Oc{(E_1\With E_2)})\).
Assume for instance that %
\((m_1,m_2)\Scoh{\Oc{E_1}\Times\Oc{E_2}}(m'_1,m'_2)\) and let us prove
that
\[ p=\Msetact 1{m_1}+\Msetact 2{m_2} \Scoh{\Oc{(E_1\With
    E_2)}}p'=\Msetact 1{m'_1}+\Msetact 2{m'_2}\] so let %
\((i,c)\in p\) and \((i',c')\in p'\), one must check that %
\((i,c)\Scoh{E_1\With E_2}(i',c')\), so assume that \(i=i'\).
Then, by definition of \(p\) and \(p'\) we have \(c\in m_i\) and
\(c'\in m'_i\) and hence \(c\Scoh{E_i}c'\).

The induced symmetric lax monoidal structure \((\Ocmonz,\Ocmont)\) coincides 
with the one of $\REL$.

\begin{remark}\label{rk:por-in-wcs}
  Although formally similar to Girard's coherence
  spaces~\cite{Girard87}, WCS have quite different properties and are
  closer to Scott semantics
  based on cpo's and continuous functions, see \cite{Scott76,Plotkin77},
  than to Berry stable semantics, see \cite{Berry78}.
  As an example, representing the type of booleans by %
  \(\Plus{\Sone}{\Sone}\), whose web is $\{\True, \False\}$ with 
  \(\True=(1,\Sonelem)\) and \(\False=(2,\Sonelem)\) and whose coherence 
  relation is given by $\True \Scoh{\Plus{\Sone}{\Sone}} \True$,
  $\False \Scoh{\Plus{\Sone}{\Sone}} \False$, and 
  $\True \Sincoh{\Plus{\Sone}{\Sone}} \False$,
  we can define a ``parallel or'' morphism
  \begin{align*}
    \mathsf{por}=\{
    ((\Mset\True,\Msetempty),\True),
    ((\Msetempty,\Mset\True),\True),
    ((\Mset\False,\Mset\False),\False)\}
    \in\WCS(
    \Oc{(\Plus{\Sone}{\Sone})}\Times\Oc{(\Plus{\Sone}{\Sone})},
    \Plus{\Sone}{\Sone}))
  \end{align*}
  which is a clique because \(\True\Scoh{\Plus{\Sone}{\Sone}}\True\).
  So WCS are compatible with this form of non-determinism (there is no
  deterministic implementation of this \(\mathsf{por}\) morphism), and
  nevertheless implement a non-trivial form of coherence since for
  instance
  \(\{\True,\False\}\notin\Cl{\Plus{\Sone}{\Sone}}
  =\{\emptyset,\{\True\},\{\False\}\}\).
\end{remark}

The category \(\WCS\) has zero morphisms: take
\(0_{E,F}=\emptyset\in\WCS(E,F)\). This morphism 
is also absorbing for the monoidal product.
The object \(\Dbimon=\withFam<\N> \Sone\) can be described as
follows (up to trivial iso): \(\Web\Dbimon=\Nat\) and
\(\forall i,j\in\Nat\ i\Scoh\Dbimon j\).
The injections \(\Win_i\in\WCS(\Sone,\Dbimon)\) are easy to describe:
\(\Win_i=\{(\Sonelem,i)\}\).

Notice that the resource modality is easily checked to be
finitary in the sense of \Cref{def:resource-finitary} so that we know
that \(\Dbimondeg\in\WCS(\Oc{\Sone},\Dbimon)\) is an isomorphism by
\Cref{th:dbimondeg-iso}.
This can also be checked directly:
\(\Web{\Oc\Sone}=\{k\Mset{\Sonelem}\St k\in\Nat\}\simeq\Nat\) and we
have \(k\Mset\Sonelem\Scoh{\Oc\Sone}k'\Mset\Sonelem\) for all
\(k,k'\in\Nat\) since \(\Sonelem\Scoh{\Sone}\Sonelem\).

\begin{theorem}
  \(\WCS\) is a representably $\Sigma$-additive category.
  A family $\family{s_a \in \WCS(E, F)}$ is summable 
  if $\bigcup_{a \in A} s_a \in \Cl{E \linarrow F}$, and then
  $\sum_{a \in A} s_a \defEq \bigcup_{a \in A} s_a$.
\end{theorem}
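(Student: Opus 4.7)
The strategy is to apply the characterization of \Cref{thm:representably-additive-characterization}: since $\WCS$ satisfies \ref{ax:D-defined} (the countable product $\Dbimon$ exists and $\WCS$ is $*$-autonomous), it suffices to exhibit a $\Sigma$-monoid structure on the hom-sets which is compatible with composition, the tensor, the cartesian product, and the closedness, and to check \ref{ax:comb-lin}.

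First I would verify that the proposed notion of summability genuinely defines a $\Sigma$-monoid on each $\WCS(E,F)$. A clique is, by definition, closed under taking subsets, so every singleton family is summable (giving \ref{ax:unary} with $\Stuple s=s$), and partition associativity \ref{ax:pa} reduces to the trivial associativity of unions of sets: $\bigcup_{a\in A}s_a$ is a clique iff each $\bigcup_{a\in A_i}s_a$ is a clique and their union (which is again $\bigcup_{a\in A}s_a$) is a clique. The empty union is $\emptyset$, which is the zero morphism in $\WCS$.

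Next I would verify the various compatibilities. Distributivity over composition, tensor, cartesian product and curryfication all reduce to the purely set-theoretic identities
\[
  \Big(\bigcup_a s_a\Big)\Compl t=\bigcup_a(s_a\Compl t),\ \
  \Big(\bigcup_a s_a\Big)\Times t=\bigcup_a(s_a\Times t),\ \
  \Tuple{\bigcup_a s^i_a}_i=\bigcup_a\Tuple{s^i_a}_i,\ \
  \Curlin\Big(\bigcup_a s_a\Big)=\bigcup_a\Curlin(s_a),
\]
together with the fact that unions of cliques are cliques whenever the pointwise result belongs to the relevant coherence space. In each case, the right-hand-side clique is a subset of the left-hand-side one and vice versa, so summability on one side entails summability on the other. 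Moreover $\emptyset$ is absorbing for $\Times$ because the tensor is defined coordinatewise on relations.

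The only nontrivial condition is \ref{ax:comb-lin}. Assume $\Vect s=(s_i\in\WCS(X,Y))_{i\in\Nat}$ is summable, so $\bigcup_i s_i\in\Cl{X\Limpl Y}$. For each $i$, one has $\Rightu_Y\Compl(s_i\Times\Wproj_i)=\{((a,i),b)\St(a,b)\in s_i\}$, and the union of these relations is
\[
  t=\{((a,i),b)\St i\in\Nat,\ (a,b)\in s_i\}.
\]
I claim $t\in\Cl{(X\Times\Dbimon)\Limpl Y}$. Take $((a,i),b),((a',i'),b')\in t$ with $(a,i)\Scoh{X\Times\Dbimon}(a',i')$; this forces $a\Scoh X a'$ (and automatically $i\Scoh\Dbimon i'$, since $\Scoh\Dbimon$ is the total relation on $\Nat$). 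Since $(a,b),(a',b')\in\bigcup_j s_j$ which is a clique, we deduce $b\Scoh Y b'$, as required. This is the key step, and is the place where the specific shape of the coherence relation on $\Dbimon$---namely its being total---is essential. All conditions of \Cref{thm:representably-additive-characterization} being met, $\WCS$ is representably $\Sigma$-additive, and the fact that the sum on morphisms is given by union follows from \ref{ax:RS-sum} and the construction of $\Stuplet{\Vect s}$.
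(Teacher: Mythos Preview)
Your proof is correct and takes a genuinely different route from the paper's. The paper verifies the four axioms \ref{ax:RS-mon}, \ref{ax:RS-epi}, \ref{ax:RS-sum}, \ref{ax:RS-witness} of \Cref{def:RS} directly: after checking that unions give a $\Sigma$-monoid, it shows that the $\Win_i$'s are jointly epic (using closedness via \Cref{rk:elementary-closed-win-epicity}), then exhibits the witness $t=\{((a,i),b)\St(a,b)\in s_i\}$ explicitly to establish \ref{ax:RS-sum}, and finally proves \ref{ax:RS-witness} by building a two-parameter witness $u\in\WCS(E\Times\Dbimon\Times\Dbimon,F)$.

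You instead go through \Cref{thm:representably-additive-characterization}, reducing the problem to checking that $\WCS$ is a cartesian symmetric monoidal closed $\Sigma$-additive category plus \ref{ax:comb-lin}. The key clique computation you perform for \ref{ax:comb-lin}—showing that $t=\{((a,i),b)\St(a,b)\in s_i\}$ is a clique in $(X\Times\Dbimon)\Limpl Y$ whenever $\bigcup_i s_i$ is a clique—is exactly the same as the paper's computation for \ref{ax:RS-sum}. Your approach trades the explicit verification of \ref{ax:RS-witness} (which in the paper requires a second, slightly more involved clique argument) for a batch of routine compatibility checks, all of which indeed reduce to set-theoretic identities on relations. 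The paper's route is more self-contained; yours better illustrates the payoff of having proved the characterization theorem.
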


\begin{proof} It is very direct to check that 
  the sum described above is a $\Sigma$-monoid, using the fact that the 
  subset of a clique is also a clique. 
  Then $\WCS$ satisfies \ref{ax:RS-mon} because the $0$ of the $\Sigma$-additive 
  structure described above is exactly the zero-morphism $0 = \emptyset$. 
  Since \(\WCS\) is a symmetric monoidal closed category, 
  \ref{ax:RS-epi} amounts by
  \Cref{rk:elementary-closed-win-epicity} to saying that the
  \(\Win_i\)'s are jointly epic which is obvious since, given
  \(s\in\WCS(\Dbimon,E)\) we clearly have
  \(s=\{(i,a)\St s\Compl\Win_i\not=\emptyset\}\).

  We show that $\WCS$ satisfies \ref{ax:RS-sum}.
  Assume that \(\Vect s = \sequence{s_i \in \WCS(E, F)}\) is such that there exists %
  \(t\in\WCS(E\Times\Dbimon,F)\) such that %
  \[ s_i= t \compl (E \tensor \Sprojl_i) \compl \tensorUnitR^{-1}
  = \{(a,b)\in\Web{E\Limpl F}\St ((a,i),b)\in t\} \, . \]
  If this is the case then, given \((a,b)\in s_i\) and
  \((a',b')\in s_{i'}\), if \(a\Scoh Ea'\) then
  \((a,i)\Scoh{E\Times\Dbimon}(a',i')\) and hence \(b\Scoh Fb'\).
  It follows that \(\Union_{i\in\Nat}s_i\in\WCS(E,F)\) and we have 
  \[ ( t\compl (E \tensor \Wdiag) \compl \tensorUnitR^{-1})
  =\{(a,b)\St\exists i\ ((a,i),b)\in
  t\}=\Union_{i\in\Nat}s_i . \]
  Conversely, if \(\Union_{i\in\Nat}s_i\in\WCS(E,F)\) then %
  \[ t=\{((a,i),b)\St i\in\Nat\text{ and }(a,b)\in s_i\}
  \in\WCS(E\Times\Dbimon,F) . \] 
  Indeed, if $((a, i), b), ((a', i'), b') \in t$ and 
  $(a, i) \Scoh{E \tensor \Dbimon} (a', i')$ then $a \Scoh E a'$ and 
  $(a, b), (a', b') \in \Union_{i \in \Nat} s_i$ so $b \Scoh F b'$.
  And then, we have 
  \(s_i=t \compl (E \tensor \Sprojl_i) \compl \tensorUnitR^{-1} \) for each
  \(i\in\Nat\).
  Thus, $\WCS$ satisfies \ref{ax:RS-sum}. 
  
  Finally, we show that $\WCS$ satisfies \ref{ax:RS-witness}.
  By \cref{prop:reindexing}, it suffices to show that 
  \ref{ax:RS-witness} holds 
  on $\N$-indexed families.
  Let $\vect t = \sequence{t_i \in \WCS(E \tensor \Dbimon, F)}$ 
  be a $\N$-indexed family such that the family %
  \(\family<\N^2>[(i, j)]{t_i \Compl(X \tensor \Sprojl_j) \compl \tensorUnitR^{-1}}\) 
  is summable, that is
  \begin{align*}
    t=\Union_{i, j \in\Nat}t_i \Compl(E \tensor \Sprojl_j) \compl \tensorUnitR^{-1}
     =\{(a,b)\St\exists
    i, j\in\Nat\ ((a, j),b) \in t_i\}\in\WCS(E, F)\,,
  \end{align*}
  we contend that \(\Vect t\) is summable.
  This amounts to proving that \(u\in\WCS(E \Times \Dbimon\Times\Dbimon, F)\)
  where %
  \(u=\{((a, j,i),b)\St ((a, j),b)\in t_i\}\) since then we will have %
  \[ t_i=u\Compl(E \Times \Dbimon\Times\Win_i)\Compl\Inv\Rightu \]
  for all $i \in \N$.
  If \(((a, i,j), b),((a', i',j'),b')\in u\) 
  and $(a, i, j) \Scoh{E \tensor \Dbimon \tensor \Dbimon} 
  (a', i', j')$ then $a \Scoh E a'$, and we have \(((a, j),b)\in t_i\) and
  \(((a', j'),b')\in t_{i'}\) and hence \((a, b),(a', b')\in t\)
  so that \(b\Scoh Fb'\).
  It follows that \(u\in\WCS(E \tensor \Dbimon\Times\Dbimon, F)\).
\end{proof}

We describe the associated functor \(\Sfun=(\Dbimon\Limpl\_):\WCS\to\WCS\).
First, \(\Sfun E\) has \(\Nat\times\Web E\) as web, and
\((i,a)\Scoh{\Sfun E}(i',a')\) iff \(a\Scoh Ea'\).
It follows that we have the following order isomorphism
\begin{align*}
  \Cl{\Sfun E}\simeq
  \{\Vect x\in\prod_{i\in\Nat}\Cl{E_i}\St\Union_{i\in\Nat}x_i\in\Cl E\} 
\end{align*}
and we have \(\Sproj_i=\{((i,a),a)\St a\in\Web E\}\) and
\(\Ssum=\{((i,a),a)\St i\in\Nat\text{ and }a\in\Web E\}\).
Given \(s\in\WCS(E,F)\), we have %
\(\Sfun(s)=\{((i,a),(i,b))\St i\in\Nat\text{ and }(a,b)\in s\}\).

\begin{remark}
  The action of $\S$ on morphisms is the same as the functor 
  $\S$ on $\REL$. The key difference lies in the action on 
  object, which is neither the product, $\S E \neq  \withFam<\N> E$, nor 
  the coproduct, $\S E \neq \Bplus_{i \in \N} E$. 
  Recall that the product $\withFam<\N> E$ and the coproduct $\Bplus_{i \in \N} E$
  have the same web, but not the same coherence relation, we have 
  $(i, a) \Scoh{\withFam<\N> E} (j, b)$ if $i = j \imply a \Scoh E b$, and 
  $(i, a) \Scoh{\Bplus_{i \in \N} E} (j, b)$ if $i = j$ and $a \Scoh E b$, 
  Observe that $\Sfun E$ sits between the coproduct and the product, in the sense that 
  \[ (i, a) \Scoh{\Bplus_{i \in \N} E} (j, b) \imply (i,a)\Scoh{\Sfun E}(i',a') 
  \imply (i, a) \Scoh{\withFam<\N> E} (j, b) \, .\] 
  so the graphs of the identity maps provide a morphism in
  $\WCS(\Bplus_{i \in \N} E, \S E)$ and a morphism in $\WCS(\S E, \withFam<\N> E)$
\end{remark}

The functor \(\Sfun\) has a bimonad structure induced by the
bimonoid structure of \(\Dbimon\) as shown in \Cref{fig:bimonoid-to-bimonad}.
Both the bimonoid $\Dbimon$ and the bimonad $\S$ are defined as in $\REL$, 
because the cartesian symmetric monoidal structure 
of $\WCS$ and $\REL$ are the same, and the sums in $\WCS$ are equal (when defined)
to the sum in $\REL$.

By \Cref{th:Lafont-elem-summable-Taylor}, \(\Dbimon\) has an analytic coalgebra, 
that is, a \(\Oc\)-coalgebra structure
\(\Dbimonca\in\WCS(\Dbimon,\Oc\Dbimon)\).
This analytic coalgebra coincides with
the analytic coalgebra of $\REL$.
As seen in \cref{cor:Taylor-structure-induces-differential},
this coalgebra structure induces the distributive law %
\(\Sdl_E=\Curlin(\Sdl'_E)\in\WCS(\Oc\Sfun E,\Sfun\Oc E)\), which in 
turn induces the functor \(\Tayfun:\Kloc\WCS\to\Kloc\WCS\) 
that extends \(\Sfun\) to the coKleisli category
\(\Kloc\WCS\). Again, they are defined as in $\REL$.
So we just proved the following result.

\begin{theorem} $\WCS$ is a Lafont representably analytic category. The
  induced analytic structure is the same as in $\REL$.
\end{theorem}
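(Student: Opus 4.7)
The plan is to assemble the result from the structural pieces already established in the excerpt. We know from Theorem \ref{th:wcs-lafont-cat} that $\WCS$ is a Lafont category, and the preceding theorem in the excerpt established that $\WCS$ is a representably $\Sigma$-additive category (with summability given by unions of cliques forming a clique of the internal hom). Combining these two facts, Theorem \ref{th:Lafont-elem-summable-Taylor} immediately yields an analytic coalgebra $\Dbimonca\in\WCS(\Dbimon,\Oc\Dbimon)$, and hence a representably analytic structure on $\WCS$.

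For the second assertion, the strategy is to observe that at the level of underlying relations everything has already been computed to coincide with $\REL$. First I would note that the bimonoid $(\Dbimon,\Dbimonu,\Dbimonm,\Dbimoncu,\Dbimoncm)$ in $\WCS$ is defined from the cartesian and symmetric monoidal structure, which uses the same structural relations as in $\REL$ (only the coherence relation on $\Dbimon$ is added, but it is total between the various $i\in\Nat$, so no clique condition is ever violated); by \cref{th:dbimon-comon-contr} this bimonoid structure is in any case uniquely determined. Next, the analytic coalgebra $\Dbimonca$ constructed via the Lafont property is the canonical $\oc$-coalgebra map on the comonoid $\Dbimon$, which as a relation is the same as in $\REL$, namely $\{(n,\Mset{\List i1k})\St i_1+\cdots+i_k=n\}$; one checks it remains a clique of $\Dbimon\Limpl\Oc\Dbimon$ because all pairs in $\Web\Dbimon\times\Web{\Oc\Dbimon}$ are coherent.

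Once $\Dbimonca$ is identified with the $\REL$-coalgebra, the induced distributive law $\Sdl_E=\Curlin(\Oc\Evlin\Compl\Ocmont\Compl(\Oc(\Dbimon\Limpl E)\Times\Dbimonca))$ and the Taylor functor $\Tayfun$ on $\Kloc\WCS$ are given by the very same relations as in $\REL$: they involve only $\Dbimonca$, $\Ocmont$, and $\Evlin$, all of which are relation-theoretically identical. The only genuine verification is that these relations, now viewed as morphisms in $\WCS$, are indeed cliques of the appropriate internal homs; this is routine (one shows that two tuples producing coherent inputs yield coherent outputs, using that coherence in $\Oc E$ is defined pointwise through the multisets).

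The main obstacle, if any, is purely bookkeeping: making explicit that the summability condition in $\WCS$ (joint coherence of the union) is preserved under the $\Oc$-functor and under the Seely natural transformations, so that all the morphisms defined as relations in $\REL$ lift without modification to $\WCS$. Since the coherence on $\Oc E$ is defined elementwise and the maps involved are graphs of functions or of the canonical enumerations, these verifications reduce to one-line checks of the form ``coherent inputs of shape \textit{X} yield coherent outputs of shape \textit{Y}''. No combinatorics beyond what has already been done for $\REL$ is needed.
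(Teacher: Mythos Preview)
Your proposal is correct and follows essentially the same approach as the paper: the paper does not give a separate proof block for this theorem but simply writes ``So we just proved the following result'' after the running text has assembled exactly the pieces you list (Lafont from Theorem~\ref{th:wcs-lafont-cat}, representably $\Sigma$-additive from the preceding theorem, then Theorem~\ref{th:Lafont-elem-summable-Taylor}, and the observation that the bimonoid structure, the coalgebra $\Dbimonca$, and the induced $\Sdl$ and $\Tayfun$ coincide as relations with those of $\REL$). Your extra remarks about checking that these relations are cliques in $\WCS$ are more explicit than the paper, but the argument is the same.
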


\subsection{Girard's coherence spaces}
\label{sec:coh}
Just as in~\cite{Ehrhard23-cohdiff}, one can show that the usual
Girard's coherence spaces (CS) have an analytic coalgebra.
Remember that such a coherence space is a pair
\(E=(\Web E,\mathord{\Coh E})\) where \(\Web E\) is a set (the web)
and \(\Coh E\) is a binary, reflexive and symmetric relation on
\(\Web E\).
A clique of \(E\) is a subset \(x\) of \(\Web E\) such that
\(\forall a,a'\in x\ a\Coh Ea'\).
Given CS \(E\) and \(F\), one defines a CS \(E\Limpl F\) by
\(\Web{E\Limpl F}\) and \((a,b)\Coh{E\Limpl F}(a',b')\) if
\(a\Coh Ea'\Implies(b\Coh Fb'\text{ and }b=b'\Implies a=a')\), and the
category \(\COH\) has CS as objects, and \(\COH(E,F)=\Cl{E\Limpl F}\),
identity morphisms and composition being defined as in \(\REL\).

The category \(\COH\) is doubtlessly 
one of the most popular model of LL, and is a Lafont category with \(\Oc E\) defined as follows:
\(\Web{\Oc E}=\{\Mset{\List a1n}
\St n\in\Nat\text{ and }\{\List a1n\}\in\Cl E\}\)\footnote{There 
is also an exponential whose webs are sets instead of multisets, but 
this one is not free.}, 
see~\cite{MelliesTabreauTasson09}.
This is a major difference between all the other models presented in
this section, where \(\Web{\Oc E}=\Mfin{\Web E}\): one often says that
the CS exponential is \emph{uniform}, whereas the exponentials of the
other models are \emph{non-uniform}.
As far as we know, it is not possible to equip the category \(\COH\)
with a non-uniform exponential.
One has also to be careful with the definition of the action of this
functor on morphisms: given \(s\in\COH(E,F)\), one takes
\(\Oc s=\{(\Mset{\List a1k},\Mset{\List b1k})\St k\in\Nat\text{,
}\{\List a1k\}\in\Cl E\text{ and }(a_i,b_i)\in s\text{ for all }i\}\).

The object of degrees \(\Dbimon\) satisfies \(\Web\Dbimon=\Nat\) and
\(\forall i,j\in\Nat\ i\Coh\Dbimon j\).
Therefore, \(\Sfun E\) satisfies \(\Web{\Sfun E}=\Nat\times\Web E\) %
with \((i,a)\Coh{\Sfun E}(i',a')\) if \(a\Coh Ea'\) and
\(i\not=i'\Implies a\not=a'\).
It follows that %
\[ \Cl{\Sfun E}\Isom\{\Vect x\in\Cl E^\Nat\St
\Union_{i\in\Nat}x_i\in\Cl E\text{ and }i\not=j\Implies x_i\cap
x_j=\emptyset\} . \]

\begin{theorem}\label{th:coh-lafont-elem-sumable}
  \(\COH\) is a Lafont representably \(\Sigma\)-summable category, 
  thus it is a representably analytic category.
  A family $\family{s_a \in \COH(E, F)}$ is summable if the $s_a$ are pairwise disjoint, 
  and if their union is a clique. Their sum is given by the union.
\end{theorem}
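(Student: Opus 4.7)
The plan is to adapt the proof scheme used for weak coherence spaces, paying careful attention to the extra injectivity clause $b=b'\Rightarrow a=a'$ that distinguishes $\COH$ from $\WCS$ in the definition of the coherence on $E\Limpl F$. The Lafont property of $\COH$ with the uniform multiset exponential is already known from \cite{MelliesTabreauTasson09}, so once representable $\Sigma$-additivity is established the final conclusion follows directly from \Cref{th:Lafont-elem-summable-Taylor}. Thus the task reduces to verifying \ref{ax:RS-mon}, \ref{ax:RS-epi}, \ref{ax:RS-sum} and \ref{ax:RS-witness} with the announced $\Sigma$-monoid structure given by disjoint union.

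First I would dispose of the easy axioms. For \ref{ax:RS-mon}, the empty relation is both the zero morphism and absorbing for every operation, the unary axiom is immediate, and the partition associativity \ref{ax:pa} is just set-theoretic associativity of union, restricted to families where both the disjointness and the clique condition on the union hold. The candidate sum $\sum_{a\in A} s_a = \bigcup_{a\in A} s_a$ is automatically an element of $\COH(E,F)$ under the hypothesis. For \ref{ax:RS-epi}, since $\COH$ is closed, by \Cref{rk:elementary-closed-win-epicity} it suffices to show that the $\Win_i$ are jointly epic, which is immediate from $s = \{(i,a) : (i,a) \in s\}$ viewed as $\bigcup_i (s\Compl\Win_i)$ for any $s\in\COH(\Dbimon,E)$.

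The heart of the proof is \ref{ax:RS-sum}. Given $\Vect s = \sequence{s_i\in\COH(E,F)}$, I take as candidate witness $t=\{((a,i),b) : (a,b)\in s_i\}\subseteq\Web{(E\Times\Dbimon)\Limpl F}$ and show that $t\in\COH(E\Times\Dbimon,F)$ if and only if the $s_i$'s are pairwise disjoint and their union is a clique. This is the delicate step, where both conditions are used in distinct ways. Given $((a,i),b),((a',i'),b')\in t$ with $(a,i)\Coh{E\Times\Dbimon}(a',i')$, which reduces to $a\Coh E a'$, the clique condition on $\bigcup_i s_i$ yields $b\Coh F b'$, and moreover if $b=b'$ it yields $(a,b)=(a',b')$; pairwise disjointness of the $s_i$'s then forces $i=i'$, which gives the required injectivity $(a,i)=(a',i')$ in $E\Times\Dbimon$. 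Conversely, if $t$ is a clique then restricting to each projection yields that $\bigcup_i s_i$ is a clique, and applying the injectivity clause of $t$ to the same pair $(a,b)=(a',b')\in s_i\cap s_{i'}$ yields $i=i'$, hence pairwise disjointness.

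Axiom \ref{ax:RS-witness} is then proved by essentially the same reindexing argument as in $\WCS$: given $\sequence{t_i\in\COH(E\Times\Dbimon,F)}$ whose projections $(t_i\Compl(E\Times\Win_j)\Compl\Inv\Rightu)_{i,j}$ have summable union, I form the double-indexed candidate $u=\{((a,j,i),b) : ((a,j),b)\in t_i\}$ and reverify the clique and injectivity conditions of $(E\Times\Dbimon\Times\Dbimon)\Limpl F$, again using the summability of the projection family to enforce coherence of outputs and using disjointness to recover the extra equality of indices. The main obstacle throughout is that the coherence on $E\Limpl F$ combines an implication ($a\Coh a'\Rightarrow b\Coh b'$) \emph{and} an injectivity clause, so every verification splits into two sub-arguments appealing separately to the clique and disjointness hypotheses; once this bookkeeping is done, \Cref{th:Lafont-elem-summable-Taylor} upgrades representable $\Sigma$-additivity of the Lafont category $\COH$ to a representably analytic structure, whose induced $\Sfun$, bimonoid $\Dbimon$, and coalgebra $\Dbimonca$ coincide on their underlying relations with those of $\REL$, as expected.
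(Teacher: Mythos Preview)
Your proposal is correct and follows exactly the approach the paper indicates: the paper's proof is the single sentence ``The proof is similar to the proof in \cref{sec:wcs} for $\WCS$'', and you have faithfully carried out that adaptation, correctly identifying that the extra injectivity clause $b=b'\Rightarrow a=a'$ in the coherence on $E\Limpl F$ is precisely what forces pairwise disjointness of the $s_i$'s (and conversely is recovered from it).
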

\begin{proof} The proof is similar to the proof in \cref{sec:wcs} for $\WCS$.
\end{proof}

The canonical analytic coalgebra is
\[ \Dbimonca=\{(n,\Mset{\List i1k})\in\Nat\times\Mfin\Nat
\St k\in\Nat\text{ and }i_1+\cdots+i_k=n\} _, .\]
This is the same as the one in $\REL$ and $\WCS$, because 
$\Web{\Oc \Dbimon} = \Mfin{\Dbimon}$ (this is a consequence of the fact that 
$i \Coh \Dbimon j$ for all $i, j \in \Web{\Dbimon}$).
This is not the case of the induced Taylor functor 
\(\Tayfun:\Kloc\COH\to\Kloc\COH\) that maps a CS
\(E\) to \(\Sfun E\), and if \(s\in\Kloc\COH(E,F)\) then
\(\Tayfun(s)\in\Kloc\COH(\Sfun E,\Sfun F)\) is given by
\begin{multline*}
  \Tayfun(s)
  =\{(\Mset{(i_1,a_1),\dots,(i_k,a_k)},(n,b))
  \St k\in\Nat
  \text{, }\\
  \{(i_1,a_1),\dots,(i_k,a_k)\}\in\Cl{\Sfun E}
  \text{, }i_1+\cdots+i_k=n
  \text{ and }(\Mset{\List a1k},b)\in s\}\,.
\end{multline*}

So if \(s=\{(\Mset{a,a},b)\}\) is a simple ``quadratic'' morphism, for
having \(\{(i_1,a),(i_2,a)\}\in\Sfun E\), we need \(i_1=i_2\).
It follows that
\begin{align*}
  \Tayfun(s)=
  \{(\Mset{(i,a),(i,a)},(2i,b))\St i\in\Nat\}
\end{align*}
wherease in $\WCS$ and $\REL$ we had
\(\Tayfun(s)
=\{(\Mset{(i_1,a),(i_2,a)},(i_1+i_2,b))\St i_1,i_2\in\Nat\}\).
The precise meaning of this difference between the actions of the
\(\Tayfun\) functor in the uniform setting of coherence spaces and in
the non-uniform one of WCS has still to be fully understood.

\subsection{Nonuniform coherence spaces}

\label{sec:nucs}

Formally, nonuniform coherence spaces (NUCS) can be considered as a
refinement of WCS, but they have quite different properties, being
much closer to Girard's coherence spaces and to the stable semantics.
In particular the relation \(\mathsf{por}\) of \Cref{rk:por-in-wcs} is
rejected by NUCS.
We refer to~\cite{Ehrhard23-cohdiff}, Section~6.1 for a detailed presentation
of NUCS, we just recall the basic definitions.
\begin{definition} (\cite{BucciarelliEhrhard01})
  A non-uniform coherence space (NUCS) is a tuple \(E=(\Web E,\Scoh E,\Sincoh E)\) where
  \begin{enumerate}
  \item \(\Web E\) is a set called the web of \(E\);
  \item \(\Scoh E\) (strict coherence) and
  \(\Sincoh E\) (strict incoherence) are disjoint binary symmetric
  relations on \(\Web E\).
  \end{enumerate}
  The relation
  \(\mathord{\Neu E} =\Web E^2\setminus(\mathord{\Scoh
    E}\cup\mathord{\Sincoh E})\) (which is also symmetric) is called
  neutrality and the large coherence and incoherence relations are
  defined as %
  \(\mathord{\Coh E}=\mathord{\Scoh E}\cup\mathord{\Neu E}\) and
  \(\mathord{\Incoh E}=\mathord{\Sincoh E}\cup\mathord{\Neu E}\).
  A clique of a NUCS \(E\) is a subset \(x\) of \(\Web E\) such that %
  \(\forall a,a'\in x\ a\Coh Ea'\), and we use \(\Cl E\) for the set of
  all cliques of \(E\).

  The dual of a NUCS \(E\) is %
  \(\Orth E=(\Web E,\mathord{\Sincoh E},\mathord{\Scoh E})\) and one
  defines \(E\Limpl F\) by stipulating that
  \(\Web{E\Limpl F}=\Web E\times\Web F\) and by providing the large
  coherence relation and the neutrality: %
  \((a,b)\Neu{E\Limpl F}(a',b')\) if \(a\Neu Ea'\) and \(b\Neu Fb'\), %
  and \((a,b)\Coh{E\Limpl F}(a',b')\) if
  \begin{align*}
    a\Coh Ea'\Implies (b\Coh Fb' \text{ and }b\Neu Eb'\Implies a\Neu Ea')\,.
  \end{align*}
\end{definition}

Then the category \(\NUCS\) has the NUCS as objects and %
\(\NUCS(E,F)=\Cl{E\Limpl F}\), identity morphisms and composition
being defined as in the category \(\REL\).
The definition of the symmetric monoidal closed structure of \(\NUCS\) is completely similar
to that of \(\REL\) as well as the proof that the category \(\NUCS\)
(with dualizing object
\(\Sbot= \Sone= (\Sonelem,\emptyset,\emptyset)\)) is \Staraut{} and
cartesian.
Notice here that there is an important difference between \(\NUCS\) and
\(\WCS\): $\NUCS$ is \emph{isomix} (see \cite{Cockett97}), meaning that $\Sone \simeq \Sbot$,
whereas $\WCS$ is not.
Explicitly, the cartesian product in $\NUCS$ is given by the web
$\Web{\withFam E_i} = \Union_{i \in I} \{i\} \times \Web{E_i}$,
the strict coherence 
$(i, a) \Scoh{\withFam X_i} (i', a')$ if $i = i' \Rightarrow a \Scoh{X_i} a'$ and 
the neutrality
$(i, a) \Neu{\withFam X_i} (i', a')$ if $i = i'$ and $a \Neu{X_i} a'$.

As shown in~\cite{Boudes11}, the symetric monoidal category \(\NUCS\) is a Lafont category, the
induced resource modality is %
\((\Ocb,\Deru,\Digg,\Seelyz,\Seelyt)\)
where %
\(\Web{\Ocb E}=\Mfin{\Web E}\) and %
\(m\Coh{\Ocb E}m'\) if %
\(\forall a\in m\,\forall a'\in m'\ a\Coh Ea'\) and %
\(m\Neu{\Ocb E}m'\) if \(m\Coh Em'\) and %
\(m=\Mset{\List a1n}\) and \(m'=\Mset{\List{a'}1n}\) with %
\(a_i\Neu Ea'_i\) for \(i=1,\dots,n\). The morphisms
\(\Deru\), \(\Digg\), \(\Seelyz\) and \(\Seelyt\) are defined as in
$\REL$ and \(\WCS\).

\begin{theorem}\label{th:nucs-lafont-elem-sumable}
  \(\NUCS\) is a Lafont representably \(\Sigma\)-additive category, 
  thus it is a representably analytic category.
  A family $\family<I>[i]{s_i \in \NUCS(E, F)}$ is summable if 
  their union is a clique, and if for all $i \neq j$, if
   $(a, b) \in s_i$ and $(a', b') \in s_j$, then 
   $(a, b) \Scoh{E \linarrow F} (a', b')$.
   The sum is defined as the union.
\end{theorem}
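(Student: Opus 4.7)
The plan is to mirror the proof for $\WCS$ in \Cref{th:wcs-lafont-elem-sumable}, paying attention to the extra distinction in $\NUCS$ between strict coherence and neutrality. Since $\NUCS$ is already a Lafont category, by \Cref{th:Lafont-elem-summable-Taylor} it suffices to verify the four axioms \ref{ax:RS-mon}, \ref{ax:RS-epi}, \ref{ax:RS-sum}, \ref{ax:RS-witness} for the candidate sum described in the statement. First, I would unfold $\Dbimon = \withFam<\N>\Sone$: its web is $\N$, with $i \Scoh{\Dbimon} j$ iff $i \neq j$ and $i \Neu{\Dbimon} i$, so $\Dbimon$ carries no strict incoherence. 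The injections $\Win_i = \{(\Sonelem, i)\}$ match those of $\WCS$, joint epicity \ref{ax:RS-epi} follows by the same direct argument, and checking that union defines a $\Sigma$-monoid with zero $\emptyset$ (axiom \ref{ax:RS-mon}) is routine since subcliques of cliques are cliques and the cross-index strict-coherence condition is inherited by subfamilies.

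The main content is \ref{ax:RS-sum}. Given a family $\sequence{s_i}$, the candidate witness is $t = \{((a,i), b) \St (a,b) \in s_i\}$, and I must verify $t \in \NUCS(E \tensor \Dbimon, F)$. The coherence check on two elements $((a,i),b), ((a',i'),b')$ of $t$ splits by whether the two index coordinates agree: for $i = i'$ the condition reduces to $\Union s_i$ being a clique of $E \Limpl F$; for $i \neq i'$, unfolding the coherence of $E \tensor \Dbimon$ (using $i \Scoh{\Dbimon} i'$ together with the failure of $i \Neu{\Dbimon} i'$) and of $\Limpl$ shows the condition collapses precisely to $(a,b) \Scoh{E \Limpl F} (a',b')$, matching the summability notion in the statement. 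A useful supporting lemma that makes this derivation clean is the characterization $(a,b) \Scoh{E \Limpl F} (a',b')$ iff $a \Coh{E} a' \Rightarrow b \Scoh{F} b'$, proved by a three-case split on the coherence of $a,a'$ in $E$.

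For \ref{ax:RS-witness} I would adapt the $\WCS$ proof verbatim: given $\sequence{t_i}$ such that $\family<\N^2>[(i,j)]{t_i \compl (E \tensor \Sprojl_j) \compl \tensorUnitR^{-1}}$ is summable, the morphism $u = \{((a,j,i), b) \St ((a,j), b) \in t_i\}$ lies in $\NUCS(E \tensor \Dbimon \tensor \Dbimon, F)$ and witnesses summability of $\sequence{t_i}$; the clique check reuses the same case analysis with two $\Dbimon$ factors. The main obstacle I anticipate is the bookkeeping of strict coherence versus neutrality in the $i \neq i'$ case of \ref{ax:RS-sum}, specifically isolating and applying the characterization of $\Scoh{E \Limpl F}$ above; once this lemma is available the remainder of the verification is mechanical, and the conclusion that $\NUCS$ is a representably analytic category follows from \Cref{th:Lafont-elem-summable-Taylor}.
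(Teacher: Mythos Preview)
Your proposal is correct and takes essentially the same approach as the paper, which simply states that the proof is similar to the one for $\WCS$ in \Cref{sec:wcs}. You have in fact filled in more detail than the paper provides, correctly identifying the key case split on $i = i'$ versus $i \neq i'$ in the verification of \ref{ax:RS-sum} and the characterization of strict coherence in $E \Limpl F$ that makes the $i \neq i'$ case collapse to the summability condition in the statement.
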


\begin{proof} The proof is similar to the proof in \cref{sec:wcs} for $\WCS$.
\end{proof}
The object $\Dbimon$ has web $\Web{\Dbimon} = \N$, coherence 
$i \Coh{\Dbimon} j$ for all $i, j \in \N$, and neutrality $i \Neu{\Dbimon} j$ 
if $i = j$.
The induced functor \(\Sfun:\NUCS\to\NUCS\) is such that %
\(\Web{\Sfun E}=\Nat\times\Web E\) and %
\((i,a)\Scoh{\Sfun E}(i',a')\) if \(a\Scoh Ea'\) and %
\((i,a)\Neu{\Sfun E}(i',a')\) if \(i=i'\) and \(a\Neu Ea'\). 
This functor acts on morphisms exactly as in the setting of $\REL$ and $\WCS$, 
but observe that again $\S E \neq \withFam<\N> E$ and
$\S E \neq \Bplus_{i \in \N}$.
The analytic coalgebra and the analytic structure that follows are
the same as the ones in \(\WCS\) and $\REL$.

\begin{remark}
  For instance, for a family \((x_i\in\Cl{\Sone})_{i\in\Nat}\) to be
  summable, we need all the \(x_i\) to be empty but possibly for one (which
  is then equal to \(\{\Sonelem\}\)).
  This is very similar to the uniformity properties observed in the syntactical
  Taylor expansion.
\end{remark}

One interesting feature of \(\NUCS\) is that it admits another
resource modality \(\Ocbe\) whose structure morphisms \(\Deru\),
\(\Digg\), \(\Seelyz\) and \(\Seelyt\) are, again, defined as in
\(\REL\).
This exponential was actually the first one discovered for NUCS
because it arises naturally in the setting of Indexed Linear Logic,
see~\cite{BucciarelliEhrhard01} where NUCS were introduced as a
particular example of denotational models based on phase semantics.
One has \(\Web{\Ocbe E}=\Mfin{\Web E}\) and, given %
\(m=\Mset{\List a1n},m'=\Mset{\List a{n+1}k}\), one has %
\(m\Coh{\Ocbe E}m'\) if %
\(\forall i,j\in\{1,\dots,k\}\ i\not=j\Implies a_i\Coh Ea_j\) and %
\(m\Scoh{\Ocbe E}m'\) if \(m\Coh{\Ocbe E}m'\) and %
\(\exists i\in\{1,\dots,k\}\,\forall j\in\{1,\dots,k\}
\ i\not=j\Implies a_i\Scoh Ea_j\).

Let us describe \(\Ocbe\Sone\): %
we have \(\Web{\Ocbe{\Sone}}=\{i\Mset\Sonelem\St i\in\Nat\}\simeq\Nat\).
Next \(0\Scoh{\Ocbe\Sone}1\) and \(i\Neu{\Ocbe\Sone}j\) as
soon as \(i+j\neq 1\).
But \(\Dbimon\) is characterized by \(\Web{\Dbimon}=\Nat\) and %
\(i\Neu\Dbimon i\) and \(i\Scoh\Dbimon j\) when \(i\not=j\), and
therefore \(\Dbimon\) and \(\Ocbe\Sone\) are not isomorphic, and since
\(\Ocbe\) is easily seen to be finitary, the only possibility is that
this resource modality \(\Ocbe\) has no analytic coalgebra
\(\Dbimonca\) and thus no analytic structure $\Sdl$.

\begin{remark}
  On the other hand, setting \(\Dbimon_2=\Sone\With\Sone\), it is not
  hard to check that we have a coalgebra structure
  \(\delta\in\NUCS(\Dbimon_2,\Ocbe\Dbimon_2)\) given by
  \begin{align*}
    \delta=\{(i,\Mset{\List i1k})\St i,\List i1k\in\{0,1\}
    \text{ and }i=i_1+\cdots+i_k\}
  \end{align*}
  so that \(\NUCS\), equipped with the \(\Ocbe\) resource modality, is
  a representable model of coherent differentiation in the sense
  of~\cite{Ehrhard23-cohdiff}. 
\end{remark}

We just proved the following results.
\begin{theorem} \label{thm:differential-not-analytic}
  There are representably coherent differential categories\footnote{Those are called 
  \emph{differential elementary summable categories} in \cite{Ehrhard23-cohdiff}.} which 
  are not representably analytic.
\end{theorem}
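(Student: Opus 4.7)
The plan is to exhibit $\NUCS$ equipped with the non-free resource modality $\Ocbe$ as the required counterexample. The preceding paragraphs have already done much of the work: they construct a $\Ocbe$-coalgebra $\delta \in \NUCS(\Dbimon_2, \Ocbe \Dbimon_2)$ making $(\NUCS, \Ocbe)$ a representably coherent differential category in the sense of \cite{Ehrhard23-cohdiff}, and they observe that $\Dbimon$ and $\Ocbe \Sone$ are not isomorphic. So it remains to package these observations into the formal statement.

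First, I would explicitly verify that $(\NUCS, \Ocbe)$ is a representably coherent differential category by citing the coalgebra $\delta$ above and checking that the relevant analogues of the coalgebra compatibility axioms (the binary counterparts of the conditions in \Cref{def:elementary-Taylor-structure}) are satisfied; these are routine pointwise checks on multisets of elements of $\{0,1\}$, entirely parallel to what is done in \cite{Ehrhard23-cohdiff}. At this stage, $(\NUCS, \Ocbe)$ satisfies the first half of the claim.

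Second, I would argue by contradiction that $(\NUCS, \Ocbe)$ is not representably analytic. Suppose it were; then by \Cref{cor:Taylor-structure-induces-differential} there would exist an analytic coalgebra $\Dbimonca \in \NUCS(\Dbimon, \Ocbe \Dbimon)$. The resource modality $\Ocbe$ is readily seen to be finitary in the sense of \Cref{def:resource-finitary} (its web is built from finite multisets and the generalized derelictions $\Dergen n_\Sone$ are jointly monic, by the same multiset-based reasoning as for $\REL$, $\WCS$ and the free $\Ocb$). Therefore \Cref{th:dbimondeg-iso} would force $\Dbimon$ and $\Ocbe \Sone$ to be isomorphic as $\Ocbe$-coalgebras, hence in particular isomorphic in $\NUCS$. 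But the explicit descriptions of the two objects preceding the theorem show that in $\Dbimon$ we have $i \Neu{\Dbimon} i$ for every $i \in \N$ and $i \Scoh{\Dbimon} j$ whenever $i \neq j$, whereas in $\Ocbe \Sone$ we have $0 \Scoh{\Ocbe \Sone} 1$ but $i \Neu{\Ocbe\Sone} j$ as soon as $i + j \neq 1$; for instance $0$ and $2$ are neutral in $\Ocbe \Sone$ but strictly coherent in $\Dbimon$. Any isomorphism in $\NUCS$ is the graph of a bijection of webs that preserves the strict coherence, strict incoherence, and neutrality relations, so these two NUCS cannot be isomorphic.

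The main obstacle, and really the only substantive point, is the non-isomorphism step: one has to be precise about what kind of morphism an isomorphism in $\NUCS$ is and verify that no reindexing of $\N$ can reconcile the two coherence patterns. Everything else --- the verification that $\delta$ gives a coherent differentiation structure on $(\NUCS, \Ocbe)$, and the finitariness of $\Ocbe$ --- is routine and of the same flavour as the positive examples already treated in \Cref{sec:elementary-Taylor-examples}.
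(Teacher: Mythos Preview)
Your proposal is correct and follows essentially the same approach as the paper: the paper's argument, given in the paragraphs immediately preceding the theorem, exhibits $(\NUCS,\Ocbe)$ as the counterexample by (i) constructing the coalgebra $\delta$ on $\Dbimon_2$ to get representable coherent differentiation, and (ii) observing that $\Ocbe$ is finitary so that \Cref{th:dbimondeg-iso} would force $\Dbimon\simeq\Ocbe\Sone$, which fails by direct inspection of the coherence relations. Your write-up is slightly more explicit about why no $\NUCS$-isomorphism can exist between $\Dbimon$ and $\Ocbe\Sone$, but the logical structure is identical.
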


We conjecture on the other hand that 
  under mild assumption (the existence of a binary 
  summabibility structure), any analytic structure induces
  a coherent differential structure.
  This coherent differential structure should be associated in 
  the representable case with the following coalgebra 
  \[ \begin{tikzcd}
	{\Dbimon_2} & & \Dbimon & {\oc \Dbimon} & & {\oc \Dbimon_2}
	\arrow["{\prodtuple{\prodProj_0, \prodProj_1, 0, \ldots}}", from=1-1, to=1-3]
	\arrow["\Dbimonca", from=1-3, to=1-4]
	\arrow["{\oc \prodtuple{\prodProj_0, \prodProj_1}}", from=1-4, to=1-6]
\end{tikzcd} \]

\subsection{Probabilistic coherence spaces}
\label{sec:pcoh}

The last example is the category of probabilistic coherence 
spaces of~\cite{Danos11}. In some sense, $\PCOH$ is to $\WREL[\Rbar]$ what 
\(\WCS\), $\COH$ and \(\NUCS\) are for $\REL$.
It is a fully abstract model of a probabilistic PCF, see~\cite{Ehrhard14},
and the model in which coherent differentiation was first 
discovered, see~\cite{Ehrhard22a}.

Given an at most countable set $A$ and $u,u'\in\Realpcto A$, we set
$\Eval u{u'}=\sum_{a\in A}u_au'_a\in\Realpc$ where \(\Realpc\) is the
completed half real line. Given $P\subseteq\Realpcto A$, we define
$\Orth P\subseteq\Realpcto A$ as
\begin{align*}
  \Orth P=\{u'\in\Realpcto A\St\forall u\in P\ \Eval u{u'}\leq 1\}\,.
\end{align*}
Observe that if $P$ satisfies
\( \forall a\in A\,\exists x\in P\ x_a>0 \) and
\( \forall a\in A\,\exists m\in\Realp \forall x\in P\ x_a\leq m \)
then $\Orth P\in\Realpto I$ and $\Orth P$ satisfies the same two
properties that we call \emph{local boundedness} which can also be
rephrased as
\begin{align*}
  \forall a\in A\quad 0<\sup_{x\in P}x_a<\infty\,.
\end{align*}

\begin{definition}
A probabilistic pre-coherence space (pre-PCS) is a pair
$X=(\Web X,\Pcoh X)$ where $\Web X$ is an at most countable set and
$\Pcoh X\subseteq\Realpcto{\Web X}$ satisfies
$\Biorth{\Pcoh X}=\Pcoh X$.
A probabilistic coherence space (PCS) is a
pre-PCS $X$ such that \(\Pcoh X\) is locally bounded.
\end{definition}

\begin{example} \label{ex:subprobability-distributions}
  The pairs $\probapcs{A} = 
  (A, \Proba(A))$ where $\Proba(A) = \{x \in \Rbar^A \St \sum_{a \in A} 
  x_a \leq 1\}$ is the set of sub probability distributions on $A$ is a PCS, hence the 
  name \emph{probabilistic} coherence space. However, a PCS is not always a 
  set of subprobability 
  distributions, especially when going at higher order.
\end{example}

We can define the pointwise order on $\Rpos^A$ for any set $A$: 
for any $u, v \in \Rpos^{A}$, $u \leq v$ if for all $a \in A$, 
$u_a \leq v_a$. 
Observe that if $u \leq v$ then for all $u' \in \Rpos^A$, 
$\Eval u {u'} \leq \Eval v {u'}$, whence the following result.

\begin{lemma} \label{lemma:pcs-downward-closed}
For any $v \in \Pcoh X$ and $u \in \Rpos^{\Web X}$, if $u \leq v$ then 
$u \in  \Pcoh X$.
\end{lemma}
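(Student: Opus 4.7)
The plan is to exploit the biorthogonality closure characterization of $\Pcoh X$. Since $X$ is a PCS, we have $\Pcoh X = \Biorth{\Pcoh X}$, so to prove $u \in \Pcoh X$ it suffices to show that $u \in \Orth{(\Orth{\Pcoh X})}$, i.e., that $\Eval u {u'} \leq 1$ for every $u' \in \Orth{\Pcoh X}$.

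So first I would fix an arbitrary $u' \in \Orth{\Pcoh X}$. By definition of $\Orth{(\cdot)}$, since $v \in \Pcoh X$, we have $\Eval v {u'} \leq 1$. Next I would invoke the monotonicity observation already recorded just before the lemma statement: from $u \leq v$ (pointwise) and $u' \in \Rpos^{\Web X}$, the componentwise inequality $u_a u'_a \leq v_a u'_a$ holds for every $a \in \Web X$, and summing over $\Web X$ yields $\Eval u {u'} \leq \Eval v {u'}$. Combining the two gives $\Eval u {u'} \leq 1$.

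Since $u'$ was arbitrary in $\Orth{\Pcoh X}$, this shows $u \in \Orth{(\Orth{\Pcoh X})} = \Biorth{\Pcoh X} = \Pcoh X$, concluding the proof.

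There is essentially no obstacle here: the argument is a three-line unfolding of the biorthogonal closure, and the only non-trivial input (monotonicity of $\Eval{\cdot}{u'}$ in its first argument for nonnegative $u'$) is already explicitly recorded in the paragraph preceding the lemma. One minor thing to keep in mind is that values live in $\Realpc$ rather than $\Rpos$, so the inequalities are in the extended nonnegative reals; but the monotonicity of sums remains valid there, so nothing changes.
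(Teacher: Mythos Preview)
Your proposal is correct and is exactly the argument the paper intends: the paper does not spell out a proof but simply records the monotonicity $\Eval u{u'}\leq\Eval v{u'}$ and writes ``whence the following result,'' leaving the biorthogonality unfolding you gave implicit.
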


Given a PCS $X$ and \(x\in\Pcoh X\) we set
$\Norm x_X=\sup_{x'\in\Pcoh{\Orth X}}\Eval x{x'}\in\Intcc01$. This
operation obeys the usual properties of a norm: %
\(\Norm x=0\Implies x=0\), %
\(\Norm{x_0+x_1}\leq\Norm{x_0}+\Norm{x_1}\) and %
\(\Norm{\lambda x}=\lambda\Norm x\) for all \(\lambda\in\Intcc 01\).

We recall some notations from $\WREL[\Rbar]$.
Given $t\in\Realpcto{A\times B}$ considered as a matrix
and $u\in\Realpcto A$, we define
$\Matappa tu\in\Realpcto B$ by
$(\Matappa tu)_b=\sum_{a\in A}t_{a,b}u_a$ (usual formula for applying
a matrix to a vector), and if $s\in\Realpcto{B\times C}$ we define the
product $\Matapp st\in\Realpcto{A\times C}$ of the matrix $s$ and $t$
by $(\Matapp st)_{a,c}=\sum_{b\in B}t_{a,b}s_{b,c}$. This is an
associative operation.

Let $X$ and $Y$ be PCSs, a morphism from $X$ to $Y$ is a matrix
$t\in\Realpto{\Web X\times\Web Y}$ such that
$\forall x\in\Pcoh X\ \Matappa tx\in\Pcoh Y$. It is clear that the
identity (diagonal) matrix is a morphism from $X$ to $X$ and that the matrix
product of two morphisms is a morphism and therefore, PCSs equipped
with this notion of morphism form a category $\PCOH$.

The condition $t\in\PCOH(X,Y)$ is equivalent to
\[
\forall x\in\Pcoh X,\forall y'\in\Pcoh{\Orth Y}, \Eval{\Matappa tx}{y'}\leq 1
\]
and observe that $\Eval{\Matappa tx}{y'}=\Eval t{\Tens x{y'}}$ where
$(\Tens x{y'})_{(a,b)}=x_ay'_b$. We define
\[\Limplf XY=(\Web X\times\Web Y,\{t\in\Realpto{\Web{\Limplf
    XY}}\St\forall x\in\Pcoh X\ \Matappa tx\in\Pcoh Y\}) \] 
this is a pre-PCS by this observation, and checking that it is indeed a PCS is
easy.

\begin{example}
  For example, if $X = (E, \Proba(E))$ and $Y = (F, \Proba(F))$
  as given in \cref{ex:subprobability-distributions}, then $X \linarrow Y$
  is the set of sub-stochastic matrices. In particular, $X \linarrow Y$ is not 
  a set of sub-probability distributions.
\end{example}

We define then $\Tens XY=\Orth{\Limplfp X{\Orth Y}}$; %
this is a PCS which satisfies
\[ 
\Pcohp{\Tens XZ}=\Biorth{\{\Tens xz\St x\in\Pcoh X\text{ and }z\in\Pcoh Z\}}
\]
where $(\Tens xz)_{(a,c)}=x_az_c$.

\begin{lemma}[\cite{DanosEhrhard11}] \label{lemma:pcs-tensor-morphism}
For any PCS $X_1, X_2, Y$ and $s \in \Rpos^{\Web{X_1 \tensor X_2 \linarrow Y}}$,
if for all $u_1 \in \Pcoh X_1$, $u_2 \in \Pcoh X_2$,
$s \cdot (u_1 \tensor u_2) \in \Pcoh Y$, then 
$s \in \PCOH(X_1 \tensor X_2, Y)$. 
\end{lemma}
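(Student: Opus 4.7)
The plan is to exploit the biorthogonal closure defining $\Pcoh(X_1\tensor X_2)$, namely
$\Pcoh(X_1\tensor X_2)=\Biorth{\{u_1\tensor u_2\St u_1\in\Pcoh X_1,\,u_2\in\Pcoh X_2\}}$,
and dualize the problem so that the hypothesis $s\cdot(u_1\tensor u_2)\in\Pcoh Y$ is exactly what is needed.

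First I would unfold what has to be proved: $s\in\PCOH(X_1\tensor X_2,Y)$ means $s\cdot u\in\Pcoh Y$ for every $u\in\Pcoh(X_1\tensor X_2)$, which by the definition of $\Pcoh Y$ as a biorthogonal is equivalent to showing $\Eval{s\cdot u}{y'}\leq 1$ for all $y'\in\Pcoh(\Orth Y)$. Fix such a $y'$ and define the ``partial contraction'' $s^{y'}\in\Rpos^{\Web{X_1\tensor X_2}}$ by $(s^{y'})_{(a_1,a_2)}=\sum_{b\in\Web Y}s_{(a_1,a_2),b}\,y'_b$, so that $\Eval{s\cdot u}{y'}=\Eval{u}{s^{y'}}$ by Fubini/associativity of the positive sums involved.

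The task then reduces to proving $s^{y'}\in\Pcohp{\Orth{(X_1\tensor X_2)}}$, which by the definition of $\Pcoh(X_1\tensor X_2)$ as a biorthogonal is equivalent to $s^{y'}\in\Orth{\{u_1\tensor u_2\St u_i\in\Pcoh X_i\}}$. In other words, it suffices to check that $\Eval{u_1\tensor u_2}{s^{y'}}\leq 1$ for all $u_1\in\Pcoh X_1$, $u_2\in\Pcoh X_2$. But by construction $\Eval{u_1\tensor u_2}{s^{y'}}=\Eval{s\cdot(u_1\tensor u_2)}{y'}$, and this is $\leq 1$ by the hypothesis (since $s\cdot(u_1\tensor u_2)\in\Pcoh Y$ and $y'\in\Pcoh(\Orth Y)$).

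There is essentially no obstacle here beyond making sure the exchange of sums implicit in the identity $\Eval{s\cdot u}{y'}=\Eval{u}{s^{y'}}$ is justified — this is immediate because all entries are in $\Rpos$, so Tonelli applies unconditionally, and because $s\cdot(u_1\tensor u_2)\in\Pcoh Y$ already guarantees the internal sums are well-defined for the pure tensors that generate the biorthogonal. The whole argument is a single use of the defining adjunction $(\_)^{\bot\bot}$ of the tensor, turning a ``generator-only'' hypothesis into a statement about all elements of $\Pcoh(X_1\tensor X_2)$.
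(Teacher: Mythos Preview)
Your argument is correct and is the standard biorthogonal proof of this fact: the key step is that $\Orth{\Pcoh(X_1\tensor X_2)}=\Orth{\Biorth{\{u_1\tensor u_2\}}}=\Orth{\{u_1\tensor u_2\}}$, so testing $s^{y'}$ against pure tensors suffices. The paper itself does not prove this lemma; it merely cites it from~\cite{DanosEhrhard11}, where the same biorthogonal reasoning is used, so there is nothing further to compare.
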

Then it is easy to see that we have equipped in that way the category
$\PCOH$ with a symmetric monoidal structure for which it is
\Staraut{} with the dualizing object
$\Sbot=\Sone=(\{*\},[0,1])$, which coincides with the unit of $\Times$.
This structure is exactly the same as the one in $\WREL[\Rbar]$.
The \Starauty{} follows easily from the observation that
$(\Limplf X\Sbot)\Isom\Orth X$.

The category $\PCOH$ is cartesian: if $(X_j)_{j\in J}$ is an at most
countable family of PCSs, then the cartesian product of
the $X_j$'s is given by $(\withFam X_i,(\Wproj_i)_{i\in I})$ with
$\Web{\withFam X_i}=\Union_{i\in I}\{i\}\times\Web{X_i}$,
\[ (\Wproj i)_{(k,a),a'}= \begin{cases} 
  1 \text{ if $i=k$ and $a=a$} \\
  0 \text{ otherwise}
\end{cases} \] 
and for any $x\in\Realpto{\Web{\withFam X_i}}$ 
we have $x\in\Pcohp{\withFam {i\in I}X_i}$ if $\Matappa{\Wproj_i}x\in\Pcoh{X_i}$
for each $i \in I$. 
Given $(t_i\in\PCOH(Y,X_i))_{i\in I}$, the unique
morphism $t=\Tuple{t_i}_{i\in I}\in\PCOH(Y,\withFam X_i)$ such
that $\Wproj_j \Compl t=t_i$ is simply defined by
\[ t_{b,(i,a)}=(t_i)_{a,b} \, . \] 
The dual operation $\Bplus_{i\in I}X_i$,
which is a coproduct, is characterized by
$\Web{\Bplus_{i\in I}X_i}=\Union_{i\in I}\{i\}\times\Web{X_i}$ and
$x\in\Pcohp{\Bplus_{i\in I}X_i}$ if $x\in\Pcohp{\withFam{X_i}}$
and 
\[ \sum_{i\in I}\Norm{\Matappa{\Wproj_i}x}_{X_i}\leq 1 \, . \]
For example, the coproduct $\Bplus_{a \in A} 1$ corresponds to $\probapcs{A}$, the 
PCS of subprobability distributions on $A$ given in 
\cref{ex:subprobability-distributions}.

As to the exponentials, one sets $\Web{\Oc X}=\Mfin{\Web X}$ and
$\Pcohp{\Oc X}=\Biorth{\{\Prom x\St x\in\Pcoh X\}}$ where, given
$m\in\Mfin{\Web X}$, $\Prom x_m=x^m=\prod_{a\in\Web X}x_a^{m(a)}$. 
Using the same construction as in $\WREL$, a morphism
$t\in\PCOH(\Oc X,Y)=\Pcohp{\Limplf{\Oc X}{Y}}$ induces a function 
\begin{align*}
  \Fun t:\Pcoh X&\to\Pcoh Y\\
  x&\mapsto\Matappa t{\Prom x}
  \text{ that is, } (\Fun t(x))_b = \sum_{m\in\Web{\Oc
      X}}t_{m,b}x^m\,.
\end{align*}
A function $f : \Pcoh X \arrow \Pcoh Y$ such that $f = \Fun t$ 
for some $t \in \PCOH(\Oc X, Y)$ is called \emph{entire}, see 
\cite{DanosEhrhard11}.
The crucial difference with $\WREL$ is that this construction is now a bijection 
between morphisms and entire functions.

\begin{theorem}[\cite{DanosEhrhard11}] \label{thm:pcoh-characterization}
  The following assertions hold. \begin{enumerate}
  \item Let %
  \(t\in\Realpto{\Web{\Limplf{\Oc{X}}{Y}}}\).
  One has \(t\in\PCOH(\Oc{X},Y)\) %
  iff for all \(x\in\Pcoh{X}\) one has %
  \(\Matappa t{\Prom{x}}\in\Pcoh Y\).
  \item If \(s,t\in\PCOH(\Oc{X},Y)\)
  satisfy %
  \(
  \Matappa s{\Prom{x}}
  =
  \Matappa t{\Prom{x}}
  \) for all %
  \(x\in\Pcoh{X}\) then \(s=t\).
\end{enumerate}
Consequently, the construction $\Fun t$ given above induces a bijection between 
$\PCOH(\Oc X, Y)$ and the entire functions from $\Pcoh X$ to $\Pcoh Y$.
\end{theorem}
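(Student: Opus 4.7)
The statement packages two claims of quite different flavour, plus the bijection as a formal corollary. Part~(1) is a transposition argument built on the bipolar definition $\Pcohp{\Oc X}=\Biorth{\{\Prom{x}\St x\in\Pcoh X\}}$. Part~(2) reduces to the uniqueness of convergent multivariate power series with nonnegative coefficients, once enough test vectors have been produced inside $\Pcoh X$. The bijection then follows: surjectivity onto entire functions is definitional (\cref{def:entire}), and injectivity is exactly Part~(2).

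For Part~(1), the forward direction is immediate since $\Prom{x}\in\Pcohp{\Oc X}$ by definition. For the converse, fix any $y'\in\Pcoh{\Orth Y}$ and form the row contraction $t_{y'}\in\Realpto{\Web{\Oc X}}$ defined by $(t_{y'})_{m}=\sum_{b}t_{m,b}y'_{b}$. The hypothesis $\Matappa{t}{\Prom{x}}\in\Pcoh Y$ rewrites as $\langle\Prom{x},t_{y'}\rangle=\langle\Matappa{t}{\Prom{x}},y'\rangle\leq 1$ for every $x\in\Pcoh X$, so $t_{y'}\in\Orth{\{\Prom{x}\St x\in\Pcoh X\}}=\Pcohp{\Orth{\Oc X}}$ by the bipolar description of the exponential. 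For arbitrary $u\in\Pcohp{\Oc X}$ we then get $\langle\Matappa{t}{u},y'\rangle=\langle u,t_{y'}\rangle\leq 1$; since $y'\in\Pcoh{\Orth Y}$ was arbitrary, $\Matappa{t}{u}\in\Pcoh Y$, hence $t\in\PCOH(\Oc X,Y)$.

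For Part~(2), assume $\Matappa{s}{\Prom{x}}=\Matappa{t}{\Prom{x}}$ for all $x\in\Pcoh X$, and fix $b\in\Web Y$ and a finite subset $F\subseteq\Web X$; proving $s_{m,b}=t_{m,b}$ for every $m$ with $\Msetsupp m\subseteq F$ suffices by arbitrariness of $F$ and $b$. The crux is that, embedding $\Realpto F\hookrightarrow\Realpto{\Web X}$ by zero-extension, $\Pcoh X$ contains a nonempty open subset of $\Realpto F$: local boundedness of $\Pcoh X$ provides, for each $a\in F$, some $x^{(a)}\in\Pcoh X$ with $x^{(a)}_{a}>0$; by \cref{lemma:pcs-downward-closed}, $\nu e_{a}\in\Pcoh X$ for $0\leq\nu\leq x^{(a)}_{a}$; and by convexity of $\Pcoh X$ (inherited from the bipolar) together with $0\in\Pcoh X$, the open simplex $U=\{\vec\nu\in\Realpto F\St\sum_{a\in F}\nu_{a}/x^{(a)}_{a}<1\}$ embeds into $\Pcoh X$ via $\vec\nu\mapsto\sum_{a}\nu_{a}e_{a}$, using that $\sum_{a}\nu_{a}e_{a}\leq\sum_{a}(\nu_{a}/x^{(a)}_{a})x^{(a)}\in\Pcoh X$. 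On $U$ both sides of the assumed equality read as the multivariate power series $\sum_{\Msetsupp m\subseteq F}(\cdot)_{m,b}\prod_{a\in F}\nu_{a}^{m(a)}$, pointwise convergent and finite (the common value lies in $\Pcoh Y$, bounded by the local bound at coordinate $b$). Two such convergent power series with nonnegative real coefficients agreeing on a nonempty open subset of $\Realpto F$ must coincide coefficientwise, yielding $s_{m,b}=t_{m,b}$ for $\Msetsupp m\subseteq F$.

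\textbf{Main difficulty.} The delicate step is producing the open neighborhood $U$ in Part~(2); this is precisely where \emph{local boundedness} (distinguishing PCSs from pre-PCSs) comes in, and it is what unlocks the classical uniqueness of convergent power series from their values on a real open set. Everything else is routine: Part~(1) is pure duality, and the bijection with entire functions is then immediate from the definitions.
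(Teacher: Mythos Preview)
Your proof is correct and essentially follows the standard argument from~\cite{DanosEhrhard11}. Note that the paper itself does not prove this theorem: it cites the result and only remarks that the proof of~(2) ``uses crucially the local boundedness property of PCSs, which is why the proof does not carry to $\WREL$''---exactly the point you identify as the main difficulty. Your argument for~(1) is the expected duality computation, and your argument for~(2) correctly extracts from local boundedness a full-dimensional simplex $U\subseteq\Pcoh X$ supported on any finite $F\subseteq\Web X$ (using convexity of the bipolar and downward closure), then invokes the identity theorem for absolutely convergent multivariate power series; the only thing you leave implicit is that the difference series $\sum_m(s_{m,b}-t_{m,b})\nu^m$ converges absolutely on a polydisc because it is dominated by the nonnegative series $\sum_m(s_{m,b}+t_{m,b})\nu^m$, but this is routine.
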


\begin{proof}
  The proof of $(2)$ uses crucially the local boundedness
  property of PCSs, which is why the proof does not carry to $\WREL$.
\end{proof}

The next two lemmas follow from \cref{thm:pcoh-characterization} by induction and 
monoidal closedness.
\begin{lemma}%
  \label{lemma:kl-tensor-maps-charact}
  Let %
  \(t\in\Realpto{\Web{\Limplf{\Oc{X_1}\Times\cdots\Times\Oc{X_k}}{Y}}}\).
  One has \(t\in\PCOH(\Oc{X_1}\Times\cdots\Times\Oc{X_k},Y)\) %
  iff for all \((x_i\in\Pcoh{X_i})_{i=1}^k\) one has %
  \(\Matappa t{(\Prom{x_1}\Times\cdots\Times\Prom{x_k})}\in\Pcoh Y\).
\end{lemma}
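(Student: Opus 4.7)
My plan is to proceed by induction on $k$, using the monoidal closedness of $\PCOH$ at each step to reduce the arity by one and then invoking the base case \cref{thm:pcoh-characterization}. The forward direction is immediate: if $t \in \PCOH(\Oc X_1 \tensor \cdots \tensor \Oc X_k, Y)$ then, since $\Prom{x_i} \in \Pcoh{\Oc X_i}$ for every $x_i \in \Pcoh{X_i}$ and since the tensor product preserves elements of the PCSs (by definition of $\Pcoh{\_\tensor\_}$ as the biorthogonal closure of pure tensors), the vector $\Prom{x_1} \tensor \cdots \tensor \Prom{x_k}$ lies in $\Pcoh{\Oc X_1 \tensor \cdots \tensor \Oc X_k}$, and $t$ applied to it must lie in $\Pcoh Y$.

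For the backward direction, the base case $k=1$ is exactly \cref{thm:pcoh-characterization}, part (1). For the inductive step, assume the property holds at arity $k$ for \emph{every} target PCS, and suppose that $t \in \Realpto{\Web{\Oc X_1 \tensor \cdots \tensor \Oc X_{k+1} \linarrow Y}}$ satisfies the pointwise hypothesis on all $(k{+}1)$-tuples. I would use the closed structure of $\PCOH$ to consider the curried matrix $\Curlin_{\Oc X_{k+1}}(t)$, which is the positive matrix defined by $\Curlin(t)_{((m_1,\dots,m_k),(m_{k+1},b))} = t_{((m_1,\dots,m_k,m_{k+1}),b)}$; the standard fact about closed categories is that $t \in \PCOH(\Oc X_1 \tensor \cdots \tensor \Oc X_{k+1}, Y)$ if and only if $\Curlin(t) \in \PCOH(\Oc X_1 \tensor \cdots \tensor \Oc X_k, \Oc X_{k+1} \Limpl Y)$. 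Apply the inductive hypothesis with target $Y' = \Oc X_{k+1} \Limpl Y$: this reduces membership of $\Curlin(t)$ to the condition that for all $(x_i \in \Pcoh X_i)_{i=1}^k$, the vector $u := \Matappa{\Curlin(t)}{(\Prom{x_1} \tensor \cdots \tensor \Prom{x_k})}$ lies in $\Pcoh{\Oc X_{k+1} \Limpl Y} = \PCOH(\Oc X_{k+1}, Y)$.

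Now apply \cref{thm:pcoh-characterization} to $u$: it suffices to check that for every $x_{k+1} \in \Pcoh{X_{k+1}}$, $\Matappa u{\Prom{x_{k+1}}} \in \Pcoh Y$. But by associativity of matrix application and the definition of currying, $\Matappa u{\Prom{x_{k+1}}} = \Matappa t{(\Prom{x_1} \tensor \cdots \tensor \Prom{x_{k+1}})}$, which lies in $\Pcoh Y$ exactly by the hypothesis of the lemma. Chaining these equivalences completes the induction.

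The main obstacle, although it is really a bookkeeping issue rather than a conceptual one, is ensuring that the reduction via currying works at the level of arbitrary positive matrices and not merely at the level of morphisms: one must check that the identification of webs $\Web{\Oc X_1 \tensor \cdots \tensor \Oc X_{k+1} \Limpl Y} \cong \Web{\Oc X_1 \tensor \cdots \tensor \Oc X_k \Limpl (\Oc X_{k+1} \Limpl Y)}$ makes $t$ and $\Curlin(t)$ correspond as formal matrices, so that $t$ is a $\PCOH$-morphism iff $\Curlin(t)$ is one. This is a direct verification from the explicit description of the closed structure in $\PCOH$, inherited from $\WREL[\overline{\R}_{+}]$, and imposes no new computation beyond unfolding definitions.
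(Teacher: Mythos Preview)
Your proposal is correct and follows exactly the approach indicated in the paper, which states that the lemma follows from \cref{thm:pcoh-characterization} by induction and monoidal closedness. The currying argument you spell out is precisely the intended reduction, and your remark about the bookkeeping issue (that $\Curlin$ operates already at the level of positive matrices, not just morphisms) is a useful observation that makes the induction go through cleanly.
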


\begin{lemma}%
  \label{lemma:pcoh-kl-morph-charact}
  If \(s,t\in\PCOH(\Oc{X_1}\Times\cdots\Times\Oc{X_k},Y)\)
  satisfy %
  \(
  \Matappa s{(\Prom{x_1}\Times\cdots\Times\Prom{x_k})}
  =
  \Matappa t{(\Prom{x_1}\Times\cdots\Times\Prom{x_k})}
  \) for all %
  \((x_i\in\Pcoh{X_i})_{i=1}^k\) then \(s=t\).
\end{lemma}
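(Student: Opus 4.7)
I would prove this by induction on $k$, using iterated Curry transposes to reduce everything to part (2) of \Cref{thm:pcoh-characterization}.

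The base case $k=1$ is exactly part (2) of \Cref{thm:pcoh-characterization}. For the inductive step, suppose the result holds for $k-1$ and let $s,t \in \PCOH(\Oc{X_1}\Times\cdots\Times\Oc{X_k},Y)$ satisfy the hypothesis. The plan is to currify the last argument: writing $Z = \Limplf{\Oc{X_k}}{Y}$, consider the Curry transposes $\Curlin(s), \Curlin(t) \in \PCOH(\Oc{X_1}\Times\cdots\Times\Oc{X_{k-1}}, Z)$. For any fixed $x_1,\dots,x_{k-1}$, the elements $\Matappa{\Curlin(s)}{(\Prom{x_1}\Times\cdots\Times\Prom{x_{k-1}})}$ and $\Matappa{\Curlin(t)}{(\Prom{x_1}\Times\cdots\Times\Prom{x_{k-1}})}$ lie in $\Pcoh Z = \PCOH(\Oc{X_k}, Y)$, and applying them to $\Prom{x_k}$ yields respectively $\Matappa s{(\Prom{x_1}\Times\cdots\Times\Prom{x_k})}$ and $\Matappa t{(\Prom{x_1}\Times\cdots\Times\Prom{x_k})}$, which are equal by assumption.

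By part (2) of \Cref{thm:pcoh-characterization} applied in $\PCOH(\Oc{X_k}, Y)$, the two elements of $\Pcoh Z$ coincide. Hence $\Curlin(s)$ and $\Curlin(t)$ satisfy the hypothesis of the statement with $k-1$ arguments (with $Z$ playing the role of $Y$), and the inductive hypothesis gives $\Curlin(s) = \Curlin(t)$. Since $\Curlin$ is a bijection, we conclude $s=t$.

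The main thing to be careful about is ensuring the Curry transpose produces a well-typed morphism in $\PCOH$ and that one can freely ``evaluate'' it at $\Prom{x_1} \Times \cdots \Times \Prom{x_{k-1}}$; this is immediate from the fact that $\PCOH$ is a symmetric monoidal closed category with internal hom $\Limpl$. No genuine obstacle is expected, since the whole argument simply propagates the one-variable uniqueness result of \Cref{thm:pcoh-characterization} along the closed structure, and the use of local boundedness (which was the only delicate point in the $k=1$ case) is encapsulated in that theorem.
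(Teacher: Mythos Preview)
Your proof is correct and matches the paper's approach: the paper does not spell out a proof but simply says the lemma follows from \Cref{thm:pcoh-characterization} ``by induction and monoidal closedness,'' which is exactly your argument via iterated currying.
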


Now given $t\in\PCOH(X,Y)$, the morphism 
$\Oc t\in\PCOH(\Oc X,\Oc Y)$ is defined using the same equation as 
$\WREL$, recall \cref{eq:exponential-v2}
\begin{align*}
  (\Oc t)_{m,p}=\sum_{r\in\Mstrans mp}\Multinomb pr t^r
\end{align*}
where we recall that %
\(t^r=\prod_{(a,b)\in\Web X\times\Web Y}t_{a,b}^{r(a,b)}\).
The main feature of this definition is that for all \(x\in\Pcoh X\) one
has %
\(
\Fun{\Oc t}(x)
=\Matappa{\Oc t}{\Prom x}=\Prom{(\Matappa tx)}
\). %
This property fully characterizes \(\Oc t\), thanks to 
\cref{thm:pcoh-characterization}.
     
The comonad structure is defined as in $\WREL$. That is, %
\(\Deru_X\in\Realpto{\Web{\Limplf{\Oc X}{X}}}\) is given by %
\((\Deru_X)_{m,a}=\Kronecker{m}{\Mset a}\) so that %
\(\forall x\in\Pcoh X\ \Matappa{\Deru_X}{\Prom x}=x\in\Pcoh X\) and
therefore \(\Deru_X\in\Pcoh(\Oc X,X)\).
Similarly, one defines %
\(\Digg_X\in\Realpto{\Web{\Limplf{\Oc X}{\Occ X}}}\) by %
\[ (\Digg_X)_{(m,\Mset{\List m1n})}=\Kronecker m{m_1+\cdots+m_n}\] so
that %
\(\forall x\in\Pcoh X\ \Matappa{\Digg_X}{\Prom x}=\Promm x\) and
hence, again, \(\Digg_X\in\Pcoh(\Oc X,\Occ X)\).
We prove that \((\Oc,\Deru{},\Digg{})\) is indeed a comonad,
using \cref{thm:pcoh-characterization}.
For instance, let \(t\in\Pcoh(X,Y)\), we have %
\[ \Matappa{(\Digg_Y\Compl\Oc t)}{\Prom x} =\Matappa{\Digg
  Y}{(\Prom{\Matappa tx})} =\Matappa{\Digg_X}{\Prom{(\Matappa tx)}}
=\Promm{(\Matappa tx)} \]
\[ \Matappa{(\Occ t\Compl\Digg_X)}{\Prom x} =\Matappa{\Occ
  t}{(\Matappa{\Digg_X}{\Prom x})} =\Matappa{\Occ t}{\Promm x}
=\Prom{(\Matappa{\Oc t}{\Prom x})} =\Promm{(\Matappa tx)} \] which
shows that \(\Digg{}\) is a natural transformation.
As another example, we have %
\[ \Matappa{(\Digg{\Oc X}\Compl\Digg_X)}{\Prom x}
=\Matappa{\Digg{\Oc X}}{\Promm x} =\Prommm x \] %
\[ \Matappa{(\Oc{\Digg_X}\Compl\Digg_X)}{\Prom x}
=\Matappa{\Oc{\Digg_X}}{\Promm x} =\Prom{(\Matappa{\Digg_X}{\Prom
    x})} =\Prom{(\Promm x)} =\Prommm x \] %
and hence %
\(\Digg{\Oc X}\Compl\Digg_X=\Oc{\Digg_X}\Compl\Digg_X\) %
which is one of the required comonad commutations.
The others are proven similarly.

\begin{remark}
It follows from those equations that the construction $\Fun t$ is functorial: 
for all $t \in \Kloc \PCOH(E, F)$ and $s \in \Kloc \PCOH(F, G)$,
$\Fun {t \comp s} = \Fun t \comp \Fun s$,
where $\comp$ is the composition in the coKleisli category $\Kloc \PCOH$.
So $\Kloc \PCOH$ is isomorphic to a category of entire functions.
\end{remark}

The Seely isomorphisms %
\(\Seelyz\in\PCOH(\Sone,\Oc\Top)\) and %
\( \Seelyt_{X_1,X_2}
\in\PCOH(\Tens{\Oc{X_1}}{\Oc{X_2}},\Ocp{{X_1}\With{X_2}}) \) %
are the same as in $\WREL[\Rbar]$ and
are given by %
\(\Seelyz_{\Sonelem,\Msetempty}=1\) and %
\[ \Seelyt_{((m_1,m_2),m)}=\Kronecker{\Msetact 1{m_1}+\Msetact
  2{m_2}}{m}\] %
where, for a multiset \(m=\Mset{\List a1k}\) we set %
\(\Msetact im=\Mset{(i,a_1),\dots,(i,a_k)}\) as in
\cref{sec:notation-multiset}.
It is obvious that \(\Seelyz\) is an iso. To check that
\(\Seelyt_{X_1,x_2}\) is a morphism we use
Lemma~\ref{lemma:kl-tensor-maps-charact}: let \(x_i\in\Pcoh{X_i}\) for
\(i=1,2\), one has %
\[ \Matappa{\Seelyt_{X_1,X_2}}{(\Tens{\Prom{x_1}}{\Prom{x_2}})}
=\Prom{\Tuple{x_1,x_2}} \in\Pcoh{\Ocp{{X_1}\With{X_2}}} \, . \] %
Conversely, defining
\(
s\in\Realpto{\Limplf{\Ocp{{X_1}\With{X_2}}}{\Tensp{\Oc{X_1}}{\Oc{X_2}}}}
\) by %
\(s_{m,(m_1,m_2)}=\Kronecker{\Msetact1{m_1}+\Msetact2{m_2}}{m}\) we
have %
\( \Matappa s{\Prom{\Tuple{x_1,x_2}}} =\Tens{\Prom{x_1}}{\Prom{x_2}}
\in\Pcohp{\Tens{\Oc{X_1}}{\Oc{X_2}}} \) for all
\(x_i\in\Pcoh{X_i}\) (\(i=1,2\)), and hence %
\(s\in\PCOH(\Ocp{{X_1}\With{X_2}},\Tensp{\Oc{X_1}}{\Oc{X_2}})\). %
It is obvious that \(s\) is the inverse of \(\Seelyt_{X_1,X_2}\) which
is therefore an isomorphism in \(\PCOH\).
Proving that it is natural and that it satisfies all the required
commutations for turning \(\PCOH\) into a model of \LL{} is routine
(using crucially Lemma~\ref{lemma:pcoh-kl-morph-charact}).

The induced lax monoidality %
\(
\Ocmong k\in\PCOH(
\Oc{X_1}\Times\cdots\Times\Oc{X_k},\Oc{(X_1\Times\cdots\Times X_k)})
\) is the same as in $\WREL$ and is such that %
\((\Ocmong k)_{(\List m1k),m}=1\)
if %
\(m=\Mset{(a_1^1,\dots,a_k^1),\dots,(a_1^n,\dots,a_k^n)}\) and %
\((m_i=\Mset{a_i^1,\dots,a_i^n})_{i=1}^k\), and %
\((\Ocmong k)_{(\List m1k),m}=0\) otherwise.

\begin{theorem} (\cite{CrubilleEhrhardPaganiTasson17})
  \label{th:pcoh-Lafont}
  The symetric monoidal category \(\PCOH\) is a Lafont category.
\end{theorem}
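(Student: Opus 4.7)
The goal is to exhibit a right adjoint to the forgetful functor $\Cmforg : \Cm{\PCOH} \to \PCOH$, namely to show that for every object $X$ of $\PCOH$ there is a commutative $\Times$-comonoid $(\Oc X, \Coalgw_{\Oc X}, \Coalgc_{\Oc X})$ in $\PCOH$ and a morphism $\Deru_X : \Oc X \to X$ which is universal, in the sense that for every commutative comonoid $(\Comonca C, \Comonu_C, \Comonm_C)$ and every $f \in \PCOH(\Comonca C, X)$, there exists a unique comonoid morphism $\Prom f \in \PCOH(\Comonca C, \Oc X)$ with $\Deru_X \Compl \Prom f = f$. The candidate comonoid structure on $\Oc X$ is the canonical one induced by the Seely isomorphisms via \cref{eq:weakening,eq:contraction}, which is already known to define a commutative comonoid since $\PCOH$ is a Seely category.

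The construction of the promotion proceeds by iterating the comultiplication. For each $n \in \Nat$, the $n$-ary comultiplication $\Coalgcgen n_C \in \PCOH(\Comonca C, \Tensexp{\Comonca C}{n})$ yields, together with $f$, a morphism $f^{\Times n} \Compl \Coalgcgen n_C \in \PCOH(\Comonca C, \Tensexp{X}{n})$. Composing with the canonical \emph{symmetrization} map $\Tensexp{X}{n} \to \Oc X$ that sends a tensor $a_1 \Times \cdots \Times a_n$ to the multiset $\frac{1}{n!}\Mset{a_1, \dots, a_n}$ (well defined in $\PCOH$ since coefficients may be divided by $n!$), we obtain a component $(\Prom f)_n$. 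The promotion is then defined as the sum $\Prom f = \sum_{n=0}^\infty (\Prom f)_n$. Explicitly, for $y \in \Pcoh{\Comonca C}$,
\begin{align*}
  (\Matappa{\Prom f}{y})_{[a_1,\dots,a_n]}
  = \frac{1}{\Factor{[a_1,\dots,a_n]}}
  \sum_{(a'_1,\dots,a'_n)} (\Matappa{\Coalgcgen n_C}{y})_{(a'_1,\dots,a'_n)}
  \prod_{i=1}^n f_{a'_i, a_i}\,.
\end{align*}
The key task is to show that this sum converges to an element of $\Pcoh{\Oc X}$. This amounts to verifying that for every $y \in \Pcoh{\Comonca C}$, one has $\Matappa{\Prom f}{y} = \Prom{(\Matappa f y)} \in \Pcoh{\Oc X}$, which is precisely the expected relation characterizing a free commutative comonoid. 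The equality $\Deru_X \Compl \Prom f = f$ is an immediate consequence of $\Coalgcgen 1_C = \Id$, and showing that $\Prom f$ is a comonoid morphism reduces to the naturality of iterated comultiplications together with $\Coalgcgen 0_C = \Comonu_C$ and the associativity of $\Comonm_C$.

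The main obstacle is uniqueness. Given another comonoid morphism $g \in \PCOH(\Comonca C, \Oc X)$ satisfying $\Deru_X \Compl g = f$, one must show $g = \Prom f$. Because $g$ preserves $\Comonm$ and $\Comonu$, one obtains $\Tensexp{\Deru_X}{n} \Compl \Coalgcgen n_{\Coalgfree X} \Compl g = f^{\Times n} \Compl \Coalgcgen n_C$ for every $n$, that is, $\Dergen n_X \Compl g = f^{\Times n} \Compl \Coalgcgen n_C$. The latter equation also holds with $g$ replaced by $\Prom f$, so that $g$ and $\Prom f$ agree after composition with every $\Dergen n_X$. One then concludes by appealing to the finitary character of the resource modality of $\PCOH$ (\cref{def:resource-finitary}): the family $(\Dergen n_X)_{n \in \Nat}$ is jointly monic. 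This finitariness is in turn established via \cref{thm:pcoh-characterization}(2): a morphism $h \in \PCOH(\Oc X, Y)$ is determined by the values $\Matappa h{\Prom x}$ for $x \in \Pcoh X$, and these values are precisely captured by the iterated generalized derelictions acting on $\Prom x$. This use of local boundedness of PCS is crucial and is the core probabilistic input that is not available in the weighted relational model $\WREL$, which explains why the analogous uniqueness statement fails there (cf.~\cref{rem:analytic-not-bijective}).
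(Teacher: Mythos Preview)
The paper does not prove this theorem; it simply cites \cite{CrubilleEhrhardPaganiTasson17}. Your proposal, however, contains a genuine error in the existence half.

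Your explicit formula for $\Prom f$ is wrong. The ``symmetrization'' with the $\tfrac1{n!}$ factor yields $(\Prom f)_{c,m}=\tfrac1{m!}(f^{\Times n}\Compl\Coalgcgen n_C)_{c,(a_1,\dots,a_n)}$, but the correct value---forced by the very equation $\Dergen n_X\Compl\Prom f=f^{\Times n}\Compl\Coalgcgen n_C$ that you use for uniqueness, since $(\Dergen n_X)_{m,(a_1,\dots,a_n)}=\Kronecker m{[a_1,\dots,a_n]}$---has no $\tfrac1{m!}$. With your factor, $\Prom f$ is not even a comonoid morphism: take $C=\Sone$ with its trivial comonoid structure and $f$ corresponding to $x\in\Pcoh X$; your formula gives $(\Prom f)_{*,m}=\tfrac1{m!}x^m$, and then $\contr_X\Compl\Prom f\neq(\Prom f\Times\Prom f)\Compl\Comonm_\Sone$ because $\tfrac1{(m_1+m_2)!}\neq\tfrac1{m_1!m_2!}$ in general.

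More seriously, the identity $\Matappa{\Prom f}{y}=\Prom{(\Matappa fy)}$ is false even with the corrected formula. In the same example, the correct $\Prom f$ satisfies $(\Matappa{\Prom f}y)_m=y\,x^m$ for $y\in[0,1]$, whereas $\Prom{(\Matappa fy)}_m=(yx)^m=y^{|m|}x^m$; these disagree for $0<y<1$ and $|m|\neq1$. This identity was your only route to showing $\Matappa{\Prom f}{y}\in\Pcoh{\Oc X}$, and that is precisely the substantive step: the matrix $(\Prom f)_{c,m}=(f^{\Times|m|}\Compl\Coalgcgen{|m|}_C)_{c,(a_1,\dots,a_{|m|})}$ is forced by uniqueness, but proving it lands in the biorthogonal $\Pcoh{\Oc X}$ for an \emph{arbitrary} comonoid $C$ is the actual content of the cited reference.

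Two smaller points. Your uniqueness argument is correct in spirit, but the justification is off: the joint monicity of $(\Dergen n_X)_n$ is immediate from the matrix description above, not from \cref{thm:pcoh-characterization}(2), which concerns morphisms \emph{out of} $\Oc X$. And the closing remark about $\WREL$ is mistaken: $\WREL$ \emph{is} a Lafont category (\cref{th:wrel-lafont-cat}), so uniqueness does not fail there; what fails in $\WREL[\Rbar]$ is the bijection between morphisms and entire functions, which is unrelated to the Lafont property.
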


\subsubsection{Representable analytic structure of \(\PCOH\)}%
\label{sec:pcoh-can-diff-struct}
The category \(\PCOH\) has zero-morphisms (we have the \(0\) matrix
in \(\PCOH(X,Y)\) for any two objects \(X\) and \(Y\)).
The object \(\Dbimon=\withFam<\N> \Sone\) can be described as %
\(\Web\Dbimon=\Nat\) and %
\[ \Pcoh\Dbimon=\{x\in\Realpto\Nat\St \forall i\in\Nat\ x_i\in\Intcc01\} 
= \intcc{0}{1}^{\N} \, .\]
The morphisms \((\Win_i\in\PCOH(\Sone,\Dbimon))_{i\in\Nat}\) are
characterized by %
\(\Matappa{\Win_i}u=u\Base i\) for \(u\in\Pcoh\Sone=\Intcc01\),
where $\Base i \in \Pcoh \Dbimon$ is the element 
such that $\Base {i,i}= 1$ and $\Base {i,j} = 0$ for $j \neq i$.
These  morphisms are jointly epic because, for any
\(t\in\PCOH(\Dbimon,X)\) and \(x\in\Pcoh\Dbimon\) one has %
\(\Matappa tx=\sum_{i\in\Nat}x_i(\Matappa t{\Base i})\).
The morphism $\Wdiag$ is characterized by 
$\Wdiag \cdot u = u(1, 1, \ldots)$.

\begin{theorem}\label{th:pcoh-lafont-elem-sumable}
  \(\PCOH\) is a Lafont representably \(\Sigma\)-additive category, 
  thus it is a representably analytic category. 
  The sum is defined as in $\WREL[\Rbar]$,
  \[ \left(\sum_{i \in I} f(i)\right)_{a, b} =  \sum_{i \in I} f(i)_{a, b}\, .\]
  and $\family<I>[i]{f_i}$ is summable if this sum is in $\Pcoh Y$.
\end{theorem}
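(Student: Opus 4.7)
The plan is to leverage \cref{th:pcoh-Lafont} (which already gives that $\PCOH$ is a Lafont category) together with \cref{th:Lafont-elem-summable-Taylor}, so that the only real content is verifying that $\PCOH$ is representably $\Sigma$-additive. Rather than checking the four axioms \ref{ax:RS-mon}, \ref{ax:RS-epi}, \ref{ax:RS-sum}, \ref{ax:RS-witness} directly, I would apply the characterization \cref{thm:representably-additive-characterization}: it suffices to exhibit on $\PCOH$ the structure of a cartesian symmetric monoidal closed $\Sigma$-additive category satisfying \ref{ax:D-defined} and \ref{ax:comb-lin}.

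First I would dispose of \ref{ax:D-defined}: the countable cartesian product $\Dbimon = \withFam_{i\in\Nat} \Sone$ is the already-described PCS with web $\Nat$ and $\Pcoh \Dbimon = [0,1]^\Nat$, and the internal hom $\Dbimon \Limpl X$ exists for every $X$ by monoidal closedness (indeed $\PCOH$ is \Staraut{}). Next I would set up the $\Sigma$-monoid structure on $\PCOH(X,Y)$ by the announced pointwise sum, declaring $\family<I>[i]{f_i}$ summable precisely when the pointwise supremum $\sum_{i\in I} f(i)$ (computed in $\Realpc^{\Web X \times \Web Y}$) lies in $\Pcoh(X\Limpl Y)$. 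The \ref{ax:unary} axiom is immediate, and \ref{ax:pa} reduces to the partition associativity of nonnegative-real sums indexed over a countable set. Downward closure of $\Pcoh(X\Limpl Y)$ under the pointwise order, which follows from \cref{lemma:pcs-downward-closed}, guarantees that every subfamily of a summable family is summable. Left distributivity over composition, and distributivity over $\Times$, $\With$, and $\Limpl$, all reduce to Fubini-style interchange of nonnegative-real sums with the linear matrix operations involved; compatibility with $\Limpl$ in particular uses that $\Cur$ is a bijection acting as a mere relabelling of indices.

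The only axiom requiring any genuine PCS-flavored argument is \ref{ax:comb-lin}. Given a summable $\sequence{f_i}$ in $\PCOH(X,Y)$ with sum $f = \sum_i f_i \in \PCOH(X,Y)$, I need to check that the candidate witness $h \in \Realpto{\Web{X\Times\Dbimon\times Y}}$ defined by $h_{(a,n),b} = (f_n)_{a,b}$ actually belongs to $\PCOH(X\Times\Dbimon, Y)$ (equivalently, that $\sequence{\Rightu_Y\Compl(f_i\Times\Wproj_i)}$ is summable). By \cref{lemma:pcs-tensor-morphism} it suffices to verify this on pure tensors $z = x\Times\lambda$ with $x\in\Pcoh X$ and $\lambda\in\Pcoh\Dbimon = [0,1]^\Nat$. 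For such a $z$ one computes $\Matappa h z = \sum_{i\in\Nat}\lambda_i (\Matappa{f_i} x)$, and since $0\le \lambda_i\le 1$ this sum is pointwise bounded above by $\sum_{i\in\Nat}\Matappa{f_i}x = \Matappa f x \in \Pcoh Y$. Downward closure then places $\Matappa h z$ in $\Pcoh Y$, as required.

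The conclusion is then automatic: \cref{thm:representably-additive-characterization} yields that $\PCOH$ is representably $\Sigma$-additive, the associated $\Sigma$-summability structure is $(\Dbimon \Limpl \_, \vect\Sproj)$ with the claimed sum, and \cref{th:Lafont-elem-summable-Taylor} then produces the analytic coalgebra on $\Dbimon$ making $\PCOH$ a representably analytic category. I expect the main obstacle to be the bookkeeping of \ref{ax:comb-lin}: one has to be careful that summability is checked on pure tensors (where the two factors decouple and downward closure kicks in) rather than on arbitrary elements of $\Pcoh(X\Times\Dbimon)$, which are only jointly constrained.
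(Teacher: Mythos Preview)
Your proposal is correct and takes a genuinely different route from the paper. The paper verifies the four axioms \ref{ax:RS-mon}, \ref{ax:RS-epi}, \ref{ax:RS-sum}, \ref{ax:RS-witness} of \cref{def:RS} directly, whereas you go through the characterization \cref{thm:representably-additive-characterization}, checking that \(\PCOH\) is a cartesian symmetric monoidal closed \(\Sigma\)-additive category satisfying \ref{ax:comb-lin}. The technical heart is the same in both: the key step is the pure-tensor argument via \cref{lemma:pcs-tensor-morphism} combined with downward closure (\cref{lemma:pcs-downward-closed}), which in the paper's organization appears in the verification of \ref{ax:RS-sum} and \ref{ax:RS-witness}, and in yours is isolated as \ref{ax:comb-lin}. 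Your route trades the somewhat ad hoc \ref{ax:RS-witness} computation for a longer list of compatibility checks (with \(\Times\), \(\With\), \(\Limpl\)), each of which is routine Fubini-type bookkeeping; the paper's route is shorter but slightly less conceptual. Either way the same inequality \(\sum_i \lambda_i(\Matappa{f_i}x)\le\sum_i\Matappa{f_i}x\) for \(\lambda\in[0,1]^\Nat\) is doing the real work.
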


\begin{proof}
The sum given above is a $\Sigma$-monoid, this follows from the fact
that the sum in $\WREL[\Rbar]$ is a $\Sigma$-monoid and that 
a PCS is downward closed (if $x \in \Pcoh X$ and $y \leq x$ then $y \in \Pcoh X$).
The zero of this sum is the zero matrix and coincides with the zero 
morphism of $\PCOH$, and is absorbing for the monoidal product, so 
$\PCOH$ satisfies \ref{ax:RS-mon}.
Furthermore, \ref{ax:RS-epi} holds by 
\cref{rk:elementary-closed-win-epicity} and joint epicity of the 
$\Sprojl_i$.

We show that $\PCOH$ satisfies \ref{ax:RS-sum}.
Let $\sequence{f(i)} \in \PCOH(X, Y)^{\N}$ be a $\N$-indexed family. 
Define $h \in \Web{(X \tensor \Dbimon) \linarrow Y} = (\Web{X} \times \N) \times Y$
as $h_{(a, i), b} = f(i)_{a, b}$.
Then \[ h \cdot (u, \sequence{\lambda_i}) = \sum_{i \in \N} \lambda_i (f(i) \cdot u)
\leq \sum_{i \in \N} f(i) \cdot u \]
So if $\sequence{f_i}$ is summable then for all 
$u \in \Pcohp X$ we have $\sum_{i \in \N} f(i) \cdot u \in \Pcohp Y$ and hence 
$h \cdot (u, \sequence{\lambda_i}) \in \Pcohp Y$ by the equation above. Thus, 
$h \in \PCOH(X \tensor \Dbimon, Y)$. 
Then, by definition of $h$, $f(i) = h \compl (X \tensor \Sprojl_i) \tensor \tensorUnitR^{-1}$.
Conversely, if $h \in \PCOH(X \tensor \Dbimon, Y)$ then for all 
$u \in \Pcohp X$, 
\[ \sum_{i \in \N} f(i) \cdot u = h \cdot (u, (1, 1, \ldots)) \in \Pcohp Y \] so 
$\sequence{f(i)}$ is summable.
Thus, $\sequence{f(i)}$ is summable if and only if it has a witness 
(given by $h$ defined above), and 
$h \cdot (u, (1, 1, \ldots)) = (h \compl (X \tensor \Wdiag) \compl \tensorUnitR^{-1}) 
\cdot u$ so $\sum_{i \in \N} f(i) = h \compl (X \tensor \Wdiag) \compl \tensorUnitR^{-1}$
and $\PCOH$ satisfies \ref{ax:RS-sum}.

Finally, we show that $\PCOH$ satisfies \ref{ax:RS-witness}.
Let \(\family<I>[i]{s(i)\in\PCOH(X \tensor \Dbimon,Y)}\) be an $I$-indexed 
family.
Let $f(i,j) = s(i) \compl (X \tensor \Sprojl_j) \compl \tensorUnitR^{-1}_X$.
Observe that $f(i,j)_{a, b} = s(i)_{(a, j), b}$.
Assume that $\family<I \times \N>[(i,j)]{f(i,j)}$ is summable.
We want to prove that 
$\family<I>[i]{s(i)}$ is summable. By
\cref{lemma:pcs-tensor-morphism}, it suffices to prove that
that for all $u \in \Pcoh X$ and 
$\sequence[j]{\lambda_j} \in \Pcoh \Dbimon = {\intcc 0 1}^{\N}$,
\[ \sum_{i \in \N} s(i) \cdot (u \tensor \sequence[j]{\lambda_j}) \in \Pcoh Y \, . \]
We have
\[ \sum_{i \in I} s(i) \cdot (u \tensor \sequence[j]{\lambda_j}) 
= \sum_{\Biind{i \in I, a \in \Web X}{j \in \N, b \in \Web Y}} \lambda_j u_a s(i)_{(a,j), b} e_b 
= \sum_{i \in I, j \in \N} \lambda_j f(i, j) \cdot u
\leq \sum_{i \in I, j \in \N} f(i, j) \cdot u \in \Pcoh Y \]
so it follows by \cref{lemma:pcs-downward-closed} that 
$\sum_{i \in I} s(i) \cdot (u \tensor \sequence{\lambda_i}) \in \Pcoh Y$, 
hence $\family<I>[i]{s(i)}$ is summable.
\end{proof}

Observe that \(\Web{\Sfun X}=\Nat\times\Web X\).
The action of $\S$ on morphism is the same as in $\WREL[\Rbar]$,
$\S f$ is characterized by the equation $(\S f) \cdot (i, u) = (i, f \cdot u)$.
That is, $(\S f)_{(i, a), (j, b)} = \Kronecker i j f_{a, b}$.
The difference with respect to $\WREL[\Rbar]$ is that the object $\S X$ is not the cartesian product 
$\withFam<\N> X$. Indeed,
\[ \Pcohp{\Sfun X}\Isom\{\Vect x\in\Pcoh X^\Nat\St \sum_{i=0}^\infty
x(i)\in\Pcoh X\} \neq \Pcohp{X}^{\N} \simeq \Pcohp{\withFam<\N> X}\]

The bimonoid structure of \(\Dbimon\) is identical to that of
\(\Dbimon\) in \(\WREL[\Rbar]\), as well as the analytic coalgebra
\(\Dbimonca\in\PCOH(\Dbimon,\Oc\Dbimon)\), which is given by %
\(\Dbimonca_{n,\Mset{\List i1k}}=\Kronecker n{i_1+\cdots+i_k}\).
This coalgebra structure induces the same distributive law 
$\Sdl \in \PCOH(\Oc \S X, \S \Oc X)$
as in $\WREL[\Rbar]$, which is given by %
\begin{align*}
  \Sdl_{p,(n,m)}=
  \begin{cases}
    \frac{\Factor m}{\Factor p}
    &\text{if }p=\Mset{(i_1,a_1),\dots,(i_k,a_k)}\text{ with }
    i_1+\cdots+i_k=n\\
    0 & \text{otherwise.}
  \end{cases}
\end{align*}

The induced functor \(\Tayfun:\Kloc\PCOH\to\Kloc\PCOH\), which
maps a PCS \(X\) to \(\Sfun X\) has the same action on morphism as the one in 
$\Kloc {\WREL[\Rbar]}$: for any \(s\in\Kloc\PCOH(X,Y)\), 
\(\Tayfun(s)\in\Kloc\PCOH(\Sfun X,\Sfun Y)\) is given by
\begin{align*}
  \Tayfun(s)_{\Mset{(i_1,a_1),\dots,(i_k,a_k)},(n,b)}=
  \Kronecker{n}{i_1+\cdots+i_k}
  \frac{\Factor{\Mset{\List a1k}}}
  {\Factor{\Mset{(i_1,a_1),\dots,(i_k,a_k)}}}
    s_{\Mset{\List a1k},b}\,.
\end{align*}

For any $s \in \PCOH(\Oc X, Y)$, we introduce the
$n$-th derivative of the formal power series defined by $s$ as in $\WREL[\Rbar]$,
\[ \Deriv{s}{n}(x)(u^1, \ldots, u^n)
=\sum_{m\in\Mfin{\Web E}}
   \sum_{\Vect a\in\Web E^d}
   \frac{\Factor{(m+\Mset{\Vect a})}}{\Factor m}
   s_{m+\Mset{\Vect a},b}x^m u^1_{a_1}\cdots u^d_{a_d} \, .\]
Observe that $\Deriv{s}{n}$ is an entire function 
from $(\Pcoh X)^{n+1}$ to $\Rbar^{\Web Y}$, but is \emph{not necessarily} a
function from $(\Pcoh X)^{n+1}$ to $\Pcoh Y$.
Still, \cref{thm:Taylor-expansion-wrel} ensures that 
for any $n \in \N$,
\[ \Fun{\Sproj_n \compl \Tayfun(s)}(\Vect x) =  \sum_{m \in \mpart{n}}
\frac{1}{m!} \Deriv{s}{\Mscard m}(x(0))
(\Rep{x(i_1)}{m(1)},\dots,\Rep{x(i_k)}{m(k)}) \]
where \(\Rep xl\) is the list of repeated arguments %
\(\overbrace{x,\dots,x}^{n\times}\) and 
$\mpart{n} \defEq \{ m \in \mfin(\N^*) | \sum_{i \in \N^*} m(i) \ i = n\}$.
By construction, $\Fun{\D s}$ is an entire function from $\Pcoh \S X$ to 
$\Pcoh \S Y$.
This implies the following remarkable observation: the sum of all of 
the terms writen above is an element 
of $\Pcoh Y$.

As we saw, we have in particular that
\begin{align*}
  \Fun{\Tayfun(s)}(x,u,0,0,\dots)
  =\big(\frac 1{\Factor n}\Deriv{s}n(x)(\Rep un)\big)_{n\in\Nat}
\end{align*}
so although $\Deriv{s}n (x)(\Rep u n)$ is not necessarily in 
$\Pcoh Y$, we have $\frac{1}{n!} \Deriv{s}n (x)(\Rep u n) \in \Pcoh Y$, and 
we even have that the sum of these terms is in $\Pcoh Y$.
This mean that the regular Taylor expansion is perfectly compatible with the boundedness constraints 
of probabilistic coherence spaces, even when the $n$-th derivatives themselves are not.

\section{Conclusion}
We have developed a theory of Taylor expansion in categories which are
not necessarily additive.
The main motivations for this work are first that Taylor expansion has
been shown to be a useful tool in the analysis of functional programs,
see for instance~\cite{BarbarossaManzonetto20}, and second that most
concrete denotational models of such languages (such as coherence
spaces, probabilistic coherence spaces etc.)
feature only a \emph{partial} addition of morphisms for the very good
reason that full additivity is incompatible with 
the determinism of computations.
For instance, the values \(\True\) and \(\False\) of the object of
booleans should not be summable in a deterministic model.
In the very same line of idea, the uniformity of the Taylor expansion
observed in~\cite{Ehrhard08} seems to be closely related to the
summability constraints observed in coherence spaces and non-uniform
coherence spaces, and accounts syntactically for the fundamental
determinism of the \(\lambda\)-calculus (Church-Rosser and
Standardization theorems) and of the execution of terms in the Krivine
machine.

It turns out that all the categorical axiomatizations of denotational
models which account for the Taylor expansion of morphisms, and are
most often based on differential LL, make the assumption that homsets
are monoids where infinite summations are possible for the obvious
reason that infinite summations are an essential ingredient of Taylor
expansion.
We have shown in this paper that this strong form of additivity is not
a fatality: Taylor expansion can also exist in settings where only a
partial version of (finite and infinite) addition is available.

Differentiation was already accommodated in such partially additive
categories in~\cite{Ehrhard23-cohdiff} and the approach developed
here follows a similar pattern.
One main difference is that we had to develop a more subtle notion of
infinitary (countable) summability.
Beyond this main difference, the resulting theory of coherent Taylor
expansion is strikingly similar to that of coherent differentiation
---~and not essentially more complicated~---, with one additional nice
feature: the resulting Taylor functor is not only a monad (just as the
tangent functor in the tangent categories of~\cite{Rosicky84}) but
also a comonad.
This comonad structure, and more precisely the naturality of its
counit, reflects the fact that nonlinear morphisms coincide with their
Taylor expansion, expressing abstractly that they are analytic.

We have developed this theory in a LL setting of resource categories,
where the analytic structure arises as a distributive law wrt.~the
resource comonad, and also in general cartesian categories, following
the main ideas of~\cite{Walch23}.

This first denotational investigation of coherent Taylor expansion is a
strong incentive for developing now a syntactic analysis of this
operation, which might be similar to the coherent differential PCF
of~\cite{Ehrhard22-pcf}, this will be the object of further work.
Another natural question is whether this coherent Taylor expansion has
an associated resource calculus, just like Taylor expansion in the
setting of differential LL, see~\cite{Ehrhard08}.

The connection between coherent differentiation and coherent Taylor
expansion also deserves further study: such a study might be based on
the observation in \Cref{sec:elementary-Taylor-examples} that there
are simple models of LL which accommodate coherent differentiation but
not coherent Taylor expansion.

\section*{Acknowledgment}

This work was partly funded by the project %
\texttt{ANR-19-CE48-0014} %
\emph{Probabilistic Programming Semantics (PPS)} %
\url{https://www.irif.fr/anrpps}.

\bibliographystyle{alpha}
\bibliography{biblio.bib}

\begin{thebibliography}{LMMP13}

\bibitem[AHF18]{Aguiar18}
Marcelo Aguiar, Mariana Haim, and Ignacio Franco.
\newblock Monads on higher monoidal categories.
\newblock {\em Applied Categorical Structures}, 26, 06 2018.

\bibitem[AM80]{Arbib80}
Michael~A Arbib and Ernest~G Manes.
\newblock Partially additive categories and flow-diagram semantics.
\newblock {\em Journal of Algebra}, 62(1):203--227, 1980.

\bibitem[BCS06]{Blute06}
R.~Blute, Robin Cockett, and R.~Seely.
\newblock Differential categories.
\newblock {\em Mathematical Structures in Computer Science}, 16:1049 -- 1083,
  12 2006.

\bibitem[BCS09]{Blute09}
R.~Blute, Robin Cockett, and R.~Seely.
\newblock Cartesian differential categories.
\newblock {\em Theory and Applications of Categories}, 22:622--672, 01 2009.

\bibitem[BCS14]{Blute14}
Richard Blute, Robin Cockett, and Robert Seely.
\newblock Cartesian differential storage categories.
\newblock {\em arXiv: Category Theory}, 30, 05 2014.

\bibitem[BE01]{BucciarelliEhrhard01}
Antonio Bucciarelli and Thomas Ehrhard.
\newblock On phase semantics and denotational semantics: the exponentials.
\newblock {\em Ann. Pure Appl. Log.}, 109(3):205--241, 2001.

\bibitem[Bec69]{Beck69}
Jon Beck.
\newblock Distributive laws.
\newblock In B.~Eckmann, editor, {\em Seminar on Triples and Categorical
  Homology Theory}, pages 119--140, Berlin, Heidelberg, 1969. Springer Berlin
  Heidelberg.

\bibitem[Ber78]{Berry78}
G{\'e}rard Berry.
\newblock Stable models of typed $\lambda$-calculi.
\newblock In Giorgio Ausiello and Corrado B{\"o}hm, editors, {\em Automata,
  Languages and Programming}, pages 72--89, Berlin, Heidelberg, 1978. Springer
  Berlin Heidelberg.

\bibitem[Bie95]{Bierman95}
G.~M. Bierman.
\newblock What is a categorical model of intuitionistic linear logic?
\newblock In Mariangiola Dezani-Ciancaglini and Gordon Plotkin, editors, {\em
  Typed Lambda Calculi and Applications}, pages 78--93, Berlin, Heidelberg,
  1995. Springer Berlin Heidelberg.

\bibitem[BLV11]{Bruguieres11}
Alain Bruguières, Steve Lack, and Alexis Virelizier.
\newblock Hopf monads on monoidal categories.
\newblock {\em Advances in Mathematics}, 227(2):745--800, 2011.

\bibitem[BM20]{BarbarossaManzonetto20}
Davide Barbarossa and Giulio Manzonetto.
\newblock {Taylor subsumes Scott, Berry, Kahn and Plotkin}.
\newblock {\em Proc. {ACM} Program. Lang.}, 4({POPL}):1:1--1:23, 2020.

\bibitem[Bou11]{Boudes11}
Pierre Boudes.
\newblock Non-uniform (hyper/multi)coherence spaces.
\newblock {\em Math. Struct. Comput. Sci.}, 21(1):1--40, 2011.

\bibitem[CC14]{Cockett14}
Robin Cockett and G.~Cruttwell.
\newblock {Differential Structure, Tangent Structure, and SDG}.
\newblock {\em Applied Categorical Structures}, 22, 04 2014.

\bibitem[CEPT17]{CrubilleEhrhardPaganiTasson17}
Rapha{\"{e}}lle Crubill{\'{e}}, Thomas Ehrhard, Michele Pagani, and Christine
  Tasson.
\newblock {The Free Exponential Modality of Probabilistic Coherence Spaces}.
\newblock In Javier Esparza and Andrzej~S. Murawski, editors, {\em Foundations
  of Software Science and Computation Structures - 20th International
  Conference, {FOSSACS} 2017, Held as Part of the European Joint Conferences on
  Theory and Practice of Software, {ETAPS} 2017, Uppsala, Sweden, April 22-29,
  2017, Proceedings}, volume 10203 of {\em Lecture Notes in Computer Science},
  pages 20--35, 2017.

\bibitem[CLLW20]{Cockett17}
Robin Cockett, Jean-Simon~Pacaud Lemay, and Rory B.~B. Lucyshyn-Wright.
\newblock {Tangent Categories from the Coalgebras of Differential Categories}.
\newblock In Maribel Fern\'{a}ndez and Anca Muscholl, editors, {\em 28th EACSL
  Annual Conference on Computer Science Logic (CSL 2020)}, volume 152 of {\em
  Leibniz International Proceedings in Informatics (LIPIcs)}, pages
  17:1--17:17, Dagstuhl, Germany, 2020. Schloss Dagstuhl -- Leibniz-Zentrum
  f{\"u}r Informatik.

\bibitem[CS97]{Cockett97}
Robin Cockett and R.~Seely.
\newblock Proof theory for full intuitionistic linear logic, bilinear logic,
  and mix categories.
\newblock {\em Theory and Applications of Categories}, 3, 11 1997.

\bibitem[Day70]{Day70}
Brian Day.
\newblock On closed categories of functors.
\newblock In S.~MacLane, H.~Applegate, M.~Barr, B.~Day, E.~Dubuc, Phreilambud,
  A.~Pultr, R.~Street, M.~Tierney, and S.~Swierczkowski, editors, {\em Reports
  of the Midwest Category Seminar IV}, pages 1--38, Berlin, Heidelberg, 1970.
  Springer Berlin Heidelberg.

\bibitem[DE11a]{DanosEhrhard11}
Vincent Danos and Thomas Ehrhard.
\newblock Probabilistic coherence spaces as a model of higher-order
  probabilistic computation.
\newblock {\em Inf. Comput.}, 209(6):966--991, 2011.

\bibitem[DE11b]{Danos11}
Vincent Danos and Thomas Ehrhard.
\newblock Probabilistic coherence spaces as a model of higher-order
  probabilistic computation.
\newblock {\em Information and Computation}, 209(6):966--991, 2011.

\bibitem[Die69]{Dieudonne69}
J.~Dieudonné.
\newblock {\em {Foundations of Modern Analysis}}.
\newblock Boston, MA: Academic Press, 1969.

\bibitem[EG22]{Ehrhard22-cones}
Thomas Ehrhard and Guillaume Geoffroy.
\newblock {Integration in Cones}.
\newblock Technical report, {IRIF (UMR\_8243) - Institut de Recherche en
  Informatique Fondamentale}, December 2022.

\bibitem[Ehr02]{Ehrhard02}
Thomas Ehrhard.
\newblock {On Köthe Sequence Spaces and Linear Logic}.
\newblock {\em Mathematical Structures in Computer Science}, 12, 01 2002.

\bibitem[Ehr05]{Ehrhard05}
Thomas Ehrhard.
\newblock {Finiteness spaces}.
\newblock {\em {Mathematical Structures in Computer Science}}, 15(4):615--646,
  July 2005.
\newblock 32 pages.

\bibitem[Ehr18]{Ehrhard18}
Thomas Ehrhard.
\newblock An introduction to differential linear logic: proof-nets, models and
  antiderivatives.
\newblock {\em Mathematical Structures in Computer Science}, 28(7):995–1060,
  2018.

\bibitem[Ehr22]{Ehrhard22a}
Thomas Ehrhard.
\newblock Differentials and distances in probabilistic coherence spaces.
\newblock {\em Logical Methods in Computer Science}, 18(3), 2022.

\bibitem[Ehr23a]{Ehrhard22-pcf}
Thomas Ehrhard.
\newblock A coherent differential {PCF}.
\newblock {\em Logical Methods in Computer Science}, Volume 19, Issue 4, 10
  2023.

\bibitem[Ehr23b]{Ehrhard23-cohdiff}
Thomas Ehrhard.
\newblock Coherent differentiation.
\newblock {\em Mathematical Structures in Computer Science}, page 1–52, 2023.

\bibitem[ER03]{EhrhardRegnier02}
Thomas Ehrhard and Laurent Regnier.
\newblock The differential lambda-calculus.
\newblock {\em {Theoretical Computer Science}}, 309(1-3):1--41, 2003.

\bibitem[ER08]{Ehrhard08}
Thomas Ehrhard and Laurent Regnier.
\newblock {Uniformity and the Taylor expansion of ordinary lambda-terms}.
\newblock {\em Theoretical Computer Science}, 403(2):347--372, 2008.

\bibitem[ETP14]{Ehrhard14}
Thomas Ehrhard, Christine Tasson, and Michele Pagani.
\newblock Probabilistic coherence spaces are fully abstract for probabilistic
  pcf.
\newblock {\em SIGPLAN Not.}, 49(1):309–320, jan 2014.

\bibitem[EW23]{Walch23}
Thomas Ehrhard and Aymeric Walch.
\newblock {Cartesian Coherent Differential Categories}.
\newblock In {\em 2023 38th Annual ACM/IEEE Symposium on Logic in Computer
  Science (LICS)}, pages 1--13, Los Alamitos, CA, USA, jun 2023. IEEE Computer
  Society.

\bibitem[Fra78]{Fraenkel78}
L.~E. Fraenkel.
\newblock Formulae for high derivatives of composite functions.
\newblock {\em Mathematical Proceedings of the Cambridge Philosophical
  Society}, 83(2):159–165, 1978.

\bibitem[Gir87]{Girard87}
Jean{-}Yves Girard.
\newblock {Linear Logic}.
\newblock {\em {Theoretical Computer Science}}, 50:1--102, 1987.

\bibitem[GL21]{Garner21}
Richard Garner and {Jean-Simon Pacaud} Lemay.
\newblock Cartesian differential categories as skew enriched categories.
\newblock {\em Applied Categorical Structures}, 29(6):1099--1150, December
  2021.
\newblock Copyright the Author(s) 2021. Version archived for private and
  non-commercial use with the permission of the author/s and according to
  publisher conditions. For further rights please contact the publisher.

\bibitem[Gui80]{Guitart80}
René Guitart.
\newblock Tenseurs et machines.
\newblock {\em Cahiers de Topologie et Géométrie Différentielle
  Catégoriques}, 21(1):5--62, 1980.

\bibitem[Hag00]{Haghverdi00}
Esfandiar Haghverdi.
\newblock {Unique decomposition categories, Geometry of Interaction and
  combinatory logic}.
\newblock {\em Mathematical Structures in Computer Science}, 10(2):205–230,
  2000.

\bibitem[Hin13]{Hines13}
Peter Hines.
\newblock A categorical analogue of the monoid semiring construction.
\newblock {\em Mathematical Structures in Computer Science}, 23(1):55–94,
  2013.

\bibitem[Joh75]{Johnstone75}
P.~T. Johnstone.
\newblock {Adjoint Lifting Theorems for Categories of Algebras}.
\newblock {\em Bulletin of the London Mathematical Society}, 7(3):294--297, 11
  1975.

\bibitem[Kei75]{Keigher75}
William Keigher.
\newblock Adjunctions and comonads in differential algebra.
\newblock {\em Pacific Journal of Mathematics}, 59, 07 1975.

\bibitem[Kel74]{Kelly74-doctrinal}
G.~M. Kelly.
\newblock Doctrinal adjunction.
\newblock In Gregory~M. Kelly, editor, {\em Category Seminar}, pages 257--280,
  Berlin, Heidelberg, 1974. Springer Berlin Heidelberg.

\bibitem[KL23]{Kerjean23}
Marie Kerjean and Jean-Simon~Pacaud Lemay.
\newblock {Taylor Expansion as a Monad in Models of DiLL}.
\newblock In {\em 2023 38th Annual ACM/IEEE Symposium on Logic in Computer
  Science (LICS)}, pages 1--13, 2023.

\bibitem[Koc70]{Kock70}
Anders Kock.
\newblock Monads on symmetric monoidal closed categories.
\newblock {\em Archiv der Mathematik}, 21:1--10, 01 1970.

\bibitem[Koc71]{Kock71}
Anders Kock.
\newblock Closed categories generated by commutative monads.
\newblock {\em Journal of the Australian Mathematical Society},
  12(4):405–424, 1971.

\bibitem[Koc72]{Kock72}
Anders Kock.
\newblock Strong functors and monoidal monads.
\newblock {\em Archiv der Mathematik}, 23:113--120, 12 1972.

\bibitem[KS74]{Kelly74}
G.~M. Kelly and Ross Street.
\newblock Review of the elements of 2-categories.
\newblock In Gregory~M. Kelly, editor, {\em Category Seminar}, pages 75--103,
  Berlin, Heidelberg, 1974. Springer Berlin Heidelberg.

\bibitem[Laf88]{Lafont88}
Yves Lafont.
\newblock {\em {An Analysis of Example}}.
\newblock Phd thesis, Université Paris VII, January 1988.

\bibitem[Lam92]{Lamarche92}
François Lamarche.
\newblock Quantitative domains and infinitary algebras.
\newblock {\em Theoretical Computer Science}, 94(1):37--62, 1992.

\bibitem[Lam95]{Lamarche95}
Fran{\c{c}}ois Lamarche.
\newblock Generalizing coherent domains and hypercoherences.
\newblock In Stephen~D. Brookes, Michael~G. Main, Austin Melton, and Michael~W.
  Mislove, editors, {\em Eleventh Annual Conference on Mathematical Foundations
  of Programming Semantics, {MFPS} 1995, Tulane University, New Orleans, LA,
  USA, March 29 - April 1, 1995}, volume~1 of {\em Electronic Notes in
  Theoretical Computer Science}, pages 355--369. Elsevier, 1995.

\bibitem[Lem18]{Lemay18}
Jean-Simon~Pacaud Lemay.
\newblock {A tangent category alternative to the Faa di Bruno construction}.
\newblock {\em Theory and Applications of Categories}, 33(35):1072--1110, 2018.

\bibitem[LMMP13]{Laird13}
Jim Laird, Giulio Manzonetto, Guy McCusker, and Michele Pagani.
\newblock {Weighted Relational Models of Typed Lambda-Calculi}.
\newblock In {\em 2013 28th Annual ACM/IEEE Symposium on Logic in Computer
  Science}, pages 301--310, 2013.

\bibitem[MA86]{Manes86}
Ernest~G. Manes and Michael~A. Arbib, editors.
\newblock {\em {Algebraic Approaches to Program Semantics}}.
\newblock Springer-Verlag, Berlin, Heidelberg, 1986.

\bibitem[Mac71]{Maclane71}
Saunders MacLane.
\newblock {\em Categories for the Working Mathematician}.
\newblock Springer-Verlag, New York, 1971.
\newblock Graduate Texts in Mathematics, Vol. 5.

\bibitem[Man12]{Manzonetto12}
Giulio Manzonetto.
\newblock What is a categorical model of the differential and the resource
  $\lambda$-calculi?
\newblock {\em Mathematical Structures in Computer Science}, 22(3):451–520,
  2012.

\bibitem[Mar65]{Maranda65}
J.-M. Maranda.
\newblock Formal categories.
\newblock {\em Canadian Journal of Mathematics}, 17:758–801, 1965.

\bibitem[Mel09]{Mellies09}
Paul-Andr\'e Melli\`es.
\newblock {Categorical semantics of linear logic}.
\newblock {\em Panoramas et Synth\`eses}, 27:1--196, 2009.

\bibitem[Moe02]{Moerdijk02}
I.~Moerdijk.
\newblock Monads on tensor categories.
\newblock {\em Journal of Pure and Applied Algebra}, 168(2):189--208, 2002.
\newblock Category Theory 1999: selected papers, conference held in Coimbra in
  honour of the 90th birthday of Saunders Mac Lane.

\bibitem[MTT09]{MelliesTabreauTasson09}
Paul-Andr{\'e} Melli{\`e}s, Nicolas Tabareau, and Christine Tasson.
\newblock {An Explicit Formula for the Free Exponential Modality of Linear
  Logic}.
\newblock In Susanne Albers, Alberto Marchetti-Spaccamela, Yossi Matias,
  Sotiris Nikoletseas, and Wolfgang Thomas, editors, {\em Automata, Languages
  and Programming}, pages 247--260, Berlin, Heidelberg, 2009. Springer Berlin
  Heidelberg.

\bibitem[MW11]{Mesablishvili11}
Bachuki Mesablishvili and Robert Wisbauer.
\newblock {Bimonads and Hopf monads on categories}.
\newblock {\em Journal of K-theory K-theory and its Applications to Algebra
  Geometry and Topology}, 7:349--388, 04 2011.

\bibitem[Ong17]{Ong17}
C.-H.~Luke Ong.
\newblock {Quantitative semantics of the lambda calculus: Some generalisations
  of the relational model}.
\newblock In {\em 2017 32nd Annual ACM/IEEE Symposium on Logic in Computer
  Science (LICS)}, pages 1--12, 2017.

\bibitem[Plo77]{Plotkin77}
G.D. Plotkin.
\newblock {LCF} considered as a programming language.
\newblock {\em Theoretical Computer Science}, 5(3):223--255, 1977.

\bibitem[PW02]{Power02}
John Power and Hiroshi Watanabe.
\newblock Combining a monad and a comonad.
\newblock {\em Theoretical Computer Science}, 280(1):137--162, 2002.
\newblock Coalgebraic Methods in Computer Science.

\bibitem[Ros84]{Rosicky84}
J.~Rosicky.
\newblock Abstract tangent functors.
\newblock {\em Diagrammes}, 12:JR1--JR11, 1984.
\newblock talk:3.

\bibitem[Sco76]{Scott76}
Dana Scott.
\newblock {Data Types as Lattices}.
\newblock {\em SIAM Journal on Computing}, 5(3):522--587, 1976.

\bibitem[Str72]{Street72}
Ross Street.
\newblock The formal theory of monads.
\newblock {\em Journal of Pure and Applied Algebra}, 2(2):149 -- 168, 1972.

\bibitem[TA22]{Tsukada22}
Takeshi Tsukada and Kazuyuki Asada.
\newblock {Linear-Algebraic Models of Linear Logic as Categories of Modules
  over $\Sigma$-Semirings$*$}.
\newblock In {\em Proceedings of the 37th Annual ACM/IEEE Symposium on Logic in
  Computer Science}, LICS '22, New York, NY, USA, 2022. Association for
  Computing Machinery.

\bibitem[VO71]{VanOsdol71}
Donovan~H. Van~Osdol.
\newblock {\em Sheaves in regular categories}, pages 223--239.
\newblock Springer Berlin Heidelberg, Berlin, Heidelberg, 1971.

\bibitem[Wal22]{Walch22-internship}
Aymeric Walch.
\newblock {\em Coherent differentiation in models of Linear Logic}.
\newblock Internship report, ENS de Lyon, June 2022.

\bibitem[Whi65]{Whittlesey65}
E.~F. Whittlesey.
\newblock {Analytic Functions in Banach Spaces}.
\newblock {\em Proceedings of the American Mathematical Society},
  16(5):1077--1083, 1965.

\bibitem[Wis08]{Wisbauer08}
Robert Wisbauer.
\newblock {Algebras Versus Coalgebras}.
\newblock {\em Applied Categorical Structures}, 16:255--295, 04 2008.

\end{thebibliography}

\end{document}